\newcommand\numberthis{\addtocounter{equation}{1}\tag{\theequation}}
\title{Foundations of Vector Retrieval}
\DeclareMathOperator*{\argmax}{arg\,max}
\DeclareMathOperator*{\argmin}{arg\,min}
\DeclareMathOperator*{\var}{Var}
\DeclareMathOperator*{\ev}{\mathbb{E}}
\DeclareMathOperator*{\probability}{\mathbb{P}}
\newcommand{\splade}{\textsc{Splade}}
\newcommand{\esplade}{\textsc{Efficient Splade}}
\begin{document}

\author{Sebastian Bruch}
\maketitle

\frontmatter
\preface
\addcontentsline{toc}{chapter}{Preface}

We are witness to a few years of remarkable developments
in Artificial Intelligence with the use of advanced machine learning algorithms,
and in particular, \emph{deep learning}. Gargantuan, complex neural networks
that can learn through self-supervision---and quickly so with the aid of specialized
hardware---transformed the research landscape so dramatically that, overnight it seems,
many fields experienced not the usual, incremental progress, but rather a leap forward.
Machine translation, natural language understanding, information retrieval,
recommender systems, and computer vision are but a
few examples of research areas that have had to grapple with the shock.
Countless other disciplines beyond computer science such as robotics, biology, and chemistry
too have benefited from deep learning.

These neural networks and their training algorithms may be complex, and
the scope of their impact broad and wide,
but nonetheless they are simply functions in a high-dimensional space.
A trained neural network takes a \emph{vector} as input, crunches and transforms
it in various ways, and produces another vector, often in some other space.
An image may thereby be turned into a vector, a song into a sequence of vectors,
and a social network as a structured collection of vectors.
It seems as though much of human knowledge,
or at least what is expressed as text, audio, image, and video,
has a vector representation in one form or another.

It should be noted that representing data as vectors is not unique to neural networks
and deep learning. In fact, long before learnt vector representations of pieces of
data---what is commonly known as ``embeddings''---came along, data
was often encoded as hand-crafted \emph{feature} vectors. Each feature quantified
into continuous or discrete values some facet of the
data that was deemed relevant to a particular task (such as classification or regression).
Vectors of that form, too, reflect
our understanding of a real-world object or concept.

If new and old knowledge can be squeezed into a collection of
learnt or hand-crafted vectors, what useful things does that enable us to do?
A metaphor that might help us think about that question is this:
An ever-evolving database full of such vectors that capture
various pieces of data can be understood as a \emph{memory} of sorts.
We can then recall information from this memory to answer questions,
learn about past and present events, reason about new problems,
generate new content, and more.

\section*{Vector Retrieval}

Mathematically, ``recalling information'' translates to finding vectors that are
most \emph{similar} to a \emph{query} vector. The query vector represents what
we wish to know more about, or recall information for. So, if we have a
particular question in mind, the query is the vector representation of that question.
If we wish to know more about an event, our query is that event expressed as a vector.
If we wish to predict the function of a protein, perhaps we may learn a thing or two
from known proteins that have a similar structure to the one in question,
making a vector representation of the structure of our new protein a query.

Similarity is then a function of two vectors, quantifying how similar two vectors are.
It may, for example, be based on the Euclidean distance
between the query vector and a database vector, where similar vectors have a smaller
distance. Or it may instead be based on the inner product between two vectors.
Or their angle. Whatever function we use to measure similarity between pieces
of data defines the structure of a database.

Finding $k$ vectors from a database that have the highest similarity to a query vector
is known as the top-$k$ retrieval problem.
When similarity is based on the Euclidean distance, 
the resulting problem is known as \emph{nearest neighbor} search.
Inner product for similarity leads to a problem known as \emph{maximum inner product search}.
Angular distance gives \emph{maximum cosine similarity} search.
These are mathematical formulations of the mechanism we called ``recalling information.''

\bigskip

The need to search for similar vectors from a large database
arises in virtually every single one of our online transactions.
Indeed, when we search the web for information about a topic, the search engine itself
performs this similarity search over millions of web documents to find what may
lexically or semantically match our query. Recommender systems find the most similar
items to your browsing history by encoding items as vectors and, effectively, searching
through a database of such items. Finding an old photo in a photo library,
as another routine example, boils down to performing a similarity search over vector
representations of images.

A neural network that is trained to perform a general task such as question-answering,
could conceivably augment its view of the world by ``recalling'' information from such a database
and finding answers to new questions.
This is particularly useful for \emph{generative} agents such as chatbots
who would otherwise be frozen in time, and whose knowledge limited to
what they were exposed to during their training. With a vector database on the side,
however, they would have access to real-time information and can deduce new observations
about content that is new to them.
This is, in fact, the cornerstone of what is known as retrieval-augmented generation,
an emerging learning paradigm.

\bigskip

Finding the most similar vectors
to a query vector is easy when the database is small or when time is not of the essence:
We can simply compare every vector
in the database with the query and sort them by similarity. When the database grows large
and the time budget is limited, as is
often the case in practice, a na\"ive, exhaustive comparison of a query with database vectors
is no longer realistic.
That is where \textbf{vector retrieval} algorithms become relevant.

For decades now, research on vector retrieval has sought to improve the efficiency
of search over large vector databases. The resulting literature is rich with
solutions ranging from heavily theoretical results to performant empirical heuristics.
Many of the proposed algorithms have undergone rigorous benchmarking
and have been challenged in competitions at major conferences.
Technology giants and startups alike have invested heavily in developing
open-source libraries and managed infrastructure that offer fast and scalable
vector retrieval.

That is not the end of that story, however. Research continues to date.
In fact, how we do vector retrieval today faces a stress-test as
databases grow orders of magnitude larger than ever before.
None of the existing methods, for example, proves easy to scale to a
database of billions of high-dimensional vectors, or a database
whose records change frequently. 

\section*{About This Monograph}
The need to conduct more research underlines the importance of making the existing
literature more readily available and the research area more inviting.
That is partially fulfilled with existing surveys that report
the state of the art at various points in time. However, these publications
are typically focused on a single class of vector retrieval algorithms,
and compare and contrast published methods by their empirical performance alone.
Importantly, no manuscript has yet summarized major algorithmic milestones
in the vast vector retrieval literature, or has been prepared to serve as a reference
for new and established researchers.

That gap is what this monograph intends to close.
With the goal of presenting the fundamentals of vector retrieval as a sub-discipline,
this manuscript delves into important data structures and algorithms that have
emerged in the literature to solve the vector retrieval problem efficiently and effectively.

\subsection*{Structure}
This monograph is divided into four parts. The first part introduces the problem
of vector retrieval and formalizes the concepts involved. The second part delves into
retrieval algorithms that help solve the vector retrieval problem efficiently and effectively.
Part three is devoted to vector compression. Finally, the fourth part presents a review
of background material in a series of appendices.

\subsubsection*{Introduction}
We start with a thorough \textbf{introduction} to the problem itself in Chapter~\ref{chapter:flavors}
where we define the various flavors of vector retrieval.
We then elaborate what is so difficult about the problem in high-dimensional spaces
in Chapter~\ref{chapter:instability}.

In fact, sometimes high-dimensional spaces are hopeless. However, in reality data often
lie on some low-dimensional space, even though their na\"ive vector representations
are in high dimensions. In those cases, it turns out, we can do much better. Exactly how
we characterize this low \textbf{``intrinsic'' dimensionality} is the topic of
Chapter~\ref{chapter:intrinsic-dimensionality}.

\subsubsection*{Retrieval Algorithms}
With that foundation in place and the question clearly formulated,
the second part of the monograph explores the different classes of existing solutions
in great depth.
We close each chapter with a summary of algorithmic insights.
There, we will also discuss what remains challenging and explore future research directions.

We start with \textbf{branch-and-bound} algorithms in Chapter~\ref{chapter:branch-and-bound}.
The high-level idea is to lay a hierarchical mesh over the space, then given a query point
navigate the hierarchy to the cell that likely contains the solution.
We will see, however, that in high dimensions, the basic forms of these methods become highly inefficient
to the point where an exhaustive search likely performs much better.

Alternatively, instead of laying a mesh over the space, we may define a fixed
number of buckets and map data points to these buckets with the property that,
if two data points are close to each other according to the distance function,
they are more likely to be mapped to the same bucket. When processing a query,
we find which bucket it maps to and search the data points in that bucket.
This is the intuition that led to the family of \textbf{Locality Sensitive Hashing} (LSH)
algorithms---a topic we will discuss in depth in Chapter~\ref{chapter:lsh}.

Yet another class of ideas adopts the view that data points are nodes in a graph.
We place an edge between two nodes if they are among each others' nearest neighbors.
When presented with a query point, we enter the graph through one of the nodes
and greedily traverse the edges by taking the edge that leads to the minimum distance
with the query. This process is repeated until we are stuck in some (local) optima.
This is the core idea in \textbf{graph} algorithms, as we will learn in Chapter~\ref{chapter:graph}.

The final major approach is the simplest of all: Organize the data points into small clusters during
pre-processing. When a query point arrives, solve the``cluster retrieval'' problem first, then
solve retrieval on the chosen clusters.
We will study this \textbf{clustering} method in detail in
Chapter~\ref{chapter:ivf}.

As we examine vector retrieval algorithms, it is inevitable that we must
ink in extra pages to discuss why similarity based on inner product is special
and why it poses extra challenges for the algorithms in each
category---many of these difficulties
will become clear in the introductory chapters.

There is, however, a special class of algorithms specifically for inner product.
\textbf{Sampling} algorithms take advantage of the linearity of inner product to
reduce the dependence of the time complexity on the number of dimensions.
We will review example algorithms in Chapter~\ref{chapter:sampling}.

\subsubsection*{Compression}

The third part of this monograph concerns the storage of vectors and their distance computation.
After all, the vector retrieval problem is not just concerned with
the time complexity of the retrieval process itself, but also aims to reduce the
size of the data structure that helps answer queries---known as the index.
Compression helps that cause.

In Chapter~\ref{chapter:quantization} we will review how vectors
can be \textbf{quantized} to reduce the size of the index
while simultaneously facilitating fast computation of the distance function
in the compressed domain! That is what makes quantization effective but challenging.

Related to the topic of compression is the concept of \textbf{sketching}.
Sketching is a technique to project a high-dimensional vector into
a low-dimensional vector, called a \emph{sketch}, such that certain properties
(e.g., the $L_2$ norm, or inner products between any two vectors) are \emph{approximately}
preserved. This probabilistic method of reducing dimensionality naturally
connects to vector retrieval.
We offer a peek into the vast sketching literature in Chapter~\ref{chapter:sketching}
and discuss its place in the vector retrieval research.
We do so with a particular focus on \emph{sparse} vectors in an inner product
space---contrasting sketching with quantization methods that are more appropriate
for \emph{dense} vectors.

\subsection*{Objective}

It is important to stress, however, that the purpose of this monograph is \emph{not}
to provide a comprehensive survey or comparative analysis of every published work
that has appeared in the vector retrieval literature.
There is simply too many empirical works with volumes of heuristics and engineering
solutions to cover. Instead, we will give an in-depth, didactic treatment of foundational
ideas that have caused a seismic shift in how we approach the problem,
and the theory that underpins them.

By consolidating these ideas, this monograph hopes to make this fascinating field
more inviting---especially to the uninitiated---and enticing as a research topic
to new and established researchers. We hope the reader will find that this
monograph delivers on these objectives.

\subsection*{Intended Audience}

This monograph is intended as an introductory text
for graduate students who wish to embark on research on vector retrieval.
It is also meant to serve as a self-contained reference that captures
important developments in the field, and as such, may be useful to established
researchers as well.

As the work is geared towards researchers, however, it naturally emphasizes the theoretical
aspects of algorithms as opposed to their empirical behavior or experimental
performance. We present theorems and their proofs, for example.
We do not, on the other hand, present experimental results or compare
algorithms on datasets systematically. There is also no discussion around the
use of the presented algorithms in practice, notes on implementation and libraries, or
practical insights and heuristics that are often critical to making these algorithms
work on real data.
As a result, practitioners or applied researchers may not find
the material immediately relevant.

Finally, while we make every attempt to articulate the theoretical results
and explain the proofs thoroughly, having some familiarity with linear algebra
and probability theory helps digest the results more easily.
We have included a review of the relevant concepts and results from
these subjects in Appendices~\ref{appendix:probability}
(probability),~\ref{appendix:measure} (concentration inequalities),
and~\ref{appendix:linear-algebra} (linear algebra) for convenience.
Should the reader wish to skip the proofs, however, the narrative
should still paint a complete picture of how each algorithm works.

\extrachap{Acknowledgements}
\addcontentsline{toc}{chapter}{Acknowledgements}

I am forever indebted to my dearest colleagues Edo Liberty, Amir Ingber,
Brian Hentschel, and Aditya Krishnan. This incredible but humble group
of scholars at Pinecone are generous with their time and knowledge,
patiently teaching me what I do not know,
and letting me use them as a sounding board without fail.
Their encouragement throughout the process of writing this manuscript,
too, was the force that drove this work to completion.

I am also grateful to Claudio Lucchese, a dear friend, a co-author,
and a professor of computer science at the Ca' Foscari University of Venice, Italy.
I conceived of the idea for this monograph as I lectured at Ca' Foscari
on the topic of retrieval and ranking, upon Claudio's kind invitation.

I would not be writing these words were it not for
the love, encouragement, and wisdom of Franco Maria Nardini,
of the ISTI CNR in Pisa, Italy. In the mad and often maddening world of research,
Franco is the one knowledgeable and kind soul who
restores my faith in research and guides me as I navigate the landscape.

Finally, there are no words that could possibly convey my deepest gratitude
to my partner, Katherine, for always supporting me and my ambitions;
for showing by example what dedication, tenacity, and grit ought to mean;
and for finding me when I am lost.
\chapter*{Notation}
\label{appendix:notation}
\addcontentsline{toc}{chapter}{Notation}

This section summarizes the special symbols and notation used throughout this work.
We often repeat these definitions in context as a reminder, especially if we choose to
abuse notation for brevity or other reasons.

\begin{svgraybox}
    Paragraphs that are highlighted in a gray box such as this contain important
    statements, often conveying key findings or observations, or a detail that will
    be important to recall in later chapters.
\end{svgraybox}

\subsection*{Terminology}
We use the terms ``vector'' and ``point'' interchangeably.
In other words, we refer to an ordered list of $d$ real values
as a $d$-dimensional vector or a point in $\mathbb{R}^d$.

We say that a point is a \emph{data} point if it is part of the collection
of points we wish to sift through. It is a \emph{query} point if it is the input
to the search procedure, and for which we are expected to return the top-$k$
similar data points from the collection.

\subsection*{Symbols}

\subsubsection*{Reserved Symbols}
\begin{longtable*}{p{0.3\linewidth}p{0.7\linewidth}}
$\mathcal{X}$ & Used exclusively to denote a collection of vectors. \\
$m$ & We use this symbol exclusively to denote the cardinality of a collection of data points, $\mathcal{X}$. \\
$q$ & Used singularly to denote a query point. \\
$d$ & We use this symbol exclusively to refer to the number of dimensions. \\
$e_1, e_2, \ldots, e_d$ & Standard basis vectors in $\mathbb{R}^d$ \\
\end{longtable*}

\subsubsection*{Sets}
\begin{longtable*}{p{0.3\linewidth}p{0.7\linewidth}}
$\mathcal{J}$ & Calligraphic font typically denotes sets. \\
$\lvert \cdot \rvert$ & The cardinality (number of items) of a finite set. \\
$[n]$ & The set of integers from $1$ to $n$ (inclusive): $\{ 1, 2, 3, \ldots, n\}$. \\
$B(u, r)$ & The closed ball of radius $r$ centered at point $u$: $\{ v \;|\; \delta(u, v) \leq r \}$ where
$\delta(\cdot, \cdot)$ is the distance function.\\
$\setminus$ & The set difference operator: $\mathcal{A} \setminus \mathcal{B} = \{ x \in \mathcal{A} \;|\; x \notin \mathcal{B} \}$. \\
$\triangle$ & The symmetric difference of two sets. \\
$\mathbbm{1}_p$ & The indicator function. It is $1$ if the predicate $p$ is true, and $0$ otherwise. \\
\end{longtable*}

\subsubsection*{Vectors and Vector Space}
\begin{longtable*}{p{0.3\linewidth}p{0.7\linewidth}}
$[a, b]$ & The closed interval from $a$ to $b$. \\
$\mathbb{Z}$ & The set of integers. \\
$\mathbb{R}^d$ & $d$-dimensional Euclidean space. \\
$\mathbb{S}^{d-1}$ & The hypersphere in $\mathbb{R}^d$. \\
$u, v, w$ & Lowercase letters denote vectors. \\
$u_i, v_i, w_i$ & Subscripts identify a specific coordinate of a vector, so that $u_i$ is the $i$-th coordinate of vector $u$. \\
\end{longtable*}

\subsubsection*{Functions and Operators}
\begin{longtable*}{p{0.3\linewidth}p{0.7\linewidth}}
$\mathit{nz}(\cdot)$ & The set of non-zero coordinates of a vector: $\mathit{nz}(u) = \{ i \;|\; u_i \neq 0 \}$. \\
$\delta(\cdot, \cdot)$ & We use the symbol $\delta$ exclusively to denote the distance function,
taking two vectors and producing a real value. \\
$J(\cdot, \cdot)$ & The Jaccard similarity index of two vectors: $J(u, v) = \lvert \mathit{nz}(u) \cap \mathit{nz}(v) \rvert / \lvert \mathit{nz}(u) \cup \mathit{nz}(v) \rvert$. \\
$\langle \cdot, \cdot \rangle$ & Inner product of two vectors: $\langle u, v \rangle = \sum_i u_i v_i$. \\
$\lVert \cdot \rVert_p$ & The $L_p$ norm of a vector: $\lVert u \rVert_p = (\sum_i \lvert u_i \rvert^p)^{1/p}$. \\
$\oplus$ & The concatenation of two vectors. If $u, v \in \mathbb{R}^d$, then $u \oplus v \in \mathbb{R}^{2d}$. \\
\end{longtable*}

\subsubsection*{Probabilities and Distributions}
\begin{longtable*}{p{0.3\linewidth}p{0.7\linewidth}}
$\ev[\cdot]$ & The expected value of a random variable. \\
$\var[\cdot]$ & The variance of a random variable. \\
$\probability[\cdot]$ & The probability of an event. \\
$\land$, $\lor$ & Logical AND and OR operators. \\
$Z$ & We generally use uppercase letters to denote random variables. \\
\end{longtable*}

\tableofcontents

\mainmatter
\begin{partbacktext}
\part{Introduction}
\end{partbacktext}

\chapter{Vector Retrieval}
\label{chapter:flavors}

\abstract{
This chapter sets the stage for the remainder of this monograph.
It explains where vectors come from, how they have come to represent
data of any modality, and why they are a useful mathematical tool in machine learning.
It then describes the structure we typically expect from a collection
of vectors: that similar objects get vector representations that are close to 
each other in an inner product or metric space.
We then define the problem of top-$k$ retrieval over a well-structured
collection of vectors, and explore its different flavors, including
approximate retrieval.
}

\section{Vector Representations}
We routinely use ordered lists of numbers, or \emph{vectors}, to describe objects of any
shape or form. Examples abound. Any geographic location on earth can be recognized as a vector
consisting of its latitude and longitude. A desk can be described as 
a vector that represents its dimensions, area, color, and other quantifiable properties.
A photograph as a list of pixel values that together paint a picture.
A sound wave as a sequence of frequencies.

Vector representations of objects have long been an integral part of the machine learning
literature. Indeed, a classifier, a regression model, or a ranking function learns patterns
from, and acts on, vector representations of data.
In the past, this vector representation of an object was nothing more than
a collection of its \emph{features}.
Every feature described some facet of the object (for example, the color intensity of a pixel
in a photograph) as a continuous or discrete value.
The idea was that, while individual features describe only a small part of the object,
together they provide sufficiently powerful statistics about the object and its properties
for the machine learnt model to act on.

The features that led to the vector representation of an object were generally hand-crafted functions.
To make sense of that, let us consider a text document in English.
Strip the document of grammar and word order, and we end up with a \emph{set} of words,
more commonly known as a ``bag of words.'' This set can be summarized as a histogram.

If we designated every term in the English vocabulary to be
a dimension in a (naturally) high-dimensional space,
then the histogram representation of the document can be encoded as a vector.
The resulting vector has relatively few non-zero coordinates,
and each non-zero coordinate records the frequency of a term present in the document.
This is illustrated in Figure~\ref{figure:flavors:text-sparse-vector} for a toy example.
More generally, non-zero values may be a function of a term's frequency in 
the document and its propensity in a collection---that is, the likelihood of encountering
the term~\citep{salton1988term}.

\begin{figure}[t]
    \centering
    \includegraphics[width=0.8\linewidth]{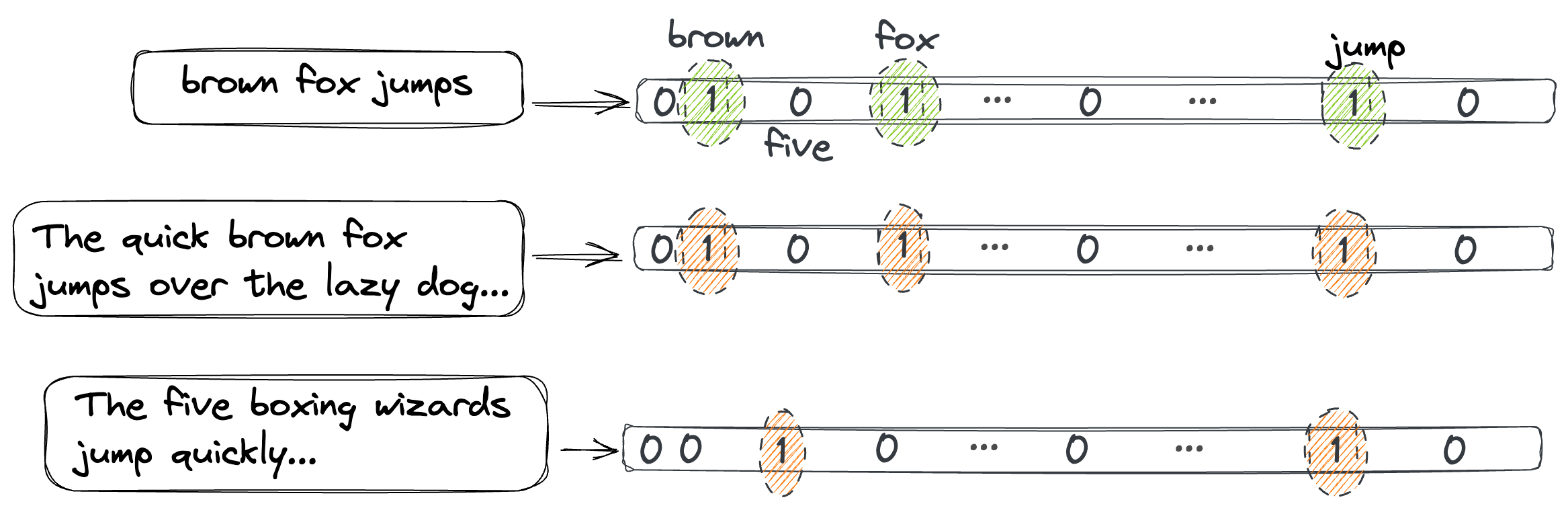}
    \caption{Vector representation of a piece of text by adopting
    a ``bag of words'' view: A text document, when stripped of grammar and word order,
    can be thought of as a vector, where each coordinate represents a term in our vocabulary
    and its value records the frequency of that term in the document or some function of it.
    The resulting vectors are typically \emph{sparse}; that is, they have very few non-zero coordinates.}
    \label{figure:flavors:text-sparse-vector}
\end{figure}

\bigskip

The advent of \emph{deep learning} and, in particular, Transformer-based models~\citep{vaswani2017attention}
brought about vector representations that are beyond the elementary formation above.
The resulting representation is often, as a single entity, referred to as an \emph{embedding},
instead of a ``feature vector,''
though the underlying concept remains unchanged: an object is encoded as a real $d$-dimensional vector,
a point in $\mathbb{R}^d$.

Let us go back to the example from earlier to see how the embedding of a text document
could be different from its representation as a frequency-based feature vector.
Let us maintain the one-to-one mapping between coordinates
and terms in the English vocabulary. Remember that in the ``lexical'' representation from earlier,
if a coordinate was non-zero, that implied that the corresponding term was present in the document
and its value indicated its frequency-based feature. Here we instead \emph{learn} to turn
coordinates on or off and, when we turn a coordinate on, we want its value to predict the
significance of the corresponding term based on semantics and contextual information.
For example, the (absent) synonyms of a (present) term may get a non-zero value, and terms that offer
little discriminative power in the given context become $0$ or close to it.
This basic idea has been explored extensively by many recent models of text~\citep{sparterm,formal2021splade,formal2022splade,zhuang2022reneuir,dai2020sigir,coil,mallia2021learning,zamani2018cikm,unicoil}
and has been shown to produce effective representations.

Vector representations of text need not be sparse.
While sparse vectors with dimensions that are grounded in the vocabulary
are inherently \emph{interpretable}, text documents can also be represented with
lower-dimensional \emph{dense} vectors (where every coordinate is \emph{almost surely} non-zero).
This is, in fact, the most dominant form of vector representation of text
documents in the literature~\citep{lin2021pretrained,karpukhin-etal-2020-dense,xiong2021approximate,reimers-2019-sentence-bert,santhanam-etal-2022-colbertv2,colbert2020khattab}. Researchers have also explored \emph{hybrid}
representations of text where vectors have a dense subspace and an orthogonal sparse
subspace~\citep{chen2022ecir,bruch2023fusion,wang2021bert,Kuzi2020LeveragingSA,karpukhin-etal-2020-dense,Ma2021ARS,Ma2020HybridFR,Wu2019EfficientIP}.

Unsurprisingly, the same embedding paradigm can be extended to other data modalities beyond text:
Using deep learning models, one may embed images, videos, and audio recordings
into vectors. In fact, it is even possible to project different
data modalities (e.g., images and text) together into the same vector space
and preserve some property of interest~\citep{multimodal2020Zhang,guo2019multimodal}.

\begin{svgraybox}
It appears, then, that vectors are everywhere.
Whether they are the result of hand-crafted functions that
capture features of the data or are the output of learnt models;
whether they are dense, sparse, or both,
they are effective representations of data of any modality.
\end{svgraybox}

But what precisely is the point of turning every piece of data into a vector?
One answer to that question takes us to the fascinating world of \emph{retrieval}.

\section{Vectors as Units of Retrieval}

It would make for a vapid exercise if all we had were vector representations of data
without any structure governing a collection of them.
To give a collection of points some structure, we must first ask ourselves what goal
we are trying to achieve by turning objects into vectors.
It turns out, we often intend for the vector representation of two \emph{similar} objects to
be ``close'' to each other according to some well-defined distance function.

That is the structure we desire: Similarity in the vector space must imply similarity
between objects. So, as we engineer features to be extracted from an object, or
design a protocol to learn a model to produce embeddings of data,
we must choose the dimensionality $d$ of the target space
(a subset of $\mathbb{R}^d$) along with a distance function $\delta(\cdot, \cdot)$.
Together, these define an inner product or metric space.

\bigskip

Consider again the lexical representation of a text document
where $d$ is the size of the English vocabulary. Let $\delta$ be the
distance variant of the Jaccard index,
$\delta(u, v) = - J(u, v) \triangleq - \lvert \mathit{nz}(u) \cap \mathit{nz}(v) \rvert / \lvert \mathit{nz}(u) \cup \mathit{nz}(v) \rvert$,
where $\mathit{nz}(u) = \{ i \;|\; u_i \neq 0 \}$ with $u_i$ denoting the $i$-th
coordinate of vector $u$.

In the resulting space, if vectors $u$ and $v$ have a smaller distance
than vectors $u$ and $w$,
then we can clearly conclude that the document represented by $u$ is lexically
more similar to the one represented by $v$ than it is to the document $w$ represents.
That is because the distance (or, in this case, similarity) function reflects
the amount of overlap between the terms present in one document with another.

\bigskip

We should be able to make similar arguments given a semantic embedding of text documents.
Again consider the sparse embeddings with $d$ being the size of the vocabulary,
and more concretely, take \splade{}~\citep{formal2021splade} as a concrete example.
This model produces real-valued sparse vectors in an inner product space.
In other words, the objective of its learning procedure is to
maximize the inner product between similar vectors,
where the inner product between two vectors $u$ and $v$ is
denoted by $\langle u, v \rangle$ and is computed using $\sum_i u_i v_i$.

In the resulting space, if $u$, $v$, and $w$ are generated by \splade{}
with the property that $\langle u, v \rangle > \langle u, w \rangle$, then we can conclude
that, according to \splade{}, documents represented by $u$ and $v$ are semantically more
similar to each other than $u$ is to $w$.
There are numerous other examples of models that optimize for the angular distance
or Euclidean distance ($L_2$) between vectors to preserve (semantic) similarity.

\bigskip

What can we do with a well-characterized collection of vectors that represent real-world
objects? Quite a lot, it turns out. One use case is the topic of this monograph:
the fundamental problem of retrieval.

\begin{svgraybox}
We are often interested in finding $k$ objects that have the highest
degree of similarity to a query object.
When those objects are represented by vectors in a collection $\mathcal{X}$,
where the distance function $\delta(\cdot, \cdot)$ is reflective of similarity,
we may formalize this top-$k$ question mathematically as finding the $k$
minimizers of distance with the query point!
\end{svgraybox}

We state that formally in the following definition:

\begin{definition}[Top-$k$ Retrieval]
\label{definition:flavors:top-k-retrieval}
Given a distance function $\delta(\cdot, \cdot)$, we wish to pre-process
a collection of data points $\mathcal{X} \subset \mathbb{R}^d$
in time that is polynomial in $\lvert \mathcal{X} \rvert$ and $d$,
to form a data structure (the ``index'') whose size is polynomial in
$\lvert \mathcal{X} \rvert$ and $d$, so as to efficiently solve
the following in time $o(\lvert \mathcal{X} \rvert d)$
for an arbitrary query $q \in \mathbb{R}^d$:
\begin{equation}
    \label{equation:flavors:top-k-retrieval}
    \argmin^{(k)}_{u \in \mathcal{X}} \delta(q, u).
\end{equation}
\end{definition}

A web search engine, for example, finds the most relevant documents to your
query by first formulating it as a top-$k$ retrieval problem
over a collection of (not necessarily text-based) vectors.
In this way, it quickly finds the subset of documents from the entire web that
may satisfy the information need captured in your query.
Question answering systems, conversational agents (such as Siri, Alexa, and ChatGPT),
recommendation engines, image search, outlier detectors,
and myriad other applications that are at the forefront
of many online services and in many consumer gadgets
depend on data structures and algorithms that can
answer the top-$k$ retrieval question as efficiently and as effectively as possible.

\section{Flavors of Vector Retrieval}

We create an instance of the deceptively simple
problem formalized in Definition~\ref{definition:flavors:top-k-retrieval}
the moment we acquire a collection of vectors $\mathcal{X}$ together with
a distance function $\delta$.
In the remainder of this monograph, we assume that there is some function,
either manually engineered or learnt, that transforms objects into vectors.
So, from now on, $\mathcal{X}$ is a given.

The distance function then, specifies the flavor of the top-$k$ retrieval problem we need to solve.
We will review these variations and explore what each entails.

\subsection{Nearest Neighbor Search}
In many cases, the distance function is derived from a proper metric
where non-negativity, coincidence, symmetry, and triangle inequality hold for $\delta$.
A clear example of this is the $L_2$ distance:
$\delta(u, v) = \lVert u - v \rVert_2$. The resulting problem,
illustrated for a toy example in Figure~\subref*{figure:flavors:flavors:knn},
is known as $k$-Nearest Neighbors ($k$-NN) search:
\begin{equation}
    \label{equation:flavors:knn}
    \argmin^{(k)}_{u \in \mathcal{X}} \lVert q - u \rVert_2
    = \argmin^{(k)}_{u \in \mathcal{X}} \lVert q - u \rVert_2^2.
\end{equation}

\subsection{Maximum Cosine Similarity Search}

The distance function may also be the angular distance between vectors, which is again a proper metric.
The resulting minimization problem can be stated as follows,
though its equivalent maximization problem (involving the cosine of the angle between
vectors) is perhaps more recognizable:
\begin{equation}
    \label{equation:flavors:kmcs}
    \argmin^{(k)}_{u \in \mathcal{X}} 1 - \frac{\langle q, u \rangle}{\lVert q \rVert_2 \lVert u \rVert_2}
    = \argmax^{(k)}_{u \in \mathcal{X}} \frac{\langle q, u \rangle}{\lVert u \rVert_2}.
\end{equation}
The latter is referred to as the $k$-Maximum Cosine Similarity ($k$-MCS) problem.
Note that, because the norm of the query point, $\lVert q \rVert_2$, is a constant in the
optimization problem, it can simply be discarded; the resulting distance function is
rank-equivalent to the angular distance. Figure~\subref*{figure:flavors:flavors:kmcs}
visualizes this problem on a toy collection of vectors.

\begin{figure}[t]
    \centering
    \subfloat[\textsc{$k$-NN}]{
        \label{figure:flavors:flavors:knn}
        \includegraphics[width=0.32\linewidth]{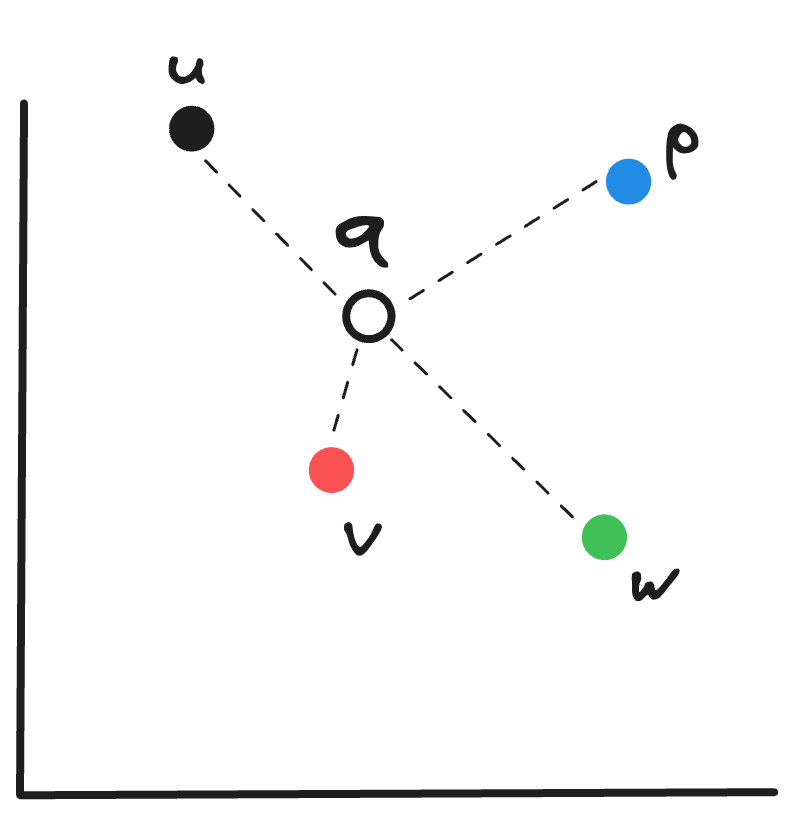}
    }
    \subfloat[\textsc{$k$-MCS}]{
        \label{figure:flavors:flavors:kmcs}
        \includegraphics[width=0.32\linewidth]{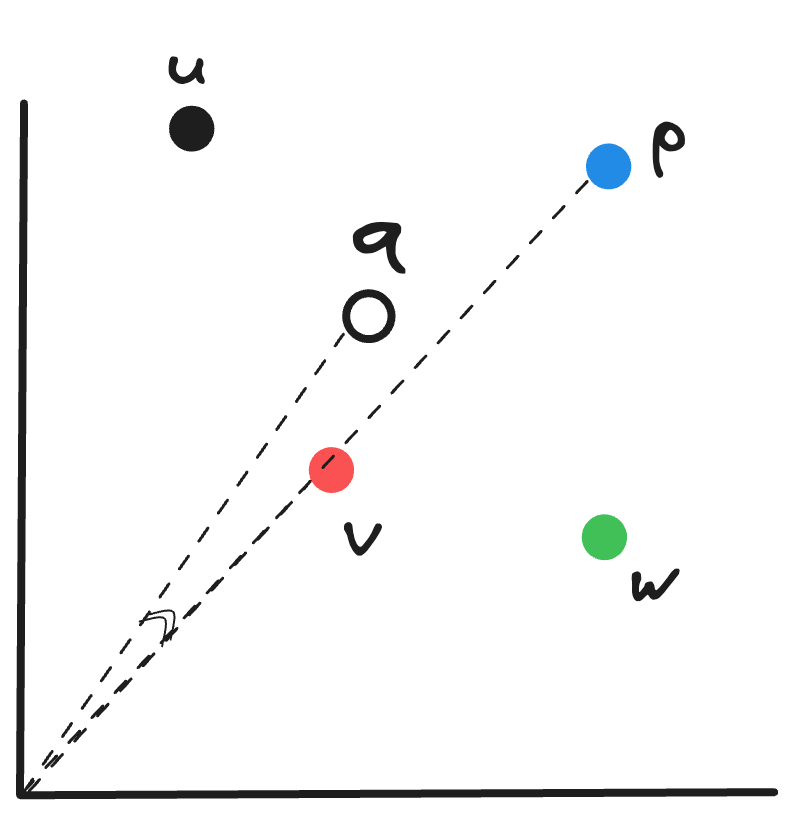}
    }
    \subfloat[\textsc{$k$-MIPS}]{
        \label{figure:flavors:flavors:kmips}
        \includegraphics[width=0.32\linewidth]{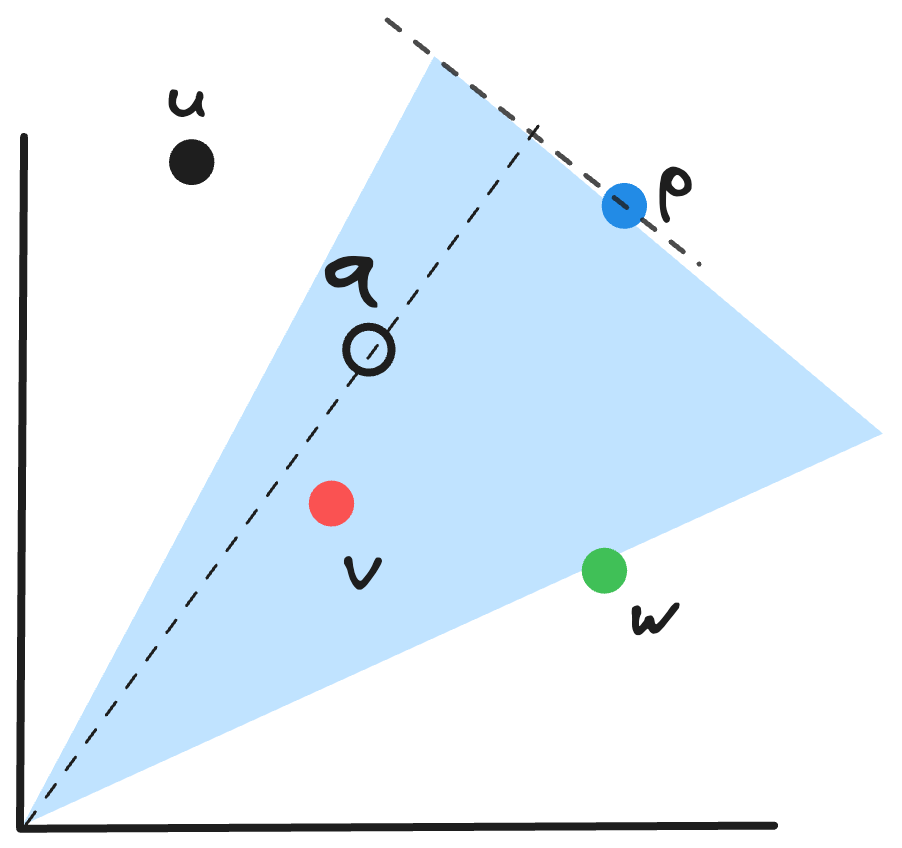}
    }
    \caption{Variants of vector retrieval for a toy vector collection in $\mathbb{R}^2$.
    In Nearest Neighbor search, we find the data point whose $L_2$ distance to the
    query point is minimal ($v$ for top-$1$ search). In Maximum Cosine Similarity search, we instead find the point whose
    angular distance to the query point is minimal ($v$ and $p$ are equidistant from the query).
    In Maximum Inner Product Search, we find a vector that maximizes the inner product with the query
    vector. This can be understood as letting the hyperplane orthogonal to the query point sweep the
    space towards the origin; the first vector to touch the sweeping plane is the maximizer of inner product.
    Another interpretation is this: the shaded region in the figure contains all the points $y$
    for which $p$ is the answer to $\argmax_{x \in \{ u, v, w, p\}} \langle x, y \rangle$.}
    \label{figure:flavors:flavors}
\end{figure}

\subsection{Maximum Inner Product Search}
\label{section:flavors:flavors:mips}
Both of the problems in Equations~(\ref{equation:flavors:knn}) and~(\ref{equation:flavors:kmcs})
are special instances of a more general problem known as $k$-Maximum Inner Product Search ($k$-MIPS):
\begin{equation}
    \label{equation:flavors:kmips}
    \argmax^{(k)}_{u \in \mathcal{X}} \langle q, u \rangle.
\end{equation}
This is easy to see for $k$-MCS: If, in a pre-processing step,
we $L_2$-normalized all vectors in $\mathcal{X}$ so that $u$ is transformed to $u^\prime = u / \lVert u \rVert_2$,
then $\lVert u^\prime \rVert_2 = 1$ and therefore Equation~(\ref{equation:flavors:kmcs}) reduces to Equation~(\ref{equation:flavors:kmips}).

As for a reduction of $k$-NN to $k$-MIPS, we can expand Equation~(\ref{equation:flavors:knn})
as follows:
\begin{align*}
    \argmin^{(k)}_{u \in \mathcal{X}} \lVert q - u \rVert_2^2 &=
        \argmin^{(k)}_{u \in \mathcal{X}} \lVert q \rVert_2^2 - 2\langle q, u \rangle + \lVert u \rVert_2^2\\
    &= \argmax^{(k)}_{u^\prime \in \mathcal{X}^\prime} \langle q^\prime, u^\prime \rangle,
\end{align*}
where we have discarded the constant term, $\lVert q \rVert_2^2$,
and defined $q^\prime \in \mathbb{R}^{d+1}$ as the concatenation of $q \in \mathbb{R}^d$ and
a $1$-dimensional vector with value $-1/2$ (i.e., $q^\prime = [q, -1/2]$),
and $u^\prime \in \mathbb{R}^{d+1}$ as $[u, \lVert u \rVert_2^2]$.

The $k$-MIPS problem, illustrated on a toy collection in
Figure~\subref*{figure:flavors:flavors:kmips}, does not come about just as the result of the reductions
shown above. In fact, there exist embedding models (such as \splade{}, as discussed earlier)
that learn vector representations with respect to inner product as the distance function.
In other words, $k$-MIPS is an important problem in its own right.

\subsubsection{Properties of MIPS}

In a sense, then, it is sufficient to solve the $k$-MIPS problem
as it is the umbrella problem for much of vector retrieval.
Unfortunately, $k$-MIPS is a much harder problem than the other variants.
That is because inner product is not a proper metric.
In particular, it is not non-negative and does not satisfy the triangle inequality, so that
$\langle u, v \rangle \nless \langle u, w \rangle + \langle w, v \rangle$ in general.

\begin{svgraybox}
Perhaps more troubling is the fact that even ``coincidence'' is not guaranteed.
In other words, it is not true in general that a vector $u$ maximizes
inner product with itself: $u \neq \argmax_{v \in \mathcal{X}} \langle v, u \rangle$!
\end{svgraybox}

As an example, suppose $v$ and $p = \alpha v$ for some $\alpha > 1$ are vectors
in the collection $\mathcal{X}$---a case demonstrated in Figure~\subref*{figure:flavors:flavors:kmips}.
Clearly, we have that $\langle v, p \rangle = \alpha \langle v, v \rangle > \langle v, v \rangle$,
so that $p$ (and not $v$) is the solution to MIPS\footnote{When $k=1$, we drop the symbol $k$ from the name of the retrieval problem. So we write MIPS instead of $1$-MIPS.}
for the query point $v$.

In high-enough dimensions and under certain statistical conditions, however,
coincidence is reinstated for MIPS with high probability. One such case is stated in the 
following theorem.

\begin{theorem}
    \label{theorem:flavors:coincidence}
    Suppose data points $\mathcal{X}$ are independent and identically distributed (\emph{iid}) in each dimension
    and drawn from a zero-mean distribution. Then, for any $u \in \mathcal{X}$:
    \begin{equation*}
        \lim_{d \rightarrow \infty} \probability\big[ u = \argmax_{v \in \mathcal{X}}  \langle u, v \rangle ] = 1.
    \end{equation*}
\end{theorem}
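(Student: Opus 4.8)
The plan is to bound the failure probability by a union bound and then exploit the mismatch in scale between the ``signal'' $\langle u, u \rangle = \lVert u \rVert_2^2$, which grows linearly in $d$, and each ``competing'' inner product $\langle u, v \rangle$ with $v \neq u$, which only grows like $\sqrt{d}$. Write $\sigma^2$ for the common variance of the coordinate distribution; since that distribution is zero-mean and non-degenerate we have $\sigma^2 > 0$, and I will assume $\sigma^2 < \infty$ (the second-moment assumption is implicit; if it fails, a truncation argument is needed and the statement is more delicate---I would flag this as the one genuine caveat). Since $m = \lvert \mathcal{X} \rvert$ is held fixed while $d \to \infty$, it suffices by the union bound to show that for each fixed $v \in \mathcal{X}$ with $v \neq u$,
\[
    \probability\big[ \langle u, v \rangle \ge \langle u, u \rangle \big] \longrightarrow 0 \quad \text{as } d \to \infty,
\]
and then add up the $m-1$ such bounds.

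First I would condition on $u$. Conditionally on $u$, the quantity $\langle u, v \rangle = \sum_{i=1}^d u_i v_i$ is a sum of independent random variables, each with mean zero (because $\ev[v_i] = 0$) and variance $u_i^2 \sigma^2$; hence $\ev[\langle u, v \rangle \mid u] = 0$ and $\var[\langle u, v \rangle \mid u] = \sigma^2 \lVert u \rVert_2^2$. Since $\langle u, u \rangle = \lVert u \rVert_2^2 \ge 0$, Chebyshev's inequality gives, whenever $\lVert u \rVert_2 > 0$,
\[
    \probability\big[ \langle u, v \rangle \ge \lVert u \rVert_2^2 \;\big|\; u \big]
    \;\le\; \frac{\var[\langle u, v \rangle \mid u]}{\lVert u \rVert_2^4}
    \;=\; \frac{\sigma^2}{\lVert u \rVert_2^2}.
\]
So the conditional failure probability is small precisely when $\lVert u \rVert_2^2$ is large, which is the crux of the argument.

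It then remains to show that $\lVert u \rVert_2^2$ is large with high probability. Here $\lVert u \rVert_2^2 = \sum_{i=1}^d u_i^2$ is a sum of $d$ iid terms with mean $\sigma^2 > 0$, so by the (weak) law of large numbers $\lVert u \rVert_2^2 / d \to \sigma^2$ in probability; in particular $\probability[\lVert u \rVert_2^2 < d\sigma^2/2] \to 0$. Splitting on this event and integrating out the conditioning,
\[
    \probability\big[ \langle u, v \rangle \ge \langle u, u \rangle \big]
    \;\le\; \probability\Big[ \lVert u \rVert_2^2 < \tfrac{d\sigma^2}{2} \Big]
    + \ev\Big[ \mathbbm{1}_{\lVert u \rVert_2^2 \ge d\sigma^2/2} \cdot \frac{\sigma^2}{\lVert u \rVert_2^2} \Big]
    \;\le\; \probability\Big[ \lVert u \rVert_2^2 < \tfrac{d\sigma^2}{2} \Big] + \frac{2}{d},
\]
and both terms vanish as $d \to \infty$ (the second term also absorbs the degenerate event $\lVert u \rVert_2 = 0$). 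Feeding this into the union bound over the $m-1$ points $v \neq u$ finishes the proof.

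I expect the only real obstacle to be the last paragraph's reliance on a finite second moment: everything else---the union bound, the conditional Chebyshev estimate, and the law-of-large-numbers control of $\lVert u \rVert_2^2$---is routine bookkeeping once one notices the $d$-versus-$\sqrt{d}$ scaling. If one wanted the cleanest possible write-up, the split-on-$\{\lVert u\rVert_2^2 \ge d\sigma^2/2\}$ step can be replaced by a bounded-convergence argument ($\sigma^2/\lVert u\rVert_2^2 \to 0$ almost surely and is bounded by its value on that event), but the elementary version above avoids invoking anything beyond Chebyshev and the weak law.
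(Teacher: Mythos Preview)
Your proof is correct and follows the same high-level strategy as the paper: a union bound over $v \neq u$, then a Chebyshev-type estimate exploiting that $\langle u,u\rangle$ scales like $d$ while the fluctuations of $\langle u,v\rangle$ scale like $\sqrt{d}$. The execution differs in one respect worth noting. The paper applies the one-sided Chebyshev inequality \emph{unconditionally} to $Y = \langle u,X\rangle - \langle u,u\rangle + d\,\ev[Z^2]$, treating both $u$ and $X$ as random and asserting that $\var[Y]$ grows linearly in $d$; this implicitly uses $\ev[Z^4] < \infty$ (since $\var[\langle u,u\rangle] = d\,\var[Z^2]$ enters). You instead condition on $u$, apply Chebyshev to $\langle u,v\rangle$ alone, and then handle the randomness in $\lVert u\rVert_2^2$ separately via the weak law of large numbers. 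Your route is slightly more elementary and, as you correctly flag, needs only $\sigma^2 = \ev[Z^2] < \infty$ rather than a finite fourth moment, so it is a mild strengthening of the paper's argument.
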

\begin{proof}
    Denote by $\var[\cdot]$ and $\ev[\cdot]$ the variance and expected value operators.
    By the conditions of the theorem, it is clear that $\ev[\langle u, u \rangle] = d \ev[Z^2]$ where
    $Z$ is the random variable that generates each coordinate of the vector. We can also see that
    $\ev[\langle u, X \rangle] = 0$ for a random data point $X$, and that
    $\var[\langle u, X \rangle] = \lVert u \rVert_2^2 \ev[Z^2]$.

    We wish to claim that $u \in \mathcal{X}$ is the solution to a MIPS problem where $u$ is also the query point.
    That happens if and only if every other vector in $\mathcal{X}$ has an inner product with $u$ that is smaller than
    $\langle u, u\rangle$. So that:
    \begin{align*}
        \probability\big[ u &= \argmax_{v \in \mathcal{X}}  \langle u, v \rangle ] =
            \probability\big[ \langle u, v \rangle \leq \langle u, u \rangle \quad \forall \; v \in \mathcal{X} ] = \\
            &1 - \probability\big[ \exists \; v \in \mathcal{X} \; \mathit{s.t.} \quad \langle u, v \rangle > \langle u, u \rangle] \geq && \text{(by Union Bound)} \\
            &1 - \sum_{v \in \mathcal{X}} \probability\big[ \langle u, v \rangle > \langle u, u \rangle] = && \text{(by \emph{iid})} \\
            &1 - \lvert \mathcal{X} \rvert \probability\big[ \langle u, X \rangle > \langle u, u \rangle].
    \end{align*}
    Let us turn to the last term and bound the probability for a random data point:
    \begin{equation*}
        \probability\big[\langle u, X \rangle > \langle u, u \rangle] =
        \probability\big[ \underbrace{\langle u, X \rangle - \langle u, u \rangle + d \ev[Z^2]}_{Y} > d \ev[Z^2] \big].
    \end{equation*}
    The expected value of $Y$ is $0$. Denote by $\sigma^2$ its variance. By the application of the one-sided
    Chebyshev's inequality,\footnote{The one-sided Chebyshev's inequality for a random variable $X$ with
    mean $\mu$ and variance $\sigma^2$ states that $\probability\big[ X - \mu > t\big] \leq \sigma^2 / \big( \sigma^2 + t^2 \big)$.}
    we arrive at the following bound:
    \begin{equation*}
        \probability\big[\langle u, X \rangle > \langle u, u \rangle] \leq \frac{\sigma^2}{\sigma^2 + d^2 \ev[Z^2]^2}.
    \end{equation*}
    Note that, $\sigma^2$ is a function of the sum of \emph{iid} random variables, and, as such, grows linearly with $d$.
    In the limit. this probability tends to $0$. We have thus shown that
    $\lim_{d \rightarrow \infty} \probability\big[ u = \argmax_{v \in \mathcal{X}}  \langle u, v \rangle ] \geq 1$ which concludes
    the proof.
\end{proof}

\begin{figure}[t]
    \centering
    \subfloat[\textsc{Synthetic}]{
        \label{figure:flavors:coincidence:synthetic}
        \includegraphics[width=0.52\linewidth]{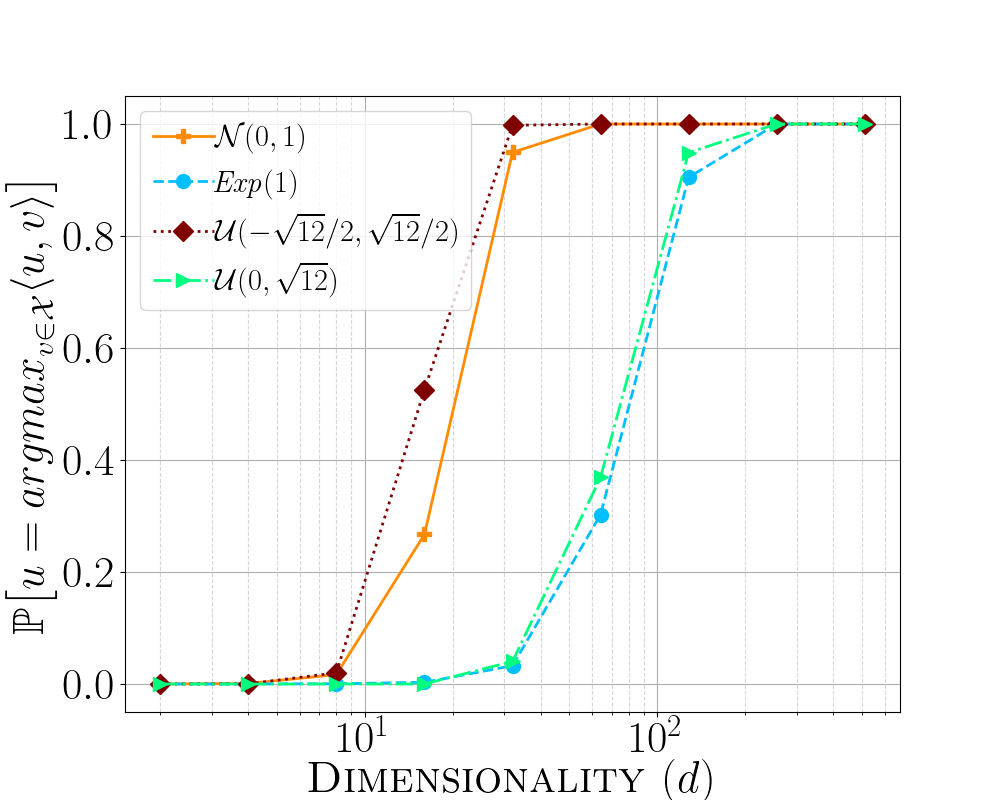}
    }\hspace*{-1.5em}
    \subfloat[\textsc{Real}]{
        \label{figure:flavors:coincidence:real}
        \includegraphics[width=0.52\linewidth]{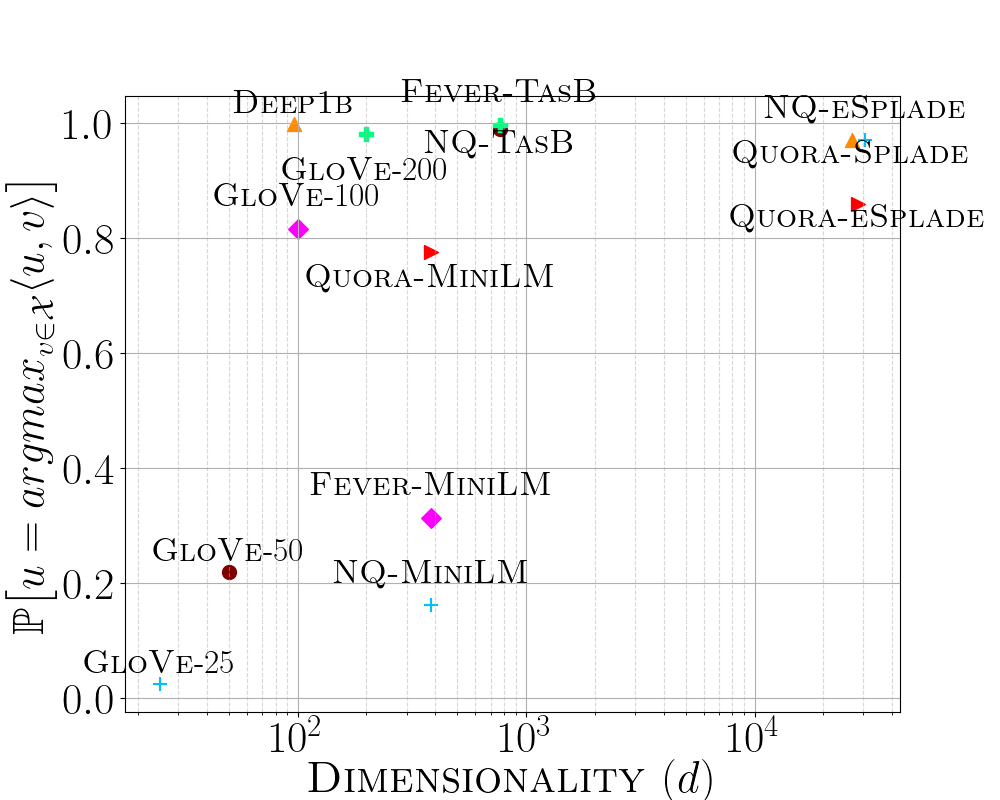}
    }
    \caption{Probability that $u \in \mathcal{X}$ is the solution to MIPS over $\mathcal{X}$ with query $u$
    versus the dimensionality $d$, for various synthetic and real collections $\mathcal{X}$.
    For synthetic collections, $\lvert \mathcal{X} \rvert = 100{,}000$. Appendix~\ref{appendix:collections}
    gives a description of the real collections. Note that, for real collections, we estimate the reported
    probability by sampling $10{,}000$ data points and using them as queries. Furthermore,
    we do not pre-process the vectors---importantly, we do not $L_2$-normalize the collections.}
    \label{figure:flavors:coincidence}
\end{figure}

\subsubsection{Empirical Demonstration of the Lack of Coincidence}
Let us demonstrate the effect of Theorem~\ref{theorem:flavors:coincidence}
empirically. First, let us choose distributions that meet
the requirements of the theorem: a Gaussian distribution
with mean $0$ and variance $1$, and a uniform distribution over $[-\sqrt{12}/2, \sqrt{12}/2]$
(with variance $1$) will do.
For comparison, choose another set of distributions that do not have the requisite properties:
Exponential with rate $1$ and uniform over $[0, \sqrt{12}]$.
Having fixed the distributions, we next sample $100{,}000$ random vectors
from them to form a collection $\mathcal{X}$.
We then take each data point, use it as a query in MIPS over $\mathcal{X}$, and
report the proportion of data points that are solutions to their own search.

Figure~\subref*{figure:flavors:coincidence:synthetic} illustrates the results of this experiment.
As expected, for the Gaussian and centered uniform distributions, the ratio of interest approaches $1$
when $d$ is sufficiently large. Surprisingly, even when the distributions do not strictly satisfy
the conditions of the theorem, we still observe the convergence of that ratio to $1$. So it appears that
the requirements of Theorem~\ref{theorem:flavors:coincidence} are more forgiving than one may imagine.

We also repeat the exercise above on several real-world collections, a description of which
can be found in Appendix~\ref{appendix:collections} along with salient statistics.
The results of these experiments are visualized in Figure~\subref*{figure:flavors:coincidence:real}.
As expected, whether a data point maximizes inner product with itself entirely depends
on the underlying data distribution. We can observe that, for some collections
in high dimensions, we are likely to encounter coincidence in the sense we defined
earlier, but for others that is clearly not the case. It is important to keep this
difference between synthetic and real collections in mind when designing experiments
that evaluate the performance of MIPS systems.

\section{Approximate Vector Retrieval}
\label{chapter:flavors:approximate}

Saying one problem is harder than another neither implies that we cannot approach
the harder problem, nor does it mean that the ``easier'' problem is easy to solve.
In fact, none of these variants of vector retrieval ($k$-NN, $k$-MCS, and $k$-MIPS)
can be solved exactly \emph{and} efficiently in high dimensions.
Instead, we must either accept that the solution would be inefficient
(in terms of space- or time-complexity), or allow some degree of error.

\begin{figure}[t]
    \centering
    \subfloat[\textsc{$k$-NN}]{
        \label{figure:flavors:flavors:approximate-knn}
        \includegraphics[width=0.32\linewidth]{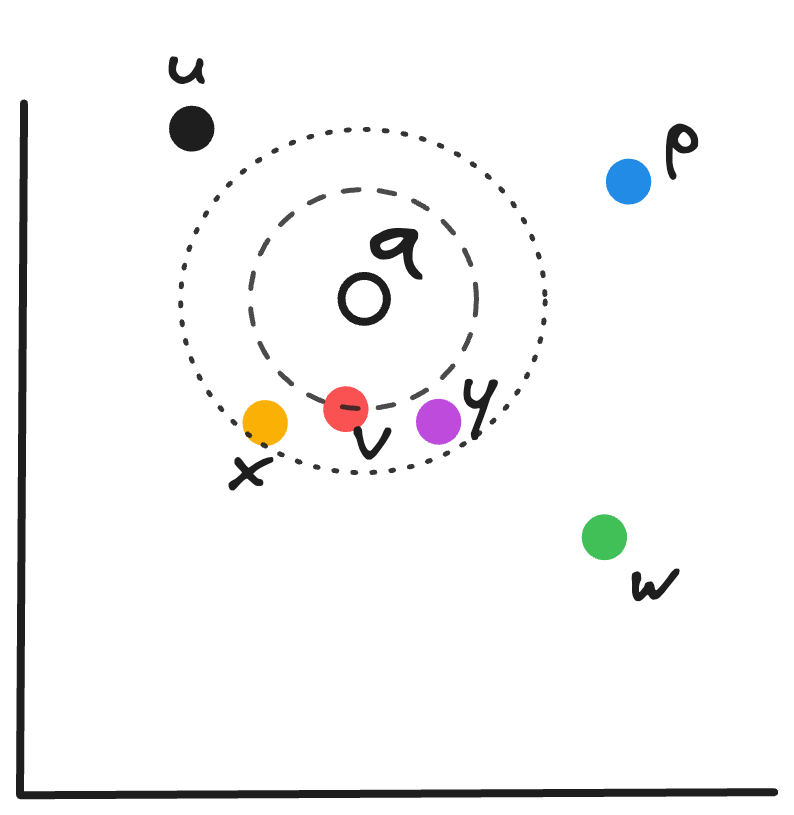}
    }
    \subfloat[\textsc{$k$-MCS}]{
        \label{figure:flavors:flavors:approximate-kmcs}
        \includegraphics[width=0.32\linewidth]{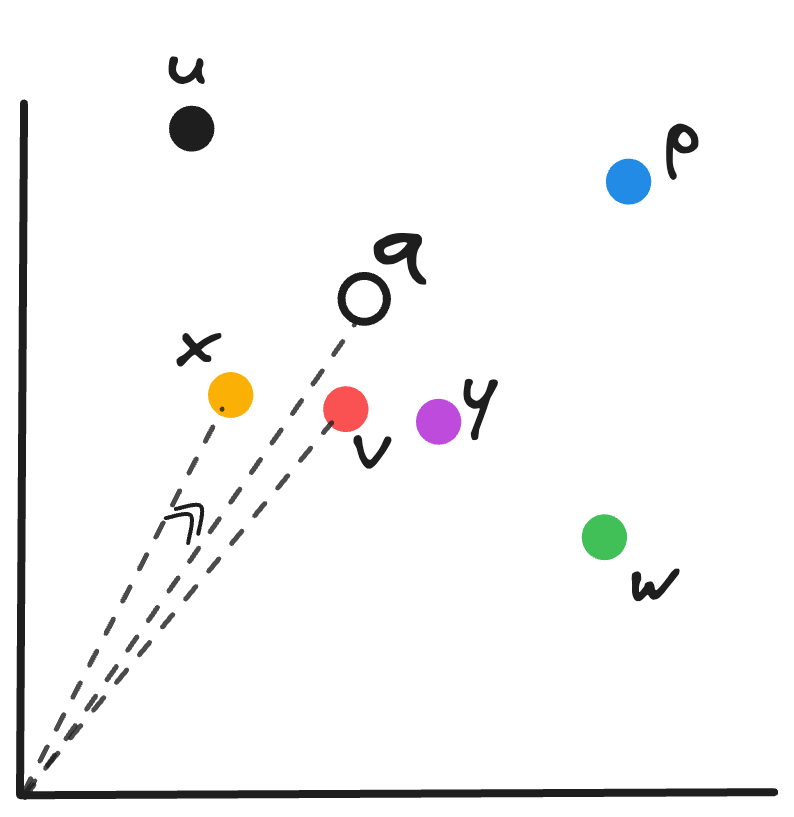}
    }
    \subfloat[\textsc{$k$-MIPS}]{
        \label{figure:flavors:flavors:approximate-kmips}
        \includegraphics[width=0.32\linewidth]{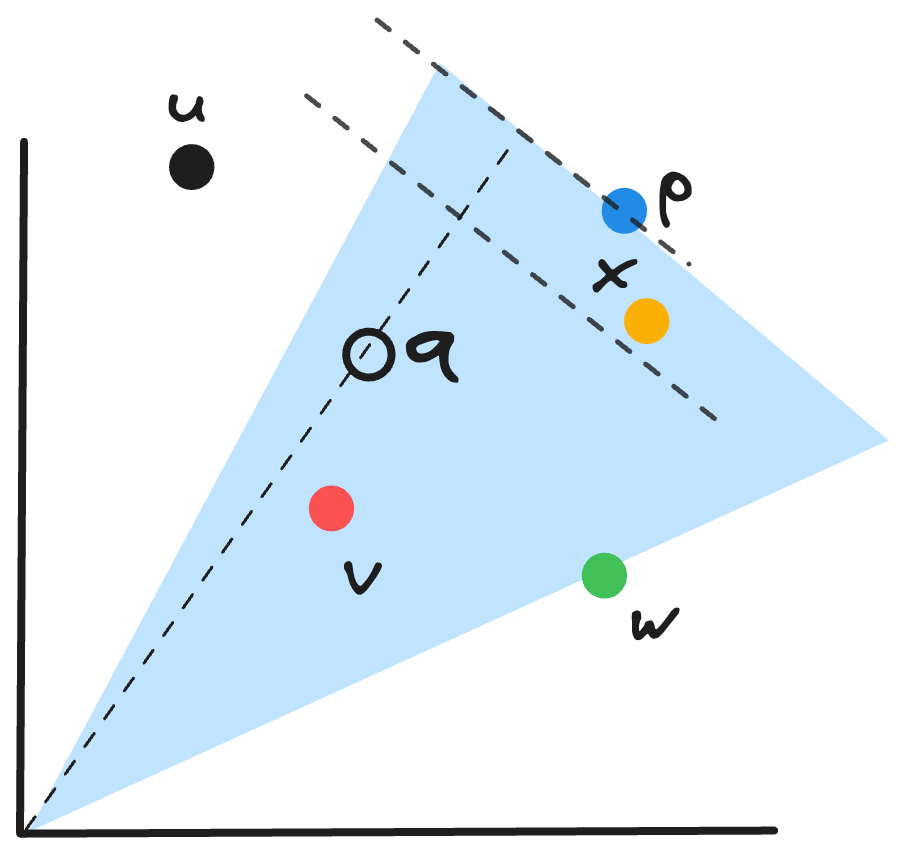}
    }
    \caption{Approximate variants of top-$1$ retrieval for a toy collection in $\mathbb{R}^2$.
    In NN, we admit vectors that are at most $\epsilon$ away from the optimal solution. As such,
    $x$ and $y$ are both valid solutions as they are in a ball with radius $(1+\epsilon) \delta(q, x)$
    centered at the query.
    Similarly, in MCS, we accept a vector (e.g., $x$) if its angle with the query point is at most $1 + \epsilon$
    greater than the angle between the query and the optimal vector (i.e., $v$).
    For the MIPS example, assuming that the inner product of query and $x$ is at most
    $(1 - \epsilon)$-times the inner product of query and $p$, then $x$ is an acceptable solution.}
    \label{figure:flavors:flavors-approximate}
\end{figure}

The first case of solving the problem exactly but inefficiently is uninteresting: If we are looking
to find the solution for $k=1$, for example, it is enough to compute the distance function
for every vector in the collection and the query, resulting in linear complexity.
When $k > 1$, the total time complexity is $\mathcal{O}(\lvert \mathcal{X} \rvert d \log k)$,
where $\lvert \mathcal{X} \rvert$ is the size of the collection.
So it typically makes more sense to investigate the second strategy of admitting error.

That argument leads naturally to the class of $\epsilon$-\emph{approximate} vector retrieval problems.
This idea can be formalized rather easily for the special case where $k=1$:
The approximate solution for the top-$1$ retrieval is satisfactory so long as the vector $u$
returned by the algorithm is at most $(1 + \epsilon)$ factor farther than the
optimal vector $u^\ast$, according to $\delta(\cdot, \cdot)$ and for some arbitrary
$\epsilon > 0$:
\begin{equation}
    \label{equation:flavors:approximate-top-k-retrieval}
    \delta(q, u) \leq (1 + \epsilon) \delta(q, u^\ast).
\end{equation}
Figure~\ref{figure:flavors:flavors-approximate} renders the solution space for
an example collection in $\mathbb{R}^2$.

The formalism above extends to the more general case where $k > 1$ in an obvious way:
a vector $u$ is a valid solution to the $\epsilon$-approximate top-$k$ problem if its distance
to the query point is at most $(1 + \epsilon)$ times the distance to the $k$-th
optimal vector. This is summarized in the following definition:

\begin{definition}[$\epsilon$-Approximate Top-$k$ Retrieval]
\label{definition:flavors:approximate-top-k-retrieval}
Given a distance function $\delta(\cdot, \cdot)$, we wish to pre-process
a collection of data points $\mathcal{X} \subset \mathbb{R}^d$
in time that is polynomial in $\lvert \mathcal{X} \rvert$ and $d$,
to form a data structure (the ``index'') whose size is polynomial in
$\lvert \mathcal{X} \rvert$ and $d$, so as to efficiently solve
the following in time $o(\lvert \mathcal{X} \rvert d)$
for an arbitrary query $q \in \mathbb{R}^d$ and $\epsilon > 0$:
\begin{equation*}
    \mathcal{S} =\argmin^{(k)}_{u \in \mathcal{X}}  \delta(q, u),
\end{equation*}
such that for all $u \in \mathcal{S}$,
Equation~(\ref{equation:flavors:approximate-top-k-retrieval}) is satisfied
where $u^\ast$ is the $k$-th optimal vector obtained by solving the problem
in Definition~\ref{definition:flavors:top-k-retrieval}.
\end{definition}

Despite the extension to top-$k$ above, it is more common
to characterize the effectiveness of an approximate top-$k$ solution
as the percentage of correct vectors that are present in the solution. Concretely,
if $\mathcal{S} = \argmax^{(k)}_{u \in \mathcal{X}} \delta(q, u)$ is the exact set of top-$k$ vectors,
and $\tilde{\mathcal{S}}$ is the approximate set, then the \emph{accuracy} of the approximate algorithm
can be reported as $\lvert \mathcal{S} \cap \tilde{\mathcal{S}} \rvert / k$.\footnote{
This quantity is also known as \emph{recall} in the literature, because we
are counting the number of vectors our algorithm recalls from the exact solution set.}

This monograph primarily studies the approximate\footnote{
We drop $\epsilon$ from the name when it is clear from context.} retrieval problem.
As such, while we state a retrieval problem using the $\argmax$ or $\argmin$
notation, we are generally only interested in approximate solutions to it.

\bibliographystyle{abbrvnat}
\bibliography{biblio}

\chapter{Retrieval Stability in High Dimensions}
\label{chapter:instability}

\abstract{
We are about to embark on a comprehensive survey and analysis of vector retrieval methods
in the remainder of this monograph. It may thus sound odd to suggest that you may not need
any of these clever ideas in order to perform vector retrieval.
Sometimes, under bizarrely general conditions that we will explore formally in this chapter,
an exhaustive search (where we compute the distance between query and every data point,
sort, and return the top $k$) is likely to perform much better in both accuracy and search latency!
The reason why that may be the case has to do with the approximate nature of algorithms
and the oddities of high dimensions. We elaborate this point by focusing on the top-$1$
case.}

\section{Intuition}
\label{chapter:instability:stability:intuition}
Consider the case of proper distance functions where $\delta(\cdot, \cdot)$ is a metric.
Recall from Equation~(\ref{equation:flavors:approximate-top-k-retrieval}) that a vector
$u$ is an acceptable $\epsilon$-approximate solution if its distance to the query $q$ according to $\delta(\cdot, \cdot)$
is at most $(1 + \epsilon) \delta(q, u^\ast)$, where $u^\ast$ is the optimal vector
and $\epsilon$ is an arbitrary parameter.
As shown in Figure~\subref*{figure:flavors:flavors:approximate-knn} for NN,
this means that, if you centered an $L_p$ ball around $q$ with
radius $\delta(q, (1 + \epsilon) u^\ast)$, then $u$ is in that ball.

So, what if we find ourselves in a situation where no matter how small $\epsilon$ is,
too many vectors, or indeed \emph{all} vectors, from our collection $\mathcal{X}$ end up
in the $(1+\epsilon)$-enlarged ball? Then, by definition, every vector is an $\epsilon$-approximate nearest neighbor of $q$!

In such a configuration of points,
it is questionable whether the notion of ``nearest neighbor'' has any meaning at all:
If the query point were perturbed by some noise as small as $\epsilon$,
then its true nearest neighbor would suddenly change, making NN \emph{unstable}.
Because of that instability, any approximate algorithm will need to examine a large portion or nearly all
of the data points anyway, reducing thereby to a procedure that performs
more poorly than exhaustive search.

That sounds troubling. But when might we experience that phenomenon? That is the
question~\cite{beyer1999nnMeaningful} investigate in their seminal paper.

\begin{svgraybox}
It turns out, one scenario where vector retrieval becomes unstable as dimensionality $d$ increases is
if a) data points are \emph{iid} in each dimension,
b) query points are similarly drawn \emph{iid} in each dimension, and
c) query points are independent of data points. This includes many
synthetic collections that are, even today, routinely but inappropriately used
for evaluation purposes.
\end{svgraybox}

On the other hand, when data points form clusters and query points
fall into these same clusters, then the (approximate) ``nearest cluster''
problem is meaningful---but not necessarily the approximate NN problem.
So while it makes sense to use approximate algorithms to obtain the nearest
cluster, search within clusters may as well be exhaustive.
This, as we will learn in Chapter~\ref{chapter:ivf},
is the basis for a popular and effective class of vector retrieval
algorithms on real collections.

\section{Formal Results}
More generally, vector retrieval becomes unstable in high dimensions
when the variance of the distance between query and data
points grows substantially more slowly than its expected value.
That makes sense. Intuitively, that means that more and more data points
fall into the $(1 + \epsilon)$-enlarged ball centered at the query.
This can be stated formally as the following theorem due to~\cite{beyer1999nnMeaningful},
extended to any general distance function $\delta(\cdot, \cdot)$.

\begin{theorem}
\label{theorem:instability:beyer}
    Suppose $m$ data points $\mathcal{X} \subset \mathbb{R}^d$ are drawn iid from a data distribution
    and a query point $q$ is drawn independent of data points from any distribution.
    Denote by $X$ a random data point. If
    \begin{equation*}
        \lim_{d \rightarrow \infty} \var\big[ \delta(q, X) \big] / \ev\big[ \delta(q, X) \big]^2 = 0,
    \end{equation*}
    then for any $\epsilon > 0$,
    $\lim_{d \rightarrow \infty} \probability\big[ \delta(q, X) \leq (1 + \epsilon) \delta(q, u^\ast) \big] = 1,$
    where $u^\ast$ is the vector closest to $q$.
\end{theorem}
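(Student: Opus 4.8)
The plan is to show that the random variable $\delta(q,X)/\ev[\delta(q,X)]$ concentrates around $1$ in probability, and then argue that once all $m$ data points have distances within a $(1+\epsilon)$ multiplicative window of each other, every point — in particular a uniformly random $X$ — is an $\epsilon$-approximate nearest neighbor. First I would fix the query $q$ (the hypothesis treats $q$ as drawn independently, so it suffices to work conditionally on a typical $q$ and then integrate, or simply note that the limit hypothesis is stated for the joint randomness). Write $\mu_d = \ev[\delta(q,X)]$ and $\sigma_d^2 = \var[\delta(q,X)]$, and let $Y_i = \delta(q, X_i)/\mu_d$ for the $m$ iid data points $X_1,\dots,X_m$. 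By Chebyshev's inequality, for any $t > 0$,
\begin{equation*}
    \probability\big[\, |Y_i - 1| > t \,\big] \;\leq\; \frac{\sigma_d^2}{t^2 \mu_d^2},
\end{equation*}
which tends to $0$ as $d \to \infty$ by the hypothesis $\sigma_d^2/\mu_d^2 \to 0$.

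Next I would pass from a single $Y_i$ to all $m$ of them via a union bound: the probability that \emph{some} $Y_i$ falls outside $[1-t, 1+t]$ is at most $m\,\sigma_d^2/(t^2\mu_d^2)$, which still vanishes as $d\to\infty$ since $m$ is fixed. On the complementary (high-probability) event $E_d$, every data point satisfies $(1-t)\mu_d \leq \delta(q,X_i) \leq (1+t)\mu_d$. In particular the closest point $u^\ast$ has $\delta(q,u^\ast) \geq (1-t)\mu_d$, and the random point $X = X_1$ (say) has $\delta(q,X) \leq (1+t)\mu_d$. Choosing $t$ small enough that $(1+t)/(1-t) \leq 1+\epsilon$ — which is possible for any fixed $\epsilon > 0$ — gives $\delta(q,X) \leq (1+\epsilon)\,\delta(q,u^\ast)$ on $E_d$. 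Hence
\begin{equation*}
    \probability\big[\, \delta(q,X) \leq (1+\epsilon)\,\delta(q,u^\ast) \,\big] \;\geq\; \probability[E_d] \;\geq\; 1 - \frac{m\,\sigma_d^2}{t^2\,\mu_d^2} \;\xrightarrow[d\to\infty]{}\; 1,
\end{equation*}
and since a probability cannot exceed $1$, the limit equals $1$, which is the claim.

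The step I expect to need the most care is the reduction from the joint distribution of $(q,X)$ to the conditional argument: strictly speaking the hypothesis $\var[\delta(q,X)]/\ev[\delta(q,X)]^2 \to 0$ is about the variance and mean taken over \emph{both} $q$ and $X$, whereas the union-bound argument wants a per-$q$ concentration statement. The clean way to handle this is to keep everything under the joint measure: apply Chebyshev to $\delta(q,X_i)$ directly (all $m$ pairs $(q,X_i)$ are identically distributed, though not independent since they share $q$), use the union bound over the $m$ events $\{|Y_i-1|>t\}$ without needing independence, and only then observe that on the good event the deterministic chain of inequalities above holds pointwise. A secondary subtlety is ensuring $\mu_d > 0$ so the normalization is legitimate; for any genuine distance function with non-degenerate data this is immediate, and degenerate cases make the statement vacuous anyway.
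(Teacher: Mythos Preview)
Your proposal is correct and is essentially the same argument as the paper's: both choose a symmetric window $[(1-t)\mu_d,(1+t)\mu_d]$ around the mean distance, use Chebyshev (the paper phrases it as Markov applied to the squared deviation) plus a union bound over the $m$ points to show all distances land in that window with probability tending to $1$, and then pick $t$ so that $(1+t)/(1-t)\le 1+\epsilon$ (the paper's explicit choice $\eta=\epsilon/(\epsilon+2)$ is exactly this). Your remark that the $\delta(q,X_i)$ are identically distributed but not independent, and that the union bound does not require independence, is in fact more careful than the paper's own wording.
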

\begin{proof}
    Let $\delta_\ast = \max_{u \in \mathcal{X}} \delta(q, u)$ and $\delta^\ast = \min_{u \in \mathcal{X}} \delta(q, u)$.
    If we could show that, for some $d$-dependent \emph{positive} $\alpha$ and $\beta$ such that $\beta/\alpha = 1 + \epsilon$,
    $\lim_{d \rightarrow \infty} \probability\big[ \alpha \leq \delta^\ast \leq \delta_\ast \leq \beta \big] = 1$,
    then we are done. That is because, in that case $\delta_\ast/\delta^\ast \leq \beta / \alpha = 1 + \epsilon$ almost surely
    and the claim follows.
    
    From the above, all that we need to do is to find $\alpha$ and $\beta$ for a given $d$.
    Intuitively, we want the interval $[\alpha, \beta]$ to contain $\ev[\delta(q, X)]$, because
    we know from the condition of the theorem that the distances should concentrate around their mean.
    So $\alpha = (1 - \eta) \ev[\delta(q, X)]$ and $\beta = (1 + \eta) \ev[\delta(q, X)]$ for some $\eta$
    seems like a reasonable choice. Letting $\eta = \epsilon / (\epsilon + 2)$ gives us the desired ratio:
    $\beta/\alpha = 1 + \epsilon$.
    
    Now we must show that $\delta_\ast$ and $\delta^\ast$ belong to our
    chosen $[\alpha, \beta]$ interval almost surely in the limit. That happens if \emph{all} distances belong to 
    that interval. So:
    \begin{align*}
        \lim_{d \rightarrow \infty} &\probability\big[ \alpha \leq \delta^\ast \leq \delta_\ast \leq \beta \big] = \\
        &\lim_{d \rightarrow \infty} \probability\Big[ \delta(q, u) \in [\alpha, \beta] \quad \forall \; u \in \mathcal{X} \Big] = \\
        &\lim_{d \rightarrow \infty} \probability\Big[ (1 - \eta) \ev[\delta(q, X)] \leq \delta(q, u) \leq (1 + \eta) \ev[\delta(q, X)] \; \forall \; u \in \mathcal{X} \Big] = \\
        &\lim_{d \rightarrow \infty} \probability\Big[ \big\lvert \delta(q, u) - \ev[\delta(q, X)] \big\rvert \leq \eta \ev[\delta(q, X)] \quad \forall \; u \in \mathcal{X} \Big].
    \end{align*}
    It is now easier to work with the complementary event:
    \begin{equation*}
        1 - \lim_{d \rightarrow \infty} \probability\Big[ \exists \; u \in \mathcal{X} \; \mathit{s.t.} \quad \big\lvert \delta(q, u) - \ev[\delta(q, X)] \big\rvert > \eta \ev[\delta(q, X)] \Big].
    \end{equation*}
    Using the Union Bound, the probability above is greater than or equal to the following:
    \begin{align*}
    \lim_{d \rightarrow \infty} &\probability\big[ \alpha \leq \delta^\ast \leq \delta_\ast \leq \beta \big] \geq \\
        &1 - \lim_{d \rightarrow \infty} \sum_{u \in \mathcal{X}} \probability\Big[\big\lvert \delta(q, u) - \ev[\delta(q, X)] \big\lvert > \eta \ev[\delta(q, X)] \Big] = \\
        &1 - \lim_{d \rightarrow \infty} \sum_{u \in \mathcal{X}} \probability\Big[\big( \delta(q, u) - \ev[\delta(q, X)] \big)^2 > \eta^2 \ev[\delta(q, X)]^2 \Big].
    \end{align*}
    Note that, $q$ is independent of data points and that data points are \emph{iid} random variables.
    Therefore, $\delta(q, u)$'s are random variables drawn \emph{iid} as well.
    Furthermore, by assumption $\ev[\delta(q, X)]$ exists, making it possible to apply Markov's inequality to obtain the following bound:
    \begin{align*}
    \lim_{d \rightarrow \infty} &\probability\big[ \alpha \leq \delta^\ast \leq \delta_\ast \leq \beta \big] \geq \\
        &1 - \lim_{d \rightarrow \infty} \lvert \mathcal{X} \rvert \probability\Big[\big( \delta(q, X) - \ev[\delta(q, X)] \big)^2 > \eta^2 \ev[\delta(q, X)]^2 \Big] \geq\\
        &1 - \lim_{d \rightarrow \infty} m \frac{1}{\eta^2 \ev \big[\delta(q, X)\big]^2} \ev \Big[ \big( \delta(q, u) - \ev[\delta(q, X)] \big)^2 \Big] =\\
        &1 - \lim_{d \rightarrow \infty} m \frac{\var[\delta(q, X)]}{\eta^2 \ev[\delta(q, X)]^2}.
    \end{align*}
    By the conditions of the theorem, $\var[\delta(q, X)]/\ev[\delta(q, X)]^2 \rightarrow 0$ as 
    $d \rightarrow \infty$, so that the last expression tends to $1$ in the limit. That concludes the proof.
\end{proof}

We mentioned earlier that if data and query points are independent of each other
and that vectors are drawn \emph{iid} in each dimension, then vector retrieval becomes
unstable. For NN with the $L_p$ norm, it is easy to show that such a configuration
satisfies the conditions of Theorem~\ref{theorem:instability:beyer}, hence the instability.
Consider the following for $\delta(q, u) = \lVert q - u \rVert_p^p$:
\begin{align*}
    \lim_{d \rightarrow \infty} &\frac{\var \big[ \lVert q - u \rVert_p^p \big]}{ \ev \big[ \lVert q - u \rVert_p^p \big]^2} = 
    \lim_{d \rightarrow \infty} \frac{\var \big[ \sum_i ( q_i - u_i )^p \big]}{ \ev \big[ \sum_i ( q_i - u_i )^p \big]^2} = \\
    &\lim_{d \rightarrow \infty} \frac{\sum_i \var \big[ ( q_i - u_i )^p \big]}{ \big(\sum_i \ev \big[ ( q_i - u_i )^p \big] \big)^2} && \text{(by independence)} \\
    &\lim_{d \rightarrow \infty} \frac{d \sigma^2}{d^2 \mu^2} = 0,
\end{align*}
where we write $\sigma^2 = \var[(q_i - u_i)^p]$ and $\mu = \ev[(q_i - u_i)^p]$.

When $\delta(q, u) = - \langle q, u \rangle$, the same conditions result in retrieval instability:
\begin{align*}
    \lim_{d \rightarrow \infty} &\frac{\var \big[ \langle q, u \rangle \big]}{ \ev \big[ \langle q, u \rangle \big]^2} = 
    \lim_{d \rightarrow \infty} \frac{\var \big[ \sum_i q_i u_i \big]}{ \ev \big[ \sum_i q_i u_i \big]^2} = \\
    &\lim_{d \rightarrow \infty} \frac{\sum_i \var \big[ q_i u_i \big]}{ \big(\sum_i \ev \big[ q_i u_i \big] \big)^2} && \text{(by independence)} \\
    &\lim_{d \rightarrow \infty} \frac{d \sigma^2}{d^2 \mu^2} = 0,
\end{align*}
where we write $\sigma^2 = \var[q_i u_i]$ and $\mu = \ev[q_i u_i]$.

\section{Empirical Demonstration of Instability}

Let us examine the theorem empirically.
We simulate the NN setting with $L_2$ distance and report the results in Figure~\ref{figure:instability:instability}.
In these experiments, we sample $1{,}000{,}000$ data points with each coordinate drawing its value independently
from the same distribution, and $1{,}000$ query points sampled similarly. We then compute the minimum and maximum distance
between each query point and the data collection, measure the ratio between them, and report the mean
and standard deviation of the ratio across queries. We repeat this exercise for various values of dimensionality $d$
and render the results in Figure~\subref*{figure:instability:instability:ratio}. Unsurprisingly, this ratio tends
to $1$ as $d \rightarrow \infty$, as predicted by the theorem.

\begin{figure}[t]
    \centering
    \subfloat[$\delta_\ast / \delta^\ast$]{
        \label{figure:instability:instability:ratio}
        \includegraphics[width=0.52\linewidth]{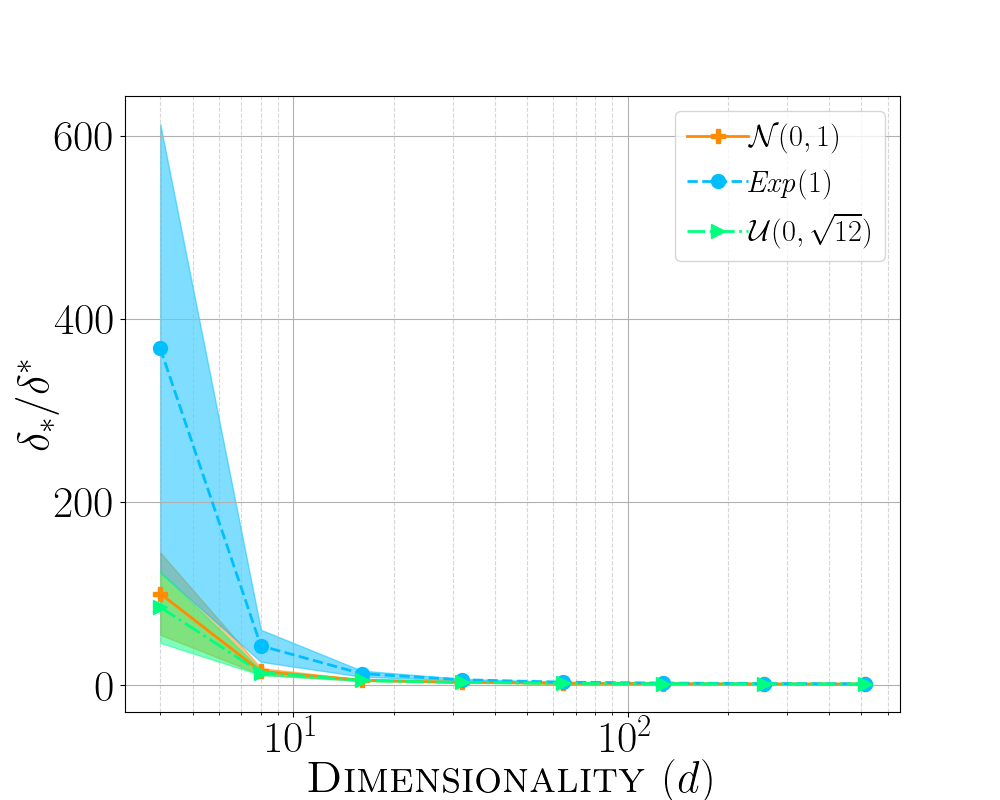}
    }\hspace*{-1.5em}
    \subfloat[\textsc{Percent Approximate Solutions}]{
        \label{figure:instability:instability:count}
        \includegraphics[width=0.52\linewidth]{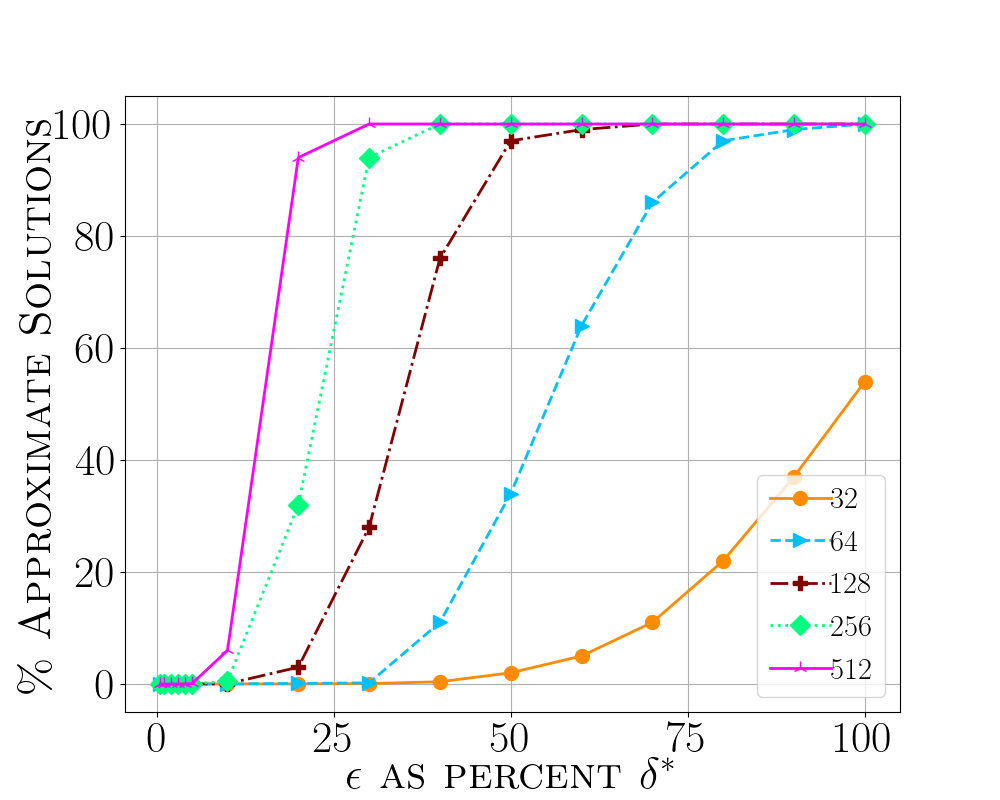}
    }
    \caption{Simulation results for Theorem~\ref{theorem:instability:beyer} applied to NN with $L_2$ distance.
    \emph{Left}: The ratio between the maximum
    distance between a query and data points $\delta_\ast$, to the minimum distance $\delta^\ast$.
    The shaded region shows one standard deviation.
    As dimensionality increases, this ratio tends to $1$. \emph{Right}: The percentage of data points whose
    distance to a query is at most $(1 + \epsilon/100) \delta^\ast$, visualized for the Gaussian
    distribution---the trend is similar for other distributions. As $d$ increases, more vectors
    fall into the enlarged ball, making them valid solutions to the approximate NN problem.}
    \label{figure:instability:instability}
\end{figure}

Another way to understand this result is to count the number of data points that qualify
as approximate nearest neighbors. The theory predicts that, as $d$ increases, we can find a smaller
$\epsilon$ such that nearly all data points fall within $(1 + \epsilon) \delta^\ast$ distance from the query.
The results of our experiments confirm this phenomenon; we have plotted the results for the Gaussian distribution
in Figure~\subref*{figure:instability:instability:count}.

\subsection{Maximum Inner Product Search}

In the discussion above, we established that retrieval becomes unstable in
high dimensions if the data satisfies certain statistical conditions.
That meant that the difference between the maximum and the minimum distance
grows just as fast as the magnitude of the minimum distance, so that any approximate
solution becomes meaningless.

\begin{svgraybox}
The instability statement does not necessarily imply, however, that the distances
become small or converge to a certain value. But as we see in this section,
inner product in high dimensions does become smaller and smaller as a function of $d$.
\end{svgraybox}

The following theorem summarizes this phenomenon for a unit query point and bounded
data points.
Note that, the condition that $q$ is a unit vector is not restrictive in any way,
as the norm of the query point does not affect the retrieval outcome.

\begin{theorem}
    \label{theorem:instability:orthogonality-random-vectors}
    If $m$ data points with bounded norms,
    and a unit query vector $q$ are drawn \emph{iid}
    from a spherically symmetric\footnote{A distribution is spherically symmetric
    if it remains invariant under an orthogonal transformation.}
    distribution in $\mathbb{R}^d$, then:
    \begin{equation*}
        \lim_{d \rightarrow \infty} \probability\big[ \langle q, X \rangle > \epsilon \big] = 0.
    \end{equation*}
\end{theorem}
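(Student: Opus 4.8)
The plan is to control the second moment of $\langle q, X\rangle$ and then apply Markov's inequality. Since $q$ and $X$ are drawn independently, I would first condition on $q$ and write $\ev\big[\langle q, X\rangle^2 \mid q\big] = q^\top \ev[X X^\top]\, q$. The structural input is that spherical symmetry forces the (uncentered) covariance matrix $\Sigma = \ev[X X^\top]$ to be isotropic: for every orthogonal matrix $O$, $OX$ and $X$ have the same distribution, hence $O \Sigma O^\top = \Sigma$, and the only matrices commuting with all rotations are scalar multiples of the identity. Taking traces, $\operatorname{tr}(\Sigma) = \ev[\lVert X\rVert_2^2]$, so $\Sigma = \big(\ev[\lVert X\rVert_2^2]/d\big) I$.

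Substituting this back and using $\lVert q\rVert_2 = 1$ gives $\ev\big[\langle q, X\rangle^2\big] = \ev[\lVert X\rVert_2^2]/d$. The hypothesis that the data points have bounded norm---say $\lVert X\rVert_2 \le C$ for a constant $C$ that does not depend on $d$---then yields $\ev[\langle q, X\rangle^2] \le C^2/d$. Because $\epsilon > 0$, the event $\{\langle q, X\rangle > \epsilon\}$ is contained in $\{\langle q, X\rangle^2 > \epsilon^2\}$, so Markov's inequality applied to the nonnegative variable $\langle q, X\rangle^2$ gives $\probability[\langle q, X\rangle > \epsilon] \le \ev[\langle q, X\rangle^2]/\epsilon^2 \le C^2/(d\,\epsilon^2)$, which tends to $0$ as $d \to \infty$.

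A more geometric route leads to the same estimate: spherical symmetry lets one write $X = \lVert X\rVert_2\, V$ with $V$ uniform on $\mathbb{S}^{d-1}$ and independent of $\lVert X\rVert_2$, and rotational invariance makes $\langle q, V\rangle$ distributed as the first coordinate of a uniform point on the sphere, whose second moment is $1/d$ since the $d$ squared coordinates are exchangeable and sum to $1$; conditioning on the bounded norm recovers the bound above. I do not expect a serious obstacle here---the one point that genuinely must be used is that the norm bound $C$ is independent of $d$ (if norms were allowed to grow like $\sqrt{d}$ the conclusion would fail), together with the correct invocation of the isotropy of $\ev[X X^\top]$. I would also remark, although the statement concerns a single data point, that a union bound over the $m$ points upgrades the conclusion to $\probability\big[\max_{X \in \mathcal{X}} \langle q, X\rangle > \epsilon\big] \to 0$ whenever $m = o(d)$, which is the reading relevant to the surrounding discussion of inner products shrinking in high dimensions.
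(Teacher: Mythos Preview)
Your proposal is correct and essentially the same argument as the paper's. The paper rotates $q$ to $e_1$ and uses exchangeability of the coordinates of $\Gamma X$ to conclude $\ev[\langle q,X\rangle^2]=\ev[\lVert X\rVert_2^2]/d$, which is exactly your ``geometric route''; your covariance-matrix formulation ($O\Sigma O^\top=\Sigma\Rightarrow\Sigma=(\ev[\lVert X\rVert_2^2]/d)I$) is the same fact phrased in matrix language, and your explicit Markov step merely spells out what the paper leaves as ``the claim follows.''
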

\begin{proof}
    By spherical symmetry, it is easy to see that $\ev[\langle q, X \rangle] = 0$.
    The variance of the inner product is then equal to $\ev[\langle q, X \rangle^2]$,
    which can be expanded as follows.
    
    First, find an orthogonal transformation $\Gamma: \mathbb{R}^d \rightarrow \mathbb{R}^d$ that maps
    the query point $q$ to the first standard basis (i.e., $e_1 = [1, 0, 0, \ldots, 0] \in \mathbb{R}^d$).
    Due to spherical symmetry, this transformation does not change the data distribution.
    Now, we can write:
    \begin{align*}
        \ev[\langle q, X \rangle^2] &= \ev[ \langle \Gamma q, \Gamma X \rangle^2] =
        \ev[(\Gamma X)_1^2] = \\
        &\ev[\frac{1}{d} \sum_{i=1}^d (\Gamma X)_i^2] = \frac{1}{d} \lVert X \rVert_2^2.
    \end{align*}
    In the above, the third equality is due to the fact that the distribution of the (transformed)
    vectors is the same in every direction. Because $\lVert X \rVert$ is bounded by assumption, the
    variance of the inner product between $q$ and a random data point tends to $0$ as $d \rightarrow \infty$.
    The claim follows.
\end{proof}

\begin{svgraybox}
The proof of Theorem~\ref{theorem:instability:orthogonality-random-vectors} tells us that
the variance of inner product grows as a function of $1/d$ and $\lVert X \rVert_2^2$.
So if our vectors have bounded norms, then we can find a $d$ such that inner products are
arbitrarily close to $0$. This is yet another reason that approximate MIPS could become meaningless.
But if our data points are clustered in (near) orthogonal subspaces, then approximate
MIPS over clusters makes sense---though, again, MIPS within clusters would be unstable.
\end{svgraybox}

\bibliographystyle{abbrvnat}
\bibliography{biblio}

\chapter{Intrinsic Dimensionality}
\label{chapter:intrinsic-dimensionality}

\abstract{
We have seen that high dimensionality poses difficulties for vector retrieval.
Yet, judging by the progression from hand-crafted feature vectors to sophisticated
embeddings of data, we detect a clear trend towards higher dimensional representations of data.
How worried should we be about this ever increasing dimensionality? This chapter
explores that question. Its key message is that, even though data points may appear
to belong to a high-dimensional space, they actually lie on or near a low-dimensional
manifold and, as such, have a low \emph{intrinsic} dimensionality. This chapter
then formalizes the notion of intrinsic dimensionality and presents a mathematical
framework that will be useful in analyses in future chapters.
}

\section{High-Dimensional Data and Low-Dimensional Manifolds}

We talked a lot about the difficulties of answering $\epsilon$-approximate top-$k$
questions in high dimensions. We said, in certain situations,
the question itself becomes meaningless and
retrieval falls apart. For MIPS, in particular,
we argued in Theorem~\ref{theorem:instability:orthogonality-random-vectors}
that points become nearly orthogonal almost surely
as the number of dimensions increases. But how concerned should we
be, especially given the ever-increasing dimensionality of
vector representations of data? Do our data points really
live in such extremely high-dimensional spaces? Are all the dimensions
necessary to preserving the structure of our data or do our data points
have an intrinsically smaller dimensionality?

The answer to these questions is sometimes obvious.
If a set of points in $\mathbb{R}^d$ lie strictly in a flat subspace 
$\mathbb{R}^{d_\circ}$ with $d_\circ < d$, then one can simply drop
the ``unused'' dimensions---perhaps after a rotation.
This could happen if a pair of coordinates are correlated, for instance.
No matter what query vector we are performing retrieval
for or what distance function we use, the top-$k$ set does not change
whether the unused dimensions are taken into account or the vectors
corrected to lie in $\mathbb{R}^{d_\circ}$.

Other times the answer is intuitive but not so obvious.
When a text document is represented as a sparse vector,
all the document's information is contained entirely in the vector's
non-zero coordinates. The coordinates that are $0$ do not contribute
to the representation of the document in any way.
In a sense then, the intrinsic dimensionality of a collection
of such sparse vectors is in the order of the number of non-zero
coordinates, rather than the nominal dimensionality of the space
the points lie in.

\begin{svgraybox}
It appears then that there are instances where a collection of points
have a superficially large number of dimensions, $d$, but that, in fact,
the points lie in a lower-dimensional space with dimensionality $d_\circ$.
We call $d_\circ$ the \emph{intrinsic dimensionality} of the point set.
\end{svgraybox}

This situation, where the intrinsic dimensionality of data is lower than
that of the space, arises more commonly than one imagines.
In fact, so common is this phenomenon that in statistical learning theory,
there are special classes of algorithms~\citep{ma2012manifold}
designed for data collections that lie on
or near a low-dimensional submanifold of $\mathbb{R}^d$
despite their apparent arbitrarily high-dimensional representations.

In the context of vector retrieval, too, the concept of
intrinsic dimensionality often plays an important role.
Knowing that data points have a low intrinsic dimensionality means
we may be able to reduce dimensionality without (substantially) losing
the geometric structure of the data, including interpoint distances.
But more importantly, we can design algorithms specifically for
data with low intrinsic dimensionality, as we will see in later chapters.
In our analysis of many of these algorithms, too, we often resort
to this property to derive meaningful bounds and make assertions about
their performance.

Doing so, however, requires that we formalize the notion of intrinsic dimensionality.
We often do not have a characterization of the submanifold itself,
so we need an alternate way of characterizing the low-dimensional
structure of our data points. In the remainder of this chapter,
we present two common (and related) definitions of intrinsic dimensionality
that will be useful in subsequent chapters.

\section{Doubling Measure and Expansion Rate}
\label{section:intrinsic-dimensionality:doubling-measure}

\cite{karger2002growth-restricted-metrics} characterize intrinsic dimensionality
as the growth or \emph{expansion} rate of a point set. To understand what that means
intuitively, place yourself somewhere in the data collection, draw a ball around
yourself, and count how many data points are in that ball. Now expand the radius
of this ball by a factor $2$, and count again. The count of data points in a
``growth-restricted'' point set should increase \emph{smoothly}, rather than suddenly,
as we make this ball larger.

\begin{svgraybox}
In other words, data points ``come into view,'' as~\cite{karger2002growth-restricted-metrics}
put it, at a constant rate as we expand our view, regardless of where we are located.
We will not encounter massive holes in the space where there are no data points,
followed abruptly by a region where a large number of vectors are concentrated.
\end{svgraybox}

The formal definition is not far from the intuitive description above.
In fact, expansion rate as defined by~\cite{karger2002growth-restricted-metrics}
is an instance of the following more general definition of a \emph{doubling measure},
where the measure $\mu$ is the counting measure over a collection of points $\mathcal{X}$.

\begin{definition}
    \label{definition:doubling-measure}
    A distribution $\mu$ on $\mathbb{R}^d$ is a \emph{doubling measure} if there is a constant $d_\circ$
    such that, for any $r > 0$ and $x \in \mathbb{R}^d$, $\mu(B(x, 2r)) \leq 2^{d_\circ} \mu(B(x, r))$.
    The constant $d_\circ$ is said to be the \emph{expansion rate} of the distribution.
\end{definition}

One can think of the expansion rate $d_\circ$ as a dimension of sorts.
In fact, as we will see later, several
works~\citep{dasgupta2015rptrees,karger2002growth-restricted-metrics,covertrees}
use this notion of intrinsic dimensionality to design algorithms for
top-$k$ retrieval or utilize it to derive performance guarantees for
vector collections that are drawn from a doubling measure.
That is the main reason we review this definition of intrinsic dimensionality
in this chapter.

While the expansion rate is a reasonable way of describing the structure of a set of
points, it is unfortunately not a stable indicator. It can suddenly blow up,
for example, by the addition of a single point to the set. As a concrete example,
consider the set of integers between $\lvert r \rvert$ and $\lvert 2r \rvert$
for any arbitrary value of $r$: $\mathcal{X} = \{ u \in \mathbb{Z} \;|\; r < \lvert u \rvert < 2r \}$.
The expansion rate of the resulting set is constant because no matter
which point we choose as the center of our ball, and regardless of our choice
of radius, doubling the radius brings points into view at a constant rate.

What happens if we added the origin to the set, so that our set becomes $\{ 0 \} \cup \mathcal{X}$?
If we chose $0$ as the center of the ball, and set its radius to $r$,
we have a single point in the resulting ball. The moment we double $r$,
the resulting ball will contain the entire set! In other words, the expansion rate
of the updated set is $\log m$ (where $m = \lvert \mathcal{X} \rvert$).

It is easy to argue that a subset of a set with bounded expansion rate
does not necessarily have a bounded expansion rate itself.
This unstable behavior is less than ideal, which is why a more robust notion
of intrinsic dimensionality has been developed.
We will introduce that next.

\section{Doubling Dimension}
\label{section:intrinsic-dimensionality:doubling-dimension}

Another idea to formalize intrinsic dimensionality that has worked well
in algorithmic design and anlysis is the \emph{doubling dimension}. It was introduced
by~\cite{gupta2003doublingDimension} but is closely related to the
Assouad dimension~\citep{Assouad1983}. It is defined as follows.

\begin{definition}
    \label{definition:doubling-dimension}
    A set $\mathcal{X} \subset \mathbb{R}^d$ is said to have doubling dimension $d_\circ$ if
    $B(\cdot, 2r) \cap \mathcal{X}$, the intersection of any ball of radius $2r$ with the set,
    can be covered by at most $2^{d_\circ}$ balls of radius $r$.
\end{definition}

The base $2$ in the definition above can be replaced with any other
constant $k$: The doubling dimension of $\mathcal{X}$ is $d_\circ$ if
the intersection of any ball of radius $r$ with the set
can be covered by $\mathcal{O}(k^{d_\circ})$ balls of radius $r/k$.
Furthermore, the definition can be easily extended to any metric space,
not just $\mathbb{R}^d$ with the Euclidean norm.

The doubling dimension is a different notion from the expansion rate as defined in
Definition~\ref{definition:doubling-measure}. The two, however, are in some sense related,
as the following lemma shows.

\begin{lemma}
    \label{lemma:intrinsic-dimensionality:doubling-relationship}
    The doubling dimension, $d_\circ$ of any finite metric $(X, \delta)$
    is bounded above by its expansion rate, $d_\circ^{\textsc{kr}}$ times $4$:
    $d_\circ \leq 4 d_\circ^{\textsc{kr}}$.
\end{lemma}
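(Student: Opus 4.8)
The plan is to show that a ball of radius $2r$ in $(X,\delta)$ can be covered by at most $2^{4 d_\circ^{\textsc{kr}}}$ balls of radius $r$, which by Definition~\ref{definition:doubling-dimension} gives $d_\circ \le 4 d_\circ^{\textsc{kr}}$. The standard route is a packing-covering argument: first bound the size of a maximal $r$-separated subset (a \emph{packing}) inside a $2r$-ball, then observe that the balls of radius $r$ centered at a maximal packing cover the $2r$-ball (otherwise the packing could be extended, contradicting maximality). So the real work is to bound the packing number of a $2r$-ball using the expansion-rate hypothesis.

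First I would fix a center $x$ and a maximal set $\mathcal{P} \subseteq B(x, 2r) \cap X$ of points that are pairwise more than $r$ apart (equivalently, the open balls $B(p, r/2)$ for $p \in \mathcal{P}$ are disjoint). All of these small balls sit inside $B(x, 2r + r/2) \subseteq B(x, 4r)$ by the triangle inequality. Now apply the doubling measure (with $\mu$ the counting measure) along a chain of radius-doublings: from $r/2$ up to $4r$ is three doublings, so $\mu(B(p, 4r)) \le 2^{3 d_\circ^{\textsc{kr}}} \mu(B(p, r/2))$ for each $p$. Since each $p \in \mathcal{P}$ lies in $B(x, 2r)$, a couple more doublings relate $\mu(B(p, \cdot))$ centered at different points to a common ball; the cleanest bookkeeping is to compare everything to $\mu(B(x, 8r))$ and $\min_p \mu(B(p, r/2))$, picking up a total factor that works out to at most $2^{4 d_\circ^{\textsc{kr}}}$. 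Summing the disjointness inequality $\sum_{p \in \mathcal{P}} \mu(B(p, r/2)) \le \mu(B(x, 4r))$ and dividing by the smallest term then yields $\lvert \mathcal{P} \rvert \le 2^{4 d_\circ^{\textsc{kr}}}$.

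Finally, maximality of $\mathcal{P}$ forces every point of $B(x, 2r) \cap X$ to be within $r$ of some $p \in \mathcal{P}$, so $\{ B(p, r) : p \in \mathcal{P} \}$ is the desired cover, of size at most $2^{4 d_\circ^{\textsc{kr}}}$. The main obstacle is purely the accounting of how many radius-doublings are needed and which basepoint each intermediate ball is centered at: the doubling-measure inequality in Definition~\ref{definition:doubling-measure} only compares concentric balls, so one must carefully insert triangle-inequality containments ($B(p, s) \subseteq B(p', 2s)$ whenever $\delta(p, p') \le s$) to move between centers, and make sure the exponents accumulated across all these steps add up to no more than $4 d_\circ^{\textsc{kr}}$. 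Choosing the separation parameter $r/2$ rather than $r$, and routing all comparisons through balls centered at $x$, is what keeps the constant down to $4$.
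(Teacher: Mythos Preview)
Your proposal is correct and follows essentially the same route as the paper: take a maximal $r$-separated set $S$ inside $B(x,2r)$, note that the balls $B(v,r/2)$ are disjoint and contained in $B(x,4r)$, and use the containment $B(x,4r)\subseteq B(v,8r)$ together with four applications of the expansion-rate inequality (from $r/2$ to $8r$) to bound $|S|\le 2^{4d_\circ^{\textsc{kr}}}$. The paper routes the non-concentric comparison through $B(v,8r)$ rather than $B(x,8r)$, but this is exactly the triangle-inequality bookkeeping you flag, and the constant $4$ falls out the same way.
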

\begin{proof}
    Fix a ball $B(u, 2r)$ and let $S$ be its $r$-net. That is,
    $S \subset X$, the distance
    between any two points in $S$ is at least $r$, and
    $\mathcal{X} \subseteq \bigcup_{u \in S} B(u, r)$. We have that:
    \begin{equation*}
        B(u, 2r) \subset \bigcup_{v \in S} B(v, r) \subset B(u, 4r).
    \end{equation*}
    By definition of the expansion rate, for every $v \in S$:
    \begin{equation*}
        \big\lvert B(u, 4r) \big\rvert \leq \big\lvert B(v, 8r) \big\rvert
        \leq 2^{4 d_\circ^{\textsc{kr}}} \big\lvert B(v, \frac{r}{2}) \big\rvert.
    \end{equation*}
    Because the balls $B(v, r/2)$ for all $v \in S$ are disjoint, it follows that
    $\lvert S \rvert \leq 2^{4 d_\circ^{\textsc{kr}}}$, so that
    $2^{4 d_\circ^{\textsc{kr}}}$ many balls of radius $r$ cover $B(u, 2r)$.
    That concludes the proof.
\end{proof}

\begin{svgraybox}
The doubling dimension and expansion rate both quantify the intrinsic dimensionality of a point set.
But Lemma~\ref{lemma:intrinsic-dimensionality:doubling-relationship}
shows that, the class of doubling metrics (i.e., metric spaces with a constant doubling dimension)
contains the class of metrics with a bounded expansion rate.
\end{svgraybox}

The converse of the above lemma is not true. In other words,
there are sets that have a bounded doubling dimension, but whose
expansion rate is unbounded. The set,
$\mathcal{X} = \{ 0 \} \cup \{ u \in \mathbb{Z} \;|\; r < \lvert u \rvert < 2r \}$,
from the previous section is one example where this happens.
From our discussion above, its expansion rate is $\log \lvert \mathcal{X} \rvert$.
It is easy to see that the doubling dimension of this set, however, is constant.

\subsection{Properties of the Doubling Dimension}

It is helpful to go over a few concrete examples of point sets with bounded
doubling dimension in order to understand a few properties of this definition of
intrinsic dimensionality.
We will start with a simple example: a line segment in $\mathbb{R}^d$ with the Euclidean
norm.

If the set
$\mathcal{X}$ is a line segment, then its intersection with a ball of radius $r$
is itself a line segment. Clearly, the intersection set can be covered with two
balls of radius $r/2$. Therefore, the doubling dimension $d_\circ$ of $\mathcal{X}$
is $1$.

We can extend that result to any affine set in $\mathbb{R}^d$ to obtain the following
property:

\begin{lemma}
    \label{lemma:intrinsic-dimensionality:k-dimensional-flat}
    A $k$-dimensional flat in $\mathbb{R}^d$ has doubling dimension $\mathcal{O}(k)$.
\end{lemma}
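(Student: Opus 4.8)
The plan is to reduce the statement to the special case $\mathcal{X} = \mathbb{R}^k$ with the Euclidean norm, and then bound the doubling dimension of $\mathbb{R}^k$ by a standard volume-packing argument.

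First I would observe that a $k$-dimensional flat $F \subset \mathbb{R}^d$ is isometric to $(\mathbb{R}^k, \lVert \cdot \rVert_2)$: a translation followed by an orthogonal change of coordinates maps $F$ onto the subspace spanned by $e_1, \ldots, e_k$, and such maps preserve Euclidean distances. Since the doubling dimension depends only on the metric, it suffices to control it for $\mathbb{R}^k$. I would also note that the intersection of an ambient ball $B(y, 2r)$ with $F$ is, when nonempty, a ball inside $F$ whose center is the orthogonal projection of $y$ onto $F$ and whose radius is at most $2r$; conversely, any ball of radius $r$ inside $F$ is contained in the ambient ball of radius $r$ with the same center. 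Hence covering $B(y, 2r) \cap F$ by $r$-balls reduces to covering a $2r$-ball of $\mathbb{R}^k$ by $r$-balls.

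Next I would carry out the packing argument in $\mathbb{R}^k$. Fix $B(x, 2r)$ and let $S \subseteq B(x, 2r)$ be a maximal subset whose points are pairwise at distance at least $r$. By maximality every point of $B(x, 2r)$ lies within distance $r$ of some point of $S$, so $\{ B(s, r) \}_{s \in S}$ covers $B(x, 2r)$. On the other hand the balls $\{ B(s, r/2) \}_{s \in S}$ are pairwise disjoint and all contained in $B(x, 5r/2)$, so comparing $k$-dimensional volumes gives $\lvert S \rvert \, (r/2)^k \le (5r/2)^k$, i.e. $\lvert S \rvert \le 5^k$. Thus $B(x, 2r)$ is covered by at most $5^k = 2^{(\log_2 5) k}$ balls of radius $r$, so the doubling dimension of $\mathbb{R}^k$, and hence of the flat $F$, is at most $(\log_2 5)\, k = \mathcal{O}(k)$.

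I do not expect a genuine obstacle; the one point needing a little care is the bookkeeping in the reduction step — verifying that restricting an ambient $2r$-ball to $F$ yields a ball of radius at most $2r$ within $F$, and that balls within $F$ re-inflate to ambient balls of the same radius, so that the $r$-versus-$2r$ scaling is preserved and the clean $\mathbb{R}^k$ estimate transfers intact. Everything else is the routine maximal-separated-set and volume computation.
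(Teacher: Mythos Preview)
Your proposal is correct and follows essentially the same approach as the paper: reduce to covering a ball in $\mathbb{R}^k$ by half-radius balls, then bound that covering number. The paper's proof is terser---it simply cites the well-known $\epsilon$-net bound $(C/\epsilon)^k$ for the unit ball in $\mathbb{R}^k$---whereas you supply the explicit maximal-separated-set and volume-comparison argument and are more careful about the reduction step, but the underlying strategy is the same.
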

\begin{proof}
    The intersection of a ball in $\mathbb{R}^d$ and a $k$-dimensional flat is a ball in
    $\mathbb{R}^k$.
    It is a well-known result that the size of an $\epsilon$-net of a 
    unit ball in $\mathbb{R}^k$ is at most $(C/\epsilon)^k$ for some small
    constant $C$. As such, a ball of radius $r$ can be covered with $2^{\mathcal{O}(k)}$
    balls of radius $r/2$, implying the claim.
\end{proof}

The lemma above tells us that the doubling dimension of a set in the Euclidean space
is at most some constant factor larger than the natural dimension of the space;
note that this was not the case for the expansion rate.
Another important property that speaks to the stability of the doubling dimension
is the following, which is trivially true:

\begin{lemma}
    Any subset of a set with doubling dimension $d_\circ$ itself has doubling dimension $d_\circ$.
\end{lemma}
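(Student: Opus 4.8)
The plan is to observe that a cover of a set automatically restricts to a cover of any of its subsets, so the doubling property is inherited downward with the \emph{same} constant.

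Concretely, let $\mathcal{Y} \subseteq \mathcal{X}$, where $\mathcal{X}$ has doubling dimension $d_\circ$. Fix an arbitrary center $x$ and radius $r > 0$; I must exhibit at most $2^{d_\circ}$ balls of radius $r$ whose union contains $B(x, 2r) \cap \mathcal{Y}$. By Definition~\ref{definition:doubling-dimension} applied to $\mathcal{X}$, there exist balls $B(c_1, r), \ldots, B(c_N, r)$ with $N \le 2^{d_\circ}$ such that
\begin{equation*}
    B(x, 2r) \cap \mathcal{X} \subseteq \bigcup_{i=1}^{N} B(c_i, r).
\end{equation*}
Since $\mathcal{Y} \subseteq \mathcal{X}$, we have $B(x, 2r) \cap \mathcal{Y} \subseteq B(x, 2r) \cap \mathcal{X}$, and hence the very same balls $B(c_1, r), \ldots, B(c_N, r)$ cover $B(x, 2r) \cap \mathcal{Y}$. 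As $x$ and $r$ were arbitrary, $\mathcal{Y}$ has doubling dimension at most $d_\circ$.

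There is essentially no obstacle here; the only point worth a remark is that Definition~\ref{definition:doubling-dimension} places no requirement that the centers of the covering balls lie in the set being covered, so reusing the balls $B(c_i, r)$ for $\mathcal{Y}$ is legitimate even when $c_i \notin \mathcal{Y}$. (Had the definition insisted on centers inside the set, one would instead replace each $B(c_i,r)$ that meets $\mathcal{Y}$ by a ball centered at some point of $B(c_i,r)\cap\mathcal{Y}$ of twice the radius and then re-apply the doubling property once more to recover radius $r$; this is unnecessary under the stated definition.)
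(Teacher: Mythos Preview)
Your proof is correct. The paper itself does not supply a proof for this lemma; it simply declares the statement ``trivially true'' and moves on. Your argument spells out exactly the trivial observation the author has in mind---that a cover of $B(x,2r)\cap\mathcal{X}$ is automatically a cover of $B(x,2r)\cap\mathcal{Y}$ whenever $\mathcal{Y}\subseteq\mathcal{X}$---and your side remark about what would change if centers were required to lie in the set is a nice bonus, though unnecessary under Definition~\ref{definition:doubling-dimension} as stated.
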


The doubling dimension is also robust under the addition of points to the set,
as the following result shows.

\begin{lemma}
    \label{lemma:intrinsic-dimensionality:union}
    Suppose sets $\mathcal{X}_i$ for $i \in [n]$ each have doubling dimension $d_\circ$.
    Then their union has doubling dimension at most $d_\circ + \log n$.
\end{lemma}
\begin{proof}
    For any ball $B$ of radius $r$, $B \cap \mathcal{X}_i$ can be covered with
    $2^{d_\circ}$ balls of half the radius. As such, at most $n2^{d_\circ}$ balls
    of radius $r/2$ are needed to cover the union. The doubling dimension of the union
    is therefore $d_\circ + \log n$.
\end{proof}

One consequence of the previous two lemmas is the following statement concerning sparse
vectors:

\begin{lemma}
    \label{lemma:intrinsic-dimensionality:sparse}
    Suppose that $\mathcal{X} \subset \mathbb{R}^d$ is a collection of sparse vectors,
    each having at most $n$ non-zero coordinates. Then the doubling dimension of $\mathcal{X}$
    is at most $Ck + k \log d$ for some constant $C$.
\end{lemma}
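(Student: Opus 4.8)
The plan is to exhibit $\mathcal{X}$ as a subset of a finite union of low-dimensional coordinate flats and then lean entirely on the stability lemmas for the doubling dimension proved above. Writing $k$ for the maximum number of non-zero coordinates of a vector in $\mathcal{X}$, note that every $u \in \mathcal{X}$ satisfies $\mathit{nz}(u) \subseteq S$ for some $S \subseteq [d]$ with $\lvert S \rvert = k$ (when $d < k$ the claim is trivial since $\mathbb{R}^d$ itself has doubling dimension $\mathcal{O}(d)$, so assume $d \geq k$). Letting $V_S = \{ x \in \mathbb{R}^d \;:\; \mathit{nz}(x) \subseteq S \}$, which is a $k$-dimensional flat through the origin, we get
\[
    \mathcal{X} \;\subseteq\; \bigcup_{S \subseteq [d],\; \lvert S \rvert = k} V_S ,
\]
a union of exactly $\binom{d}{k}$ sets.

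First I would bound each piece: by Lemma~\ref{lemma:intrinsic-dimensionality:k-dimensional-flat}, each $V_S$ is a $k$-dimensional flat and therefore has doubling dimension at most $C_0 k$ for an absolute constant $C_0$. Next, applying Lemma~\ref{lemma:intrinsic-dimensionality:union} to these $\binom{d}{k}$ sets shows their union has doubling dimension at most $C_0 k + \log \binom{d}{k}$, and the elementary estimate $\binom{d}{k} \leq d^k$ turns this into $C_0 k + k \log d$. Finally, invoking monotonicity of the doubling dimension under taking subsets (the unlabeled lemma just above Lemma~\ref{lemma:intrinsic-dimensionality:union}), $\mathcal{X}$ inherits a doubling dimension of at most $C_0 k + k \log d$, which is the assertion with $C = C_0$.

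There is no genuinely hard step here; the argument is bookkeeping on top of the three preceding lemmas. The only points that deserve a moment of care are: (i) the union is \emph{finite}, so Lemma~\ref{lemma:intrinsic-dimensionality:union} applies verbatim; (ii) $\mathcal{X}$ is in general neither a flat nor one of the $V_S$, which is exactly why both the union lemma and the subset lemma are needed rather than Lemma~\ref{lemma:intrinsic-dimensionality:k-dimensional-flat} alone; and (iii) if one instead unions over all supports of size \emph{at most} $k$, the count becomes $\sum_{j \leq k} \binom{d}{j} \leq (d+1)^k$, which only replaces $k\log d$ by $k\log(d+1)$ and hence leaves the stated form intact after adjusting the constant.
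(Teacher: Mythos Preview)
Your proof is correct and follows essentially the same route as the paper: write the sparse vectors as sitting inside the union of $\binom{d}{k} \leq d^k$ coordinate $k$-flats, bound each flat via Lemma~\ref{lemma:intrinsic-dimensionality:k-dimensional-flat}, and then apply Lemma~\ref{lemma:intrinsic-dimensionality:union}. If anything, you are slightly more careful than the paper in explicitly invoking the subset lemma, since $\mathcal{X}$ is only contained in, not equal to, the union of flats.
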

\begin{proof}
    $\mathcal{X}$ is the union of $\binom{d}{n} \leq d^n$ $n$-dimensional flats.
    Each of these flats has doubling dimension $Ck$ for some universal constant $C$,
    by Lemma~\ref{lemma:intrinsic-dimensionality:k-dimensional-flat}.
    By the application of Lemma~\ref{lemma:intrinsic-dimensionality:union},
    we get that the doubling dimension of $\mathcal{X}$ is at most $Cn + n \log d$.
\end{proof}

\begin{svgraybox}
    Lemma~\ref{lemma:intrinsic-dimensionality:sparse} states that collections of
    sparse vectors in the Euclidean space are naturally described by the doubling dimension. 
\end{svgraybox}

\bibliographystyle{abbrvnat}
\bibliography{biblio}

\begin{partbacktext}
\part{Retrieval Algorithms}
\end{partbacktext}

\chapter{Branch-and-Bound Algorithms}
\label{chapter:branch-and-bound}

\abstract{
One of the earliest approaches to the top-$k$ retrieval problem
is to partition the vector space recursively into smaller regions
and, each time we do so, make note of their geometry.
During search, we eliminate the regions
whose shape indicates they cannot contain or overlap with the solution set.
This chapter covers algorithms that embody this approach
and discusses their exact and approximate variants.
}

\section{Intuition}
\label{section:branch-and-bound:intuition}

Suppose there was some way to split a collection $\mathcal{X}$ into
two sub-collections, $\mathcal{X}_l$ and $\mathcal{X}_r$, such that
$\mathcal{X} = \mathcal{X}_l \cup \mathcal{X}_r$ and that the two
sub-collections have roughly the same size. In general, we can relax the splitting
criterion so the two sub-collections are not necessarily partitions; that is,
we may have $\mathcal{X}_l \cap \mathcal{X}_r \neq \emptyset$. We may also
split the collection into more than two sub-collections. For the moment,
though, assume we have two sub-collections that do not overlap.

Suppose further that, we could geometrically characterize \emph{exactly} the regions that contain
$\mathcal{X}_l$ and $\mathcal{X}_r$. For example, when $\mathcal{X}_l \cap \mathcal{X}_r = \emptyset$,
these regions partition the space and may therefore be characterized by a separating hyperplane.
Call these regions $\mathcal{R}_l$ and $\mathcal{R}_r$, respectively.
The separating hyperplane forms a \emph{decision boundary} that helps us determine
if a vector falls into $\mathcal{R}_l$ or $\mathcal{R}_r$.

In effect, we have created a binary tree of depth $1$ where the root node has a decision boundary
and each of the two leaves contains data points that fall into its region.
This is illustrated in Figure~\subref*{figure:branch-and-bound:motivation:single-split}.

Now suppose we have a query point $q$ somewhere in the space and that we are interested
in finding the top-$1$ data point with respect to a proper distance function $\delta(\cdot, \cdot)$.
$q$ falls either in $\mathcal{R}_l$ or $\mathcal{R}_r$;
suppose it is in $\mathcal{R}_l$. We determine that by evaluating the decision boundary in the 
root of the tree and navigating to the appropriate leaf.
Now, we solve the exact top-$1$ retrieval problem over $\mathcal{X}_l$
to obtain the optimal point in that region $u_l^\ast$, then make a note of this minimum distance obtained,
$\delta(q, u_l^\ast)$.

\begin{figure}[t]
    \centering
    \subfloat[]{
        \label{figure:branch-and-bound:motivation:single-split}
        \includegraphics[width=0.7\linewidth]{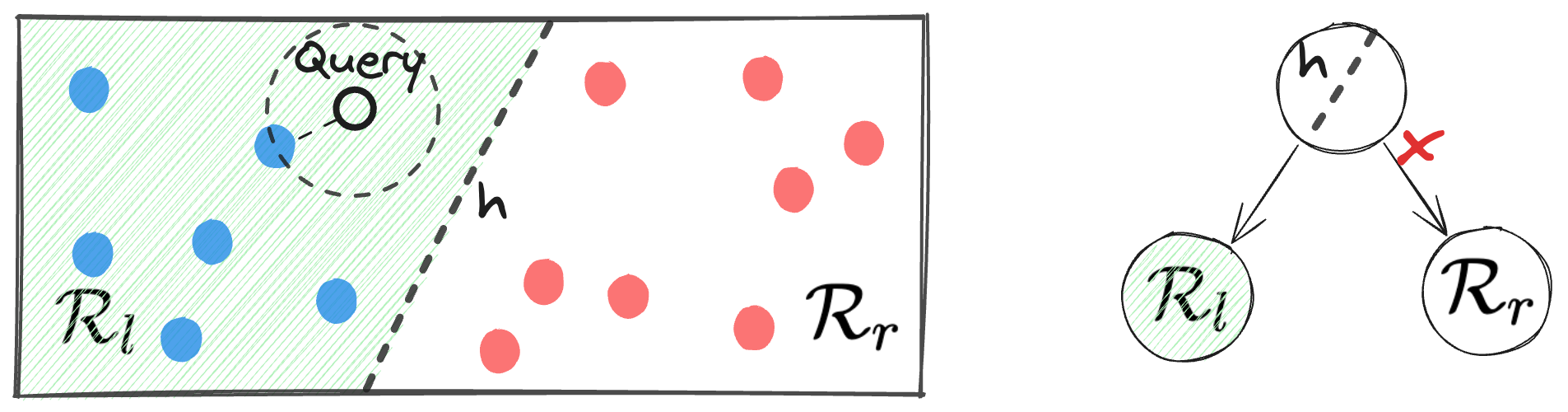}
    }

    \subfloat[]{
        \label{figure:branch-and-bound:motivation:multiple-split}
        \includegraphics[width=0.7\linewidth]{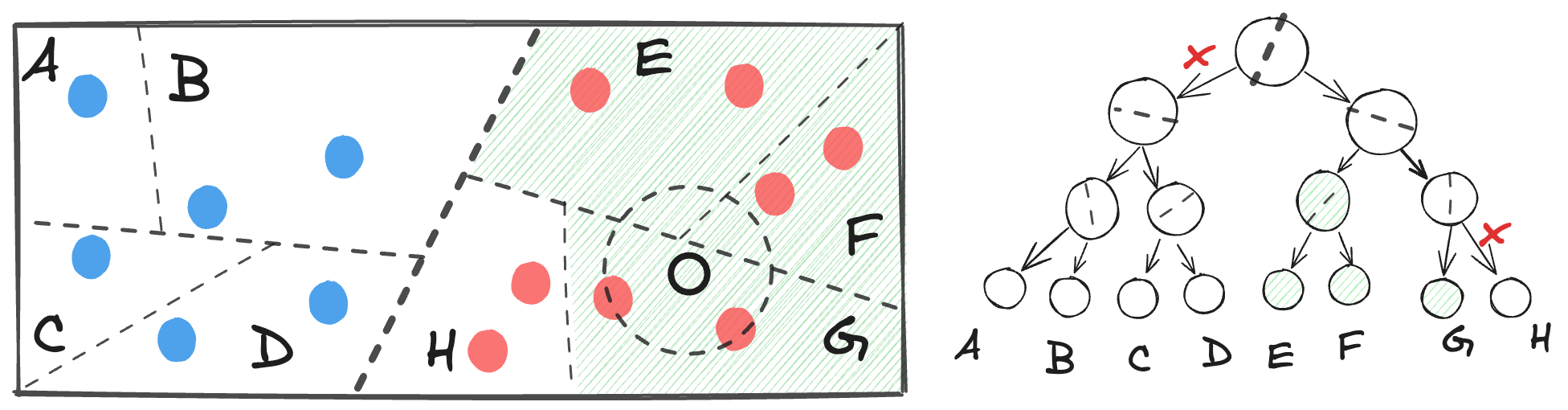}
    }
    \caption{Illustration of a general branch-and-bound method on a toy collection in $\mathbb{R}^2$.
    In (a), $\mathcal{R}_l$ and $\mathcal{R}_r$ are separated by the dashed line $h$.
    The distance between query $q$ and the closest vector in $\mathcal{R}_l$ is less than the distance between $q$ and
    $h$. As such, we do not need to search for the top-$1$ vector
    over the points in $\mathcal{R}_r$, so that the right branch of the tree is pruned.
    In (b), the regions are recursively split until each
    terminal region contains at most two data points.
    We then find the distance between $q$ and the data points in the region that contains $q$, $G$.
    If the ball around $q$ with this distance as its radius does not intersect a region, we can safely prune
    that region---regions that are not shaded in the figure.
    Otherwise, we may have to search it during the certification process.}
    \label{figure:branch-and-bound:motivation}
\end{figure}

At this point, if it turns out that $\delta(q, u_l^\ast) < \delta(q, \mathcal{R}_r)$\footnote{
The distance between a point $u$ and a set $\mathcal{S}$ is defined as
$\delta(u, \mathcal{S}) = \inf_{v \in \mathcal{S}} \delta(u, v)$.}
then we have found the optimal point and do not need to search the data points in $\mathcal{X}_r$ at all!
That is because, the $\delta$-ball\footnote{The ball centered at $u$ with radius $r$ with respect to metric $\delta$
is $\{ x \;|\; \delta(u, x) \leq r \}$.}
centered at $q$ with radius $\delta(q, u_l^\ast)$ is contained entirely in $\mathcal{R}_l$,
so that no point from $\mathcal{R}_r$ can have a shorter distance to $q$ than $u_l^\ast$.
Refer again to Figure~\subref*{figure:branch-and-bound:motivation:single-split} for an
illustration of this scenario.

If, on the other hand, $\delta(q, u_l^\ast) \geq \delta(q, \mathcal{R}_r)$, then
we proceed to solve the top-$1$ problem over $\mathcal{X}_r$ as well and
compare the solution with $u_l^\ast$ to find the optimal vector.
We can think of the comparison between $\delta(q, u_l^\ast)$ with $\delta(q, \mathcal{R}_r)$
as backtracking to the parent node of $\mathcal{R}_l$ in the equivalent tree---which is the root---and comparing
$\delta(q, u_l^\ast)$ with the distance of $q$ with the decision boundary.
This process of backtracking and deciding to prune a branch or search it
\emph{certifies} that $u_l^\ast$ is indeed optimal, thereby solving the top-$1$ problem exactly.

We can extend the framework above easily by recursively splitting the two sub-collections
and characterizing the regions containing the resulting partitions. This leads to a (balanced)
binary tree where each internal node has a decision boundary---the separating hyperplane of its child regions.
We may stop splitting a node if it has fewer than $m_\circ$ points.
This extension is rendered in Figure~\subref*{figure:branch-and-bound:motivation:multiple-split}.

The retrieval process is the same but needs a little more care:
Let the query $q$ traverse the tree from root to leaf,
where each internal node determines if $q$ belongs to
the ``left'' or ``right'' sub-regions and \emph{routes} $q$ accordingly.
Once we have found the leaf (terminal) region that contains $q$, we find the
candidate vector $u^\ast$, then backtrack and certify that $u^\ast$ is indeed optimal.

During the backtracking, at each internal node, we compare the distance between $q$ and
the current candidate with the distance between $q$ and the region on the other side of
the decision boundary. As before, that comparison results in either pruning a branch or searching it
to find a possibly better candidate. The certification process stops when we find ourselves
back in the root node with no more branches to verify, at which point we have found the optimal solution.

The above is the logic that is at the core of branch-and-bound algorithms for top-$k$
retrieval~\citep{dasgupta2015rptrees,kdtree,ram2019revisiting_kdtree,mtrees,vptrees,liu2004SpillTree,panigrahy2008improved-kdtree,conetrees,xbox-tree}.
The specific instances of this framework differ in terms of how they
\emph{split} a collection and the details of the \emph{certification} process.
We will review key algorithms that belong to this family in the remainder of this chapter.
We emphasize that, most branch-and-bound algorithms only address the NN
problem in the Euclidean space (so that $\delta(u, v) = \lVert u - v \rVert_2$)
or in growth-restricted measures~\citep{karger2002growth-restricted-metrics,clarkson1997,krauthgamer2004navigatingnets}
but where the metric is nonetheless proper.

\section{\texorpdfstring{$k$}{k}-dimensional Trees}
\label{section:branch-and-bound:kd-tree}

The $k$-dimensional Tree or $k$-d Tree~\citep{kdtree} is a special instance of the framework
described above wherein the distance function is Euclidean
and the space is recursively partitioned into hyper-rectangles.
In other words, the decision boundaries in a $k$-d Tree are axis-aligned hyperplanes.

Let us consider its simplest construction for $\mathcal{X} \subset \mathbb{R}^d$.
The root of the tree is a node that represents the entire space, which naturally
contains the entire data collection. Assuming that the size of the collection
is greater than $1$, we follow a simple procedure to split the node:
We select one coordinate axis and partition the collection at the
median of data points along the chosen direction.
The process recurses on each newly-minted node, with nodes at the same depth in the tree using the same
coordinate axis for splitting, and where we go through the coordinates in a round-robin
manner as the tree grows. We stop splitting a node further if it contains a single
data point ($m_\circ = 1$), then mark it as a leaf node.

A few observations that are worth noting.
By choosing the median point to split on, we guarantee that the tree is balanced.
That together with the fact that $m_\circ = 1$ implies that
the depth of the tree is $\log m$ where $m = \lvert \mathcal{X} \rvert$.
Finally, because nodes in each level of the tree split on the same coordinate,
every coordinate is split in $(\log m) /d$ levels. These will become important
in our analysis of the algorithm.

\subsection{Complexity Analysis}
The $k$-d Tree data structure is fairly simple to construct. It is also
efficient: Its space complexity given a set of $m$ vectors is $\Theta(m)$
and its construction time has complexity $\Theta(m \log m)$.\footnote{
The time to construct the tree depends on the complexity of the subroutine
that finds the median of a set of values.}

The search algorithm, however, is not so easy to analyze in general.
\cite{freidman1977kdtree_proof} claimed that the expected search complexity
is $\mathcal{O}(\log m)$ for $m$ data points that are sampled uniformly
from the unit hypercube. While uniformity is an unrealistic assumption,
it is necessary for the analysis of the average case.
On the other hand, no generality is lost by the assumption that vectors
are contained in the hypercube. That is because, we can always scale every data
point by a constant factor into the unit hypercube---a transformation
that does not affect the pairwise distances between vectors.
Let us now discuss the sketch of the proof of their claim.

Let $\delta^\ast = \min_{u \in \mathcal{X}} \lVert q - u \rVert_2$ be the optimal distance to a query $q$.
Consider the ball of radius $\delta^\ast$ centered at $q$ and denote it by $B(q, \delta^\ast)$.
It is easy to see that the number of leaves we may need to visit in order to certify
an initial candidate is upper-bounded by the number of leaf regions (i.e., $d$-dimensional boxes)
that touch $B(q, \delta^\ast)$. That quantity itself is upper-bounded by the number of boxes
that touch the smallest hypercube that contains $B(q, \delta^\ast)$. If we calculated this
number, then we have found an upper-bound on the search complexity.

Following the argument above,~\cite{freidman1977kdtree_proof} show that---with
very specific assumptions on the density of vectors in the space, which do not
necessarily hold in high dimensions---the quantity of interest is upper-bounded
by the following expression:
\begin{equation}
    \label{equation:branch-and-bound:kdtree:original_upperbound}
    \big( 1 + G(d)^{1/d} \big)^d,
\end{equation}
where $G(d)$ is the ratio between the volume of the hypercube that contains $B(q, \delta^\ast)$
and the volume of $B(q, \delta^\ast)$ itself. Because $G(d)$ is independent of $m$,
and because visiting each leaf takes $\mathcal{O}(\log m)$ (i.e., the depth of the tree)
operations, they conclude that the complexity of the algorithm is $\mathcal{O}(\log m)$.

\subsection{Failure in High Dimensions}
The argument above regarding the search time complexity of the algorithm
fails in high dimensions. Let us elaborate why in this section.

Let us accept that the assumptions that enabled the proof above hold
and focus on $G(d)$. The volume of a hypercube in $d$ dimensions
with sides that have length $2\delta^\ast$ is $(2\delta^\ast)^d$. The volume of $B(q, \delta^\ast)$
is $\pi^{d/2} {\delta^\ast}^d / \Gamma(d/2 + 1)$, where $\Gamma$ denotes the Gamma function.
For convenience, suppose that $d$ is even,
so that $\Gamma(d/2 + 1) = (d/2)!$. As such $G(d)$, the ratio between the two volumes,
is:
\begin{equation}
    G(d) = \frac{2^d (d/2)!}{\pi^{d/2}}.
\end{equation}
Plugging this back into Equation~(\ref{equation:branch-and-bound:kdtree:original_upperbound}),
we arrive at:
\begin{align*}
    \big( 1 + G(d)^{1/d} \big)^d &= \big( 1 + \frac{2}{\sqrt{\pi}} (d/2)!^{\frac{1}{d}} \big)^d \\
    &= \mathcal{O}((\frac{2}{\sqrt{\pi}})^d (\frac{d}{2})!) \\
    &= \mathcal{O}((\frac{2}{\sqrt{\pi}})^d d^{\frac{d + 1}{2}}),
\end{align*}
where in the third equality we used Stirling's formula,
which approximates $n!$ as $\sqrt{2 \pi n} (\frac{n}{e})^n$, to expand $(d/2)!$
as follows:
\begin{align*}
    (\frac{d}{2})! &\approx \sqrt{2 \pi d/2} (\frac{d}{2e})^{d/2} \\
    &= \sqrt{\pi} \frac{1}{(2e)^{d/2}} d^{\frac{d + 1}{2}} \\
    &= \mathcal{O}(d^\frac{d + 1}{2}).
\end{align*}

\begin{svgraybox}
The above shows that, the number of leaves that may be visited during the certification
process has, asymptotically, an exponential dependency on $d$. That does not bode well
for high dimensions.
\end{svgraybox}

There is an even simpler argument to make to show that in high dimensions the search algorithm
must visit at least $2^d$ data points during the certification process.
Our argument is as follows. We will show in Lemma~\ref{lemma:branch-and-bound:expected-distance}
that, with high probability, the distance between the query point $q$ and a randomly drawn
data point concentrates sharply on $\sqrt{d}$. This implies that $B(q, \delta^\ast)$ has a radius
that is larger than $1$ with high probability. Noting that the side of the unit hypercube is $2$, it follows that
$B(q, \delta^\ast)$ crosses decision boundaries across every dimension, making it necessary to visit
the corresponding partitions for certification.

Finally, because each level of the tree splits
on a single dimension, the reasoning above means that the certification process must visit
$\Omega(d)$ levels of the tree. As a result, we visit at least $2^{\Omega(d)}$ data points.
Of course, in high dimensions, we often have far fewer than $2^d$ data points,
so that we end up visiting every vector during certification.

\begin{lemma}
    \label{lemma:branch-and-bound:expected-distance}
    The distance $r$ between a randomly chosen point and its nearest neighbor among $m$ points drawn
    uniformly at random from the unit hypercube is $\Theta\big( \sqrt{d} / m^{1/d}\big)$
    with probability at least $1 - \mathcal{O}(1/2^d)$.
\end{lemma}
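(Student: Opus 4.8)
The plan is to squeeze $r$ between two fixed constant multiples of $\rho \triangleq \sqrt{d}/m^{1/d}$, each violation having probability at most $2^{-d}$, and then finish with a union bound. A harmless preliminary: rescaling the cube by a constant multiplies every distance by that constant and leaves $\rho$ unchanged, so I would assume the data points $X_1,\dots,X_m$ and the query $q$ are drawn independently and uniformly from $[0,1]^d$, and I condition on $q$ throughout.

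For the \emph{lower bound} I would control $\probability[r < c_1\rho]$ for an absolute constant $c_1$. The event $r < c_1\rho$ means some $X_i$ lands in the ball $B(q,c_1\rho)$, so by a union bound $\probability[r<c_1\rho] \le m\,\mathrm{vol}\big(B(q,c_1\rho)\cap[0,1]^d\big) \le m V_d(c_1\rho)$, where $V_d(s)=\pi^{d/2}s^d/\Gamma(d/2+1)$ is the volume of a Euclidean $d$-ball of radius $s$. Substituting $c_1\rho = c_1\sqrt{d}/m^{1/d}$, the factor $m$ cancels, and Stirling's lower bound $\Gamma(d/2+1)\ge (d/2e)^{d/2}$ collapses the right-hand side to $(2\pi e c_1^2)^{d/2}$; choosing $c_1$ small enough that $2\pi e c_1^2\le 1/4$ makes this at most $2^{-d}$. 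This step uses only $\mathrm{vol}(B\cap\text{cube})\le\mathrm{vol}(B)$, so it holds for all $m,d$ and is insensitive to boundary effects.

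For the \emph{upper bound} the difficulty is that ``the expected number of data points within distance $\Theta(\rho)$ of $q$ is $\Theta(1)$'' is far too weak for a $2^{-d}$ failure probability, so I would inflate the radius by a constant and split into two cases. If $m<4^d$, then $\rho>\sqrt{d}/4$, and since the diameter of $[0,1]^d$ is $\sqrt{d}$ we get $r\le\sqrt{d}<4\rho$ with probability $1$. If $m\ge 4^d$, partition $[0,1]^d$ into $n^d$ congruent sub-cubes of side $1/n$ with $n=\lfloor m^{1/d}/2\rfloor\ge 2$: the sub-cube $C$ containing $q$ lies entirely inside $[0,1]^d$, has diameter $\sqrt{d}/n\le 4\sqrt{d}/m^{1/d}=4\rho$, and each $X_i$ falls into $C$ independently with probability $n^{-d}\ge 2^d/m$ (because $n^d\le m/2^d$). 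Hence $\probability[\,X_i\notin C\ \text{for all}\ i\mid q\,]\le(1-2^d/m)^m\le e^{-2^d}\le 2^{-d}$, and any $X_i$ in $C$ certifies $r\le 4\rho$. So $\probability[r>4\rho]\le 2^{-d}$ in either case. The factor $2$ hidden in $n$ is exactly what raises the expected occupancy of $q$'s cell to $2^{d}$, which an exponential tail converts into the needed bound, while the cell staying inside the cube avoids any boundary geometry.

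Combining the two halves, $c_1\rho\le r\le 4\rho$ with probability at least $1-2\cdot 2^{-d}=1-\mathcal{O}(1/2^d)$, i.e., $r=\Theta(\sqrt{d}/m^{1/d})$. I expect the main obstacle to be precisely the upper bound: forcing the failure probability down to $\mathcal{O}(2^{-d})$ uniformly in $m$, rather than merely $o(1)$, since the first-moment/ball-volume heuristic controls only a constant-order count and is moreover corrupted by the cube's boundary once the relevant radius is $\Omega(1)$. The remedy is the three-part maneuver above: use a grid cell instead of a ball (no boundary), inflate the radius by a constant (count becomes exponential in $d$), and handle $m<4^d$ separately, where $\rho$ already dominates the cube's diameter and the bound is trivial.
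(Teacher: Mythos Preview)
Your proof is correct and shares the same probabilistic skeleton as the paper's: a union bound over ball volume for the lower tail, and the inequality $(1-p)^m\le e^{-pm}$ with $p\ge 2^d/m$ for the upper tail. The lower-bound halves are essentially identical. For the upper tail, however, the paper simply takes the ball $B(q,2r)$, asserts each $X_i$ lands in it with probability $2^d/m$, and concludes $(1-2^d/m)^m\approx e^{-2^d}\le 2^{-d}$; this tacitly assumes $B(q,2r)\subset[0,1]^d$, which fails once $r$ is comparable to the cube's side (in particular whenever $m$ is sub-exponential in $d$). Your argument repairs exactly that gap: replacing the ball by an axis-aligned grid cell guarantees the cell lies inside the cube and has exact probability $n^{-d}$, and splitting off the regime $m<4^d$---where $\rho$ already exceeds a quarter of the diameter---makes the upper tail hold deterministically there. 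So the two routes are the same in spirit, but yours is the one that actually closes without hand-waving about boundary effects.
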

\begin{proof}
    Consider the ball of radius $r$ in $d$-dimensional unit hypercube with volume $1$. Suppose, for notational
    convenience, that $d$ is even---whether it is odd or even does not change our asymptotic
    conclusions. The volume of this ball is:
    \begin{equation*}
        \frac{\pi^{d/2} r^d}{(d/2)!}.
    \end{equation*}
    Since we have $m$ points in the hypercube, the expected number of points that are contained in
    the ball of radius $r$ is therefore:
    \begin{equation*}
        \frac{\pi^{d/2} r^d}{(d/2)!} m.
    \end{equation*}
    As a result, the radius $r$ for which the ball contains one point in expectation is:
    \begin{align*}
        \frac{\pi^{d/2} r^d}{(d/2)!} m = 1 &\implies r^d = \Theta\big(\frac{1}{m} (\frac{d}{2})!\big)\\
        &\implies r = \Theta\big( \frac{1}{m^{1/d}} (\frac{d}{2})!^{1/d} \big).
    \end{align*}
    Using Stirling's formula and letting $\Theta$ consume the constants and small
    factors completes the claim that $r = \sqrt{d} / m^{1/d}$.

    All that is left is bounding the probability of the event that $r$ takes on
    the above value. For that, consider first the ball of radius $r/2$.
    The probability that this ball contains at least one
    point is at most $1/2^d$. To see this, note that the probability that a single point
    falls into this ball is:
    \begin{equation*}
        \frac{\pi^{d/2} (r/2)^d}{(d/2)!} = \frac{1}{2^d} \underbrace{\frac{\pi^{d/2} r^d}{(d/2)!}}_{1/m}.
    \end{equation*}
    By the Union Bound, the probability that at least one point out of $m$ points
    falls into this ball is at most $m \times 1/(m 2^d) = 1/2^d$.

    Next, consider the ball of radius $2r$. The probability that it contains no points
    at all is at most $(1 - 2^d/m)^m \approx \exp(-2^d) \leq 1/2^d$, where we used the approximation that
    $\big(1 - 1/x\big)^x \approx \exp(-1)$ and the fact that $\exp(-x) \leq 1/x$.
    To see why, it is enough to compute the probability that a single point does not fall into a
    ball of radius $2r$, then by independence we arrive at the joint probability above.
    That probability is $1$ minus the probability that the point falls into the ball,
    which is itself:
    \begin{equation*}
        \frac{\pi^{d/2} (2r)^d}{(d/2)!} = 2^d \underbrace{\frac{\pi^{d/2} r^d}{(d/2)!}}_{1/m},
    \end{equation*}
    hence the total probability $(1 - 2^d/m)^m$.

    We have therefore shown that the probability that the distance of interest is $r$
    is extremely high and in the order of $1 - 1/2^d$, completing the proof.
\end{proof}

\section{Randomized Trees}

As we explained in Section~\ref{section:branch-and-bound:kd-tree},
the search algorithm over a $k$-d Tree ``index'' consists of two operations:
A single root-to-leaf traversal of the tree followed by backtracking to certify
the candidate solution. As the analysis presented in the same section shows, it
is the certification procedure that may need to visit virtually all data points.
It is therefore not surprising that~\cite{liu2004SpillTree} report that,
in their experiments with low-dimensional vector collections (up to $30$ dimensions),
nearly $95\%$ of the search time is spent in the latter phase.

That observation naturally leads to the following question: What if we eliminated
the certification step altogether? In other words, when given a query $q$, the search
algorithm simply finds the cell that contains $q$ in $\mathcal{O}(\log m/m_\circ)$
time (where $m=\lvert \mathcal{X} \rvert$), then returns the solution
from among the $m_\circ$ vectors in that cell---a
strategy~\cite{liu2004SpillTree} call \emph{defeatist} search.

As~\cite{panigrahy2008improved-kdtree} shows for uniformly distributed vectors,
however, the failure probability of the defeatist
method is unacceptably high. That is primarily because, when a query is close to a
decision boundary, the optimal solution may very well be on the other side.
Figure~\subref*{figure:branch-and-bound:randomized:failure} illustrates this phenomenon.
As both the construction and search algorithms are \emph{deterministic},
such a failure scenario is intrinsic to the algorithm and
cannot be corrected once the tree has been constructed. Decision boundaries
are hard and fast rules.

\begin{figure}[t]
    \centering
    \subfloat[]{
        \label{figure:branch-and-bound:randomized:failure}
        \includegraphics[width=0.47\linewidth]{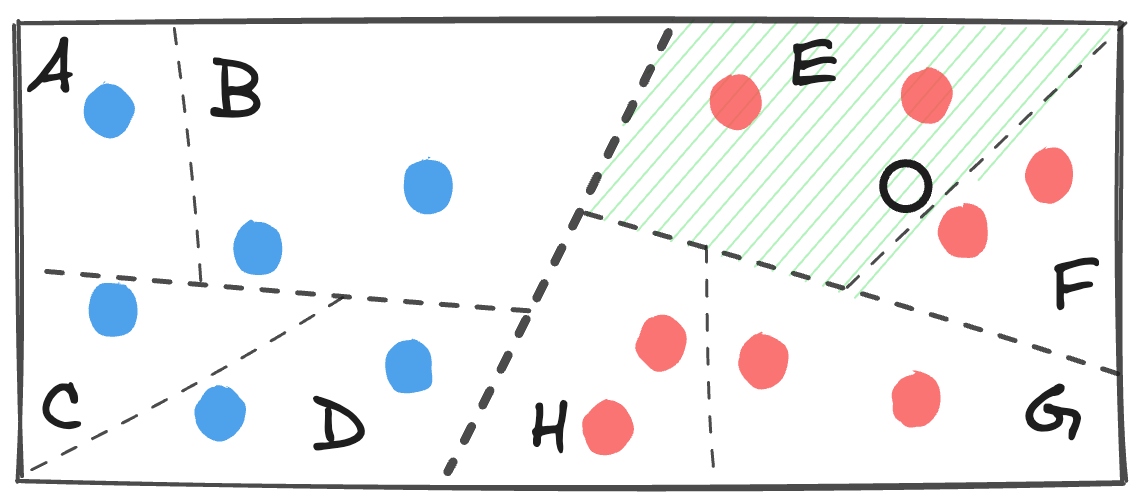}
    }
    \subfloat[]{
        \includegraphics[width=0.47\linewidth]{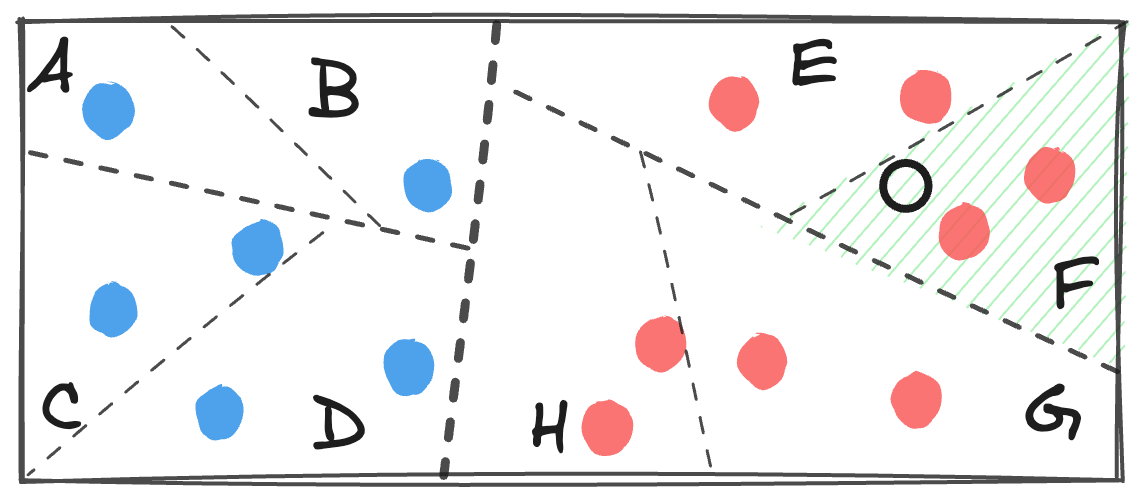}
    }
    \caption{Randomized construction of $k$-d Trees for a fixed collection of vectors (filled circles).
    Decision boundaries take random directions
    and are planted at a randomly-chosen point near the median. Repeating this procedure
    results in multiple ``index'' structures of the vector collection. Performing a ``defeatist'' search
    repeatedly for a given query (the empty circles) then leads to a higher probability of success.}
    \label{figure:branch-and-bound:randomized}
\end{figure}

Would the situation be different if tree construction was a randomized algorithm?
We could, for instance, let a random subset of the data points that are close to
each decision boundary fall into both ``left'' and ``right'' sub-regions as we split an internal node.
As another example, we could place the decision boundaries at randomly chosen points
close to the median, and have them take a randomly chosen direction.
We illustrate the latter in Figure~\ref{figure:branch-and-bound:randomized}.

\begin{svgraybox}
Such randomized decisions mean that, every time we construct a $k$-d Tree, we would obtain a different
index of the data. Furthermore, by building a forest of randomized $k$-d Trees and repeating
the defeatist search algorithm, we may be able to lower the failure probability!
\end{svgraybox}

These, as we will learn in this section, are indeed successful ideas that have
been extensively explored in the literature~\citep{liu2004SpillTree,ram2019revisiting_kdtree,dasgupta2015rptrees}.

\subsection{Randomized Partition Trees}

Recall that a decision boundary in a $k$-d Tree is an axis-aligned hyperplane that is placed at the median point
of the projection of data points onto a coordinate axis.
Consider the following adjustment to that procedure, due originally to~\cite{liu2004SpillTree}
and further refined by~\cite{dasgupta2015rptrees}.
Every time a node whose region is $\mathcal{R}$ is to be split, we first draw a \emph{random direction}
$u$ by sampling a vector from the $d$-dimensional unit sphere, and a scalar $\beta \in [1/4, 3/4]$ uniformly at random.
We then project all data points that are in $\mathcal{R}$ onto $u$,
and obtain the $\beta$-fractile of the projections, $\theta$. $u$ together with $\theta$ form the decision
boundary. We then proceed as before to partition the data points in $\mathcal{R}$, by following the rule
$\langle u, v \rangle \leq \theta$ for every data point $v \in \mathcal{R}$.
A node turns into a leaf if it contains a maximum of $m_\circ$ vectors.

The procedure above gives us what~\cite{dasgupta2015rptrees} call a \emph{Randomized Partition} (RP) Tree.
You have already seen a visual demonstration of two RP Trees in Figure~\ref{figure:branch-and-bound:randomized}.
Notice that, by requiring $u$ to be a standard basis vector, fixing one $u$ per each level of the tree,
and letting $\beta=0.5$, we reduce an RP Tree to the original $k$-d Tree.

What is the probability that a defeatist search over a single RP Tree fails to return the correct nearest neighbor?
\cite{dasgupta2015rptrees} proved that, that probability is related to the following potential function:
\begin{equation}
    \label{equation:branch-and-bound:randomized:potential}
    \Phi(q, \mathcal{X}) = \frac{1}{m} \sum_{i=1}^{m}
    \frac{\lVert q - x^{(\pi_1)} \rVert_2}{\lVert q - x^{(\pi_i)} \rVert_2},
\end{equation}
where $m = \lvert \mathcal{X} \rvert$, $\pi_1$ through $\pi_m$ are indices that
sort data points by increasing distance to the query point $q$, so that $x^{(\pi_1)}$
is the closest data point to $q$.

\begin{svgraybox}
Notice that, a value of $\Phi$ that is close to $1$ implies that nearly
all data points are at the same distance from $q$. In that case, as we saw in Chapter~\ref{chapter:instability},
NN becomes unstable and approximate top-$k$ retrieval becomes meaningless.
When $\Phi$ is closer to $0$, on the other hand, the optimal vector is well-separated
from the rest of the collection.

Intuitively then, $\Phi$ is reflective of the difficulty or stability of the NN problem for a given query point.
It makes sense then, that the probability of failure for $q$ is related to this notion of difficulty
of NN search: intuitively, when the nearest neighbor is far from other vectors, a defeatist search is more likely
to yield the correct solution.
\end{svgraybox}

\subsubsection{A Potential Function to Quantify the Difficulty of NN Search}

Before we state the relationship between the failure probability and the potential function above more concretely,
let us take a detour and understand where the expression for $\Phi$ comes from. All the arguments that we are about
to make, including the lemmas and theorems, come directly from~\cite{dasgupta2015rptrees}, 
though we repeat them here using our adopted notation for completeness. We also present an
expanded proof of the formal results which follows the original proofs but elaborates on some of the steps.

\medskip

Let us start with a simplified setup where $\mathcal{X}$ consists of just two vectors $x$ and $y$.
Suppose that for a query point $q$, $\lVert q - x \rVert_2 \leq \lVert q - y \rVert_2$.
It turns out that, if we chose a random direction $u$ and projected $x$ and $y$ onto it,
then the probability that the projection of $y$ onto $u$ lands somewhere in between the projections of
$q$ and $x$ onto $u$ is a function of the potential function of Equation~(\ref{equation:branch-and-bound:randomized:potential}).
The following lemma formalizes this relationship.

\begin{lemma}
    \label{lemma:branch-and-bound:randomized:two-points}
    Suppose $q, x, y \in \mathbb{R}^d$ and $\lVert q - x \rVert_2 \leq \lVert q - y \rVert_2$.
    Let $U \in \mathbb{S}^{d-1}$ be a random direction and define
    $\overset{\angle}{v} = \langle U, v \rangle$.
    The probability that $\overset{\angle}{y}$ is between $\overset{\angle}{q}$ and $\overset{\angle}{x}$
    is:
    \begin{align*}
        \probability\big[
            \big( \overset{\angle}{q} \leq \overset{\angle}{y} \leq \overset{\angle}{x} \big) \lor
            \big( \overset{\angle}{x} \leq \overset{\angle}{y} \leq \overset{\angle}{q} \big) \big] &= \\
        \frac{1}{\pi} \arcsin \Biggl( 2 \Phi(q, \{ x, y\})
            & \sqrt{1 - \Big(
                \frac{\langle q - x, y - x \rangle}{\lVert q - x \rVert_2 \lVert y - x \rVert_2}
            \Big)^2}
        \Biggr).
    \end{align*}
\end{lemma}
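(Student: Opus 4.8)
The plan is to translate the projection event into a statement about the \emph{signs} of two linear functionals of the random direction $U$, and then into a computation about the interior angles of the triangle with vertices $q$, $x$, $y$. First I would observe that $\overset{\angle}{y}$ lies between $\overset{\angle}{q}$ and $\overset{\angle}{x}$ exactly when $(\overset{\angle}{y}-\overset{\angle}{q})$ and $(\overset{\angle}{y}-\overset{\angle}{x})$ have opposite signs (or one of them vanishes, a probability‑zero event), i.e.\ when $\langle U,\, y-q\rangle \, \langle U,\, y-x\rangle \le 0$. So the quantity we must compute is $\mathbb{P}\big[\operatorname{sign}\langle U,\, y-q\rangle \ne \operatorname{sign}\langle U,\, y-x\rangle\big]$.

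Next I would invoke the standard random‑hyperplane identity: for fixed nonzero $w_1,w_2$ and $U$ uniform on $\mathbb{S}^{d-1}$, the probability that $\langle U, w_1\rangle$ and $\langle U, w_2\rangle$ disagree in sign equals $\tfrac{1}{\pi}\arccos\!\big(\langle w_1,w_2\rangle/(\lVert w_1\rVert_2\lVert w_2\rVert_2)\big)$, i.e.\ the angle between $w_1$ and $w_2$ divided by $\pi$. The short justification (which could also be cited from the appendices) is that, by spherical symmetry, only the projection of $U$ onto the two‑dimensional span $V=\operatorname{span}\{w_1,w_2\}$ is relevant, and the normalized projection is uniform on the unit circle of $V$; the set of planar directions separating $w_1$ from $w_2$ is a union of two arcs of total angular measure $2\theta$, where $\theta=\angle(w_1,w_2)$. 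Applying this with $w_1=y-q$ and $w_2=y-x$, the target probability equals $\theta_y/\pi$, where $\theta_y=\angle(q-y,\,x-y)$ is the interior angle at vertex $y$ of the triangle $qxy$.

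It then remains to put $\theta_y$ into the stated closed form. By the law of sines in that triangle, $\sin\theta_y = \tfrac{\lVert q-x\rVert_2}{\lVert q-y\rVert_2}\sin\theta_x$, where $\theta_x$ is the angle at vertex $x$; and $\cos\theta_x = \langle q-x,\,y-x\rangle/(\lVert q-x\rVert_2\lVert y-x\rVert_2)$, so $\sin\theta_x$ is precisely the square‑root factor appearing in the statement. Since $\lVert q-x\rVert_2 \le \lVert q-y\rVert_2 \le \big(\lVert q-y\rVert_2^2 + \lVert x-y\rVert_2^2\big)^{1/2}$, the side $qx$ opposite $\theta_y$ satisfies $\lVert q-x\rVert_2^2 \le \lVert q-y\rVert_2^2 + \lVert x-y\rVert_2^2$, so the law of cosines gives $\cos\theta_y \ge 0$, hence $\theta_y\in[0,\pi/2]$ and $\theta_y = \arcsin(\sin\theta_y)$. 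Substituting, and recognizing $\lVert q-x\rVert_2/\lVert q-y\rVert_2$ as $2\Phi(q,\{x,y\})$ from the definition of the potential function restricted to the two‑point set, yields exactly the claimed formula.

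The step requiring the most care is the sign‑disagreement identity — getting the arc‑measure bookkeeping right, and disposing of the degenerate configurations ($q,x,y$ collinear, or $U$ orthogonal to $V$) as null events or as cases where both sides are $0$ — together with the final inversion $\sin\theta_y \mapsto \theta_y$, which is precisely where the hypothesis $\lVert q-x\rVert_2\le\lVert q-y\rVert_2$ is consumed: without it $\theta_y$ could exceed $\pi/2$ and $\arcsin$ would recover the wrong angle. Everything else is routine plane trigonometry.
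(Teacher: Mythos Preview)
Your proposal is correct and takes a genuinely different, more geometric route than the paper's proof. The paper proceeds analytically: it replaces the uniform direction by a standard Gaussian $U\sim\mathcal{N}(0,I_d)$, translates so $q=0$ and rotates so $x=\lVert x\rVert_2 e_1$, writes the event in coordinates, and then reduces the computation to the Cauchy distribution of a ratio of two independent Gaussians; the answer emerges after integrating the Cauchy density and applying an $\arctan$ addition identity followed by the conversion $\arctan a = \arcsin\bigl(a/\sqrt{1+a^2}\bigr)$. Your argument instead recognizes the event $\{\overset{\angle}{y}\text{ between }\overset{\angle}{q}\text{ and }\overset{\angle}{x}\}$ as $\{\operatorname{sign}\langle U,y-q\rangle\neq\operatorname{sign}\langle U,y-x\rangle\}$ and invokes the random-hyperplane identity (which, incidentally, the monograph itself states and proves later in the LSH chapter) to obtain the probability as $\theta_y/\pi$, the interior angle at $y$ in the triangle $qxy$ divided by $\pi$; the rest is the law of sines and the observation, driven by the hypothesis $\lVert q-x\rVert_2\le\lVert q-y\rVert_2$, that $\theta_y\le\pi/2$ so that $\arcsin$ inverts correctly.

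What each buys: the paper's computation is fully self-contained and mechanical, requiring no outside lemma, at the price of some coordinate bookkeeping and a slightly opaque final substitution. Your approach makes the geometry explicit---the answer \emph{is} an angle of the triangle $qxy$---and it isolates precisely where the hypothesis is used (to legitimize $\theta_y=\arcsin(\sin\theta_y)$), which the paper's proof does not highlight. Your route also generalizes more transparently to statements about the other two angles, should one need them.
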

\begin{proof}
Assume, without loss of generality, that $U$ is sampled from a $d$-dimensional standard Normal distribution: $U \sim \mathcal{N}(\mathbf{0}, I_d)$.
That assumption is inconsequential because normalizing $U$ by its $L_2$ norm gives a vector that lies on $\mathbb{S}^{d-1}$ as required.
But because the norm of $U$ does not affect the argument we need not explicitly perform the normalization.

Suppose further that we translate all vectors by $q$,
and redefine $q \triangleq \mathbf{0}$, $x \triangleq x - q$, and $y \triangleq y - q$.
We then rotate the vectors so that $x = \lVert x \rVert_2 e_1$, where $e_1$ is the first standard basis vector.
Neither the translation nor the rotation affects pairwise distances, and as such, no generality is lost due to these transformations.

Given this arrangement of vectors, it will be convenient to write $U = (U_1, U_{\setminus 1})$ and $y = (y_1, y_{\setminus 1})$
so as to make explicit the first coordinate of each vector (denoted by subscript $1$)
and the remaining coordinates (denoted by subscript $\setminus 1$).

It is safe to assume that $y_{\setminus 1} \neq \mathbf{0}$. The reason is that, if that were not the case,
the two vectors $x$ and $y$ have an intrinsic dimensionality of $1$ and are thus on a line.
In that case, no matter which direction $U$ we choose, $\overset{\angle}{y}$ will not fall between
$\overset{\angle}{x}$ and $\overset{\angle}{q} = \mathbf{0}$.

We can now write the probability of the event of interest as follows:
\begin{align*}
    \probability&\Big[
        \underbrace{
            \big( \overset{\angle}{q} \leq \overset{\angle}{y} \leq \overset{\angle}{x} \big) \lor
            \big( \overset{\angle}{x} \leq \overset{\angle}{y} \leq \overset{\angle}{q} \big) }_{E}\Big] = \\
    &\probability\Big[
            \big( 0 \leq \langle U, y \rangle \leq \lVert x \rVert_2 U_1 \big) \lor
            \big( \lVert x \rVert_2 U_1 \leq \langle U, y \rangle \leq 0 \big) \Big].
\end{align*}
By expanding $\langle U, y \rangle = y_1 U_1 + \langle U_{\setminus 1}, y_{\setminus 1} \rangle$, it becomes clear that the expression
above measures the probability that $\langle U_{\setminus 1}, y_{\setminus 1} \rangle$ falls in the interval
$\big(-y_1 \lvert U_1 \rvert, (\lVert x \rVert_2 - y_1)\lvert U_1 \rvert \big)$ when $U_1 \geq 0$ or
$\big(-(\lVert x \rVert_2 - y_1) \lvert U_1 \rvert, y_1 \lvert U_1 \rvert\big)$ otherwise.
As such $\probability[E]$ can be rewritten as follows:
\begin{align*}
    \probability[E] &=
    \probability\Big[
            -y_1 \lvert U_1 \rvert \leq \langle U_{\setminus 1}, y_{\setminus 1} \rangle \leq (\lVert x \rVert_2 - y_1)\lvert U_1 \rvert \;\big|\; U_1 \geq 0 \Big] \probability\big[ U_1 \geq 0 \big] \\
    &+ \probability\Big[-(\lVert x \rVert_2 - y_1)\lvert U_1 \rvert \leq \langle U_{\setminus 1}, y_{\setminus 1} \rangle \leq y_1 \lvert U_1 \rvert \;\big|\; U_1 < 0 \Big] \probability\big[ U_1 < 0 \big].
\end{align*}
First, note that $U_1$ is independent of $U_{\setminus 1}$ given that they are sampled from $\mathcal{N}(\mathbf{0}, I_d)$.
Second, observe that $\langle U_{\setminus 1}, y_{\setminus 1} \rangle$ is distributed as $\mathcal{N}(\mathbf{0}, \lVert y_{\setminus 1} \rVert_2^2)$,
which is symmetric, so that the two intervals have the same probability mass.
These two observations simplify the expression above, so that $\probability[E]$ becomes:
\begin{align*}
    \probability[ E ] &= \probability\Big[ -y_1 \lvert U_1 \rvert \leq \langle U_{\setminus 1}, y_{\setminus 1} \rangle \leq (\lVert x \rVert_2 - y_1)\lvert U_1 \rvert \Big] \\
    &= \probability\Big[ -y_1 \lvert Z \rvert \leq \lVert y_{\setminus 1} \rVert_2 Z^\prime \leq (\lVert x \rVert_2 - y_1)\lvert Z \rvert \Big] \\
    &= \probability\Bigg[ \frac{\lvert Z' \rvert}{\lvert Z \rvert} \in \Big( -\frac{y_1}{\lVert y_{\setminus 1} \rVert_2}, \frac{\lVert x \rVert_2 - y_1}{\lVert y_{\setminus 1} \rVert_2} \Big) \Bigg],
\end{align*}
where $Z$ and $Z^\prime$ are independent random variables drawn from $\mathcal{N}(0, 1)$.

Using the fact that the ratio of two independent Gaussian random variables follows a standard Cauchy distribution,
we can calculate $\probability[ E ]$ as follows:
\begin{align*}
    \probability[ E ] &= \int_{-y_1/\lVert y_{\setminus 1} \rVert_2}^{\big(\lVert x \rVert_2 - y_1\big)/\lVert y_{\setminus 1} \rVert_2} \frac{d \omega}{\pi (1 + \omega^2)} \\
    &= \frac{1}{\pi} \Bigg[ \arctan\Big( \frac{\lVert x \rVert_2 - y_1}{\lVert y_{\setminus 1} \rVert_2} \Big) - \arctan \Big( -\frac{y_1}{\lVert y_{\setminus 1} \rVert_2} \Big) \Bigg] \\
    &= \frac{1}{\pi} \arctan\Bigg( \frac{\lVert x \rVert_2 \lVert y_{\setminus 1} \rVert_2}{\lVert y \rVert_2^2 - y_1 \lVert x \rVert_2} \Bigg) \\
    &= \frac{1}{\pi} \arcsin\Bigg( \frac{\lVert x \rVert_2}{\lVert y \rVert_2} \sqrt{\frac{\lVert y \rVert_2^2 - y_1^2}{\lVert y \rVert_2^2 - 2y_1 \lVert x \rVert_2 + \lVert x \rVert_2^2}} \Bigg).
\end{align*}
In the third equality, we used the fact that $\arctan a + \arctan b = \arctan (a + b)/(1 - ab)$, and in the fourth equality
we used the identity $\arctan a = \arcsin a / \sqrt{1 + a^2}$. Substituting $y_1 = \langle y, x \rangle / \lVert x \rVert_2$
and noting that $x$ and $y$ have been shifted by $q$ completes the proof.
\end{proof}

\begin{corollary}
\label{lemma:branch-and-bound:randomized:two-points-corollary}
    In the same configuration as in Lemma~\ref{lemma:branch-and-bound:randomized:two-points}:
    \begin{align*}
        \frac{2}{\pi} \Phi(q, \{ x, y\})
            & \sqrt{1 - \Big(
                \frac{\langle q - x, y - x \rangle}{\lVert q - x \rVert_2 \lVert y - x \rVert_2}
            \Big)^2} \leq& \\
        &\probability\Big[
            \big( \overset{\angle}{q} \leq \overset{\angle}{y} \leq \overset{\angle}{x} \big) \lor
            \big( \overset{\angle}{x} \leq \overset{\angle}{y} \leq \overset{\angle}{q} \big) \Big] \leq 
            \Phi(q, \{ x, y\})
    \end{align*}
\end{corollary}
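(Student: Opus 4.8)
The plan is to read off the exact value of the probability from Lemma~\ref{lemma:branch-and-bound:randomized:two-points} and then sandwich the $\arcsin$ that appears there between two elementary linear bounds. Write, for brevity, $c = \langle q - x, y - x \rangle / (\lVert q - x \rVert_2 \lVert y - x \rVert_2)$, the cosine of the angle at $x$ in the triangle with vertices $q$, $x$, $y$, so that $\lvert c \rvert \le 1$ by Cauchy--Schwarz; set $A = 2\,\Phi(q,\{x,y\})\sqrt{1 - c^2}$. By Lemma~\ref{lemma:branch-and-bound:randomized:two-points}, the probability in question equals $\tfrac{1}{\pi}\arcsin A$; call it $p$. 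Since $p$ is a genuine probability, $\arcsin A$ is a well-defined real number, and because $\Phi(q,\{x,y\}) \ge 0$ and the square root is nonnegative we in fact have $A \in [0,1]$; this is the interval on which the convexity of $\arcsin$ will be used.

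For the lower bound I would use that $\arcsin$ is convex and increasing on $[0,1]$ with $\arcsin 0 = 0$ and $\arcsin'(0) = 1$, hence it lies above its tangent at the origin: $\arcsin t \ge t$ for $t \in [0,1]$. Applying this at $t = A$ gives $p = \tfrac{1}{\pi}\arcsin A \ge \tfrac{A}{\pi} = \tfrac{2}{\pi}\,\Phi(q,\{x,y\})\sqrt{1-c^2}$, which is exactly the left-hand inequality of the corollary.

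For the upper bound I would use the complementary fact that a convex function on $[0,1]$ lies below the chord joining its endpoints; since $\arcsin 0 = 0$ and $\arcsin 1 = \pi/2$, this gives $\arcsin t \le \tfrac{\pi}{2} t$ on $[0,1]$. Therefore $p = \tfrac{1}{\pi}\arcsin A \le \tfrac{1}{\pi}\cdot\tfrac{\pi}{2}A = \tfrac{A}{2} = \Phi(q,\{x,y\})\sqrt{1-c^2} \le \Phi(q,\{x,y\})$, the last step simply dropping the factor $\sqrt{1-c^2} \in [0,1]$. Chaining the two displays yields the stated chain of inequalities.

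There is essentially no obstacle here: the whole content is the observation that on $[0,1]$ the function $\arcsin$ is squeezed between its origin tangent $t \mapsto t$ and its endpoint chord $t \mapsto (\pi/2)t$. The only point that warrants a sentence is why $A \in [0,1]$ so that these bounds apply on the right interval --- which, as noted, is forced by $p \in [0,1]$ and $p = \tfrac{1}{\pi}\arcsin A$ --- and, if a self-contained argument is preferred, one can instead derive $A \le 1$ from the identity $\Phi(q,\{x,y\}) = \tfrac{1}{2}\big(1 + \lVert q - x \rVert_2 / \lVert q - y \rVert_2\big)$ together with the law of cosines in the triangle $q x y$.
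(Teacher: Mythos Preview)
Your argument is correct and is essentially identical to the paper's: the paper invokes the Jordan-type bound $2\theta/\pi \le \sin\theta \le \theta$ on $[0,\pi/2]$, which under the substitution $t=\sin\theta$ is exactly your sandwich $t \le \arcsin t \le (\pi/2)t$ on $[0,1]$. Your convexity justification and the remark on why $A\in[0,1]$ are helpful elaborations, but the underlying step is the same one-line inequality.
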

\begin{proof}
Applying the inequality $\theta \geq \sin \theta \geq 2 \theta/\pi$ for $0 \leq \theta \leq \pi/2$ to Lemma~\ref{lemma:branch-and-bound:randomized:two-points}
implies the claim.
\end{proof}

Now that we have examined the case of $\mathcal{X} = \{ x, y\}$, it is easy to extend
the result to a configuration of $m$ vectors.

\begin{theorem}
    \label{theorem:branch-and-bound:randomized:m-points}
    Suppose $q \in \mathbb{R}^d$, $\mathcal{X} \subset \mathbb{R}^d$ is a set of $m$ vectors,
    and $x^\ast \in \mathcal{X}$ the nearest neighbor of $q$.
    Let $U \in \mathbb{S}^{d-1}$ be a random direction, define $\overset{\angle}{v} = \langle U, v \rangle$,
    and let $\overset{\angle}{\mathcal{X}} = \{ \overset{\angle}{x} \;|\; x \in \mathcal{X} \}$.
    Then:
    \begin{equation*}
        \mathbb{E}_{U} \big[ \textit{fraction of $\overset{\angle}{\mathcal{X}}$ that is between
        $\overset{\angle}{q}$ and $\overset{\angle}{x^\ast}$} \big] \leq \frac{1}{2} \Phi(q, \mathcal{X}).
    \end{equation*}
\end{theorem}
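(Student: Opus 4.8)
The plan is to reduce this $m$-point statement to the two-point case already settled in Corollary~\ref{lemma:branch-and-bound:randomized:two-points-corollary}, and then to assemble the pieces via linearity of expectation. For a fixed direction $U$, let $E_x$ be the event that $\overset{\angle}{x}$ lies strictly between $\overset{\angle}{q}$ and $\overset{\angle}{x^\ast}$. Then the fraction of $\overset{\angle}{\mathcal{X}}$ that falls between $\overset{\angle}{q}$ and $\overset{\angle}{x^\ast}$ is exactly $\frac{1}{m}\sum_{x \in \mathcal{X}} \mathbbm{1}_{E_x}$, so that
\begin{equation*}
    \mathbb{E}_U\big[ \text{fraction of } \overset{\angle}{\mathcal{X}} \text{ between } \overset{\angle}{q} \text{ and } \overset{\angle}{x^\ast} \big] = \frac{1}{m}\sum_{x \in \mathcal{X}} \probability\big[ E_x \big],
\end{equation*}
with the probability taken over $U$. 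This turns the problem into bounding a single $\probability[E_x]$ for each data point.

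First I would dispose of the term $x = x^\ast$: since $\overset{\angle}{x^\ast}$ is itself one of the two endpoints of the interval in question, it cannot lie strictly inside it, so $\probability[E_{x^\ast}] = 0$. For every remaining $x \in \mathcal{X} \setminus \{ x^\ast \}$, I apply Corollary~\ref{lemma:branch-and-bound:randomized:two-points-corollary} to the two-element collection $\{ x^\ast, x \}$. This is legitimate precisely because $x^\ast$ is the nearest neighbor of $q$, hence $\lVert q - x^\ast \rVert_2 \leq \lVert q - x \rVert_2$, which is the hypothesis the corollary needs with $x^\ast$ playing the role of the closer point. The corollary's upper bound then yields $\probability[E_x] \leq \Phi(q, \{ x^\ast, x \})$, and reading this potential off the two-point formula of Lemma~\ref{lemma:branch-and-bound:randomized:two-points} (the quantity appearing there as $2\Phi$ times a factor at most $1$), it equals $\tfrac{1}{2}\, \lVert q - x^\ast \rVert_2 / \lVert q - x \rVert_2$.

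Summing these bounds and using the definition in Equation~(\ref{equation:branch-and-bound:randomized:potential}) — where $x^{(\pi_1)} = x^\ast$, so that the summands other than the trivial $i = 1$ term are exactly the ratios $\lVert q - x^\ast \rVert_2 / \lVert q - x \rVert_2$ over $x \neq x^\ast$ — I obtain
\begin{equation*}
    \frac{1}{m}\sum_{x \in \mathcal{X}} \probability\big[ E_x \big] \leq \frac{1}{2}\cdot\frac{1}{m}\sum_{x \in \mathcal{X}\setminus\{x^\ast\}} \frac{\lVert q - x^\ast \rVert_2}{\lVert q - x \rVert_2} \leq \frac{1}{2}\,\Phi(q, \mathcal{X}),
\end{equation*}
which is the desired inequality. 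There is no genuine obstacle here once the two-point corollary is in hand; the work is essentially bookkeeping, and the only steps demanding care are getting the distance ordering right so that $x^\ast$ may serve as the closer element when invoking the corollary, discarding the $x = x^\ast$ contribution (this is the one place the convention that ``between'' excludes the endpoints matters — with a closed convention one would pick up a harmless additive $1/(2m)$), and matching the resulting sum of pairwise distance ratios against the definition of $\Phi(q, \mathcal{X})$.
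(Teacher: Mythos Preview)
Your proposal is correct and follows essentially the same route as the paper's proof: express the expected fraction via linearity as $\tfrac{1}{m}\sum_{i\geq 2}\probability[Z_i]$, bound each $\probability[Z_i]$ by $\tfrac{1}{2}\,\lVert q - x^{(\pi_1)}\rVert_2/\lVert q - x^{(\pi_i)}\rVert_2$ using Corollary~\ref{lemma:branch-and-bound:randomized:two-points-corollary}, and sum to recover $\tfrac{1}{2}\Phi(q,\mathcal{X})$. Your care in reading the two-point $\Phi$ off the lemma's formula (so that it equals $\tfrac{1}{2}\lVert q - x^\ast\rVert/\lVert q - x\rVert$ rather than the literal Equation~(\ref{equation:branch-and-bound:randomized:potential}) value) matches exactly how the paper uses the corollary here.
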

\begin{proof}
    Let $\pi_1$ through $\pi_m$ be indices that order the elements of $\mathcal{X}$ by increasing distance to $q$,
    so that $x^\ast = x^{(\pi_1)}$. Denote by $Z_i$ the event that $\langle U, x^{(\pi_i)} \rangle$ falls between $\overset{\angle}{x^\ast}$
    and $\overset{\angle}{q}$. By Corollary~\ref{lemma:branch-and-bound:randomized:two-points-corollary}:
    \begin{equation*}
        \probability[Z_i] \leq \frac{1}{2} \frac{\lVert q - x^{(\pi_1)} \rVert_2}{\lVert q - x^{(\pi_i)} \rVert_2}.
    \end{equation*}
    We can now write the expectation of interest as follows:
    \begin{equation*}
        \sum_{i=2}^{m} \frac{1}{m} \probability[ Z_i ] \leq \frac{1}{2} \Phi(q, \mathcal{X}).
    \end{equation*}
\end{proof}.

\begin{corollary}
    \label{corollary:branch-and-bound:randomized:m-points}
    Under the assumptions of Theorem~\ref{theorem:branch-and-bound:randomized:m-points}, for any $\alpha \in (0, 1)$
    and any $s$-subset $S$ of $\mathcal{X}$ that contains $x^\ast$:
    \begin{equation*}
        \probability\big[ \textit{at least $\alpha$ fraction of $\overset{\angle}{S}$ is between
        $\overset{\angle}{q}$ and $\overset{\angle}{x^\ast}$} \big] \leq \frac{1}{2\alpha} \Phi_s(q, \mathcal{X}),
    \end{equation*}
    where:
    \begin{equation*}
        \Phi_s(q, \mathcal{X}) = \frac{1}{s} \sum_{i=1}^{s}
    \frac{\lVert q - x^{(\pi_1)} \rVert_2}{\lVert q - x^{(\pi_i)} \rVert_2},
    \end{equation*}
    and $\pi_1$ through $\pi_s$ are indices of the $s$ vectors in $\mathcal{X}$ that are closest to $q$, ordered
    by increasing distance.
\end{corollary}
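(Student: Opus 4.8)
The plan is to obtain the statement directly from the expectation bound of Theorem~\ref{theorem:branch-and-bound:randomized:m-points}, this time applied to the subset $S$ rather than to all of $\mathcal{X}$, together with Markov's inequality and one elementary monotonicity observation about the potential function $\Phi$.

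First I would observe that, because $x^\ast$ is the nearest neighbor of $q$ in all of $\mathcal{X}$ and $S \subseteq \mathcal{X}$, the point $x^\ast$ is still the nearest neighbor of $q$ within $S$. Hence Theorem~\ref{theorem:branch-and-bound:randomized:m-points} applies verbatim with $S$ (of size $s$) playing the role of $\mathcal{X}$, giving for a random direction $U \in \mathbb{S}^{d-1}$
\[
    \mathbb{E}_U\big[ Y \big] \;\leq\; \tfrac{1}{2}\, \Phi(q, S),
    \qquad
    Y \triangleq \text{fraction of } \overset{\angle}{S} \text{ between } \overset{\angle}{q} \text{ and } \overset{\angle}{x^\ast}.
\]
Since $Y \geq 0$, Markov's inequality yields, for every $\alpha \in (0, 1)$,
\[
    \probability\big[ Y \geq \alpha \big] \;\leq\; \frac{\mathbb{E}_U[Y]}{\alpha} \;\leq\; \frac{1}{2\alpha}\,\Phi(q, S),
\]
which is already the desired inequality except with $\Phi(q, S)$ in place of $\Phi_s(q, \mathcal{X})$.

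It therefore remains to show $\Phi(q, S) \leq \Phi_s(q, \mathcal{X})$ for every $s$-subset $S \ni x^\ast$. Using that $x^\ast$ is the nearest neighbor of $q$ in $S$, we have $\Phi(q, S) = \frac{1}{s}\sum_{x \in S} \lVert q - x^\ast \rVert_2 / \lVert q - x \rVert_2$; the $x = x^\ast$ summand equals $1$, and each of the remaining $s - 1$ summands has the form $\lVert q - x^\ast \rVert_2 / \lVert q - x \rVert_2 \leq 1$. A sum of $s - 1$ such ratios taken over points $x \in \mathcal{X} \setminus \{x^\ast\}$ is largest when the chosen points are the $s - 1$ points of $\mathcal{X}$ closest to $q$ after $x^\ast$, namely $x^{(\pi_2)}, \ldots, x^{(\pi_s)}$; substituting this extremal choice turns the bound into $\frac{1}{s}\sum_{i=1}^{s} \lVert q - x^{(\pi_1)} \rVert_2 / \lVert q - x^{(\pi_i)} \rVert_2 = \Phi_s(q, \mathcal{X})$. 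Combining this with the previous display completes the proof.

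I do not anticipate any serious difficulty: the only two steps that deserve a sentence of justification are that Theorem~\ref{theorem:branch-and-bound:randomized:m-points} is legitimately invoked on $S$ (which relies solely on $x^\ast$ remaining the nearest neighbor inside $S$), and the elementary rearrangement fact that, among all $s$-subsets of $\mathcal{X}$ containing $x^\ast$, the potential $\Phi$ is maximized by the set consisting of the $s$ nearest neighbors of $q$.
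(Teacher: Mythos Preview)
Your proposal is correct and follows essentially the same route as the paper: apply Theorem~\ref{theorem:branch-and-bound:randomized:m-points} to $S$ (legitimate since $x^\ast \in S$ remains the nearest neighbor there), bound $\Phi(q,S) \leq \Phi_s(q,\mathcal{X})$, and invoke Markov's inequality. Your version is in fact more explicit than the paper's, which states the inequality $\Phi(q,S) \leq \Phi_s(q,\mathcal{X})$ without spelling out the monotonicity argument you provide.
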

\begin{proof}
    Apply Theorem~\ref{theorem:branch-and-bound:randomized:m-points} to the set $S$ to obtain:
    \begin{equation*}
        \mathbb{E}\big[ \textit{fraction of $\overset{\angle}{S}$ that is between
        $\overset{\angle}{q}$ and $\overset{\angle}{x^\ast}$} \big] \leq \frac{1}{2} \Phi(q, S) \leq \frac{1}{2} \Phi_s(q, \mathcal{X}).
    \end{equation*}
    Using Markov's inequality (i.e., $\probability[Z > \alpha] \leq \mathbb{E}[Z]/\alpha$) completes the proof.
\end{proof}

The above is where the potential function of Equation~(\ref{equation:branch-and-bound:randomized:potential}) first emerges
in its complete form for an arbitrary collection of vectors and its subsets.
As we see, $\Phi$ bounds the expected number of vectors whose projection onto
a random direction $U$ falls between a query point and its nearest neighbor.

\begin{svgraybox}
The reason this expected value is important (which subsequently justifies the importance of
$\Phi$) has to do with the fact that decision boundaries are planted at some $\beta$-fractile point of
the projections. As such, a bound on the number of points that fall between
$q$ and its nearest neighbor serves as a tool to bound the odds that the decision boundary may separate $q$ from
its nearest neighbor, which is the failure mode we wish to quantify.
\end{svgraybox}

\subsubsection{Probability of Failure}
We are now ready to use Theorem~\ref{theorem:branch-and-bound:randomized:m-points} and
Corollary~\ref{corollary:branch-and-bound:randomized:m-points} to derive the failure probability
of the defeatist search over an RP Tree.
To that end, notice that, the path from the root to a leaf is a sequence of $\log_{1/\beta} (m/m_\circ) $
independent decisions that involve randomly projected data points. So if we were able to bound the failure probability
of a single node, we can apply the union bound and obtain a bound on the failure probability of the tree.
That is the intuition that leads to the following result.

\begin{theorem}
    The probability that an RP Tree built for collection $\mathcal{X}$ of $m$ vectors fails to find the nearest neighbor
    of a query $q$ is at most:
    \begin{equation*}
        \sum_{l = 0}^{\ell} \Phi_{\beta^l m} \ln \frac{2e}{\Phi_{\beta^l m}},
    \end{equation*}
    with $\beta = 3/4$ and $\ell = \log_{1/\beta} \big( m / m_\circ \big)$, and where we use the shorthand
    $\Phi_s$ for $\Phi_s(q, \mathcal{X})$.
\end{theorem}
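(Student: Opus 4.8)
The plan is to pin down exactly when defeatist search fails, bound the failure probability contributed by a single node on $q$'s root-to-leaf path using the fractile randomness together with Corollary~\ref{corollary:branch-and-bound:randomized:m-points}, and then take a union bound over the $\ell + 1$ levels of that path.

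\emph{Failure mode and path length.} Defeatist search returns an incorrect answer precisely when the true nearest neighbor $x^\ast$ is not contained in the leaf cell that $q$ is routed to, which happens if and only if at some internal node on $q$'s root-to-leaf path the decision boundary of that node places $q$ and $x^\ast$ on opposite sides. Because the fractile is drawn from $[1/4, 3/4]$, each split keeps at most a $\beta = 3/4$ fraction of the points on $q$'s side, so the node of the path at level $l$ contains at most $\beta^l m$ points and the path has length at most $\ell = \log_{1/\beta}(m/m_\circ)$. It therefore suffices to bound, for each $l \in \{0, \dots, \ell\}$, the probability $p_l$ that the level-$l$ node's hyperplane separates $q$ from $x^\ast$ (conditioned on having reached that node with $x^\ast$ still present), and then to sum the $p_l$.

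\emph{Bounding a single level.} Fix the level-$l$ node, with point set $S$ of size $s \le \beta^l m$ containing $x^\ast$; its boundary is a random direction $U \in \mathbb{S}^{d-1}$ together with the $\beta'$-fractile $\theta$ of $\{ \langle U, v \rangle : v \in S \}$, where $\beta'$ is uniform on $[1/4, 3/4]$. Condition on $U$ and let $K$ be the number of points of $S$ whose projection lands strictly between $\overset{\angle}{q}$ and $\overset{\angle}{x^\ast}$. A separation of $q$ from $x^\ast$ forces $\theta$ to be one of the at most $K+1$ projection values in that interval; since $\theta$ is a uniform fractile over an interval of length $\tfrac12$ and each projection value occupies a fractile window of length $1/s$, the conditional separation probability is at most $\min\!\big(1, \tfrac{2(K+1)}{s}\big)$. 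Writing $F = K/s$ for the fraction of $\overset{\angle}{S}$ lying between $\overset{\angle}{q}$ and $\overset{\angle}{x^\ast}$ and using the layer-cake identity $\mathbb{E}[\min(1, Y)] = \int_0^1 \probability[Y \ge t]\,dt$, we get $p_l \le \int_0^1 \probability_U[F \ge t/2]\,dt$. Corollary~\ref{corollary:branch-and-bound:randomized:m-points}, applied with $S$ as the $s$-subset, gives $\probability_U[F \ge \alpha] \le \Phi_s(q, \mathcal{X})/(2\alpha)$ for all $\alpha \in (0,1)$, so the integrand is at most $\min(1, \Phi_s/t)$; splitting the integral at $t = \Phi_s$ yields $p_l \le \Phi_s\big(1 + \ln(1/\Phi_s)\big) = \Phi_s \ln(e/\Phi_s) \le \Phi_s \ln(2e/\Phi_s)$, the extra $\ln 2$ absorbing the $+1$ in $K+1$ and the slack between $s$ and $\beta^l m$.

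\emph{Assembly and the hard part.} Union-bounding over levels gives a failure probability at most $\sum_{l=0}^{\ell} p_l$, and bounding the level-$l$ term by $\Phi_{\beta^l m}\ln(2e/\Phi_{\beta^l m})$ (using $s \le \beta^l m$) produces the stated sum. The main obstacle is the integration step: a crude estimate of $\mathbb{E}_U[\min(1, cF)]$ — e.g. splitting on whether $F \ge \tfrac12$ and applying the Corollary once — only gives a per-node bound of order $\Phi_s$, which is far too weak precisely in the regime of small $\Phi_s$ (easy queries) where the theorem is meant to be informative; recovering the logarithmic factor requires feeding the Corollary's polynomial tail bound, with the parameter $\alpha$ swept over all of $(0,1)$, into the layer-cake decomposition while keeping the $\min(1, \cdot)$ truncation so the integral converges. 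A secondary point needing care is the bookkeeping between the (variable) size of $q$'s node at level $l$ and the index at which $\Phi$ is evaluated, since the bound must hold uniformly over how unevenly the fractile splits have fallen along $q$'s path.
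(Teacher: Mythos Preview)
Your proposal is correct and follows the paper's argument essentially step for step: bound the per-node separation probability (conditional on $U$) by $\min(1,2F)$, rewrite its expectation via the layer-cake identity as an integral of the tail $\probability_U[F\ge\alpha]$, feed in Corollary~\ref{corollary:branch-and-bound:randomized:m-points} to get the integrand $\min(1,\Phi_s/(2\alpha))$, integrate to obtain $\Phi_s\ln(2e/\Phi_s)$, and union-bound over the $\ell+1$ levels of $q$'s path. One small aside: your remark that a crude per-node estimate ``of order $\Phi_s$'' would be ``far too weak'' has the direction reversed---$\Phi_s$ is \emph{tighter} than $\Phi_s\ln(2e/\Phi_s)$ (and is in fact available directly, since $\mathbb{E}_U[2F]\le\Phi_s$ by Theorem~\ref{theorem:branch-and-bound:randomized:m-points}), so the integration route is about matching the theorem as stated rather than salvaging a weak bound.
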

\begin{proof}
    Consider an internal node of the RP Tree that contains $q$ and $s$ data points including $x^\ast$, the nearest neighbor of $q$.
    If the decision boundary at this node separates $q$ from $x^\ast$, then the defeatist search will fail.
    We therefore seek to quantify the probability of that event.

    Denote by $F$ the fraction of the $s$ vectors that, once projected onto the random direction $U$ associated with the node,
    fall between $q$ and $x^\ast$. Recall that, the split threshold associated with the node is drawn uniformly from an interval
    of mass $1/2$. As such, the probability that $q$ is separated from $x^\ast$ is at most
    $F/(1/2)$. By integrating over $F$, we obtain:
    \begin{align*}
        \probability\big[ \textit{$q$ is separated from $x^\ast$} \big] &\leq \int_0^1 \probability\big[ F = f\big] \frac{f}{1/2} df \\
        &= 2 \int_0^1 \probability\big[ F > f \big] df \\
        &\leq 2 \int_0^1 \min \Big( 1, \frac{\Phi_s}{2f} \Big) df \\
        &= 2 \int_0^{\Phi_s / 2} df + 2 \int_{\Phi_s / 2}^ 1 \frac{\Phi_s}{2f} df \\
        &= \Phi_s \ln \frac{2e}{\Phi_s}.
    \end{align*}
    The first equality uses the definition of expectation for a positive random variable,
    while the second inequality uses Corollary~\ref{corollary:branch-and-bound:randomized:m-points}.
    Applying the union bound to a path from root to leaf,
    and noting that the size of the collection that falls into each node
    drops geometrically per level by a factor of at least $3/4$ completes the proof.
\end{proof}

We are thus able to express the failure probability as a function of $\Phi$,
a quantity that is defined for a particular $q$ and a concrete collection of vectors.
If we have a model of the data distribution, it may be possible to
state more general bounds by bounding $\Phi$ itself.~\cite{dasgupta2015rptrees}
demonstrate examples of this for two practical data distributions.
Let us review one such example here.

\subsubsection{Data Drawn from a Doubling Measure}
\label{section:branch-and-bound:doubling-measure}
Throughout our analysis of $k$-d Trees in Section~\ref{section:branch-and-bound:kd-tree},
we considered the case where data points are uniformly distributed in $\mathbb{R}^d$.
As we argued in Chapter~\ref{chapter:intrinsic-dimensionality}, in many practical situations,
however, even though vectors are represented in $\mathbb{R}^d$,
they actually lie in some low-dimensional manifold with \emph{intrinsic dimension}
$d_\circ$ where $d_\circ \ll d$.
This happens, for example, when data points are drawn from a \emph{doubling measure}
with low dimension as defined in Definition~\ref{definition:doubling-measure}.

\begin{svgraybox}
\cite{dasgupta2015rptrees} prove that, if a collection of $m$ vectors is sampled
from a doubling measure with dimension $d_\circ$,
then $\Phi$ can be bounded from above roughly by $(1/m)^{1/d_\circ}$. The following
theorem presents their claim.
\end{svgraybox}

\begin{theorem}
    Suppose a collection $\mathcal{X}$ of $m$ vectors is drawn from $\mu$, a continuous, doubling measure
    on $\mathbb{R}^d$ with dimension $d_\circ \geq 2$. For an arbitrary $\delta \in (0, 1/2)$,
    with probability at least $1 - 3\delta$, for all $2 \leq s \leq m$:
    \begin{equation*}
        \Phi_s(q, \mathcal{X}) \leq 6 \Bigg( \frac{2}{s} \ln \frac{1}{\delta} \Bigg)^{1/d_\circ}.
    \end{equation*}
\end{theorem}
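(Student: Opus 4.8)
The plan is to convert the doubling hypothesis on $\mu$ into quantitative control over the distance order statistics $r_i \triangleq \lVert q - x^{(\pi_i)} \rVert_2$ (so $r_1 \le r_2 \le \cdots$), and then to sum the series defining $\Phi_s(q,\mathcal{X})$. Two easy reductions come first. Since every summand $r_1/r_i$ is at most $1$, we always have $\Phi_s(q,\mathcal{X}) \le 1$, so the bound is vacuous—hence automatically true—unless $s$ exceeds a modest multiple of $\ln(1/\delta)$; I may therefore assume $s$ is at least of that order. Continuity of $\mu$ is used twice: the sample points are almost surely distinct (so the ordering $\pi_1,\dots,\pi_m$, and thus $\Phi_s$, is well defined), and for every $\tau\in(0,1)$ there is a radius $r$ with $\mu(B(q,r))=\tau$, which lets me calibrate radii to prescribed measures.

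Next I would pin down the scale of the nearest neighbour. Let $r_\circ$ be the (deterministic) radius with $\mu(B(q,r_\circ)) = \ln(1/\delta)/m$. The event $r_1 > r_\circ$ is exactly the event that $B(q,r_\circ)$ contains no sample point, which has probability $(1-\ln(1/\delta)/m)^m \le e^{-\ln(1/\delta)} = \delta$. So with probability at least $1-\delta$ we have $r_1 \le r_\circ$, and moreover the doubling property upgrades this to a tail bound at every larger scale: for $t \ge r_\circ$,
\begin{equation*}
\ev\big[\,\#\{\, x\in\mathcal{X} : x \in B(q,t)\,\}\,\big] \;=\; m\,\mu(B(q,t)) \;\le\; m\Big(\tfrac{2t}{r_\circ}\Big)^{d_\circ}\mu(B(q,r_\circ)) \;=\; \Big(\tfrac{2t}{r_\circ}\Big)^{d_\circ}\ln\tfrac{1}{\delta}.
\end{equation*}

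Now for each index $i$ I would produce a high-probability lower bound on $r_i$. Choosing a radius $\rho_i \propto r_\circ\,\big(i / \ln(1/\delta)\big)^{1/d_\circ}$ makes the displayed expected count at most a small multiple of $i$—small enough that a multiplicative Chernoff bound on the binomial count gives $\probability[\,r_i < \rho_i\,] = \probability[\,B(q,\rho_i)\text{ holds}\ge i\text{ points}\,] \le \delta\,2^{-i}$; a union bound over $i$ then keeps the total extra failure probability at most $\delta$, so with probability at least $1-3\delta$ (the slack absorbs the looseness in the constants) we simultaneously have $r_1 \le r_\circ$ and $r_i \ge \rho_i \gtrsim r_\circ (i/\ln(1/\delta))^{1/d_\circ}$ for all relevant $i$. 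On that event, $r_1/r_i \le r_\circ/\rho_i \lesssim (\ln(1/\delta)/i)^{1/d_\circ}$; bounding $r_1/r_i \le 1$ for the $O(\ln(1/\delta))$ small indices where this estimate is weak, we get
\begin{equation*}
\Phi_s(q,\mathcal{X}) \;=\; \frac{1}{s}\sum_{i=1}^{s}\frac{r_1}{r_i} \;\lesssim\; \frac{\ln(1/\delta)}{s} \;+\; \frac{(\ln(1/\delta))^{1/d_\circ}}{s}\sum_{i=1}^{s} i^{-1/d_\circ},
\end{equation*}
and here the hypothesis $d_\circ \ge 2$ is precisely what makes $\sum_{i\le s} i^{-1/d_\circ} \le \frac{d_\circ}{d_\circ-1}\,s^{1-1/d_\circ} \le 2\,s^{1-1/d_\circ}$, so that the right-hand side is $O\big((\ln(1/\delta)/s)^{1/d_\circ}\big)$; tightening the Chernoff threshold, the definition of $r_\circ$, and the harmonic-sum bound delivers the stated constant $6\,(2/s)^{1/d_\circ}$.

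I expect the delicate step to be the middle one: arranging the radii $\rho_i$ so that the per-index failure probabilities are summable to $\delta$ rather than to a constant (which is why the bound on the expected count must be pushed a logarithmic factor below $i$), while making sure this only degrades the estimate on $r_i/r_1$ for the first $O(\ln(1/\delta))$ indices—whose combined contribution to $\Phi_s$ is a harmless $O(\ln(1/\delta)/s)$—and leaves the $i^{1/d_\circ}$ growth of $r_i/r_1$ intact for the rest. Continuity of $\mu$ (for calibrating radii and for distinctness of the points) and the assumption $d_\circ \ge 2$ (for the convergence of $\sum i^{-1/d_\circ}$ to an $O(s^{1-1/d_\circ})$ tail) are the two structural hypotheses the argument genuinely relies on.
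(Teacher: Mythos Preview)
The paper does not actually prove this theorem—it merely states the result and attributes it to Dasgupta and Sinha. Your outline is the natural argument and is essentially correct: calibrate $r_\circ$ via continuity so that $\mu(B(q,r_\circ)) = \ln(1/\delta)/m$, iterate the doubling inequality to upper-bound $\mu(B(q,\rho))$ for $\rho \ge r_\circ$, apply Chernoff per index to get $r_i \gtrsim r_\circ\,(i/\ln(1/\delta))^{1/d_\circ}$ with high probability, and then sum using $\sum_{i\le s} i^{-1/d_\circ} \le 2\,s^{1-1/d_\circ}$ for $d_\circ \ge 2$. One small imprecision worth flagging: with a \emph{fixed} proportionality constant in $\rho_i$ you only obtain $\probability[r_i < \rho_i] \le e^{-ci}$ for some absolute $c>0$, not $\delta\,2^{-i}$ directly; but summing this geometric tail over $i \ge i_0 = O(\ln(1/\delta))$ does give at most $\delta$, and your trivial bound $r_1/r_i \le 1$ for the first $O(\ln(1/\delta))$ indices handles the remainder—so the argument goes through exactly as you describe. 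The first term $\ln(1/\delta)/s$ is indeed dominated by $(\ln(1/\delta)/s)^{1/d_\circ}$ once the latter is below $1$ (otherwise the claimed bound is vacuous anyway), and this is where the trivial-range observation at the start of your proposal is used.
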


Using the result above,~\cite{dasgupta2015rptrees} go on to prove that,
under the same conditions, with probability at least $1 - 3\delta$,
the failure probability of an RP Tree is bounded above by:
\begin{equation*}
    c_\circ (d_\circ + \ln m_\circ) \Bigg( \frac{8 \max (1, \ln 1/\delta)}{m_\circ} \Bigg)^{1/d_\circ},
\end{equation*}
where $c_\circ$ is an absolute constant, and $m_\circ \geq c_\circ 3^{d_\circ} \max (1, \ln 1/\delta)$.

\begin{svgraybox}
The results above tell us that, so long as the space has a small intrinsic
dimension, we can make the probability of failing to find the optimal solution
arbitrarily small.
\end{svgraybox}

\subsection{Spill Trees}
The Spill Tree~\citep{liu2004SpillTree} is another randomized variant of the $k$-d Tree.
The algorithm to construct a Spill Tree comes with a hyperparameter $\alpha \in [0, 1/2]$
that is typically a small constant closer to $0$.
Given an $\alpha$, the Spill Tree modifies the tree construction algorithm of the $k$-d Tree as follows.
When splitting a node whose region is $\mathcal{R}$, we first project all vectors contained
in $\mathcal{R}$ onto a random direction $U$, then find the median of the resulting distribution.
However, instead of \emph{partitioning} the vectors based on which side of the median they are on,
the algorithm forms two overlapping sets. The ``left'' set contains all vectors in $\mathcal{R}$
whose projection onto $U$ is smaller than the $(1/2 + \alpha)$-fractile point of the distribution,
while the ``right'' set consists of those that fall to the right of the $(1/2 - \alpha)$-fractile point.
As before, a node becomes a leaf when it has a maximum of $m_\circ$ vectors.

\begin{figure}[t]
    \centering
    \includegraphics[width=0.6\linewidth]{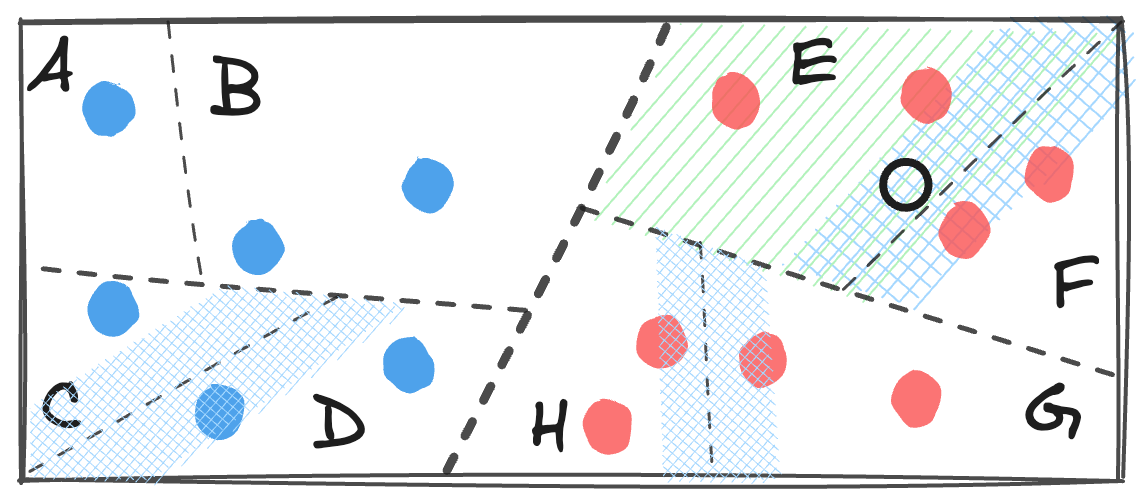}
    \caption{Defeatist search over a Spill Tree. In a Spill Tree, vectors that are
    close to the decision boundary are, in effect, duplicated, with their copy ``spilling'' over
    to the other side of the boundary. This is depicted for a few example regions
    as the blue shaded area that straddles the decision boundary: vectors that fall into the shaded
    area belong to neighboring regions. For example, regions $G$ and $H$ share two vectors.
    As such, a defeatist search for the example query (the empty circle) looks through not just the region
    $E$ but its extended region that overlaps with $F$.}
    \label{figure:branch-and-bound:spill-tree}
\end{figure}

During search, the algorithm performs a defeatist search by routing the query point $q$
based on a comparison of its projection onto the random direction associated with each node,
and the \emph{median} point. It is clear that with this strategy, if the nearest neighbor
of $q$ is close to the decision boundary of a node, we do not increase the likelihood of
failure whether we route $q$ to the left child or to the right one. Figure~\ref{figure:branch-and-bound:spill-tree}
shows an example of the defeatist search over a Spill Tree.

\subsubsection{Space Overhead}
One obvious downside of the Spill Tree is that a single data point
may end up in multiple leaf nodes, which increases the space complexity.
We can quantify that by noting that the depth of the tree on a collection of $m$ vectors
is at most $\log_{1/(1/2 + \alpha)} (m/m_\circ)$, so that the total number of vectors in all leaves
is:
\begin{equation*}
    m_\circ 2^{\log_{1/(1/2 + \alpha)} (m/m_\circ)} = m_\circ \big( \frac{m}{m_\circ} \big)^{\log_{1/(1/2 + \alpha)} 2} =
    m_\circ \big( \frac{m}{m_\circ} \big)^{1/(1 - \log(1 + 2\alpha))}.
\end{equation*}
As such, the space complexity of a Spill Tree is $\mathcal{O}(m^{1/(1 - \log(1 + 2 \alpha))})$.

\subsubsection{Probability of Failure}
The defeatist search over a Spill Tree fails to return the nearest neighbor $x^\ast$
if the following event takes place at any of the nodes that contains $q$ and $x^\ast$.
That is the event where the projections of $q$ and $x^\ast$ are separated by the median \emph{and}
where the projection of $x^\ast$ is separated from the median by at least $\alpha$-fraction of the vectors.
That event happens when the projections of $q$ and $x^\ast$ are separated by
at least $\alpha$-fraction of the vectors in some node along the path.

The probability of the event above can be bounded by Corollary~\ref{corollary:branch-and-bound:randomized:m-points}.
By applying the union bound to a root-to-leaf path, and noting that the size of the collection
reduces at each level by a factor of at least $1/2 + \alpha$, we obtain the following result:

\begin{theorem}
    The probability that a Spill Tree built for collection $\mathcal{X}$ of $m$ vectors fails to find the nearest neighbor
    of a query $q$ is at most:
    \begin{equation*}
        \sum_{l = 0}^{\ell} \frac{1}{2\alpha} \Phi_{\beta^l m}(q, \mathcal{X}),
    \end{equation*}
    with $\beta = 1/2 + \alpha$ and $\ell = \log_{1/\beta} \big( m / m_\circ \big)$.
\end{theorem}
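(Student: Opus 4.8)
The plan is to localise the error: first characterise exactly when the defeatist search on a Spill Tree returns a wrong answer, then bound the probability of an error at a single node of $q$'s root-to-leaf path using Corollary~\ref{corollary:branch-and-bound:randomized:m-points}, and finally take a union bound over that path. Since only one path is traversed, the union bound has at most $\ell + 1$ terms, matching the range of the sum in the statement.

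For the failure event, let $x^\ast$ be the nearest neighbour of $q$ and fix an internal node $N$ on $q$'s path, with its independently drawn random direction $U$ and its point set $S_N$ (which contains $x^\ast$ unless the search has already failed higher up). Because $q$ is routed by the \emph{median} of the projections onto $U$ while each child retains every point whose projection lies within an $\alpha$-fractile band around that median, $x^\ast$ is carried along with $q$ unless both (i) $\overset{\angle}{q}$ and $\overset{\angle}{x^\ast}$ fall on opposite sides of the median, and (ii) the projection of $x^\ast$ lies beyond the $\alpha$-fractile point on its side. Under (i)--(ii), the points of $S_N$ whose projections lie strictly between the median and $\overset{\angle}{x^\ast}$ --- more than an $\alpha$-fraction of $S_N$ --- all lie strictly between $\overset{\angle}{q}$ and $\overset{\angle}{x^\ast}$. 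Hence the event ``$N$ separates $q$ from $x^\ast$'' is contained in the event ``at least an $\alpha$-fraction of $\overset{\angle}{S_N}$ lies between $\overset{\angle}{q}$ and $\overset{\angle}{x^\ast}$,'' and the search fails only if this occurs at some node of the path.

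Conditioning on $S_N$, which is determined by the splits nearer the root and so is independent of the direction drawn at $N$, Corollary~\ref{corollary:branch-and-bound:randomized:m-points} applied to the subset $S_N$ (of size $\lvert S_N\rvert$, containing $x^\ast$) with parameter $\alpha$ gives $\probability\big[\,N\text{ separates }q\text{ from }x^\ast \mid S_N\,\big] \le \frac{1}{2\alpha}\Phi_{\lvert S_N\rvert}(q,\mathcal{X})$. Each Spill split keeps at most a $(1/2+\alpha)$-fraction of a node's points on either side, so the node at depth $l$ has at most $\beta^l m$ points with $\beta = 1/2+\alpha$, bounding its contribution by $\frac{1}{2\alpha}\Phi_{\beta^l m}(q,\mathcal{X})$. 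The path from root to leaf has length $\ell = \log_{1/\beta}(m/m_\circ)$, so summing over its nodes with the union bound yields the stated estimate.

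The step I expect to be the main obstacle is the first one: arguing rigorously that a Spill Tree can err \emph{only} through the ``at least an $\alpha$-fraction between'' event --- so that the spill margin $\alpha$ plays the role that the randomly placed fractile threshold played for the RP Tree --- and then making the conditioning clean, so that the per-node bound of Corollary~\ref{corollary:branch-and-bound:randomized:m-points}, stated for a fixed subset, can be applied with the \emph{random} set $S_N$ and de-conditioned. A secondary point is matching the random node sizes $\lvert S_N\rvert$ to the deterministic quantities $\beta^l m$ appearing in the statement, handled exactly as in the RP Tree failure bound via the geometric decay of node sizes along the path.
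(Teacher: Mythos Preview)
Your proposal is correct and follows essentially the same route as the paper: characterise failure at a node as the event that at least an $\alpha$-fraction of projected points fall between $\overset{\angle}{q}$ and $\overset{\angle}{x^\ast}$, invoke Corollary~\ref{corollary:branch-and-bound:randomized:m-points} per node, and union-bound along the root-to-leaf path with geometric node-size decay by factor $\beta = 1/2+\alpha$. One minor reassurance on your ``secondary point'': in a Spill Tree the splits are taken at the $(1/2\pm\alpha)$-fractile points, so (up to rounding) the node at depth $l$ has size \emph{exactly} $\beta^l m$ rather than merely at most, and no monotonicity argument for $\Phi_s$ is needed.
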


\section{Cover Trees}
The branch-and-bound algorithms we have reviewed thus far divide a
collection recursively into exactly two sub-collections, using a 
hyperplane as a decision boundary. Some also have a certification process
that involves backtracking from a leaf node whose region contains a query
to the root node. As we noted in Section~\ref{section:branch-and-bound:intuition},
however, none of these choices is absolutely necessary. In fact,
branching and bounding can be done entirely differently. We review
in this section a popular example that deviates from that pattern,
a data structure known as the Cover Tree~\citep{covertrees}.

It is more intuitive to describe the Cover Tree, as well as the construction
and search algorithms over it, in the abstract first. This is
what~\cite{covertrees} call the \emph{implicit} representation.
Let us first describe its structure, then review its properties
and explain the relevant algorithms,
and only then discuss how the abstract tree can be implemented concretely.

\subsection{The Abstract Cover Tree and its Properties}

The abstract Cover Tree is a tree structure with infinite depth
that is defined for a proper metric $\delta(\cdot, \cdot)$.
Each level of the tree is numbered by an integer that starts
from $\infty$ at the level of the root node and decrements, to $-\infty$,
at each subsequent level.
Each node represents a single data point.
If we denote the collection of nodes on level $\ell$ by $C_\ell$,
then $C_\ell$ is a \emph{set}, in the sense that the
data points represented by those nodes are distinct.
But $C_\ell \subset C_{\ell - 1}$, so that once a node
appears in level $\ell$, it necessarily appears in levels $(\ell - 1)$
onward. That implies that, in the abstract Cover Tree, $C_\infty$ contains
a single data point, and $C_{-\infty} = \mathcal{X}$ is the entire collection.

\begin{figure}[t]
    \centering
    \includegraphics[width=0.7\linewidth]{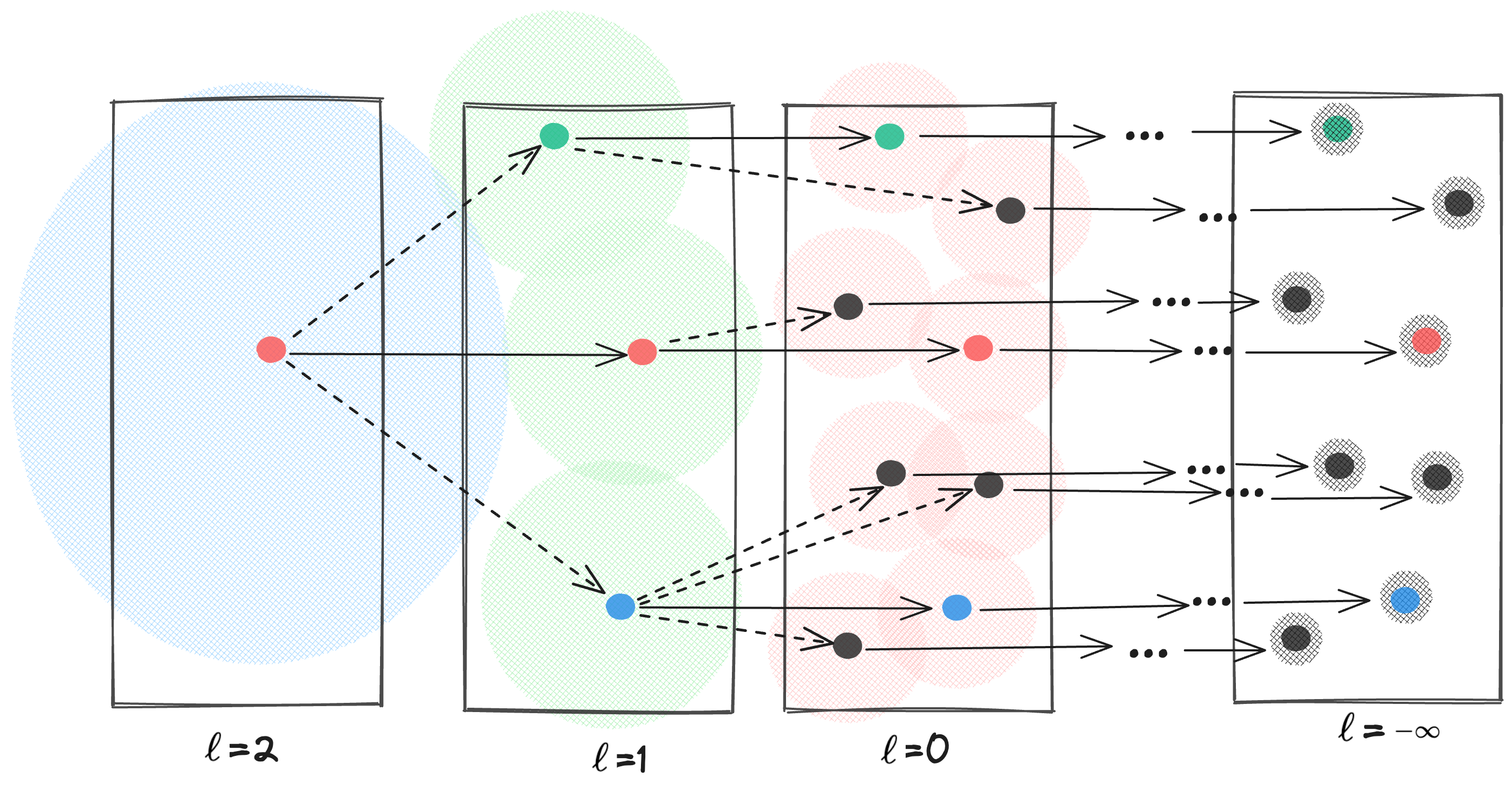}
    \caption{Illustration of the abstract Cover Tree for a collection of $8$ vectors.
    Nodes on level $\ell$ of the tree are separated by at least $2^\ell$ by the separation invariant.
    Nodes on level $\ell$ cover nodes on level $(\ell - 1)$ with a ball of radius
    at most $2^\ell$ by the covering invariant.
    Once a node appears in the tree, it will appear on all subsequent
    levels as its own child (solid arrows), by the nesting invariant.}
    \label{figure:branch-and-bound:abstract-cover-tree}
\end{figure}

This structure, which is illustrated in Figure~\ref{figure:branch-and-bound:abstract-cover-tree}
for an example collection of vectors, obeys three invariants. That is, all algorithms
that construct the tree or manipulate it in any way must guarantee
that the three properties are not violated. These invariants are:
\begin{itemize}
    \item \textbf{Nesting}: As we noted, $C_\ell \subset C_{\ell - 1}$.
    \item \textbf{Covering}: For every node $u \in C_{\ell - 1}$
    there is a node $v \in C_\ell$ such that $\delta(u, v) < 2^\ell$.
    In other words, every node in the next level $(\ell - 1)$ of the tree is ``covered''
    by an open ball of radius $2^\ell$ around a node in the current level, $\ell$.
    \item \textbf{Separation}: All nodes on the same level $\ell$ are separated by
    a distance of at least $2^\ell$. Formally, if $u, v \in C_\ell$, then $\delta(u, v) > 2^\ell$.
\end{itemize}

\subsection{The Search Algorithm}
We have seen what a Cover Tree looks like and what properties it is guaranteed to maintain.
Given this structure, how do we find the nearest neighbor of a query point?
That turns out to be a fairly simple algorithm as shown in Algorithm~\ref{algorithm:branch-and-bound:cover-tree:nn-search}.

\begin{algorithm}[!t]
\SetAlgoLined
{\bf Input: }{Cover Tree with metric $\delta(\cdot, \cdot)$; query point $q$.}\\
\KwResult{Exact NN of $q$.}

\begin{algorithmic}[1]

\STATE $Q_\infty \leftarrow C_\infty$ \Comment*[l]{\footnotesize $C_\ell$ is the set of nodes on level $\ell$}

\FOR{$\ell$ from $\infty$ to $-\infty$} \label{algorithm:branch-and-bound:cover-tree:nn-search:loop}
    \STATE $Q \leftarrow \{ \textsc{Children}(v) \;|\; v \in Q_\ell \}$ \Comment*[l]{\footnotesize \textsc{Children}$(\cdot)$ returns the children of its argument.}
    \STATE $Q_{\ell - 1} \leftarrow \{ u \;|\; \delta(q, u) \leq \delta(q, Q) + 2^\ell \}$ \Comment*[l]{\footnotesize $\delta(u, S) \triangleq \min_{v \in S} \delta(u, v)$.}\label{algorithm:branch-and-bound:cover-tree:nn-search:pruning}
\ENDFOR

\RETURN $\argmin_{u \in Q_{-\infty}} \delta(q, u)$ \label{algorithm:branch-and-bound:cover-tree:nn-search:return}
\end{algorithmic}
\caption{Nearest Neighbor search over a Cover Tree.}
\label{algorithm:branch-and-bound:cover-tree:nn-search}
\end{algorithm}

Algorithm~\ref{algorithm:branch-and-bound:cover-tree:nn-search} always maintains a current
set of candidates in $Q_\ell$ as it visits level $\ell$ of the tree. In each iteration of the
loop on Line~\ref{algorithm:branch-and-bound:cover-tree:nn-search:loop}, it creates a temporary
set---denoted by $Q$---by collecting the children of all nodes in $Q_\ell$. It then prunes the nodes
in $Q$ based on the condition on Line~\ref{algorithm:branch-and-bound:cover-tree:nn-search:pruning}.
Eventually, the algorithm returns the exact nearest neighbor of query $q$ by
performing exhaustive search over the nodes in $Q_{-\infty}$.

Let us understand why the algorithm is correct. In a way, it is enough to argue that
the pruning condition on Line~\ref{algorithm:branch-and-bound:cover-tree:nn-search:pruning}
never discards an ancestor of the nearest neighbor. If that were the case, we are done
proving the correctness of the algorithm: $Q_{-\infty}$ is guaranteed to have the nearest neighbor,
at which point we will find it on Line~\ref{algorithm:branch-and-bound:cover-tree:nn-search:return}.

The fact that Algorithm~\ref{algorithm:branch-and-bound:cover-tree:nn-search} never prunes
the ancestor of the solution is easy to establish.
To see how, consider the distance between $u \in C_{\ell-1}$ and any of its descendants, $v$.
The distance between the two vectors is bounded as follows:
$\delta(u, v) \leq \sum_{l = \ell - 1}^{-\infty} 2^l = 2^\ell$.
Furthermore, because $\delta$ is proper, by triangle inequality, we know
that: $\delta(q, u^\ast) \leq \delta(q, Q) + \delta(Q, u^\ast)$, where
$u^\ast$ is the solution and a descendant of $u \in C_{\ell - 1}$.
As such, any candidate whose distance is greater than
$\delta(q, Q) + \delta(Q, u^\ast) \leq \delta(q, Q) + 2^\ell$ can be safely
pruned.

\bigskip

The search algorithm has an $\epsilon$-approximate variant too.
To obtain a solution that is at most $(1 + \epsilon) \delta(q, u^\ast)$
away from $q$, assuming $u^\ast$ is the optimal solution, we need 
only to change the termination condition on Line~\ref{algorithm:branch-and-bound:cover-tree:nn-search:loop},
by exiting the loop as soon as $\delta(q, Q_{\ell}) \geq 2^{\ell + 1} (1 + 1/\epsilon)$.
Let us explain why the resulting algorithm is correct.

Suppose that the algorithm terminates early when it reaches level $\ell$.
That means that $2^{\ell + 1} (1 + 1/\epsilon) \leq \delta(q, Q_\ell)$.
We have already seen that $\delta(q, Q_\ell) \leq 2^{\ell + 1}$, and by triangle
inequality, that $\delta(q, Q_\ell) \leq \delta(q, u^\ast) + 2^{\ell + 1}$.
So we have bounded $\delta(q, Q_\ell)$ from below and above, resulting
in the following inequality:
\begin{equation*}
    2^{\ell + 1} \Big( 1 + \frac{1}{\epsilon} \Big) \leq \delta(q, u^\ast) + 
    2^{\ell + 1} \implies 2^{\ell + 1} \leq \epsilon \delta(q, u^\ast).
\end{equation*}
Putting all that together, we have shown that $\delta(q, Q_\ell) \leq (1 + \epsilon) \delta(q, u^\ast)$,
so that Line~\ref{algorithm:branch-and-bound:cover-tree:nn-search:return} returns an $\epsilon$-approximate
solution.

\subsection{The Construction Algorithm}
Inserting a single vector into the Cover Tree ``index'' is a procedure
that is similar to the search algorithm but is better conceptualized recursively,
as shown in Algorithm~\ref{algorithm:branch-and-bound:cover-tree:insertion}.

\begin{algorithm}[!t]
\SetAlgoLined
{\bf Input:}{ Cover Tree $\mathcal{T}$ with metric $\delta(\cdot, \cdot)$; New Vector $p$;
Level $\ell$; Candidate set $Q_\ell$.}\\
\KwResult{Cover Tree containing $p$.}

\begin{algorithmic}[1]

\STATE $Q \leftarrow \{ \textsc{Children}(u) \;|\; u \in Q_\ell \}$ \label{algorithm:branch-and-bound:cover-tree:insertion:q-formation}

\IF{$\delta(p, Q) > 2^\ell$} \label{algorithm:branch-and-bound:cover-tree:insertion:termination}
    \RETURN $\bowtie$
\ELSE
    \STATE $Q_{\ell - 1} \leftarrow \{ u \in Q \;|\; \delta(p, u) \leq 2^\ell \}$ \label{algorithm:branch-and-bound:cover-tree:insertion:pruning}
    \IF{\textbf{Insert}$(\mathcal{T}, p, Q_{\ell - 1}, \ell - 1) =\; \bowtie \;\land\; \delta(p, Q_\ell) \leq 2^\ell$} \label{algorithm:branch-and-bound:cover-tree:insertion:success}
        \STATE Choose $u \in Q_\ell$ such that $\delta(p, u) \leq 2 ^\ell$
        \STATE Add $p$ to \textsc{Children}$(u)$
        \RETURN $\blacklozenge$
    \ELSE
        \RETURN $\bowtie$
    \ENDIF
\ENDIF

\end{algorithmic}
\caption{Insertion of a vector into a Cover Tree.}
\label{algorithm:branch-and-bound:cover-tree:insertion}
\end{algorithm}

It is important to note that the procedure in
Algorithm~\ref{algorithm:branch-and-bound:cover-tree:insertion} assumes
that the point $p$ is not present in the tree. That is a harmless assumption
as the existence of $p$ can be checked by a simple invocation of
Algorithm~\ref{algorithm:branch-and-bound:cover-tree:nn-search}. We can therefore
safely assume that $\delta(p, Q)$ for any $Q$ formed on
Line~\ref{algorithm:branch-and-bound:cover-tree:insertion:q-formation} is strictly positive.
That assumption guarantees that the algorithm eventually terminates.
That is because $\delta(p, Q) > 0$ so that ultimately we will invoke the algorithm
with a value $\ell$ such that $\delta(p, Q) > 2^\ell$, at which point
Line~\ref{algorithm:branch-and-bound:cover-tree:insertion:termination} terminates
the recursion.

We can also see why Line~\ref{algorithm:branch-and-bound:cover-tree:insertion:success} is bound to
evaluate to \textsc{True} at some point during the execution of the algorithm.
That is because there must exist a level $\ell$ such that $2^{\ell - 1} < \delta(p, Q) \leq 2^\ell$.
That implies that the point $p$ will ultimately be inserted into the tree.

What about the three invariants of the Cover Tree? We must now show that the resulting
tree maintains those properties: nesting, covering, and separation.
The covering invariant is immediately guaranteed as a result of Line~\ref{algorithm:branch-and-bound:cover-tree:insertion:success}. The nesting invariant too is trivially maintained because
we can insert $p$ as its own child for all subsequent levels.

What remains is to show that the insertion algorithm maintains the separation property. 
To that end, suppose $p$ has been inserted into $C_{\ell-1}$ and consider 
its sibling $u \in C_{\ell-1}$. If $u \in Q$, then it is clear that
$\delta(p, u) > 2^{\ell - 1}$ because Line~\ref{algorithm:branch-and-bound:cover-tree:insertion:success}
must have evaluated to \textsc{True}. On the other hand, if $u \notin Q$,
that means that there was some $\ell^\prime > \ell$ where some ancestor of
$u$, $u^\prime \in C_{\ell^\prime - 1}$, was pruned on
Line~\ref{algorithm:branch-and-bound:cover-tree:insertion:pruning},
so that $\delta(p, u^\prime) > 2^{\ell^\prime}$.
Using the covering invariant, we can deduce that:
\begin{align*}
    \delta(p, u) &\geq \delta(p, u^\prime) - \sum_{l = \ell^\prime - 1}^{\ell} 2^l \\
    &= \delta(p, u^\prime) - (2^{\ell^\prime} - 2^\ell) \\
    &> 2^{\ell^\prime} - (2^{\ell^\prime} - 2^\ell) = 2^\ell.
\end{align*}
That concludes the proof that $\delta(p, C_{\ell - 1}) > 2^{\ell - 1}$,
showing that Algorithm~\ref{algorithm:branch-and-bound:cover-tree:insertion} maintains
the separation invariant.

\subsection{The Concrete Cover Tree}
The abstract tree we described earlier has infinite depth. While that representation
is convenient for explaining the data structure and algorithms that operate on it,
it is not practical. But it is easy to derive a concrete instance of the data structure,
without changing the algorithmic details, to obtain what~\cite{covertrees}
call the \emph{explicit} representation.

One straightforward way of turning the abstract Cover Tree into a concrete one
is by turning a node into a (terminal) leaf if it is its only child---recall that,
a node in the abstract Cover Tree is its own child, indefinitely. For example,
in Figure~\ref{figure:branch-and-bound:abstract-cover-tree}, all nodes on level
$0$ would become leaves and the Cover Tree would end at that depth.
We leave it as an exercise to show that the concrete representation of the tree
does not affect the correctness of Algorithms~\ref{algorithm:branch-and-bound:cover-tree:nn-search}
and~\ref{algorithm:branch-and-bound:cover-tree:insertion}.

The concrete form is not only important for making the data structure practical,
it is also necessary for analysis. For example, as~\cite{covertrees} prove that
the space complexity of the concrete Cover Tree is $\mathcal{O}(m)$ with $m=\lvert \mathcal{X} \rvert$,
whereas the abstract form is infinitely large.
The time complexity of the insertion and search algorithms also use the
concrete form, but they further require assumptions on the data distribution.
\cite{covertrees} present their analysis for vectors that are drawn from
a doubling measure, as we have defined in Definition~\ref{definition:doubling-measure}.
However, their claims have been disputed~\citep{curtin2016phd}
by counter-examples~\citep{elkin2022counterexamples},
and corrected in a recent work~\citep{elkin2023compressed-cover-trees}.

\section{Closing Remarks}

This chapter has only covered algorithms that convey the foundations of
a branch-and-bound approach to NN search. Indeed, we left out a number of
alternative constructions that are worth mentioning as we close this chapter.

\subsection{Alternative Constructions and Extensions}

The standard $k$-d Tree itself, as an example, can be instantiated
by using a different splitting procedure, such as splitting on the axis
along which the data exhibits the greatest spread.
PCA Trees~\citep{Sproull1991pcatrees}, PAC Trees~\citep{pactrees}, and
Max-Margin Trees~\citep{maxmargintrees} offer other ways of choosing the axis
or direction along which the algorithm partitions the data.
Vantage-point Trees~\citep{vptrees}, as another example, follow the same
iterative procedure as $k$-d Trees, but partition the space using hyperspheres
rather than hyperplanes.

\medskip

There are also various other randomized constructions of tree index structures
for NN search. \cite{panigrahy2008improved-kdtree}, for instance,
construct a standard $k$-d Tree over the original data points
but, during search, perturb the query point. Repeating the perturb-then-search
scheme reduces the failure probability of a defeatist search over the $k$-d Tree.

\cite{lshvsrptrees} proposes a different variant of the RP Tree where,
instead of a random projection, they choose the principal direction corresponding
to the largest eigenvalue of the covariance of the vectors that fall into a
node. This is equivalent to the PAC Tree~\citep{pactrees} with the exception
that the splitting threshold (i.e., the $\beta$-fractile point) is chosen randomly,
rather than setting it to the median point. \cite{lshvsrptrees} shows that,
with the modified algorithm, a smaller ensemble of trees is necessary to
reach high retrieval accuracy, as compared with the original RP Tree construction.

\cite{sparse-rp-trees} improve the space complexity
of RP Trees by replacing the $d$-dimensional \emph{dense} random direction
with a \emph{sparse} random projection using Fast Johnson-Lindenstrauss
Transform~\citep{fjlt}. The result is that, every internal node of the tree
has to store a sparse vector whose number of non-zero coordinates is far less
than $d$. This space-efficient variant of the RP Tree offers virtually the
same theoretical guarantees as the original RP Tree structure.

\cite{ram2019revisiting_kdtree} improve the running time of the
NN search over an RP Tree (which is $\mathcal{O}(d \log m)$ for $m = \lvert \mathcal{X} \rvert$)
by first randomly rotating the
vectors in a pre-processing step, then applying the standard $k$-d Tree to the
rotated vectors.
They show that, such a construction leads to a search time complexity of $\mathcal{O}(d\log d + \log m)$
and offers the same guarantees on the failure probability as the RP Tree.

\medskip

Cover Trees too have been the center of much research.
As we have already mentioned, many subsequent works~\citep{elkin2022counterexamples,elkin2023compressed-cover-trees,curtin2016phd} investigated
the theoretical results presented in the original paper~\citep{covertrees}
and corrected or improved the time complexity bounds on the insertion
and search algorithms.
\cite{faster-cover-trees} simplified the structure of the concrete Cover Tree
to make its implementation more efficient and cache-aware.
\cite{parallel-cover-trees} proposed parallel insertion and deletion
algorithms for the Cover Tree to scale the algorithm to real-world vector collections.
We should also note that the Cover Tree itself is an extension (or, rather,
a simplification) of Navigating Nets~\citep{krauthgamer2004navigatingnets},
which itself has garnered much research.

\medskip

It is also possible to extend the framework to MIPS.
That may be surprising. After all, the machinery of the branch-and-bound
framework rests on the assumption that the
distance function has all the nice properties we expect from a metric space.
In particular, we take for granted that the distance is non-negative and that
distances obey the triangle inequality. As we know, however, none of these
properties holds when the distance function is inner product.

As~\cite{xbox-tree} show, however, it is possible to apply a rank-preserving
transformation to vectors such that solving MIPS over the original space
is equivalent to solving NN over the transformed space. \cite{conetrees}
take a different approach and derive bounds on the inner product between
an arbitrary query point and vectors that are contained in a ball associated
with an internal node of the tree index. This bound allows the certification
process to proceed as usual. Nonetheless, these methods face the same challenges
as $k$-d Trees and their variants.

\subsection{Future Directions}

The literature on branch-and-bound algorithms for top-$k$ retrieval is rather mature
and stable at the time of this writing.
While publications on this fascinating class of algorithms
continue to date, most recent works either improve the theoretical analysis of existing
algorithms (e.g.,~\citep{elkin2023compressed-cover-trees}),
improve their implementation (e.g.,~\citep{ram2019revisiting_kdtree}),
or adapt their implementation to other computing paradigms such as distributed systems
(e.g.,~\citep{parallel-cover-trees}).

Indeed, such research is essential. Tree indices are---as the reader will undoubtedly
learn after reading this monograph---among the few retrieval algorithms that rest on
a sound theoretical foundation. Crucially, their implementations too reflect those
theoretical principles: There is little to no gap between theoretical tree indices and
their concrete forms. Improving their theoretical guarantees and modernizing their
implementation, therefore, makes a great deal of sense,
especially so because works like~\citep{ram2019revisiting_kdtree} show
how competitive tree indices can be in practice.

An example area that has received little attention concerns the
data structure that materializes a tree index.
In most works, trees appear in their na\"ive form and are processed trivially.
That is, a tree is simply a collection of if-else blocks,
and is evaluated from root to leaf, one node at a time. The vectors in the leaf
of a tree, too, are simply searched exhaustively.
Importantly, the knowledge that one tree is often insufficient and that a forest
of trees is often necessary to reach an acceptable retrieval accuracy, is not
taken advantage of.
This insight was key in improving forest traversal in the learning-to-rank
literature~\citep{quickscorer,ye2018rapidscorer}, in particular when
a batch of queries is to be processed simultaneously.
It remains to be seen if a more efficient tree traversal algorithm
can unlock the power of tree indices.

Perhaps more importantly, the algorithms we studied in this chapter give
us an arsenal of theoretical tools that may be of independent interest.
The concepts such as partitioning, spillage, and $\epsilon$-nets that are so
critical in the development of many of the algorithms we saw earlier,
are useful not only in the context of trees, but also in other classes
of retrieval algorithms. We will say more on that in future chapters.

\bibliographystyle{abbrvnat}
\bibliography{biblio}

\chapter{Locality Sensitive Hashing}
\label{chapter:lsh}

\abstract{
In the preceding chapter, we delved into algorithms that inferred the
geometrical shape of a collection of vectors and condensed it into
a navigable structure. In many cases, the algorithms were designed for exact
top-$k$ retrieval, but could be modified to provide guarantees on approximate
search. This section, instead, explores an entirely different idea that is
probabilistic in nature and, as such, is designed specifically for approximate
top-$k$ retrieval from the ground up.
}

\section{Intuition}
\label{section:lsh:intuition}
Let us consider the intuition behind what is known as \emph{Locality Sensitive Hashing}
(LSH)~\citep{lsh} first.
Define $b$ separate ``buckets.'' Now, suppose there exists a mapping $h(\cdot)$
from vectors in $\mathbb{R}^d$ to these buckets, such that every vector is placed into
a single bucket: $h: \mathbb{R}^d \rightarrow [b]$. Crucially, assume that vectors that are closer to each other
according to the distance function $\delta(\cdot, \cdot)$, are more likely to be placed
into the same bucket. In other words, the probability that two vectors collide
increases as $\delta$ decreases.

Considering the setup above, indexing is simply a matter of applying $h$ to
all vectors in the collection $\mathcal{X}$ and making note of the resulting placements.
Retrieval for a query $q$ is also straightforward: Perform exact
search over the data points that are in the bucket $h(q)$.
The reason this procedure works with high probability is because it
is more likely for the mapping $h$ to place $q$ in a bucket that contains
its nearest neighbors, so that an exact search over the $h(q)$ bucket yields the correct
top-$k$ vectors with high likelihood. This is visualized in Figure~\subref*{figure:lsh:intuition:single-dimensional}.

\begin{figure}[t]
    \centering
    \subfloat[]{
        \label{figure:lsh:intuition:single-dimensional}
        \includegraphics[width=0.49\linewidth]{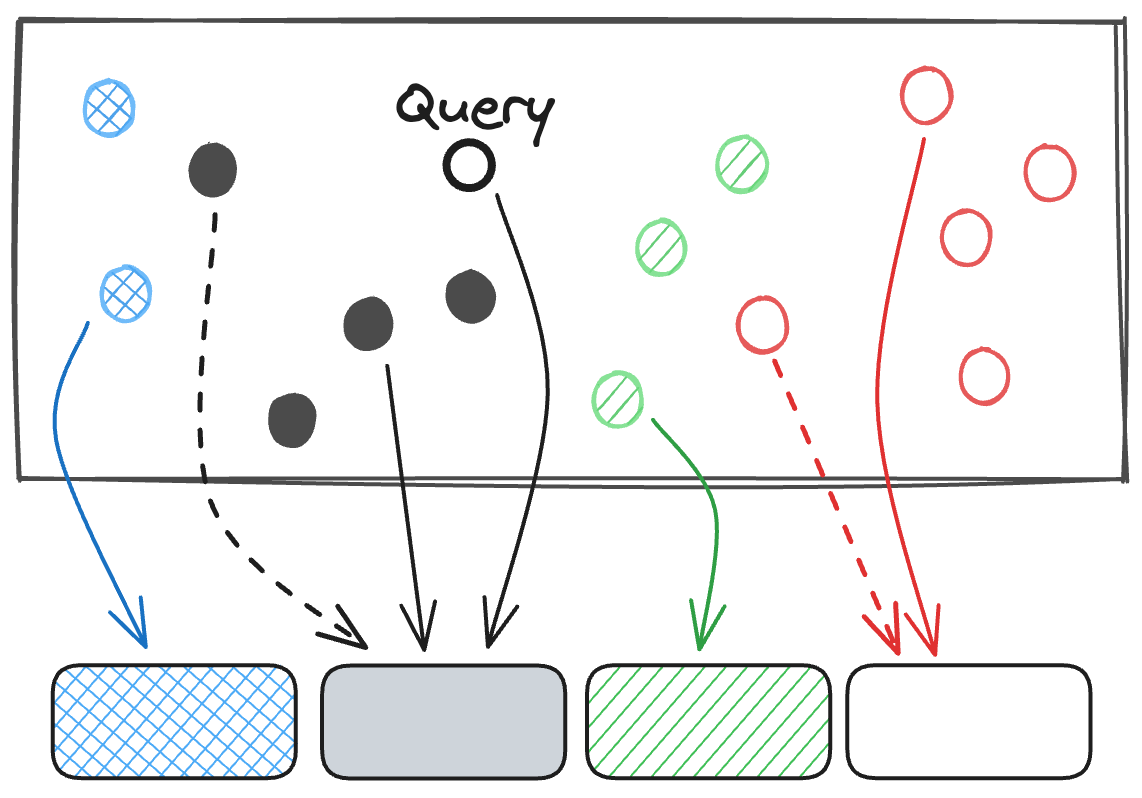}
    }
    \subfloat[]{
        \label{figure:lsh:intuition:multi-dimensional}
        \includegraphics[width=0.49\linewidth]{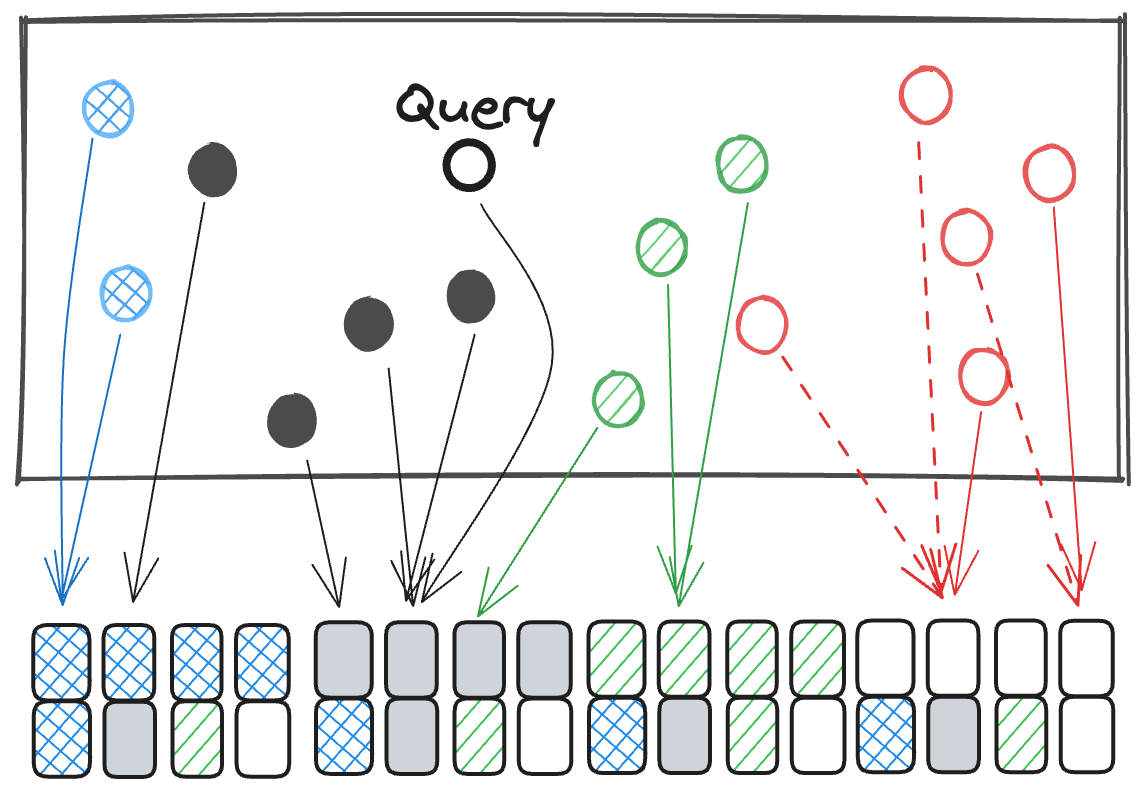}
    }
    \caption{Illustration of Locality Sensitive Hashing. In (a), a function
    $h: \mathbb{R}^2 \rightarrow \{ 1, 2, 3, 4 \}$ maps vectors to four buckets.
    Ideally, when two points are closer to each other, they are more likely to be placed
    in the same bucket. But, as the dashed arrows show, some vectors end up in less-than-ideal
    buckets. When retrieving the top-$k$ vectors for a query $q$, we search through the data
    vectors that are in the bucket $h(q)$. Figure (b) depicts an extension of the framework
    where each bucket is the vector $[h_1(\cdot), h_2(\cdot)]$ obtained from two independent mappings
    $h_1$ and $h_2$.
    }
    \label{figure:lsh:intuition}
\end{figure}

It is easy to extend this setup to ``multi-dimensional'' buckets in the following sense.
If $h_i$'s are independent functions that have the desired property above (i.e., increased
chance of collision with smaller $\delta$), we may define
a bucket in $[b]^\ell$ as the vector mapping $g(\cdot) = [h_1(\cdot), h_2(\cdot), \ldots, h_\ell(\cdot)]$.
Figure~\subref*{figure:lsh:intuition:multi-dimensional} illustrates this extension for $\ell=2$.
The indexing and search procedures work in much the same way. But now, there are
presumably fewer data points in each bucket, and
spurious collisions (i.e., vectors that were mapped to the same bucket 
but that are far from each other according to $\delta$) are less likely to occur.
In this way, we are likely to reduce the overall search time and increase
the accuracy of the algorithm.

Extending the framework even further, we can repeat the process above $L$ times
by constructing independent mappings $g_1(\cdot)$ through $g_L(\cdot)$ from
individual mappings $h_{ij}(\cdot)$ ($1 \leq i \leq L$ and $1 \leq j \leq \ell$),
all of which possessing the property of interest. Because the mappings are independent,
repeating the procedure many times increases the probability of obtaining a high
retrieval accuracy.

That is the essence of the LSH approach to top-$k$ retrieval.
Its key ingredient is the family $\mathcal{H}$ of functions $h_{ij}$'s that have the stated
property \emph{for a given distance function}, $\delta$.
This is the detail that is studied in the remainder of this section.
But before we proceed to define $\mathcal{H}$ for different distance functions,
we will first give a more rigorous description of the algorithm.

\section{Top-\texorpdfstring{$k$}{k} Retrieval with LSH}

Earlier, we described informally the class of mappings that are at the core
of LSH, as hash functions that preserve the distance between points. That is,
the likelihood that such a hash function places two points in the same bucket
is a function of their distance. Let us formalize that notion first in the
following definition, due to~\cite{lsh}.

\begin{definition}[$(r, (1+\epsilon)r, p_1, p_2)$-Sensitive Family]
    \label{definition:lsh:hash-family}
    A family of hash functions $\mathcal{H} = \{ h:\; \mathbb{R}^d \rightarrow [b] \}$
    is called $(r, (1 + \epsilon)r, p_1, p_2)$-sensitive for a distance function $\delta(\cdot, \cdot)$,
    where $\epsilon > 0$ and $0 < p_1, p_2 < 1$, if for any two points $u, v \in \mathbb{R}^d$:
    \begin{itemize}
        \item $\delta(u, v) \leq r \implies \probability_\mathcal{H}\big[ h(u) = h(v) \big] \geq p_1$; and,
        \item $\delta(u, v) > (1 + \epsilon)r \implies \probability_\mathcal{H}\big[ h(u) = h(v) \big] \leq p_2$.
    \end{itemize}
\end{definition}

It is clear that such a family is useful only when $p_1 > p_2$.
We will see examples of $\mathcal{H}$ for different distance functions later in this section.
For the time being, however, suppose such a family of functions exists for any $\delta$ of interest.

The indexing algorithm remains as described before.
Fix parameters $\ell$ and $L$ to be determined later in
this section. Then define the vector function
$g(\cdot) = \big[ h_1(\cdot), h_2(\cdot), \ldots, h_\ell(\cdot) \big]$
where $h_i \in \mathcal{H}$. Now, construct $L$ such functions $g_1$ through $g_L$, and process
the data points in collection $\mathcal{X}$ by evaluating $g_i$'s and placing
them in the corresponding multi-dimensional bucket.

\begin{svgraybox}
In the end, we have effectively
built $L$ tables, each mapping buckets to a list of data points that
fall into them. Note that, each of the $L$ tables holds a copy of the collection,
but where each table organizes the data points differently.
\end{svgraybox}

\subsection{The Point Location in Equal Balls Problem}
Our intuitive description of retrieval using LSH ignored a minor technicality
that we must elaborate in this section. In particular, as is clear from 
Definition~\ref{definition:lsh:hash-family}, a family $\mathcal{H}$ has
a dependency on the distance $r$. That means any instance of the family provides guarantees
only with respect to a specific $r$. Consequently, any index obtained
from a family $\mathcal{H}$, too, is only useful in the context of a fixed $r$.

It appears, then, that the LSH index is not in and of itself sufficient
for solving the $\epsilon$-approximate retrieval problem of
Definition~\ref{definition:flavors:approximate-top-k-retrieval} directly.
But, it is enough for solving an easier \emph{decision problem} that is
known as Point Location in Equal Balls (PLEB), defined as follows:

\begin{definition}[$(r, (1+\epsilon)r)$-Point Location in Equal Balls]
    \label{definition:lsh:pleb}
    For a query point $q$ and a collection $\mathcal{X}$,
    if there is a point $u \in \mathcal{X}$ such that $\delta(q, u) \leq r$,
    return \textsc{Yes} and any point $v$ such that $\delta(q, v) < (1+\epsilon)r$.
    Return \textsc{No} if there are no such points.
\end{definition}

The algorithm to solve the $(r, (1+\epsilon)r)$-PLEB problem for a query point $q$
is fairly straightforward. It involves evaluating $g_i$'s on $q$ and exhaustively searching the
corresponding buckets in order. We may terminate early after visiting at
most $4L$ data points. For every examined data point $u$, the algorithm
returns \textsc{Yes} if $\delta(q, u) \leq (1+\epsilon)r$, and \textsc{No} otherwise.

\subsubsection{Proof of Correctness}

Suppose there exits a point $u^\ast \in \mathcal{X}$ such that
$\delta(q, u^\ast) \leq r$. The algorithm above is correct, in the sense
that it returns a point $u$ with $\delta(q, u) \leq (1+\epsilon)r$,
if we choose $\ell$ and $L$ such that the following two properties hold
with constant probability:
\begin{itemize}
    \item $\exists \;i \in [L] \textit{ s.t. } g_i(u^\ast) = g_i(q)$; and,
    \item $\sum_{j=1}^{L} \Big\lvert \Big( \mathcal{X} \setminus B(q, (1+\epsilon)r) \Big) \cap g_j^{-1}(g_j(q)) \Big\rvert \leq 4L$, where $g_j^{-1}(g_j(q))$ is the set of vectors in bucket $g_j(q)$.
\end{itemize}

The first property ensures that, as we traverse the $L$ buckets associated
with the query point, we are likely to visit either the optimal point $u^\ast$,
or some other point whose distance to $q$ is at most $(1 + \epsilon)r$.
The second property guarantees that with constant probability, there are no more
than $4L$ points in the candidate buckets that are $(1+\epsilon)r$ away from $q$.
As such, we are likely to find a solution before visiting $4L$ points.

We must therefore prove that for some $\ell$ and $L$ the above properties hold.
The following claim shows one such configuration.

\begin{theorem}
    \label{theorem:lsh:pleb-configuration}
    Let $\rho=\ln p_1 / \ln p_2$ and $m=\lvert \mathcal{X} \rvert$.
    Set $L=m^\rho$ and $\ell=\log_{1/p_2} m$. The properties above hold
    with constant probability for a $(r, (1+\epsilon)r, p_1, p_2)$-sensitive LSH family.
\end{theorem}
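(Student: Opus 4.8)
The plan is to treat the two required properties independently, bound the failure probability of each by a fixed constant, and then combine them with a union bound so that both hold simultaneously with probability bounded below by a constant that does not depend on $m$, $d$, or $\epsilon$. Throughout I would treat $\ell$ and $L$ as though they were integers; rounding only perturbs the constants.

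I would begin with the second property, which limits spurious collisions. Fix a table $g_j$ and a point $v \in \mathcal{X}$ with $\delta(q, v) > (1+\epsilon)r$. Since $\mathcal{H}$ is $(r, (1+\epsilon)r, p_1, p_2)$-sensitive, each coordinate hash collides with probability at most $p_2$, and the $\ell$ coordinates of $g_j$ are drawn independently, so $\probability\big[ g_j(v) = g_j(q) \big] \leq p_2^\ell$. With $\ell = \log_{1/p_2} m$ this is exactly $p_2^\ell = 1/m$, whence the expected number of far points falling into the bucket $g_j(q)$ is at most $m \cdot (1/m) = 1$. Summing over the $L$ tables and using linearity of expectation, the expected value of the sum appearing in the second property is at most $L$, and Markov's inequality bounds the probability that it exceeds $4L$ by $1/4$.

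Next I would establish the first property. For the near point $u^\ast$ with $\delta(q, u^\ast) \leq r$, sensitivity gives a per-coordinate collision probability of at least $p_1$, so $\probability\big[ g_i(u^\ast) = g_i(q) \big] \geq p_1^\ell$ for each $i \in [L]$. The one algebraic step that needs care is unwinding this exponent: since $\ell = \log_{1/p_2} m = -\ln m / \ln p_2$ and $p_1, p_2 \in (0,1)$ (so both logarithms are negative), one gets $\ell \ln p_1 = -(\ln p_1 / \ln p_2)\ln m = -\rho \ln m$, hence $p_1^\ell = m^{-\rho} = 1/L$. Note that $\rho \in (0,1)$ precisely because $p_1 > p_2$, which is also what makes $L = m^\rho$ sublinear in $m$. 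Because the $L$ tables are constructed independently, the probability that $u^\ast$ fails to collide with $q$ in every one of them is at most $(1 - 1/L)^L \leq e^{-1}$, so the first property fails with probability at most $1/e$.

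Finally, the union bound gives that both properties hold simultaneously with probability at least $1 - 1/4 - 1/e > 0$, which is the desired constant, completing the proof. Honestly there is no serious obstacle here; the only things to watch are the sign bookkeeping in the identity $\ell \ln p_1 = -\rho \ln m$ (both $\ln p_1$ and $\ln p_2$ are negative while $\rho$ is positive), and confirming that the two constant failure budgets sum to something strictly below $1$ so that the phrase ``with constant probability'' is genuinely nonvacuous. One may also remark that repeating the whole construction a logarithmic number of times drives the success probability arbitrarily close to $1$.
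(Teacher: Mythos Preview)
Your proposal is correct and follows essentially the same approach as the paper: bound the far-point collisions per table by $p_2^\ell = 1/m$ and apply Markov's inequality to get failure probability at most $1/4$, and bound the near-point miss probability across all tables by $(1-p_1^\ell)^L = (1-1/L)^L \leq 1/e$. The paper treats the two properties separately without explicitly invoking the union bound at the end, whereas you do; your extra care with the sign bookkeeping in $p_1^\ell = m^{-\rho}$ is also a bit more detailed than the paper's presentation, but the substance is identical.
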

\begin{proof}
    Consider the first property. We have, from Definition~\ref{definition:lsh:hash-family},
    that, for any $h_i \in \mathcal{H}$:
    \begin{equation*}
        \probability\Big[ h_i(u^\ast) = h_i(q) \Big] \geq p_1.
    \end{equation*}
    That holds simply because $u^\ast \in B(q, r)$. That implies:
    \begin{equation*}
        \probability\Big[ g_i(u^\ast) = g_i(q) \Big] \geq p_1^\ell.
    \end{equation*}
    As such:
    \begin{equation*}
        \probability\Big[ \exists \; i \in [L] \textit{ s.t. } g_i(u^\ast) = g_i(q) \Big] \geq 1 - (1 - p_1^\ell)^L.
    \end{equation*}
    Substituting $\ell$ and $L$ with the expressions given in the theorem gives:
    \begin{equation*}
        \probability\Big[ \exists \; i \in [L] \textit{ s.t. } g_i(u^\ast) = g_i(q) \Big] \geq 1 - (1 - \frac{1}{m^\rho})^{m^\rho} \approx 1 - \frac{1}{e},
    \end{equation*}
    proving that the property of interest holds with constant probability.

    Next, consider the second property. For any point $v$ such that $\delta(q, v) > (1 + \epsilon)r$,
    Definition~\ref{definition:lsh:hash-family} tells us that:
    \begin{align*}
        \probability\Big[ h_i(v) &= h_i(q) \Big] \leq p_2 \implies \probability\Big[ g_i(v) = g_i(q) \Big] \leq p_2^\ell \\
        &\implies \probability\Big[ g_i(v) = g_i(q) \Big] \leq \frac{1}{m} \\
        &\implies \ev\Big[ \Big\lvert v \textit{ s.t. } g_i(v) = g_i(q) \land \delta(q, v) > (1 + \epsilon)r \Big\rvert \;|\; g_i \Big] \leq 1 \\
        &\implies \ev\Big[ \Big\lvert v \textit{ s.t. } g_i(v) = g_i(q) \land \delta(q, v) > (1 + \epsilon)r \Big\rvert \Big] \leq L,
    \end{align*}
    where the last expression follows by the linearity of expectation when applied to all $L$ buckets.
    By Markov's inequality, the probability that there are more than $4L$ points for which $\delta(q, v) > (1+\epsilon)r$
    but that map to the same bucket as $q$ is at most $1/4$. That completes the proof.
\end{proof}

\subsubsection{Space and Time Complexity}

The algorithm terminates after visiting at most $4L$ vectors in the candidate buckets.
Given the configuration of Theorem~\ref{theorem:lsh:pleb-configuration}, this means that
the time complexity of the algorithm for query processing is $\mathcal{O}(d m^\rho)$,
which is sub-linear in $m$.

As for space complexity of the algorithm, note that the index stores each
data point $L$ times. That implies the space required to build an LSH index
has complexity $\mathcal{O}(mL) = \mathcal{O}(m^{1 + \rho})$,
which grows super-linearly with $m$. This growth rate can easily become
prohibitive~\citep{gionis1999hashing,buhler2001lsh-comparison},
particularly because it is often necessary to increase $L$ to reach a higher
accuracy, as the proof of Theorem~\ref{theorem:lsh:pleb-configuration} shows.
How do we reduce this overhead and still obtain sub-linear query time?
That is a question that has led to a flurry of research in the past.

One direction to address that question is to modify the search algorithm so that it visits multiple
buckets from each of the $L$ tables, instead of examining just a single bucket per table.
That is the idea first explored by~\cite{entropyLSH}. In that work, the search
algorithm is the same as in the standard version presented above, but in addition to searching the
buckets for query $q$, it also performs many search operations for perturbed copies of $q$.
While theoretically interesting, their method proves difficult to use in practice.
That is because, the amount of noise needed to perturb a query depends on the distance
of the nearest neighbor to $q$---a quantity that is unknown \emph{a priori}.
Additionally, it is likely that a single bucket may be visited many times over
as we invoke the search procedure on the copies of $q$.

Later,~\cite{multiprobeLSH} refined that theoretical result and presented a method
that, instead of perturbing queries \emph{randomly} and performing multiple hash computations
and search invocations, utilizes a more efficient approach
in deciding which buckets to probe within each table. In particular,
their ``multi-probe LSH'' first finds the bucket associated with $q$, say $g_i(q)$.
It then additionally visits other ``adjacent'' buckets where a bucket is adjacent
if it is more likely to hold data points that are close to the vectors in $g_i(q)$.

The precise way their algorithm arrives at a set of adjacent buckets
depends on the hash family itself. In their work,~\cite{multiprobeLSH}
consider only a hash family for the Euclidean distance, and take advantage
of the fact that adjacent buckets (which are in $[b]^\ell$) differ in each
coordinate by at most $1$---this becomes clearer when we review the LSH family for
Euclidean distance in Section~\ref{section:lsh:euclidean}.
This scheme was shown empirically to reduce by \emph{an order of magnitude}
the total number of hash tables that is required to achieve an accuracy greater
than $0.9$ on high-dimensional datasets.

\medskip

Another direction is to improve the guarantees of the LSH family itself.
As Theorem~\ref{theorem:lsh:pleb-configuration} indicates, $\rho = \log p_1 / \log p_2$
plays a critical role in the efficiency and effectiveness of the search algorithm,
as well as the space complexity of the data structure.
It makes sense, then, that improving $\rho$ leads to smaller space overhead.
Many works have explored advanced LSH families to do just
that~\citep{andoni2008near-optimal,andoni2014beyond,andoni2015cross-polytope-lsh}.
We review some of these methods in more detail later in this chapter.

\subsection{Back to the Approximate Retrieval Problem}
A solution to PLEB of Definition~\ref{definition:lsh:pleb}
is a solution to $\epsilon$-approximate top-$k$ retrieval
only if $r = \delta(q, u^\ast)$, where $u^\ast$ is the $k$-th minimizer
of $\delta(q, \cdot)$. But we do not know the minimal distance in advance!
That begs the question: How does solving the PLEB problem
help us solve the $\epsilon$-approximate retrieval problem?

\cite{lsh} argue that an efficient solution to this decision version of the 
problem leads directly to an efficient solution to the original problem.
In effect, they show that $\epsilon$-approximate retrieval can be reduced
to PLEB. Let us review one simple, albeit inefficient reduction.

Let $\delta_\ast = \max_{u, v \in \mathcal{X}} \delta(u, v)$ and
$\delta^\ast = \min_{u, v \in \mathcal{X}} \delta(u, v)$. Denote by
$\Delta$ the aspect ratio: $\Delta = \delta_\ast / \delta^\ast$.
Now, define a set of distances $\mathcal{R} = \{ (1+\epsilon)^0, (1 + \epsilon)^1, \ldots, \Delta \}$,
and construct $\lvert \mathcal{R} \rvert$ LSH indices for each $r \in \mathcal{R}$.

Retrieving vectors for query $q$ is a matter of performing
binary search over $\mathcal{R}$ to find the minimal
distance such that PLEB succeeds and returns a point $u \in \mathcal{X}$.
That point $u$ is the solution to the $\epsilon$-approximate retrieval
problem! It is easy to see that such a reduction adds to the time complexity
by a factor of $\mathcal{O}(\log \log_{1+\epsilon} \Delta)$, and to the space
complexity by a factor of $\mathcal{O}(\log_{1+\epsilon} \Delta)$.

\section{LSH Families}
We have studied how LSH solves the PLEB problem of Definition~\ref{definition:lsh:pleb},
analyzed its time and space complexity, and reviewed how a solution to PLEB leads
to a solution to the $\epsilon$-approximate top-$k$ retrieval problem of Definition~\ref{definition:flavors:approximate-top-k-retrieval}.
Throughout that discussion, we took for granted the existence of an LSH family
that satisfies Definition~\ref{definition:lsh:hash-family} for a distance function
of interest. In this section, we review example families and unpack their construction
to complete the picture.

\subsection{Hamming Distance}
We start with the simpler case of Hamming distance over the space of binary vectors.
That is, we assume that $\mathcal{X} \subset \{0, 1\}^d$ and $\delta(u, v) = \lVert u - v \rVert_1$,
measuring the number of coordinates in which the two vectors $u$ and $v$ differ.
For this setup, a hash family that maps a vector to one of its coordinates at
random---a technique that is also known as \emph{bit sampling}---is an LSH family~\citep{lsh},
as the claim below shows.

\begin{theorem}
For $\mathcal{X} \subset \{ 0, 1 \}^d$ equipped with the Hamming distance,
the family $\mathcal{H} = \{ h_i \;|\; h_i(u) = u_i, \; 1 \leq i \leq d \}$
is $(r, (1 + \epsilon)r, 1 - r/d, 1 - (1+\epsilon)r/d)$-sensitive.
\end{theorem}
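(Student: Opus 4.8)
The plan is to compute directly the collision probability of a uniformly random member of $\mathcal{H}$ and show it is a monotone function of the Hamming distance, after which the two required implications fall out by substitution. Concretely, I would first note that sampling $h \in \mathcal{H}$ amounts to choosing an index $i$ uniformly at random from $[d]$, so that for any $u, v \in \{0,1\}^d$,
\begin{equation*}
    \probability_{\mathcal{H}}\big[ h(u) = h(v) \big] = \probability_{i \sim [d]}\big[ u_i = v_i \big] = \frac{\big\lvert \{ i \;|\; u_i = v_i \} \big\rvert}{d}.
\end{equation*}
Since $\delta(u,v) = \lVert u - v \rVert_1$ counts exactly the coordinates in which $u$ and $v$ disagree, the number of agreeing coordinates is $d - \delta(u,v)$, and hence the collision probability equals $1 - \delta(u,v)/d$.

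With that identity in hand, the two cases of Definition~\ref{definition:lsh:hash-family} are immediate. If $\delta(u,v) \leq r$, then $1 - \delta(u,v)/d \geq 1 - r/d$, which gives the first condition with $p_1 = 1 - r/d$. If $\delta(u,v) > (1+\epsilon)r$, then $1 - \delta(u,v)/d < 1 - (1+\epsilon)r/d$, so in particular the collision probability is at most $p_2 = 1 - (1+\epsilon)r/d$, giving the second condition. One should also remark in passing that $p_1 > p_2$ here precisely because $\epsilon > 0$, so the family is non-trivial, and that $0 < p_1, p_2 < 1$ holds as long as $0 < r$ and $(1+\epsilon)r < d$, which is the regime of interest.

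There is essentially no obstacle in this argument — the only thing to be careful about is keeping the probability space straight (randomness is over the choice of $h$, i.e.\ of the coordinate $i$, not over the vectors), and noting that the strict inequality we obtain in the second case comfortably implies the non-strict bound $\leq p_2$ demanded by the definition. I would therefore present it as a short three- or four-line computation rather than a structured multi-step proof.
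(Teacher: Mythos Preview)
Your proposal is correct and follows essentially the same approach as the paper: compute the collision probability as $1 - \delta(u,v)/d$ by noting that a uniformly chosen coordinate agrees with probability equal to the fraction of agreeing coordinates, then read off $p_1$ and $p_2$ by monotonicity. The paper's proof is even terser (it simply observes $\probability[h_i(u) \neq h_i(v)] \leq r/d$ when $\delta(u,v) \leq r$ and says $p_2$ is derived similarly), but your additional remarks on the probability space and the strict-versus-non-strict inequality are harmless clarifications of the same argument.
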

\begin{proof}
    The proof is trivial. For a given $r$ and two vectors $u, v \in \{ 0, 1\}^d$,
    if $\lVert u - v \rVert_1 \leq r$, then $\probability\Big[ h_i(u) \neq h_i(v) \Big] \leq r/d$,
    so that $\probability\Big[ h_i(u) = h_i(v) \Big] \geq 1 - r/d$, and therefore $p_1 = 1 - r/d$.
    $p_2$ is derived similarly.
\end{proof}

\subsection{Angular Distance}
\label{section:lsh:angular}
Consider next the angular distance between two real vectors $u, v \in \mathbb{R}^d$, defined as:
\begin{equation}
    \label{equation:lsh:angular-distance}
    \delta(u, v) = \arccos \Big( \frac{\langle u, v \rangle}{\lVert u \rVert_2 \lVert v \rVert_2} \Big).
\end{equation}

\begin{figure}[t]
    \centering
    \subfloat[Hyperplane LSH]{
        \label{figure:lsh:angular:hyplerplane}
        \includegraphics[width=0.4\linewidth]{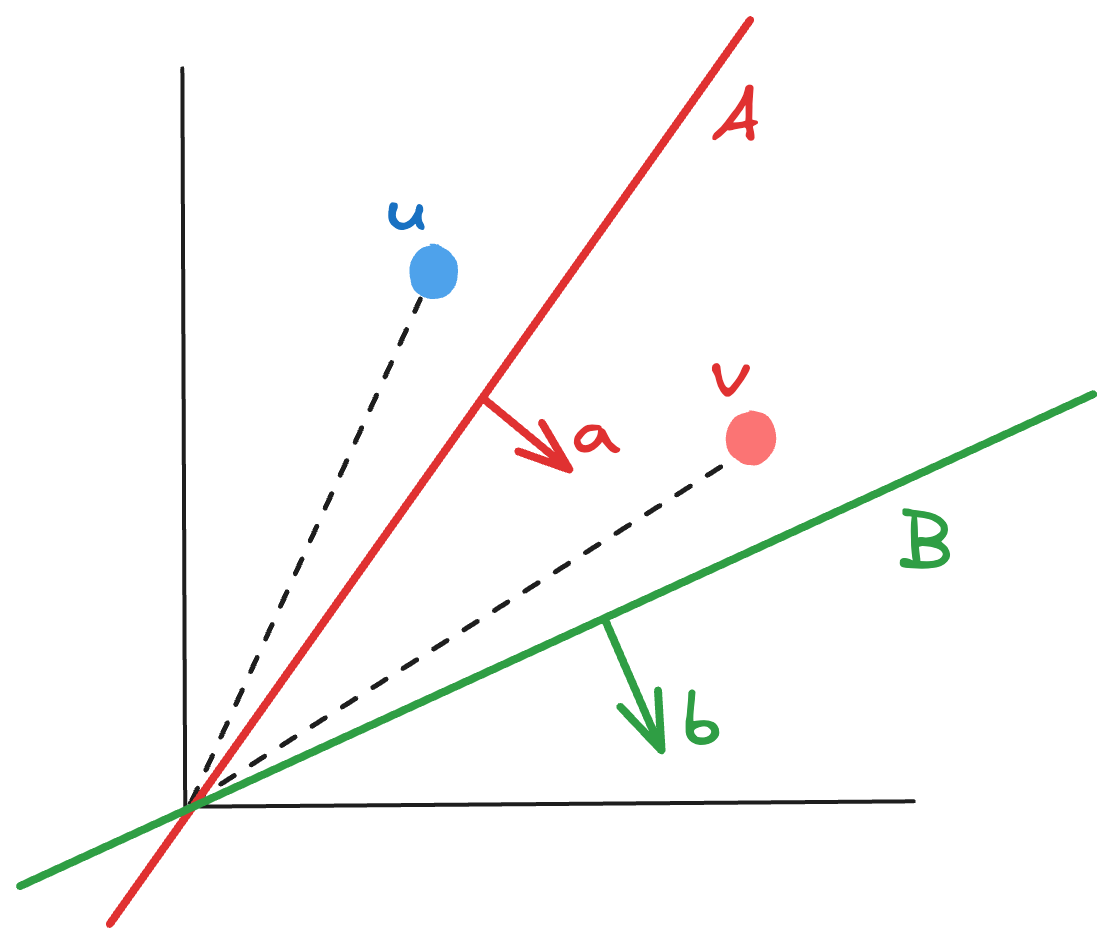}
    }
    \subfloat[Cross-polytope LSH]{
        \label{figure:lsh:angular:cross-polytope}
        \includegraphics[width=0.4\linewidth]{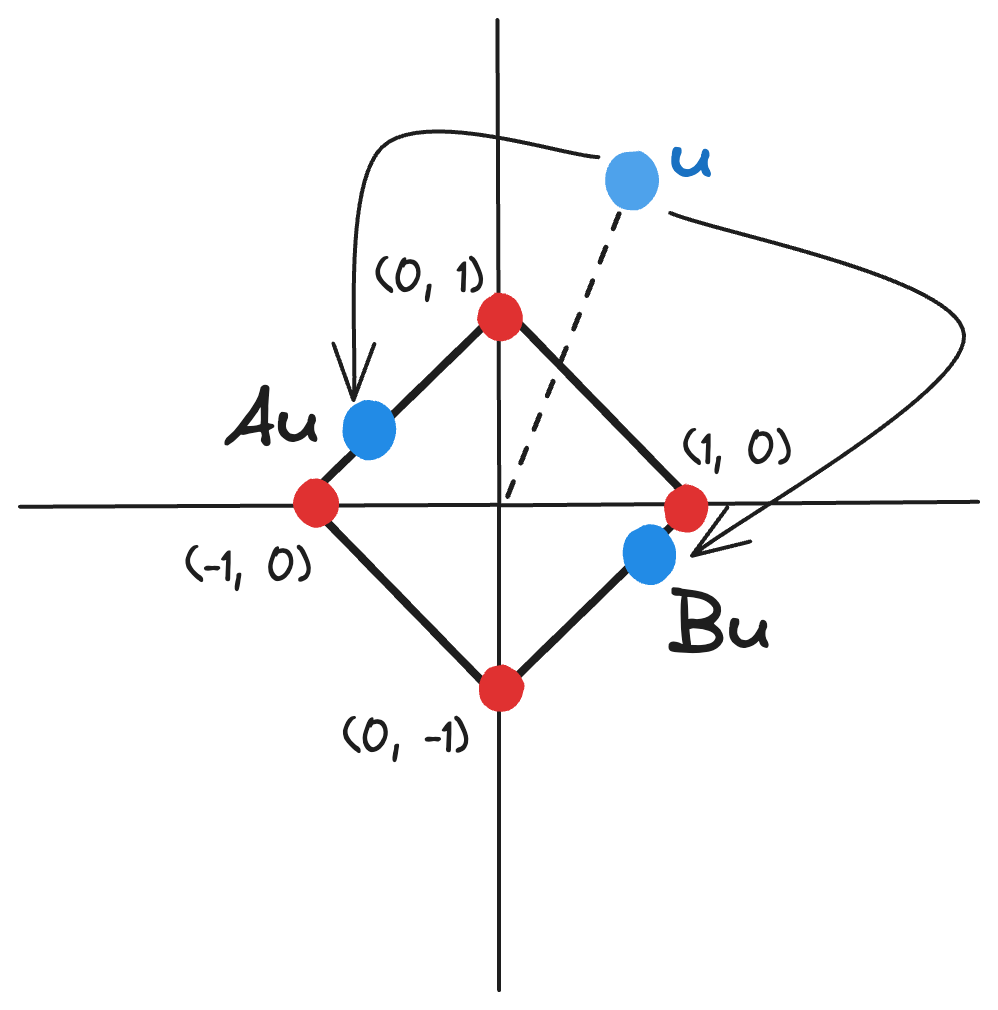}
    }
    \caption{Illustration of hyperplane and cross-polytope LSH functions
    for angular distance in $\mathbb{R}^2$. In hyperplane LSH, we draw random
    directions ($a$ and $b$) to define hyperplanes ($A$ and $B$), and record
    $+1$ or $-1$ depending on which side of the hyperplane a vector ($u$ and $v$)
    lies. For example, $h_a(u)=-1$, $h_a(v)=+1$, and $h_b(u)=h_b(v)=-1$. It is easy
    to see that the probability of a hash collision for two vectors $u$ and $v$ correlates
    with the angle between them. A cross-polytope LSH function, on the other hand,
    randomly rotates and normalizes (using matrix $A$ or $B$) the vector ($u$),
    and records the closest standard basis vector as its hash. Note that, the cross-polytope
    is the $L_1$ ball, which in $\mathbb{R}^2$ is a rotated square. As an example,
    $h_A(u) = -e_1$ and $h_B(u) = +e_1$.}
    \label{figure:lsh:angular}
\end{figure}

\subsubsection{Hyperplane LSH}
For this distance function, one simple LSH family is the set
of hash functions that project a vector
onto a randomly chosen direction and record the sign of the projection.
Put differently, a hash function in this family
is characterized by a random hyperplane, which is in turn defined by a unit vector
sampled uniformly at random. When applied to an input vector $u$,
the function returns a binary value (from $\{ -1, 1\}$) indicating on which
side of the hyperplane $u$ is located. This procedure, which is known as \emph{sign random projections}
or \emph{hyperplane LSH}~\citep{charikar2002rounding-algorithms},
is illustrated in Figure~\subref*{figure:lsh:angular:hyplerplane} and
formalized in the following claim.

\begin{theorem}
    For $\mathcal{X} \subset \mathbb{R}^d$ equipped with the angular distance of
    Equation~(\ref{equation:lsh:angular-distance}),
    the family $\mathcal{H} = \{ h_r \;|\; h_r(u) = \textsc{Sign}(\langle r, u \rangle), \; r \sim \mathbb{S}^{d - 1} \}$
    is $(\theta, (1 + \epsilon)\theta, 1 - \theta/\pi, 1 - (1+\epsilon)\theta/\pi)$-sensitive
    for $\theta \in [0, \pi]$, and $\mathbb{S}^{d-1}$ denoting the $d$-dimensional hypersphere.
\end{theorem}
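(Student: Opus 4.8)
The plan is to reduce the entire claim to a single geometric fact: for any two nonzero vectors $u, v \in \mathbb{R}^d$ with angular distance $\alpha = \delta(u,v) \in [0,\pi]$, a uniformly random hyperplane through the origin separates $u$ from $v$ with probability exactly $\alpha/\pi$; equivalently,
\begin{equation*}
    \probability_{r \sim \mathbb{S}^{d-1}}\big[ \textsc{Sign}(\langle r, u \rangle) \neq \textsc{Sign}(\langle r, v \rangle) \big] = \frac{\alpha}{\pi},
\end{equation*}
so that the collision probability is $p(\alpha) = 1 - \alpha/\pi$. Once this is in hand, the two conditions of Definition~\ref{definition:lsh:hash-family} are immediate: the map $\alpha \mapsto p(\alpha)$ is strictly decreasing on $[0,\pi]$, so $\delta(u,v) \leq \theta$ forces the collision probability to be at least $p(\theta) = 1 - \theta/\pi =: p_1$, while $\delta(u,v) > (1+\epsilon)\theta$ forces it to be at most $p((1+\epsilon)\theta) = 1 - (1+\epsilon)\theta/\pi =: p_2$, and $p_1 > p_2$ for every $\epsilon > 0$ and $\theta > 0$ in the stated range.

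To establish the geometric fact I would first dispose of the degenerate case where $u$ and $v$ are parallel (then $\alpha = 0$ and the collision probability is trivially $1$, up to a measure-zero event where $r$ is orthogonal to both). Otherwise $u$ and $v$ span a $2$-dimensional subspace $W$, and the crucial observation is that $\textsc{Sign}(\langle r, u \rangle)$ and $\textsc{Sign}(\langle r, v \rangle)$ depend on $r$ only through its orthogonal projection $r_W$ onto $W$; hence the event of interest depends only on $r_W$. I would then argue that the \emph{direction} of $r_W$ is uniformly distributed on the unit circle of $W$ (almost surely $r_W \neq \mathbf{0}$). Placing $u$ and $v$ inside $W$ as unit vectors at angular positions separated by $\alpha$, the hyperplane with normal $r$ separates $u$ from $v$ precisely when the direction of $r_W$ falls in one of the two arcs of length $\alpha$ lying strictly between the directions orthogonal to $u$ and to $v$; these arcs have total measure $2\alpha$, yielding probability $2\alpha/(2\pi) = \alpha/\pi$. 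The remainder is an elementary arc-length bookkeeping exercise that I will not grind through here.

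The main obstacle, modest as it is, is justifying rigorously that the projection of a uniform point on $\mathbb{S}^{d-1}$ onto a fixed $2$-plane has an angularly uniform direction. One route is the invariance of the uniform measure on $\mathbb{S}^{d-1}$ under the orthogonal group: any rotation of $W$ extends to an orthogonal map of $\mathbb{R}^d$ fixing $W^\perp$, which preserves the law of $r$ and hence the law of the direction of $r_W$, and the only rotation-invariant distribution on the circle is the uniform one. A cleaner, more self-contained alternative — the one I would actually write — replaces $r$ by a standard Gaussian $g \sim \mathcal{N}(\mathbf{0}, I_d)$ (legitimate, since $g/\lVert g \rVert_2 \sim \mathbb{S}^{d-1}$ and normalization does not affect signs): in any orthonormal basis of $W$ the two coordinates of $g$ are i.i.d.\ $\mathcal{N}(0,1)$, so $g_W$ is an isotropic planar Gaussian whose direction is manifestly uniform. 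With that lemma settled, the sensitivity claim drops out by the monotonicity argument above.
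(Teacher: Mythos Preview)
Your proposal is correct and follows the same approach as the paper: establish that the collision probability equals $1 - \alpha/\pi$ where $\alpha$ is the angle between the vectors, then read off the sensitivity parameters by monotonicity. The paper's proof simply asserts the geometric fact (``the probability that a randomly chosen hyperplane lies between them is $\theta/\pi$'') without further justification, so your projection-to-a-$2$-plane argument and the Gaussian trick for angular uniformity are a welcome fleshing-out of what the paper leaves implicit.
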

\begin{proof}
    If the angle between two vectors is $\theta$, then the probability that a randomly
    chosen hyperplane lies between them is $\theta / \pi$. As such, the probability that
    they lie on the same side of the hyperplane is $1 - \theta / \pi$. The claim follows.
\end{proof}

\subsubsection{Cross-polytope LSH}
There are a number of other hash families for the angular distance in addition to the basic
construction above. \emph{Spherical LSH}~\citep{andoni2014beyond} is one example, albeit a purely
theoretical one---a single hash computation from that family alone is considerably more expensive
than an exhaustive search over a million data points~\citep{andoni2015cross-polytope-lsh}!

What is known as \emph{Cross-polytope LSH}~\citep{andoni2015cross-polytope-lsh,terasawa2007spherical-lsh}
offers similar guarantees as the Spherical LSH but is a more practical construction.
A function from this family randomly rotates an input vector first, then outputs the closest
signed standard basis vector ($e_i$'s for $1 \leq i \leq d$) as the hash value.
This is illustrated for $\mathbb{R}^2$ in Figure~\subref*{figure:lsh:angular:cross-polytope},
and stated formally in the following result.

\begin{theorem}
    For $\mathcal{X} \subset \mathbb{S}^{d-1}$ equipped with the angular distance of
    Equation~(\ref{equation:lsh:angular-distance}) or equivalently the Euclidean distance,
    the following family constitutes an LSH:
    \begin{equation*}
        \mathcal{H} = \{ h_R \;|\; h_R(u) = \argmin_{e \in \{ \pm e_i \}_{i=1}^d} \lVert e - \frac{Ru}{\lVert R u \rVert_2} \rVert, R \in \mathbb{R}^{d \times d},\; R_{ij} \sim \mathcal{N}(0, 1) \},
    \end{equation*}
    where $\mathcal{N}(0, 1)$ is the standard Gaussian distribution.
    The probability of collision for unit vectors $u, v \in \mathbb{S}^{d-1}$ with $\lVert u - v \rVert < \tau$ is:
    \begin{equation*}
        \ln \frac{1}{\probability\Big[ h_R(u) = h_R(v) \Big]} = \frac{\tau^2}{4 - \tau^2} \ln d + \mathcal{O}_\tau\Big( \ln \ln d \Big).
    \end{equation*}
    Importantly:
    \begin{equation*}
        \rho = \frac{\log p_1}{\log p_2} = \frac{1}{(1 + \epsilon)^2} \frac{4 - (1 + \epsilon)^2 r^2}{4 - r^2} + o(1).
    \end{equation*}
\end{theorem}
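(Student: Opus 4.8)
The plan is to trade the random rotation for a pair of correlated Gaussian vectors, reduce the hash‑collision event to a statement about the coordinate of largest absolute value among $d$ i.i.d.\ Gaussians, read off the exponent from a Gaussian tail bound, and finally obtain the $\rho$ formula by elementary algebra. Write $\eta = \langle u, v\rangle$; since $\lVert u\rVert_2 = \lVert v\rVert_2 = 1$ we have $\tau^2 = \lVert u - v\rVert_2^2 = 2 - 2\eta$, so $\frac{\tau^2}{4 - \tau^2} = \frac{1-\eta}{1+\eta}$, and it is enough to show $\ln\big(1/\probability[h_R(u) = h_R(v)]\big) = \frac{1-\eta}{1+\eta}\ln d + \mathcal{O}_\tau(\ln\ln d)$ when $\lVert u - v\rVert_2 = \tau$; the case $\lVert u - v\rVert_2 < \tau$ then follows because the collision probability is monotone decreasing in distance.

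\textbf{Gaussianization and the collision event.} Put $X = Ru$ and $Y = Rv$. Because $R$ has i.i.d.\ $\mathcal{N}(0,1)$ entries, $(X, Y)$ is a centered Gaussian pair with $X, Y \sim \mathcal{N}(0, I_d)$, $\ev[X_i Y_i] = \eta$, and $\ev[X_i Y_j] = 0$ for $i \neq j$; equivalently $Y = \eta X + \sqrt{1 - \eta^2}\,Z$ with $Z \sim \mathcal{N}(0, I_d)$ independent of $X$. Normalizing does not change which signed basis vector is nearest, so $h_R(u) = \operatorname{sign}(X_I)\, e_I$ with $I = \argmax_i |X_i|$, and likewise $h_R(v) = \operatorname{sign}(Y_J)\, e_J$; a collision is exactly the event $\{I = J\} \cap \{\operatorname{sign}(X_I) = \operatorname{sign}(Y_I)\}$. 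The law of $(X, Y)$ is invariant under permuting coordinates and under simultaneously flipping the sign of the $i$-th coordinates of $X$ and $Z$, so all $2d$ (coordinate, sign) outcomes are equally likely, whence $p := \probability[h_R(u) = h_R(v)] = 2d\, \probability[I = 1,\ X_1 > 0,\ J = 1,\ Y_1 > 0]$. Pulling out $\probability[I = 1,\ X_1 > 0] = 1/(2d)$, this equals the conditional probability $q := \probability[\, J = 1,\ Y_1 > 0 \mid I = 1,\ X_1 > 0 \,]$.

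\textbf{Extreme‑value step and the exponent.} Extreme‑value theory for $d$ i.i.d.\ copies of $|\mathcal{N}(0,1)|$ gives $M := \max_i |X_i| = \sqrt{2\ln d} - \Theta\!\big(\tfrac{\ln\ln d}{\sqrt{\ln d}}\big)$ with probability $1 - o(1)$, so $M^2 = 2\ln d - \ln\ln d + \mathcal{O}(1)$; moreover, conditioning on $\{I = 1,\ X_1 = s\}$ with $s$ in that range only truncates the other coordinates of $X$ to $[-s, s]$ with $s \gg 1$, so the bulk is essentially undisturbed and $\max_{j \neq 1} |Y_j| = \sqrt{2\ln d}\,(1 + o(1))$ holds conditionally even after mixing in the $\eta X_j$ terms. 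Given $X_1 = s > 0$ we have $Y_1 = \eta s + \sqrt{1 - \eta^2}\, Z_1$, and (with $Y_1$ positive with high probability) $\{J = 1,\ Y_1 > 0\}$ becomes $\{Y_1 > \max_{j \neq 1} |Y_j|\} = \{Z_1 > s\sqrt{\tfrac{1-\eta}{1+\eta}}\,(1 + o(1))\}$. Combining $s^2 = 2\ln d - \ln\ln d + \mathcal{O}(1)$ with the Gaussian tail $\ln \probability[Z_1 > t] = -t^2/2 + \mathcal{O}(\ln t)$ and integrating over the narrow range of $s$ yields
\[
    \ln \frac{1}{p} = \ln\frac{1}{q} = \frac{1 - \eta}{1 + \eta} \ln d + \mathcal{O}_\tau(\ln\ln d) = \frac{\tau^2}{4 - \tau^2} \ln d + \mathcal{O}_\tau(\ln\ln d),
\]
the claimed collision formula; the matching lower bound on $p$ uses the same estimates but retains the event that $Z_1$ only slightly exceeds its threshold and that no $|Y_j|$, $j \neq 1$, surpasses $Y_1$, of probability $d^{-(1-\eta)/(1+\eta) - o(1)}$. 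Finally, since the per‑coordinate collision probability is monotone in $\lVert u - v\rVert_2$, the family is $(r, (1+\epsilon)r, p_1, p_2)$‑sensitive with $p_1, p_2$ the collision probabilities at distances $r$ and $(1+\epsilon)r$; substituting the formula above,
\[
    \rho = \frac{\ln p_1}{\ln p_2} = \frac{\tfrac{r^2}{4 - r^2}\ln d + \mathcal{O}(\ln\ln d)}{\tfrac{(1+\epsilon)^2 r^2}{4 - (1+\epsilon)^2 r^2}\ln d + \mathcal{O}(\ln\ln d)} = \frac{1}{(1+\epsilon)^2} \cdot \frac{4 - (1+\epsilon)^2 r^2}{4 - r^2} + o(1).
\]

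The main obstacle is the extreme‑value step: one must show rigorously that after conditioning on $X_1 = s$ being the coordinate of maximum absolute value, $\max_{j \neq 1} |Y_j|$ is still $\sqrt{2\ln d}\,(1 + o(1))$ — which needs careful union and tail bounds over the $d - 1$ correlated coordinates $Y_j = \eta X_j + \sqrt{1 - \eta^2}\, Z_j$ with the $X_j$ truncated — and then to control every $o(1)$ slack so that all of them collapse into a single $\mathcal{O}_\tau(\ln\ln d)$ term, simultaneously for the upper and the lower bound on $p$.
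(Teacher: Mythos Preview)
Your approach is correct in spirit and arrives at the right exponent, but it is genuinely different from the paper's route. The paper does not use extreme-value theory at all. Instead, it fixes $u=e_1$ and $v=\alpha e_1+\beta e_2$ (by rotational invariance), writes the collision probability as
\[
2d\,\ev_{X_1,Y_1}\Big[\probability_{X_2,Y_2}\big[(X_2,Y_2)\in S_{X_1,Y_1}\big]^{d-1}\Big],
\]
where $S_{X_1,Y_1}$ is an explicit parallelogram in $\mathbb{R}^2$, and then estimates the Gaussian measure of $S$ and its complement via a purely geometric lemma bounding $\mathcal{G}(A)$ in terms of the distance $\Delta_A$ from the origin to $A$. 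Substituting back yields a Beta-function integral, and Stirling's approximation gives the $d^{-\tau^2/(4-\tau^2)}$ scaling directly.

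Your correlated-Gaussian reduction $Y=\eta X+\sqrt{1-\eta^2}\,Z$ and the identity $p=q=\probability[J=1,Y_1>0\mid I=1,X_1>0]$ are clean and arguably more transparent about \emph{why} the exponent is $(1-\eta)/(1+\eta)$: it falls out of a single Gaussian-tail computation once you know the maximum sits at $\sqrt{2\ln d}$. The paper's approach, by contrast, never conditions on the extreme value; it keeps everything as a two-dimensional integral and extracts the exponent from the Beta function's asymptotics. What the paper buys is that the $\mathcal{O}_\tau(\ln\ln d)$ error term comes out mechanically from the geometric lemma plus Stirling, whereas in your approach---as you correctly flag---the honest work is showing that $\max_{j\neq1}|Y_j|=\sqrt{2\ln d}\,(1+o(1))$ \emph{after} conditioning on $\{I=1,X_1=s\}$, and that all slacks compress into a single $\mathcal{O}_\tau(\ln\ln d)$. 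This is doable with standard union/tail bounds but is more delicate than the paper's computation.
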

\begin{proof}
    We wish to show that, for two unit vectors $u, v \in \mathbb{S}^{d-1}$ with $\lVert u - v \rVert < \tau$,
    the expression above for the probability of a hash collision is correct.
    That, indeed, completes the proof of the theorem itself. To show that, we will
    take advantage of the spherical symmetry of Gaussian random variables---we used this property
    in the proof of Theorem~\ref{theorem:instability:orthogonality-random-vectors}.

    By the spherical symmetry of Gaussians, without loss of generality, we can assume that
    $u = e_1$, the first standard basis, and $v = \alpha e_1 + \beta e_2$,
    where $\alpha^2 + \beta^2 = 1$ (so that $v$ has unit norm) and $(\alpha - 1)^2 + \beta^2 = \tau^2$
    (because the distance between $u$ and $v$ is $\tau$).

    Let us now model the collision probability as follows:
    \begin{align*}
        \probability\Big[ &h(u) = h(v) \Big] = 2d \probability\Big[ h(u) = h(v) = e_1 \Big] \\
        &= 2d \probability_{X, Y \sim \mathcal{N}(0, I)}\Big[ \forall \; i,\; \lvert X_i \rvert \leq X_1 \land
        \lvert \alpha X_i + \beta Y_i \rvert \leq \alpha X_1 + \beta Y_1 \Big] \\
        &= 2d \ev_{X_1, Y_1 \sim \mathcal{N}(0, 1)} \Bigg[ 
            \probability_{X_2, Y_2}\Big[ \lvert X_2 \rvert \leq X_1 \land \lvert \alpha X_2 + \beta Y_2 \rvert \leq \alpha X_1 + \beta Y_1 \Big]^{d-1}
        \Bigg]. \numberthis \label{equation:lsh:cross-polytope:collision-prob}
    \end{align*}
    The first equality is due again to the spherical symmetry of the hash functions
    and the fact that there are $2d$ signed standard basis vectors.
    The second equality simply uses the expressions for $u=e_1$ and $v=\alpha e_1 + \beta e_2$.
    The final equality follows because of the independence of the coordinates of $X$ and $Y$,
    which are sampled from a $d$-dimensional isotropic Gaussian distribution.

    \begin{figure}[t]
        \centering
        \subfloat[]{
            \label{figure:lsh:angular:cross-polytope:planar-set}
            \includegraphics[width=0.4\linewidth]{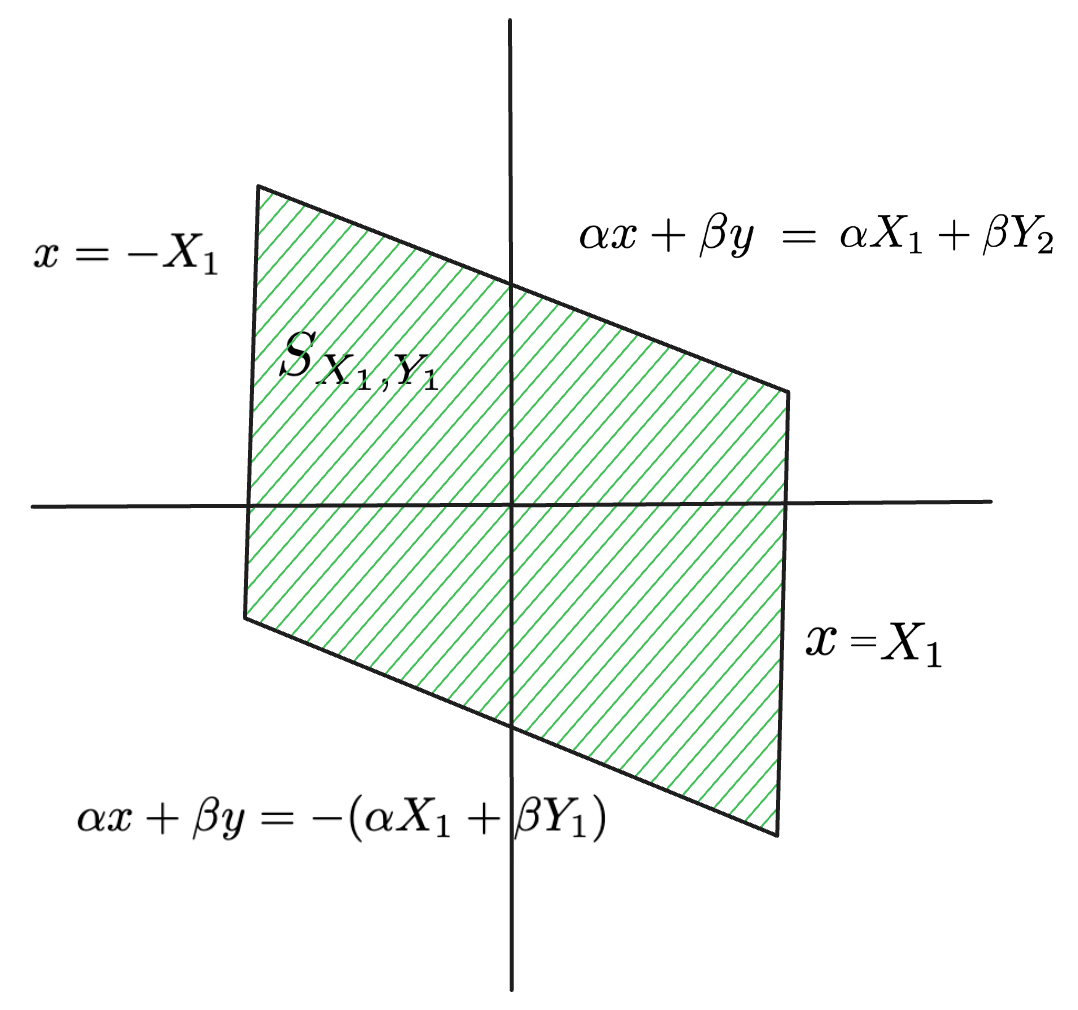}
        }
        \subfloat[]{
            \label{figure:lsh:angular:cross-polytope:lower-bound}
            \includegraphics[width=0.4\linewidth]{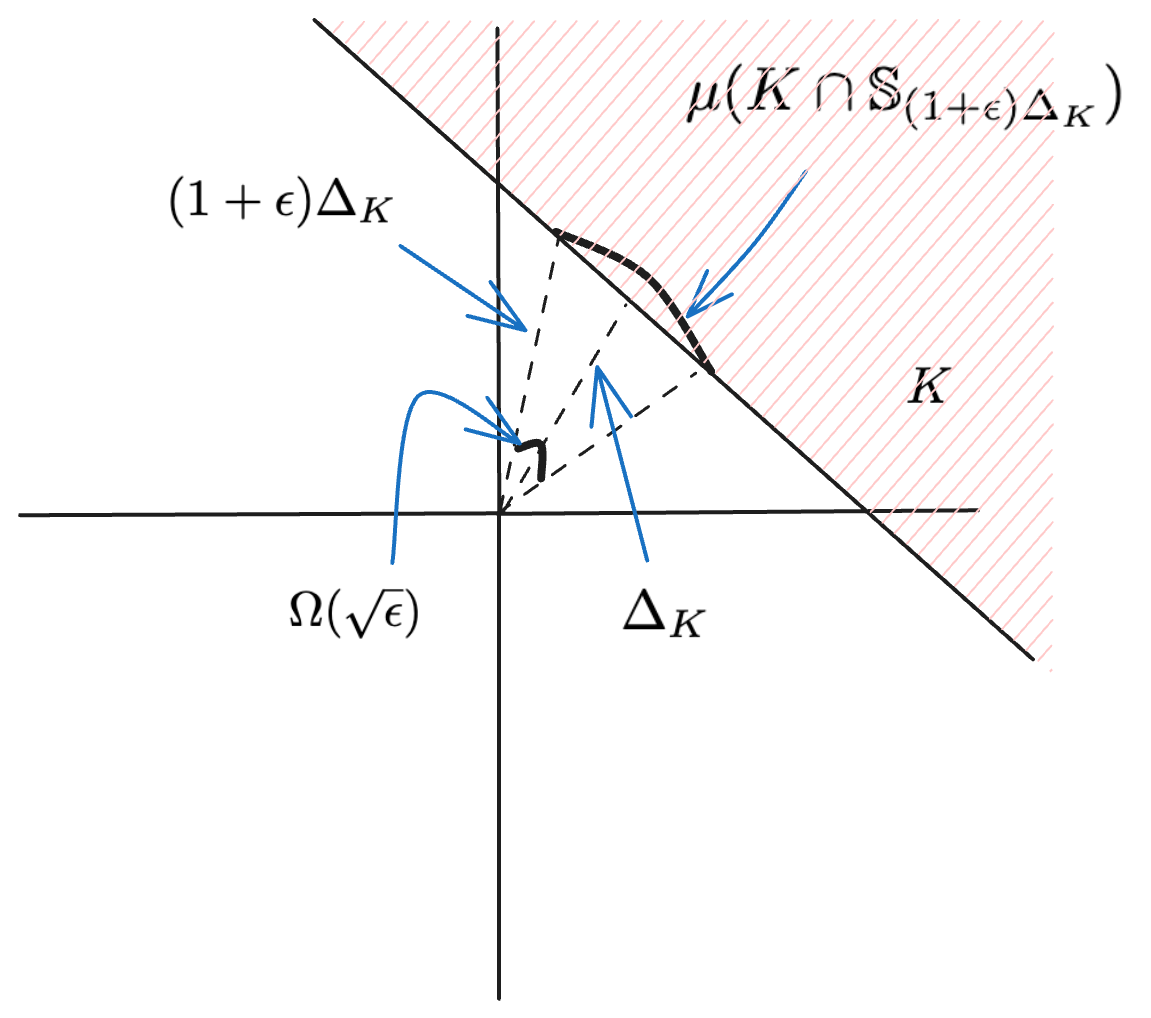}
        }
        \caption{Illustration of the set $S_{X_1, Y_1} = \{ \lvert x \rvert \leq X_1 \land \lvert \alpha x + \beta y \rvert \leq \alpha X_1 + \beta Y_1 \}$ in (a). Figure (b) visualizes the derivation of Equation~(\ref{equation:lsh:angular:cross-polytope:proof-lower-bound}).}
        \label{figure:lsh:angular:cross-polytope:proof}
    \end{figure}

    The innermost term in Equation~(\ref{equation:lsh:cross-polytope:collision-prob}) is
    the Gaussian measure of the closed, convex set
    $\{ \lvert x \rvert \leq X_1 \land \lvert \alpha x + \beta y \rvert \leq \alpha X_1 + \beta Y_1 \}$,
    which is a bounded plane in $\mathbb{R}^2$. This set, which we denote by $S_{X_1, Y_1}$,
    is illustrated in Figure~\subref*{figure:lsh:angular:cross-polytope:planar-set}.
    Then we can expand Equation~(\ref{equation:lsh:cross-polytope:collision-prob}) as follows:
    \begin{align}
        \label{equation:lsh:angular:cross-polytope:prob-collision-expanded}
        2d \ev_{X_1, Y_1 \sim \mathcal{N}(0, 1)} &\Bigg[ 
            \probability_{X_2, Y_2}\Big[ S_{X_1, Y_1} \Big]^{d-1}
        \Bigg] \\
        &= 2d \int_0^1 \probability_{X_1, Y_1 \sim \mathcal{N}(0, 1)} \Big[ \probability [ S_{X_1, Y_1} ] \geq t^{\frac{1}{d - 1}} \Big] dt.
    \end{align}
    We therefore need to expand $\probability[S_{X_1, Y_1}]$ in order to complete the expression above.
    The rest of the proof derives that quantity.
    
    \textbf{Step 1}.
    Consider $\probability[S_{X_1, Y_1}] = \mathcal{G}(S_{X_1, Y_1})$, which is the standard Gaussian
    measure of the set $S_{X_1, Y_1}$. In effect, we are interested in $\mathcal{G}(S)$ for some
    bounded convex subset $S \subset \mathbb{R}^2$. We need the following lemma to derive an expression
    for $\mathcal{G}(S)$. But first define $\mu_A(r)$ as the Lebesgue measure of the intersection of a circle
    of radius $r$ ($\mathbb{S}_r$) with the set $A$, normalized by the circumference of $\mathbb{S}_r$,
    so that $0 \leq \mu_A(r) \leq 1$ is a probability measure:
    \begin{equation*}
        \mu_A(r) \triangleq \frac{\mu(A \cap \mathbb{S}_r)}{2 \pi r},
    \end{equation*}
    and denote by $\Delta_A$ the distance from the origin to A (i.e., $\Delta_A \triangleq \inf \{ r > 0 \;|\; \mu_A(r) > 0 \}$).

    \begin{lemma}
        For the closed set $A \subset \mathbb{R}^2$ with $\mu_A(r)$ non-decreasing:
        \begin{equation*}
            \sup_{r > 0} \Big( \mu_A(r) \cdot e^{-r^2/2} \Big) \leq \mathcal{G}(A) \leq e^{-\Delta^2_A/2}.
        \end{equation*}
    \end{lemma}
    \begin{proof}
        The upper-bound can be derived as follows:
        \begin{equation*}
            \mathcal{G}(A) = \int_0^\infty r \mu_A(r) \cdot e^{-r^2/2} dr \leq \int_{\Delta_A}^\infty r e^{-r^2/2} dr = e^{-\Delta_A^2/2}.
        \end{equation*}
        For the lower-bound:
        \begin{equation*}
            \mathcal{G}(A) = \int_0^\infty r \mu_A(r) \cdot e^{-r^2/2} dr \geq \mu_A(r^\prime) \int_{r^\prime}^\infty r e^{-r^2/2} dr =
            \mu_A(r^\prime) e^{-(r^\prime)^2/2},
        \end{equation*}
        for all $r^\prime > 0$. The inequality holds because $\mu_A(\cdot)$ is non-decreasing.
    \end{proof}

    Now, $K^\complement \triangleq S_{X_1, Y_1}$ is a convex set, so for its complement,
    $K \subset \mathbb{R}^2$, $\mu_K(\cdot)$ is non-decreasing.
    Using the above lemma, that fact implies the following for small $\epsilon$:
    \begin{equation*}
        \Omega(\sqrt{\epsilon} \cdot e^{-(1+\epsilon)^2\Delta^2_K/2}) \leq \mathcal{G}(K) \leq e^{-\Delta_K^2/2}.
    \end{equation*}
    The lower-bound uses the fact that $\mu_K \Big( (1 + \epsilon) \Delta_K \Big) = \Omega (\sqrt{\epsilon})$,
    because:
    \begin{equation}
        \label{equation:lsh:angular:cross-polytope:proof-lower-bound}
        \mu(K \cap \mathbb{S}_{(1+\epsilon)\Delta_K}) = (1 + \epsilon) \Delta_K \arccos \Big(
            \frac{\Delta_K}{(1 + \epsilon) \Delta_K} \Big) \approx (1 + \epsilon) \Delta_K \sqrt{\epsilon}.
    \end{equation}
    See Figure~\subref*{figure:lsh:angular:cross-polytope:lower-bound} for a helpful illustration.

    Since we are interested in the measure of $K^\complement = S_{X_1, Y_1}$, we can apply
    the result above directly to obtain:
    \begin{equation}
        \label{equation:lsh:angular:cross-polytope:proof-measure-planar-set}
        1 - e^{-\Delta(u, v)^2/2} \leq \probability [S_{X_1, Y_1}] \leq
        1 - \Omega\Big( \sqrt{\epsilon} \cdot e^{-(1+\epsilon)^2\Delta(u, v)^2/2} \Big),
    \end{equation}
    where we use the notation $\Delta_{K} = \Delta(u, v) = \min \{ u, \alpha u + \beta v \}$.

    \textbf{Step 2}. For simplicity, first consider the side of Equation~(\ref{equation:lsh:angular:cross-polytope:proof-measure-planar-set})
    that does not depend on $\epsilon$,
    and substitute that into Equation~(\ref{equation:lsh:angular:cross-polytope:prob-collision-expanded}).
    We obtain:
    \begin{align*}
        2d \int_0^1 \probability_{X_1, Y_1 \sim \mathcal{N}(0, 1)} &\Big[ \probability [ S_{X_1, Y_1} ] \geq t^{\frac{1}{d - 1}} \Big] dt \\
        &=
        2d \int_0^1 \probability_{X_1, Y_1 \sim \mathcal{N}(0, 1)} \Big[ e^{-\Delta(X_1, Y_1)^2/2} \leq 1 - t^{\frac{1}{d - 1}}  \Big] dt \\
        &= 2d \int_0^1 \probability_{X_1, Y_1 \sim \mathcal{N}(0, 1)} \Big[ \Delta(X_1, Y_1) \geq \sqrt{-2 \log \Big( 1 - t^{\frac{1}{d - 1}} \Big)}  \Big] dt. \numberthis \label{equation:lsh:angular:cross-polytope:prob-collision-noepsilon-substitute}
    \end{align*}

    \textbf{Step 3}. We are left with bounding $\probability[ \Delta(X_1, Y_1) \geq \theta ]$.
    $\Delta(X_1, Y_1) \geq \theta$ is, by definition, the set that is the intersection of two
    half-planes: $X_1 \geq \theta$ and $\alpha X_1 + \beta Y_1 \geq \theta$. If we denote this set
    by $K$, then we are again interested in the Gaussian measure of $K$. For small $\epsilon$,
    we can apply the lemma above to show that:
    \begin{equation}
        \Omega \Big( \epsilon e^{-(1 + \epsilon)^2 \Delta_K^2} \Big) \leq \mathcal{G}(K) \leq e^{-\Delta^2_K /2},
    \end{equation}
    where the constant factor in $\Omega$ depends on the angle between the two half-planes. That is
    because $\mu(K \cap \mathbb{S}_{(1+ \epsilon) \Delta_K})$ is $\epsilon$ times that angle.

    It is easy to see that $\Delta_K^2 = \frac{4}{4 - \tau^2} \cdot \theta^2$, so that we arrive at the following
    for small $\epsilon$ and every $\theta \geq 0$:
    \begin{equation}
        \label{equation:lsh:angular:cross-polytope:proof-bound-delta}
        \Omega_\tau \Big( \epsilon \cdot e^{-(1 + \epsilon)^2 \cdot \frac{4}{4 - \tau^2}\cdot \frac{\theta^2}{2}} \Big) \leq
        \probability_{X_1, Y_1 \sim \mathcal{N}(0, 1)} \Big[ \Delta(X_1, Y_1) \geq \theta \Big] \leq
        e^{- \frac{4}{4 - \tau^2}\cdot \frac{\theta^2}{2}}.
    \end{equation}

    \textbf{Step 4}. Substituting Equation~(\ref{equation:lsh:angular:cross-polytope:proof-bound-delta}) into
    Equation~(\ref{equation:lsh:angular:cross-polytope:prob-collision-noepsilon-substitute}) yields:
    \begin{align*}
        2d \int_0^1 \probability_{X_1, Y_1 \sim \mathcal{N}(0, 1)} &\Big[ \Delta(X_1, Y_1) \geq \sqrt{-2 \log \Big( 1 - t^{\frac{1}{d - 1}} \Big)}  \Big] dt \\
        &= 2d \int_0^1 \Big( 1 - t^{\frac{1}{d - 1}} \Big)^{\frac{4}{4 - \tau^2}}dt \\
        &= 2d(d-1) \int_0^1 (1 - x)^{\frac{4}{4 - \tau^2}} x^{d-2} dt \\
        &= 2d (d - 1) B\Big( \frac{8 - \tau^2}{4 - \tau^2}; d - 1 \Big) \\
        &= 2d \Theta_\tau(1) d^{-\frac{4}{4 - \tau^2}},
    \end{align*}
    where $B$ denotes the Beta function and the last step uses the Stirling approximation.

    The result above can be expressed as follows:
    \begin{equation*}
        \ln \frac{1}{\probability [ h(u) = h(v)]} = \frac{\tau^2}{4 - \tau^2} \ln d \pm \mathcal{O}_\tau(1).
    \end{equation*}

    \textbf{Step 5}. Repeating Steps 2 through 4 with the expressions that involve $\epsilon$ in
    Equations~(\ref{equation:lsh:angular:cross-polytope:proof-measure-planar-set})
    and~(\ref{equation:lsh:angular:cross-polytope:proof-bound-delta}) gives the desired result.
\end{proof}

Finally, \cite{andoni2015cross-polytope-lsh} show that,
instead of applying a random rotation using Gaussian random variables, it is sufficient to
use a pseudo-random rotation based on Fast Hadamard Transform. In effect,
they replace the random Gaussian matrix $R$ in the construction above with
three consecutive applications of $HD$, where $H$ is the Hadamard matrix
and $D$ is a random diagonal sign matrix (where the entries on the diagonal take
values from $\{\pm 1\}$).

\subsection{Euclidean Distance}
\label{section:lsh:euclidean}
\cite{datar2004pstable-lsh} proposed the first LSH family for the Euclidean distance,
$\delta(u, v) = \lVert u - v \rVert_2$. Their construction relies on the notion of
\emph{$p$-stable distributions} which we define first.

\begin{definition}[$p$-stable Distribution]
    A distribution $\mathcal{D}_p$ is said to be $p$-stable if $\sum_{i=1}^n \alpha_i Z_i$,
    where $\alpha_i \in \mathbb{R}$ and $Z_i \sim \mathcal{D}_p$, has the same distribution
    as $\lVert \alpha \rVert_p Z$, where $\alpha = [\alpha_1, \alpha_2, \ldots, \alpha_n ]$
    and $Z \sim \mathcal{D}_p$. As an example, the Gaussian distribution is $2$-stable.
\end{definition}

Let us state this property slightly differently so it is easier to understand its
connection to LSH. Suppose we have an arbitrary vector $u \in \mathbb{R}^d$.
If we construct a $d$-dimensional random vector $\alpha$ whose coordinates are independently sampled from
a $p$-stable distribution $\mathcal{D}_p$, then the inner product $\langle \alpha, u \rangle$ is
distributed according to $\lVert u \rVert_p Z$ where $Z \sim \mathcal{D}_p$.
By linearity of inner product, we can also see that $\langle \alpha, u \rangle - \langle \alpha, v \rangle$,
for two vectors $u, v \in \mathbb{R}^d$, is distributed as $\lVert u - v \rVert_p Z$.
This particular fact plays an important role in the proof of the following result.

\begin{theorem}
    For $\mathcal{X} \subset \mathbb{R}^d$ equipped with the Euclidean distance,
    a $2$-stable distribution $\mathcal{D}_2$, and the uniform distribution $U$ over
    the interval $[0, r]$,
    the following family is $(r, (1 + \epsilon)r, p(r), p((1 + \epsilon) r))$-sensitive:
    \begin{equation*}
     \mathcal{H} = \{ h_{\alpha, \beta} \;|\; h_{\alpha, \beta}(u) = 
    \lfloor \frac{\langle \alpha, u \rangle + \beta }{r} \rfloor, \; \alpha \in \mathbb{R}^d,\; \alpha_i \sim \mathcal{D}_2,\;
    \beta \sim U[0, r] \},
    \end{equation*}
    where:
    \begin{equation*}
        p(x) = \int_{t=0}^{r} \frac{1}{x} f\Big( \frac{t}{x} \Big) \Big( 1 - \frac{t}{r} \Big) dt,
    \end{equation*}
    and $f$ is the probability density function of the \emph{absolute value} of $\mathcal{D}_2$.
\end{theorem}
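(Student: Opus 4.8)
The plan is to reduce the sensitivity conditions to a single computation: the probability that $h_{\alpha,\beta}$ sends two points to the same bucket depends only on their Euclidean distance, and as a function of that distance it is exactly $p(\cdot)$ and is monotonically non-increasing. Given that, $\lVert u-v\rVert_2 \le r$ forces the collision probability to be $p(\lVert u-v\rVert_2)\ge p(r)$, while $\lVert u-v\rVert_2 > (1+\epsilon)r$ forces it to be $\le p((1+\epsilon)r)$, which is precisely $(r,(1+\epsilon)r,p(r),p((1+\epsilon)r))$-sensitivity. So the whole proof is the collision-probability calculation plus a monotonicity check.

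First I would fix $u,v\in\mathbb{R}^d$, write $c=\lVert u-v\rVert_2$, and note that $h_{\alpha,\beta}(u)=h_{\alpha,\beta}(v)$ exactly when $\langle\alpha,u\rangle+\beta$ and $\langle\alpha,v\rangle+\beta$ land in the same half-open interval of the grid $\{[kr,(k+1)r)\}_{k\in\mathbb{Z}}$. Two observations carry the argument. (i) By the defining property of $2$-stable distributions applied to the vector $\alpha$ with coefficients $u-v$, the ``gap'' $\langle\alpha,u\rangle-\langle\alpha,v\rangle=\langle\alpha,u-v\rangle$ is distributed as $\lVert u-v\rVert_2\, Z = cZ$ with $Z\sim\mathcal{D}_2$, so its absolute value has the law of $c\lvert Z\rvert$, whose density at $t\ge 0$ is $\tfrac{1}{c}f(t/c)$, where $f$ is the density of $\lvert Z\rvert$. (ii) Conditioning on the event that this gap equals $t$, the offset $\beta$—uniform on $[0,r]$ and independent of $\alpha$—makes the common translate of the two projections uniform modulo $r$; hence the conditional probability that the two projections avoid straddling a grid point (equivalently, lie in the same bucket) is the chance that an interval of length $t$ placed with uniformly random phase over a period $r$ contains no grid point, namely $\max(0,1-t/r)$.

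Combining (i) and (ii) by conditioning on the gap and integrating gives
\[
    \probability_{\mathcal{H}}\big[ h_{\alpha,\beta}(u) = h_{\alpha,\beta}(v) \big]
    = \int_0^{\infty} \frac{1}{c}\, f\Big(\frac{t}{c}\Big)\, \max\Big(0,\, 1 - \frac{t}{r}\Big)\, dt
    = \int_0^{r} \frac{1}{c}\, f\Big(\frac{t}{c}\Big)\, \Big(1 - \frac{t}{r}\Big)\, dt = p(c),
\]
which is exactly $p(\cdot)$ evaluated at the interpoint distance. For monotonicity, the substitution $s=t/c$ yields $p(c)=\int_0^{r/c} f(s)\,(1-cs/r)\,ds$, an integral of a non-negative integrand over a domain $[0,r/c]$ that shrinks as $c$ grows, with the integrand itself pointwise non-increasing in $c$ (and non-negative there, since $cs/r\le 1$ on the range of integration); comparing $p(c_1)$ and $p(c_2)$ for $c_1<c_2$ term by term shows $p(c_1)\ge p(c_2)$. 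Substituting $c=r$ and $c=(1+\epsilon)r$ then closes the argument.

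The step I expect to need the most care is observation (ii): cleanly justifying that integrating $\beta$ over a full period $[0,r]$, and using its independence from $\alpha$, really does decouple the bucket-collision event from the (random) absolute positions of $\langle\alpha,u\rangle$ and $\langle\alpha,v\rangle$, so that the event reduces to the elementary ``length-$t$ interval meets a uniformly phased grid'' computation—including the harmless boundary case $t=r$, which contributes measure zero and is why the integrand is $1-t/r$ on $[0,r)$ with the $\max$ collapsing the tail. The $p$-stable change of variables in (i) and the monotonicity estimate are then routine.
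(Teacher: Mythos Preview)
Your proposal is correct and follows essentially the same route as the paper: condition on the projected gap, use $2$-stability to identify its law as $c\lvert Z\rvert$, use the uniform offset $\beta$ to compute the conditional collision probability as $1-t/r$ on $[0,r)$, and integrate. You are in fact slightly more thorough than the paper, which stops at ``if $\lVert u-v\rVert\le x$ then the collision probability is $\ge p(x)$; it is easy to complete the proof from here,'' whereas you spell out the monotonicity of $p$ via the substitution $s=t/c$.
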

\begin{proof}
    The key to proving the claim is modeling the probability of a hash collision for two
    arbitrary vectors $u$ and $v$: $\probability\Big[ h_{\alpha, \beta}(u) = h_{\alpha, \beta}(v) \Big]$.
    That event can be expressed as follows:
    \begin{align*}
        \probability\Big[ h_{\alpha, \beta}(u) &= h_{\alpha, \beta}(v) \Big] = 
        \probability\Big[ \Big\lfloor \frac{\langle \alpha, u \rangle + \beta}{r} \Big\rfloor =  \Big\lfloor \frac{\langle \alpha, v \rangle + \beta}{r} \Big\rfloor \Big] \\
        &= \probability\Big[ \underbrace{\lvert \langle \alpha, u - v \rangle \rvert < r}_{\textit{Event A}} \;\land \\
        &\underbrace{\langle \alpha, u \rangle + \beta \textit{ and } \langle \alpha, v \rangle + \beta \textit{ do not straddle an integer }}_{\textit{Event B}} \Big].
    \end{align*}
    Using the $2$-stability of $\alpha$, Event A is equivalent to $\lVert u - v \rVert_2 \lvert Z \rvert < r$,
    where $Z$ is drawn from $\mathcal{D}_2$. The probability of the complement of Event B is simply
    the ratio between $\langle \alpha, u - v \rangle$ and $r$. Putting all that together, we obtain that:
    \begin{align*}
        \probability\Big[ h_{\alpha, \beta}(u) = h_{\alpha, \beta}(v) \Big] &= 
        \int_{z = 0}^{\frac{r}{\lVert u - v \rVert_2}} f(z) \Big( 1 - \frac{z \lVert u - v \rVert_2}{r} \Big) dz \\
        &= \int_{t = 0}^{r} \frac{1}{\lVert u - v \rVert_2} f(\frac{t}{\lVert u - v \rVert_2}) \Big( 1 - \frac{t}{r} \Big) dt,
    \end{align*}
    where we derived the last equality by the variable change $t = z \lVert u - v \rVert_2$.
    Therefore, if $\lVert u - v \rVert \leq x$:
    \begin{equation*}
        \probability\Big[ h_{\alpha, \beta}(u) = h_{\alpha, \beta}(v) \Big] \geq
        \int_{t = 0}^{r} \frac{1}{x} f(\frac{t}{x}) \Big( 1 - \frac{t}{r} \Big) dt = p(x).
    \end{equation*}
    It is easy to complete the proof from here.
\end{proof}

\subsection{Inner Product}
\label{section:lsh:ip}

Many of the arguments that establish the existence of an LHS family for a distance function
of interest rely on triangle inequality. Inner product as a measure of similarity, however,
does not enjoy that property. As such, developing an LSH family for inner product requires
that we somehow transform the problem from MIPS to NN search or MCS search,
as was the case in Chapter~\ref{chapter:branch-and-bound}.

Finding the right transformation that results in improved hash quality---as determined by
$\rho$---is the question that has been explored by several works in the
past~\citep{Neyshabur2015lsh-mips,shrivastava2015alsh,shrivastava2014alsh,yan2018norm-ranging-lsh}.

Let us present a simple example. Note that, we may safely assume that queries are unit
vectors (i.e., $q \in \mathbb{S}^{d-1}$), because the norm of the query does not
change the outcome of MIPS. 

Now, define the transformation $\phi_d: \mathbb{R}^d \rightarrow \mathbb{R}^{d + 1}$,
first considered by~\cite{xbox-tree}, as follows: $\phi_d(u) = [u, \; \sqrt{1 - \lVert u \rVert_2^2}]$.
Apply this transformation to data points in $\mathcal{X}$.
Clearly, $\lVert \phi_d(u) \rVert_2 = 1$ for all $u \in \mathcal{X}$.
Separately, pad the query points with a single $0$: $\phi_q(v) = [v; 0] \in \mathbb{R}^{d+1}$.

We can immediately verify that $\langle q, u \rangle = \langle \phi_q(q), \phi_d(u) \rangle$
for a query $q$ and data point $u$.
But by applying the transformations $\phi_d(\cdot)$ and $\phi_q(\cdot)$,
we have reduced the problem to MCS! As such, we may use any of existing LSH families
that we have seen for angular distance in Section~\ref{section:lsh:angular} for MIPS.

\medskip

There has been much debate over the suitability of the standard LSH framework
for inner product, with some works extending the framework to what is known as \emph{asymmetric}
LSH~\citep{shrivastava2014alsh,shrivastava2015alsh}. It turns out, however, that none of that
is necessary. In fact, as~\cite{Neyshabur2015lsh-mips} argued formally and demonstrated empirically,
the simple scheme we described above sufficiently addresses MIPS.

\section{Closing Remarks}
Much like branch-and-bound algorithms, an LSH approach to top-$k$ retrieval
rests on a solid theoretical foundation. There is a direct link between all that is
developed theoretically and the accuracy of an LSH-based top-$k$ retrieval system.

Like tree indices, too, the LSH literature is arguably mature.
There is therefore not a great deal of open questions left to investigate in its foundation,
with many recent works instead exploring learnt hash functions or
its applications in other domains.

What remains open and exciting in the context of top-$k$ retrieval, however,
is the possibility of extending the theory of LSH to explain the success of
other retrieval algorithms. We will return to this discussion in
Chapter~\ref{chapter:ivf}.

\bibliographystyle{abbrvnat}
\bibliography{biblio}

\chapter{Graph Algorithms}
\label{chapter:graph}

\abstract{
We have seen two major classes of algorithms that approach the top-$k$ retrieval
problem in their own unique ways. One recursively partitions a vector collection to model its
geometry, and the other hashes the vectors into predefined buckets to reduce the search space.
Our next class of algorithms takes yet a different view of the question.
At a high level, our third approach is to ``walk'' through a collection,
hopping from one vector to another, where every hop
gets us \emph{spatially} closer to the optimal solution.
This chapter reviews algorithms that use a graph data structure
to implement that idea.
}

\section{Intuition}
\label{section:graph:intuition}

The most natural way to understand a spatial walk through a collection
of vectors is by casting it as traversing a (directed) connected graph.
As we will see, whether the graph is directed or not depends on the specific algorithm itself.
But the graph must regardless be \emph{connected}, so that there always exists at least
one path between every pair of nodes. This ensures that we can walk through the graph
no matter where we begin our traversal.

Let us write $G(\mathcal{V}, \mathcal{E})$ to refer to such a graph,
whose set of \emph{vertices} or \emph{nodes} are denoted by $\mathcal{V}$,
and its set of edges by $\mathcal{E}$.
So, for $u, v \in \mathcal{V}$ in a directed graph,
if $(u, v) \in \mathcal{E}$, we may freely move from node $u$ to node $v$.
Hopping from $v$ to $u$ is not possible if $(v, u) \notin \mathcal{E}$.
Because we often need to talk about the set of nodes that can be reached by a single hop
from a node $u$---known as the neighbors of $u$---we give it a special symbol and
define that set as follows: $N(u) = \{ v \;|\; (u, v) \in \mathcal{E} \}$.

The idea behind the algorithms in this chapter is to construct a graph in the pre-processing phase
and use that as an index of a vector collection for top-$k$ retrieval.
To do that, we must decide what is a node in the graph (i.e., define the set $\mathcal{V}$),
how nodes are linked to each other ($\mathcal{E}$), and, importantly, what the search algorithm looks like.

\smallskip

The set of nodes $\mathcal{V}$ is easy to construct:
Simply designate every vector in the collection $\mathcal{X}$
as a unique node in $G$, so that $\lvert \mathcal{X} \rvert = \lvert \mathcal{V} \rvert$.
There should, therefore, be no ambiguity if we referred to a node
as a vector. We use both terms interchangeably.

What properties should the edge set $\mathcal{E}$ have?
To get a sense of what is required of the edge set,
it would help to consider the search algorithm first.
Suppose we are searching for the top-$1$ vector closest to query $q$,
and assume that we are, at the moment, at an arbitrary node $u$ in $G$.

From node $u$, we can have a look around and assess if any of our neighbors
in $N(u)$ is closer to $q$. By doing so, we find ourselves in one of two situations.
Either we encounter no such neighbor, so that $u$ has the smallest
distance to $q$ among its neighbors. If that happens, ideally, we want $u$ to also have
the smallest distance to $q$ among \emph{all} vectors. In other words, in an ideal graph,
a local optimum coincides with the global optimum.

\begin{figure}[t]
    \centering
    \includegraphics[width=0.7\linewidth]{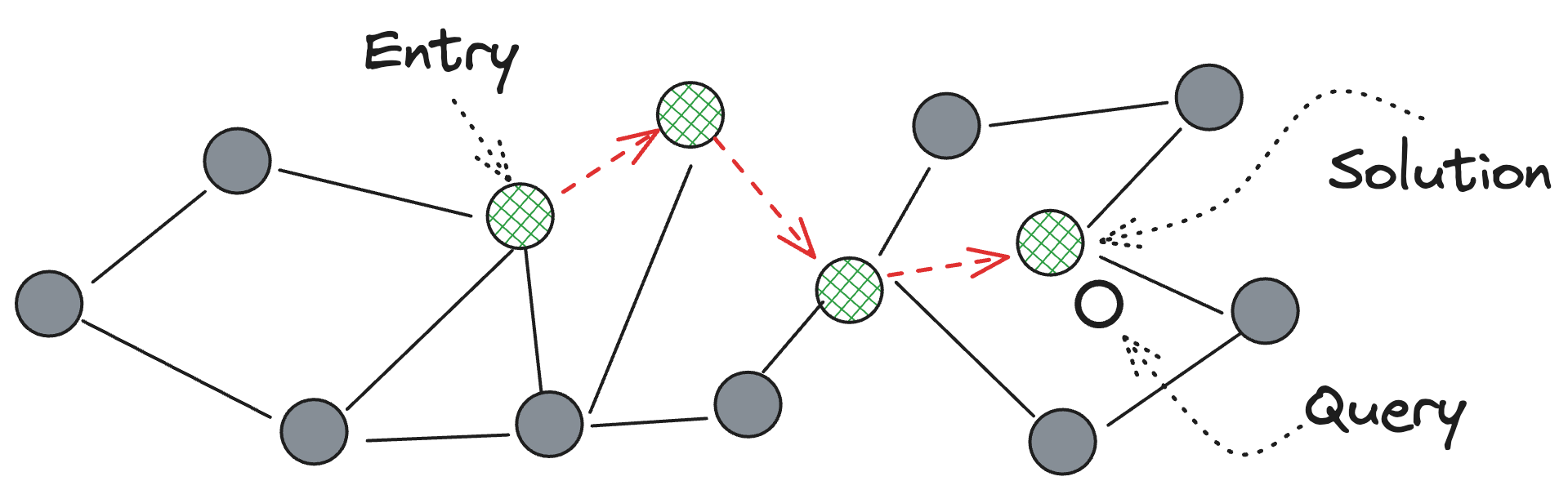}
    \caption{Illustration of the greedy traversal algorithm for finding
    the top-$1$ solution on an example (undirected) graph. The procedure enters
    the graph from an arbitrary ``entry'' node. It then compares the distance of the node to query $q$
    with the distance of its neighbors to $q$, and either terminates if no neighbor is closer
    to $q$ than the node itself, or advances to the closest neighbor. It repeats this procedure
    until the terminal condition is met. The research question in
    this chapter concerns the construction of the edge set: How do we construct a sparse graph
    which can be traversed greedily while providing guarantees on the (near-)optimality
    of the solution}
    \label{figure:graphs:greedy-traversal}
\end{figure}

Alternatively, we may find one such neighbor $v \in N(u)$ for which $\delta(q, v) < \delta(q, u)$
and $v = \argmin_{w \in N(u)} \delta(q, w)$.
In that case, the ideal graph is one where the following event takes place: If we moved from
$u$ to $v$, and repeated the process above in the context of $N(v)$ and so on, we will ultimately
arrive at a local optimum (which, by the previous condition, is the global optimum).
Terminating the algorithm then would therefore give us the optimal solution to the
top-$1$ retrieval problem.

\begin{svgraybox}
Put differently, in an ideal graph, if moving from a node to any of its neighbors
does not get us spatially closer to $q$, it is because the current
node is the optimal solution to the top-$1$ retrieval problem for $q$.
\end{svgraybox}

On a graph with that property,
the procedure of starting from any node in the graph,
hopping to a neighbor that is closer to $q$, and repeating this procedure until no such neighbor
exists, gives the optimal solution.
That procedure is the familiar \emph{best-first-search} algorithm,
which we illustrate on a toy graph in Figure~\ref{figure:graphs:greedy-traversal}.
That will be our base search algorithm for top-$1$ retrieval.

\begin{algorithm}[!t]
\SetAlgoLined
{\bf Input: }{Graph $G=(\mathcal{V}, \mathcal{E})$ over collection $\mathcal{X}$ 
with distance $\delta(\cdot, \cdot)$; query point $q$; entry node $s \in \mathcal{V}$;
retrieval depth $k$.}\\
\KwResult{Exact top-$k$ solution for $q$.}

\begin{algorithmic}[1]

\STATE $Q \leftarrow \{ s \}$ \Comment*[l]{\footnotesize $Q$ is a priority queue}

\WHILE{$Q$ changed in the previous iteration}
    \STATE $\mathcal{S} \leftarrow \bigcup_{u \in Q} N(u)$
    \STATE $v \leftarrow \argmin_{u \in \mathcal{S}} \delta(q, u)$

    \STATE $Q.\textsc{Insert}(v)$
    \IF{$\lvert Q \rvert \geq k$}
        \STATE $Q.\textsc{Pop}()$ \Comment*[l]{\footnotesize Removes the node with the largest distance to $q$}
    \ENDIF
\ENDWHILE

\RETURN $Q$
\end{algorithmic}
\caption{Greedy search algorithm for top-$k$ retrieval over a graph index.}
\label{algorithm:graphs:greedy-search}
\end{algorithm}

Extending the search algorithm to top-$k$ requires a minor modification to the procedure above.
It begins by initializing a priority queue of size $k$. When we visit a new node, we add it to
the queue if its distance with $q$ is smaller than the minimum distance among the nodes already in the queue.
We keep moving from a node in the queue to its neighbors until the queue stabilizes (i.e., no
unseen neighbor of any of the nodes in the queue has a smaller distance to $q$).
This is described in Algorithm~\ref{algorithm:graphs:greedy-search}.

Note that, assuming $\delta(\cdot, \cdot)$ is proper, it is easy to see that
the top-$1$ optimality guarantee immediately implies top-$k$
optimality---you should verify this claim as an exercise.
It therefore suffices to state our requirements in terms of top-$1$ optimality alone.
So, ideally, $\mathcal{E}$ should guarantee that traversing $G$
in a best-first-search manner yields the optimal top-$1$ solution.

\subsection{The Research Question}

It is trivial to construct an edge set that provides the desired optimality guarantee:
Simply add an edge between every pair of nodes, completing the graph!
The greedy search algorithm described above will take us to the optimal solution.

However, such a graph not only has high space complexity, but it also has a linear query time complexity.
That is because, the very first step (which also happens to be the last step)
involves comparing the distance of $q$ to the entry node,
with the distance of $q$ to every other node in the graph! We are better off exhaustively
scanning the entire collection in a flat index.

\begin{svgraybox}
The research question that prompted the algorithms we are about to study in this chapter is
whether there exists a relatively \emph{sparse} graph that has the optimality guarantee we seek
or that can instead provide guarantees for the more relaxed,
$\epsilon$-approximate top-$k$ retrieval problem.
\end{svgraybox}

As we will learn shortly, with a few notable exceptions,
all constructions of $\mathcal{E}$ proposed thus far in the literature 
for high-dimensional vectors amount to
heuristics that attempt to \emph{approximate} a theoretical graph but
come with no guarantees. In fact, in almost all cases,
their worst-case complexity is no better than exhaustive search.
Despite that, many of these heuristics work remarkably well
in practice on real datasets, making graph-based methods one of the most
widely adopted solutions to the approximate top-$k$ retrieval problem.

\smallskip

In the remainder of this chapter, we will see classes of theoretical graphs
that were developed in adjacent scientific disciplines, but that are seemingly suitable
for the (approximate) top-$k$ retrieval problem. As we introduce these graphs,
we also examine representative algorithms that aim to build an approximation
of such graphs in high dimensions, and review their properties.

We note, however, that the literature on graph-based methods is vast and growing still.
There is a plethora of studies that experiment with (minor or major) adjustments to the basic idea
described earlier, or that empirically compare and contrast different algorithmic flavors
on real-world datasets.
This chapter does not claim to, nor does it intend to cover the explosion of
material on graph-based algorithms. Instead, it limits its scope to the foundational
principles and ground-breaking works that are theoretically somewhat interesting.
We refer the reader to existing reports and surveys for the full spectrum of
works on this topic~\citep{wang2021survey-graph-ann,li2020survey-ann}.

\section{The Delaunay Graph}
One classical graph that satisfies the conditions we seek and guarantees the optimality
of the solution obtained by best-first-search traversal
is the Delaunay graph~\citep{Delaunay_1934aa,fortune1997voronoi}.
It is easier to understand the construction of the Delaunay graph if we consider instead
its dual: the Voronoi diagram. So we begin with a description of the Voronoi diagram and
Voronoi regions.

\subsection{Voronoi Diagram}

For the moment, suppose $\delta$ is the Euclidean distance and that we are in $\mathbb{R}^2$.
Suppose further that we have a collection $\mathcal{X}$ of just two points $u$ and $v$
on the plane.
Consider now the subset of $\mathbb{R}^2$ comprising of all the points
to which $u$ is the closest point from $\mathcal{X}$.
Similarity, we can identify the subset to which $v$ is the closest point.
These two subsets are, in fact, partitions of the plane and are separated by a
line---the points on this line are equidistant to $u$ and $v$.
In other words, two points in $\mathbb{R}^2$ induce a partitioning of 
the plane where each partition is ``owned'' by a point and describes the
set of points that are closer to it than they are to the other point.

\begin{figure}[t]
    \centering
    \subfloat[Voronoi diagram]{
        \label{figure:graphs:delaunay:voronoi}
        \includegraphics[width=0.33\linewidth]{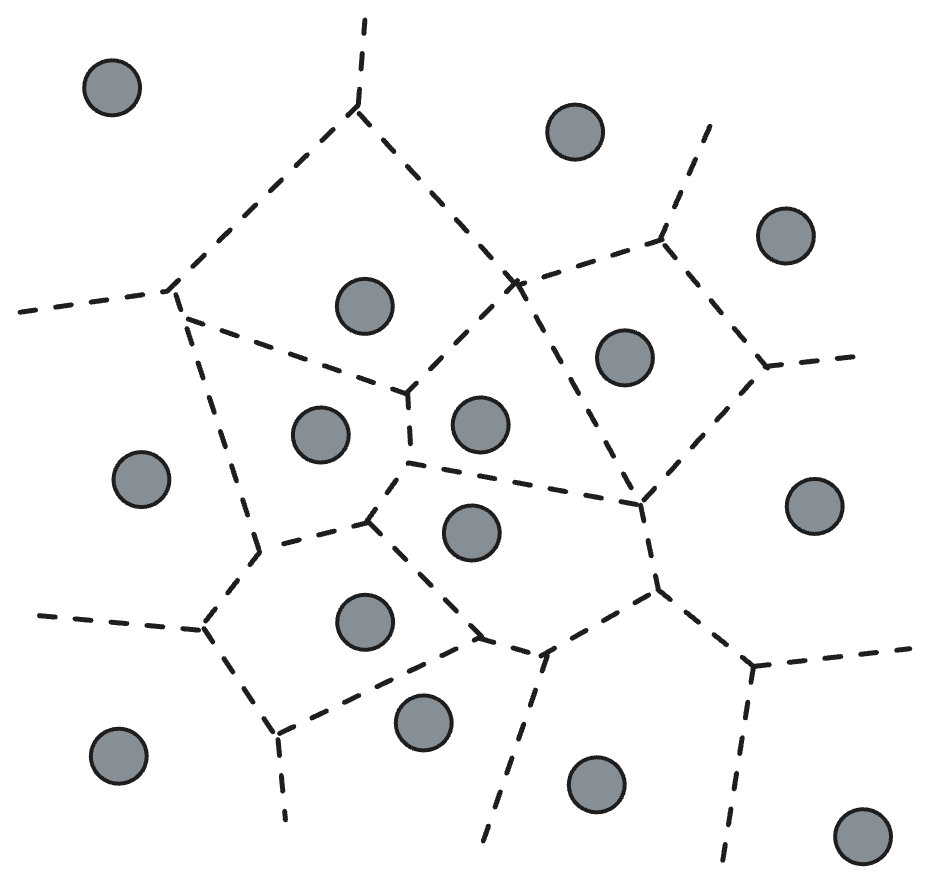}
    }
    \subfloat[Delaunay graph]{
        \label{figure:graphs:delaunay:graph}
        \includegraphics[width=0.33\linewidth]{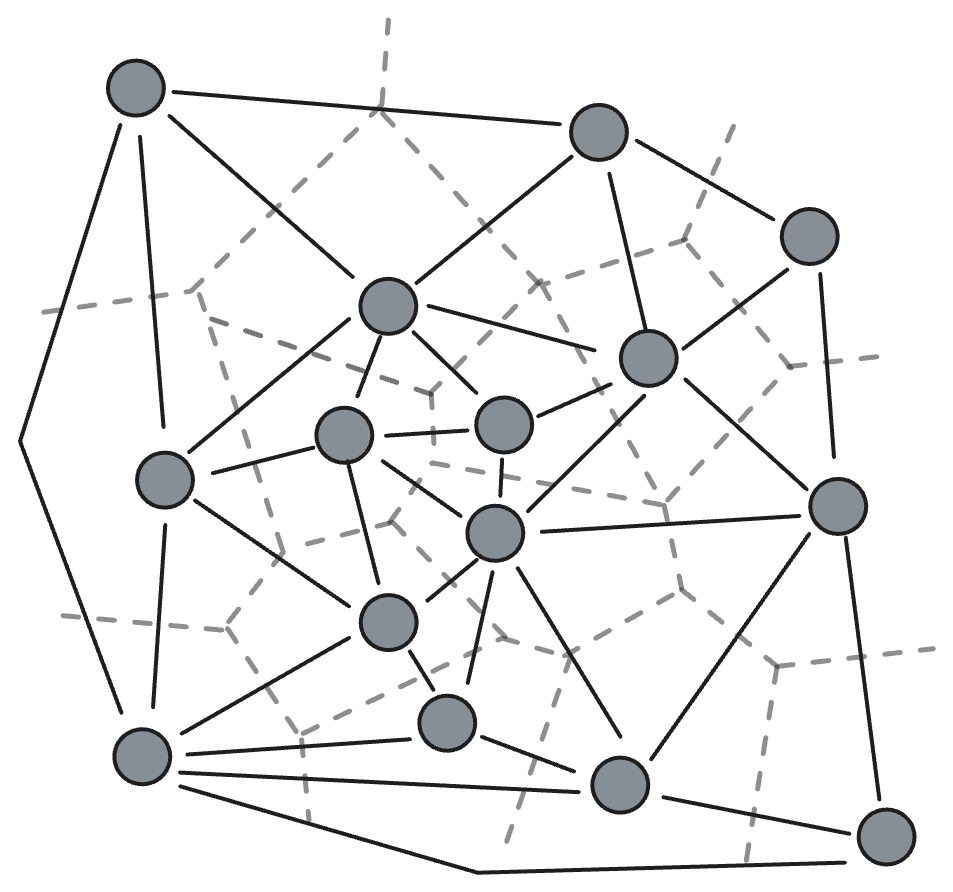}
    }
    \caption{Visualization of the Voronoi diagram (a) and its dual, the Delaunay graph (b)
    for an example collection $\mathcal{X}$ of points in $\mathbb{R}^2$.
    A Voronoi region associated with a point $u$
    (shown here as the area contained within the dashed lines)
    is a set of points whose nearest neighbor in $\mathcal{X}$
    is $u$. The Delaunay graph is an undirected graph whose nodes are points in $\mathcal{X}$ and two nodes
    are connected (shown as solid lines) if their Voronoi regions have a non-empty intersection.}
    \label{figure:graphs:delaunay}
\end{figure}

We can trivially generalize that notion to more than two points and, indeed, to higher dimensions.
A collection $\mathcal{X}$ of points in $\mathbb{R}^d$ partitions the space
into unique regions $\mathcal{R} = \bigcup_{u \in \mathcal{X}} \mathcal{R}_u$,
where the region $\mathcal{R}_u$ is owned by point
$u \in \mathcal{X}$ and represents the set of points to which $u$ is the closest
point in $\mathcal{X}$. Formally, $\mathcal{R}_u = \{ x \;|\; u = \argmin_{v \in \mathcal{X}} \delta(x, v) \}$. Note that, each region is a convex polytope that is the intersection of
half-spaces.
The set of regions is known as the Voronoi diagram for the collection $\mathcal{X}$ and is
illustrated in Figure~\subref*{figure:graphs:delaunay:voronoi} for an example collection in $\mathbb{R}^2$.

\subsection{Delaunay Graph}

The Delaunay graph for $\mathcal{X}$ is, in effect, a graph representation of its
Voronoi diagram.
The nodes of the graph are trivially the points in $\mathcal{X}$, as before.
We place an edge between two nodes $u$ and $v$ if their Voronoi regions have a non-empty
intersection: $\mathcal{R}_u \cap \mathcal{R}_v \neq \emptyset$. Clearly, by construction,
this graph is undirected. An example
of this graph is rendered in Figure~\subref*{figure:graphs:delaunay:graph}.

\medskip 

There is an important technical detail that is worth noting.
The Delaunay graph for a collection $\mathcal{X}$ is unique if the points
in $\mathcal{X}$ are in \emph{general position}~\citep{fortune1997voronoi}.
A collection of points are said to be in general position if the following two
conditions are satisfied. First, no $n$ points from $\mathcal{X} \subset \mathbb{R}^d$,
for $2 \leq n \leq d + 1$, must lie on a $(n-2)$-flat. Second, no $n+1$ points
must lie on any $(n-2)$-dimensional hypersphere. In $\mathbb{R}^2$, as an example,
for a collection of points to be in general position,
no three points may be co-linear, and no four points co-circular.

We must add that, the detail above is generally satisfied in practice.
Importantly, if the vectors in our collection are
independent and identically distributed, then the collection is almost surely in general
position. That is why we often take that technicality for granted.
So from now on, we assume that the Delaunay graph of a collection of points is unique.

\subsection{Top-\texorpdfstring{$1$}{1} Retrieval}
We can immediately recognize the importance of Voronoi regions: They geometrically
capture the set of queries for which a point from the collection is the solution
to the top-$1$ retrieval problem. But what is the significance of the dual representation
of this geometrical concept? How does the Delaunay graph help us solve the top-$1$
retrieval problem?

For one, the Delaunay graph is a compact representation of the Voronoi diagram.
Instead of describing polytopes, we need only to record edges between neighboring nodes.
But, more crucially, as the following claim shows, we can traverse the Delaunay graph greedily,
and reach the optimal top-$1$ solution from any node. In other words, the Delaunay graph
has the desired property we described in Section~\ref{section:graph:intuition}.

\begin{theorem}
    \label{theorem:graphs:delaunay}
    Let $G = (\mathcal{V}, \mathcal{E})$ be a graph that contains the Delaunay graph of $m$ vectors
    $\mathcal{X} \subset \mathbb{R}^d$.
    The best-first-search algorithm over $G$ gives the optimal solution to the top-$1$ retrieval problem
    for any arbitrary query $q$ if $\delta(\cdot, \cdot)$ is proper.
\end{theorem}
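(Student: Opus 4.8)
The plan is to recast the claim in terms of \emph{local} versus \emph{global} minima of the map $v \mapsto \delta(q,v)$ on the graph, and then to exploit the defining property of Voronoi regions. Concretely, for $k=1$ Algorithm~\ref{algorithm:graphs:greedy-search} is exactly best-first-search, so it suffices to prove: (i) the traversal halts at a node $u$ that is a \emph{local minimum}, i.e. $\delta(q,u) \le \delta(q,v)$ for all $v \in N(u)$; and (ii) in any graph $G$ containing the Delaunay graph, every local minimum equals $u^\ast = \argmin_{v \in \mathcal{X}} \delta(q,v)$. Part (i) is immediate: each step moves to a strictly closer neighbour, so visited nodes have strictly decreasing distance to $q$ and cannot repeat; as $\mathcal{V}$ is finite the walk halts, necessarily at a node with no strictly closer neighbour. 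Also note the extra edges of $G$ beyond the Delaunay edges are harmless — a local minimum in $G$ is a fortiori a local minimum in the Delaunay subgraph — so it is enough to treat $G$ equal to the Delaunay graph in part (ii).

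For part (ii) I would argue the contrapositive: if $u \ne u^\ast$, then $u$ has a Delaunay neighbour strictly closer to $q$. Recall $\mathcal{R}_u = \{ x : u = \argmin_{v \in \mathcal{X}} \delta(x,v) \}$ and that $(u,w)$ is a Delaunay edge iff $\mathcal{R}_u \cap \mathcal{R}_w \ne \emptyset$. Consider the segment $\gamma(t) = (1-t)u + tq$, $t \in [0,1]$. Then $\gamma(0)=u \in \mathcal{R}_u$, while $q = \gamma(1) \in \mathcal{R}_{u^\ast}$ and $q \notin \mathcal{R}_u$ (otherwise $u$ itself would be an optimal answer, contradicting $u \ne u^\ast$ under general position). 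Let $t^\ast = \sup\{ t : \gamma(t) \in \mathcal{R}_u \}$ and $x = \gamma(t^\ast)$; since Voronoi regions are closed, $x \in \mathcal{R}_u$, and since $t^\ast < 1$ the point $x$ lies on the boundary of $\mathcal{R}_u$, so $x \in \mathcal{R}_w$ for some $w \ne u$. Hence $\delta(x,u) = \delta(x,w)$, $\mathcal{R}_u \cap \mathcal{R}_w \ne \emptyset$, so $w \in N(u)$. Finally, because $x$ lies on the segment $[u,q]$ we have $\delta(u,q) = \delta(u,x) + \delta(x,q)$, and by the triangle inequality $\delta(q,w) \le \delta(q,x) + \delta(x,w) = \delta(q,x) + \delta(x,u) = \delta(q,u)$. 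Strictness then follows either from a short case analysis of the collinear configuration, or, more cleanly, by choosing a point $x' = \gamma(t')$ with $t'$ slightly above $t^\ast$ lying in the \emph{interior} of $\mathcal{R}_w$, so that $\delta(x',w) < \delta(x',u)$ and the same chain gives $\delta(q,w) < \delta(q,u)$, contradicting local minimality of $u$.

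The main obstacles I expect are all at the boundary. First is precisely the $\le$ versus $<$ issue above: the triangle inequality on its own only gives $\le$, so one must argue the segment genuinely enters the interior of a neighbouring region — the $\sup$ construction plus closedness of the regions handles this, but it needs a clean statement. Second is ties: if $q$ is equidistant from several data points, "the" nearest neighbour is ambiguous and Voronoi regions overlap; here I would lean on the general-position hypothesis the excerpt already invokes (which makes the Delaunay graph well-defined), and for $q$ on a region boundary argue that best-first-search still lands on one of the co-optimal points. Third, more conceptually, the argument tacitly uses that the straight segment $[u,q]$ in $\mathbb{R}^d$ realises $\delta(u,q) = \delta(u,x) + \delta(x,q)$, which is exactly true for any norm-induced metric (the natural setting of the Voronoi/Delaunay picture) but not for an arbitrary proper metric; I would either restrict the statement to norm-induced $\delta$ or replace the segment by a $\delta$-geodesic when one exists, noting that the rest of the proof is geometry-free once such a path is in hand.
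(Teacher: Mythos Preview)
Your argument is correct and follows a genuinely different route from the paper. The paper first establishes an auxiliary empty-ball lemma (Lemma~\ref{lemma:graphs:delaunay}): a ball containing only the two data points $u,v$ forces $(u,v)$ to be a Delaunay edge. Its proof of the theorem then argues by contradiction---if greedy halts at $u\ne u^\ast$, let $v$ be the data point inside $B(q,\delta(q,u))$ closest to $u$, claim the diametral ball on $[u,v]$ contains no other data point, and apply the lemma to exhibit the Delaunay edge $(u,v)$. You bypass the lemma entirely: you walk along the segment $[u,q]$, find its exit point $x$ from $\mathcal{R}_u$, and read off a Delaunay neighbour $w$ from the shared Voronoi facet at $x$; segment additivity plus the triangle inequality then give $\delta(q,w)<\delta(q,u)$. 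Your route is more elementary and, as you correctly note, makes transparent exactly where a norm-induced $\delta$ is used (segments are geodesics); interestingly, the paper's own hint for extending beyond the Euclidean case is precisely a line-segment argument of your kind. The paper's route pays a dividend you do not: Lemma~\ref{lemma:graphs:delaunay} is reused verbatim for the top-$k$ statement (Theorem~\ref{theorem:graphs:delaunay:top-k}), whereas your exit-point argument would need a separate extension there. Your flagged obstacles are real but minor, and your proposed fix for strictness---pick $w$ as a region the segment enters for a sequence $t'_n\downarrow t^\ast$, then use continuity to force $x\in\mathcal{R}_u\cap\mathcal{R}_w$ so that this same $w$ is a Delaunay neighbour---is exactly right.
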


The proof of the result above relies on an important property of the Delaunay graph,
which we state first.

\begin{lemma}
\label{lemma:graphs:delaunay}
    Let $G = (\mathcal{V}, \mathcal{E})$ be the Delaunay graph of
    a collection of points $\mathcal{X} \subset \mathbb{R}^d$,
    and let $B$ be a ball centered at $\mu$ that contains two
    points $u, v \in \mathcal{X}$, with radius $r = \min (\delta(\mu, u), \delta(\mu, v))$,
    for a continuous and proper distance function $\delta(\cdot, \cdot)$.
    Then either $(u, v) \in \mathcal{E}$ or there exists a third point in $\mathcal{X}$
    that is contained in $B$.
\end{lemma}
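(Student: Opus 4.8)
The plan is to prove the contrapositive of the stated dichotomy by exhibiting a single explicit witness for the Delaunay adjacency of $u$ and $v$ --- namely the centre $\mu$ itself.

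First I would unpack the hypothesis. Since $B = B(\mu, r)$ is the closed ball of radius $r = \min(\delta(\mu,u),\delta(\mu,v))$ and it contains both $u$ and $v$, we have $\delta(\mu,u) \le r$ and $\delta(\mu,v) \le r$; combined with $r \le \delta(\mu,u)$ and $r \le \delta(\mu,v)$ this forces $\delta(\mu,u) = \delta(\mu,v) = r$, so $u$ and $v$ both lie on the bounding sphere of $B$. Now suppose, towards the contrapositive, that $\mathcal{X} \cap B = \{u,v\}$, i.e.\ no third point of the collection lies in $B$. Then every $w \in \mathcal{X} \setminus \{u,v\}$ satisfies $\delta(\mu,w) > r$, and consequently $\delta(\mu,u) = r \le \delta(\mu,w)$ for \emph{every} $w \in \mathcal{X}$, with equality exactly for $w \in \{u,v\}$. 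This says precisely that $u$ is a nearest site to $\mu$, i.e.\ $\mu \in \mathcal{R}_u$; by the symmetric statement, $\mu \in \mathcal{R}_v$ as well. Hence $\mu \in \mathcal{R}_u \cap \mathcal{R}_v \neq \emptyset$, so $(u,v) \in \mathcal{E}$ by the definition of the Delaunay graph. Contrapositively, if $(u,v) \notin \mathcal{E}$ then $B$ must contain a point of $\mathcal{X}$ distinct from $u$ and $v$, which is exactly the claimed alternative. Note that the only role of finiteness of $\mathcal{X}$ here is to let the fact $\delta(\mu,w) > r$ for all other $w$ combine into the statement that $u$ realises the minimum distance from $\mu$, and that no use is made of the dimension $d$ or of Euclidean structure; continuity of $\delta$ enters only to guarantee the cells are closed sets (below).

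The main obstacle --- really bookkeeping rather than mathematics --- is the convention on Voronoi cells. The text writes $\mathcal{R}_u = \{x \mid u = \argmin_{v\in\mathcal{X}} \delta(x,v)\}$, which, read with a strict $\argmin$, makes the cells pairwise disjoint and the edge criterion $\mathcal{R}_u \cap \mathcal{R}_v \neq \emptyset$ vacuous; the argument above instead needs the standard closed-cell reading $\mathcal{R}_u = \{x \mid \delta(x,u) \le \delta(x,w)\ \forall w \in \mathcal{X}\}$, so that a boundary point such as $\mu$ genuinely belongs to both cells. I would therefore make this convention explicit at the outset --- or, equivalently, phrase the conclusion as: $\mu$ is equidistant from $u$ and $v$ and no site of $\mathcal{X}$ is strictly closer to $\mu$, which is the textbook characterisation of a Delaunay edge and is exactly what the witness $\mu$ delivers. (If one preferred a more hands-on route one could instead track the nearest-site function along a path from $u$ to $v$ inside $B$ and invoke connectedness, but the direct witness argument is shorter and avoids any assumption on the shape of $B$.)
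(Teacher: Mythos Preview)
Your proof is correct and follows the same witness-based strategy as the paper: exhibit a point lying in $\mathcal{R}_u \cap \mathcal{R}_v$. The paper, however, splits into two cases --- both $u,v$ on the boundary of $B$, versus one strictly interior --- and in the second case constructs a different witness $w$ on the segment from $\mu$ to the farther point via the intermediate value theorem applied to $f(\omega) = \delta(v,\omega) - \delta(u,\omega)$. Your opening observation --- that $r = \min(\delta(\mu,u),\delta(\mu,v))$ together with $u,v \in B(\mu,r)$ forces $\delta(\mu,u) = \delta(\mu,v) = r$ --- shows that this second case is vacuous under the hypothesis as stated; $\mu$ itself always sits on both Voronoi cells, and your one-stroke argument suffices. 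This is a genuine simplification of the paper's proof (the paper's Case~2 would only be live had the radius been $\max$ rather than $\min$). Your remark on the closed-cell reading of $\mathcal{R}_u$ is also apt and matches the convention the paper implicitly relies on in its own Case~1.
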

\begin{proof}
Suppose there is no other point in $\mathcal{X}$ that is contained in $B$.
We must show that, in that case, $(u, v) \in \mathcal{E}$.

There are two cases.
The first and easy case is when $u$ and $v$ are on the surface of $B$.
Clearly, $u$ and $v$ are equidistant from $\mu$.
Because there are no other points in $B$,
we can conclude that $\mu$ lies in the intersection of $\mathcal{R}_u$
and $\mathcal{R}_v$, the Voronoi regions associated with $u$ and $v$.
That implies $(u, v) \in \mathcal{E}$.

\begin{figure}[t]
    \centering
    \includegraphics[width=0.25\linewidth]{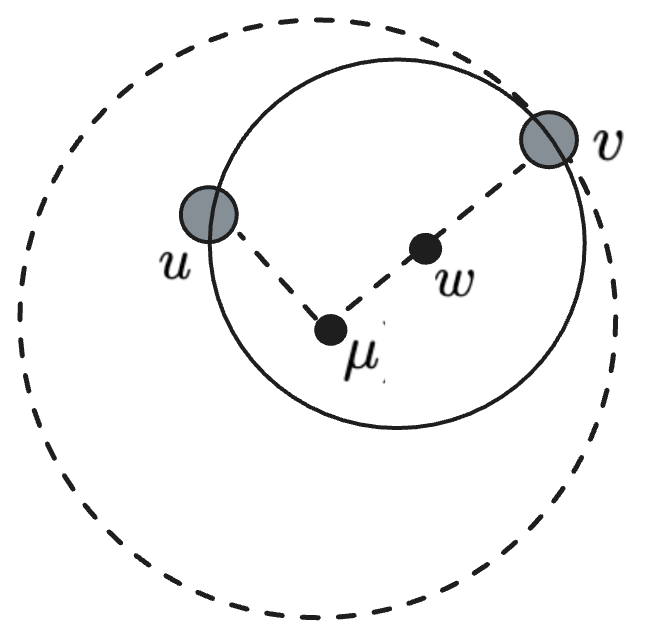}
    \caption{Illustration of the second case in the proof of Lemma~\ref{lemma:graphs:delaunay}.}
    \label{figure:graphs:delaunay:proof-lemma}
\end{figure}

In the second case, suppose $\delta(\mu, u) < \delta(\mu, v)$, so that $v$ is on the surface of $B$
and $u$ is in its interior. Consider the function
$f(\omega) = \delta(v, \omega) - \delta(u, \omega)$.
Clearly, $f(v) < 0$ and $f(\mu) > 0$. Therefore, there must be a point $w \in B$
on the line segment $\mu + \lambda (v - \mu)$ for $\lambda \in [0, 1]$ for which
$f(w) = 0$. That implies that $\delta(w, u) = \delta(w, v)$.
Furthermore, $v$ is the closest point on the surface of $B$ to $w$,
so that the ball centered at $w$ with radius $\delta(w, v)$ is entirely contained
in $B$. This is illustrated in Figure~\ref{figure:graphs:delaunay:proof-lemma}.

Importantly, no other point in $\mathcal{X}$ is closer to $w$ than $u$ and $v$.
So $w$ rests in the intersection of $\mathcal{R}_u$ and $\mathcal{R}_v$, and
$(u, v) \in \mathcal{E}$.
\end{proof}

\begin{proof}[Proof of Theorem~\ref{theorem:graphs:delaunay}]
    We prove the result for the case where $\delta$ is the Euclidean
    distance and leave the proof of the more general case as an exercise.
    (Hint: To prove the general case you should make the line segment
    argument as in the proof of Lemma~\ref{lemma:graphs:delaunay}.)

    Suppose the greedy search for $q$ stops at some local optimum $u$
    that is different from the global optimum, $u^\ast$,
    and that $(u, u^\ast) \notin \mathcal{E}$---otherwise,
    the algorithm must terminate at $u^\ast$ instead.
    Let $r = \delta(q, u)$.
    
    By assumption we have that the ball centered at $q$ with radius $r$,
    $B(q, r)$, is non-empty because it must contain $u^\ast$ whose distance to $q$ is less than $r$.
    Let $v$ be the point in this ball that is closest to $u$.
    Consider now the ball $B((u + v)/2, \delta(u, v)/2)$.
    This ball is empty: otherwise $v$ would not be the closest point to $u$.
    By Lemma~\ref{lemma:graphs:delaunay}, we must have that $(u, v) \in \mathcal{E}$.
    This is a contradiction because the greedy search cannot stop at $u$.
\end{proof}

\begin{svgraybox}
Notice that Theorem~\ref{theorem:graphs:delaunay} holds for any graph
that \emph{contains} the Delaunay graph.
The next theorem strengthens this result to show that the Delaunay graph represents
the minimal edge set that guarantees an optimal solution through greedy traversal.
\end{svgraybox}

\begin{theorem}
    \label{theorem:graphs:delaunay-is-minimal}
    The Delaunay graph is the minimal graph over which the best-first-search algorithm
    gives the optimal solution to the top-$1$ retrieval problem.
\end{theorem}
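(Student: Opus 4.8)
The plan is to prove the converse of the preceding theorem. Theorem~\ref{theorem:graphs:delaunay} already shows that \emph{every} graph containing the Delaunay graph---in particular, the Delaunay graph itself---admits a correct best-first-search. So minimality amounts to the complementary statement: \emph{every} graph $G' = (\mathcal{X}, \mathcal{E}')$ over which best-first-search (Algorithm~\ref{algorithm:graphs:greedy-search} with $k=1$) returns the exact top-$1$ answer for every query $q$ \emph{and every entry node} must contain each Delaunay edge. This is precisely what is needed (and, as a bonus, it shows the Delaunay graph is the \emph{unique} minimal such edge set).

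First I would fix an arbitrary Delaunay edge $(u, v)$, i.e.\ a pair with $\mathcal{R}_u \cap \mathcal{R}_v \neq \emptyset$, and manufacture a query that fails on $G'$ if this edge is missing. Pick a point $p$ in the relative interior of the shared Voronoi facet; under the general-position hypothesis already invoked for uniqueness of the Delaunay graph, this facet is $(d-1)$-dimensional and such a $p$ satisfies $\delta(p, u) = \delta(p, v) < \delta(p, w)$ for every $w \in \mathcal{X} \setminus \{u, v\}$. Now perturb $p$ slightly off the bisecting surface $\{ x : \delta(x, u) = \delta(x, v) \}$ in the direction of $v$: for the Euclidean case take $q = p + \varepsilon \hat{n}$ with $\hat{n}$ the unit normal pointing into the half-space containing $v$, and for a general proper, continuous $\delta$ slide along the segment $p + \lambda (v - p)$ exactly as in the proof of Lemma~\ref{lemma:graphs:delaunay}. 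For $\varepsilon$ small enough, continuity over the finite set $\mathcal{X}$ preserves all the strict inequalities against the other sites while making $\delta(q, v) < \delta(q, u)$. Thus $v$ is the \emph{unique} nearest neighbor of $q$ and $u$ is the \emph{unique} second-nearest, with every other data point strictly farther from $q$ than $u$ is.

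Next I would run Algorithm~\ref{algorithm:graphs:greedy-search} on $G'$ with entry node $s = u$ and this query $q$. Since $G'$ is assumed correct and the optimum is $v \neq u$, the search must eventually leave $u$; but from $u$ the algorithm can only advance to a neighbor strictly closer to $q$, and by construction the sole vertex strictly closer to $q$ than $u$ is $v$. Hence $v \in N_{G'}(u)$, i.e.\ $(u, v) \in \mathcal{E}'$. As $(u,v)$ was an arbitrary Delaunay edge, $G'$ contains the entire Delaunay graph, establishing the theorem.

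The main obstacle I anticipate is the geometric bookkeeping in constructing $q$: one must certify that the shared Voronoi facet has a point $p$ in its relative interior at which no third site is equidistant, and that the perturbation is small enough to keep $u$ strictly second---a routine compactness/continuity argument, but one that leans on general position in an essential way. A secondary, purely mechanical subtlety concerns the algorithm's priority queue: with $k = 1$ and entry node a strict local optimum $u$, the loop terminates on the first iteration with $Q = \{u\}$, because the inserted candidate $\argmin_{w \in N_{G'}(u)} \delta(q, w)$ is farther from $q$ than $u$ and is therefore popped, so $Q$ does not change; this should be spelled out so that the ``search stays stuck at $u$'' claim is airtight against the stated pseudocode.
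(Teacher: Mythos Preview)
Your proposal is correct and follows essentially the same approach as the paper: fix a Delaunay edge $(u,v)$, produce a query $q$ for which $u$ and $v$ are the unique two closest data points, enter at the second-closest, and observe that best-first-search is stuck unless the edge is present. The only cosmetic difference is in how the existence of such a $q$ is argued: you construct it explicitly by perturbing a relative-interior point of the shared Voronoi facet, whereas the paper argues by contradiction via the empty-circle property (Lemma~\ref{lemma:graphs:delaunay}); your extra remarks on general position and the priority-queue mechanics are welcome rigor that the paper leaves implicit.
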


In other words, if a graph does not contain the Delaunay graph, then we can find queries
for which the greedy traversal from an entry point does not produce the optimal top-$1$
solution.

\begin{proof}[Proof of Theorem~\ref{theorem:graphs:delaunay-is-minimal}]
    Suppose that the data points $\mathcal{X}$ are in general position.
    Suppose further that $G = (\mathcal{V}, \mathcal{E})$ is a graph built from $\mathcal{X}$,
    and that $u$ and $v$ are two nodes in the graph.
    Suppose further that $(v, u) \notin \mathcal{E}$ but that that edge
    exists in the Delaunay graph of $\mathcal{X}$.

    If we could sample a query point $q$ such that
    $\delta(q, u) < \delta(q, v)$ but 
    $\delta(q, w) > \max(\delta(q, u), \delta(q, v))$ for all $w \neq u, v$,
    then we are done.
    That is because, if we entered the graph through $v$, then $v$ is a local optimum
    in its neighborhood: all other points that are connected to $v$ have a distance larger than $\delta(q, v)$.
    But $v$ is not the globally optimal solution, so that the greedy traversal does not converge to the
    optimal solution.

    It remains to show that such a point $q$ always exists.
    Suppose it did not. That is, for any point that is in the Voronoi region of $u$,
    there is a data point $w \neq v$ that is closer to it than $v$.
    If that were the case, then no ball whose boundary passes through $u$ and $v$
    can be empty, which contradicts Lemma~\ref{lemma:graphs:delaunay} (the ``empty-circle''
    property of the Delaunay graph).
\end{proof}

As a final remark on the Delaunay graph and its use in top-$1$ retrieval, we note that
the Delaunay graph only makes sense if we have precise knowledge of the structure of the space
(i.e., the metric). It is not enough to have just pairwise distances between points in a collection
$\mathcal{X}$. In fact,~\cite{navarro2002spatial-approximateion} showed that if pairwise
distances are all we know about a collection of points, then the only sensible
graph that contains the Delaunay graph and is amenable to greedy search is the complete graph.
This is stated as the following theorem.

\begin{theorem}
    Suppose the structure of the metric space is unknown, but we have pairwise distances between
    the points in a collection $\mathcal{X}$, due to an arbitrary, but proper distance function $\delta$.
    For every choice of $u, v \in \mathcal{X}$,
    there is a choice of the metric space such that $(u, v) \in \mathcal{E}$,
    where $G = (\mathcal{V}, \mathcal{E})$ is a Delaunay graph for $\mathcal{X}$.
\end{theorem}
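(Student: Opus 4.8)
The plan is to reduce the statement to the ``empty-ball'' characterization of Delaunay adjacency (Lemma~\ref{lemma:graphs:delaunay}, together with its extension to an arbitrary proper distance function along the geodesic-segment lines hinted at after Theorem~\ref{theorem:graphs:delaunay}): in a metric space $M$ that contains $\mathcal{X}$ isometrically, two sites $u, v \in \mathcal{X}$ are Delaunay neighbours as soon as there is a \emph{witness} point $w \in M$ with $\delta(w, u) = \delta(w, v) = r$ and $\delta(w, x) > r$ for every other $x \in \mathcal{X}$ --- i.e.\ a ball whose boundary passes through exactly $u$ and $v$ and whose interior is free of other sites. So the whole task is to manufacture, for the given pair $u, v$, a metric space and an isometric copy of $\mathcal{X}$ inside it that admits such a witness. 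The constraint is that we may choose the ambient space freely but must not disturb the given pairwise distances on $\mathcal{X}$; the freedom we exploit is to place $w$ ``far off to the side'' of the configuration so that it is equidistant from $u$ and $v$ yet even farther from everything else.

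First I would enlarge the abstract finite metric $(\mathcal{X}, \delta)$ by one point. Fix $u, v$, adjoin a fresh point $w$, and set
\begin{equation*}
    \delta(w, u) = \delta(w, v) = r, \qquad \delta(w, x) = r + \min\big(\delta(u, x), \delta(v, x)\big) \quad \text{for } x \in \mathcal{X} \setminus \{u, v\},
\end{equation*}
where $r$ is any value at least the diameter of $\mathcal{X}$. A routine verification of the triangle inequality on $\mathcal{X} \cup \{w\}$ --- the only cases that are not immediate rely on the fact that $x \mapsto \min(\delta(u,x), \delta(v,x))$ is $1$-Lipschitz and that $r$ is large --- shows this is a genuine metric, and by construction $w$ is equidistant from $u$ and $v$ and strictly farther from every other site. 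Next I would realise $(\mathcal{X} \cup \{w\}, \delta)$ inside a \emph{continuous} host, since in a finite metric the Voronoi cells are degenerate and the notion of adjacency is vacuous: every finite metric space embeds isometrically into $(\mathbb{R}^N, \lVert \cdot \rVert_\infty)$ with $N = \lvert \mathcal{X} \rvert + 1$ via the Fr\'echet map $y \mapsto (\delta(y, x))_{x}$, and $\ell_\infty^N$ is geodesic. Take $M = \ell_\infty^N$ (perturbing the image slightly, if one wishes the sites to be in general position so that the Delaunay graph is unique) and let $\mathcal{X}$ denote its embedded copy.

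Finally, in $M$ the closed ball $B(w, r)$ carries $u$ and $v$ on its boundary and contains no other point of $\mathcal{X}$ in its interior, because $\delta(w, x) > r$ strictly for $x \neq u, v$; the empty-ball property then forces $(u, v) \in \mathcal{E}$ for the Delaunay graph $G = (\mathcal{V}, \mathcal{E})$ of $\mathcal{X}$ in $M$, and since the strict inequality is robust, the edge survives the small perturbation used above. The \textbf{main obstacle} I anticipate is not the metric extension, which is elementary, but making the empty-ball implication rigorous outside $\mathbb{R}^d$: Lemma~\ref{lemma:graphs:delaunay} is stated and proved for the Euclidean plane and leans on travelling along a straight segment, so one must supply the geodesic-space analogue --- move along a geodesic from $w$ toward $u$, respectively $v$, to exhibit points of $\mathcal{R}_u$ and $\mathcal{R}_v$ accumulating at $w$ --- and check that no other site can interfere precisely because all of them lie strictly outside $B(w, r)$. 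Once that general form of the empty-ball property is in hand, the construction above closes the argument.
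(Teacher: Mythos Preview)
Your approach is correct and takes a genuinely different route from the paper. The paper does not go through the empty-ball characterization at all; instead it invokes the \emph{greedy-search optimality} of the Delaunay graph (Theorem~\ref{theorem:graphs:delaunay}). It adjoins a query point $q$ with \emph{asymmetric} distances $\delta(q,v)=C$, $\delta(q,u)=C+\epsilon$, and $\delta(q,x)=C+2\epsilon$ for every other site $x$ (with $C\geq\tfrac12\max\delta$ and $\epsilon\leq\tfrac12\min\delta$ so that all triangle inequalities hold), and then argues that if $(u,v)\notin\mathcal{E}$, best-first search for $q$ launched at $u$ would terminate at a local optimum that is not the global one, contradicting Theorem~\ref{theorem:graphs:delaunay}. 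Your construction instead plants a point \emph{equidistant} from $u$ and $v$ and strictly farther from everything else, and you are more explicit than the paper about realizing the abstract finite metric inside a continuous host via the Fr\'echet map into $\ell_\infty^N$.

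One simplification of your plan: the ``main obstacle'' you anticipate --- porting Lemma~\ref{lemma:graphs:delaunay} to a geodesic space --- is not actually needed. Your witness $w$ satisfies $\delta(w,u)=\delta(w,v)<\delta(w,x)$ for every other $x\in\mathcal{X}$, so $w$ already lies in $\mathcal{R}_u\cap\mathcal{R}_v$ \emph{by the definition} of Voronoi regions; the edge $(u,v)$ then follows straight from the definition of the Delaunay graph, with no empty-ball lemma or geodesic argument required. The paper's route through greedy search is a touch less direct but ties the statement to the surrounding narrative (minimality of the Delaunay graph for greedy search); yours gives a cleaner geometric picture and is more careful about the ambient space.
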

\begin{proof}
    The idea behind the proof is to assume $(u, v) \notin \mathcal{E}$, then
    construct a query point that necessitates the existence of an edge between $u$ and $v$.
    To that end, consider a query point $q$ such that its distance to $u$ is $C + \epsilon$ for some
    constant $C$ and $\epsilon > 0$, its distance to $v$ is $C$, and its distance to 
    every other point in $\mathcal{X}$ is $C + 2 \epsilon$.

    This is a valid arrangement if we choose $\epsilon$ such that
    $\epsilon \leq 1/2 \min_{x, y \in \mathcal{X}} \delta(x, y)$
    and $C$ such that $C \geq 1/2 \max_{x, y \in \mathcal{X}} \delta(x, y)$. It is easy to verify that,
    if those conditions hold, a point $q$ with the prescribed distances can exist as the distances
    do not violate any of the triangle inequalities.

    Consider then a search starting from node $u$. If $(u, v) \notin \mathcal{E}$, then for the search
    algorithm to walk from $u$ to the optimal solution, $v$, it must first get \emph{farther} from $q$.
    But we know by the properties of the Delaunay graph that such an event implies that
    $u$ (which would be the local optimum) must be the global optimum. That is clearly not true.
    So we must have that $(u, v) \in \mathcal{E}$, giving the claim.
\end{proof}

\subsection{Top-\texorpdfstring{$k$}{k} Retrieval}

Let us now consider the general case of top-$k$ retrieval over the Delaunay graph.
The following result states that Algorithm~\ref{algorithm:graphs:greedy-search} is correct
if executed on any graph that contains the Delaunay graph, in the sense that it returns the
optimal solution to top-$k$ retrieval.

\begin{theorem}
    \label{theorem:graphs:delaunay:top-k}
    Let $G = (\mathcal{V}, \mathcal{E})$ be a graph that contains the Delaunay graph of $m$ vectors
    $\mathcal{X} \subset \mathbb{R}^d$.
    Algorithm~\ref{algorithm:graphs:greedy-search} over $G$ gives the optimal solution to the top-$k$ retrieval problem
    for any arbitrary query $q$ if $\delta(\cdot, \cdot)$ is proper.
\end{theorem}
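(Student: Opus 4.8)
The plan is to reduce the top-$k$ claim to the top-$1$ case already settled in Theorem~\ref{theorem:graphs:delaunay}, treating as a black box the \emph{descent property} it established: in any graph containing the Delaunay graph, every node that is not the nearest neighbor of $q$ has some neighbor strictly closer to $q$ (this is precisely what the proof of Theorem~\ref{theorem:graphs:delaunay}, via the empty-ball Lemma~\ref{lemma:graphs:delaunay}, shows, and it is also where properness of $\delta$ is used). For $k = 1$ there is nothing to prove beyond Theorem~\ref{theorem:graphs:delaunay}, so I would assume $k \geq 2$ and that $\mathcal{X}$ is in general position, so that the top-$k$ set is well defined and distances to $q$ are distinct.

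First I would record the state of Algorithm~\ref{algorithm:graphs:greedy-search} at termination. Since $G$ is connected and $\lvert \mathcal{X} \rvert \geq k$, the queue $Q$ grows until $\lvert Q \rvert = k$ and then stabilizes; and once $\lvert Q \rvert = k$, every change to $Q$ strictly decreases $\rho := \max_{v \in Q} \delta(q, v)$, which ranges over a finite set of distances, so the loop halts. The stopping condition then yields the invariant that for every $u \in Q$ and every neighbor $w \in N(u)$, either $w \in Q$ or $\delta(q, w) \geq \rho$.

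The technical heart is a connectivity lemma: for every threshold $\rho > 0$, the subgraph of the Delaunay graph induced on $R := \{ x \in \mathcal{X} \;:\; \delta(q, x) < \rho \}$ is connected. To prove it, I would take an arbitrary $w \in R$ and apply the descent property repeatedly to build a path from $w$ to the global nearest neighbor $x^\ast$ along which $\delta(q, \cdot)$ strictly decreases; every vertex of this path therefore lies in $R$, so $w$ is connected to $x^\ast$ inside the induced subgraph. Since $w$ was arbitrary, the subgraph is connected. Note that edges of this induced subgraph are edges of the Delaunay graph, hence of $G$.

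The argument then closes by contradiction. Suppose $Q$ at termination is not the top-$k$ set. Because $\lvert Q \rvert = k$ equals the size of the top-$k$ set, there is a top-$k$ point $p \notin Q$ and a point $y \in Q$ lying outside the top-$k$ set, and they satisfy $\delta(q, p) < \delta(q, y) \leq \rho$; moreover, since distances are distinct and $\lvert Q \rvert \geq 2$, some $u \in Q$ has $\delta(q, u) < \rho$. Hence $R$ (formed with this $\rho$) contains both a vertex of $Q$ and a vertex outside $Q$, so by the connectivity lemma $G$ has an edge $(a, b)$ with $a \notin Q$, $b \in Q$, and $\delta(q, a) < \rho$, directly violating the termination invariant. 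Therefore $Q$ is the top-$k$ set. I expect the main obstacle to be nailing down the termination bookkeeping of Algorithm~\ref{algorithm:graphs:greedy-search} and ruling out degenerate tie configurations (handled by the general-position assumption); the geometric content is entirely outsourced to Theorem~\ref{theorem:graphs:delaunay} and Lemma~\ref{lemma:graphs:delaunay}.
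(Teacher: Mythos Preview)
Your argument is correct and reaches the same contradiction as the paper---an edge from some $b\in Q$ to some $a\notin Q$ with $\delta(q,a)<\rho$---but you organize it differently. The paper re-invokes Lemma~\ref{lemma:graphs:delaunay} (the empty-ball property) directly inside the top-$k$ proof: it takes a point $v\notin Q$ in the interior of $B(q,\rho)$, pairs it with a suitable $u\in Q$, and argues that the midpoint ball $B((u+v)/2,\delta(u,v)/2)$ is empty so $(u,v)$ is a Delaunay edge, immediately contradicting termination. You instead black-box the descent property from Theorem~\ref{theorem:graphs:delaunay}, upgrade it to a structural statement (every strict sublevel set $\{x:\delta(q,x)<\rho\}$ is Delaunay-connected, since every vertex descends to $x^\ast$ while staying inside), and finish with a pure graph-cut argument across the partition $(Q\cap R,\;R\setminus Q)$. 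Your route is more modular---all the geometry is confined to the top-$1$ theorem, and the sublevel-set connectivity is a clean reusable lemma---while the paper's route is shorter on the page but essentially redoes the top-$1$ geometric step from scratch. Your attention to the termination bookkeeping (monotone decrease of $\rho$) and to the $k\ge 2$ corner case needed to guarantee $Q\cap R\neq\emptyset$ is also more explicit than what the paper spells out.
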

\begin{proof}
    As with the proof of Theorem~\ref{theorem:graphs:delaunay},
    we show the result for the case where $\delta$ is the Euclidean
    distance and leave the proof of the more general case as an exercise.
    
    The proof is similar to the proof of Theorem~\ref{theorem:graphs:delaunay} but the argument
    needs a little more care when $k > 1$.
    Suppose Algorithm~\ref{algorithm:graphs:greedy-search} for $q$ stops at some local optimum set
    $Q$ that is different from the global optimum, $Q^\ast$. In other words,
    $Q \;\triangle\; Q^\ast \neq \emptyset$ where $\triangle$ denotes the symmetric difference between
    sets.

    Let $r = \max_{u \in Q} \delta(q, u)$ and consider the ball $B(q, r)$.
    Because $Q \;\triangle\; Q^\ast \neq \emptyset$, there must be at least $k$ points in the interior of this ball.
    Let $v \notin Q$ be a point in the interior and suppose $u \in Q$ is its closest point in the ball.
    Clearly, the ball $B((u + v)/2, \delta(u, v)/2)$ is empty:
    otherwise $v$ would not be the closest point to $u$.
    By Lemma~\ref{lemma:graphs:delaunay}, we must have that $(u, v) \in \mathcal{E}$.
    This is a contradiction because Algorithm~\ref{algorithm:graphs:greedy-search} would, before termination,
    place $v$ in $Q$ to replace the node that is on the surface of the ball.
\end{proof}

\subsection{The \texorpdfstring{$k$}{k}-NN Graph}

From our discussion of Voronoi diagrams and Delaunay graphs, it appears as though we have
found the graph we have been looking for. Indeed, the Delaunay graph of a collection of vectors
gives us the exact solution to top-$k$ queries, using such a strikingly simple search algorithm.
Sadly, the story does not end there and, as usual,
the relentless curse of dimensionality poses a serious challenge.

The first major obstacle in high dimensions actually concerns
the construction of the Delaunay graph itself. While there are many
algorithms~\citep{edelsbrunner1992delaunay,guibas1992delaunay,guibas1985voronoi}
that can be used to construct the Delaunay graph---or, to be more precise,
to perform Delaunay \emph{triangulation}---all suffer from an exponential dependence
on the number of dimensions $d$. So building the graph itself seems infeasible when $d$
is too large.

\begin{figure}[t]
    \centering
    \subfloat[Delaunay graph]{
        \includegraphics[width=0.33\linewidth]{figures/graphs-delaunay-graph-overlay.png}
    }
    \subfloat[$2$-NN graph]{
        \includegraphics[width=0.33\linewidth]{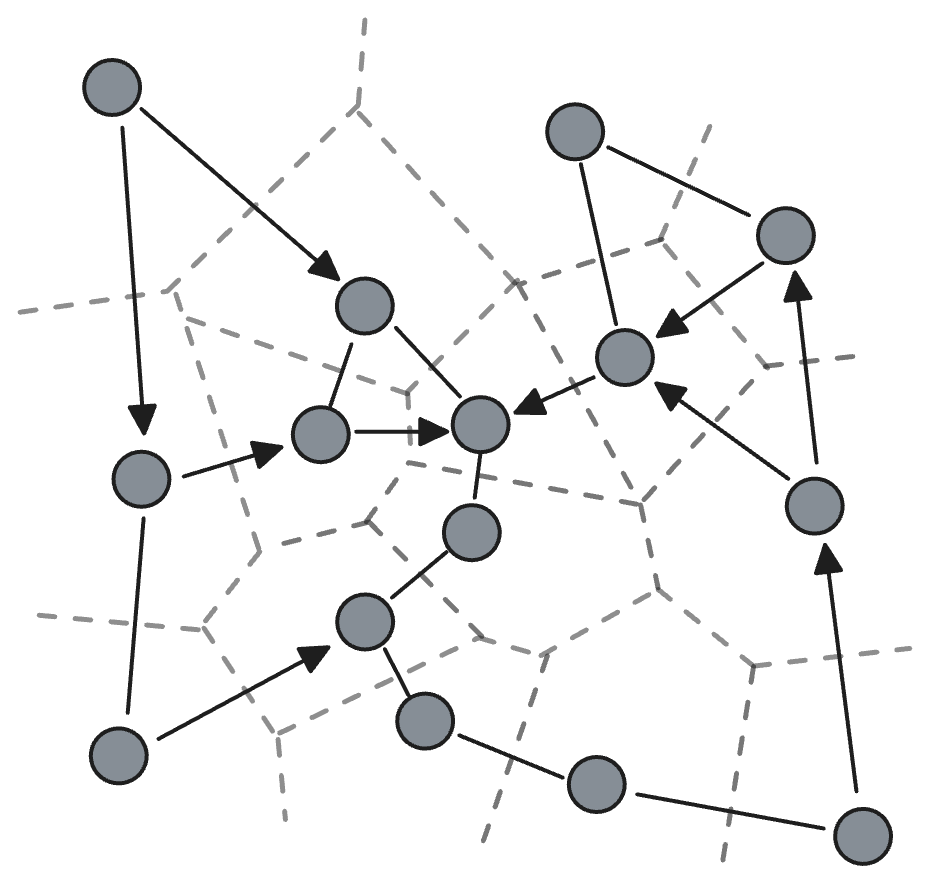}
    }
    \caption{Comparison of the Delaunay graph (a) with the $k$-NN graph for $k=2$ (b)
    for an example collection in $\mathbb{R}^2$.
    In the illustration of the directed $k$-NN graph, edges that go in both directions
    are rendered as lines without arrow heads.
    Notice that, the top left node cannot be reached from the rest
    of the graph.}
    \label{figure:graphs:knn-graph}
\end{figure}

Even if we were able to quickly construct the Delaunay graph for a large collection of
points, we would face a second debilitating issue: The graph is close to complete!
While exact bounds on the expected number of edges in the graph surely depend on the data
distribution, in high dimensions the graph becomes necessarily more dense.
Consider, for example, vectors that are independent and identically-distributed
in each dimension. Recall from our discussion from Chapter~\ref{chapter:instability},
that in such an arrangement of points, the distance between any pair of points tends to
concentrate sharply. As a result, the Delaunay graph has an edge between almost every
pair of nodes.

\bigskip

These two problems are rather serious, rendering the guarantees of the Delaunay graph for
top-$k$ retrieval mainly of theoretical interest. These same difficulties motivated
research to \emph{approximate} the Delaunay graph. One prominent method is known
as the $k$-NN graph~\citep{chavez2010knn,hajebi2011fast,fu2016efanna}.

The $k$-NN graph is simply a $k$-regular graph where every node (i.e., vector) is
connected to its $k$ closest nodes. So $(u, v) \in \mathcal{E}$ if $v \in \argmin^{(k)}_{w \in \mathcal{X}} \delta(u, w)$. Note that, the resulting graph may be directed, depending on the choice of $\delta$.
We should mention, however, that researchers have explored ways of turning the $k$-NN graph
into an undirected graph~\citep{chavez2010knn}. An example is depicted in
Figure~\ref{figure:graphs:knn-graph}.

We must remark on two important properties of the $k$-NN graph.
First, the graph itself is far more efficient to construct than the Delaunay
graph~\citep{chen2009fast-knngraph-construction,vaidya1989knn-construction,connor2010fast-knngraph,dong2011knn-graph}.
The second point concerns the connectivity of the graph.
As \cite{Brito1997knn-graph} show, under mild conditions governing the distribution of the vectors
and with $k$ large enough, the resulting graph has a high probability of being connected.
When $k$ is too small, on the other hand, the resulting graph may become too sparse,
leading the greedy search algorithm to get stuck in local minima.

Finally, at the risk of stating the obvious, the $k$-NN graph does not enjoy any of
the guarantees of the Delaunay graph in the context of top-$k$ retrieval.
That is simply because the $k$-NN graph is likely only a sub-graph of the Delaunay graph,
while Theorems~\ref{theorem:graphs:delaunay} and~\ref{theorem:graphs:delaunay:top-k}
are provable only for super-graphs of the Delaunay graph.
Despite these deficiencies, the $k$-NN graph remains an important component
of advanced graph-based, approximate top-$k$ retrieval algorithms.

\subsection{The Case of Inner Product}
Everything we have stated so far about the Voronoi diagrams and its duality with
the Delaunay graph was contingent on $\delta(\cdot, \cdot)$ being proper.
In particular, the proof of the optimality guarantees implicitly require non-negativity
and the triangle inequality. As a result, none of the results apply to MIPS \emph{prima facie}.
As it turns out, however, we could extend the definition of Voronoi regions and the Delaunay graph
to inner product, and present guarantees for MIPS (with $k=1$, but not with $k>1$).
That is the proposal by \cite{morozov2018ip-nsw}.

\subsubsection{The IP-Delaunay Graph}
Let us begin by characterizing the Voronoi regions for inner product.
The Voronoi region $\mathcal{R}_u$ of a vector $u \in \mathcal{X}$ comprises of
the set of points for which $u$ is the maximizer of inner product:
\begin{equation*}
    \mathcal{R}_u = \{ x \in \mathbb{R}^d \;|\; u = \argmax_{v \in \mathcal{X}} \langle x, v \rangle \}.
\end{equation*}
This definition is essentially the same as how we defined the Voronoi region for a proper $\delta$,
and, indeed, the resulting Voronoi diagram is a partitioning of the whole space.
The properties of the resulting Voronoi regions, however, could not be more different.

First, recall from Section~\ref{section:flavors:flavors:mips} that inner product
does not even enjoy what we called coincidence. That is, in general, $u = \argmax_{v \in \mathcal{X}} \langle u, v \rangle$
is not guaranteed. So it is very much possible that $\mathcal{R}_u$ is empty for some
$u \in \mathcal{X}$. Second, when $\mathcal{R}_u \neq \emptyset$, it is a convex cone
that is the intersection of half-spaces that \emph{pass through the origin}. So the Voronoi
regions have a substantially different geometry. Figure~\subref*{figure:graphs:ip-delaunay:ip-voronoi}
visualizes this phenomenon.

\begin{figure}[t]
    \centering
    \subfloat[Euclidean Delaunay]{
        \includegraphics[width=0.32\linewidth]{figures/graphs-delaunay-graph-overlay.png}
    }
    \subfloat[Inner Product Voronoi]{
        \label{figure:graphs:ip-delaunay:ip-voronoi}
        \includegraphics[width=0.32\linewidth]{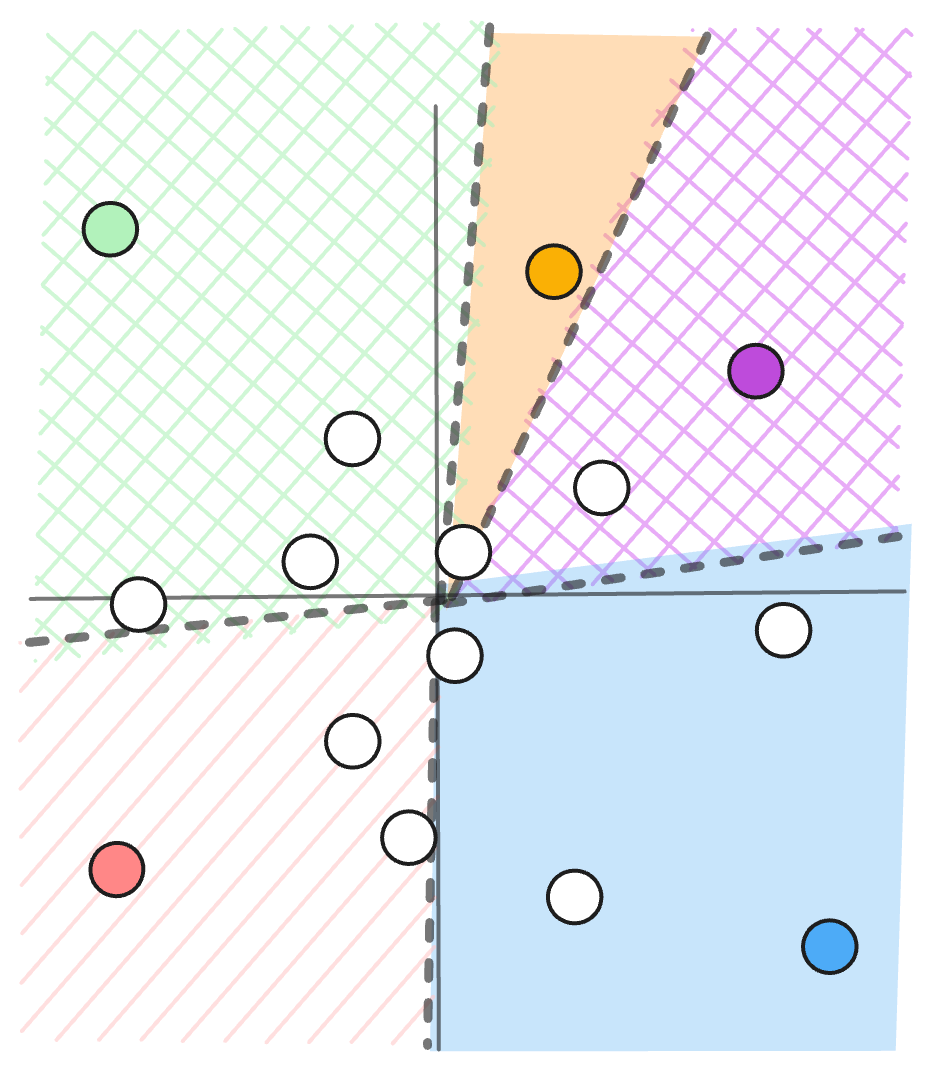}
    }
    \subfloat[IP-Delaunay]{
        \label{figure:graphs:ip-delaunay:ip-graph}
        \includegraphics[width=0.32\linewidth]{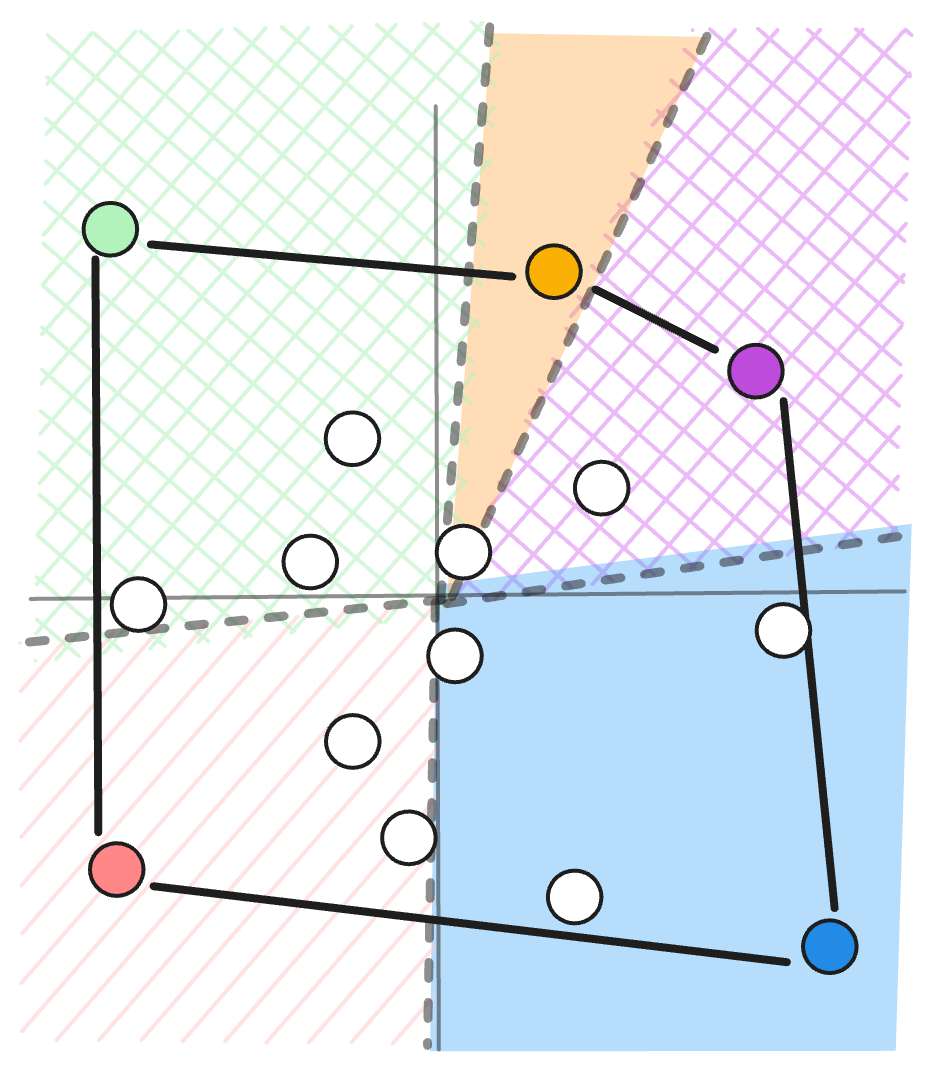}
    }
    \caption{Comparison of the Voronoi diagrams and Delaunay graphs for the same set of
    points but according to Euclidean distance versus inner product. Note that, for the non-metric
    distance function based on inner product, the Voronoi regions are convex cones determined by
    the intersection of half-spaces passing through the origin. Observe additionally that the
    inner product-induced Voronoi region of a point (those in white) may be an empty set.
    Such points can never be the solution to the $1$-MIPS problem.}
    \label{figure:graphs:ip-delaunay}
\end{figure}

Moving on to the Delaunay graph,~\cite{morozov2018ip-nsw} construct the graph in much the same
way as before and call the resulting graph the IP-Delaunay graph.
Two nodes $u, v \in \mathcal{V}$ in the IP-Delaunay graph are connected if their Voronoi
regions intersect: $\mathcal{R}_u \cap \mathcal{R}_v \neq \emptyset$. Note that, by the reasoning
above, the nodes whose Voronoi regions are empty will be isolated in the graph.
These nodes represent vectors that can never be the solution to MIPS for any
query---remember that we are only considering $k=1$. So it would be inconsequential
if we removed these nodes from the graph.
This is also illustrated in Figure~\subref*{figure:graphs:ip-delaunay:ip-graph}.

Considering the data structure above for inner product,~\cite{morozov2018ip-nsw} prove the following
result to give optimality guarantee for the greedy search algorithm for $1$-MIPS
(granted we enter the graph from a non-isolated node). Nothing, however, may be said about $k$-MIPS.

\begin{theorem}
    Suppose $G=(\mathcal{V}, \mathcal{E})$ is a graph that contains the
    IP-Delaunay graph for a collection $\mathcal{X}$ minus the isolated nodes.
    Invoking Algorithm~\ref{algorithm:graphs:greedy-search} with $k=1$
    and $\delta(\cdot, \cdot) = -\langle \cdot, \cdot \rangle$
    gives the optimal solution to the top-$1$ MIPS problem.
\end{theorem}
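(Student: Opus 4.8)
The plan is to mirror the proof of Theorem~\ref{theorem:graphs:delaunay}, but to replace the empty-ball argument (Lemma~\ref{lemma:graphs:delaunay}) with a direct ``sweep'' along a line segment. This is natural here because inner product is linear, so every relevant boundary becomes affine in the sweep parameter---in fact the argument turns out to be cleaner than the Euclidean one. Assume throughout that the data points are in general position. Suppose, for the sake of contradiction, that Algorithm~\ref{algorithm:graphs:greedy-search} invoked with $k=1$ and $\delta(\cdot,\cdot) = -\langle \cdot, \cdot\rangle$ terminates at a node $u$ that is not a maximizer, i.e. $\langle q, u^\ast\rangle > \langle q, u\rangle$ where $u^\ast = \argmax_{v \in \mathcal{X}} \langle q, v\rangle$. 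Since the algorithm halted at $u$, we must have $\langle q, u\rangle \geq \langle q, w\rangle$ for every $w \in N(u)$. The strategy is to produce a neighbor $v \in N(u)$ with $\langle q, v\rangle > \langle q, u\rangle$, contradicting this.

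First I would note that $u$ is non-isolated---it is either the entry node (non-isolated by hypothesis) or was reached along an edge, which forces $\mathcal{R}_u$ to meet some other region---so $\mathcal{R}_u \neq \emptyset$ and we may fix some $p \in \mathcal{R}_u$, meaning $\langle p, u\rangle \geq \langle p, w\rangle$ for all $w \in \mathcal{X}$. Consider the segment $x_\lambda = p + \lambda(q - p)$ for $\lambda \in [0, 1]$. For each $w \neq u$, the condition ``$u$ does at least as well as $w$ at $x_\lambda$'' reads $\langle x_\lambda, u - w\rangle \geq 0$, which is affine in $\lambda$; hence $\Lambda = \{\lambda \in [0,1] : \langle x_\lambda, u\rangle = \max_{v \in \mathcal{X}} \langle x_\lambda, v\rangle\}$ is a finite intersection of closed intervals, so it is a closed interval $[0, \lambda^\ast]$. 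It contains $0$ because $p \in \mathcal{R}_u$, and $\lambda^\ast < 1$ because $u^\ast$ strictly beats $u$ at $x_1 = q$.

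The key step is to inspect what happens at $\lambda^\ast$. Since $u$ is a maximizer at $x_{\lambda^\ast}$ but not for $\lambda$ slightly larger, there is a data point $v \neq u$ whose constraint is \emph{tight} at $\lambda^\ast$ with negative slope: $\langle x_{\lambda^\ast}, u - v\rangle = 0$ while $\langle x_\lambda, u - v\rangle < 0$ for $\lambda > \lambda^\ast$ near $\lambda^\ast$ (this uses continuity---a constraint that is strict at $\lambda^\ast$ stays strict nearby and cannot be the one that fails). Then $\langle x_{\lambda^\ast}, v\rangle = \langle x_{\lambda^\ast}, u\rangle = \max_{w} \langle x_{\lambda^\ast}, w\rangle$, so $x_{\lambda^\ast} \in \mathcal{R}_u \cap \mathcal{R}_v$; thus $(u, v)$ is an edge of the IP-Delaunay graph, and since $v$ is non-isolated it is an edge of $G$, i.e. $v \in N(u)$. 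Finally, $\lambda \mapsto \langle x_\lambda, u - v\rangle$ is affine, vanishes at $\lambda^\ast$, and is decreasing, so evaluating at $\lambda = 1 > \lambda^\ast$ gives $\langle q, u - v\rangle < 0$, i.e. $\langle q, v\rangle > \langle q, u\rangle$---the desired contradiction. Hence $u$ is optimal and Algorithm~\ref{algorithm:graphs:greedy-search} solves $1$-MIPS exactly.

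The main obstacle I anticipate is not the core idea but the degenerate bookkeeping: ensuring $\mathcal{R}_u$ and $\mathcal{R}_v$ are genuinely nonempty so that both nodes survive the removal of isolated nodes and $(u,v)$ really is an edge of $G$; verifying that $\Lambda$ is a \emph{closed} interval strictly inside $[0,1]$; and pinning down, via continuity, that at $\lambda^\ast$ some competitor's inequality is exactly tight with a strictly negative slope---this last point is precisely where the linearity of inner product does the work that the triangle inequality did in the metric case. It is also worth stating explicitly why the argument stops short of $k>1$: extending it would require the ``at least $k$ points in the interior'' move from the proof of Theorem~\ref{theorem:graphs:delaunay:top-k}, but the inner-product Voronoi cones carry no information about the $k$-th best vector, and the analogous claim in fact fails for $k>1$.
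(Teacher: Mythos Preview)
Your proof is correct and takes a genuinely different route from the paper's. The paper argues by restriction: it forms $\tilde{\mathcal{X}} = N(u)\cup\{u\}$, asserts (as an ``easy to show'' step) that the inner-product Voronoi region of $u$ computed with respect to $\tilde{\mathcal{X}}$ coincides with $\mathcal{R}_u$ computed with respect to the full collection, and then uses the fact that the regions over $\tilde{\mathcal{X}}$ partition $\mathbb{R}^d$ to place $q$ in some neighbor's region. You instead run a line sweep from a witness $p\in\mathcal{R}_u$ to $q$, identify the first boundary crossing $\lambda^\ast$, and read off the neighbor $v$ from the facet that is tight there; linearity of inner product then carries the strict inequality all the way to $\lambda=1$. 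Your approach is more constructive and self-contained---in fact the paper's ``easy to show'' equivalence, if unpacked, requires essentially the same boundary-crossing argument you wrote out---while the paper's framing is more conceptual once that lemma is granted. One minor quibble: your justification that $\mathcal{R}_u\neq\emptyset$ (``reached along an edge, which forces $\mathcal{R}_u$ to meet some other region'') is slightly circular since $G$ may contain extra edges; the cleaner reason is simply that every vertex of $G$ is by hypothesis a non-isolated node, i.e.\ one with nonempty Voronoi region.
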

\begin{proof}
    If we showed that a local optimum is necessarily the global optimum, then we are done.
    To that end, consider a query $q$ for which Algorithm~\ref{algorithm:graphs:greedy-search} terminates
    when it reaches node $u \in \mathcal{X}$ which is distinct from the globally optimal
    solution $u^\ast \notin N(u)$. In other words, we have that
    $\langle q, u \rangle > \langle q, v \rangle$ for all $v \in N(u)$, but
    $\langle q, u^\ast \rangle > \langle q, u \rangle$ and
    $(u, u^\ast) \notin \mathcal{E}$.
    If that is true, then $q \notin \mathcal{R}_u$, the Voronoi region of $u$,
    but instead we must have that $q \in \mathcal{R}_{u^\ast}$.

    Now define the collection $\tilde{\mathcal{X}} \triangleq N(u) \cup \{ u \}$,
    and consider the Voronoi diagram of the resulting collection. It is easy to show
    that the Voronoi region of $u$ in the presence of points in $\tilde{\mathcal{X}}$
    is the same as its region given $\mathcal{X}$. From before, we also know that $q \notin \mathcal{R}_u$.
    Considering the fact that $\mathbb{R}^d = \bigcup_{v \in \tilde{\mathcal{X}}} \mathcal{R}_v$,
    $q$ must belong to $\mathcal{R}_v$ for some $v \in \tilde{\mathcal{X}}$ with $v \neq u$.
    That implies that $\langle q, v \rangle > \langle q, u \rangle$ for some $v \in \tilde{\mathcal{X}} \setminus u$.
    But because $v \in N(u)$ (by construction), the last inequality poses a contradiction to our
    premise that $u$ was locally optimal.
\end{proof}

\medskip

In addition to the fact that the IP-Delaunay graph does not answer top-$k$ queries,
it also suffers from the same deficiencies we noted for the Euclidean Delaunay graph earlier
in this section. Naturally then, to make the data structure more practical in high-dimensional
regimes, we must resort to heuristics and approximations, which in their simplest form may be
the $k$-MIPS graph (i.e., a $k$-NN graph where the distance function for finding the top-$k$
nodes is inner product).
This is the general direction~\cite{morozov2018ip-nsw} and a few other works
have explored~\citep{Liu2019UnderstandingAI,zhou2019mobius-mips}.

As in the case of metric distance functions, none of the guarantees stated above
port over to these approximate graphs. But, once again, empirical evidence gathered from
a variety of datasets show that these graphs perform reasonably well in practice, even for top-$k$
with $k > 1$.

\subsubsection{Is the IP-Delaunay Graph Necessary?}
\cite{morozov2018ip-nsw} justify the need for developing the IP-Delaunay graph
by comparing its structure with the following alternative:
First, apply a MIPS-to-NN \emph{asymmetric} transformation~\citep{xbox-tree}
from $\mathbb{R}^d$ to $\mathbb{R}^{d + 1}$.
This involves transforming a data point $u$ with
$\phi_d(u) = [u; \; \sqrt{1 - \lVert u \rVert_2^2}]$ and a query point $q$
with $\phi_q(v) = [v; 0]$. Next,
construct the standard (Euclidean) Delaunay graph over the transformed vectors.

What happens if we form the Delaunay graph on the transformed collection
$\phi_d \big(\mathcal{X} \big)$? Observe the Euclidean distance between
$\phi_d(u)$ and $\phi_d(v)$ for two vectors $u, v \in \mathcal{X}$:
\begin{align*}
    \lVert \phi_d(u) - \phi_d(v) \rVert_2^2 &=
        \lVert \phi_d(u) \rVert_2^2 + \lVert \phi_d(v) \rVert_2^2 - 2 \langle \phi_d(u), \phi_d(v) \rangle \\
    &= \big( \lVert u \rVert_2^2 + 1 - \lVert u \rVert_2^2 \big) +
        \big( \lVert v \rVert_2^2 + 1 - \lVert v \rVert_2^2 \big) \\
        &\quad - 2\langle u, v \rangle -
        2 \sqrt{\big( 1 - \lVert u \rVert_2^2 \big) \big( 1 - \lVert v \rVert_2^2 \big)}.
\end{align*}
Should we use these distances to construct the Delaunay graph, the resulting
structure will have nothing to do with the original MIPS problem.
That is because the $L_2$ distance between a pair of transformed data points
is not rank-equivalent to the inner product between the original data points.
For this reason,~\cite{morozov2018ip-nsw} argue that the IP-Delaunay graph is a more
sensible choice.

However, we note that their argument rests heavily on their particular choice of
MIPS-to-NN transformation. The transformation they chose makes sense in
contexts where we \emph{only} care about preserving the inner product between query-data
point pairs. But when forming the Delaunay graph, preserving inner product between pairs of
data points, too, is imperative. That is the reason why we lose rank-equivalence
between $L_2$ in $\mathbb{R}^{d+1}$ and inner product in $\mathbb{R}^d$.

There are, in fact, MIPS-to-NN transformations that are more appropriate for this problem
and would invalidate the argument for the need for the IP-Delaunay graph.
Consider for example $\phi_d: \mathbb{R}^d \rightarrow \mathbb{R}^{d + m}$ for a collection
$\mathcal{X}$ of $m$ vectors as follows:
$\phi_d(u^{(i)}) = u^{(i)} \oplus \big(\sqrt{1 - \lVert u^{(i)} \rVert_2^2}\big) e_{(d + i)}$,
where $u^{(i)}$ is the $i$-th data point in the collection, and
$e_j$ is the $j$-th standard basis vector. In other words, the $i$-th $d$-dimensional
data point is augmented with an $m$-dimensional
sparse vector whose $i$-th coordinate is non-zero.
The query transformation is simply $\phi_q(q) = q \oplus \mathbf{0}$, where $\mathbf{0} \in \mathbb{R}^m$
is a vector of $m$ zeros.

Despite the dependence on $m$, the transformation is remarkably easy to manage:
the sparse subspace of every vector has at most one non-zero coordinate,
making the doubling dimension of the sparse subspace $\mathcal{O}(\log m)$
by Lemma~\ref{lemma:intrinsic-dimensionality:sparse}.
Distance computation between the transformed vectors, too, has negligible overhead.
Crucially, we regain rank-equivalence between $L_2$ distance in $\mathbb{R}^{d + m}$
and inner product in $\mathbb{R}^d$ not only for query-data point pairs, but also
for pairs of data points:
\begin{align*}
    \lVert \phi_d(u) - \phi_d(v) \rVert_2^2 &=
        \lVert \phi_d(u) \rVert_2^2 + \lVert \phi_d(v) \rVert_2^2 - 2 \langle \phi_d(u), \phi_d(v) \rangle \\
    &= 2 - 2\langle u, v \rangle.
\end{align*}

Finally, unlike the IP-Delaunay graph, the standard Delaunay graph in $\mathbb{R}^{d + m}$
over the transformed vector collection has optimality guarantee for the 
top-$k$ retrieval problem per Theorem~\ref{theorem:graphs:delaunay:top-k}.
It is, as such, unclear if the IP-Delaunay graph is even necessary as a theoretical tool.

\begin{svgraybox}
    In other words, suppose we are given a collection of points $\mathcal{X}$
    and inner product as the similarity function. Consider a graph index where
    the presence of an edge is decided based on the inner product between data points.
    Take another graph index built for the transformed $\mathcal{X}$
    using the transformation described above from $\mathbb{R}^d$ to $\mathbb{R}^{d + m}$,
    and where the edge set is formed on the basis of the Euclidean distance between two
    (transformed) data points. The two graphs are equivalent.

    The larger point is that, MIPS over $m$ points in $\mathbb{R}^d$ is equivalent
    to NN over a transformation of the points in $\mathbb{R}^{d + m}$.
    While the transformation increases the apparent dimensionality,
    the intrinsic dimensionality of the data only increases by $\mathcal{O}(\log m)$.
\end{svgraybox}

\section{The Small World Phenomenon}
Consider, once again, the Delaunay graph but, for the moment, set aside
the fact that it is a prohibitively-expensive data structure to maintain for
high dimensional vectors. By construction, every node in the graph is only
connected to its Voronoi neighbors (i.e., nodes whose Voronoi region intersects
with the current node's). We showed that such a topology affords \emph{navigability},
in the sense that the greedy procedure in Algorithm~\ref{algorithm:graphs:greedy-search}
can traverse the graph only based on information about immediate neighbors of a node
and yet arrive at the globally optimal solution to the top-$k$ retrieval problem.

\begin{svgraybox}
Let us take a closer look at the traversal algorithm for the case of $k=1$.
It is clear that, navigating from the entry node to the solution takes us through
every Voronoi region along the path. That is, we cannot ``skip'' a Voronoi region
that lies between the entry node and the answer. This implies that the running time
of Algorithm~\ref{algorithm:graphs:greedy-search} is directly affected by the diameter
of the graph (in addition to the average degree of nodes).
\end{svgraybox}

Can we enhance this topology by adding \emph{long-range} edges between non-Voronoi
neighbors, so that we may skip over a fraction of Voronoi regions? After all,
Theorem~\ref{theorem:graphs:delaunay:top-k} guarantees navigability so long as the graph
\emph{contains} the Delaunay graph. Starting with the Delaunay graph and inserting
long-range edges, then, will not take away any of the guarantees.
But, what is the right number of long-range edges and how do we determine
which remote nodes should be connected? This section reviews the theoretical results
that help answer these questions.

\begin{figure}[t]
    \centering
    \centering
    \subfloat[]{
        \includegraphics[width=0.35\linewidth]{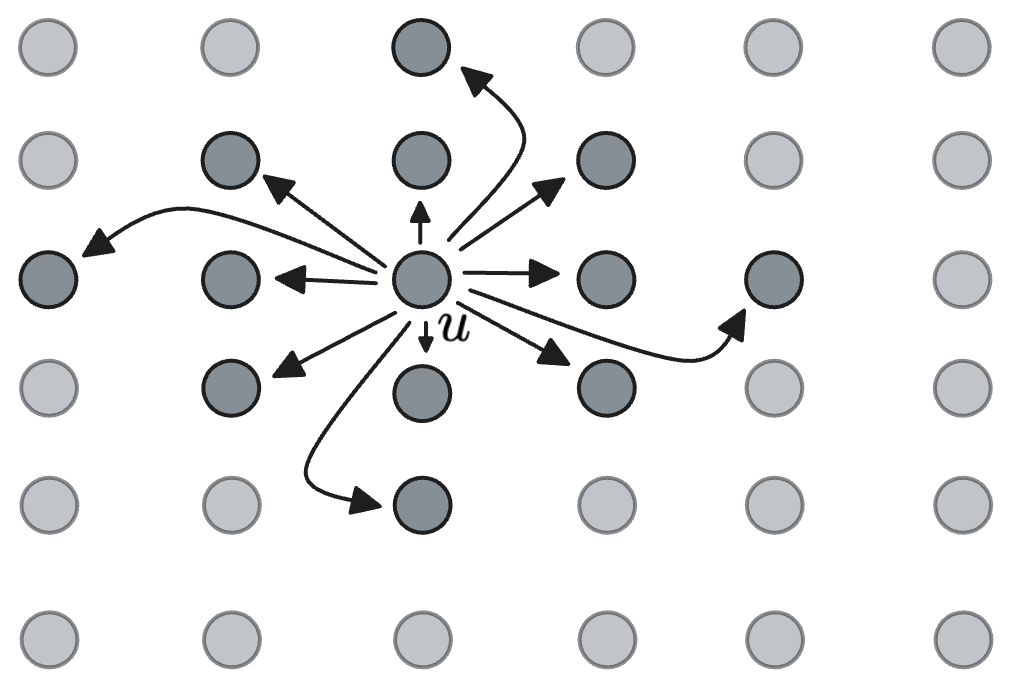}
    }
    \hspace{2em}
    \subfloat[]{
        \includegraphics[width=0.35\linewidth]{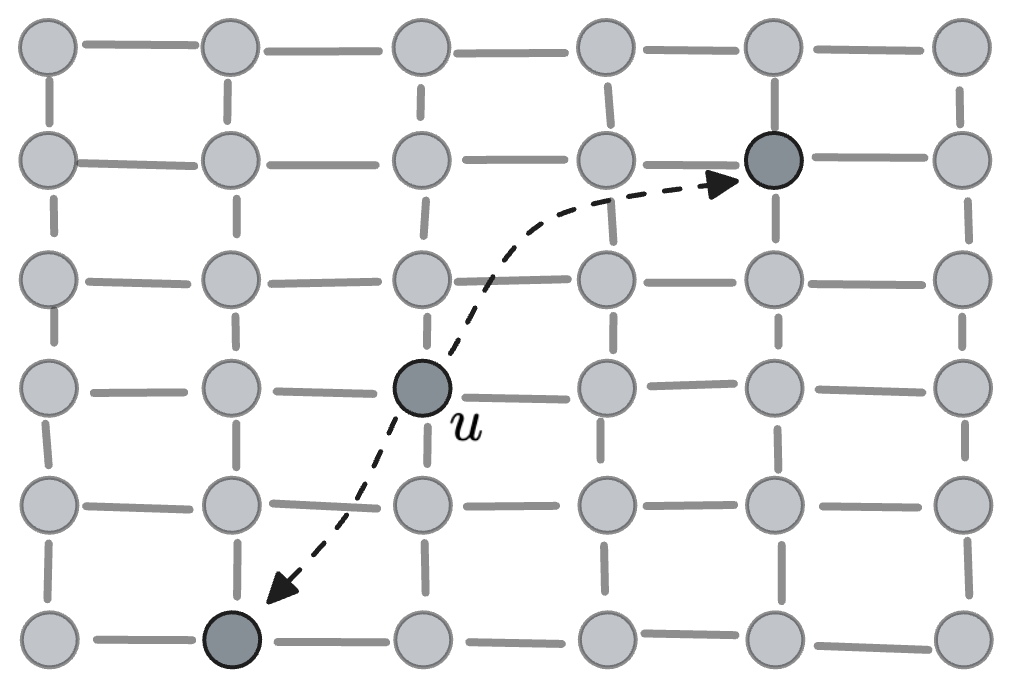}
    }
    \caption{Example graphs generated by the probabilistic model introduced
    by~\cite{kleinberg2000sw}. (a) illustrates the directed edges from $u$ for
    the following configuration: $r = 2$, $l = 0$. (b) renders the regular
    structure for $r=1$, where edges without arrows are bi-directional,
    and the long-range edges for node $u$ with configuration $l=2$.}
    \label{figure:graphs:lattice}
\end{figure}

\subsection{Lattice Networks}
Let us begin with a simple topology that is relatively easy to reason
about---we will see later how the results from this section can be generalized to the Delaunay graph.
The graph we have in mind is a lattice network where $m \times m$
nodes are laid on a two-dimensional grid.
Define the distance between two nodes as their \emph{lattice (Manhattan) distance}
(i.e., the minimal number of horizontal and vertical hops that connect two nodes).
That is the network examined by~\cite{kleinberg2000sw} in a seminal paper
that studied the effect of long-range edges on the time complexity of
Algorithm~\ref{algorithm:graphs:greedy-search}.

We should take a brief detour and note that,~\cite{kleinberg2000sw},
in fact, studied the problem of \emph{transmitting}
a message from a source to a known target using the best-first-search algorithm,
and quantified the average number
of hops required to do that in the presence of a variety of classes of
long-range edges. That, in turn, was inspired by a social phenomenon
colloquially known as the ``small-world phenomenon'': The empirical observation
that two strangers are linked by a short chain of
acquaintances~\citep{milgram_small-world_1967,travers1969sw}.

In particular,~\cite{kleinberg2000sw} was interested in explaining why
and under what types of long-range edges should our greedy algorithm
be able to navigate to the optimal solution, by only utilizing information
about immediate neighbors. To investigate this question,~\cite{kleinberg2000sw}
introduced the following probabilistic model of the lattice topology as an abstraction
of individuals and their social connections.

\subsubsection{The Probabilistic Model}
Every node in the graph has a (directed) edge with every other node within
lattice distance $r$, for some fixed hyperparameter $r \geq 1$. These connections
make up the regular structure of the graph. Overlaid with this structure is a set of
random, long-range edges that are generated according to the following
probabilistic model. For fixed constants $l \geq 0$ and $\alpha \geq 0$,
we insert a directed edge between every node $u$ and $l$ other nodes, where a node
$v \neq u$ is selected with probability proportional to $\delta(u, v)^{-\alpha}$
where $\delta(u, v) = \lVert u - v \rVert_1$ is the lattice distance.
Example graphs generated by this process are depicted in Figure~\ref{figure:graphs:lattice}.

The model above is reasonably powerful as it can express a variety of topologies.
For example, when $l = 0$, the resulting graph has no long-range edges.
When $l > 0$ and $\alpha = 0$, then every node $v \neq u$ in the graph
has an equal chance of being the destination of a long-range edge from $u$.
When $\alpha$ is large, the protocol becomes more biased to forming a long-range connection
from $u$ to nodes closer to it.

\subsubsection{The Claim}
\begin{svgraybox}
\cite{kleinberg2000sw} shows that, when $0 \leq \alpha < 2$,
the best-first-search algorithm must visit at least $\mathcal{O}_{r, l, \alpha}\big(m^{(2 - \alpha)/3}\big)$
nodes. When $\alpha > 2$, the number of nodes visited is at least
$\mathcal{O}_{r, l, \alpha}\big(m^{(\alpha - 2)/(\alpha - 1)} \big)$ instead. But, rather
uniquely, when $\alpha = 2$ and $r = l = 1$, we visit a number of nodes that
is at most poly-logarithmic in $m$.
\end{svgraybox}

Theorem~\ref{theorem:graphs:lattice} states this result formally.
But before we present the theorem, we state a useful lemma.

\begin{lemma}
    \label{lemma:graphs:lattice}
    Generate a lattice $G = (\mathcal{V}, \mathcal{E})$ of $m \times m$
    nodes using the probabilistic model above
    with $\alpha = 2$ and $l = 1$.
    The probability that there exists a long-range edge between two nodes
    $u, v \in \mathcal{V}$ is at least $\delta(u, v)^{-2} / 4 \ln(6m)$.
\end{lemma}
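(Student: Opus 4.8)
The plan is to work directly with the normalization constant of the long-range edge distribution. By the probabilistic model with $l = 1$, node $u$ has exactly one long-range edge, and it lands on $v \neq u$ with probability $\delta(u,v)^{-2} / \big( \sum_{w \neq u} \delta(u,w)^{-2} \big)$. The event that \emph{some} long-range edge connects $u$ and $v$ (in either direction) has probability at least this quantity, so it suffices to show $\sum_{w \neq u} \delta(u,w)^{-2} \leq 4 \ln(6m)$.

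To bound the sum, I would group the nodes $w$ by their lattice distance from $u$. The first step is the combinatorial observation that in the $m \times m$ grid, the number of nodes at lattice distance exactly $j$ from $u$ is at most $4j$ (in the infinite lattice it is exactly $4j$; the finite boundary only removes points). Since the maximum lattice distance in the grid is $2(m-1) < 2m$, this gives
\begin{equation*}
    \sum_{w \neq u} \delta(u,w)^{-2} \leq \sum_{j=1}^{2m} 4j \cdot j^{-2} = 4 \sum_{j=1}^{2m} \frac{1}{j}.
\end{equation*}

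The second step is the standard harmonic-sum estimate $\sum_{j=1}^{n} 1/j \leq 1 + \ln n$, applied with $n = 2m$, which yields $4 \sum_{j=1}^{2m} 1/j \leq 4(1 + \ln 2m) = 4 \ln(2e\,m) \leq 4 \ln(6m)$, using $2e < 6$. Combining this with the first display gives the desired upper bound on the normalization constant, and hence
\begin{equation*}
    \probability\big[ \text{long-range edge between } u \text{ and } v \big] \geq \frac{\delta(u,v)^{-2}}{4 \ln(6m)},
\end{equation*}
which is the claim. I do not anticipate a genuine obstacle here: the only non-mechanical ingredient is the count of lattice points at a fixed distance, and even there an upper bound (rather than the exact value) is all that is needed, so boundary effects can be safely ignored.
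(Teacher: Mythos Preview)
Your proposal is correct and follows essentially the same argument as the paper: bound the normalizing sum by grouping nodes at lattice distance $j$ (at most $4j$ of them), obtain $4$ times a harmonic sum, and then use $\sum_{j=1}^{n} 1/j \leq 1 + \ln n$ together with $2e < 6$. The only cosmetic difference is that the paper sums up to $2m-2$ rather than $2m$, which changes nothing.
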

\begin{proof}
    $u$ chooses $v \neq u$ as its long-range destination
    with the following probability: $\delta(u, v)^{-2} / \sum_{w \neq u} \delta(u, w)^{-2}$.
    Let us first bound the denominator as follows:
    \begin{align*}
        \sum_{w \neq u} \delta(u, w)^{-2} &\leq \sum_{i = 1}^{2m - 2} (4i)(i^{-2})
        = 4 \sum_{i = 1}^{2m - 2} \frac{1}{j} \\
        &\leq 4 + 4 \ln \big( 2m - 2 \big) \leq 4 \ln \big( 6m \big).
    \end{align*}
    In the expression above, we derived the first inequality by iterating over
    all possible (lattice) distances between $m^2$ nodes on a two-dimensional grid
    (ranging from $1$ to $2m - 2$ if $u$ and $w$ are at diagonally opposite corners),
    and noticing that there are at most $4i$ nodes at distance $i$ from node $u$.
    From this we infer that the probability that node $(u, v) \in \mathcal{E}$ is at least
    $\delta(u, v)^{-2} / 4 \ln(6m)$.
\end{proof}

\begin{theorem}
    \label{theorem:graphs:lattice}
    Generate a lattice $G = (\mathcal{V}, \mathcal{E})$ of $m \times m$
    nodes using the probabilistic model above
    with $\alpha = 2$ and $r = l = 1$.
    The best-first-search algorithm beginning from any arbitrary node
    and ending in a target node visits at most $\mathcal{O}\big( \log^2 m \big)$ nodes on average.
\end{theorem}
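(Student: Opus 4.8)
The plan is to follow Kleinberg's dyadic-phase argument. I would partition the execution of the greedy procedure (Algorithm~\ref{algorithm:graphs:greedy-search} with $k=1$ and $\delta$ the lattice distance) into \emph{phases}: the search is in phase $j$ whenever the current node $u$ satisfies $2^j \le \delta(u,t) < 2^{j+1}$, where $t$ is the target. Two easy observations frame the whole proof. First, on an $m\times m$ grid $\delta$ never exceeds $2m-2$, so there are at most $\log_2(2m-2)+1 = O(\log m)$ phases. Second, the greedy step never increases $\delta(\cdot,t)$ — with $r=1$ the node $u$ is joined to its four axis-neighbours, at least one of which is strictly closer to $t$ whenever $u\neq t$ — so the phase index is non-increasing, phases are never re-entered, and phase $0$ (distance $1$) is resolved in a single step. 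Hence it suffices to show that the \emph{expected} number of steps spent in any single phase is $O(\log m)$; multiplying the two bounds gives $O(\log^2 m)$ steps, and since each node has at most five out-neighbours (four lattice edges plus one long-range edge) each step inspects $O(1)$ nodes, so $O(\log^2 m)$ nodes are visited in total.

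The core of the argument is bounding the length of a single phase. Fix phase $j$ and let $B_j$ be the set of grid nodes within lattice distance $2^j$ of $t$. A counting argument — there are $4i$ nodes at distance exactly $i$ from an interior point, and at least a constant fraction of that when $t$ lies near the boundary — gives $|B_j| = \Omega(2^{2j})$. Whenever the search sits at a node $u$ in phase $j$, every $v\in B_j$ satisfies $\delta(u,v) \le \delta(u,t)+\delta(t,v) < 2^{j+1}+2^j < 2^{j+2}$, so by Lemma~\ref{lemma:graphs:lattice} the long-range contact of $u$ equals $v$ with probability at least $(2^{j+2})^{-2} / (4\ln(6m))$. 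Summing over $v\in B_j$, the probability that the long-range contact of $u$ lands inside $B_j$ is at least $\Omega(2^{2j}) \cdot \frac{1}{2^{2j+4}\cdot 4\ln(6m)} = \Omega(1/\ln m)$ — this is exactly where the exponent $\alpha=2$ earns its keep, cancelling the two factors of $2^{2j}$. If that event occurs, the new current node lies within distance $2^j$ of $t$ and the search has left phase $j$. Thus each step of phase $j$ terminates the phase with probability at least $c/\ln m$, independently of the past, so the phase length is stochastically dominated by a geometric random variable with that parameter and has expectation $O(\log m)$.

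Combining the $O(\log m)$ bound on the number of phases with the $O(\log m)$ bound on the expected length of each phase, and then passing from steps to visited nodes via the constant degree bound, yields the claimed $O(\log^2 m)$ expected number of visited nodes. The main obstacle, and the step that deserves the most care, is the middle one: one must line up the lower bound $|B_j|=\Omega(2^{2j})$ against the upper bound $\delta(u,v)<2^{j+2}$ so that the powers of $2^{2j}$ precisely cancel, leaving only the $\ln m$ coming from the normalisation in Lemma~\ref{lemma:graphs:lattice} — and getting the ball-size lower bound right in the presence of grid-boundary effects (when $t$ sits in a corner, $B_j$ may be only a quarter of the interior estimate) is the fiddly part. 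A secondary point to state carefully is the independence used in the geometric-tail estimate: the long-range edge out of each visited node is fresh randomness, so conditioning on the trajectory so far does not spoil the per-step success probability.
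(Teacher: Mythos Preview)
Your proposal is correct and follows essentially the same dyadic-phase argument as the paper: partition the run into $O(\log m)$ phases by distance-to-target, use Lemma~\ref{lemma:graphs:lattice} together with the $\Omega(2^{2j})$ count of nodes in the half-radius ball to get a per-step exit probability of $\Omega(1/\log m)$, and bound each phase's expected length by a geometric tail. You are, if anything, slightly more careful than the paper about monotonicity of the greedy distance and about grid-boundary effects on $|B_j|$, but the structure and key estimates are identical.
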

\begin{proof}
    Define a sequence of sets $A_i$, where each $A_i$ consists of nodes whose distance
    to the target $u^\ast$ is greater than $2^i$ and at most $2^{i + 1}$.
    Formally, $A_i = \{ v \in \mathcal{V} \;|\; 2^i < \delta(u^\ast, v) \leq 2^{i + 1} \}$.
    Suppose that the algorithm is currently in node $u$
    and that $\log m \leq \delta(u, u^\ast) < m$, so that $u \in A_i$
    for some $\log \log m \leq i < \log m$. What is the probability that the algorithm
    exits the set $A_i$ in the next step?
    
    That happens when one of $u$'s neighbors has a distance to $u^\ast$ that is
    at most $2^i$. In other words, $u$ must have a neighbor that is in the set
    $A_{<i} = \cup_{j=0}^{j=i-1} A_j$.
    The number of nodes in $A_{<i}$ is at least:
    \begin{equation*}
        1 + \sum_{s=1}^{2^i} s = 1 + \frac{2^i \big( 2^i + 1\big)}{2} > 2^{2i - 1}.
    \end{equation*}

    How likely is it that $(u, v) \in \mathcal{E}$ if $v \in A_{<i}$?
    We apply Lemma~\ref{lemma:graphs:lattice}, noting that the distance of each of the
    nodes in $A_{<i}$ with $u$ is at most $2^{i + 1} + 2^i < 2^{i + 2}$.
    We obtain that, the probability that $u$ is connected to a node in $A_{<i}$ is at least
    $2^{2i - 1} (2^{i + 2})^{-2} / 4 \ln(6m) = 1 / 128 \ln (6m)$.

    Next, consider the total number of nodes in $A_i$ that are visited by the algorithm,
    and denote it by $X_i$. In expectation, we have the following:
    \begin{equation*}
        \ev \big[ X_i \big] = \sum_{j = 1}^\infty \probability \big[ X_i \geq j \big] \leq
        \sum_{j = 1}^\infty \Big( 1 - \frac{1}{128 \ln (6m)} \Big)^{j - 1} =
        128 \ln (6m).
    \end{equation*}
    We obtain the same bound if we repeat the arguments for $i = \log m$.
    When $0 \leq i < \log \log m$, the algorithm visits at most $\log m$
    nodes, so that the bound above is trivially true.

    Denoting by $X$ the total number of nodes visited, $X = \sum_{j = 0}^{\log m} X_j$,
    we conclude that:
    \begin{equation*}
        \ev \big[ X \big] \leq (1 + \log m)(128 \ln(6m)) = \mathcal{O}(\log^2 m),
    \end{equation*}
    thereby completing the proof.
\end{proof}

The argument made by~\cite{kleinberg2000sw} is that, in a lattice network
where each node is connected to its (at most four) nearest neighbors within
unit distance, and where every node has a long-range edge to one other node
with probability that is proportional to $1/\delta(\cdot, \cdot)^2$, then
the greedy algorithm visits at most a poly-logarithmic number of nodes.
Translating this result to the case of top-$1$ retrieval using
Algorithm~\ref{algorithm:graphs:greedy-search} over the same network,
we can state that
the time complexity of the algorithm is $\mathcal{O}(\log^2 m)$,
because the total number of neighbors per node is $\mathcal{O}(1)$.

While this result is significant, it only holds for the lattice network
with the lattice distance. It has thus no bearing on the time complexity of
top-$1$ retrieval over the Delaunay graph with the Euclidean distance.
In the next section, we will see how~\cite{voronet2007} close this gap.

\subsection{Extension to the Delaunay Graph}
We saw in the preceding section that, the secret to creating a 
provably navigable graph where the best-first-search algorithm
visits a poly-logarithmic number of nodes in the lattice network, was the
highly specific distribution from which long-range edges were sampled.
That element turns out to be the key ingredient when extending
the results to the Delaunay graph too, as~\cite{voronet2007} argue.

We will describe the algorithm for data in the two-dimensional unit square.
That is, we assume that the collection of data points $\mathcal{X}$
and query points are in $[0, 1]^2$.
That the vectors are bounded is not a limitation \emph{per se}---as we discussed previously,
we can always normalize vectors into the hypercube without loss of generality.
That the algorithm does not naturally extend to high dimensions is a serious
limitation, but then again, that is not surprising considering the Delaunay graph
is expensive to construct. However, in the next section, we will review
heuristics that take the idea to higher dimensions.

\subsubsection{The Probabilistic Model}
Much like the lattice network, we assume there is a base graph
and a number of randomly generated long-range edges between nodes.
For the base graph,~\cite{voronet2007}
take the Delaunay graph.\footnote{\cite{voronet2007} additionally
connect all nodes that are within
$\delta_\textsc{Min} \propto 1/m$ distance from each
other, where $\delta_\textsc{Min}$ is chosen
such that the expected number of uniformly-distributed points in a
ball of radius $\delta_\textsc{Min}$ is $1$. We reformulate their method
without $\delta_\textsc{Min}$ in the present monograph to simplify
their result.}
As for the long-range edges, each node has a directed edge to one other node
that is selected at random, following a process we will describe shortly.
Observe that in this model, the number of neighbors of each node is $\mathcal{O}(1)$.

We already know from Theorem~\ref{theorem:graphs:delaunay:top-k} that,
because the network above contains the Delaunay graph, it is navigable by Algorithm~\ref{algorithm:graphs:greedy-search}. What remains to be investigated
is what type of long-range edges could reduce the number of hops (i.e.,
the number of nodes the algorithm must visit as it navigates
from an entry node to a target node).
Because at each hop the algorithm needs to evaluate distances with $\mathcal{O}(1)$
neighbors, improving the number of steps directly improves the time complexity
of Algorithm~\ref{algorithm:graphs:greedy-search} (for the case of $k=1$).

\begin{figure}[t]
    \centering
    \includegraphics[width=0.35\linewidth]{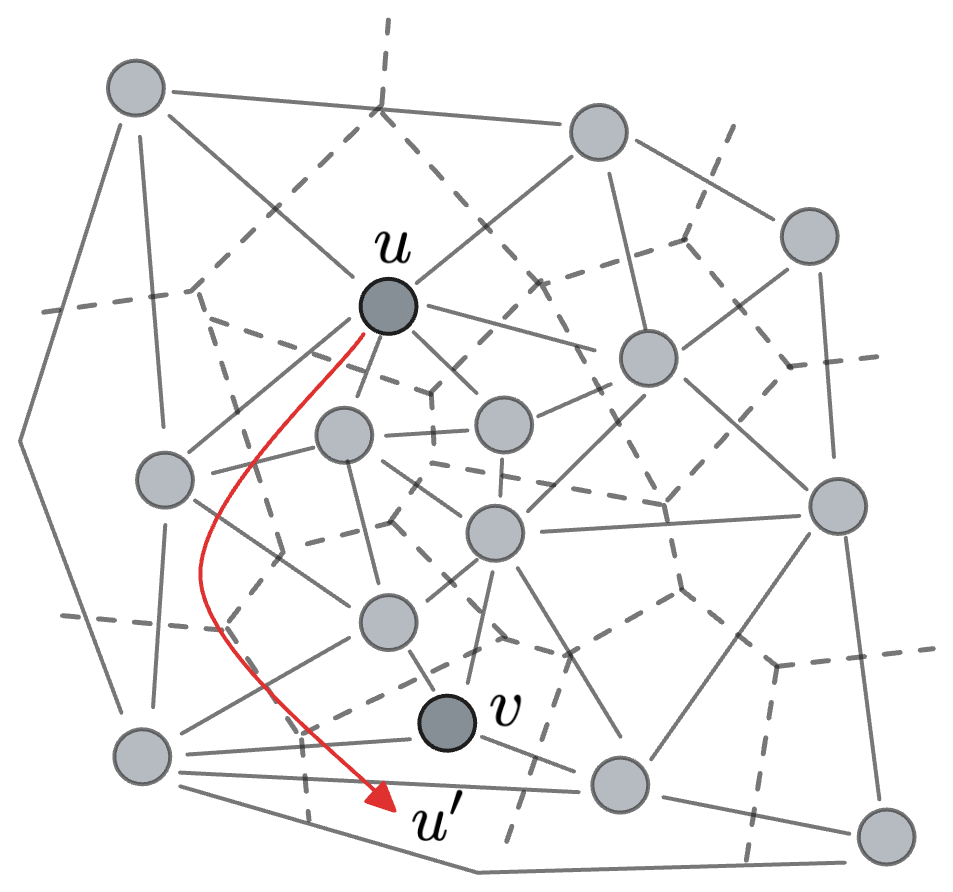}
    \caption{Formation of a long-range edge from $u$ to $v$ by the probabilistic
    model of~\cite{voronet2007}. First, we jump from $u$ to a random long-range end-point $u^\prime$,
    then designate its nearest neighbor ($v$) as the target of the edge.}
    \label{figure:graphs:voronet}
\end{figure}

\cite{voronet2007} show that, if long-range edges are chosen according to
the following protocol, then the number of hops is poly-logarithmic in $m$.
The protocol is simple: For a node $u$ in the graph,
first sample $\alpha$ uniformly from $[\ln \delta^\ast, \ln \delta_\ast]$,
where $\delta^\ast = \min_{v, w \in \mathcal{X}} \delta(v, w)$
and $\delta_\ast = \max_{v, w \in \mathcal{X}} \delta(v, w)$.
Then choose $\theta \sim [0, 2\pi]$, to finally obtain $u^\prime = u + z$
where $z$ is the vector $[e^\alpha \cos \theta, e^\alpha \sin \theta]$.
Let us refer to $u^\prime$ as the
``long-range end-point,'' and note that this point may escape
the $[0, 1]^2$ square. Next, we find the nearest node $v$ to $u^\prime$
and add a directed edge from $u$ to $v$.
This is demonstrated in Figure~\ref{figure:graphs:voronet}.

\subsubsection{The Claim}

Given the resulting graph,~\cite{voronet2007} state and prove that the
average number of hops taken by the best-first-search algorithm is poly-logarithmic.
Before we discuss the claim, however, let us state a useful lemma.

\begin{lemma}
    \label{lemma:graphs:voronet}
    The probability that the long-range end-point from a node $u$ lands in a ball
    centered at another node $v$ with radius $\beta \delta(u, v)$ for some small $\beta \in [0, 1]$
    is at least $K \beta^2 / (1 + \beta)^2$ where $K = (2 \ln \Delta)^{-1}$ and
    $\Delta = \delta_\ast / \delta^\ast$ is the aspect ratio.
\end{lemma}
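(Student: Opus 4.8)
The plan is to determine the distribution of the long-range end-point $u^\prime$ as a random point of $\mathbb{R}^2$, and then simply integrate its density over the ball $B(v, \beta D)$, where $D \triangleq \delta(u, v)$. Recall that $u^\prime = u + e^\alpha(\cos\theta, \sin\theta)$ with $\alpha \sim \mathrm{Unif}[\ln\delta^\ast, \ln\delta_\ast]$ and $\theta \sim \mathrm{Unif}[0, 2\pi]$ drawn independently, so $\alpha$ has density $1/\ln\Delta$ and $\theta$ has density $1/(2\pi)$. First I would push this law forward through the map $(\alpha, \theta) \mapsto z = e^\alpha(\cos\theta, \sin\theta)$: its Jacobian contributes a factor $e^{-\alpha} = \lVert z \rVert_2^{-1}$ from $\alpha \mapsto e^\alpha$ and another $\lVert z \rVert_2^{-1}$ from polar-to-Cartesian coordinates, so $d\alpha\, d\theta = \lVert z \rVert_2^{-2}\, dz$. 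This shows $u^\prime$ has density
\begin{equation*}
    w \longmapsto \frac{1}{2\pi \ln\Delta\, \lVert w - u \rVert_2^2}
\end{equation*}
on the annulus $A \triangleq \{ w \;:\; \delta^\ast \le \lVert w - u \rVert_2 \le \delta_\ast \}$ and $0$ elsewhere.

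Next I would bound this density from below on $B(v, \beta D)$. By the triangle inequality, every $w \in B(v, \beta D)$ satisfies $\lVert w - u \rVert_2 \le \lVert w - v \rVert_2 + \lVert v - u \rVert_2 \le (1 + \beta) D$, so the density is at least $\big( 2\pi \ln\Delta\, (1 + \beta)^2 D^2 \big)^{-1}$ there. Assuming $B(v, \beta D) \subseteq A$, integrating over the ball gives
\begin{equation*}
    \probability\big[ u^\prime \in B(v, \beta D) \big] \;\ge\; \frac{\pi (\beta D)^2}{2\pi \ln\Delta\, (1 + \beta)^2 D^2} \;=\; \frac{\beta^2}{2 (1 + \beta)^2 \ln\Delta} \;=\; \frac{K \beta^2}{(1 + \beta)^2},
\end{equation*}
exactly as claimed since $K = (2\ln\Delta)^{-1}$. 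One could equally phrase the computation without the Cartesian density: write the probability as $\tfrac{1}{2\pi\ln\Delta}\int \rho^{-1}\ell(\rho)\, d\rho$, where $\ell(\rho)$ is the angular measure of directions sending the radius-$\rho$ endpoint into $B(v,\beta D)$, then use the coarea identity $\int \rho\, \ell(\rho)\, d\rho = \mathrm{area}\big(B(v,\beta D) \cap A\big)$; it is the same estimate.

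The one step requiring real care — and the main obstacle — is justifying $B(v, \beta D) \subseteq A$, which amounts to $(1-\beta)\delta(u,v) \ge \delta^\ast$ and $(1+\beta)\delta(u,v) \le \delta_\ast$. Since only $\delta^\ast \le \delta(u,v) \le \delta_\ast$ is guaranteed in general, this is precisely where the hypothesis ``$\beta$ small'' is used: it holds as soon as $\delta(u,v)$ is not within a factor $1 \pm \beta$ of the extreme pairwise distances. To cover the boundary cases (where $\delta(u,v)$ is close to $\delta^\ast$ or to $\delta_\ast$), I would integrate only over $B(v,\beta D) \cap A$; the ``far half'' $\{\lVert w - u\rVert_2 \ge \delta(u,v)\}$ never violates the inner constraint and the ``near half'' never violates the outer one, and each half has area at least a constant fraction of $\pi(\beta D)^2$, so the bound survives up to an absolute constant that can be folded into $K$ (equivalently, the stated bound holds once $\beta$ is taken small enough in terms of $\ln\Delta$).
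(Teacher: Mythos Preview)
Your proof is correct and matches the paper's argument almost exactly: the paper derives the same density $dS/(2\pi \ln\Delta\, r^2)$ (working directly in polar coordinates rather than via your Jacobian pushforward), applies the same triangle-inequality bound $r \le (1+\beta)\delta(u,v)$, and integrates over the ball to obtain the identical expression. The paper does not discuss the annulus-containment issue at all, so your treatment of that point is actually more careful than the original.
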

\begin{proof}
    The probability that the long-range end-point lands in an area $dS$ that covers
    the distance $[r, r + dr]$ and angle $[\theta, \theta + d\theta]$, for small $dr$ and $d\theta$, is:
    \begin{equation*}
        \frac{d\theta}{2\pi} \frac{\ln (r + dr) - \ln r}{\ln \delta_\ast - \ln \delta^\ast} \approx
        \frac{d\theta}{2\pi} \frac{dr/r}{\ln \Delta} =
        \frac{1}{2\pi\ln \Delta} \frac{r d\theta dr}{r^2} \approx
        \frac{dS}{2 \pi \ln \Delta r^2}
    \end{equation*}.

    Observe now that the distance between a point $u$ and any point in the ball
    described in the lemma is at most $(1 + \beta) \delta(u, v)$. We can therefore
    see that the probability that the long-range end-point lands in $B(v, \beta \delta(u, v))$
    is at least:
    \begin{equation*}
        \frac{\pi \beta^2 \delta(u, v)^2}{2 \pi \ln \Delta (1 + \beta)^2 \delta(u, v)^2} =
        \frac{\beta^2}{2\ln(\Delta) (1 + \beta)^2},
    \end{equation*}
    as required.
\end{proof}

\begin{theorem}
    \label{theorem:graphs:voronet}
    Generate a graph $G = (\mathcal{V}, \mathcal{E})$ according to the probabilistic
    model described above, for vectors in $[0, 1]^2$ equipped with the
    Euclidean distance $\delta(\cdot, \cdot)$.
    The number of nodes visited by the best-first-search algorithm
    starting from any arbitrary node and ending at a target node is $\mathcal{O}(\log^2 \Delta)$.
\end{theorem}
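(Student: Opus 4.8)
The plan is to follow the phase-based analysis of Theorem~\ref{theorem:graphs:lattice}, but with dyadic distance scales keyed to the aspect ratio $\Delta$ rather than to $m$, and with Lemma~\ref{lemma:graphs:voronet} supplying the ``good long-range edge'' probability in place of Lemma~\ref{lemma:graphs:lattice}. Throughout, write $u^\ast$ for the target node and $d_t = \delta(u_t, u^\ast)$ for the distance from the node $u_t$ visited at step $t$ (here the ``query'' fed to Algorithm~\ref{algorithm:graphs:greedy-search} is $u^\ast$ itself, since we are routing \emph{to a data point}). Since best-first-search moves to the neighbor closest to $u^\ast$, the sequence $d_t$ is strictly decreasing until the walk reaches $u^\ast$ (and it does reach it, because $G$ contains the Delaunay graph, by Theorem~\ref{theorem:graphs:delaunay}), so in particular all visited nodes are distinct and the walk stays inside a shrinking ball around $u^\ast$. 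First I would partition the range $[\delta^\ast, \delta_\ast]$ of possible distances into the $\mathcal{O}(\log_2 \Delta)$ dyadic scales $S_j = \{\, v \in \mathcal{V} \;:\; 2^j \delta^\ast < \delta(v, u^\ast) \le 2^{j+1} \delta^\ast \,\}$ for $0 \le j \le \lceil \log_2 \Delta \rceil$; the walk descends through these scales, from the one containing the start node down to $S_0$.

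The core estimate is a per-step progress bound within a single scale. Suppose the walk is at a node $u \in S_j$, so $d := \delta(u, u^\ast) \le 2^{j+1}\delta^\ast$. I would apply Lemma~\ref{lemma:graphs:voronet} with the ball $B(u^\ast, \tfrac{1}{8} d)$ (that is, with its ``$v$'' $=u^\ast$ and $\beta = 1/8$): the long-range end-point $u'$ sampled at $u$ lands in this ball with probability at least $K (1/8)^2 / (1 + 1/8)^2 = K/81 = \Omega(1/\ln \Delta)$, where $K = (2 \ln \Delta)^{-1}$. If $u' \in B(u^\ast, \tfrac18 d)$, then the long-range neighbor $v$ of $u$ --- the data point nearest $u'$, hence at distance at most $\delta(u^\ast, u') \le \tfrac18 d$ from $u'$ because $u^\ast$ is itself a candidate nearest point --- satisfies $\delta(v, u^\ast) \le \tfrac14 d$ by the triangle inequality. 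As $v \in N(u)$ and best-first-search advances to the neighbor closest to $u^\ast$, the next node lies at distance at most $\tfrac14 d \le 2^{j-1}\delta^\ast$ from $u^\ast$, i.e. strictly below $S_j$. Since long-range edges are drawn independently per node and the walk visits distinct nodes, the number $X_j$ of steps spent inside $S_j$ is stochastically dominated by a geometric random variable with success probability $\Omega(1/\ln \Delta)$, whence $\mathbb{E}[X_j] = \mathcal{O}(\ln \Delta)$.

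For the endgame I would argue separately that once $d_t \le 2\delta^\ast$, the minimum-separation property forces at most a constant number of nodes to lie within distance $2\delta^\ast$ of $u^\ast$ (a planar packing argument: the radius-$\delta^\ast/2$ balls around these nodes are pairwise disjoint and contained in $B(u^\ast, \tfrac{5}{2}\delta^\ast)$), so the walk --- whose distance to $u^\ast$ keeps decreasing and which terminates at $u^\ast$ by Theorem~\ref{theorem:graphs:delaunay} --- clears $S_0$ in $\mathcal{O}(1)$ further steps. Summing over scales by linearity of expectation, the expected number of visited nodes is $\sum_{j=0}^{\lceil \log_2 \Delta\rceil} \mathbb{E}[X_j] + \mathcal{O}(1) = \mathcal{O}(\log \Delta)\cdot\mathcal{O}(\log \Delta) = \mathcal{O}(\log^2 \Delta)$, giving the claim.

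The main obstacle is the mismatch between what Lemma~\ref{lemma:graphs:voronet} controls --- where the \emph{sampled end-point} $u'$ lands --- and what the graph actually provides, namely the edge to the \emph{nearest node} of $u'$. Kleinberg's original lattice argument has no such gap, so the plane version needs the triangle-inequality sandwich above, together with the (mild) fact that the Delaunay graph is rich enough near $u^\ast$ that this nearest node is a useful stepping stone rather than a detour; making that rigorous, and pinning down the $\mathcal{O}(1)$ endgame at resolution $\delta^\ast$, is where the real care is required. By contrast, the phase counting itself is a routine transcription of the proof of Theorem~\ref{theorem:graphs:lattice}, with $\ln\Delta$ playing the role that $\ln(6m)$ played there.
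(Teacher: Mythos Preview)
Your proposal is correct and follows essentially the same route as the paper: apply Lemma~\ref{lemma:graphs:voronet} to get an $\Omega(1/\ln\Delta)$ per-step probability of shrinking the distance to $u^\ast$ by a constant factor, then multiply $\mathcal{O}(\log\Delta)$ phases by $\mathcal{O}(\log\Delta)$ expected steps per phase. The paper uses $\beta = 1/6$ and a continuous shrink-factor count rather than your dyadic scales, and in fact glosses over both the end-point-versus-nearest-node gap and the $\delta^\ast$-scale endgame that you handle explicitly.
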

\begin{proof}
    The proof follows the same reasoning as in the proof of Theorem~\ref{theorem:graphs:lattice}.

    Suppose we are currently at node $u$ and that $u^\ast$ is our target node.
    By Lemma~\ref{lemma:graphs:voronet}, the probability that the long-range end-point
    of $u$ lands in $B(u^\ast, \delta(u, u^\ast)/6)$ is at least $1/98 \ln \Delta$.
    As such, the total number of hops, $X$, from $u$ to a point in $B(u^\ast, \delta(u, u^\ast)/6)$
    has the following expectation:
    \begin{equation*}
        \ev [ X ] = \sum_{i = 1}^{\infty} \probability [X \geq i] \leq
        \sum_{i = 1}^\infty \Big( 1 - \frac{1}{98 \ln \Delta} \Big)^{i - 1} = 98 \ln \Delta.
    \end{equation*}
    Every time the algorithm moves from the current node $u$ to some other node in
    $B(u^\ast, \delta(u, u^\ast)/6)$, the distance is shrunk by a factor of $6/5$.
    As such, the total number of hops in expectation is at most:
    \begin{equation*}
        \Big( \ln_{6/5} \Delta \Big) \times \Big( 98 \ln \Delta \Big) = \mathcal{O}(\log^2 \Delta).
    \end{equation*}
\end{proof}

We highlight that,~\cite{voronet2007} choose the interval from which $\alpha$
is sampled differently. Indeed, $\alpha$ in their work is chosen uniformly from
the range $\delta_\textsc{Min} \propto 1/m$ and $\sqrt{2}$.
Substituting that configuration into Theorem~\ref{theorem:graphs:voronet} gives
an expected number of hops that is $\mathcal{O}(\log^2 m)$.

\subsection{Approximation}
The results of~\cite{voronet2007} are encouraging. In theory, so long as
we can construct the Delaunay graph, we not only have the optimality guarantee,
but we are also guaranteed to have a poly-logarithmic number of hops to
reach the optimal answer. Alas, as we have discussed previously, the Delaunay
graph is expensive to build in high dimensions.

\begin{svgraybox}
Moreover, the number of
neighbors per node is no longer $\mathcal{O}(1)$. So even if we inserted
long-range edges into the Delaunay graph, it is not immediate if the time
saved by skipping Voronoi regions due to long-range edges
offsets the additional time the algorithm
spends computing distances between each node along the path and its neighbors.
\end{svgraybox}

We are back, then, to approximation with the help of heuristics.
\cite{raynet2007} describe one such method in a follow-up study.
Their method approximates the Voronoi regions of every node by
resorting to a \emph{gossip} protocol. In this procedure, every node
has a list of $3d + 1$ of its current neighbors, where $d$ denotes the dimension
of the space. In every iteration of the algorithm,
every node passes its current list to its neighbors. When a node receives
this information, it takes the union of all lists, and finds the subset of
$3d + 1$ points with the minimal volume. This subset becomes the node's
current list of neighbors. While a na\"ive implementation of the protocol
is prohibitively expensive,~\cite{raynet2007} discuss an alternative to estimating
the volume induced by a set of $3d + 1$ points, and the search for the minimal volume.

\cite{nsw2014} take a different approach. They simply permute the vectors
in the collection $\mathcal{X}$, and sequentially add each vector to the graph. Every time
a vector is inserted into the graph, it is linked to its $k$ nearest neighbors
from the current snapshot of the graph.
The intuition is that, as the graph grows, the edges added earlier in the evolution
of the graph serve as long-range edges in the final graph,
and the more recent edges form an approximation
of the $k$-NN graph, which itself is an approximation of the Delaunay graph.
Later~\cite{hnsw2020} modify the algorithm by introducing a hierarchy of graphs.
The resulting graph has proved successful in practice and, despite its lack
of theoretical guarantees, is both effective and highly efficient.

\section{Neighborhood Graphs}

In the preceding section, our starting point was the Delaunay graph. We augmented it with
random long-range connections to improve the transmission rate through the network.
Because the resulting structure contains the Delaunay
graph, we get the optimality guarantee of Theorem~\ref{theorem:graphs:delaunay:top-k}
for free. But, as a result of the complexity of the Delaunay construction in high dimensions,
we had to approximate the structure instead, losing all guarantees in the process.
Frustratingly, the approximate structure obtained by the heuristics we discussed, is
certainly not a super-graph of the Delaunay graph, nor is it necessarily its sub-graph.
In fact, even the fundamental property of connectedness is not immediately guaranteed.
There is therefore nothing meaningful to say about the theoretical behavior of such graphs.

In this section, we do the opposite. Instead of adding edges to the Delaunay graph and
then resorting to heuristics to create a completely different graph,
we prune the edges of the Delaunay graph to find a structure that is its \emph{sub-graph}.
Indeed, we cannot say anything meaningful about the optimality of
\emph{exact} top-$k$ retrieval, but as we will later see, we can state formal results
for the approximate top-$k$ retrieval variant---albeit in a very specific case.
The structure we have in mind is known as the Relative Neighborhood
Graph (RNG)~\citep{Toussaint1980rng,relativeNeighborhoodGraphs}.

\begin{svgraybox}
In an RNG, $G = (\mathcal{V}, \mathcal{E})$, for a distance function $\delta(\cdot, \cdot)$,
there is an undirected edge between two nodes $u, v \in \mathcal{V}$ if and only if
$\delta(u, v) < \max \big( \delta(u, w), \delta(w, v) \big)$ for all
$w \in \mathcal{V} \setminus \{ u, v\}$.
That is, the graph guarantees that, if $(u, v) \in \mathcal{E}$, then there is no other
point in the collection that is simultaneously closer to $u$ and $v$, than $u$ and $v$ are to
each other. Conceptually, then, we can view constructing an RNG as pruning away
edges in the Delaunay graph that violate the RNG property.
\end{svgraybox}

The RNG was shown to contain the Minimum Spanning Tree~\citep{Toussaint1980rng},
so that it is guaranteed to be connected.
It is also provably contained in the Delaunay graph~\citep{OROURKE1982} in any metric
space and in any number of dimensions.
As a final property, it is not hard to see that such a graph $G$ comes with a weak
optimality guarantee for the best-first-search algorithm:
If $q = u^\ast \in \mathcal{X}$, then the greedy traversal algorithm
returns the node associated with $q$, no matter where it enters the graph.
That is due simply to the following fact: If the current node $u$ is a local optimum
but not the global optimum, then there must be an edge connecting $u$ to a node
that is closer to $u^\ast$. Otherwise, $u$ itself must be connected to $u^\ast$.

Later,~\cite{arya1993sng} proposed a \emph{directed} variant of the RNG,
which they call the \emph{Sparse Neighborhood Graph} (SNG) that is arguably
more suitable for top-$k$ retrieval.
For every node $u \in \mathcal{V}$, we apply the following procedure:
Let $\mathcal{U} = \mathcal{V} \setminus \{ u \}$.
Sort the nodes in $\mathcal{U}$ in increasing distance to $u$.
Then, remove the closest node (say, $v$) from $\mathcal{U}$ and add an edge between $u$ to $v$.
Finally, remove from $\mathcal{U}$ all nodes $w$ that satisfy $\delta(u, w) > \delta(w, v)$.
The process is repeated until $\mathcal{U}$ is empty.
It can be immediately seen that the weak optimality guarantee from before still holds in the SNG.

\medskip

Neighborhood graphs are the backbone of many graph algorithms for top-$k$
retrieval~\citep{nsw2014,hnsw2020,fanng2016,fu2019nsg,fu2022nssg,diskann}.
While many of these algorithms make for efficient methods in practice, the
Vamana construction~\citep{diskann} stands out as it introduces a novel
super-graph of the SNG that turns out to have provable theoretical properties.
That super-graph is what~\cite{indyk2023worstcase}
call an \emph{$\alpha$-shortcut reachable} SNG, which we will review next.
For brevity, though, we call this graph simply $\alpha$-SNG.

\begin{figure}[t]
    \centering
    \subfloat[$\alpha = 1$]{
        \includegraphics[width=0.45\linewidth]{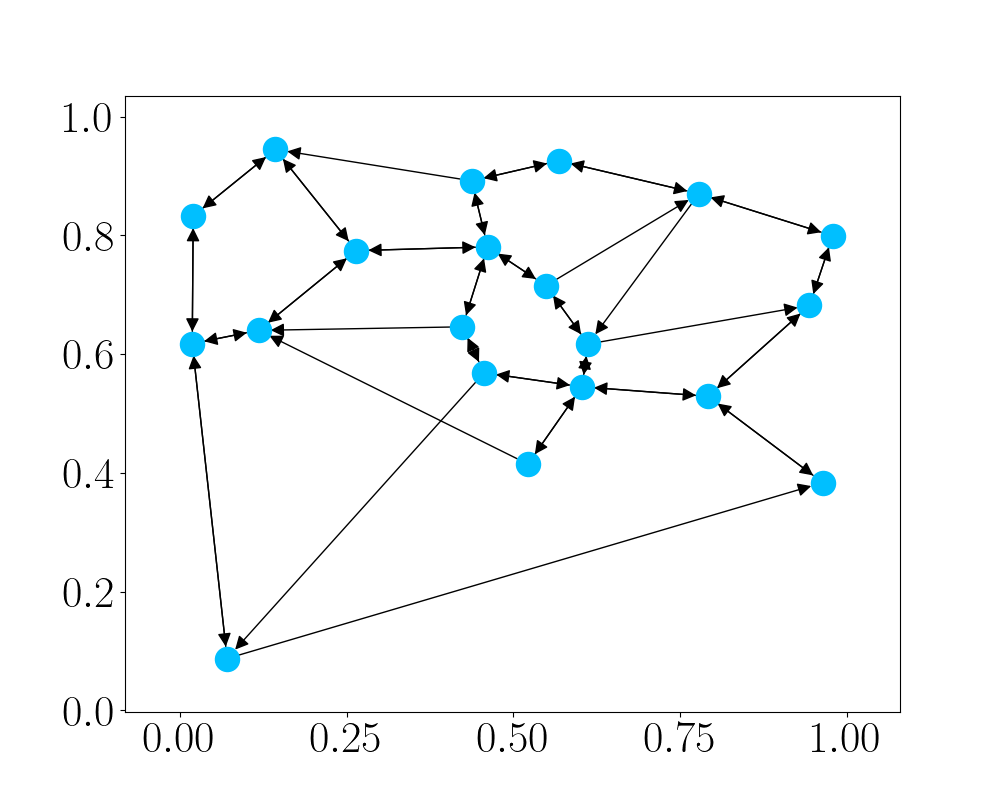}
    }
    \subfloat[$\alpha = 1.1$]{
        \includegraphics[width=0.45\linewidth]{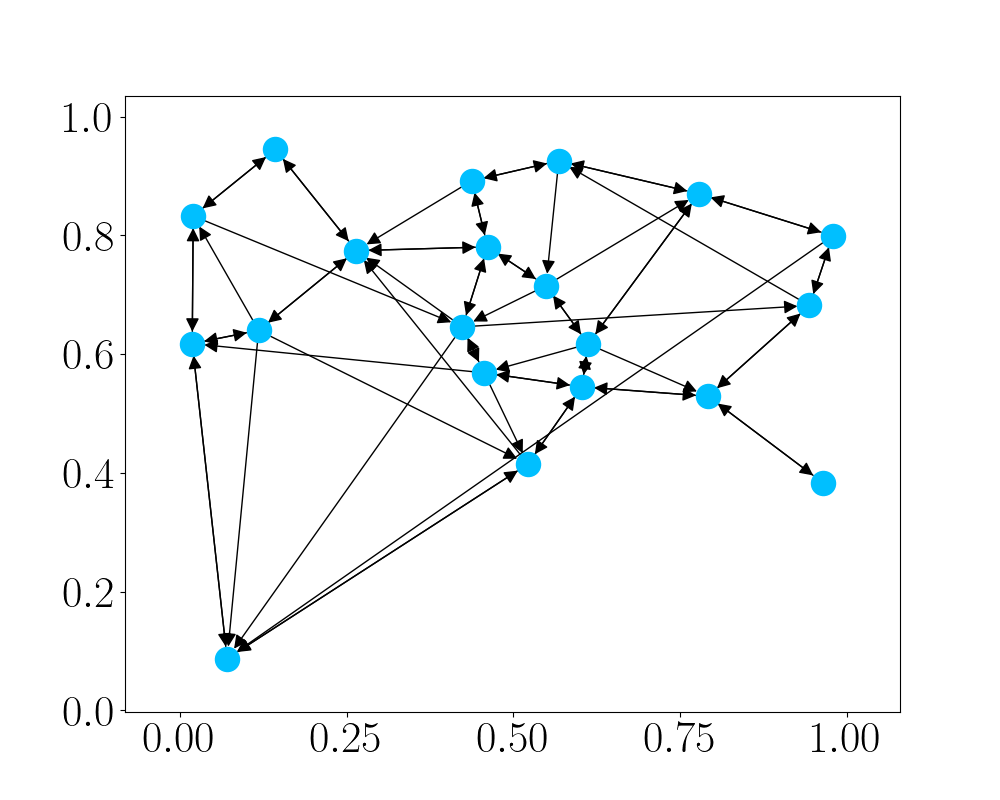}
    }
    
    \subfloat[$\alpha = 1.2$]{
        \includegraphics[width=0.45\linewidth]{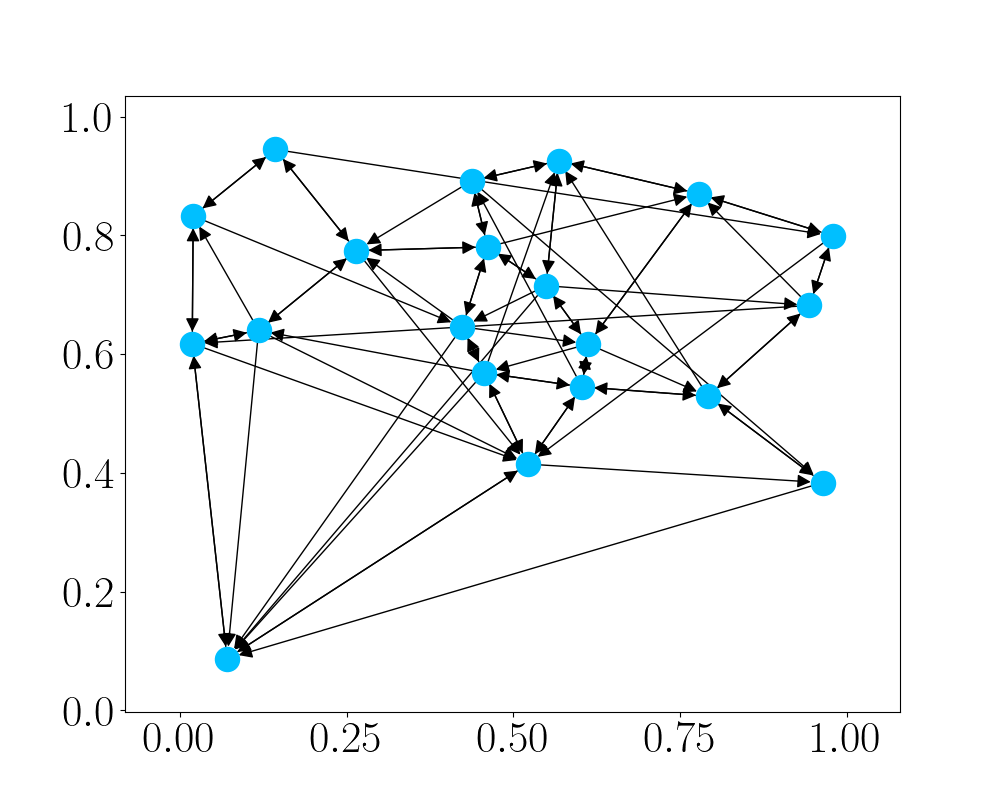}
    }
    \subfloat[$\alpha = 1.3$]{
        \includegraphics[width=0.45\linewidth]{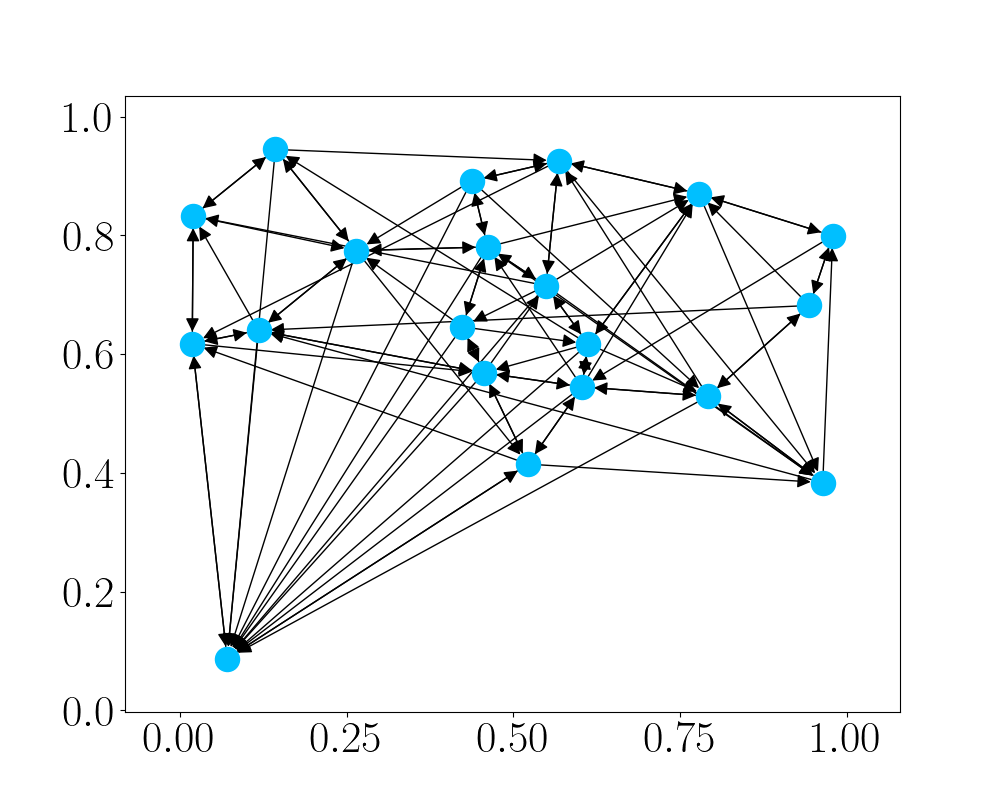}
    }
    \caption{Examples of $\alpha$-SNGs on a dataset of $20$ points drawn uniformly
    from $[0, 1]^2$ (blue circles).
    When $\alpha=1$, we recover the standard SNG. As $\alpha$ becomes larger, the
    resulting graph becomes more dense.}
    \label{figure:graphs:sng}
\end{figure}

\subsection{From SNG to \texorpdfstring{$\alpha$}{alpha}-SNG}
\cite{diskann} introduce a subtle adjustment to the SNG construction.
In particular, suppose we are processing a node $u$, have already extracted
the node $v$ whose distance to $u$ is minimal among the nodes in $\mathcal{U}$
(i.e., $v = \argmin_{w \in \mathcal{U}} \delta(u, w)$), and are now deciding
which nodes to discard from $\mathcal{U}$. In the standard SNG construction,
we remove a node $w$ for which $\delta(u, w) > \delta(w, v)$. But in the modified
construction, we instead discard a node $w$ if $\delta(u, w) > \alpha \delta(w, v)$
for some $\alpha > 1$. Note that, the case of $\alpha = 1$ simply gives the standard SNG.
Figure~\ref{figure:graphs:sng} shows a few examples of $\alpha$-SNGs on a toy dataset.

That is what~\cite{indyk2023worstcase} later refer to as an $\alpha$-shortcut reachable
graph. They define $\alpha$-shortcut reachability as the property where,
for any node $u$, we have that every other node $w$ is either the target of an edge
from $u$ (so that $(u, w) \in \mathcal{E}$), or that there is a node $v$
such that $(u, v) \in \mathcal{E}$ and $\delta(u, w) \geq \alpha \delta(w, v)$.
Clearly, the graph constructed by the procedure above is $\alpha$-shortcut
reachable by definition.

\subsubsection{Analysis}
\cite{indyk2023worstcase} present an analysis of the $\alpha$-SNG
for a collection of vectors $\mathcal{X}$ with \emph{doubling dimension} $d_\circ$
as defined in Definition~\ref{definition:doubling-dimension}.

For collections with a fixed doubling constant,~\cite{indyk2023worstcase} state two bounds. 
One gives a bound on the degree of every node in an $\alpha$-SNG. The other tells us
the expected number of hops from any arbitrary entry node to an $\epsilon$-approximate
solution to top-$1$ queries. The two bounds together give us an idea of the time complexity
of Algorithm~\ref{algorithm:graphs:greedy-search} over an $\alpha$-SNG as well as its
accuracy.

\begin{figure}[t]
    \centering
    \includegraphics[width=0.3\linewidth]{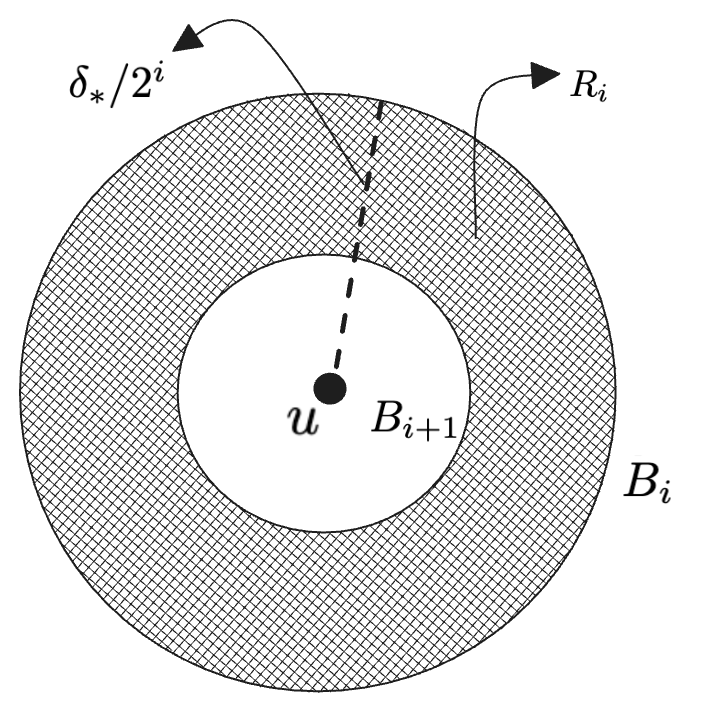}
    \caption{The sets $B_i$ and rings $R_i$ in the proof of Theorem~\ref{theorem:graphs:sng:degree}.}
    \label{figure:graphs:sng:proof}
\end{figure}

\begin{theorem}
    \label{theorem:graphs:sng:degree}
    The degree of any node in an $\alpha$-SNG is
    $\mathcal{O}\big( (4\alpha)^{d_\circ} \log \Delta \big)$ if the collection
    $\mathcal{X}$ has doubling dimension $d_\circ$ and aspect ratio $\Delta = \delta_\ast / \delta^\ast$.
\end{theorem}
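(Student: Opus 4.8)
The plan is to bound the out-degree $|N(u)|$ of an arbitrary node $u$ by (i) extracting a pairwise-separation property of $N(u)$ directly from the $\alpha$-SNG pruning rule, (ii) partitioning $N(u)$ into $\mathcal{O}(\log \Delta)$ distance rings, and (iii) bounding the number of neighbors in each ring by a packing argument using the doubling dimension.

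First I would establish the following claim: for any two out-neighbors $v, v' \in N(u)$ with $\delta(u, v) \le \delta(u, v')$, one has $\delta(v, v') \ge \delta(u, v')/\alpha$. Recall that the construction scans the remaining candidates in order of increasing distance to $u$, and whenever a candidate $v$ is selected (so $(u,v)$ is added), every remaining $w$ with $\delta(u,w) > \alpha\,\delta(w,v)$ is discarded. Since $\delta(u,v) \le \delta(u,v')$, the node $v$ is selected no later than $v'$ (breaking ties in distance consistently so this is well defined); for $v'$ to survive the step that selected $v$, it must satisfy $\delta(u,v') \le \alpha\,\delta(v',v)$, which rearranges to the claim.

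Next I would partition $N(u)$ into rings. Because $u$ and any neighbor are distinct points of $\mathcal{X}$, all distances $\delta(u,w)$ for $w \in N(u)$ lie in $[\delta^\ast, \delta_\ast] = [\delta^\ast, \Delta\,\delta^\ast]$. Define, for $i = 0, 1, \ldots, \lceil \log_2 \Delta \rceil$,
\[
    R_i = \big\{ w \in N(u) \;:\; \delta^\ast 2^{i} \le \delta(u,w) < \delta^\ast 2^{i+1} \big\},
\]
so $N(u) = \bigcup_i R_i$ is a union of $\mathcal{O}(\log \Delta)$ rings (the sets drawn in Figure~\ref{figure:graphs:sng:proof}). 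It then suffices to bound $|R_i|$ uniformly by $\mathcal{O}\big((4\alpha)^{d_\circ}\big)$ and sum. For this: every element of $R_i$ lies in the ball $B(u, \delta^\ast 2^{i+1})$, while the separation claim shows any two elements of $R_i$ are at distance at least $\delta^\ast 2^{i}/\alpha$ apart, so $R_i$ is a $(\delta^\ast 2^{i}/\alpha)$-separated subset of a ball of radius $\delta^\ast 2^{i+1}$. Iterating the doubling property of Definition~\ref{definition:doubling-dimension} $j$ times, that ball is covered by $2^{j d_\circ}$ balls of radius $\delta^\ast 2^{i+1}/2^{j}$; taking $j$ the least integer with $2^{j} > 4\alpha$ makes each covering ball have diameter strictly below the separation, hence contain at most one point of $R_i$. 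Thus $|R_i| \le 2^{j d_\circ} = \mathcal{O}\big((4\alpha)^{d_\circ}\big)$, and summing over the $\mathcal{O}(\log \Delta)$ rings gives $|N(u)| = \mathcal{O}\big((4\alpha)^{d_\circ}\log \Delta\big)$.

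The main obstacle is more bookkeeping than depth: it lies in the first step, namely reading the separation property cleanly off the sequential, order-dependent pruning rule and making the "a later-added neighbor must have survived the earlier selection" argument airtight, including the handling of ties. The covering step is then a routine doubling-dimension packing bound, the only care needed being to push $2^{j}$ just past $4\alpha$ so that each covering ball captures at most one neighbor.
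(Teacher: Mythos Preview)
Your proposal is correct and follows essentially the same approach as the paper: a ring decomposition into $\mathcal{O}(\log\Delta)$ distance shells around $u$, combined with a doubling-dimension packing argument that uses the $\alpha$-SNG pruning rule to guarantee separation among neighbors within a ring. The only cosmetic difference is that you first isolate the pairwise separation $\delta(v,v')\ge \delta(u,v')/\alpha$ as a standalone claim and then pack, whereas the paper covers each ring by balls of radius roughly $\delta_\ast/(\alpha\,2^{i+2})$ and argues directly that at most one neighbor can survive in each cover ball; these are the same argument phrased two ways.
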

\begin{proof}
    Consider a node $u \in \mathcal{V}$. For each $i \in [\log_2 \Delta]$, define
    a ball centered at $u$ with radius $\delta_\ast / 2^i$: $B_i = B(u, \delta_\ast / 2^i)$.
    From this, construct rings $R_i = B_i \setminus B_{i + 1}$. See Figure~\ref{figure:graphs:sng:proof}
    for an illustration.

    Because $\mathcal{X}$ has a constant doubling dimension, we can cover each $R_i$
    with $\mathcal{O}\big( (4\alpha)^{d_\circ} \big)$ balls of radius $\delta_\ast / \alpha 2^{i + 2}$.
    By construction, two points in each of these cover balls are at
    most $\delta_\ast / \alpha 2^{i + 1}$ apart. At the same time, the distance from $u$ to
    any point in a cover ball is at least $\delta_\ast / 2^{i + 1}$. By construction, all points
    in a cover ball except one are discarded as we form $u$'s edges in the $\alpha$-SNG.
    As such, the total number of edges from $u$ is bounded by the total number of cover balls,
    which is $\mathcal{O}\big( (4\alpha)^{d_\circ} \log \Delta \big)$.
\end{proof}

\begin{theorem}
    If $G = (\mathcal{V}, \mathcal{E})$ is an $\alpha$-SNG for collection $\mathcal{X}$,
    then Algorithm~\ref{algorithm:graphs:greedy-search} with $k = 1$ returns
    an $(\frac{\alpha + 1}{\alpha - 1} + \epsilon)$-approximate top-$1$ solution
    by visiting $\mathcal{O}\big( \log_\alpha \frac{\Delta}{(\alpha - 1)\epsilon} \big)$ nodes.
\end{theorem}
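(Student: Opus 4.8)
The plan is to track the greedy walk $x_0 = s, x_1, x_2, \ldots$ produced by Algorithm~\ref{algorithm:graphs:greedy-search} with $k=1$ and to control, at each step, the distance $r_t = \delta(q, x_t)$ of the current node to the query in terms of the optimal distance $D^\ast = \delta(q, u^\ast)$. The engine is $\alpha$-shortcut reachability applied with $u = x_t$ and $w = u^\ast$: either $(x_t, u^\ast) \in \mathcal{E}$ — in which case the best neighbor of $x_t$ has distance at most $D^\ast$ to $q$, hence is an optimal node, and greedy terminates with an exact answer — or there is a witness $v \in N(x_t)$ with $\delta(x_t, u^\ast) \geq \alpha\,\delta(u^\ast, v)$. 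In the latter case two triangle inequalities give $\delta(q, v) \leq D^\ast + \delta(u^\ast, v) \leq D^\ast + \tfrac{1}{\alpha}\delta(x_t, u^\ast) \leq D^\ast + \tfrac{1}{\alpha}(r_t + D^\ast) = \tfrac{\alpha+1}{\alpha}D^\ast + \tfrac{1}{\alpha}r_t$. Since the greedy step moves to the \emph{best} neighbor of $x_t$, we obtain the recurrence $r_{t+1} \leq \tfrac{\alpha+1}{\alpha}D^\ast + \tfrac{1}{\alpha}r_t$ whenever the walk has not yet terminated.

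Next I would rearrange this recurrence around its fixed point $r^\circ = \tfrac{\alpha+1}{\alpha-1}D^\ast$: the residual $\Delta_t \triangleq r_t - r^\circ$ satisfies $\Delta_{t+1} \leq \tfrac{1}{\alpha}\Delta_t$ as long as it is positive, so $\Delta_t \leq \alpha^{-t}\Delta_0 \leq \alpha^{-t}\delta_\ast$ since the initial gap is at most the diameter. Two consequences follow. First, whenever $r_t > r^\circ$ the bound $r_{t+1} \leq \tfrac{\alpha+1}{\alpha}D^\ast + \tfrac1\alpha r_t < r_t$ exhibits a strict improvement, so greedy cannot halt while above $r^\circ$; hence at natural termination $r_T \leq \tfrac{\alpha+1}{\alpha-1}D^\ast$. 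Second, capping the walk at $T = \big\lceil \log_\alpha\big(\delta_\ast / ((\alpha-1)\epsilon\,\delta^\ast)\big)\big\rceil = \mathcal{O}\big(\log_\alpha \tfrac{\Delta}{(\alpha-1)\epsilon}\big)$ hops forces $\Delta_T \leq (\alpha-1)\epsilon\,\delta^\ast$, which — after folding the $\delta^\ast$-versus-$D^\ast$ comparison into the stated slack — yields $r_T \leq \big(\tfrac{\alpha+1}{\alpha-1} + \epsilon\big)D^\ast$. Each visited node is counted once, giving the claimed node-visit bound; combined with the degree bound of Theorem~\ref{theorem:graphs:sng:degree} and the $\mathcal{O}(d)$ cost of a distance evaluation, this also pins down the running time.

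Finally I would dispose of the boundary cases. Because raising the pruning threshold from $\delta(w,v)$ to $\alpha\,\delta(w,v)$ only retains more edges, an $\alpha$-SNG is a super-graph of the ordinary SNG, so it inherits connectedness and the weak optimality property; hence the greedy walk is well defined from every entry node and returns $u^\ast$ exactly when $q = u^\ast \in \mathcal{X}$, covering the degenerate $D^\ast = 0$ regime where the geometric bound alone does not certify termination. The main obstacle I anticipate is not the recursion — that is two triangle inequalities and a geometric series — but the careful accounting that converts the clean ratio $\tfrac{\alpha+1}{\alpha-1}$ into the stated $\tfrac{\alpha+1}{\alpha-1}+\epsilon$ with precisely $\log_\alpha\tfrac{\Delta}{(\alpha-1)\epsilon}$ hops: correctly relating the per-step residual to the minimum interpoint distance $\delta^\ast$ and the aspect ratio $\Delta$, and checking that the ``run for $T$ hops, return the best node seen'' rule interacts correctly with the possibility of early natural termination.
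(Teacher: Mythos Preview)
Your recurrence $r_{t+1}\le \tfrac{\alpha+1}{\alpha}D^\ast+\tfrac1\alpha r_t$ and its fixed-point analysis around $r^\circ=\tfrac{\alpha+1}{\alpha-1}D^\ast$ are exactly the paper's argument; your bound $\Delta_0\le \delta_\ast$ (which follows from $r_0-D^\ast\le\delta(s,u^\ast)\le\delta_\ast$) is in fact cleaner than the paper's Case~1, which it subsumes.

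The genuine gap is precisely the step you flag as the ``main obstacle'': folding $\delta^\ast$ into the $\epsilon$-slack. Your residual bound gives $r_T\le \tfrac{\alpha+1}{\alpha-1}D^\ast+(\alpha-1)\epsilon\,\delta^\ast$, and converting this to $\big(\tfrac{\alpha+1}{\alpha-1}+\epsilon\big)D^\ast$ needs $(\alpha-1)\delta^\ast\le D^\ast$. This fails whenever the query lies very close to---but not on---a data point, and your boundary-case paragraph only covers $q=u^\ast\in\mathcal{X}$ via SNG weak optimality; it says nothing about $0<D^\ast\ll\delta^\ast$ with $q\notin\mathcal{X}$.

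The paper closes this with a separate argument (its Case~3): when $D^\ast$ is below a constant fraction of $\delta^\ast$, every node $v\neq u^\ast$ satisfies $\delta(q,v)\ge\delta(v,u^\ast)-D^\ast>\delta^\ast/2$, while your recursion (combined with $r^\circ\le\delta^\ast/4$ in this regime) forces $r_i<\delta^\ast/2$ within $\mathcal{O}(\log_\alpha\Delta)$ hops. The only resolution is $v_i=u^\ast$, so the walk reaches the exact optimum. In this regime you do not prove a multiplicative approximation at all---you prove exactness. That is the missing idea; the rest of your plan matches the paper.
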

\begin{proof}
    Suppose $q$ is a query point and $u^\ast = \argmin_{u \in \mathcal{X}} \delta(q, u)$.
    Further assume that the best-first-search algorithm is currently in node $v_i$
    with distance $\delta(q, v_i)$ to the query. We make the following observations:
    \begin{itemize}
        \item By triangle inequality, we know that
        $\delta(v_i, u^\ast) \leq \delta(v_i, q) + \delta(q, u^\ast)$; and,
        \item By construction of the $\alpha$-SNG, $v_i$ is either connected to $u^\ast$
        or to another node whose distance to $u^\ast$ is shorter than $\delta(v_i, u^\ast) / \alpha$.
    \end{itemize}
    We can conclude that, the distance from $q$ to the next node the algorithm
    visits, $v_{i + 1}$, is at most:
    \begin{align*}
        \delta(v_{i+1}, q) &\leq \delta(v_{i + 1}, u^\ast) + \delta(u^\ast, q) \\
        &\leq \frac{\delta(v_i, u^\ast)}{\alpha} + \delta(u^\ast, q) \\
        &\leq \frac{\delta(v_i, q)}{\alpha} + (\alpha + 1) \delta(q, u^\ast).
    \end{align*}
    By induction, we see that, if the entry node is $s \in \mathcal{V}$:
    \begin{align*}
        \delta(v_i, q) &\leq \frac{\delta(s, q)}{\alpha^i} + (\alpha + 1) \delta(q, u^\ast) \sum_{j=1}^i \alpha^{-j} \\
        &\leq \frac{\delta(s, q)}{\alpha^i} + \frac{\alpha + 1}{\alpha - 1} \delta(q, u^\ast)
        \numberthis \label{equation:graphs:alpha-sng:hops-bound}.
    \end{align*}

    There are three cases to consider.

    \textbf{Case 1}: When $\delta(s, q) > 2 \delta_\ast$, then by triangle inequality,
    $\delta(q, u^\ast) > \delta(s, q) - \delta(s, u^\ast) > \delta(s, q) - \delta_\ast > \delta(s, q) / 2$.
    Plugging this into Equation~(\ref{equation:graphs:alpha-sng:hops-bound}) yields:
    \begin{align*}
        \delta(v_i, q) &\leq \frac{2\delta(q, u^\ast)}{\alpha^i} + \frac{\alpha + 1}{\alpha - 1} \delta(q, u^\ast) \\
        &\implies \frac{\delta(v_i, q)}{\delta(q, u^\ast)} \leq \frac{2}{\alpha^i} + \frac{\alpha + 1}{\alpha - 1}.
    \end{align*}
    As such, for any $\epsilon > 0$, the algorithm returns a $\big( \frac{\alpha + 1}{\alpha - 1} + \epsilon\big)$-approximate solution in $\log_\alpha 2/\epsilon$ steps.

    \textbf{Case 2}: $\delta(s, q) \leq 2 \delta_\ast$ and $\delta(q, u^\ast) \geq \frac{\alpha - 1}{4(\alpha + 1)}\delta^\ast$. By Equation~(\ref{equation:graphs:alpha-sng:hops-bound}), the algorithm returns a
    $\big( \frac{\alpha + 1}{\alpha - 1} + \epsilon \big)$-approximate solution as soon as $\delta(s, q)/\alpha^i < \epsilon \delta(q, u^\ast)$. So in this case:
    \begin{align*}
        \frac{\delta(v_i, q)}{\delta(q, u^\ast)} &\leq \frac{2 \delta_\ast}{\alpha^i \delta(q, u^\ast)} + \frac{\alpha + 1}{\alpha - 1} \\
        &\leq \frac{8 (\alpha + 1) \delta_\ast}{\alpha^i (\alpha - 1) \delta^\ast} + \frac{\alpha + 1}{\alpha - 1}.
    \end{align*}
    As such, the number of steps to reach the approximation level is $\log_\alpha \frac{8 (\alpha + 1) \Delta}{(\alpha - 1) \epsilon}$ which is $\mathcal{O}\big( \log_\alpha \Delta / (\alpha - 1) \epsilon \big)$.

    \textbf{Case 3}: $\delta(s, q) \leq 2 \delta_\ast$ and $\delta(q, u^\ast) < \frac{\alpha - 1}{4(\alpha + 1)}\delta^\ast$. Suppose $v_i \neq u^\ast$. Observe that: (a) $\delta(v_i, u^\ast) \geq \delta^\ast$;
    (b) $\delta(v_i, q) > \delta(q, u^\ast)$; and (c) $\delta(q, u^\ast) < \delta^\ast/2$ by assumption.
    As such, triangle inequality gives us: $\delta(v_i, q) > \delta(v_i, u^\ast) - \delta(u^\ast, q) >
    \delta^\ast - \delta^\ast/2 = \delta^\ast/2$. Together with Equation~(\ref{equation:graphs:alpha-sng:hops-bound}), we obtain:
    \begin{align*}
        \frac{\delta^\ast}{2} &\leq \delta(v_i, q) \leq \frac{2 \delta_\ast}{\alpha^i} + \frac{\delta^\ast}{4} \\
        &\implies \alpha^i \leq 8 \Delta \implies i \leq \log_\alpha 8\Delta.
    \end{align*}

    The three cases together give the desired result.
\end{proof}

In addition to the bounds above,~\cite{indyk2023worstcase} present negative results
for other major SNG-based graph algorithms by proving (via contrived examples)
linear-time lower-bounds on their performance. These results together show the
significance of the pruning parameter $\alpha$ in the $\alpha$-SNG construction.

\subsubsection{Practical Construction of $\alpha$-SNGs}
The algorithm described earlier to construct an $\alpha$-SNG for $m$ points
has $\mathcal{O}(m^3)$ time complexity. That is too expensive for even moderately
large values of $m$. That prompted~\cite{diskann} to approximate the $\alpha$-SNG
by way of heuristics.

The starting point in the approximate construction is a random $R$-regular graph:
Every node is connected to $R$ other nodes selected at random.
The algorithm then processes each node in random order as follows.
Given node $u$, it begins by searching the current snapshot of the graph for the top $L$
nodes for the query point $u$, using Algorithm~\ref{algorithm:graphs:greedy-search}.
Denote the returned set of nodes by $\mathcal{S}$. It then performs the
pruning algorithm by setting $\mathcal{U} = \mathcal{S} \setminus \{ u \}$, rather
than $\mathcal{U} = \mathcal{V} \setminus \{ u \}$.
That is the gist of the modified construction
procedure.\footnote{We have omitted minor but important details of the procedure
in our prose. We refer the interested reader to~\citep{diskann} for a description of
the full algorithm.}

Naturally, we lose all guarantees for approximate top-$k$ retrieval
as a result~\citep{indyk2023worstcase}. We do, however, obtain a more
practical algorithm instead that, as the authors show,
is both efficient and effective.

\section{Closing Remarks}

This chapter deviated from the pattern we got accustomed to so far in the monograph.
The gap between theory and practice in Chapters~\ref{chapter:branch-and-bound}
and~\ref{chapter:lsh} was narrow or none. That gap is rather wide, on the other hand,
in graph-based retrieval algorithms. Making theory work in practice required a great deal
of heuristics and approximations.

Another major departure is the activity in the respective bodies of literature.
Whereas trees and hash families have reached a certain level of maturity,
the literature on graph algorithms is still evolving, actively so.
A quick search through scholarly articles shows growing interest in this class
of algorithms. This monograph itself presented results that were obtained very recently.

There is good reason for the uptick in research activity.
Graph algorithms are among the most successful algorithms there are
for top-$k$ vector retrieval. They are often remarkably fast during retrieval
and produce accurate solution sets.

That success makes it all the more enticing to improve their other characteristics.
For example, graph indices are often large, requiring far too much memory.
Incorporating compression into graphs, therefore, is a low-hanging fruit
that has been explored~\citep{singh2021freshdiskann} but needs further investigation.
More importantly, finding an even sparser graph without losing accuracy
is key in reducing the size of the graph to begin with, and that boils down
to designing better heuristics.

Heuristics play a key role in the construction time of graph indices too.
Building a graph index for a collection of billions of points, for example,
is not feasible for the variant of the Vamana algorithm that offers theoretical
guarantees. Heuristics introduced in that work lost all such guarantees,
but made the graph more practical.

Enhancing the capabilities of graph indices too is an important practical
consideration. For example, when the graph is too large and, so, must rest
on disk, optimizing disk access is essential in maintaining the speed of
query processing~\citep{diskann}. When the collection of vectors is live
and dynamic, the graph index must naturally handle deletions and insertions
in real-time~\citep{singh2021freshdiskann}. When vectors come with metadata
and top-$k$ retrieval must be constrained to the vectors that pass
a certain set of metadata filters, then a greedy traversal of the graph
may prove sub-optimal~\citep{filtered-diskann2023}. All such questions
warrant extensive (often applied) research and go some way to make
graph algorithms more attractive to production systems.

There is thus no shortage of practical research questions.
However, the aforementioned gap between theory and practice should not
dissuade us from developing better theoretical algorithms.
The models that explained the small world phenomenon may not be directly
applicable to top-$k$ retrieval in high dimensions, but they inspired
heuristics that led to the state of the art. Finding theoretically-sound
edge sets that improve over the guarantees offered by Vamana
could form the basis for other, more successful heuristics too.

\bibliographystyle{abbrvnat}
\bibliography{biblio}

\chapter{Clustering}
\label{chapter:ivf}

\abstract{
We have seen index structures that manifest as trees, hash tables, and graphs.
In this chapter, we will introduce a fourth way of organizing data points: clusters.
It is perhaps the most natural and the simplest of the four methods,
but also the least theoretically-justified. We will see why that is as we describe
the details of clustering-based algorithms to top-$k$ retrieval.
}

\section{Algorithm}
\label{section:ivf:retrieval}

As usual, we begin by indexing a collection of $m$ data points $\mathcal{X} \subset \mathbb{R}^d$.
Except in this paradigm, that involves invoking a \textbf{clustering} function,
$\zeta: \mathbb{R}^d \rightarrow [C]$,
that is appropriate for the distance function $\delta(\cdot, \cdot)$,
to map every data point to one of $C$ clusters, where $C$ is an arbitrary parameter.
A typical choice for $\zeta$ is the KMeans algorithm
with $C = \mathcal{O}(\sqrt{m})$.
We then organize $\mathcal{X}$ into a \emph{table} whose row $i$ records the subset
of points that are mapped to the $i$-th cluster: $\zeta^{-1}(i) \triangleq \{ u \;|\; u \in \mathcal{X}, \; \zeta(u) = i \}$.

Accompanying the index is a \textbf{routing} function $\tau: \mathbb{R}^d \rightarrow [C]^\ell$.
It takes an arbitrary point $q$ as input and returns $\ell$ clusters that are more likely to contain
the nearest neighbor of $q$ with respect to $\delta$. In a typical instance of this framework
$\tau(\cdot)$ is defined as follows:
\begin{equation}
    \label{equation:ivf:centroid-based-routing}
    \tau(q) = \argmin^{(\ell)}_{i \in [C]} \delta \Bigg(q,\; \underbrace{\frac{1}{\lvert \zeta^{-1}(i) \rvert} \sum_{u \in \zeta^{-1}(i)} u}_{\mu_i} \Bigg),
\end{equation}
where $\mu_i$ is the \emph{centroid} of the $i$-th cluster.
In other words, $\tau(\cdot)$ simply solves the top-$\ell$ retrieval problem
over the collection of centroids!
We will assume that $\tau$ is defined as above in the remainder of this section.

When processing a query $q$, we take a two-step approach.
We first obtain the list of clusters returned by $\tau(q)$, then solve the top-$k$
retrieval problem over the union of the identified clusters.
Figure~\ref{figure:ivf:framework} visualizes this procedure.

Notice that, the search for top-$\ell$ clusters by using
Equation~(\ref{equation:ivf:centroid-based-routing})
and the secondary search over the clusters identified by $\tau$ are themselves instances of the approximate
top-$k$ retrieval problem. The parameter $C$ determines the amount of effort that must be spent
in each of the two phases of search: When $C=1$, the cluster retrieval problem is solved trivially,
whereas as $C \rightarrow \infty$, cluster retrieval becomes equivalent to top-$k$ retrieval over the
entire collection.
Interestingly, these operations can be delegated to a
subroutine that itself uses a tree-, hash-, graph-, or even a clustering-based solution.
That is, a clustering-based approach can be easily paired with any of the previously discussed methods!

This simple protocol---with some variant of KMeans as $\zeta$ and $\tau$ as in Equation~(\ref{equation:ivf:centroid-based-routing})---works well in
practice~\citep{auvolat2015clustering,pq,bruch2023bridging,invertedMultiIndex,chierichetti2007clusterPruning}. We present the results of our own experiments on various real-world datasets in
Figure~\ref{figure:ivf:clustering-performance}.
This method owes its success to the empirical phenomenon that real-world data points
tend to follow a multi-modal distribution,
naturally forming clusters around each mode. By identifying these clusters
and grouping data points together, we reduce the search space at the expense of retrieval
quality.

\begin{figure}[t]
    \centering
    \includegraphics[width=0.8\linewidth]{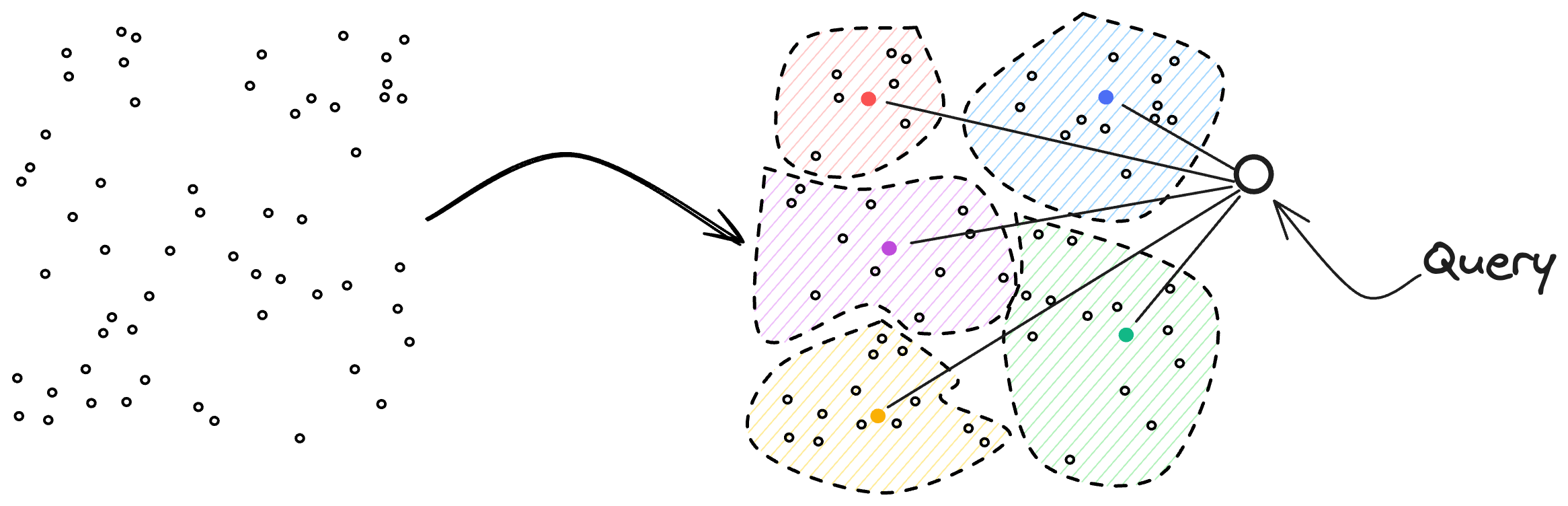}
    \caption{Illustration of the clustering-based retrieval method. The collection of points
    (left) is first partitioned into clusters (regions enclosed by dashed boundary on the right).
    When processing a query $q$
    using Equation~(\ref{equation:ivf:centroid-based-routing}), we compute $\delta(q, \cdot)$ for
    the centroid (solid circles) of every cluster and conduct our search over the $\ell$ ``closest'' clusters.}
    \label{figure:ivf:framework}
\end{figure}

\begin{figure}[t]
    \centering
    \subfloat[\textsc{MIPS}]{
        \includegraphics[width=0.5\linewidth]{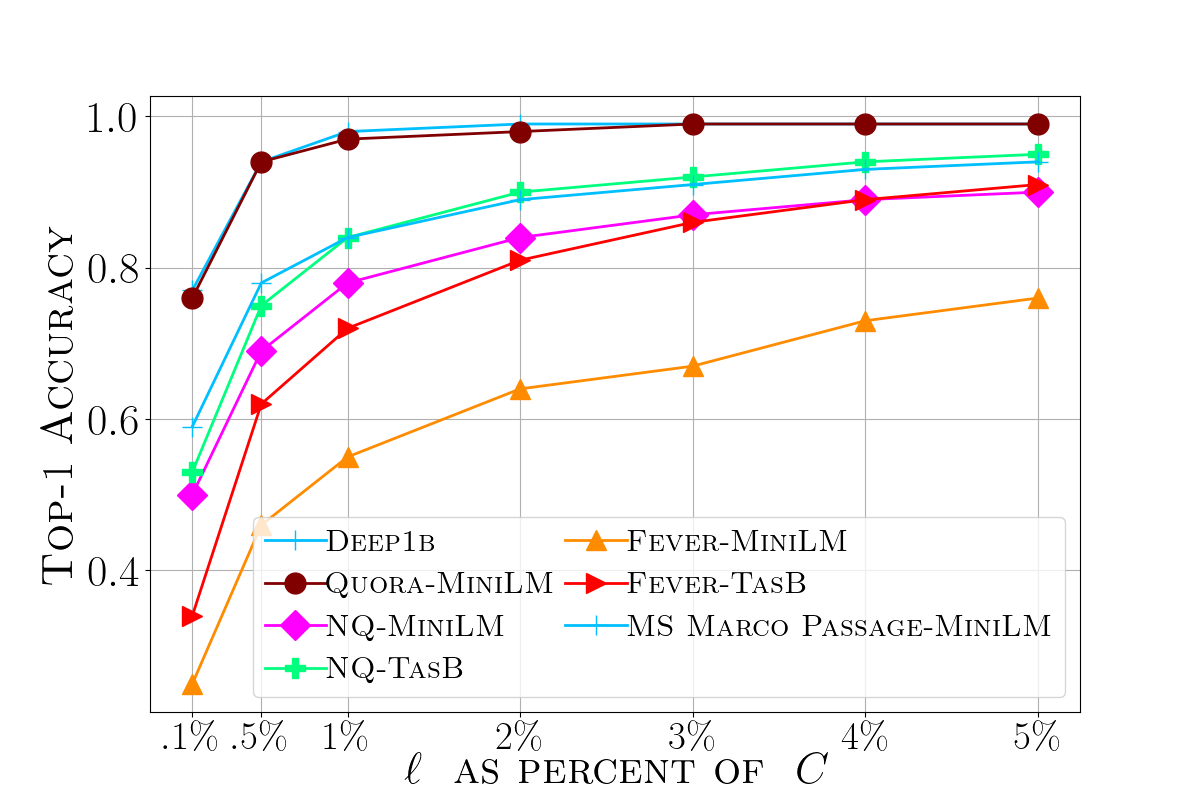}
    }
    \subfloat[\textsc{NN}]{
        \includegraphics[width=0.5\linewidth]{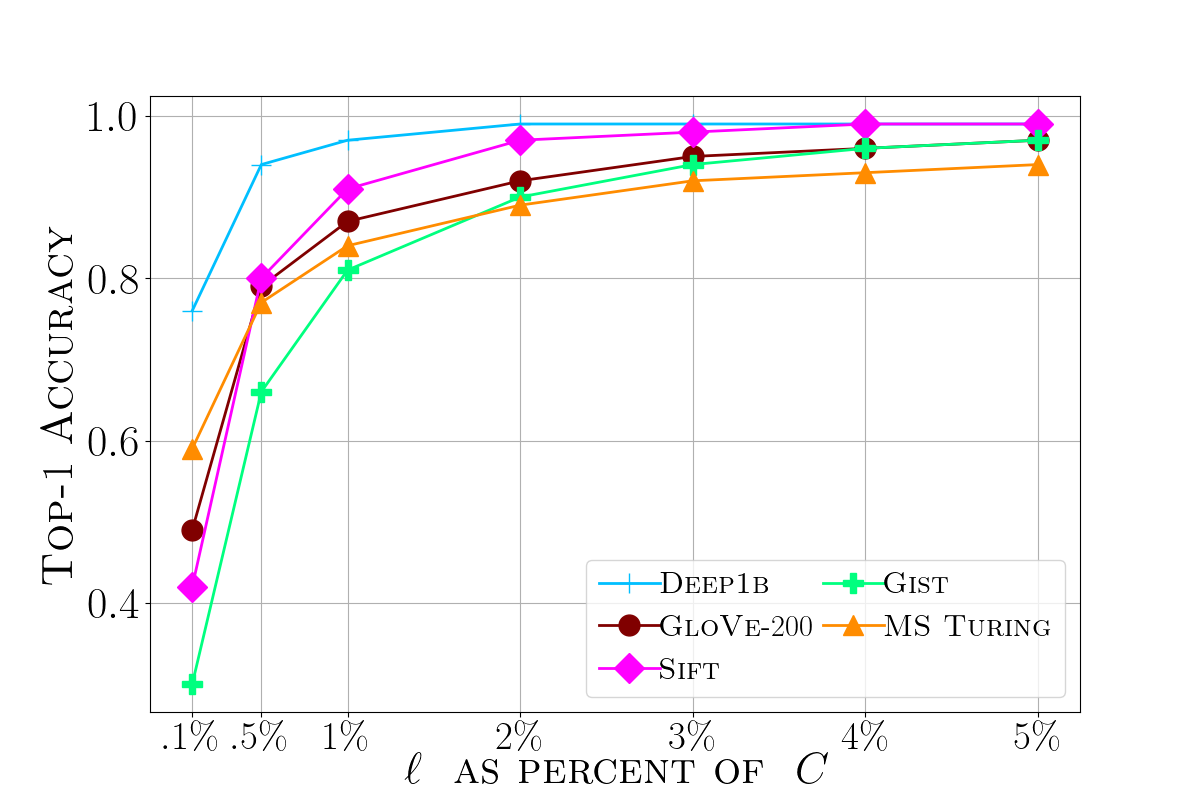}
    }
    \caption{Performance of the clustering-based retrieval method on various real-world collections,
    described in Appendix~\ref{appendix:collections}.
    The figure shows top-$1$ accuracy versus the
    number of clusters, $\ell$, considered by the routing function $\tau(\cdot)$ as a percentage of
    the number of clusters $C$. In these experiments, we set $C = \sqrt{m}$, where $m=\lvert \mathcal{X} \rvert$ is the size of the collection, and use spherical KMeans (MIPS) and standard KMeans (NN)
    to form clusters.}
    \label{figure:ivf:clustering-performance}
\end{figure}

However, to date, no formal analysis has been presented to quantify the retrieval error.
The choice of $\zeta$ and $\tau$, too, have been left largely unexplored, with KMeans and
Equation~(\ref{equation:ivf:centroid-based-routing}) as default answers. It is, for example, not known
if KMeans is the right choice for a given $\delta$. Or, whether clustering with
spillage, where each data point may belong to multiple clusters, might reduce the overall
error, as it did in Spill Trees. It is also an open question if,
for a particular choice of $\zeta$ and $\delta$, there exists
a more effective routing function---including learnt functions tailored to
a query distribution---that uses higher-order statistics from the cluster distributions.

In spite of these shortcomings, the algorithmic framework above contains a
fascinating insight that is actually useful for a rather different end-goal:
vector compression, or more precisely, \emph{quantization}.
We will unpack this connection in Chapter~\ref{chapter:quantization}.

\section{Closing Remarks}
This chapter departed entirely from the theme of this monograph.
Whereas we are generally able to say something intelligent about
trees, hash functions, and graphs, top-$k$ retrieval by clustering
has emerged entirely based on our intuition that data points naturally
form clusters. We cannot formally determine, for example, the behavior
of the retrieval system as a function of the clustering algorithm itself,
the number of clusters, or the routing function. All that must be determined empirically.

What we do observe often in practice, however, is that clustering-based
top-$k$ retrieval is \emph{efficient}~\citep{kmeanslsh,auvolat2015clustering,bruch2023bridging,pq},
at least in the case of Nearest Neighbor search with Euclidean distance,
where KMeans is a theoretically appropriate choice. It is efficient in the sense that
retrieval accuracy often reaches an acceptable level after probing
a few top-ranking clusters as identified by
Equation~(\ref{equation:ivf:centroid-based-routing}).

That we have a method that is efficient in practice, but
its efficiency and the conditions under which it is efficient are unexplained,
constitutes a substantial gap and thus presents multiple consequential open questions.
These questions involve optimal clustering, routing, and bounds on retrieval accuracy.

\bigskip

When the distance function is the Euclidean distance and our
objective is to learn the Voronoi regions of data points, the KMeans
clustering objective makes sense. We can even state formal results
regarding the optimality of the resulting clustering~\citep{kmeansplusplus_2007}.
That argument is no longer valid when the distance function is based on inner product,
where we must learn the inner product Voronoi cones, and where some points
may have an empty Voronoi region. What objective we must optimize for MIPS,
therefore, is an open question that, as we saw in this chapter,
has been partially explored in the past~\citep{scann}.

Even when we know what the right clustering algorithm is, there is still
the issue of ``balance'' that we must understand how to handle. It would,
for example, be far from ideal if the clusters end up having very different
sizes. Unfortunately, that happens quite naturally if the data points have
highly variable norms and the clustering algorithm is based on KMeans:
Data points with large norms become isolated, while vectors with small
norms form massive clusters.

\bigskip

What has been left entirely untouched is the routing machinery.
Equation~(\ref{equation:ivf:centroid-based-routing}) is the \emph{de facto}
routing function, but one that is possibly sub-optimal.
That is because, Equation~(\ref{equation:ivf:centroid-based-routing})
uses the mean of the data points within a cluster as the representative
or \emph{sketch} of that cluster. When clusters are highly concentrated
around their mean, such a sketch accurately reflects the potential of each
cluster. But when clusters have different shapes, higher-order statistics
from the cluster may be required to accurately route queries to clusters.

So the question we are faced with is the following: What is a good sketch
of each cluster? Is there a \emph{coreset} of data points within each cluster
that lead to better routing of queries during retrieval? Can we quantify the
probability of error---in the sense that the cluster containing the optimal
solution is not returned by the routing function---given a sketch?

We may answer these questions differently if we had some idea of what the query
distribution looks like. Assuming access to a set of training queries, it may
be possible to learn a more optimal sketch using supervised learning methods.
Concepts from learning-to-rank~\citep{bruch2023fntir}
seem particularly relevant to this setup. To see how, note that the
outcome of the routing function is to
identify the cluster that contains the optimal data point for a query.
We could view this as ranking clusters with respect to a query,
where we wish for the ``correct'' cluster to appear at the top
of the ranked list. Given this mental model, we
can evaluate the quality of a routing function using any of the many
ranking quality metrics such as Reciprocal Rank (defined as the reciprocal
of the rank of the correct cluster).
Learning a ranking function that maximizes Reciprocal Rank
can then be done indirectly by optimizing a custom
cross entropy-based surrogate, as proved by~\cite{bruch2019xendcg} and~\cite{bruch2021xendcg}.

\bigskip

Perhaps the more important open question is understanding when clustering
is efficient and why. Answering that question may require exploring the connection
between clustering-based top-$k$ retrieval, branch-and-bound algorithms, and LSH.

Take any clustering algorithm, $\zeta$. If one could show formally
that $\zeta$ behaves like an LSH family, then clustering-based top-$k$ retrieval
simply collapses to LSH. In that case, not only do the results from that literature
apply, but the techniques developed for LSH (such as multi-probe LSH) too
port over to clustering.

Similarly, one may adopt the view that finding the top cluster is a series
of decisions, each determining which side of a hyperplane a point falls.
Whereas in Random Partition Trees or Spill Trees, such decision hyperplanes were
random directions, here the hyperplanes are correlated.
Nonetheless, that insight could help us produce clusters with spillage, where
data points belong to multiple clusters, and in a manner that helps reduce
the overall error.

\bibliographystyle{abbrvnat}
\bibliography{biblio}

\chapter{Sampling Algorithms}
\label{chapter:sampling}

\abstract{
Nearly all of the data structures and algorithms we reviewed in the previous chapters are designed
specifically for either nearest neighbor search or maximum cosine similarity search.
MIPS is typically an afterthought. It is often cast as NN or MCS through a rank-preserving
transformation and subsequently solved using one of these algorithms. That is so because
inner product is not a proper metric, making MIPS different from the other vector retrieval variants.
In this chapter, we review algorithms that are specifically designed for MIPS
and that connect MIPS to the machinery underlying multi-arm bandits.
}

\section{Intuition}
\label{section:sampling:intuition}

That inner product is different can be a curse and a blessing.
We have already discussed that curse at length, but in this chapter,
we will finally learn something positive. And that is the fact that inner product
is a linear function of data points and can be easily decomposed into its parts,
thereby opening a unique path to solving MIPS.

The overarching idea in what we refer to as \emph{sampling algorithms} is to avoid
computing inner products. Instead, we either directly approximate the \emph{likelihood}
of a data point being the solution to MIPS (or, equivalently, its rank),
or estimate its inner product with a query (i.e., its score).
As we will see shortly, in both instances, we rely heavily on the linearity of inner product
to estimate probabilities and derive bounds.

Approximating the ranks or scores of data points uses some form of sampling:
we either sample data points according to a distribution defined by inner products,
or sample a dimension to compute partial inner products with and eliminate sub-optimal
data points iteratively. In the former, the more frequently a data point is sampled,
the more likely it is to be the solution to MIPS. In the latter, the more dimensions
we sample, the closer we get to computing full inner products.
Generally, then, the more samples we draw, the more accurate our solution to MIPS becomes.

\begin{svgraybox}
An interesting property of using sampling to solve MIPS is that,
regardless of \emph{what} we are approximating, we can decide when to stop!
That is, if we are given a time budget, we draw as many samples as our
time budget allows and return our best guess of the solutions based on
the information we have collected up to that point. The number of samples,
in other words, serves as a knob that trades off accuracy for speed.
\end{svgraybox}

The remainder of this chapter describes these algorithms in much greater detail.
Importantly, we will see how linearity makes the approximation-through-sampling
feasible and efficient.

\section{Approximating the Ranks}
We are interested in finding the top-$k$ data points
with the largest inner product with a query $q \in \mathbb{R}^d$,
from a collection $\mathcal{X} \subset \mathbb{R}^d$ of $m$ points.
Suppose that we had an efficient way of sampling a data point from $\mathcal{X}$
where the point $u \in \mathcal{X}$ has probability proportional
to $\langle q, u \rangle$ of being selected.

If we drew a sufficiently large number of samples, the data point with the largest
inner product with $q$ would be selected most frequently. The data point with the second
largest inner product would similarly be selected with the second highest frequency, and so on.
So, if we counted the number of times each data point has been sampled,
the resulting histogram would be a good approximation to the rank of each data point
with respect to inner product with $q$.

That is the gist of the sampling algorithm we examine in this section.
But while the idea is rather straightforward, making it work requires addressing
a few critical gaps. The biggest challenge is drawing samples according to
the distribution of inner products without actually computing any of the
inner products! That is because, if we needed to compute $\langle q, u \rangle$
for all $u \in \mathcal{X}$, then we could simply sort data points accordingly
and return the top-$k$; no need for sampling and the rest.

The key to tackling that challenge is the linearity of inner product.
Following a few simple derivations using Bayes' theorem,
we can break up the sampling procedure into two steps, each using marginal distributions
only~\citep{Lorenzen2021wedgeSampling,ballard2015diamondSampling,cohen1997wedgeSampling,pmlr-v89-ding19a}. Importantly, one of these marginal distributions can be computed offline
as part of indexing. That is the result we will review next.

\subsection{Non-negative Data and Queries}
We wish to draw a data point with probability that is proportional to its inner
product with a query: $\probability[u \;|\; q] \propto \langle q, u \rangle$.
For now, we assume that $u, q \succeq 0$ for all $u \in \mathcal{X}$ and queries $q$.

Let us decompose this probability along each dimension as follows:
\begin{equation}
    \probability[u \;|\; q] = \sum_{t = 1}^d \probability[t \;|\; q] \probability[u \;|\; t, q],
\end{equation}
where the first term in the sum is the probability of sampling a dimension $t \in [d]$
and the second term is the likelihood of sampling $u$ given a particular dimension.
We can model each of these terms as follows:
\begin{equation}
    \label{equation:sampling:wedge:prob_dimension}
    \probability[t \;|\; q] \propto \sum_{u \in \mathcal{X}} q_t u_t = q_t \sum_{u \in \mathcal{X}} u_t,
\end{equation}
and,
\begin{equation}
    \label{equation:sampling:wedge:prob_data_point}
    \probability[u \;|\; t, q] = \frac{\probability[u \land t \;|\; q]}{\probability[t \;|\; q]}
    \propto \frac{q_t u_t}{q_t \sum_{v \in \mathcal{X}} v_t} = \frac{u_t}{\sum_{v \in \mathcal{X}} v_t}.
\end{equation}
In the above, we have assumed that $\sum_{v \in \mathcal{X}} v_t \neq 0$; if that sum is $0$
we can simply discard the $t$-th dimension.

What we have done above allows us to draw a sample according to $\probability[u \;|\; q]$
by, instead, drawing a dimension $t$ according to $\probability[t \;|\; q]$ first,
then drawing a data point $u$ according to $\probability[u \;|\; t, q]$.

Sampling from these multinomial distribution requires constructing the distributions themselves.
Luckily, $\probability[u \;|\; t, q]$ is independent of $q$.
Its distribution can therefore be computed offline: we create $d$ tables,
where the $t$-th table has $m$ rows recording the probability of each data point
being selected given dimension $t$ using Equation~(\ref{equation:sampling:wedge:prob_data_point}).
We can then use the alias method~\citep{Walker1977theAliasMethod} to draw samples
from these distributions using $\mathcal{O}(1)$ operations.

The distribution over dimensions given a query, $\probability[t \;|\; q]$,
must be computed online using Equation~(\ref{equation:sampling:wedge:prob_dimension}),
which requires $\mathcal{O}(d)$ operations, assuming we compute $\sum_{u \in \mathcal{X}} u_t$
offline for each $t$ and store them in our index. Again, using the alias method,
we can subsequently draw samples with $\mathcal{O}(1)$ operations.

The procedure described above provides us with an efficient mechanism to
perform the desired sampling. If we were to draw $S$ samples, that could be done
in $\mathcal{O}(d + S)$, where $\mathcal{O}(d)$ term is needed to construct the
multinomial distribution that defines $\probability[t \;|\; q]$.

As we draw samples, we maintain a histogram over the $m$ data points, counting
the number of times each point has been sampled. In the end, we can identify the
top-$k^\prime$ (for $k^\prime \geq k$) points based on these counts, compute their
inner products with the query, and return the top-$k$ points as the final solution set.
All these operations together have time complexity
$\mathcal{O}(d + S + m \log k^\prime + k^\prime d)$, with $S$ typically being
the dominant term.

\subsection{The General Case}
When the data points or queries may be negative, the algorithm described in the
previous section will not work as is. To extend the sampling framework to general,
real vectors, we must make a few minor adjustments.

First, we must ensure that the marginal distributions are valid.
That is easy to do: In Equations~(\ref{equation:sampling:wedge:prob_dimension})
and~(\ref{equation:sampling:wedge:prob_data_point}), we replace each term
with its absolute value. So, $\probability[t \;|\; q]$ becomes proportional to
$\sum_{u \in \mathcal{X}} \lvert q_t u_t \rvert$, and 
$\probability[u \;|\; t, q] \propto \lvert u_t \rvert / \sum_{v \in \mathcal{X}} \lvert v_t \rvert$.

We then use the resulting distributions to sample data points as before,
but every time a data point $u$ is sampled, instead of incrementing its count in the
histogram by one, we add $\textsc{Sign}(q_t u_t)$ to its entry. As the following lemma
shows, in expectation, the final count is proportional to $\langle q, u \rangle$.

\begin{lemma}
    \label{lemma:sampling:wedge:expected-value}
    Define the random variable $Z$ as $0$ if data point $u \in \mathcal{X}$
    is not sampled and $\textsc{Sign}(q_t u_t)$ if it is for a query $q \in \mathbb{R}^d$
    and a sampled dimension $t$.
    Then $\ev[Z] = \langle q, u \rangle / \sum_{t = 1}^d \sum_{v \in \mathcal{X}} \lvert q_t v_t \rvert$.
\end{lemma}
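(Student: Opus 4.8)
The plan is to unfold the two-stage sampling procedure and compute $\ev[Z]$ directly by conditioning on the sampled dimension. Write $N \triangleq \sum_{t=1}^d \sum_{v \in \mathcal{X}} \lvert q_t v_t \rvert$ for the normalizing constant appearing in the claim; we may assume $N > 0$, as otherwise the statement is vacuous. For each dimension $t$ with $\sum_{v \in \mathcal{X}} \lvert v_t \rvert \neq 0$ (recall that dimensions failing this are discarded), the adjusted marginals from the general case give $\probability[t \;|\; q] = \lvert q_t \rvert \sum_{v} \lvert v_t \rvert / N$ and $\probability[u \;|\; t, q] = \lvert u_t \rvert / \sum_{v} \lvert v_t \rvert$.

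First I would decompose $Z$ over the sampled dimension. Since $Z$ is nonzero only when $u$ is the data point drawn, and in that case it equals $\textsc{Sign}(q_t u_t)$ where $t$ was the dimension drawn, the law of total expectation yields
\begin{equation*}
    \ev[Z] = \sum_{t=1}^d \probability[t \;|\; q] \, \probability[u \;|\; t, q] \, \textsc{Sign}(q_t u_t).
\end{equation*}
Substituting the two marginals, the factors $\sum_{v} \lvert v_t \rvert$ cancel, leaving $\probability[t \;|\; q]\,\probability[u \;|\; t, q] = \lvert q_t u_t \rvert / N$. The key elementary identity is then $\lvert q_t u_t \rvert \cdot \textsc{Sign}(q_t u_t) = q_t u_t$, which is valid whether $q_t u_t$ is positive, negative, or zero.

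Applying this identity termwise gives
\begin{equation*}
    \ev[Z] = \sum_{t=1}^d \frac{q_t u_t}{N} = \frac{\langle q, u \rangle}{N} = \frac{\langle q, u \rangle}{\sum_{t=1}^d \sum_{v \in \mathcal{X}} \lvert q_t v_t \rvert},
\end{equation*}
which is exactly the claim. The only point requiring a little care is the bookkeeping around discarded dimensions: a dimension $t$ with $\sum_v \lvert v_t \rvert = 0$ forces $u_t = 0$ and so contributes nothing to either side, so dropping it is harmless; I would note this explicitly so the cancellation of $\sum_v \lvert v_t \rvert$ is justified. There is no real obstacle here — the entire proof is the conditioning step followed by the sign identity — so I expect the write-up to be short.
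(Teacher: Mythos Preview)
Your proposal is correct and mirrors the paper's argument: both condition on the sampled dimension, substitute the absolute-value marginals, and collapse $\lvert q_t u_t \rvert \cdot \textsc{Sign}(q_t u_t)$ to $q_t u_t$. The only cosmetic difference is that the paper first computes $\ev[Z \mid t]$ and then averages over $t$, whereas you write the double sum in one line; your explicit remark about discarded dimensions is a small improvement in rigor.
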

\begin{proof}
    \begin{align*}
        \ev[Z \;|\; t] &= \textsc{Sign}(q_t u_t) \probability[u \;|\; t]
        = \textsc{Sign}(q_t u_t) \frac{\lvert u_t \rvert}{\sum_{v \in \mathcal{X}} \lvert v_t \rvert} \\
        &= \textsc{Sign}(q_t u_t) \frac{\lvert q_t u_t \rvert}{\sum_{v \in \mathcal{X}} \lvert q_t v_t \rvert}
        = \frac{q_t u_t}{\sum_{v \in \mathcal{X}} \lvert q_t v_t \rvert}.
    \end{align*}
    Taking expectation over the dimension $t$ yields:
    \begin{align*}
        \ev[Z] &= \ev\Big[\ev[Z \;|\; t]\Big] = \sum_{t = 1}^d \frac{q_t u_t}{\sum_{v \in \mathcal{X}} \lvert q_t v_t \rvert} \probability[t \;|\; q] \\
        &= \sum_{t = 1}^d \frac{q_t u_t}{\sum_{v \in \mathcal{X}} \lvert q_t v_t \rvert}
            \frac{\sum_{v \in \mathcal{X}} \lvert q_t v_t \rvert}{\sum_{l=1}^d \sum_{v \in \mathcal{X}} \lvert q_l v_l \rvert} \\
        &= \frac{\langle q, u \rangle}{\sum_{t = 1}^d \sum_{v \in \mathcal{X}} \lvert q_t v_t \rvert}.
    \end{align*}
\end{proof}

\subsection{Sample Complexity}
We have formalized an efficient way to sample data points according to the distribution
of inner products, and subsequently collect the most frequently-sampled points.
But how many samples must we draw in order to accurately identify the top-$k$ solution set?
\cite{pmlr-v89-ding19a} give an answer in the form of the following theorem for top-$1$ MIPS.

Before stating the result, it would be helpful to introduce a few shorthands.
Let $N = \sum_{t = 1}^d \sum_{v \in \mathcal{X}} \lvert q_t v_t \rvert$ be a normalizing factor.
For a vector $u \in \mathcal{X}$, denote by $\Delta_u$ the scaled gap between the
maximum inner product and the inner product of $u$ and $q$:
$\Delta_u = \langle q, u^\ast - u \rangle / N$.

If $S$ is the number of samples to be drawn, for a vector $u$, denote by $Z_{u,i}$ 
a random variable that is $0$ if $u$ was not sampled in round $i$,
and otherwise $\textsc{Sign}(q_t u_t)$ if $t$ is the sampled dimension.
Once the sampling has concluded, the final value for point $u$ is simply $Z_u = \sum_i Z_{u,i}$.
Note that, from Lemma~\ref{lemma:sampling:wedge:expected-value}, we have that
$\ev[Z_{u,i}] = \langle q, u \rangle / N$.

Given the notation above, let us also introduce the following helpful lemma.

\begin{lemma}
    Let $C_u = \sum_{t = 1}^d \lvert q_t u_t \rvert$ for a data point $u$. Then
    for a pair of distinct vectors $u, v \in \mathcal{X}$:
    \begin{equation*}
        \ev\Big[\big(Z_{u,i} - Z_{v,i} \big)^2\Big] = \frac{C_u + C_v}{N},
    \end{equation*}
    and,
    \begin{equation*}
        \var\Big[Z_u - Z_v\Big] = S \Big[ \frac{C_u + C_v}{N} - \frac{\langle q, u - v\rangle^2}{N^2} \Big].
    \end{equation*}
\end{lemma}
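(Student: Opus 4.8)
The plan is to exploit the fact that a single sampling round produces exactly one $(t, u)$ pair: we first draw a dimension $t$ according to $\probability[t \mid q]$ and then a data point according to $\probability[\,\cdot \mid t, q]$. Consequently, in round $i$ at most one of the variables $\{Z_{w,i}\}_{w \in \mathcal{X}}$ is nonzero, so for distinct $u, v$ we always have $Z_{u,i} Z_{v,i} = 0$. This immediately gives $\big(Z_{u,i} - Z_{v,i}\big)^2 = Z_{u,i}^2 + Z_{v,i}^2$, and since $\textsc{Sign}(q_t u_t)^2 = 1$ whenever $u$ is actually sampled in round $i$ (dimensions with $q_t = 0$ are never drawn, and a point $u$ with $u_t = 0$ has zero probability of being drawn in dimension $t$), we have $Z_{u,i}^2 = \mathbbm{1}[u \text{ sampled in round } i]$.

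Next I would compute $\ev[Z_{u,i}^2] = \probability[u \text{ sampled in round } i]$ by marginalizing over the sampled dimension, using the two marginals from Equations~(\ref{equation:sampling:wedge:prob_dimension}) and~(\ref{equation:sampling:wedge:prob_data_point}) in their absolute-value form. Writing $\probability[t \mid q] = |q_t|\big(\sum_{v} |v_t|\big)/N$ and $\probability[u \mid t, q] = |u_t| / \sum_{v} |v_t|$, the product telescopes to $|q_t u_t| / N$, and summing over $t$ yields $\ev[Z_{u,i}^2] = \big(\sum_t |q_t u_t|\big)/N = C_u / N$. Adding the analogous term for $v$ gives $\ev\big[(Z_{u,i} - Z_{v,i})^2\big] = (C_u + C_v)/N$, which is the first claim.

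For the second claim I would first note that the $S$ rounds are i.i.d., so $Z_u - Z_v = \sum_{i} (Z_{u,i} - Z_{v,i})$ is a sum of i.i.d.\ terms and $\var[Z_u - Z_v] = S\,\var[Z_{u,i} - Z_{v,i}]$. Then I apply $\var[X] = \ev[X^2] - \ev[X]^2$ with $X = Z_{u,i} - Z_{v,i}$: the first moment is $\ev[Z_{u,i}] - \ev[Z_{v,i}] = \langle q, u\rangle/N - \langle q, v\rangle/N = \langle q, u - v\rangle/N$ by Lemma~\ref{lemma:sampling:wedge:expected-value}, and the second moment is $(C_u + C_v)/N$ by the first part. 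Substituting gives $\var[Z_{u,i} - Z_{v,i}] = (C_u + C_v)/N - \langle q, u-v\rangle^2/N^2$, and multiplying by $S$ completes the proof.

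There is no real obstacle here; the only step requiring care is the observation that at most one point is sampled per round (so the cross term vanishes and $(Z_{u,i} - Z_{v,i})^2$ collapses to $Z_{u,i}^2 + Z_{v,i}^2$), together with the minor bookkeeping around coordinates where $q_t$ or $u_t$ vanishes, which contribute nothing to either the probabilities or the sums $C_u$.
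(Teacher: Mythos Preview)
Your proposal is correct and is precisely the argument the paper has in mind: it only says ``the proof is similar to the proof of Lemma~\ref{lemma:sampling:wedge:expected-value},'' and what you wrote is exactly that computation carried out in full. The crucial observation---that only one data point is sampled per round, so $Z_{u,i}Z_{v,i}=0$ and $Z_{u,i}^2$ is an indicator whose expectation is $\probability[u\text{ sampled}]=C_u/N$---is the right way to collapse the second moment, and the variance formula then follows from independence across rounds and Lemma~\ref{lemma:sampling:wedge:expected-value}.
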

\begin{proof}
    The proof is similar to the proof of Lemma~\ref{lemma:sampling:wedge:expected-value}.
\end{proof}

\begin{theorem}
    \label{theorem:sampling:wedge:num-samples}
    Suppose $u^\ast$ is the exact solution to MIPS over $m$ points in $\mathcal{X}$ for query $q$.
    Define
    $\sigma_u^2 = \var\Big[Z_{u^\ast} - Z_u\Big]$ and
    let $\Delta = \min_{u \in \mathcal{X}} \Delta_u$.
    For $\delta \in (0, 1)$, if we drew $S$ samples such that:
    \begin{equation*}
        S \geq \max_{u \neq u^\ast}
        \frac{(1 + \Delta_u)^2}{\sigma^2_u h(\frac{\Delta_u (1 + \Delta_u)}{\sigma^2_u})}
        \log \frac{m}{\delta},
    \end{equation*}
    where $h(x) = (1 + x)\log(1 + x) - x$, then
    \begin{equation*}
        \probability[Z_{u^\ast} > Z_u \quad \forall \; u \neq u^\ast] \geq 1 - \delta.
    \end{equation*}
\end{theorem}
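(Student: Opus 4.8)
The plan is to prove, for each fixed competitor $u \neq u^\ast$ separately, a one-sided exponential tail bound on $Z_{u^\ast} - Z_u$ of Bennett type, tune the sample count $S$ so that each such bound is at most $\delta/m$, and finish with a union bound over the at most $m-1$ competitors. The only probabilistic content is the concentration step; everything else is bookkeeping and an application of the hypothesis on $S$.

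First I would fix $u \neq u^\ast$ and work with the centered per-round increments $X_i := \big(Z_{u^\ast,i} - Z_{u,i}\big) - \Delta_u$ for $i \in [S]$. By Lemma~\ref{lemma:sampling:wedge:expected-value}, $\ev[Z_{u^\ast,i} - Z_{u,i}] = \langle q, u^\ast - u\rangle / N = \Delta_u$, so the $X_i$ are i.i.d.\ with mean zero. Because in any single round exactly one data point receives a $\textsc{Sign}(q_t \cdot)$ credit, $Z_{u^\ast,i}$ and $Z_{u,i}$ are never simultaneously nonzero, hence $Z_{u^\ast,i} - Z_{u,i} \in \{-1,0,1\}$ and therefore $|X_i| \leq 1 + \Delta_u$ almost surely (using $\Delta_u \geq 0$, which holds since $u^\ast$ maximizes $\langle q,\cdot\rangle$). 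The aggregate variance $\sum_i \var[X_i]$ is exactly $\var[Z_{u^\ast} - Z_u]$, whose value is supplied by the preceding variance lemma; I will write $\sigma_u^2$ for the relevant (per-round) variance so that the exponent below matches the statement, noting that the variance lemma records the aggregate as $S$ times this quantity.

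Next, observe that the bad event $\{Z_{u^\ast} \leq Z_u\}$ is precisely $\{\sum_i X_i \leq -S\Delta_u\}$, a lower-tail deviation of size $S\Delta_u$ below the mean. I would then invoke Bennett's inequality with a.s.\ bound $a = 1 + \Delta_u$, variance proxy $\nu$, and deviation $t = S\Delta_u$; this is the natural tool because the rate function $h(x) = (1+x)\log(1+x) - x$ in the theorem \emph{is} Bennett's rate function. Substituting and cancelling the $S$'s between $t$ and $\nu$ gives
$$\probability\big[Z_{u^\ast} \leq Z_u\big] \;\leq\; \exp\!\Big(-\tfrac{\nu}{a^2}\, h\big(\tfrac{at}{\nu}\big)\Big) \;=\; \exp\!\Big(-\tfrac{S\,\sigma_u^2}{(1+\Delta_u)^2}\, h\big(\tfrac{\Delta_u(1+\Delta_u)}{\sigma_u^2}\big)\Big).$$
The hypothesis on $S$ is engineered so that, for every $u \neq u^\ast$, the exponent is at least $\log(m/\delta)$: taking $S \geq \max_{u\neq u^\ast}\frac{(1+\Delta_u)^2}{\sigma_u^2\, h(\Delta_u(1+\Delta_u)/\sigma_u^2)}\log\frac{m}{\delta}$ forces $\probability[Z_{u^\ast} \leq Z_u] \leq \delta/m$ uniformly. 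A union bound over the at most $m-1$ competitors then yields $\probability[\exists\, u \neq u^\ast:\, Z_{u^\ast} \leq Z_u] < \delta$, and complementing gives $\probability[Z_{u^\ast} > Z_u \ \forall u \neq u^\ast] \geq 1 - \delta$, as claimed. (This also explains the role of $\Delta = \min_u \Delta_u$: the worst competitor, and hence the dominant term in the $\max$, is essentially the one realizing the smallest gap.)

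The main obstacle is not the concentration inequality itself but getting the constants exactly right: one must be careful that the increment lives in $[-1,1]$ rather than $[-2,2]$ (this is where the $(1+\Delta_u)^2$ rather than $(2+\Delta_u)^2$ comes from), and one must reconcile the per-round variance fed into Bennett with the aggregate $\var[Z_{u^\ast} - Z_u] = S\cdot(\text{per-round})$ from the variance lemma so that the final exponent is literally $\frac{S\sigma_u^2}{(1+\Delta_u)^2} h(\cdot)$. Degenerate cases should also be dispatched: if $\Delta_u = 0$ for some $u$ (possible only when $u^\ast$ is not the unique maximizer) the stated $S$ is infinite and the claim is vacuous, and if $\sigma_u^2 = 0$ one uses the limiting convention $\lim_{\sigma^2\to 0}\sigma^2 h(c/\sigma^2) = \infty$ so that the corresponding term drops out of the $\max$.
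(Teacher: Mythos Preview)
Your proposal is correct and follows essentially the same route as the paper: center the per-round increment, bound it almost surely by $1+\Delta_u$, apply Bennett's inequality to get a $\delta/m$ tail for each competitor, and finish with a union bound. The only cosmetic difference is sign: the paper works with $Y_{u,i}=Z_{u,i}-Z_{u^\ast,i}+\Delta_u$ and an upper-tail Bennett, whereas you use $X_i=-Y_{u,i}$ and a lower tail. Your explicit justification that $Z_{u^\ast,i}-Z_{u,i}\in\{-1,0,1\}$ (since only one data point is credited per round) is a nice detail the paper leaves implicit, and your remark about reconciling the per-round versus aggregate variance is well taken---the paper's notation is indeed slightly loose there.
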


Before proving the theorem above, let us make a quick observation.
Clearly $\sigma^2_u \leq \mathcal{O}(d \Delta_u)$ and $(1 + \Delta_u) \approx 1$.
Because $h(\cdot)$ is monotone increasing in its argument ($\frac{\partial h}{\partial x} > 0$),
we can write:
\begin{align*}
    \sigma^2_u h(\frac{\Delta_u (1 + \Delta_u)}{\sigma^2_u}) &=
        \big( \sigma^2_u + \Delta_u(1 + \Delta_u) \big) \log \big( 1 + \frac{\Delta_u (1 + \Delta_u)}{\sigma^2_u} \big) - \Delta_u(1 + \Delta_u) \\
    &\approx \frac{\Delta_u^2 (1 + \Delta_u)^2}{\sigma^2_u} \geq \frac{\Delta}{d} (1 + \Delta)^2
    = \mathcal{O}(\frac{\Delta}{d}).
\end{align*}
Plugging this into Theorem~\ref{theorem:sampling:wedge:num-samples} gives us
$S \leq \mathcal{O}(\frac{d}{\Delta} \log \frac{m}{\delta})$.

\begin{svgraybox}
    Theorem~\ref{theorem:sampling:wedge:num-samples} tells us that, if we draw
    $\mathcal{O}(\frac{d}{\Delta} \log \frac{m}{\delta})$ samples, we can identify the top-$1$
    solution to MIPS with high probability. Observe that, $\Delta$ is a measure of the
    difficulty of the query: When inner products are close to each other, $\Delta$
    becomes smaller, implying that a larger number of samples would be needed to correctly
    identify the exact solution.
\end{svgraybox}

\begin{proof}[Proof of Theorem~\ref{theorem:sampling:wedge:num-samples}]
    Consider the probability that the registered value of a data point $u$
    is greater than or equal to the registered value of the solution $u^\ast$
    once sampling has concluded. That is, $\probability[Z_u \geq Z_{u^\ast}]$.
    Let us rewrite that quantity as follows:
    \begin{align*}
        \probability\Big[Z_u \geq Z_{u^\ast} \Big] &= 
            \probability \Big[\sum_i Z_{u,i} - Z_{{u^\ast},i} \geq 0 \Big] \\
        &= \probability\Big[ \sum_{i=1}^S \underbrace{Z_{u,i} - Z_{{u^\ast},i} + \Delta_u}_{Y_{u,i}} \geq \underbrace{S\Delta_u}_{y_u} \Big].
    \end{align*}
    Notice that $\ev[Y_{u,i}] = 0$ and that $Y_{u,i}$'s are independent.
    Furthermore, $Y_{u,i} \leq 1 + \Delta_u$. Letting $Y_u = \sum_i Y_{u,i}$,
    we can apply Bennett's inequality to bound the probability above:
    \begin{align*}
        \probability[ Y_u \geq y_u] &\leq \exp\Bigg(
            -\frac{S \sigma^2_u}{(1 + \Delta_u)^2}
            h\Big( \frac{(1 + \Delta_u) (S \Delta_u)}{S \sigma^2_u} \Big)
        \Bigg).
    \end{align*}
    Setting the right-hand-side to $\frac{\delta}{m}$, we arrive at:
    \begin{align*}
        \exp&\Bigg(
            -\frac{S \sigma^2_u}{(1 + \Delta_u)^2}
            h\Big( \frac{(1 + \Delta_u) \Delta_u}{\sigma^2_u} \Big)
        \Bigg) \leq \frac{\delta}{m} \\
        &\implies
        S (1 + \Delta_u)^{-2} \sigma^2_u h\Big( \frac{\Delta_u(1 + \Delta_u)}{\sigma^2_u} \Big)
        \geq \log \frac{m}{\delta}.
    \end{align*}
    It is easy to see that for $x > 0$, $h(x) > 0$. Observing that $\Delta_u(1 + \Delta_u)/\sigma^2_u$
    is positive, that implies that $h(\Delta_u(1 + \Delta_u) / \sigma^2_u) > 0$,
    and therefore we can re-arrange the expression above as follows:
    \begin{equation}
        \label{equation:sampling:proof:num-samples}
        S \geq \frac{(1+ \Delta_u)^2}{\sigma^2_u h\Big( \frac{\Delta_u (1 + \Delta_u)}{\sigma^2_u} \Big)} \log \frac{m}{\delta}.
    \end{equation}

    We have thus far shown that when $S$ satisfies the
    inequality in~(\ref{equation:sampling:proof:num-samples}),
    then $\probability[Y_u \geq y_u] \leq \frac{\delta}{m}$.
    Going back to the claim, we derive the following bound using the result above:
    \begin{align*}
        \probability[ &Z_{u^\ast} > Z_u \quad \forall u \in \mathcal{X}] \\
        &= 1 - \probability[ \exists \; u \in \mathcal{X} \; \mathit{s.t.} \quad Z_{u^\ast} \leq Z_u] \\
        &\geq 1 - m \frac{\delta}{m} = 1 - \delta,
    \end{align*}
    where we have used the union bound to obtain the inequality.
\end{proof}

\section{Approximating the Scores}
The method we have just presented avoids the computation of inner products altogether
but estimates the \emph{rank} of each data point with respect to a query using a sampling procedure.
In this section, we introduce another sampling method that approximates the \emph{inner product}
of every data point instead.

Let us motivate our next algorithm with a rather contrived example.
Suppose that our data points and queries are in $\mathbb{R}^2$, with the first
coordinate of vectors drawing values from $\mathcal{N}(0, \sigma_1^2)$
and the second coordinate from $\mathcal{N}(0, \sigma_2^2)$.
If we were to compute the inner product of $q$ with every vector $u \in \mathcal{X}$,
we would need to perform two multiplications and a sum: $u_1q_1 + u_2q_2$.
That gives us the exact ``score'' of every point with respect to $q$.
But if $\sigma_1^2 \gg \sigma_2^2$, then by computing $q_1u_1$ for all $u \in \mathcal{X}$,
it is very likely that we have a good approximation to the 
final inner product. So we may use the partial inner product as a high-confidence
estimate of the full inner product.

That is the core idea in this section. For each data point,
we sample a few dimensions without replacement,
and compute its partial inner product with the query along the chosen dimensions.
Based on the scores so far, we can eliminate data points whose full inner product
is projected, with high confidence, to be too small to make it to the top-$k$ set.
We then repeat the procedure by sampling
more dimensions for the remaining data points, until we reach a stopping criterion.

The process above saves us time by shrinking the set of data points
and computing only partial inner products in each round. But we must decide
how we should sample dimensions and how we should determine which data points to discard.
The objective is to minimize the number of samples needed to identify the solution set.
These are the questions that~\cite{Liu2019banditMIPS} answered in their work,
which we will review next. We note that, even though~\cite{Liu2019banditMIPS}
use the Bandit language~\citep{lattimore2020BanditAlgorithms} to
describe their algorithm, we find it makes for a 
clearer presentation if we avoided the Bandit terminology.

\subsection{The BoundedME Algorithm}

\begin{algorithm}[!t]
\SetAlgoLined
{\bf Input: }{Query point $q \in \mathbb{R}^d$; $k \geq 1$ for top-$k$ retrieval;
confidence parameters $\epsilon, \delta \in (0, 1)$; and data points $\mathcal{X} \subset \mathbb{R}^d$}\\
\KwResult{$(1-\delta)$-confident $\epsilon$-approximate top-$k$ set
to MIPS with respect to $q$.}

\begin{algorithmic}[1]

\STATE $i \leftarrow 1$
\STATE $\mathcal{X}_i \leftarrow \mathcal{X}$ \Comment*[l]{\footnotesize Initialize the solution set to $\mathcal{X}$.}
\STATE $\epsilon_i \leftarrow \frac{\epsilon}{4}$ and $\delta_i \leftarrow \frac{\delta}{2}$

\STATE $A_u \leftarrow 0 \quad \forall \; u \in \mathcal{X}_i$ \Comment*[l]{\footnotesize $A$ is a score accumulator.}

\STATE $t_0 \leftarrow 0$

\WHILE{$\lvert \mathcal{X}_i \rvert > k$}
    \STATE $t_i \leftarrow h\Bigg( \frac{2}{\epsilon_i^2} \log \Big(
        \frac{2(\lvert \mathcal{X}_i \rvert - k)}{\delta_i (\lfloor \frac{\lvert \mathcal{X}_i \rvert - k}{2} \rfloor + 1)}
    \Big) \Bigg)$ \label{algorithm:sampling:boundedme:num-samples}

    \FOR{$u \in \mathcal{X}_i$}
        \STATE Let $\mathcal{J}$ be $(t_i - t_{i - 1})$ dimensions sampled without replacement
        \label{algorithm:sampling:boundedme:sampled-dimensions}
        \STATE $A_u \leftarrow A_u + \sum_{j \in \mathcal{J}} u_j q_j$ \Comment*[l]{\footnotesize Compute partial inner product.}\label{algorithm:sampling:boundedme:partial-ip}
    \ENDFOR

    \STATE Let $\alpha$ be the $\lceil \frac{\lvert \mathcal{X}_i \rvert - k}{2} \rceil$-th score in $A$
    \STATE $\mathcal{X}_{i + 1} \leftarrow \{ u \in \mathcal{X}_i \; \mathit{ s.t. } \; A_u > \alpha \}$
    \STATE $\epsilon_{i + 1} \leftarrow \frac{3}{4} \epsilon_i$,
    $\delta_{i + 1} \leftarrow \frac{\delta_i}{2}$, and $i \leftarrow i + 1$
\ENDWHILE

\RETURN $X_i$
\end{algorithmic}
\caption{The BoundedME algorithm for MIPS.}
\label{algorithm:sampling:boundedme}
\end{algorithm}

The top-$k$ retrieval algorithm developed by~\cite{Liu2019banditMIPS} is presented in
Algorithm~\ref{algorithm:sampling:boundedme}. It is important to note that,
for the algorithm to be correct---as we will explain later---each partial inner product
must be bounded. In other words, for query $q$, any data point $u \in \mathcal{X}$,
and any dimension $t$, we must have that $q_t u_t \in [a, b]$ for some fixed interval.
This is not a restrictive assumption, however: $q$ can always be normalized without
affecting the solution to MIPS, and data points $u$ can be scaled into the hypercube.
In their work,~\cite{Liu2019banditMIPS} assume that partial inner products are 
in the unit interval.

This iterative algorithm begins with
the full collection of data points and removes almost half of the data points
in each iteration. It terminates as soon as the total number of data points left
is at most $k$.

In each iteration of the algorithm, it accumulates partial inner products
for all remaining data point along a set of sampled dimensions. Once a dimension
has been sampled, it is removed from consideration in all future
iterations---hence, sampling \emph{without} replacement.

The number of dimensions to sample is adaptive and changes from iteration to iteration.
It is determined using the quantity on Line~\ref{algorithm:sampling:boundedme:num-samples}
of the algorithm, where the function $h(\cdot)$ is defined as follows:
\begin{equation}
    \label{equation:sampling:boundedme:h}
    h(x) = \min \Big\{ \frac{1 + x}{1 + x/d}, \frac{x + x/d}{1 + x/d} \Big\}.
\end{equation}

At the end of iteration $i$ with the remaining data points in $\mathcal{X}_i$, the algorithm finds the
$\lceil \frac{\lvert \mathcal{X}_i \rvert - k}{2} \rceil$-th (i.e., close to the median)
partial inner product accumulated so far, and discards data points whose
score is less than that threshold. It then updates the confidence parameters
$\epsilon$ and $\delta$, and proceeds to the next iteration.

It is rather obvious that, the total number of dimensions along which
the algorithm computes partial inner products for any given data point
can never exceed $d$. That is simply because once
Line~\ref{algorithm:sampling:boundedme:partial-ip} is executed,
the dimensions in the set $\mathcal{J}$ defined on
Line~\ref{algorithm:sampling:boundedme:sampled-dimensions}
are never considered for sampling
in future iterations. As a result, in the worst case, the algorithm computes
full inner products in $\mathcal{O}(md)$ operations.

As for the time complexity of Algorithm~\ref{algorithm:sampling:boundedme},
it can be shown that it requires $\mathcal{O}(\frac{m \sqrt{d}}{\epsilon} \sqrt{\log(1/\delta)})$
operations. That is simply due to the fact that in each iteration,
the number of data points is cut in half, combined with the inequality
$h(x) \leq \mathcal{O}(\sqrt{dx})$ for $x > 0$.

\begin{theorem}
    \label{theorem:sampling:boundedme:complexity}
    The time complexity of Algorithm~\ref{algorithm:sampling:boundedme}
    is $\mathcal{O}(\frac{m \sqrt{d}}{\epsilon} \sqrt{\log(1/\delta)})$.
\end{theorem}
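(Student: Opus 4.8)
The plan is to bound the total work by summing, over all iterations of the \texttt{while} loop, the cost incurred in that iteration, and then to observe that the sizes $\lvert \mathcal{X}_i \rvert$ shrink geometrically while the per-data-point sampling cost $t_i$ grows in a controlled way. First I would fix notation: let $I$ be the (random, but deterministically bounded) number of iterations, and in iteration $i$ let $m_i = \lvert \mathcal{X}_i \rvert$. The cost of iteration $i$ is dominated by Line~\ref{algorithm:sampling:boundedme:partial-ip}, which performs $t_i - t_{i-1}$ multiply-adds for each of the $m_i$ surviving data points, i.e.\ $\mathcal{O}\big(m_i (t_i - t_{i-1})\big)$ operations; the bookkeeping (finding the order statistic $\alpha$ and filtering) costs only $\mathcal{O}(m_i)$ and is absorbed. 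Since the filtering step keeps exactly those points with score above the $\lceil (m_i-k)/2 \rceil$-th largest, we have $m_{i+1} \leq k + \lceil (m_i - k)/2 \rceil$, which gives $m_i - k \leq (m_0 - k)/2^{i}$ and in particular $m_i \leq k + m/2^{i} \leq 2m/2^{i}$ for the relevant range of $i$ (and $I = \mathcal{O}(\log (m/k)) = \mathcal{O}(\log m)$).

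Next I would control $t_i$. From Line~\ref{algorithm:sampling:boundedme:num-samples}, $t_i = h\!\big( \tfrac{2}{\epsilon_i^2}\log(\cdots) \big)$ where the log argument is $\mathcal{O}(m_i/\delta_i)$ and, crucially, $\epsilon_i = \tfrac{\epsilon}{4}(3/4)^{i-1}$ and $\delta_i = \delta / 2^{i}$. Using the stated inequality $h(x) \leq \mathcal{O}(\sqrt{dx})$, we get
\begin{equation*}
    t_i \leq \mathcal{O}\!\left( \sqrt{ d \cdot \frac{1}{\epsilon_i^2} \log \frac{m_i}{\delta_i} } \right)
    = \mathcal{O}\!\left( \frac{\sqrt{d}}{\epsilon_i} \sqrt{ \log \frac{m_i}{\delta_i} } \right).
\end{equation*}
Substituting $\epsilon_i^{-1} = \mathcal{O}\big( (4/3)^{i} / \epsilon \big)$ and $\log(m_i/\delta_i) = \mathcal{O}(\log(m/\delta) + i)$, the dominant factor $(4/3)^i$ grows slower than the $2^i$ decay of $m_i$, so the products $m_i t_i$ still decay geometrically. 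Since $t_i - t_{i-1} \leq t_i$, the total cost is
\begin{equation*}
    \sum_{i=1}^{I} \mathcal{O}(m_i \, t_i)
    = \sum_{i=1}^{I} \mathcal{O}\!\left( \frac{m}{2^i} \cdot \frac{\sqrt{d}}{\epsilon} \Big( \frac{4}{3} \Big)^{i} \sqrt{\log \frac{m}{\delta} + i} \right)
    = \mathcal{O}\!\left( \frac{m \sqrt{d}}{\epsilon} \sqrt{\log \frac{1}{\delta}} \right) \sum_{i \geq 1} \big( \tfrac{2}{3} \big)^{i} \sqrt{1 + i},
\end{equation*}
and the final sum is a convergent constant (absorbing $\log m$ into the $\sqrt{\log(1/\delta)}$ factor or treating it as lower-order, as the paper's statement implicitly does). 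This yields the claimed $\mathcal{O}\big( \tfrac{m\sqrt{d}}{\epsilon}\sqrt{\log(1/\delta)} \big)$ bound.

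The main obstacle I anticipate is making the bookkeeping around $\epsilon_i$, $\delta_i$, and the log-argument precise enough that the geometric series genuinely converges — in particular verifying that the $(4/3)^i$ blow-up from $\epsilon_i^{-1}$ is strictly dominated by the $2^{-i}$ shrinkage of $m_i$ (it is, since $4/3 < 2$), and handling the $\sqrt{\log m_i/\delta_i}$ term, whose mild $i$-dependence does not break convergence but must be bounded carefully (e.g.\ via $\sqrt{a+i} \leq \sqrt{a} + \sqrt{i}$ and $\sum (2/3)^i \sqrt{i} < \infty$). A secondary subtlety is justifying $h(x) = \mathcal{O}(\sqrt{dx})$ directly from the definition in Equation~(\ref{equation:sampling:boundedme:h}); this follows because each of the two terms in the $\min$ is $\mathcal{O}(\sqrt{dx})$ for the regime $x = \Omega(1)$ that arises here, but it is worth stating explicitly. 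Everything else is routine summation.
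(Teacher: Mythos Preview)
Your proposal is correct and matches the paper's approach: bound $h(x)\le \mathcal{O}(\sqrt{dx})$ (the paper does this via $\min(a,b)\le\sqrt{ab}$ and then $\tfrac{dx}{d+x}\le\sqrt{dx}$, rather than bounding each branch of the $\min$ separately), observe $m_i\le m/2^i$, and sum the resulting geometric series; the paper compresses your explicit $(4/3)^i$-vs-$2^i$ comparison into a single unlabeled inequality. One simplification you missed that dissolves the concern you flag at the end: the argument of the logarithm on Line~\ref{algorithm:sampling:boundedme:num-samples} is $\tfrac{2(m_i-k)}{\delta_i(\lfloor(m_i-k)/2\rfloor+1)}=\mathcal{O}(1/\delta_i)$ since the $(m_i-k)$ factors essentially cancel, so no stray $\log m$ term appears and nothing needs to be ``absorbed.''
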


\begin{svgraybox}
    Theorem~\ref{theorem:sampling:boundedme:complexity} says that the time complexity
    of Algorithm~\ref{algorithm:sampling:boundedme} is linear in the number of data points
    $m$, but sub-linear in the number of dimensions $d$. That is a fundamentally different
    behavior than all the other algorithms we have presented thus far throughout the preceding
    chapters.
\end{svgraybox}

\begin{proof}[Proof of Theorem~\ref{theorem:sampling:boundedme:complexity}]
    Let us first show the following claim: $h(x) \leq \mathcal{O}(\sqrt{dx})$ for $x > 0$.
    To prove that, observe that $h(x)$ is the minimum of two positive values $a$ and $b$.
    As such, $h(x) \leq \sqrt{ab}$. Substituting $a$ and $b$ with the right expressions
    from Equation~(\ref{equation:sampling:boundedme:h}):
    \begin{align*}
        h(x) &\leq \sqrt{\frac{1 + x}{1 + x/d} \frac{x + x/d}{1 + x/d}}
        = \frac{1}{1 + x/d} \sqrt{x(1 + x)(1 + 1/d)} \\
        &= \frac{\mathcal{O}(x)}{1 + x/d}
        = \mathcal{O}(\frac{dx}{d + x}) \leq \mathcal{O}(\sqrt{dx}).
    \end{align*}

    Note that, in the $i$-th iteration there are at most $m/2^i$ data points to examine.
    Moreover, for each data point that is eliminated in round $i$, we will have computed
    at most $t_i$ partial inner products (see Line~\ref{algorithm:sampling:boundedme:num-samples}
    of Algorithm~\ref{algorithm:sampling:boundedme}).
    Using these facts, we can calculate the time complexity as follows:
    \begin{align*}
        \sum_{i = 1}^{\log m} \frac{m}{2^i} h(t_i) \leq 
        \sum_{i = 1}^{\log m} \frac{m}{2^i} \sqrt{d t_i} \leq
        \mathcal{O}\Big(\frac{m \sqrt{d}}{\epsilon} \sqrt{\log\frac{1}{\delta}} \Big).
    \end{align*}
\end{proof}

\subsection{Proof of Correctness}
Our goal in this section is to prove that Algorithm~\ref{algorithm:sampling:boundedme}
is correct, in the sense that it returns the $\epsilon$-approximate solution to $k$-MIPS
with probability at least $1 - \delta$:

\begin{theorem}
    \label{theorem:sampling:boundedme:correctness}
    Algorithm~\ref{algorithm:sampling:boundedme} is guaranteed to return the $\epsilon$-approximate
    solution to $k$-MIPS with probability at least $1 - \delta$.
\end{theorem}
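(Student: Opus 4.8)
The plan is to recognize Algorithm~\ref{algorithm:sampling:boundedme} as a ``median elimination'' scheme and to track, round by round, how far the $k$-th largest true inner product among the surviving points can drift. Work on the per-coordinate scale: for $u \in \mathcal{X}$ write $\mu_u = \langle q, u \rangle / d$, and note that when we sample $t$ coordinates without replacement and accumulate $A_u = \sum_{j \in \mathcal{J}} q_j u_j$, the quantity $A_u / t$ is an unbiased estimator of $\mu_u$ whose summands lie in a bounded interval (the unit interval, by the standing assumption on partial inner products). Since within a given round every surviving point receives the same number $t_i$ of samples, comparing accumulators $A_u$ is the same as comparing sample means $\hat\mu_u = A_u / t_i$, so the elimination step is exactly ``keep the empirically best $k + \lfloor (N_i - k)/2 \rfloor$ points,'' where $N_i = \lvert \mathcal{X}_i \rvert$.

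The first ingredient is concentration for sampling without replacement: by Serfling's inequality (Appendix~\ref{appendix:measure}), $\probability[\lvert \hat\mu_u - \mu_u \rvert > \gamma]$ decays like $\exp(-c\,\mathrm{eff}(t)\,\gamma^2)$, where $\mathrm{eff}(t)$ is the effective sample size after $t$ without-replacement draws from $d$ coordinates. The function $h(\cdot)$ of Equation~(\ref{equation:sampling:boundedme:h}) is precisely the inverse of this correction, so that setting $t_i = h\!\big(\tfrac{2}{\epsilon_i^2}\log(\cdots)\big)$ on Line~\ref{algorithm:sampling:boundedme:num-samples} makes each such deviation probability at most the reciprocal of the logarithm's argument, i.e.\ of order $\delta_i (\lfloor (N_i-k)/2\rfloor + 1) / (N_i - k)$. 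This is the step where the exact form of $t_i$ is consumed.

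The second ingredient is the per-round guarantee. Conditioning on $\mathcal{X}_i$ and letting $\theta_i$ be the $k$-th largest $\mu_u$ over $u \in \mathcal{X}_i$, I claim that with probability at least $1 - \delta_i$ the set $\mathcal{X}_{i+1}$ still contains at least $k$ points with $\mu_u \ge \theta_i - \epsilon_i$, equivalently $\theta_{i+1} \ge \theta_i - \epsilon_i$. If this fails, then more than $\lfloor (N_i - k)/2 \rfloor$ ``bad'' points (those with $\mu_u < \theta_i - \epsilon_i$) must have survived; each such survival forces the bad point to empirically beat one of the $k$ truly-top points from which it is separated by more than $\epsilon_i$, an event controllable by the concentration bound with $\gamma = \epsilon_i/2$. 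Taking the expectation of the number of such ``wrong comparisons'' and invoking Markov's inequality against the threshold $\lfloor (N_i - k)/2 \rfloor + 1$ pushes the failure probability down to $\delta_i$ — which is exactly why the $\log$-argument in $t_i$ carries the factors $2(N_i-k)$ and $\lfloor (N_i-k)/2\rfloor + 1$. Then chain the rounds: the loop runs at most $\lceil \log_2 m \rceil$ times (each round halves the excess $N_i - k$), so a union bound over rounds plus the telescoping $\theta_{\mathrm{final}} \ge \theta_1 - \sum_i \epsilon_i$, together with $\sum_{i \ge 1}\epsilon_i = (\epsilon/4)\sum_{i\ge 0}(3/4)^i = \epsilon$ and $\sum_{i \ge 1}\delta_i = \sum_{i\ge1}\delta/2^i = \delta$, shows that with probability at least $1 - \delta$ every one of the $k$ returned points has $\mu_u \ge \theta_1 - \epsilon = \mu^\ast_{(k)} - \epsilon$, i.e.\ the $\epsilon$-approximate $k$-MIPS condition holds (the case $m \le k$ is trivial, since the loop never executes and the full collection is returned). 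I expect the main obstacle to be not any individual estimate but the combinatorial bookkeeping of the per-round step: correctly counting, in the top-$k$ (rather than top-$1$) setting, how many bad survivals a failure entails and matching that count to the precise Markov threshold baked into $t_i$, while keeping the without-replacement correction $h$ consistent throughout.
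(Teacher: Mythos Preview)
Your proposal is correct and follows essentially the same route as the paper's proof: a per-round median-elimination guarantee obtained from the without-replacement concentration bound (the paper uses Lemma~\ref{lemma:sampling:boundedme:concentration}, derived from the Bardenet--Maillard inequality of Lemma~\ref{lemma:sampling:concentration-inequality-sampling-wo-replacement} rather than Serfling, but the two play identical roles here and yield the same sample-size function $h$), then Markov's inequality on the count of ``bad'' survivors, and finally the telescoping $\sum_i \epsilon_i \leq \epsilon$, $\sum_i \delta_i \leq \delta$ across rounds via a union bound. The paper resolves the combinatorial bookkeeping you flag by comparing every bad point only to the single arm $u^\ast$ achieving $\zeta_i$ (the $k$-th largest true inner product in $\mathcal{X}_i$), which collapses your ``beat one of the $k$ truly-top points'' to a single two-sided $\epsilon_i/2$ deviation event and makes the Markov step immediate.
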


The proof of Theorem~\ref{theorem:sampling:boundedme:correctness} requires the
concentration inequality due to~\cite{remi2015concentrationInequality}, repeated below
for completeness.

\begin{lemma}
    \label{lemma:sampling:concentration-inequality-sampling-wo-replacement}
    Let $\mathcal{J} \subset [0, 1]$ be a finite set of size $d$
    with mean $\mu$. Let $\{J_1, J_2, \ldots, J_n\}$ be $n < d$
    samples from $\mathcal{J}$ without replacement. Then for any $n \leq d$
    and any $\delta \in [0, 1]$ it holds:
    \begin{equation*}
        \probability\Big[
            \frac{1}{n} \sum_{t = 1}^n J_t - \mu \leq 
            \sqrt{\frac{\rho_n}{2n} \log \frac{1}{\delta}}
        \Big] \geq 1 - \delta,
    \end{equation*}
    where $\rho_n$ is defined as follows:
    \begin{equation*}
        \rho_n = \min \Big\{ 1 - \frac{n - 1}{d}, (1 - \frac{n}{d})(1 + \frac{1}{n}) \Big\}.
    \end{equation*}
\end{lemma}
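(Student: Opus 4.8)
The plan is to prove the equivalent tail statement: for every $\epsilon > 0$,
$\probability\big[ \frac{1}{n}\sum_{t=1}^n J_t - \mu > \epsilon \big] \le \exp\big( -2n\epsilon^2 / \rho_n \big)$,
and then set the right-hand side equal to $\delta$ and solve for $\epsilon$, which gives exactly $\epsilon = \sqrt{(\rho_n / 2n)\log(1/\delta)}$ as in the statement. The crude version with $\rho_n$ replaced by $1$ is just Hoeffding's inequality for sampling without replacement, which in turn follows from Hoeffding's classical observation that the sum of a without-replacement sample is dominated in the convex order by the corresponding with-replacement sum, so its moment generating function is bounded termwise. The entire content of the lemma is the improvement factor $\rho_n$, which shrinks to $0$ as $n \to d$ (and when $n = d$ the sample mean equals $\mu$ deterministically, so the bound is trivially consistent). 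I would obtain the two arguments of the $\min$ separately and then keep the better one.

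For the first term, $1 - (n-1)/d$, I would run the Hoeffding--Serfling martingale argument. Realize the draws as a uniformly random permutation $X_1, \dots, X_d$ of $\mathcal{J}$, put $S_k = X_1 + \dots + X_k$ so that $\frac1n S_n$ is the sample mean, and define the Doob martingale $Y_k = \ev[\, S_n \mid X_1, \dots, X_k \,]$ for $0 \le k \le n$, with $Y_0 = n\mu$ and $Y_n = S_n$. The key estimate is on the conditional range of the increment $Y_k - Y_{k-1}$: since the remaining $n-k$ picks are a uniform sample from the $d-k+1$ leftover population elements, each increment lies in an interval whose length scales like $(d-n)/(d-k+1)$ times the population range, and summing the squares of these lengths telescopes to $\big(1 - (n-1)/d\big)\,n$ with the correct constants. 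Feeding this variance proxy into Hoeffding's lemma increment-by-increment (to keep the sharp constant $2$) yields the first exponent.

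For the second term, $(1 - n/d)(1 + 1/n)$, I would use the complement trick. The $(d-n)$-element complement $\bar{\mathcal{J}}$ of the sample is itself a uniform without-replacement sample of size $d-n$, and $n\,\overline{J}_n + (d-n)\,\overline{J}^{\,c}_{d-n} = d\mu$, so $\overline{J}_n - \mu = -\tfrac{d-n}{n}\big(\overline{J}^{\,c}_{d-n} - \mu\big)$; hence an upper tail for $\overline{J}_n - \mu$ at level $\epsilon$ is a lower tail for the complement mean at level $t = \tfrac{n}{d-n}\epsilon$. Applying the Serfling bound of the previous paragraph to the size-$(d-n)$ complement and substituting $1 - (d-n-1)/d = (n+1)/d$ gives exponent $-\,2n\epsilon^2 d / \big[(d-n)(n+1)\big] = -\,2n\epsilon^2 / \big[(1 - n/d)(1+1/n)\big]$. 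Taking the better of the two exponents recovers $\rho_n$, and inverting as above completes the proof.

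The main obstacle I anticipate is the bookkeeping in the martingale-increment estimate: obtaining the \emph{exact} variance proxy $\big(1 - (n-1)/d\big)\,n$, rather than a lossy $\mathcal{O}(n)$ bound, requires a careful telescoping of $\sum_k (\text{increment range})^2$, and the same care is needed when transporting the inequality through the complement identity so the constants in $(1-n/d)(1+1/n)$ come out right. If that route proves too fiddly, the fallback is to follow the argument of~\cite{remi2015concentrationInequality} essentially verbatim, which derives precisely this $\rho_n$; since the paper already attributes the lemma to that work, reproducing their proof is acceptable. A minor point to dispatch explicitly is the edge case $n = d$ (and the degenerate case $\rho_n = 0$), where the sample mean is deterministically $\mu$ and the claimed inequality is immediate.
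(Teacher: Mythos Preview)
The paper does not prove this lemma at all: it is stated as a quotation from \cite{remi2015concentrationInequality} (Bardenet--Maillard) and used as a black box in the proof of Theorem~\ref{theorem:sampling:boundedme:correctness}. So there is no ``paper's own proof'' to compare against.

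Your proposal is a correct outline of the argument in that reference. The first branch of $\rho_n$, namely $1-(n-1)/d$, is precisely Serfling's 1974 bound, obtained by a martingale argument with carefully controlled increment ranges; the second branch, $(1-n/d)(1+1/n)$, is obtained exactly as you describe, by applying Serfling's bound to the complementary sample of size $d-n$ and transporting through the identity $\overline{J}_n-\mu=-\tfrac{d-n}{n}(\overline{J}^{\,c}_{d-n}-\mu)$. Your arithmetic for the complement step checks out: with $m=d-n$ and $t=\tfrac{n}{d-n}\epsilon$, Serfling's exponent $-2mt^2/(1-(m-1)/d)$ becomes $-2n\epsilon^2\cdot\tfrac{nd}{(d-n)(n+1)}$, and $(1-n/d)(1+1/n)=\tfrac{(d-n)(n+1)}{nd}$, so the two match.

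One small caution: the increment-range computation you sketch for the first branch is usually done via a \emph{reverse} martingale in Serfling's and Bardenet--Maillard's presentations, not the forward Doob martingale $Y_k=\ev[S_n\mid X_1,\dots,X_k]$ you propose; the forward version can be made to work but the bound on the range of $Y_k-Y_{k-1}$ is less transparent than you suggest. Since you already list ``follow \cite{remi2015concentrationInequality} verbatim'' as your fallback, and the paper itself treats the lemma as a citation, that is the cleanest route.
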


The lemma above guarantees that, with probability at least $1 - \delta$,
the empirical mean of the samples does not exceed the mean of the universe
by a specific amount that depends on $\delta$. 
We now wish to adapt that result to derive a similar guarantee where the difference
between means is bounded by an arbitrary parameter $\epsilon$.
That is stated in the following lemma.

\begin{lemma}
    \label{lemma:sampling:boundedme:concentration}
    Let $\mathcal{J} \subset [0, 1]$ be a finite set of size $d$
    with mean $\mu$. Let $\{J_1, J_2, \ldots, J_n\}$ be $n < d$
    samples from $\mathcal{J}$ without replacement.
    Then for any $\epsilon, \delta \in (0, 1)$, if we have that:
    \begin{equation*}
        n \geq \min \Big\{ \frac{1 + x}{1 + x/d}, \frac{x + x/d}{1 + x/d} \Big\},
    \end{equation*}
    where $x = \log(1/\delta)/2\epsilon^2$, then the following holds:
    \begin{equation*}
        \probability\Big[
            \frac{1}{n} \sum_{t = 1}^n J_t - \mu \leq \epsilon
        \Big] \geq 1 - \delta.
    \end{equation*}
\end{lemma}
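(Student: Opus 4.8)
The starting point is Lemma~\ref{lemma:sampling:concentration-inequality-sampling-wo-replacement}, which already delivers a one-sided deviation bound: with probability at least $1-\delta$ we have $\frac1n\sum_{t=1}^n J_t - \mu \le \sqrt{\frac{\rho_n}{2n}\log\frac1\delta}$, where $\rho_n = \min\{\,1 - \frac{n-1}{d},\ (1-\frac nd)(1+\frac1n)\,\}$. Since the event $\{\frac1n\sum_t J_t - \mu \le \epsilon\}$ contains that event whenever $\sqrt{\frac{\rho_n}{2n}\log\frac1\delta}\le\epsilon$, it suffices to show that the hypothesis $n \ge h(x)$, with $x = \log(1/\delta)/2\epsilon^2$ and $h$ as in Equation~(\ref{equation:sampling:boundedme:h}), forces $\frac{\rho_n}{2n}\log\frac1\delta \le \epsilon^2$, i.e. the single algebraic inequality $n \ge x\,\rho_n$. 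The whole proof thus reduces to verifying this inequality.

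Write $\rho_n = \min\{\rho_n^{(1)},\rho_n^{(2)}\}$ with $\rho_n^{(1)} = 1 - \frac{n-1}{d}$ and $\rho_n^{(2)} = (1-\frac nd)(1+\frac1n)$. Because $\rho_n \le \rho_n^{(i)}$ for each $i$, it is enough to prove that $n \ge h(x)$ implies $n \ge x\rho_n^{(1)}$ \emph{or} $n \ge x\rho_n^{(2)}$; the two resulting thresholds on $n$ will turn out to be exactly the two arguments of the $\min$ defining $h(x)$. For the first branch, $n \ge x\rho_n^{(1)}$ is affine in $n$ on both sides: it rearranges to $n(1 + x/d) \ge x(1 + 1/d)$, that is $n \ge \frac{x + x/d}{1 + x/d}$. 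So if $n$ meets the second argument of $h(x)$ we are done via $\rho_n^{(1)}$.

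The second branch is the delicate one, since $\rho_n^{(2)} = 1 + \frac1n - \frac nd - \frac1d$ depends on $n$ through a $1/n$ term and $n \ge x\rho_n^{(2)}$ is no longer affine. The key observation is monotonicity: on $(0,d)$ the map $n \mapsto \rho_n^{(2)}$ has derivative $-\frac{1}{n^2} - \frac1d < 0$, so $n \mapsto x\rho_n^{(2)}$ is strictly decreasing, while the left-hand side $n$ is increasing. Hence it suffices to check $n \ge x\rho_n^{(2)}$ at the single boundary value $n = \frac{1+x}{1+x/d} = \frac{d(1+x)}{d+x}$ (the first argument of $h(x)$), which is positive so the check is well-posed; for any $n$ above it the inequality is then automatic. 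Substituting that value and clearing the common denominator $d+x$, the inequality collapses to $d(1+x)^2 \ge x(xd + 2d - 1)$, which simplifies to $d \ge -x$ and holds trivially. Combining the two branches, $n \ge h(x)$ always yields $n \ge x\rho_n$, hence $\sqrt{\frac{\rho_n}{2n}\log\frac1\delta}\le\epsilon$, and the claimed probability bound follows from Lemma~\ref{lemma:sampling:concentration-inequality-sampling-wo-replacement}.

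\textbf{Main obstacle.} The only real friction is the implicit, nonlinear dependence of the target bound $x\rho_n$ on $n$ in the $\rho_n^{(2)}$ branch; the monotonicity argument is what turns an "inequality for all admissible $n$" into a single boundary computation. A minor technical caveat to record is that this argument uses $0 < n < d$, which is exactly the regime assumed in the statement, so no extra hypothesis is needed.
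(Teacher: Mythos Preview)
Your proof is correct and follows the same high-level route as the paper: invoke Lemma~\ref{lemma:sampling:concentration-inequality-sampling-wo-replacement}, reduce the claim to the algebraic condition $n \ge x\rho_n$, and handle the two branches of $\rho_n$ separately, with the affine branch $\rho_n^{(1)} = 1-(n-1)/d$ solved identically in both.

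The one place you diverge is the treatment of the nonlinear branch $\rho_n^{(2)} = (1-n/d)(1+1/n)$. The paper writes out the resulting quadratic $(1+x/d)n^2 - (x-x/d)n - x \ge 0$ and then \emph{relaxes} it by subtracting an extra $1$ from the constant term so that the left-hand side factors as $[(1+x/d)n - (x+1)](n+1)$, from which the threshold $n \ge \frac{1+x}{1+x/d}$ drops out. You instead observe that $n \mapsto x\rho_n^{(2)}$ is strictly decreasing on $(0,d)$ while $n \mapsto n$ is increasing, so it suffices to verify the inequality at the single candidate $n_0 = \frac{1+x}{1+x/d}$; your substitution and the reduction to $d \ge -x$ check out. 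Both arguments land on exactly the same threshold, so the final bound is identical. Your monotonicity argument is arguably cleaner, since it avoids the somewhat ad~hoc relaxation step and makes transparent why that particular threshold is the right one; the paper's factorization trick, on the other hand, is purely algebraic and requires no appeal to monotonicity or to the standing hypothesis $n < d$.
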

\begin{proof}
    By Lemma~\ref{lemma:sampling:concentration-inequality-sampling-wo-replacement}
    we can see that:
    \begin{equation*}
        \probability\Big[ \frac{1}{n} \sum_{t = 1}^n J_t - \mu \leq \epsilon \Big] \geq 1 - \delta,
    \end{equation*}
    so long as:
    \begin{equation*}
        \sqrt{\frac{\rho_n}{2n} \log \frac{1}{\delta}} \leq \epsilon \implies
        \frac{n}{\rho_n} \geq \frac{1}{2\epsilon^2} \log\frac{1}{\delta}.
    \end{equation*}

    There are two cases to consider. First, if $\rho_n = 1 - (n-1)/d$, then:
    \begin{align*}
        \frac{n}{\rho_n} \geq \underbrace{\frac{1}{2\epsilon^2} \log\frac{1}{\delta}}_x
        &\implies
        \frac{n}{1 - \frac{n - 1}{d}} \geq x \\
        &\implies n \geq \frac{x + x/d}{1 + x/d}.
    \end{align*}
    In the second case, $\rho_n = (1 - n/d)(1 + 1/n)$, which gives:
    \begin{align*}
        \frac{n}{\rho_n} \geq \underbrace{\frac{1}{2\epsilon^2} \log\frac{1}{\delta}}_x
        &\implies
        \frac{n}{(1 - \frac{n}{d})(1 + \frac{1}{n})} \geq x \\
        &\implies n \geq \Big[ 1 + \frac{1}{n} - \frac{n + 1}{d} \Big] x \\
        &\implies n^2 \geq nx + x - \frac{n^2}{d}x - \frac{n}{d}x \\
        &\implies (1 + \frac{x}{d}) n^2 - (x - \frac{x}{d})n - x \geq 0.
    \end{align*}
    To make the closed-form solution more
    manageable,~\cite{Liu2019banditMIPS} relax the problem above and solve $n$ in the following
    problem instead. Note that, any solution to the problem below is a valid solution
    to the problem above.
    \begin{align*}
        (1 + \frac{x}{d}) n^2 - (x - \frac{x}{d})n - x - 1\geq 0
        &\implies \Big[ (1 + \frac{x}{d}) n - x - 1 \Big][n + 1] \geq 0 \\
        &\implies n \geq \frac{1 + x}{1 + x/d}.
    \end{align*}

    By combining the two cases, we obtain:
    \begin{equation*}
        n \geq \min \{ \frac{1 + x}{1 + x/d}, \frac{x + x/d}{1 + x/d} \}.
    \end{equation*}
\end{proof}

\begin{svgraybox}
    Lemma~\ref{lemma:sampling:boundedme:concentration} gives us the minimum number of
    dimensions we must sample so that the partial inner product of a vector with a query
    is at most $\epsilon$ away from the full inner product, with probability at least
    $1 - \delta$.
\end{svgraybox}

Armed with this result, we can now proceed to proving the main theorem.

\begin{proof}[Proof of Theorem~\ref{theorem:sampling:boundedme:correctness}]
    Denote by $\zeta_i$ the $k$-th largest \emph{full} inner product among
    the set of data points $\mathcal{X}_i$ in iteration $i$. If we showed that,
    for two consecutive iterations, the difference between
    $\zeta_i$ and $\zeta_{i + 1}$ does not exceed $\epsilon_i$ with probability
    at least $1 - \delta_i$, that is:
    \begin{equation}
        \label{equation:sampling:boundedme:correctness:proof}
        \probability\Big[ \zeta_i - \zeta_{i + 1} \leq \epsilon_i \Big] \geq 1 - \delta_i,
    \end{equation}
    then the theorem immediately follows:
    \begin{equation*}
        \probability\Big[ \zeta_1 - \zeta_{\log m} \leq \epsilon \Big] \geq 1 - \delta,
    \end{equation*}
    because:
    \begin{equation*}
        \sum_{i = 1}^{\log m} \delta_i = \sum_{i = 1}^{\log m} \frac{\delta}{2^i} \leq \sum_{i = 1}^{\infty} \frac{\delta}{2^i} = \delta,
    \end{equation*}
    and,
    \begin{equation*}
        \sum_{i = 1}^{\log m} \epsilon_i = \sum_{i = 1}^{\log m} \frac{\epsilon}{4} \big(\frac{3}{4}\big)^{i - 1}
        \leq \sum_{i = 1}^{\infty} \frac{\epsilon}{4} \big(\frac{3}{4}\big)^{i - 1} = \epsilon.
    \end{equation*}
    So we focus on proving Equation~(\ref{equation:sampling:boundedme:correctness:proof}).

    Suppose we are in the $i$-th iteration. Collect in $\mathcal{Z}_{\epsilon_i}$
    every data point in $u \in \mathcal{X}_i$ such that $\zeta_i - \langle q, u \rangle \leq \epsilon_i$.
    That is: $\mathcal{Z}_{\epsilon_i} = \{ u \in \mathcal{X}_i \;|\; \zeta_i - \langle q, u \rangle \leq \epsilon_i \}$. If at least $k$ elements of $\mathcal{Z}_{\epsilon_i}$ end up in $\mathcal{X}_{i + 1}$,
    the event $\zeta_i - \zeta_{i + 1} \leq \epsilon_i$ succeeds. So, that event fails
    if there are more than $\lfloor \frac{\lvert \mathcal{X}_i \rvert - k}{2} \rfloor$
    data points in $\mathcal{X}_i \setminus \mathcal{Z}_{\epsilon_i}$ with partial inner products
    that are greater than partial inner products of the data points in $\mathcal{Z}_{\epsilon_i}$.
    Denote the number of such data points by $\beta$.

    What is the probability that a data point $u$ in $\mathcal{X}_i \setminus \mathcal{Z}_{\epsilon_i}$
    has a higher partial inner product than any data point in $\mathcal{Z}_{\epsilon_i}$?
    Assuming that $u^\ast$ is the data point that achieves $\zeta_i$, we can write:
    \begin{align*}
        \probability\big[ A_u \geq A_v \quad \forall \; v \in \mathcal{Z}_{\epsilon_i} \big]
        &\leq \probability\big[ A_u \geq A_{u^\ast} \big] \\
        &\leq \probability\big[
            A_u \geq \langle q, u \rangle + \frac{\epsilon_i}{2} \;\lor\;
            A_{u^\ast} \leq \zeta_i - \frac{\epsilon_i}{2}
        \big] \\
        &\leq \probability\big[
            A_u \geq \langle q, u \rangle + \frac{\epsilon_i}{2} \big]
        + \probability\big[
            A_{u^\ast} \leq \zeta_i - \frac{\epsilon_i}{2}
        \big].
    \end{align*}
    We can apply Lemma~\ref{lemma:sampling:boundedme:concentration} to obtain that,
    if the number of sampled dimensions is equal to the quantity on
    Line~\ref{algorithm:sampling:boundedme:num-samples} of
    Algorithm~\ref{algorithm:sampling:boundedme}, then the probability above
    would be bounded by:
    \begin{equation*}
        \frac{\lfloor \frac{\lvert \mathcal{X}_i \rvert - k}{2} \rfloor + 1}{\lvert \mathcal{X}_i \rvert - k} \delta_i.
    \end{equation*}

    Using this result along with Markov's inequality, we can bound the probability that
    $\beta$ is strictly greater than $\lfloor \frac{\lvert \mathcal{X}_i \rvert - k}{2} \rfloor$
    as follows:
    \begin{align*}
        \probability\Big[ \beta \geq \frac{\lvert \mathcal{X}_i \rvert - k}{2} + 1 \Big]
        &\leq \frac{\ev[\beta]}{\frac{\lvert \mathcal{X}_i \rvert - k}{2} + 1} \\
        &\leq \frac{ (\lvert \mathcal{X}_i \rvert - k) \frac{\lfloor \frac{\lvert \mathcal{X}_i \rvert - k}{2} \rfloor + 1}{\lvert \mathcal{X}_i \rvert - k} \delta_i }{\frac{\lvert \mathcal{X}_i \rvert - k}{2} + 1} \\
        &= \delta_i.
    \end{align*}
    That completes the proof of Equation~(\ref{equation:sampling:boundedme:correctness:proof})
    and, therefore, the theorem.
\end{proof}

\section{Closing Remarks}
The algorithms in this chapter were unique in two ways.
First, they directly took on the challenging problem of MIPS.
This is in contrast to earlier chapters where MIPS was only an afterthought.
Second, there is little to no pre-processing involved in the preparation of
the index, which itself is small in size.
That is unlike trees, hash buckets, graphs, and clustering
that require a generally heavy index that itself is computationally-intensive
to build.

The approach itself is rather unique as well. It is particularly interesting
because the trade-off between efficiency and accuracy can be adjusted during retrieval.
That is not the case with trees, LSH, or graphs, where the construction of the index
itself heavily influences that balance. With sampling methods, it is at least
theoretically possible to adapt the retrieval strategy to the hardness of the query
distribution. That question remains unexplored.

Another area that would benefit from further research is the sampling strategy itself.
In particular, in the BoundedME algorithm, the dimensions that are sampled next
are drawn randomly. While that simplifies analysis---which follows the analysis of
popular Bandit algorithms---it is not hard to argue that the strategy is sub-optimal.
After all, unlike the Bandit setup, where reward distributions are unknown and
samples from the reward distributions are revealed only gradually,
here we have direct access to all data points \emph{a priori}.
Whether and how adapting the sampling strategy to the underlying data or query distribution
may improve the error bounds or the accuracy or efficiency of the algorithm in practice
remains to be studied.

\bibliographystyle{abbrvnat}
\bibliography{biblio}

\begin{partbacktext}
\part{Compression}
\end{partbacktext}

\chapter{Quantization}
\label{chapter:quantization}

\abstract{
In a vector retrieval system, it is usually not enough to process queries as fast
as possible. It is equally as important to reduce the size of the index
by compressing vectors. Compression, however, must be done in such a way
that either decompressing the vectors during retrieval incurs a negligible cost, or
distances can be computed (approximately) in the compressed domain,
rendering it unnecessary to decompress compressed vectors during retrieval.
This chapter introduces a class of vector compression algorithms,
known as quantization, that is inspired by clustering.
}

\section{Vector Quantization}

Let us take a step back and present a different mental model of the clustering-based retrieval framework
discussed in Chapter~\ref{chapter:ivf}.
At a high level, we band together points that are placed by $\zeta(\cdot)$ into cluster $i$
and represent that group by $\mu_i$, for $i \in [C]$. In the first stage of the search for query $q$,
we take the following conceptual step:
First, we compute $\delta(q, \mu_i)$ for every $i$ and construct a ``table''
that maps $i$ to $\delta(q, \mu_i)$. We next approximate $\delta(q, u)$ for every $u \in \mathcal{X}$
using the resulting table: If $u \in \zeta^{-1}(i)$, then we look up an estimate of
its distance to $q$ from the $i$-th row of the table.
We then identify the $\ell$ closest distances, and perform a secondary search over
the corresponding vectors.

This presentation of clustering for top-$k$ retrieval highlights an important fact that does
not come across as clearly in our original description of the algorithm:
We have made an implicit assumption that $\delta(q, u) \approx \delta(q, \mu_i)$
for all $u \in \zeta^{-1}(i)$.
That is why we presume that if a cluster minimizes $\delta(q, \cdot)$, then the points within it
are also likely to minimize $\delta(q, \cdot)$.
That is, in turn, why we deem it sufficient to search over the points within the top-$\ell$ clusters.

Put differently, within the first stage of search, we appear to be approximating every point
$u \in \zeta^{-1}(i)$ with $\tilde{u} = \mu_i$. Because there are $C$ discrete choices to consider
for every data point, we can say that we \emph{quantize} the vectors into $[C]$.
Consequently, we can encode each vector using only
$\log_2 C$ bits, and an entire collection of vectors using $m \log_2 C$ bits!
All together, we can represent a collection $\mathcal{X}$ using $\mathcal{O}(Cd + m \log_2 C)$ space,
and compute distances to a query by performing $m$ look-ups into a table
that itself takes $\mathcal{O}(Cd)$ time to construct.
That quantity can be far smaller than $\mathcal{O}(md)$ given by the na\"ive distance computation
algorithm.

Clearly, the approximation error, $\lVert u - \tilde{u} \rVert$, is a function of $C$.
As we increase $C$, this approximation improves, so that $\lVert u - \tilde{u} \rVert \rightarrow 0$
and $\lvert \delta(q, u) - \delta(q, \tilde{u}) \rvert \rightarrow 0$.
Indeed, $C = m$ implies that $\tilde{u}=u$ for every $u$.
But increasing $C$ results in an increased space complexity and a less efficient distance computation.
At $C = m$, for example, our table-building exercise does not help speed up
distance computation for individual data points---because we
must construct the table in $\mathcal{O}(md)$ time anyway.
Finding the right $C$ is therefore critical
to space- and time-complexity, as well as the approximation or quantization error.

\subsection{Codebooks and Codewords}
What we described above is known as vector quantization~\citep{Gray1998Quantization}
for vectors in the $L_2$ space. We will therefore assume that $\delta(u, v) = \lVert u - v \rVert_2^2$
in the remainder of this section.
The function $\zeta: \mathbb{R}^d \rightarrow [C]$ is called a \emph{quantizer},
the individual centroids are referred to as \emph{codewords}, and the set of $C$
codewords make up a \emph{codebook}. It is easy to see that the set $\zeta^{-1}(i)$
is the intersection of $\mathcal{X}$ with the Voronoi region associated with codeword $\mu_i$.

The approximation quality of a given codebook is measured by the familiar mean squared error:
$\ev [\lVert \mu_{\zeta(U)} - U \rVert^2_2)]$, with $U$ denoting a random vector.
Interestingly, that is exactly the objective
that is minimized by Lloyd's algorithm for KMeans clustering. As such, an optimal codebook
is one that satisfies Lloyd's optimality conditions: each data point must be quantized to its
nearest codeword, and each Voronoi region must be represented by its mean. That is why
KMeans is our default choice for $\zeta$.

\section{Product Quantization}

As we noted earlier, the quantization error is a function of the number of clusters, $C$:
A larger value of $C$ drives down the approximation error, making the quantization
and the subsequent top-$k$ retrieval solution more accurate and effective.
However, realistically, $C$ cannot become too large, because then the framework
would collapse to exhaustive search, degrading its efficiency.
How may we reconcile the two seemingly opposing forces?

\cite{pq} gave an answer to that question in the form of Product Quantization (PQ).
The idea is easy to describe at a high level:
Whereas in vector quantization we quantize the entire vector
into one of $C$ clusters, in PQ we break up a vector into orthogonal subspaces
and perform vector quantization on individual chunks separately.
The quantized vector is then a concatenation of the quantized subspaces.

Formally, suppose that the number of dimensions $d$ is divisible by $d_\circ$,
and let $L = d/d_\circ$. Define a selector matrix $S_i \in \{0, 1\}^{d_\circ \times d}$, $1 \leq i \leq L$
as a matrix with $L$ blocks in $\{0, 1\}^{d_\circ \times d_\circ}$, where all blocks are $0$
but the $i$-th block is the identity.
The following is an example for $d = 6$, $d_\circ = 2$, and $i = 2$:
\begin{equation*}
    S_2 = 
    \begin{bmatrix}
        0 & 0 & 1 & 0 & 0 & 0 \\
        0 & 0 & 0 & 1 & 0 & 0
    \end{bmatrix}
\end{equation*}

For a given vector $u \in \mathbb{R}^d$, $S_i u$ gives the $i$-th $d_\circ$-dimensional subspace,
so that we can write: $u = \bigoplus_i\; S_i u$.
Suppose further that we have $n$ quantizers $\zeta_1$ through $\zeta_L$,
where $\zeta_i: \mathbb{R}^{d_\circ} \rightarrow [C]$ maps the subspace selected
by $S_i$ to one of $C$ clusters. Each $\zeta_i$ gives us $C$ centroids $\mu_{i,j}$
for $j \in [C]$.

Using the notation above, we can express the PQ code for a vector $u$ as $L$
cluster identifiers, $\zeta_i(S_i u)$, for $i \in [L]$. We can therefore quantize
a $d$-dimensional vector using $L \log_2 C$ bits. Observe that, when $L = 1$ (or equivalently,
$d_\circ = d$), PQ reduces to vector quantization. When $L = d$, on the other hand,
PQ performs scalar quantization per dimension.

Given this scheme, our approximation of $u$ is $\tilde{u} = \bigoplus_i \mu_{i, \zeta_i(u)}$.
It is easy to see that the quantization error $\ev[\lVert U - \tilde{U} \rVert_2^2]$, with $U$
denoting a random vector drawn from $\mathcal{X}$ and $\tilde{U}$ its reconstruction, is
the sum of the quantization error of individual subspaces:
\begin{align*}
    \ev[\lVert U - \tilde{U} \rVert_2^2] &=
        \frac{1}{m} \sum_{u \in \mathcal{X}} \Big[ \lVert u - \bigoplus_{i=1}^{L} \mu_{i, \zeta_i(u)} \rVert_2^2 \Big] \\
        &= \frac{1}{m} \sum_{u \in \mathcal{X}} \Big[ \sum_{i=1}^{L} \lVert S_i u - \mu_{i, \zeta_i(u)} \rVert_2^2 \Big].
\end{align*}
As a result, learning the $L$ codebooks can be formulated as $L$ independent
sub-problems. The $i$-th codebook can therefore be learnt by
the application of KMeans on $S_i \mathcal{X} = \{ S_i u \;|\; u \in \mathcal{X} \}$.

\subsection{Distance Computation with PQ}
In vector quantization, computing the distance of a vector $u$ to a query $q$
was fairly trivial. All we had to do was to precompute a table that maps
$i \in [C]$ to $\lVert q - \mu_i \rVert_2$, then look up the entry that corresponds to $\zeta(u)$.
The fact that we were able to precompute $C$ distances once per query, then simply look up
the right entry from the table for a vector $u$ helped us save a great deal of computation.
Can we devise a similar algorithm given a PQ code?

The answer is yes. Indeed, that is why PQ has proven to be an efficient
algorithm for distance computation. As in vector quantization, it first computes
$L$ distance tables, but the $i$-th table maps $j \in [C]$ to $\lVert S_i q - \mu_{i,j} \rVert_2^2$
(note the \emph{squared} $L_2$ distance). Using these tables, we can estimate the distance
between $q$ and any vector $u$ as follows:
\begin{align*}
    \lVert q - u \rVert_2^2 &\approx \lVert q - \tilde{u} \rVert_2^2 \\
    &= \lVert q - \bigoplus_{i=1}^L \mu_{i, \zeta_i(u)} \rVert_2^2 \\
    &= \lVert \bigoplus_{i = 1}^L \Big( S_i q - \mu_{i, \zeta_i(u)} \Big) \rVert_2^2 \\
    &= \sum_{i = 1}^L \lVert S_i q - \mu_{i, \zeta_i(u)} \rVert_2^2.
\end{align*}
Observe that, we have already computed the summands and recorded them in the distance tables.
As a result, approximating the distance between $u$ and $q$ amounts to $L$ table look-ups.
The overall amount of computation needed to approximate distances between $q$ and
$m$ vectors in $\mathcal{X}$ is then $\mathcal{O}(LCd_\circ + mL)$.

\bigskip

We must remark on the newly-introduced parameter $d_\circ$.
Even though in the context of vector quantization, the impact of $C$
on the quantization error is not theoretically known, there is nonetheless
a clear interpretation: A larger $C$ leads to better quantization.
In PQ, the impact of $d_\circ$ or, equivalently, $L$ on the quantization error
is not as clear. As noted earlier, we can say something about $d_\circ$ at the extremes,
but what we should expect from a value somewhere between $1$ and $d$ is largely
an empirical question~\citep{sun2023automating}.

\subsection{Optimized Product Quantization}
In PQ, we allocate an equal number of bits ($\log_2 C$) to each of the $n$ orthogonal subspaces.
This makes sense if our vectors have similar energy in every subspace.
But when the dimensions in one subspace are highly correlated, and in another
uncorrolated, our equal-bits-per-subspace allocation policy proves wasteful in the former
and perhaps inadequate in the latter. How can we ensure a more balanced energy across subspaces?

\cite{pq} argue that applying a random rotation $R \in \mathbb{R}^{d \times d}$ ($RR^T = I$)
to the data points prior to quantization is one way to reduce the correlation between dimensions.
The matrix $R$ together with $S_i$'s, as defined above, determines how we decompose the vector
space into its subspaces. By applying a rotation first, we no longer chunk up an input vector
into sub-vectors that comprise of consecutive dimensions.

Later,~\cite{opq} and~\cite{norouzi2013ckmeans} extended this idea and suggested that the matrix
$R$ can be learnt jointly with the codebooks. This can be done through an iterative algorithm
that switches between two steps in each iteration. In the first step, we freeze $R$ and learn
a PQ codebook as before. In the second step, we freeze the codebook and update the matrix $R$
by solving the following optimization problem:
\begin{align*}
    \min_R &\sum_{u \in \mathcal{X}} \lVert Ru - \tilde{u} \rVert_2^2, \\
    \mathit{s.t.} & \quad RR^T = I,
\end{align*}
where $\tilde{u}$ is the approximation of $u$ according to the frozen PQ codebook.
Because $u$ and $\tilde{u}$ are fixed in the above optimization problem,
we can rewrite the objective as follows:
\begin{align*}
    \min_R & \lVert RU - \tilde{U} \rVert_F, \\
    \mathit{s.t.} & \quad RR^T = I,
\end{align*}
where $U$ is a $d$-by-$m$ matrix where each column is a vector in $\mathcal{X}$,
$\tilde{U}$ is a matrix where each column is an approximation of the corresponding
column in $U$, and $\lVert \cdot \rVert_F$ is the Frobenius norm.
This problem has a closed-form solution as shown by~\cite{opq}.

\subsection{Extensions}

Since the study by~\cite{pq}, many variations of the idea have emerged in
the literature. In the original publication, for example,~\cite{pq}
used PQ codes in conjunction with the clustering-based retrieval framework presented earlier
in this chapter. In other words, a collection $\mathcal{X}$ is first clustered
into $C$ clusters (``coarse-quantization''), and each cluster
is subsequently represented using its own PQ codebook.
In this way, when the routing function identifies a cluster to search,
we can compute distances for data points within that cluster using their PQ codes.
Later,~\cite{invertedMultiIndex} extended this two-level quantization further
by introducing the ``inverted multi-index'' structure.

When combining PQ with clustering or coarse-quantization, instead of producing
PQ codebooks for raw vectors within each cluster, one could learn codebooks
for the \emph{residual} vectors instead. That means, if the centroid of the
$i$-th cluster is $\mu_i$, then we may quantize $(u - \mu_i)$ for each
vector $u \in \zeta^{-1}(i)$. This was the idea first introduced by~\cite{pq},
then developed further in subsequent works~\citep{locallyOptimizedPQ,multiscaleQuantization}.

The PQ literature does not end there. In fact, so popular, effective, and efficient is PQ
that it pops up in many different contexts and a variety of applications. Research into improving
its accuracy and speed is still ongoing. For example, there have been many works that speed
up the distance computation with PQ codebooks by leveraging hardware capabilities~\citep{pqWithGPU,Andre_2021,pqCacheLocality}.
Others that extend the algorithm to streaming (online) collections~\citep{onlinePQ},
and yet other studies that investigate other PQ
codebook-learning protocols~\citep{deepPQ,Yu_2018_ECCV,chen2020DifferentiablePQ,Jang_2021_ICCV,Klein_2019_CVPR,lu2023differeitableOPQ}.
This list is certainly not exhaustive and is still growing.

\section{Additive Quantization}

PQ remains the dominant quantization method for top-$k$ retrieval due to its overall simplicity
and the efficiency of its codebook learning protocol. There are, however, numerous generalizations of the
framework~\citep{additiveQuantization,chen2010approximate,Niu2023RVPQ,liu2015improvedRVQ,Ozan2016CompetitiveQuantization,krishnan2021projective}.
Typically, these generalized forms improve the approximation error but require more involved
codebook learning algorithms and vector encoding protocols. In this section, we review
one key algorithm, known as Additive Quantization (AQ)~\citep{additiveQuantization},
that is the backbone of all other methods.

Like PQ, AQ learns $L$ codebooks where each codebook consists of $C$ codewords.
Unlike PQ, however, each codeword is a vector in $\mathbb{R}^d$---rather than $\mathbb{R}^{d_\circ}$.
Furthermore, a vector $u$ is approximated as the \emph{sum}, instead of the concatenation,
of $L$ codewords, one from each codebook: $\tilde{u} = \sum_{i=1}^L \mu_{i, \zeta_i(u)}$,
where $\zeta_{i}: \mathbb{R}^d \rightarrow [C]$ is the quantizer associated with the
$i$-th codebook.

Let us compare AQ with PQ at a high level and understand how AQ is different.
We can still encode a data point using $L \log_2 C$ bits, as in PQ.
However, the codebooks for AQ are $L$-times larger than their PQ counterparts,
simply because each codeword has $d$ dimensions instead of $d_\circ$.
On the other hand, AQ does not decompose the space into orthogonal subspaces
and, as such, makes no assumptions about the independence between subspaces.

AQ is therefore a strictly more general quantization method than PQ.
In fact, the class of additive quantizers contains the class of product quantizers:
By restricting the $i$-th codebook in AQ to the set of codewords that are $0$
everywhere outside of the $i$-th ``chunk,'' we recover PQ.
Empirical comparisons~\citep{additiveQuantization,Matsui2018PQSurvey} confirm
that such a generalization is more effective in practice.

For this formulation to be complete, we have to specify how the codebooks are learnt,
how we encode an arbitrary vector, and how we perform distance computation.
We will cover these topics in reverse order in the following sections.

\subsection{Distance Computation with AQ}
Suppose for the moment that we have learnt AQ codebooks for a collection $\mathcal{X}$
and that we are able to encode an arbitrary vector into an AQ code (i.e.,
a vector of $L$ codeword identifiers). In this section, we examine how
we may compute the distance between a query point $q$ and a data point $u$
using its approximation $\tilde{u}$.

Observe the following fact:
\begin{equation*}
    \lVert q - u \rVert_2^2 = \lVert q \rVert_2^2 - 2 \langle q, u \rangle + \lVert u \rVert_2^2.
\end{equation*}
The first term is a constant that can be computed once per query and, at any rate,
is inconsequential to the top-$k$ retrieval problem. The last term, $\lVert u \rVert_2^2$
can be stored for every vector and looked up during distance computation,
as suggested by~\cite{additiveQuantization}. That means, the encoding of a vector
$u \in \mathcal{X}$ comprises of two components: $\tilde{u}$ and its (possibly scalar-quantized)
squared norm. This brings the total space required to encode $m$ vectors
to $\mathcal{O}(LCd + m (1 + L\log_2 C))$.

The middle term can be approximated by $\langle q, \tilde{u} \rangle$
and can be expressed as follows:
\begin{equation*}
    \langle q, u \rangle \approx \langle q, \tilde{u} \rangle =
    \sum \langle q, \mu_{i, \zeta_i(u)} \rangle.
\end{equation*}
As in PQ, the summands can be computed once for all codewords, and stored in a table.
When approximating the inner product, we can do as before and look up
the appropriate entries from these precomputed tables.
The time complexity of this operation is therefore $\mathcal{O}(LCd + mL)$
for $m$ data points, which is similar to PQ.

\subsection{AQ Encoding and Codebook Learning}
While distance computation with AQ codes is fairly similar to the process
involving PQ codes, the encoding of a data point is substantially different
and relatively complex in AQ. That is because we can no longer simply assign
a vector to its nearest codeword. Instead, we must find an arrangement of
$L$ codewords that together minimize the approximation error
$\lVert u - \tilde{u} \rVert_2$.

Let us expand the expression for the approximation error as follows:
\begin{align*}
    \lVert u - \tilde{u} \rVert_2^2 &= 
        \lVert u - \sum_{i = 1}^L \mu_{i, \zeta_i(u)} \rVert_2^2 \\
        &= \lVert u \rVert_2^2 - 2 \langle u, \sum_{i = 1}^L \mu_{i, \zeta_i(u)} \rangle + \lVert \sum_{i = 1}^L \mu_{i, \zeta_i(u)} \rVert_2^2 \\
        &= \lVert u \rVert_2^2 + \Big( \sum_{i = 1}^L - 2 \langle u, \mu_{i, \zeta_i(u)} \rangle + \lVert \mu_{i, \zeta_i(u)} \rVert_2^2 \Big) + \\
        &\quad \quad \sum_{1 \leq i < j \leq L} 2 \langle \mu_{i, \zeta_i(u)}, \mu_{j, \zeta_j(u)} \rangle.
\end{align*}
Notice that the first term is irrelevant to the objective function, so we may ignore it.
We must therefore find $\zeta_i$'s that minimize the remaining terms.

\cite{additiveQuantization} use a generalized Beam search to solve this optimization problem. 
The algorithm begins by selecting $L$ closest codewords from
$\bigcup_{i=1}^L \{ \mu_{i,1} \ldots \mu_{i,C} \}$ to $u$.
For a chosen codeword $\mu_{k,j}$, we compute the residual $u - \mu_{k,j}$
and find the $L$ closest codewords to it from $\bigcup_{i \neq k} \{ \mu_{i,1} \ldots \mu_{i,C} \}$.
After performing this search for all chosen codewords from the first round,
we end up with a maximum of $L^2$ unique pairs of codewords.
Note that, each pair has codewords from two different codebooks.

Of the $L^2$ pairs, the algorithm picks the top $L$ that minimize the approximation error.
It then repeats this process for a total of $L$ rounds, where in each
round we compute the residuals given $L$ tuples of codewords,
and for each tuple, find $L$ codewords from the remaining codebooks,
and ultimately identify the top $L$ tuples from the $L^2$ tuples.
At the end of the $L$-th round, the tuple with the minimal approximation error
is the encoding for $u$.

\bigskip

Now that we have addressed the vector encoding part, it remains to
describe the codebook learning procedure. Unsurprisingly, learning a codebook
is not so dissimilar to the PQ codebook learning algorithm.
It is an iterative procedure alternating between two steps to optimize
the following objective:
\begin{equation*}
    \min_{\mu_{i,j}} \sum_{u \in \mathcal{X}} \lVert u - \sum_{i=1}^L \mu_{i, \zeta_i(u)} \rVert_2^2.
\end{equation*}

One step of every iteration freezes the codewords and performs assignments $\zeta_i$'s,
which is the encoding problem we have already discussed above. The second step freezes
the assignments and updates the codewords, which itself is a least-squares problem
that can be solved relatively efficiently, considering that it decomposes over each dimension.

\section{Quantization for Inner Product}

The vector quantization literature has largely been focused on the Euclidean distance
and the approximate nearest neighbor search problem. Those ideas typically port over to
the maximum cosine similarity search with little effort, but not to MIPS under general conditions.
To understand why, suppose we wish to find a quantizer such that the 
inner product approximation error is minimized for a query distribution:
\begin{align*}
    \ev_{q} \Big[ \sum_{u \in \mathcal{X}} \big( \langle q, u \rangle - \langle q, \tilde{u} \rangle \big)^2 \Big] &= \sum_{u \in \mathcal{X}} \ev_q \Big[ \langle q, u - \tilde{u} \rangle^2 \Big] \\
    &= \sum_{u \in \mathcal{X}} \ev_q \Big[ (u - \tilde{u})^T qq^T (u - \tilde{u}) \Big] \\
    &= \sum_{u \in \mathcal{X}} (u - \tilde{u})^T \ev_q \big[ qq^T \big] (u - \tilde{u})
    \numberthis \label{equation:quantization:mips-objective},
\end{align*}
where $\tilde{u}$ is an approximation of $u$.
If we assumed that $q$ is isotropic, so that its covariance matrix is the identity matrix
scaled by some constant,
then the objective above reduces to the reconstruction error. In that particular case,
it makes sense for the quantization objective to be based on the reconstruction error,
making the quantization methods we have studied thus far appropriate for MIPS too.
But in the more general case, where the distribution of $q$ is anisotropic, there is
a gap between the true objective and the reconstruction error.

\cite{guo2016Quip} showed that, if we are able to obtain a small sample of queries
to estimate $\ev[qq^T]$, then we can modify the assignment step in
Lloyd's iterative algorithm for KMeans in order to minimize the objective in
Equation~(\ref{equation:quantization:mips-objective}).
That is, instead of assigning points to clusters by their Euclidean distance to 
the (frozen) centroids, we must instead use Mahalanobis distance characterized by
$\ev[qq^T]$. The resulting quantizer is arguably more suitable for inner product
than the plain reconstruction error.

\subsection{Score-aware Quantization}

Later,~\cite{scann} argued that the objective in Equation~(\ref{equation:quantization:mips-objective})
does not adequately capture the nuances of MIPS. Their argument rests on an observation and an intuition.
The observation is that, in Equation~(\ref{equation:quantization:mips-objective}), every single
data point contributes equally to the optimization objective. Intuitively, however,
data points are not equally likely to be the solution to MIPS.
The error from data points that are more likely to be the maximizers of inner product
with queries should therefore be weighted more heavily than others.

On the basis of that argument,~\cite{scann} introduce the following objective
for inner product quantization:
\begin{equation}
    \label{equation:quantization:scann-objective}
    \sum_{u \in \mathcal{X}} \underbrace{\ev_q \Big[  \omega(\langle q, u \rangle) \;
    \langle q, u - \tilde{u} \rangle^2 \Big]}_{\ell(u, \tilde{u}, \omega)}.
\end{equation}
In the above, $\omega: \mathbb{R} \rightarrow \mathbb{R}^+$ is an arbitrary weight function
that determines the importance of each data point to the optimization objective.
Ideally, then, $\omega$ should be monotonically non-decreasing in its argument.
One such weight function is $\omega(s) = \mathbbm{1}_{s \geq \theta}$ for some threshold $\theta$,
implying that only data points whose expected inner product is at least $\theta$ contribute to the
objective, while the rest are simply ignored. That is the weight function that~\cite{scann}
choose in their work.

\medskip

Something interesting emerges from Equation~(\ref{equation:quantization:scann-objective}) with the choice
of $\omega(s) = \mathbbm{1}_{s \geq \theta}$: It is more important for $\tilde{u}$ to preserve the
\emph{norm} of $u$ than it is to preserve its \emph{angle}. We will show why that is shortly, but
consider for the moment the reason this behavior is important for MIPS.
Suppose there is a data point whose norm is much larger than the rest of the data points.
Intuitively, such a data point has a good chance of maximizing inner product
with a query even if its angle with the query is relatively large.
In other words, being a candidate solution to MIPS is less sensitive to angles and more sensitive
to norms. Of course, as norms become more and more concentrated, angles take on a bigger role in determining
the solution to MIPS. So, intuitively, an objective that penalizes the distortion of norms more than
angles is more suitable for MIPS.

\subsubsection{Parallel and Orthogonal Residuals}

\begin{figure}[t]
    \centering
    \includegraphics[width=0.4\linewidth]{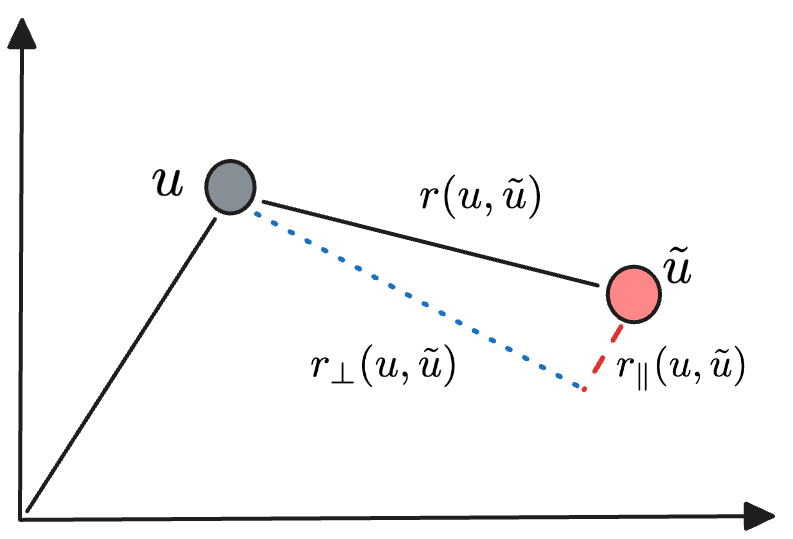}
    \caption{Decomposition of the residual error $r(u, \tilde{u}) = u - \tilde{u}$ for $u \in \mathbb{R}^2$
    to one component that is \emph{parallel} to the data point, $r_\parallel(u, \tilde{u})$,
    and another that is \emph{orthogonal} to it, $r_\perp(u, \tilde{u})$.}
    \label{figure:quantization:residual-error-decomposition}
\end{figure}

Let us present this phenomenon more formally and show why the statement above is true.
Define the residual error as $r(u, \tilde{u}) = u - \tilde{u}$.
The residual error can be decomposed into two components: one that is parallel to the
data point, $r_{\parallel}(u, \tilde{u})$, and another that is orthogonal to it, $r_\perp(u, \tilde{u})$,
as depicted in Figure~\ref{figure:quantization:residual-error-decomposition}.
More concretely:
\begin{equation*}
    r_\parallel(u, \tilde{u}) = \frac{ \langle u - \tilde{u}, u \rangle}{\lVert u \rVert^2}\; u,
\end{equation*}
and,
\begin{equation*}
    r_\perp(u, \tilde{u}) = r(u, \tilde{u}) - r_\parallel(u, \tilde{u}).
\end{equation*}

\cite{scann} show first that, regardless of the choice of $\omega$, the loss
defined by $\ell(u, \tilde{u}, \omega)$ in Equation~(\ref{equation:quantization:scann-objective})
can be decomposed as stated in the following theorem.

\begin{figure}[t]
    \centering
    \includegraphics[width=0.4\linewidth]{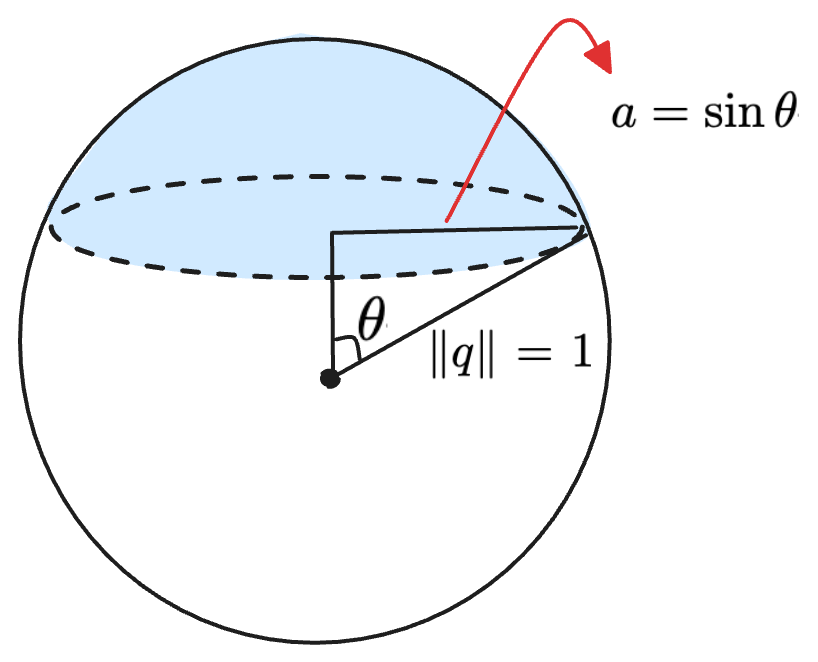}
    \caption{The probability that the angle between a fixed data point $u$ with a
    unit-normed query $q$ that is drawn from a spherically-symmetric distribution is at most $\theta$,
    is equal to the surface area of the spherical cap with base radius $a = \sin \theta$. This fact
    is used in the proof of Theorem~\ref{theorem:quantization:scann:decomposition}.}
    \label{figure:quantization:spherical-cap-area}
\end{figure}

\begin{theorem}
    \label{theorem:quantization:scann:decomposition}
    Given a data point $u$, its approximation $\tilde{u}$, and any weight function $\omega$,
    the objective of Equation~(\ref{equation:quantization:scann-objective}) can be decomposed as follows
    for a spherically-symmetric query distribution:
    \begin{equation*}
        \ell(u, \tilde{u}, \omega) \propto h_\parallel(\omega, \lVert u \rVert) \; \lVert r_\parallel(u, \tilde{u}) \rVert^2 +
        h_\perp(\omega, \lVert u \rVert) \; \lVert r_\perp(u, \tilde{u}) \rVert^2,
    \end{equation*}
    where,
    \begin{equation*}
        h_\parallel(\omega, t) = \int_0^\pi \omega(t \cos \theta) \Big( \sin^{d-2}\theta - \sin^d \theta \Big) \; d\theta,
    \end{equation*}
    and,
    \begin{equation*}
        h_\perp(\omega, t) = \frac{1}{d-1} \int_0^\pi \omega(t \cos \theta) \sin^d \theta \; d\theta.
    \end{equation*}
\end{theorem}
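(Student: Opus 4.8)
The plan is to work in a coordinate system adapted to $u$ and exploit the symmetry of the query distribution to annihilate every cross term. By spherical symmetry we may rotate so that $u = \lVert u \rVert e_1$; the query $q$ we take to be uniform on $\mathbb{S}^{d-1}$ (assuming $q$ unit-normed is harmless for MIPS, as noted earlier, and is what the caption of Figure~\ref{figure:quantization:spherical-cap-area} presumes). Writing $q = (q_1, \ldots, q_d)$, we have $\langle q, u \rangle = \lVert u \rVert q_1 = \lVert u \rVert \cos\theta$, where $\theta$ is the angle between $q$ and $u$. Decompose the residual $r = u - \tilde u = r_\parallel + r_\perp$ as in Figure~\ref{figure:quantization:residual-error-decomposition}; in our frame $r_\parallel = \pm \lVert r_\parallel \rVert e_1$ and $r_\perp$ lies in the span of $e_2, \ldots, e_d$. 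Expanding the square in $\ell(u, \tilde u, \omega) = \ev_q[\omega(\langle q, u\rangle)\langle q, r\rangle^2]$ produces three pieces, $\langle q, r_\parallel\rangle^2$, $2\langle q, r_\parallel\rangle\langle q, r_\perp\rangle$, and $\langle q, r_\perp\rangle^2$, each integrated against $\omega(\lVert u\rVert q_1)$.

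The next step is to discard the cross term and diagonalize the perpendicular term. Since $\omega(\lVert u \rVert q_1)$ and $q_1$ are invariant under the orthogonal map flipping the sign of a single coordinate $q_i$ with $i \geq 2$, while $\langle q, r_\perp\rangle$ is odd in each such $q_i$, the cross term vanishes; the same sign-flip argument kills every off-diagonal expectation $\ev[\omega(\lVert u \rVert q_1) q_i q_j]$ for $i \neq j \geq 2$. By the permutation symmetry of $q_2, \ldots, q_d$, the diagonal expectations $\ev[\omega(\lVert u \rVert q_1) q_i^2]$ are all equal, and since they sum to $\ev[\omega(\lVert u \rVert q_1)(1 - q_1^2)]$, each equals $\tfrac{1}{d-1}\ev[\omega(\lVert u\rVert\cos\theta)\sin^2\theta]$. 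Collecting terms yields $\ell(u, \tilde u, \omega) = \ev[\omega(\lVert u\rVert\cos\theta)\cos^2\theta]\,\lVert r_\parallel\rVert^2 + \tfrac{1}{d-1}\ev[\omega(\lVert u\rVert\cos\theta)\sin^2\theta]\,\lVert r_\perp\rVert^2$.

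It then remains to rewrite these two expectations as the claimed integrals. For a uniform point on $\mathbb{S}^{d-1}$ the polar angle $\theta \in [0,\pi]$ to $e_1$ has density proportional to $\sin^{d-2}\theta$; this is exactly the spherical-cap computation in Figure~\ref{figure:quantization:spherical-cap-area}, since $\probability[\theta \leq \theta_0]$ is the normalized area of a cap of base radius $\sin\theta_0$. Hence $\ev[\omega(\lVert u\rVert\cos\theta)\cos^2\theta] \propto \int_0^\pi \omega(\lVert u\rVert\cos\theta)\cos^2\theta\,\sin^{d-2}\theta\,d\theta$, and using $\cos^2\theta\,\sin^{d-2}\theta = \sin^{d-2}\theta - \sin^d\theta$ this is $h_\parallel(\omega, \lVert u\rVert)$; likewise $\tfrac{1}{d-1}\ev[\omega(\lVert u\rVert\cos\theta)\sin^2\theta] \propto h_\perp(\omega, \lVert u\rVert)$. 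The proportionality constant (the surface-area normalization of $\mathbb{S}^{d-1}$) is common to both terms, which is why the theorem is stated up to a shared factor.

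The main obstacle I anticipate is the bookkeeping with the spherical measure: getting the polar-angle density $\sin^{d-2}\theta$ right (exponent $d-2$, not $d-1$) and correctly tracking that the orthogonal residual's contribution is spread evenly over the $d-1$ latitudinal directions, which is what produces the $1/(d-1)$ prefactor in $h_\perp$. Everything else — the vanishing of the cross term and of the off-diagonal perpendicular contributions — reduces to elementary parity and permutation symmetry of the uniform measure on the sphere, and should be routine.
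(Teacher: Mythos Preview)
Your proposal is correct and follows essentially the same route as the paper: both reduce to the uniform sphere, decompose $\langle q, r\rangle^2$ along and orthogonal to $u$, kill the cross and off-diagonal terms by spherical symmetry, and then use the polar-angle density $\sin^{d-2}\theta$ to express the two surviving expectations as $h_\parallel$ and $h_\perp$. The paper packages the symmetry computation as a separate conditional-expectation lemma ($\ev_q[\langle q, r\rangle^2 \mid \langle q,u\rangle = t]$) and integrates over $\theta$ afterward, whereas you compute the unconditional expectations directly via explicit sign-flip and permutation arguments; the content is the same.
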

\begin{proof}
Without loss of generality, we can assume that queries are unit vectors (i.e., $\lVert q \rVert = 1$).
Let us write $\ell(u, \tilde{u}, \omega)$ as follows:
\begin{align*}
    \ell(u, \tilde{u}, \omega) &= \ev_q \Big[ \omega(\langle q, u \rangle) \; \langle q, u - \tilde{u} \rangle^2 \Big] \\
    &= \int_0^\pi \omega(\lVert u \rVert \cos \theta) \ev_q \Big[ \langle q, u - \tilde{u} \rangle^2 \big| \langle q, u \rangle = \lVert u \rVert \cos \theta \Big] \; d \probability\Big[ \theta_{q,u} \leq \theta \Big],
\end{align*}
where $\theta_{q, u}$ denotes the angle between $q$ and $u$.

Observe that
$\probability\Big[ \theta_{q,u} \leq \theta \Big]$ is the surface area of a spherical cap with base radius
$a = \lVert q \rVert \sin \theta = \sin \theta$---see Figure~\ref{figure:quantization:spherical-cap-area}.
That quantity is equal to:
\begin{equation*}
    \lVert q \rVert^{d - 1} \frac{\pi^{d/2}}{\Gamma(d/2)} I(a^2; \frac{d-1}{2}, \frac{1}{2}),
\end{equation*}
where $\Gamma$ is the Gamma function and $I(z; \cdot, \cdot)$ is the incomplete Beta function.
We may therefore write:
\begin{align*}
    \frac{d \probability\Big[ \theta_{q,u} \leq \theta \Big]}{d \theta} &\propto \Big[ (1 - a^2)^{\frac{1}{2} - 1} (a^2)^{\frac{d - 1}{2} - 1} \Big] \frac{d a}{d \theta} \\
    &= \frac{\sin^{d-3} \theta}{\cos \theta} (2 \sin \theta \cos \theta) \\
    &\propto \sin^{d-2} \theta,
\end{align*}
where in the first step we used the fact that $d I(z; s, t) = (1 - z)^{t - 1} z^{s-1} dz$.

Putting everything together, we can rewrite the loss as follows:
\begin{equation*}
    \ell(u, \tilde{u}, \omega) \propto \int_0^\pi \omega(\lVert u \rVert \cos \theta) \ev_q \Big[ \langle q, u - \tilde{u} \rangle^2 \big| \langle q, u \rangle = \lVert u \rVert \cos \theta \Big] \sin^{d-2}\theta \; d\theta.
\end{equation*}
We can complete the proof by applying the following lemma to the expectation over queries in the integral
above.

\begin{lemma}
\begin{equation*}
    \ev_q[\langle q, u - \tilde{u} \rangle^2 \big| \langle q, u \rangle = t] = 
    \frac{t^2}{\lVert u \rVert^2} \lVert r_\parallel(u, \tilde{u}) \rVert^2 +
    \frac{1 - t^2/\lVert u \rVert^2}{d - 1} \rVert r_\perp(u, \tilde{u}) \rVert^2.
\end{equation*}
\end{lemma}
\begin{proof}
    We use the shorthand $r_\parallel = r_\parallel(u, \tilde{u})$ and similarly
    $r_\perp = r_\perp(u, \tilde{u})$.
    Decompose $q = q_\parallel + q_\perp$ where $q_\parallel = \langle q, u \rangle \frac{u}{\lVert u \rVert^2}$ and $q_\perp = q - q_\parallel$. We can now write:
    \begin{align*}
        \ev_q[\langle q, u - \tilde{u} \rangle^2 \big| \langle q, u \rangle = t] =
        \ev_q[\langle q_\parallel, r_\parallel \rangle^2] \big| \langle q, u \rangle = t] +
        \ev_q[\langle q_\perp, r_\perp \rangle^2] \big| \langle q, u \rangle = t].
    \end{align*}
    All other terms are equal to $0$ either due to orthogonality or components or because of
    spherical symmetry. The first term is simply equal to
    $\lVert r_\parallel \rVert^2 \frac{t^2}{\lVert u \rVert^2}$.
    By spherical symmetry, it is easy to show that the second term reduces to
    $\frac{1 - t^2/\lVert u \rVert^2}{d - 1} \lVert r_\perp \rVert^2$.
    That completes the proof.
\end{proof}

Applying the lemma above to the integral, we obtain:
\begin{equation*}
    \ell(u, \tilde{u}, \omega) \propto \int_0^\pi \omega(\lVert u \rVert \cos \theta) 
    \Bigg(
    \cos^2 \theta \lVert r_\parallel(u, \tilde{u}) \rVert^2 + \frac{\sin^2 \theta}{d - 1} \lVert r_\perp(u, \tilde{u}) \rVert^2
    \Bigg)
    \sin^{d-2}\theta \; d\theta,
\end{equation*}
as desired.
\end{proof}

When $\omega(s) = \mathbbm{1}_{s \geq \theta}$ for some $\theta$,~\cite{scann} show that
$h_\parallel$ outweighs $h_\perp$, as the following theorem states. This implies that
such an $\omega$ puts more emphasis on preserving the parallel residual error as discussed earlier.

\begin{theorem}
    For $\omega(s) = \mathbbm{1}_{s \geq \theta}$ with $\theta \geq 0$, $h_\parallel(\omega, t) \geq h_\perp(\omega, t)$,
    with equality if and only if $\omega$ is constant over the interval $[-t, t]$.
\end{theorem}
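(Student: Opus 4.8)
The plan is to collapse the inequality $h_\parallel \ge h_\perp$ into the sign of a single scalar integral, which we then control with the Wallis reduction formula plus a one‑bump monotonicity argument. Throughout I write $c$ for the threshold (denoted $\theta$ in the statement), freeing up $\theta$ for the integration variable, so $\omega(s) = \mathbbm{1}_{s \ge c}$ with $c \ge 0$.

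First I would compute, directly from the expressions for $h_\parallel$ and $h_\perp$ in Theorem~\ref{theorem:quantization:scann:decomposition},
\begin{equation*}
    h_\parallel(\omega, t) - h_\perp(\omega, t) = \frac{1}{d-1}\int_0^\pi \omega(t\cos\theta)\,\sin^{d-2}\theta\,\big((d-1) - d\sin^2\theta\big)\,d\theta .
\end{equation*}
Assuming $t>0$ (if $t=0$ then $h_\parallel = h_\perp$ trivially, using the identity below), the weight $\omega(t\cos\theta) = \mathbbm{1}_{\cos\theta \ge c/t}$ is the indicator of an interval $[0,\theta^\ast]$ with $\theta^\ast = \arccos(\min(c/t,1)) \in [0,\pi/2]$ (and $\theta^\ast = 0$, a null contribution, once $c \ge t$). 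Hence $h_\parallel - h_\perp = \tfrac{1}{d-1}F(\theta^\ast)$, where $F(s) \triangleq \int_0^s \phi(\theta)\,d\theta$ and $\phi(\theta) \triangleq \sin^{d-2}\theta\,\big((d-1) - d\sin^2\theta\big)$. Everything now reduces to showing $F \ge 0$ on $[0,\pi/2]$ and locating its zeros.

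Next I would record three facts about $\phi$. (i) $\phi$ is symmetric about $\pi/2$, and $\int_0^\pi \phi = 0$: integrating $\sin^n\theta$ by parts over $[0,\pi]$ gives the Wallis identity $\int_0^\pi \sin^n\theta\,d\theta = \tfrac{n-1}{n}\int_0^\pi \sin^{n-2}\theta\,d\theta$, so with $n=d$ we get $(d-1)\int_0^\pi \sin^{d-2}\theta\,d\theta - d\int_0^\pi \sin^d\theta\,d\theta = 0$; symmetry then gives $\int_0^{\pi/2}\phi = 0$ as well. (ii) On $[0,\pi/2]$ the factor $\sin^{d-2}\theta$ is nonnegative and the bracket changes sign exactly at $\theta_1$ with $\sin\theta_1 = \sqrt{(d-1)/d}$ (and $\theta_1 < \pi/2$), so $\phi > 0$ on $(0,\theta_1)$ and $\phi < 0$ on $(\theta_1,\pi/2)$. (iii) Therefore $F$ strictly increases on $[0,\theta_1]$ and strictly decreases on $[\theta_1,\pi/2]$; since $F(0) = 0$ and $F(\pi/2) = \int_0^{\pi/2}\phi = 0$, it follows that $F(s) > 0$ for every $s \in (0,\pi/2)$ and $F$ vanishes only at $s \in \{0,\pi/2\}$. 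Plugging $s = \theta^\ast$ yields $h_\parallel(\omega,t) \ge h_\perp(\omega,t)$.

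For the equality clause, $h_\parallel = h_\perp$ iff $\theta^\ast \in \{0,\pi/2\}$, i.e.\ iff $c \ge t$ or $c = 0$. The ``if'' direction of the stated characterization is immediate: if $\omega$ is constant on $[-t,t]$ then $\omega(t\cos\theta)$ equals that constant for all $\theta \in [0,\pi]$, so $h_\parallel - h_\perp = \tfrac{\text{const}}{d-1}\int_0^\pi \phi = 0$. For the ``only if'' direction I would translate the two extremal regimes: when $c \ge t$, $\omega \equiv 0$ on $[-t,t)$, and when $c = 0$, $\omega \equiv 1$ on $[0,t]$ while $\omega$ is already forced to equal $0$ on $[-t,0)$ for every admissible $c \ge 0$; in both cases $\omega$ is constant on the portion of $[-t,t]$ where it is not identically pinned to $0$, which is what the statement means. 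I expect the only real obstacle to be this last step — the careful bookkeeping that matches the three regimes $c = 0$, $0 < c < t$, $c \ge t$ to the ``$\omega$ constant'' condition and phrases the equality case precisely; the analytic work (the Wallis identity and the single‑bump argument for $F$) is routine.
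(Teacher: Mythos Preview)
Your argument for the inequality is correct and takes a genuinely different route from the paper. The paper works with the ratio $h_\parallel/h_\perp = (d-1)(I_{d-2}/I_d - 1)$, integrates $I_d = \int_0^\pi \omega(t\cos\theta)\sin^d\theta\,d\theta$ by parts, and obtains a residual term $t\int_0^{\pi/2}\omega'(t\cos\theta)\cos\theta\sin^d\theta\,d\theta$ whose nonnegativity yields the bound. You instead compute the difference $h_\parallel - h_\perp$ directly, exploit the indicator structure to collapse it to $\tfrac{1}{d-1}F(\theta^\ast)$, and use the Wallis recursion plus a single-sign-change argument to show $F\ge 0$ on $[0,\pi/2]$ with zeros only at the endpoints. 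Your route is more elementary for this specific $\omega$ (no distributional derivatives, no need to assume $h_\perp>0$), while the paper's integration-by-parts is phrased as if for general nondecreasing $\omega$ and yields the ratio directly.

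On the equality case you are right to be wary, and your analysis in fact exposes a small inaccuracy in the stated characterization. You show equality holds exactly when $\theta^\ast\in\{0,\pi/2\}$, i.e.\ when $c\ge t$ or $c=0$. In the $c=0$ case, $\omega(s)=\mathbbm{1}_{s\ge 0}$ is \emph{not} constant on $[-t,t]$ for $t>0$, so the ``if and only if'' as literally written fails. The paper's own proof has the same blind spot: at $c=0$ the mass of $\omega'$ sits at $s=0$, which corresponds to $\theta=\pi/2$ where the factor $\cos\theta$ vanishes, so the residual integral is zero even though $\omega$ jumps inside $[-t,t]$. The clean fix is to state the equality condition as ``$\omega$ is constant on $(0,t)$'' (equivalently, $c=0$ or $c\ge t$); your charitable reinterpretation is close to this, but you should simply say so rather than bend the statement to fit.
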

\begin{proof}
    We can safely assume that $h_\parallel$ and $h_\perp$ are positive; they are $0$ if and only if
    $\omega(s) = 0$ over $[-t, t]$. We can thus express the ratio between them as follows:
    \begin{equation*}
        \frac{h_\parallel(\omega, t)}{h_\perp(\omega, t)} = (d - 1)
            \Bigg( \frac{\int_0^\pi \omega(t\cos \theta) \sin^{d-2}\theta \;d\theta }{\int_0^\pi \omega(t \cos \theta) \sin^d\theta \; d\theta} - 1\Bigg) = (d - 1) \Big( \frac{I_{d-2}}{I_d} -1 \Big),
    \end{equation*}
    where we denoted by $I_d = \int_0^\pi \omega(t \cos \theta) \sin^d \theta \; d\theta$.
    Using integration by parts:
    \begin{align*}
        I_d &= - \omega(t \cos \theta) \cos \theta \sin^{d-1}\theta \Big|_0^\pi + \\
        &\int_0^\pi \cos \theta \Big[ \omega(t \cos \theta)(d-1)\sin^{d-2}\theta \cos \theta -
        \omega^\prime(t \cos \theta) t \sin^d \theta \Big] \; d\theta \\
        &= (d - 1) \int_0^\pi \omega(t \cos \theta) \cos^2 \theta \sin^{d-2}\theta \; d\theta -
        t \int_0^\pi \omega^\prime(t \cos \theta) \cos \theta \sin^d \theta \; d\theta \\
        &= (d - 1) I_{d-2} - (d - 1) I_d -
        t \int_0^\pi \omega^\prime(t \cos \theta) \cos \theta \sin^d \theta \; d\theta.
    \end{align*}
    Because $\omega(s) = 0$ for $s < 0$, the last term reduces to an integral over $[0, \pi/2]$.
    The resulting integral is non-negative because sine and cosine are both non-negative over that interval.
    It is $0$ if and only if $\omega^\prime = 0$, or equivalently when $\omega$ is constant.
    We have therefore shown that:
    \begin{equation*}
        I_d \leq (d - 1) I_{d-2} - (d-1) I_d \implies (d - 1) \Big( \frac{I_{d-2}}{I_d} - 1 \Big) \geq 1
        \implies \frac{h_\parallel(\omega, t)}{h_\perp(\omega, t)} \geq 1,
    \end{equation*}
    with equality when $\omega$ is constant, as desired.
\end{proof}

\subsubsection{Learning a Codebook}

The results above formalize the intuition that the parallel residual plays a more important
role in quantization for MIPS. If we were to plug the formalism above into the objective
in Equation~(\ref{equation:quantization:scann-objective}) and optimize it to learn a codebook, we would
need to compute $h_\parallel$ and $h_\perp$ using Theorem~\ref{theorem:quantization:scann:decomposition}.
That would prove cumbersome indeed.

Instead,~\cite{scann} show that $\omega(s) = \mathbbm{1}_{s \geq \theta}$ results in a more
computationally-efficient optimization problem. 
Letting $\eta(t) = \frac{h_\parallel(\omega, t)}{h_\perp(\omega, t)}$,
they show that $\eta/(d-1)$ concentrates around $\frac{(\theta/t)^2}{1 - (\theta/t)^2}$
as $d$ becomes larger. So in high dimensions, one can rewrite the objective function
of Equation~(\ref{equation:quantization:scann-objective}) as follows:
\begin{equation*}
    \sum_{u \in \mathcal{X}} \frac{(\theta/\lVert u \rVert)^2}{1 - (\theta/\lVert u \rVert)^2}
    \lVert r_\parallel(u, \tilde{u}) \rVert^2 +
    \lVert r_\perp(u, \tilde{u}) \rVert^2.
\end{equation*}

\cite{scann} present an optimization procedure that is based on Lloyd's iterative algorithm for
KMeans, and use it to learn a codebook by minimizing the objective above.
Empirically, such a codebook outperforms the one that
is learnt by optimizing the reconstruction error.

\subsubsection{Extensions}
The score-aware quantization loss has, since its publication, been extended
in two different ways. \cite{Zhang_Liu_Lian_Liu_Wu_Chen_2022} adapted the objective
function to an Additive Quantization form. \cite{queryAwareQuantization}
updated the weight function $\omega(\cdot)$ so that the importance of a data point
can be estimated based on a given set of training queries. Both extensions lead
to substantial improvements on benchmark datasets.

\bibliographystyle{abbrvnat}
\bibliography{biblio}

\chapter{Sketching}
\label{chapter:sketching}

\abstract{
Sketching is a probabilistic tool to summarize high-dimensional
vectors into low-dimensional vectors, called \emph{sketches},
while approximately preserving properties of interest.
For example, we may sketch vectors in the Euclidean space
such that their $L_2$ norm is approximately preserved;
or sketch points in an inner product space such that the
inner product between any two points is maintained with high probability.
This chapter reviews a few \emph{data-oblivious} algorithms, cherry-picked
from the vast literature on sketching, that are tailored to sparse vectors
in an inner product space.
}

\section{Intuition}

We learnt about quantization as a form of vector compression in Chapter~\ref{chapter:quantization}.
There, vectors are decomposed into $L$ subspaces, with each subspace
mapped to $C$ geometrically-cohesive buckets.
By coding each subspace into only $C$ values, we can encode an entire vector in $L \log C$ bits,
often dramatically reducing the size of a vector collection, though at the cost of
losing information in the process.

The challenge, we also learnt, is that not enough can be said about the effects of
$L$, $C$, and other parameters involved in the process of quantization,
on the reconstruction error. We can certainly intuit the asymptotic behavior of quantization,
but that is neither interesting nor insightful. That leaves us no option
other than settling on a configuration empirically.

Additionally, learning codebooks can become involved and cumbersome.
It involves tuning parameters and running clustering algorithms,
whose expected behavior is itself ill-understood when handling improper distance functions.
The resulting codebooks too may become obsolete in the event of a distributional shift.

This chapter reviews a different class of compression techniques known as
\emph{data-oblivious sketching}. Let us break down this phrase and understand each part better.

The data-oblivious qualifier is rather self-explanatory:
We make no assumptions about the input data, and in fact, do not
even take advantage of the statistical properties of the data.
We are, in other words, completely agnostic and oblivious to our input.

\begin{svgraybox}
While oblivion may put us at a disadvantage and lead to a larger magnitude of error,
it creates two opportunities. First, we can often easily quantify the
average qualities of the resulting compressed vectors.
Second, by design, the compressed vectors are robust under any data drift.
Once a vector collection has been compressed, in other words, we can safely
assume that any guarantees we were promised will continue to hold.    
\end{svgraybox}

Sketching, to continue our unpacking of the concept, is a probabilistic tool
to reduce the dimensionality of a vector space while preserving
certain properties of interest \emph{with high probability}.
In its simplest form, sketching is a function $\phi: \mathbb{R}^d \rightarrow \mathbb{R}^{d_\circ}$,
where $d_\circ < d$. If the ``property of interest'' is the Euclidean distance
between any pair of points in a collection $\mathcal{X}$, for instance,
then $\phi(\cdot)$ must satisfy the following for random points $U$ and $V$:
\begin{equation*}
    \probability\Bigg[ \Big\lvert \lVert \phi(U) - \phi(V) \rVert_2 - 
    \lVert U - V \rVert_2 \Big\rvert < \epsilon \Bigg] > 1 - \delta,
\end{equation*}
for $\delta, \epsilon \in (0, 1)$.

The output of $\phi(u)$, which we call the \emph{sketch} of vector $u$,
is a good substitute for $u$ itself. If all we care about, as we do in top-$k$
retrieval, is the distance between pairs of points, then we retain the ability
to deduce that information with high probability just from the sketches of a
collection of vectors. Considering that $d_\circ$ is smaller than $d$,
we not only compress the collection through sketching, but, as with quantization,
we are able to perform distance computations directly on the compressed vectors.

\bigskip

The literature on sketching offers numerous algorithms that are designed
to approximate a wide array of norms, distances, and other properties of data.
We refer the reader to the excellent monograph by~\cite{woodruff2014sketching}
for a tour of this rich area of research. But to give the reader a better understanding
of the connection between sketching and top-$k$ retrieval, we use
the remainder of this chapter to delve into three algorithms.
To make things more interesting, we specifically review these algorithms in the
context of inner product for sparse vectors.

The first is the quintessential linear algorithm due to~\cite{JLLemma1984ExtensionsOL}.
It is linear in the sense that $\phi$ is simply a linear transformation,
so that $\phi(u) = \Phi u$ for some (random) matrix $\Phi \in \mathbb{R}^{d_\circ \times d}$.
We will learn how to construct the required matrix and discuss what guarantees it has to offer.

We then move to two sketching algorithms~\citep{bruch2023sinnamon} and~\citep{daliri2023sampling} whose
output space is \emph{not} Euclidean. Instead, the sketch of a vector is a data structure,
equipped with a distance function that approximates the inner product
between vectors in the original space.

\section{Linear Sketching with the JL Transform}

Let us begin by repeating the well-known result due to~\cite{JLLemma1984ExtensionsOL},
which we refer to as the JL Lemma:

\begin{lemma}
For $\epsilon \in (0, 1)$ and any set $\mathcal{X}$ of $m$ points
in $\mathbb{R}^d$, and an integer $d_\circ = \Omega(\epsilon^{-2} \ln m)$,
there exists a Lipschitz mapping $\phi: \mathbb{R}^d \rightarrow \mathbb{R}^{d_\circ}$ such that
\begin{equation*}
    (1 - \epsilon) \lVert u - v \rVert_2^2 \leq \lVert \phi(u) - \phi(v) \rVert_2^2
    \leq (1 + \epsilon) \lVert u - v \rVert_2^2,
\end{equation*}
for all $u, v \in \mathcal{X}$.
\end{lemma}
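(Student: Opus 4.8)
The plan is to prove this by the probabilistic method: exhibit a \emph{random} linear map that works with positive probability, hence deduce that at least one good map exists. Concretely, I would set $\phi(x) = \frac{1}{\sqrt{d_\circ}} \Phi x$, where $\Phi \in \mathbb{R}^{d_\circ \times d}$ has entries drawn independently from the standard Gaussian $\mathcal{N}(0, 1)$ (any subgaussian distribution with unit variance, such as a Rademacher, would serve equally well at the cost of a slightly longer argument). Since $\phi$ is linear, $\phi(u) - \phi(v) = \phi(u - v)$, so it suffices to control how much $\phi$ distorts the norm of each of the $\binom{m}{2}$ difference vectors $u - v$.

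The crux is the following \emph{distributional} statement, which I would isolate as a lemma: for any fixed $x \in \mathbb{R}^d$ and any $\epsilon \in (0, 1)$,
\begin{equation*}
    \probability\Big[ \big\lvert \lVert \phi(x) \rVert_2^2 - \lVert x \rVert_2^2 \big\rvert > \epsilon \lVert x \rVert_2^2 \Big] \leq 2 e^{-c \epsilon^2 d_\circ},
\end{equation*}
for an absolute constant $c > 0$ (one can take $c = 1/8$). To prove it, normalize so that $\lVert x \rVert_2 = 1$; then $d_\circ \lVert \phi(x) \rVert_2^2 = \sum_{i=1}^{d_\circ} \langle \Phi_i, x \rangle^2$ is a sum of $d_\circ$ independent squared standard Gaussians, i.e.\ a $\chi^2_{d_\circ}$ random variable. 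I would then run the standard Chernoff/Markov argument on the moment generating function of the chi-squared distribution, namely $\ev[e^{t(\chi^2_{d_\circ} - d_\circ)}] = \big(e^{-t}/\sqrt{1-2t}\big)^{d_\circ}$, optimizing over $t$ separately for the upper and lower tails, and simplify the two resulting exponents using the elementary inequalities $\ln(1+u) \leq u - u^2/2 + u^3/3$ and $\ln(1-u) \leq -u - u^2/2$ to extract the clean $e^{-c\epsilon^2 d_\circ}$ form.

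With the lemma in hand, the rest is a union bound. Applying it to each difference vector $u - v$ for $u, v \in \mathcal{X}$ and summing the failure probabilities, the probability that \emph{some} pair is distorted by more than a $(1 \pm \epsilon)$ factor is at most $\binom{m}{2} \cdot 2 e^{-c \epsilon^2 d_\circ} < m^2 e^{-c \epsilon^2 d_\circ}$. Choosing $d_\circ \geq C \epsilon^{-2} \ln m$ for a sufficiently large constant $C$ (so that $d_\circ = \Omega(\epsilon^{-2} \ln m)$) makes this quantity strictly less than $1$; hence there is a realization of $\Phi$ for which all $\binom{m}{2}$ inequalities hold simultaneously, and that fixed realization is the desired map $\phi$. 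Finally, $\phi$ is Lipschitz because it is linear with a bounded operator norm, so the regularity requirement in the statement is met.

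I expect the main obstacle to be the concentration lemma---specifically, handling the two one-sided chi-squared tail bounds and the logarithm inequalities carefully enough to reduce both exponents to a common $c\epsilon^2 d_\circ$ while preserving the correct $\epsilon^{-2}$ dependence in the final bound on $d_\circ$. The union bound and the choice of $d_\circ$ are then routine. An alternative, if one prefers to sidestep the explicit chi-squared computation, is to invoke a general subgaussian concentration inequality (Hanson--Wright) for the quadratic form $x^\top \Phi^\top \Phi x$, which yields the same estimate more quickly at the cost of citing a heavier tool.
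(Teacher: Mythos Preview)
The paper does not actually prove this lemma. It is stated as the classical Johnson--Lindenstrauss result, attributed to the original 1984 paper, and the text immediately following it simply remarks that ``it is trivial to show'' that a Gaussian random matrix yields a JL transform with the stated dimension, then refers the reader to Woodruff's monograph for details. The paper's own technical work in that section (the theorem on the variance of $\langle \phi(u), \phi(v) \rangle$ for the Rademacher construction) is a separate statement, not a proof of the lemma.

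Your proposal is correct and is precisely the standard argument the paper alludes to: Gaussian random projection, reduction to $\chi^2$ concentration via the Chernoff method on the MGF, then a union bound over the $\binom{m}{2}$ difference vectors. The constant $c=1/8$ is fine, and your remark that a Rademacher matrix works equally well (with a slightly longer subgaussian argument) matches the paper's choice of Rademacher for its subsequent analysis. There is nothing missing; you would be supplying a proof the paper omits.
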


This result has been studied extensively and further developed since its introduction.
Using simple proofs, for example, it can be shown that the mapping $\phi$ may be
a linear transformation by a $d_\circ \times d$ random matrix $\Phi$ drawn
from a particular class of distributions. Such a matrix $\Phi$ is said to form a JL transform.

\begin{definition}
A random matrix $\Phi \in \mathbb{R}^{d_\circ \times d}$ forms a Johnson-Lindenstrauss
transform with parameters $(\epsilon, \delta, m)$, if with probability at least
$1 - \delta$, for any $m$-element subset $\mathcal{X} \subset \mathbb{R}^d$,
for all $u, v \in \mathcal{X}$ it holds that
$|\langle \Phi u, \Phi v \rangle - \langle u, v\rangle | \leq \epsilon \lVert u \rVert_2 \lVert v \rVert_2$.
\end{definition}

There are many constructions of $\Phi$ that form a JL transform.
It is trivial to show that when the entries of $\Phi$ are independently drawn from
$\mathcal{N}(0, \frac{1}{d_\circ})$, then $\Phi$ is a JL transform with parameters
$(\epsilon, \delta, m)$ if $d_\circ = \Omega(\epsilon^{-2} \ln (m / \delta))$.
In yet another construction, $\Phi=\frac{1}{\sqrt{d_\circ}} R$, where
$R \in \{ \pm 1 \}^{d_\circ \times d}$ is a matrix whose entries are independent Rademacher random variables.

We take the latter as an example due to its simplicity and analyze its properties.
As before, we refer the reader to~\citep{woodruff2014sketching} for a far more
detailed discussion of other (more efficient) constructions of the JL transform.

\subsection{Theoretical Analysis}
We are interested in analyzing the transformation above in the context of inner product.
Specifically, we wish to understand what we should expect if, instead of computing
the inner product between two vectors $u$ and $v$ in $\mathbb{R}^d$, we perform
the operation $\langle Ru, Rv \rangle$ in the transformed space in $\mathbb{R}^{d_\circ}$.
Is the outcome an unbiased estimate of the true inner product? How far off may this estimate be?
The following result is a first step to answering these questions for two \emph{fixed} vectors.

\begin{theorem}
    \label{theorem:sketching:jl-variance-fixed-vectors}
    Fix two vectors $u$ and $v \in \mathbb{R}^d$.
    Define $Z_\textsc{Sketch} = \langle \phi(u), \phi(v) \rangle$ as the random
    variable representing the inner product of sketches of size $d_\circ$,
    prepared using the projection $\phi(u) = R u$, with $R \in \{\pm 1/\sqrt{d_\circ}\}^{d_\circ \times d}$
    being a random Rademacher matrix.
    $Z_\textsc{Sketch}$ is an unbiased estimator of $\langle u, v \rangle$.
    Its distribution tends to a Gaussian with variance:
    \begin{equation*}
    \frac{1}{d_\circ} \Big( \lVert u \rVert_2^2 \lVert v \rVert_2^2 + \langle u, v \rangle^2 - 2 \sum_i u_i^2 v_i^2  \Big).
    \end{equation*}
\end{theorem}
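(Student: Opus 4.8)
The plan is to expand the bilinear form $Z_\textsc{Sketch} = \langle Ru, Rv \rangle$ one row of $R$ at a time and exploit the independence of distinct rows. Writing $R_{ji}$ for the $(j,i)$ entry of $R$, so that each $R_{ji}$ is an independent copy of a Rademacher variable scaled by $1/\sqrt{d_\circ}$, we have
\begin{equation*}
    Z_\textsc{Sketch} = \sum_{j=1}^{d_\circ} \underbrace{\sum_{i,k} R_{ji} R_{jk}\, u_i v_k}_{Y_j},
\end{equation*}
where the $Y_j$ are \emph{iid} because the rows of $R$ are independent and identically distributed. The entire argument then reduces to computing $\ev[Y_1]$ and $\var[Y_1]$, summing over the $d_\circ$ rows, and invoking the central limit theorem for the \emph{iid} sum $\sum_j Y_j$.

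First I would establish unbiasedness. Since $\ev[R_{ji} R_{jk}] = \frac{1}{d_\circ}\mathbbm{1}_{i=k}$, only the diagonal terms survive and $\ev[Y_1] = \frac{1}{d_\circ}\sum_i u_i v_i = \frac{1}{d_\circ}\langle u, v\rangle$; summing over the $d_\circ$ rows gives $\ev[Z_\textsc{Sketch}] = \langle u, v\rangle$. For the variance, the key input is the fourth mixed moment within a single row. Because each entry is $\pm 1/\sqrt{d_\circ}$, this expectation is nonzero only when the four indices can be partitioned into coincident pairs, and one can check that
\begin{equation*}
    \ev[R_{ja}R_{jb}R_{jc}R_{jd}] = \frac{1}{d_\circ^2}\Big(\mathbbm{1}_{a=b}\mathbbm{1}_{c=d} + \mathbbm{1}_{a=c}\mathbbm{1}_{b=d} + \mathbbm{1}_{a=d}\mathbbm{1}_{b=c} - 2\,\mathbbm{1}_{a=b=c=d}\Big),
\end{equation*}
the $-2$ term correcting for the triple overcount when all four indices agree (where $R_{ja}^4 = 1/d_\circ^2$ deterministically). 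Substituting this into $\ev[Y_1^2] = \sum_{a,b,c,d} u_a v_b u_c v_d\, \ev[R_{ja}R_{jb}R_{jc}R_{jd}]$, the three pairing terms contribute $\langle u,v\rangle^2$, $\lVert u\rVert_2^2\lVert v\rVert_2^2$, and $\langle u,v\rangle^2$ respectively, and the correction contributes $-2\sum_i u_i^2 v_i^2$; subtracting $(\ev[Y_1])^2 = \langle u,v\rangle^2/d_\circ^2$ leaves
\begin{equation*}
    \var[Y_1] = \frac{1}{d_\circ^2}\Big(\lVert u\rVert_2^2\lVert v\rVert_2^2 + \langle u,v\rangle^2 - 2\sum_i u_i^2 v_i^2\Big),
\end{equation*}
whence $\var[Z_\textsc{Sketch}] = d_\circ\,\var[Y_1]$ is exactly the claimed quantity.

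Finally, since $Z_\textsc{Sketch}$ is a sum of $d_\circ$ \emph{iid} terms $Y_j$ with finite mean and bounded (hence finite) variance, the central limit theorem gives that $(Z_\textsc{Sketch} - \langle u,v\rangle)/\sqrt{\var[Z_\textsc{Sketch}]}$ converges in distribution to a standard Gaussian as $d_\circ \to \infty$; equivalently, for large $d_\circ$ the estimator is approximately $\mathcal{N}\big(\langle u,v\rangle,\, \tfrac{1}{d_\circ}(\lVert u\rVert_2^2\lVert v\rVert_2^2 + \langle u,v\rangle^2 - 2\sum_i u_i^2 v_i^2)\big)$, as required. I expect the only genuinely delicate step to be the fourth-moment bookkeeping: correctly enumerating the index-coincidence patterns and, in particular, getting the overcounting correction right when all four summation indices collapse to one. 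Everything else is routine linearity-of-expectation manipulation together with the fact that a bounded \emph{iid} sum is asymptotically normal.
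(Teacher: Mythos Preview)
Your proposal is correct and follows essentially the same approach as the paper: both decompose $Z_\textsc{Sketch}$ as a sum of $d_\circ$ \emph{iid} per-row contributions, compute the mean and variance of a single summand via the Rademacher moment structure, and then invoke the central limit theorem. The only cosmetic difference is that you carry the $1/\sqrt{d_\circ}$ scaling inside the entries and package the fourth-moment computation into a single pairing formula with the $-2\,\mathbbm{1}_{a=b=c=d}$ correction, whereas the paper works with unscaled Rademachers and expands the cross terms case by case; the bookkeeping and the final expressions are identical.
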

\begin{proof}
    Consider the random variable $Z=\big( \sum_j R_j u_j \big)\big( \sum_k R_k v_k \big)$,
    where $R_i$'s are Rademacher random variables. It is clear that $d_\circ Z$ is the product
    of the sketch coordinate $i$ (for any $i$): $\phi(u)_i \phi(v)_i$.
    
    We can expand the expected value of $Z$ as follows:
    \begin{align*}
        \ev[Z] &= \ev\Big[ \big( \sum_j R_j u_j \big) \big( \sum_k R_k v_k \big) \Big] \\
        &= \ev[\sum_i R_i^2 u_i v_i] + \mathbb{E}[ \sum_{j \neq k} R_j R_k u_j v_k ] \\
        & = \sum_i u_i v_i \underbrace{\ev[R_i^2]}_1 + \sum_{j \neq k} u_j v_k \underbrace{\ev[R_j R_k ]}_0 \\
        &= \langle u, v \rangle.
    \end{align*}
    
    The variance of $Z$ can be expressed as follows:
    \begin{equation*}
        \var[Z] = \ev[Z^2] - \ev[Z]^2 =
        \ev\Big[\big( \sum_j R_j u_j \big)^2\big( \sum_k R_k v_k \big)^2\Big] - \langle u, v\rangle^2.
    \end{equation*}
    
    We have the following:
    \begin{align*}
        \ev&\Big[\big( \sum_j R_j u_j \big)^2\big( \sum_k R_k v_k \big)^2\Big] \\
        &= \ev\Big[
            \big( \sum_i u_i^2 + \sum_{i \neq j} R_i R_j u_i u_j \big)
            \big( \sum_k v_k^2 + \sum_{k \neq l} R_k R_l v_k v_l \big)
        \Big] \\
            &= \lVert u \rVert_2^2 \lVert v \rVert_2^2 +
                \underbrace{\ev\Big[\sum_i u_i^2 \sum_{k \neq l} R_k R_l v_k v_l\Big]}_0 \\
                &+ \underbrace{\ev\Big[\sum_k v_k^2 \sum_{i \neq j} R_i R_j u_i u_j\Big]}_0 +
                \ev\Big[\sum_{i \neq j} R_i R_j u_i u_j \sum_{k \neq l} R_k R_l v_k v_l\Big]. \numberthis
                \label{equation:sketching:jl-variance-fixed-vectors:variance}
    \end{align*}
    The last term can be decomposed as follows:
    \begin{align*}
        \ev&\Big[ \sum_{i \neq j \neq k \neq l} R_i R_j R_k R_l u_i u_j v_k v_l \Big] \\
        &+\ev\Big[ \sum_{i = k, j \neq l \lor i \neq k, j = l} R_i R_j R_k R_l u_i u_j v_k v_l \Big] \\
        &+ \ev\Big[ \sum_{i \neq j, i=k, j=l \lor i \neq j, i=l, j=k} R_i R_j R_k R_l u_i u_j v_k v_l \Big].
    \end{align*}
    The first two terms are $0$ and the last term can be rewritten as follows:
    \begin{equation}
    \label{equation:sketching:jl-variance-fixed-vectors:reduction}
        2 \ev\Big[ \sum_i u_i v_i \big( \sum_j u_j v_j - u_i v_i \big) \Big] = 2 \langle u, v \rangle^2 - 2 \sum_i u_i^2 v_i^2.
    \end{equation}
    
    We now substitute the last term in Equation~(\ref{equation:sketching:jl-variance-fixed-vectors:variance}) with
    Equation~(\ref{equation:sketching:jl-variance-fixed-vectors:reduction}) to obtain:
    \begin{equation*}
    \var[Z] = \lVert u \rVert_2^2 \lVert v \rVert_2^2 + \langle u, v \rangle^2 - 2 \sum_i u_i^2 v_i^2.
    \end{equation*}
    
    Observe that $Z_\textsc{Sketch} = 1/d_\circ \sum_i \phi(u)_i \phi(v)_i$ is
    the sum of independent, identically distributed random variables.
    Furthermore, for bounded vectors $u$ and $v$, the variance is finite.
    By the application of the Central Limit Theorem, we can deduce that
    the distribution of $Z_\textsc{Sketch}$ tends to a Gaussian distribution
    with the stated expected value. Noting that
    $\var[Z_\textsc{Sketch}] = 1/d_\circ^2 \sum_i \var[Z]$
    gives the desired result.
\end{proof}

\begin{svgraybox}
    Theorem~\ref{theorem:sketching:jl-variance-fixed-vectors} gives a clear model
    of the inner product error when two fixed vectors are transformed using our particular
    choice of the JL transform. We learnt that inner product of sketches is an ubiased
    estimator of the inner product between vectors, and have shown that the error
    follows a Gaussian distribution.
\end{svgraybox}

Let us now position this result in the context of top-$k$ retrieval
where the query point is fixed, but the data points are random.
To make the analysis more interesting, let us consider sparse vectors,
where each coordinate may be $0$ with a non-zero probability.

\begin{theorem}
    \label{theorem:sketching:jl-variance-fixed-query}
    Fix a query vector $q \in \mathbb{R}^d$ and let $X$ be
    a random vector drawn according to the following probabilistic model.
    Coordinate $i$, $X_i$, is non-zero with probability $p_i > 0$ and,
    if it is non-zero, draws its value from a distribution with mean $\mu$
    and variance $\sigma^2 < \infty$. Then, $Z_\textsc{Sketch} = \langle \phi(q), \phi(X) \rangle$,
    with $\phi(u) = R u$ and $R \in \{\pm 1/\sqrt{d_\circ} \}^{d_\circ \times d}$,
    has expected value $\mu \sum_i p_i q_i$ and variance:
    \begin{equation*}
        \frac{1}{d_\circ} \Bigg[
        (\mu^2 + \sigma^2)\Big( \lVert q \rVert_2^2 \sum_i p_i - \sum_i p_i q_i^2 \Big) +
        \mu^2 \Big( \big(\sum_i q_i p_i\big)^2 - \sum_i \big(q_i p_i\big)^2 \Big)
        \Bigg].
    \end{equation*}
\end{theorem}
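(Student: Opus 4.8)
The plan is to compute the mean and variance by conditioning on the random sparsity pattern and magnitudes of $X$, then invoking Theorem~\ref{theorem:sketching:jl-variance-fixed-vectors} for the ``inner'' randomness coming from the Rademacher matrix $R$. Write $X_i = B_i Y_i$ where $B_i \sim \mathrm{Bernoulli}(p_i)$ are independent and $Y_i$ are independent with mean $\mu$ and variance $\sigma^2$, so that $\ev[X_i] = \mu p_i$, $\ev[X_i^2] = (\mu^2 + \sigma^2) p_i$, and for $i \neq j$, $\ev[X_i X_j] = \mu^2 p_i p_j$. Conditioned on a realization $x$ of $X$, Theorem~\ref{theorem:sketching:jl-variance-fixed-vectors} tells us $\ev[Z_\textsc{Sketch} \mid X = x] = \langle q, x \rangle$ and $\var[Z_\textsc{Sketch} \mid X = x] = \tfrac{1}{d_\circ}\big( \lVert q \rVert_2^2 \lVert x \rVert_2^2 + \langle q, x \rangle^2 - 2 \sum_i q_i^2 x_i^2 \big)$.

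For the mean, the tower property gives $\ev[Z_\textsc{Sketch}] = \ev[\langle q, X \rangle] = \sum_i q_i \ev[X_i] = \mu \sum_i p_i q_i$, as claimed. For the variance, I would use the law of total variance, $\var[Z_\textsc{Sketch}] = \ev[\var[Z_\textsc{Sketch} \mid X]] + \var[\ev[Z_\textsc{Sketch} \mid X]]$. The second term is $\var[\langle q, X \rangle] = \sum_i q_i^2 \var[X_i] + \sum_{i \neq j} q_i q_j \mathrm{Cov}(X_i, X_j)$; since the $X_i$ are independent, the covariances vanish and $\var[X_i] = (\mu^2+\sigma^2)p_i - \mu^2 p_i^2$, giving $\sum_i q_i^2\big((\mu^2+\sigma^2)p_i - \mu^2 p_i^2\big)$. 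The first term requires taking the expectation over $X$ of each piece of the conditional variance: $\ev[\lVert X \rVert_2^2] = (\mu^2+\sigma^2)\sum_i p_i$, so $\ev[\lVert q\rVert_2^2 \lVert X \rVert_2^2] = (\mu^2+\sigma^2)\lVert q \rVert_2^2 \sum_i p_i$; $\ev[\langle q, X\rangle^2] = \sum_i q_i^2 \ev[X_i^2] + \sum_{i\neq j} q_i q_j \ev[X_i X_j] = (\mu^2+\sigma^2)\sum_i p_i q_i^2 + \mu^2\sum_{i \neq j} q_i q_j p_i p_j$; and $\ev[\sum_i q_i^2 X_i^2] = (\mu^2+\sigma^2)\sum_i p_i q_i^2$. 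Assembling $\tfrac{1}{d_\circ}$ times $\big(\ev[\lVert q\rVert_2^2\lVert X\rVert_2^2] + \ev[\langle q,X\rangle^2] - 2\ev[\sum_i q_i^2 X_i^2]\big)$ and adding the $\var[\langle q,X\rangle]$ term, the plan is to collect terms by their $(\mu^2+\sigma^2)$ and $\mu^2$ coefficients and simplify, using $\sum_{i\neq j} q_i q_j p_i p_j = (\sum_i q_i p_i)^2 - \sum_i q_i^2 p_i^2$, to land on the stated expression.

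The main obstacle I anticipate is purely bookkeeping: reconciling the $\mu^2 p_i^2$ term from $\var[X_i]$ in the second total-variance term with the $\mu^2$-coefficient terms coming out of $\ev[\langle q, X\rangle^2]$, and making sure the factor of $\tfrac{1}{d_\circ}$ is applied consistently (the conditional-variance piece carries it, the $\var[\langle q, X\rangle]$ piece does not—wait, it must, so I should double-check whether the claimed formula is the variance of $Z_\textsc{Sketch}$ or of the per-coordinate quantity, and scale accordingly). There is a subtlety worth flagging: the law of total variance mixes a $\tfrac{1}{d_\circ}$-scaled term with an unscaled $\var[\langle q, X\rangle]$ term, yet the target formula has a global $\tfrac{1}{d_\circ}$; this suggests the statement is implicitly in a regime where the sketch-noise term dominates, or that one reads $\var[\langle q, X \rangle]$ as already absorbed—resolving exactly how these combine is the delicate point, and I would handle it by carefully tracking which expectation is over $R$ and which is over $X$ before combining.

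Finally, I would note (as in Theorem~\ref{theorem:sketching:jl-variance-fixed-vectors}) that a Central Limit Theorem argument applies since $Z_\textsc{Sketch}$ is an average of $d_\circ$ i.i.d.\ terms with finite variance, so the distribution is asymptotically Gaussian with the computed mean and variance; this part is routine given the earlier theorem.
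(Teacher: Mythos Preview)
Your approach and the paper's coincide on the mean and on the computation of $\ev_X\!\big[\var_R[Z_\textsc{Sketch}\mid X]\big]$: the paper takes the conditional-variance formula from Theorem~\ref{theorem:sketching:jl-variance-fixed-vectors} and pushes the expectation over $X$ through each of its three terms, obtaining exactly the pieces $\ev[\lVert X\rVert_2^2]$, $\ev[\langle q,X\rangle^2]$, and $\ev[\sum_i q_i^2 X_i^2]$ that you compute, and the algebra then lands on the stated expression.

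The difference is precisely the subtlety you flagged. The paper does \emph{not} invoke the law of total variance and does \emph{not} add the second term $\var_X[\langle q,X\rangle]$; it simply identifies the displayed ``variance'' with $\ev_X\!\big[\var_R[Z_\textsc{Sketch}\mid X]\big]$, which is why the result carries a global $1/d_\circ$ factor. Your instinct is correct that a full total-variance decomposition would produce an additional $d_\circ$-independent contribution $\sum_i q_i^2\big((\mu^2+\sigma^2)p_i-\mu^2 p_i^2\big)$, and this does not appear in the paper's formula or proof. So to reproduce the paper's argument, drop the $\var[\ev[Z_\textsc{Sketch}\mid X]]$ term and read the stated quantity as the expected conditional variance (the sketch-noise component averaged over the data distribution), rather than the full variance over both $R$ and $X$. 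What you called a bookkeeping obstacle is in fact the one place where your plan deviates from---and is more careful than---the paper's own derivation.
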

\begin{proof}
    It is easy to see that:
    \begin{equation*}
        \ev[Z_\textsc{Sketch}] = \sum_i q_i \ev[X_i] = \mu \sum_i p_i q_i.
    \end{equation*}
    
    As for the variance, we start from Theorem~\ref{theorem:sketching:jl-variance-fixed-vectors}
    and arrive at the following expression:
    \begin{equation}
    \label{equation:sketching:jl-variance-fixed-query:variance}
    \frac{1}{d_\circ} \Big( \lVert q \rVert_2^2 \ev[\lVert X \rVert_2^2] + \ev[\langle q, X \rangle^2] - 2 \sum_i q_i^2 \ev[X_i^2]  \Big),
    \end{equation}
    where the expectation is with respect to $X$. Let us consider the terms inside the
    parentheses one by one. The first term becomes:
    \begin{align*}
        \lVert q \rVert_2^2 \ev[\lVert X \rVert_2^2] &= \lVert q \rVert_2^2 \sum_i\ev[X_i^2] \\
        &= \lVert q \rVert_2^2 (\mu^2 + \sigma^2) \sum_i p_i.
    \end{align*}
    The second term reduces to:
    \begin{align*}
    \ev\big[\langle q, X \rangle^2\big] &= \ev\big[ \langle q, X \rangle \big]^2 +
        \var\big[ \langle q, X \rangle \big] + \\
    &= \mu^2 (\sum_i q_i p_i)^2 + \sum q_i^2 \big[ (\mu^2 + \sigma^2) p_i - \mu^2 p_i^2 \big] \\
    &= \mu^2 \big( (\sum_i q_i p_i)^2 - \sum_i q_i^2 p_i^2 \big) + \sum_i q_i^2 p_i (\mu^2 + \sigma^2).
    \end{align*}
    Finally, the last term breaks down to:
    \begin{align*}
    - 2 \sum_i q_i^2 \ev[X_i^2] &= -2 \sum_i q_i^2 (\mu^2 + \sigma^2) p_i \\
    &= -2 (\mu^2 + \sigma^2) \sum_i q_i^2 p_i.
    \end{align*}
    
    Putting all these terms back into Equation~(\ref{equation:sketching:jl-variance-fixed-query:variance}) yields
    the desired expression for variance.
\end{proof}

Let us consider a special case to better grasp the implications of
Theorem~\ref{theorem:sketching:jl-variance-fixed-query}.
Suppose $p_i = \psi / d$ for some constant $\psi$ for all dimensions $i$.
Further assume, without loss of generality, that the (fixed) query vector has
unit norm: $\lVert q \rVert_2 = 1$. We can observe that the
variance of $Z_\textsc{Sketch}$ decomposes into a term that is
$(\mu^2 + \sigma^2) (1 - 1/d) \psi/d_\circ$,
and a second term that is a function of $1/d^2$.
The mean, on the other hand, is a linear function of the non-zero coordinates in the query:
$(\mu \sum_i q_i) \psi/d $.
As $d$ grows, the mean of $Z_\textsc{Sketch}$ tends to $0$ at
a rate proportional to the sparsity rate ($\psi/d$), while its variance tends to
$(\mu^2 + \sigma^2) \psi/d_\circ$.

The above suggests that the ability of $\phi(\cdot)$
to preserve the inner product of a query point with a randomly drawn
data point deteriorates as a function of the number of non-zero coordinates.
For example, when the number of non-zero coordinates becomes larger,
$\langle \phi(q), \phi(X) \rangle$ for a
fixed query $q$ and a random point $X$ becomes less reliable because
the variance of the approximation increases.

\section{Asymmetric Sketching}

Our second sketching algorithm is due to~\cite{bruch2023sinnamon}.
It is unusual in several ways. First, it is designed specifically for
retrieval. That is, the objective of the sketching technique is not
to preserve the inner product between points in a collection;
in fact, as we will learn shortly, the sketch is not even an unbiased estimator.
Instead, it is assumed that the setup is retrieval,
where we receive a query and wish to \emph{rank} data points in response.

That brings us to its second unusual property: asymmetry.
That means, only the data points are sketched while queries remain
in the original space. With the help of an asymmetric distance function,
however, we can easily compute an \emph{upper-bound} on the
query-data point inner product, using the raw query point and the sketch
of a data point.

Finally, in its original construction as presented in~\citep{bruch2023sinnamon},
the sketch was tailored specifically to sparse vectors.
As we will show, however, it is trivial to modify the algorithm and
adapt it to dense vectors.

In the rest of this section, we will first describe the sketching algorithm
for sparse vectors, as well as its extension to dense vectors.
We then describe how the distance between a query point in the original space
and the sketch of any data point can be computed asymmetrically.
Lastly, we review an analysis of the sketching algorithm.

\subsection{The Sketching Algorithm}

Algorithm~\ref{algorithm:sketching:sinnamon:sketching} shows the logic behind the sketching of sparse vectors.
It is assumed throughout that the sketch size, $d_\circ$, is even, so that $d_\circ/2$ is an integer.
The algorithm also makes use of $h$ independent random mappings
$\pi_o: [d] \rightarrow [d_\circ/2]$, where each $\pi_o(\cdot)$ projects
coordinates in the original space to an integer in the set $[d_\circ/2]$
uniformly randomly.

Intuitively, the sketch of $u \in \mathbb{R}^d$ is a data structure comprising of
the index of its set of non-zero coordinates (i.e., $\mathit{nz}(u)$), along with
an \emph{upper-bound sketch} ($\overline{u} \in \mathbb{R}^{d_\circ/2}$) and
a \emph{lower-bound sketch} ($\underline{u} \in \mathbb{R}^{d_\circ/2}$) on the non-zero values of $u$.
More precisely, the $k$-th coordinate of $\overline{u}$ ($\underline{u}$)
records the largest (smallest) value from the set of all non-zero coordinates in $u$
that map into $k$ according to at least one $\pi_o(\cdot)$.

\begin{algorithm}[!t]
\SetAlgoLined
{\bf Input: }{Sparse vector $u \in \mathbb{R}^d$.}\\
{\bf Requirements: }{$h$ independent random mappings $\pi_o: [d] \rightarrow [d_\circ/2]$.}\\
\KwResult{Sketch of $u$, $\{ \mathit{nz}(u);\; \underline{u};\; \overline{u} \}$ consisting of
the index of non-zero coordinates of $u$, the lower-bound sketch, and the upper-bound sketch.}

\begin{algorithmic}[1]
    \STATE Let $\overline{u}, \underline{u} \in \mathbb{R}^{d_\circ/2} $ be zero vectors
    \FORALL{$k \in [\frac{d_\circ}{2}]$}
        \STATE $\mathcal{I} \leftarrow \{ i \in \mathit{nz}(u) \;|\; \exists \;o\; \mathit{s.t.}\; \pi_o(i) = k \}$
        \STATE $\overline{u}_k \leftarrow \max_{i \in \mathcal{I}} u_i$ \label{algorithm:indexing:upper-bound-sketch}
        \STATE $\underline{u}_k \leftarrow \min_{i \in \mathcal{I}} u_i$
    \ENDFOR
    \RETURN $\{ \mathit{nz}(u),\; \underline{u},\; \overline{u} \}$
 \end{algorithmic}
 \caption{Sketching of sparse vectors}
\label{algorithm:sketching:sinnamon:sketching}
\end{algorithm}

\begin{svgraybox}
This sketching algorithm offers a great deal of flexibility.
When data vectors are non-negative, we may drop the lower-bounds
from the sketch, so that the sketch of $u$ consists only
of $\{ \mathit{nz}(u), \overline{u} \}$.
When vectors are dense, the sketch clearly does not need to store
the set of non-zero coordinates, so that the sketch of $u$ becomes
$\{ \overline{u}, \underline{u} \}$. Finally, when vectors are dense
and non-negative, the sketch of $u$ simplifies to $\overline{u}$.
\end{svgraybox}

\subsection{Inner Product Approximation}

Suppose that we are given a query point $q \in \mathbb{R}^d$ and wish to obtain an estimate
of the inner product $\langle q, u \rangle$ for some data vector $u$.
We must do so using only the sketch of $u$ as produced by
Algorithm~\ref{algorithm:sketching:sinnamon:sketching}.
Because the query point is not sketched and, instead, remains in the original
$d$-dimensional space, while $u$ is only known in its sketched form,
we say this computation is \emph{asymmetric}. This is not unlike the distance
computation between a query point and a quantized data point, as seen in
Chapter~\ref{chapter:quantization}.

\begin{algorithm}[t]
    \SetAlgoLined
    {\bf Input: }{Sparse query vector $q \in \mathbb{R}^d$;
    sketch of data point $u$: $\{ \mathit{nz}(u), \overline{u}, \underline{u} \}$}\\
    {\bf Requirements: }{$h$ independent random mappings $\pi_o: [d] \rightarrow [d_\circ/2]$.}\\
    \KwResult{Upper-bound on $\langle q, u \rangle$.}
    
    \begin{algorithmic}[1]
        \STATE $s \leftarrow 0$
        \FOR{$i \in \mathit{nz}(q) \cap \mathit{nz}(u)$}
            \STATE $\mathcal{J} \leftarrow \{ \pi_o(i) \;|\; o \in [h] \}$
            \IF{$q_i > 0$}
                \STATE $s \leftarrow s + \min_{j \in \mathcal{J}} \overline{u}_j$
            \ELSE
                \STATE $s \leftarrow s + \max_{j \in \mathcal{J}} \underline{u}_j$
            \ENDIF
        \ENDFOR
        \RETURN $s$
    \end{algorithmic}
    \caption{Asymmetric distance computation for sparse vectors}
    \label{algorithm:sketching:sinnamon:distance}
\end{algorithm}

This asymmetric procedure is described in Algorithm~\ref{algorithm:sketching:sinnamon:distance}.
The algorithm iterates over the intersection of the non-zero coordinates of the query vector
and the non-zero coordinates of the data point (which is included in the sketch).
It goes without saying that, if the vectors are dense, we may simply iterate over all coordinates.
When visiting the $i$-th coordinate, we first form the set of coordinates that $i$ maps to
according to the hash functions $\pi_o$'s; that is the set $\mathcal{J}$ in the algorithm.

\begin{svgraybox}
The next step then depends on the sign of the query at that coordinate.
When $q_i$ is positive, we find the \emph{least upper-bound} on the value of
$u_i$ from its upper-bound sketch. That can be determined by looking at $\overline{u}_j$
for all $j \in \mathcal{J}$, and taking the minimum value among those sketch coordinates.
When $q_i < 0$, on the other hand, we find the \emph{greatest lower-bound} instead.
In this way, it is always guaranteed that the partial inner product is an upper-bound on the actual
partial inner product, $q_i u_i$, as stated in the next theorem.
\end{svgraybox}

\begin{theorem}
    \label{theorem:sketching:sinnamon:upper-bound}
    The quantity returned by Algorithm~\ref{algorithm:sketching:sinnamon:distance} 
    is an upper-bound on the inner product of query and data vectors.
\end{theorem}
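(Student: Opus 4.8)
The plan is to show that each term added to the accumulator $s$ in Algorithm~\ref{algorithm:sketching:sinnamon:distance} individually upper-bounds the corresponding term $q_i u_i$ of the true inner product, and then sum over all shared non-zero coordinates. Concretely, I would fix an index $i \in \mathit{nz}(q) \cap \mathit{nz}(u)$ and analyze the contribution in two cases according to the sign of $q_i$, relying on the defining property of the sketch from Algorithm~\ref{algorithm:sketching:sinnamon:sketching}: for each hash $\pi_o$, the sketch coordinate $\overline{u}_{\pi_o(i)}$ is a maximum taken over a set of non-zero entries of $u$ that \emph{includes} $u_i$, so $\overline{u}_{\pi_o(i)} \geq u_i$; symmetrically $\underline{u}_{\pi_o(i)} \leq u_i$.

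First I would handle the case $q_i > 0$. Since $\overline{u}_j \geq u_i$ for every $j \in \mathcal{J} = \{\pi_o(i) : o \in [h]\}$, taking the minimum over $j \in \mathcal{J}$ preserves the inequality: $\min_{j \in \mathcal{J}} \overline{u}_j \geq u_i$. Multiplying both sides by $q_i > 0$ gives $q_i \min_{j \in \mathcal{J}} \overline{u}_j \geq q_i u_i$, which is exactly the term the algorithm accumulates. (Here I should note a small subtlety: the pseudocode as written adds $\min_{j\in\mathcal J}\overline u_j$ rather than $q_i\min_{j\in\mathcal J}\overline u_j$; I would read this as the intended per-coordinate contribution being scaled by $q_i$, or equivalently treat the $q_i>0$ branch as contributing $q_i\cdot\min_{j\in\mathcal J}\overline u_j$, consistent with the statement that the output upper-bounds $\langle q,u\rangle$.) Next, for $q_i < 0$, I would use $\underline{u}_j \leq u_i$ for all $j \in \mathcal{J}$, so $\max_{j \in \mathcal{J}} \underline{u}_j \leq u_i$; multiplying by the negative number $q_i$ flips the inequality, yielding $q_i \max_{j \in \mathcal{J}} \underline{u}_j \geq q_i u_i$. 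Thus in both cases the accumulated term dominates $q_i u_i$.

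Finally I would sum over all $i \in \mathit{nz}(q) \cap \mathit{nz}(u)$. Coordinates $i$ where $q_i = 0$ or $u_i = 0$ contribute $0$ to $\langle q, u \rangle = \sum_{i \in \mathit{nz}(q)\cap\mathit{nz}(u)} q_i u_i$ and are skipped by the algorithm, so no contribution is lost. Summing the per-coordinate inequalities gives $s \geq \sum_i q_i u_i = \langle q, u \rangle$, which is the claim. I do not anticipate a serious obstacle here; the only point requiring care is the bookkeeping around which set the sketch maximum/minimum is taken over — I must confirm that $i$ itself always lies in the set $\mathcal{I}$ used to define $\overline{u}_k$ and $\underline{u}_k$ for $k = \pi_o(i)$, which holds because $i \in \mathit{nz}(u)$ and $\pi_o(i) = k$ by construction, so the witness $u_i$ is always present in the max/min. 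Everything else is monotonicity of multiplication by a number of a fixed sign.
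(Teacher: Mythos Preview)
Your proposal is correct and follows exactly the approach the paper takes: the paper does not give a formal proof of this theorem but only the informal justification in the paragraph preceding it, noting that the per-coordinate estimate is a least upper bound (when $q_i>0$) or greatest lower bound (when $q_i<0$) on $u_i$, so each partial product dominates $q_iu_i$. You have simply made that argument explicit, and you are also right to flag the apparent omission of the factor $q_i$ in the pseudocode; the surrounding text and the theorem itself make clear the intended accumulation is $q_i\cdot\min_{j\in\mathcal J}\overline u_j$ (respectively $q_i\cdot\max_{j\in\mathcal J}\underline u_j$).
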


\subsection{Theoretical Analysis}

Theorem~\ref{theorem:sketching:sinnamon:upper-bound} implies that
Algorithm~\ref{algorithm:sketching:sinnamon:distance} always overestimates the
inner product between query and data points. In other words,
the inner product approximation error is non-negative.
But what can be said about the probability that such an error occurs?
How large is the overestimation error? We turn to these questions next.

Before we do so, however, we must agree on a probabilistic model of the data.
We follow~\citep{bruch2023sinnamon} and assume that a random sparse vector $X$
is drawn from the following distribution.
All coordinates of $X$ are mutually independent.
Its $i$-th coordinate is \emph{inactive} (i.e., zero) with probability $1 - p_i$.
Otherwise, it is \emph{active} and its value is a random variable, $X_i$,
drawn \emph{iid} from some distribution with probability density function (PDF)
$\phi$ and cumulative distribution function (CDF) $\Phi$.

\subsubsection{Probability of Error}
Let us focus on the approximation error of a single active coordinate.
Concretely, suppose we have a random vector $X$ whose $i$-th coordinate is active: $i \in \mathit{nz}(X)$.
We are interested in quantifying the likelihood that, if we estimated the value
of $X_i$ from the sketch, the estimated value, $\tilde{X}_i$, overshoots or undershoots the actual value.

Formally, we wish to model $\probability[\tilde{X}_i \neq X_i]$,
Note that, depending on the sign of the query's $i$-th coordinate
$\tilde{X}_i$ may be estimated from the upper-bound sketch ($\overline{X}$),
resulting in \emph{overestimation},
or the lower-bound sketch ($\underline{X}$), resulting in \emph{underestimation}.
Because the two cases are symmetric, we state the main result for the former case:
When $\tilde{X}_i$ is the least upper-bound on $X_i$, estimated from $\overline{X}$:
\begin{equation}
    \label{equation:sketching:sinnamon:upper-bound-estimator}
    \tilde{X}_i = \min_{j \in \{ \pi_o(i) \;|\; o \in [h] \}} \overline{X}_j.
\end{equation}

\begin{theorem}
    \label{theorem:sketching:sinnamon:sketching-error}
    For large values of $d_\circ$, an active $X_i$, and $\tilde{X}_i$ estimated 
    using Equation~(\ref{equation:sketching:sinnamon:upper-bound-estimator}),
    \begin{equation*}
        \probability\big[ \tilde{X}_i > X_i \big] \approx \int {\Bigg[ 1 - \exp\Big(-\frac{2h}{d_\circ} \big(1 - \Phi(\alpha)\big) \sum_{j \neq i} p_j \Big) \Bigg]^h \phi(\alpha) d\alpha },
    \end{equation*}
    where $\phi(\cdot)$ and $\Phi(\cdot)$ are the PDF and CDF of $X_i$.
\end{theorem}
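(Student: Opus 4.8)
The plan is to compute the probability that the least-upper-bound estimator $\tilde{X}_i$ strictly exceeds the true active value $X_i$, by conditioning on $X_i = \alpha$ and then reasoning about when \emph{every} one of the $h$ sketch cells that $i$ hashes into has been ``inflated'' by some other active coordinate carrying a value larger than $\alpha$.

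\medskip

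First I would condition on the event $X_i = \alpha$ (integrating against its density $\phi(\alpha)$ at the end) and recall the definition
\begin{equation*}
    \tilde{X}_i = \min_{j \in \{ \pi_o(i) \;|\; o \in [h]\}} \overline{X}_j .
\end{equation*}
The overestimation event $\tilde{X}_i > \alpha$ holds precisely when, for \emph{every} index $o \in [h]$, the cell $\pi_o(i)$ has $\overline{X}_{\pi_o(i)} > \alpha$, i.e.\ that cell received at least one \emph{other} active coordinate $j \neq i$ whose value exceeds $\alpha$. So the first step is to fix one cell $k \in [d_\circ/2]$ and compute the probability that $k$ is ``contaminated above $\alpha$''. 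A coordinate $j \neq i$ lands in cell $k$ under at least one of its $h$ hash functions with probability $1 - (1 - 2/d_\circ)^h \approx 2h/d_\circ$ for large $d_\circ$; independently, $j$ is active with probability $p_j$ and, given active, exceeds $\alpha$ with probability $1 - \Phi(\alpha)$. Treating the contributions of distinct $j$'s as independent (they are, under the probabilistic model, modulo the shared-cell coupling which washes out as $d_\circ \to \infty$), the probability that cell $k$ is \emph{not} contaminated above $\alpha$ is approximately
\begin{equation*}
    \prod_{j \neq i} \Big( 1 - \tfrac{2h}{d_\circ}\,(1 - \Phi(\alpha))\, p_j \Big) \approx \exp\!\Big( -\tfrac{2h}{d_\circ}\,(1 - \Phi(\alpha)) \sum_{j \neq i} p_j \Big),
\end{equation*}
using $\log(1-x) \approx -x$. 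Hence a single cell \emph{is} contaminated above $\alpha$ with probability $1 - \exp(-\tfrac{2h}{d_\circ}(1 - \Phi(\alpha))\sum_{j\neq i} p_j)$.

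\medskip

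Next I would combine over the $h$ cells $\pi_1(i),\dots,\pi_h(i)$. For large $d_\circ$ these cells are distinct with high probability and their contamination events are asymptotically independent, so the probability that \emph{all} $h$ of them are contaminated above $\alpha$ is the $h$-th power of the single-cell probability. Finally, removing the conditioning by integrating over $\alpha$ against $\phi$ gives
\begin{equation*}
    \probability\big[\tilde{X}_i > X_i\big] \approx \int \Bigg[ 1 - \exp\!\Big( -\tfrac{2h}{d_\circ}\,(1 - \Phi(\alpha)) \sum_{j\neq i} p_j \Big) \Bigg]^h \phi(\alpha)\, d\alpha,
\end{equation*}
which is the claimed expression.

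\medskip

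The main obstacle is making the independence approximations rigorous: the events ``$j$ lands in cell $k$'' for a fixed $k$ across different $j$, and the events ``cell $\pi_o(i)$ is contaminated'' across different $o$, are not exactly independent because of collisions in the hash functions $\pi_o$ and because a single active $j$ with a large value can inflate several cells at once. The honest route is to argue that all these correlations are $O(1/d_\circ)$ and therefore vanish in the stated large-$d_\circ$ regime — which is why the theorem is phrased with ``$\approx$'' and ``for large values of $d_\circ$'' rather than as an exact identity. I would state this as a Poisson-type approximation (each coordinate independently and with small probability contributes to a given cell, so the count of contaminants per cell is approximately Poisson, and the per-cell events decouple in the limit), and relegate the precise error-rate bookkeeping to a remark rather than grinding through it. Everything else — the binomial-to-exponential passage, the $(1-2/d_\circ)^h \to 2h/d_\circ$ step, and the final integration — is routine.
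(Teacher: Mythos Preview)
Your proposal is correct and follows essentially the same line as the paper's proof: condition on $X_i=\alpha$, compute the per-cell contamination probability, raise to the $h$-th power by (asymptotic) independence of the hash cells, and integrate against $\phi$. The only cosmetic difference is the order of marginalization: you integrate out each $X_j$'s activity and value at the per-cell stage (yielding the product $\prod_{j\neq i}(1-\tfrac{2h}{d_\circ}(1-\Phi(\alpha))p_j)$ directly), whereas the paper first conditions on the full realized vector $X$, obtains $(1-2/d_\circ)^{h\sum_{j\neq i}\mathbbm{1}_{X_j\text{ active}}\mathbbm{1}_{X_j>\alpha}}$, and then marginalizes via a first-order Taylor step to replace the random exponent by its mean $(1-\Phi(\alpha))\sum_{j\neq i}p_j$. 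Your ordering is arguably cleaner since it avoids that extra Taylor approximation, but the substance of the argument is the same.
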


Extending this result to the lower-bound sketch involves replacing $1 - \Phi(\alpha)$ with $\Phi(\alpha)$.
When the distribution defined by $\phi$ is symmetric, the probabilities of error
too are symmetric for the upper-bound and lower-bound sketches.

\begin{proof}[Proof of Theorem~\ref{theorem:sketching:sinnamon:sketching-error}]
    Recall that $\tilde{X}_i$ is estimated as follows:
    \begin{equation*}
        \tilde{X}_i = \min_{j \in \{ \pi_o(i) \;|\; o \in [h] \}} \overline{X}_j.
    \end{equation*}
    So we must look up $\overline{X}_j$ for values of $j$ produced by $\pi_o$'s.

    Suppose one such value is $k$ (i.e., $k = \pi_o(i)$ for some $o \in [h]$).
    The event that $\overline{X}_k > X_i$ happens only when there exists another
    active coordinate $X_j$ such that $X_j > X_i$ \emph{and} $\pi_o(j)=k$ for some $\pi_o$.
    
    To derive $\probability[\overline{X}_k > X_i]$, it is easier to think in terms
    of complementary events: $\overline{X}_k = X_i$ if every other active coordinate
    whose value is larger than $X_i$ maps to a sketch coordinate \emph{except} $k$.
    Clearly the probability that any arbitrary $X_j$ maps to a sketch coordinate other
    than $k$ is simply $1 - 2/d_\circ$. Therefore, given a vector $X$, the probability
    that no active coordinate $X_j$ larger than $X_i$ maps to the $k$-th coordinate of the sketch,
    which we denote by ``Event A,'' is:
    \begin{align*}
        \probability\Big[\text{Event A} \;|\; X \Big]
        = 1 - (1 - \frac{2}{d_\circ})^{h\sum_{j \neq i} \mathbbm{1}_{X_j \text{ is active}} \mathbbm{1}_{X_j > X_i}}.
    \end{align*}
    Because $d_\circ$ is large by assumption, we can approximate $e^{-1} \approx (1 - 2/d_\circ)^{d_\circ/2}$ and rewrite the expression above as follows:
    \begin{equation*}
        \probability\big[\text{Event A}  \;|\; X \big] \approx 1 - \exp\Big(-\frac{2h}{d_\circ}\sum_{j \neq i} \mathbbm{1}_{X_j \text{ is active}} \mathbbm{1}_{X_j > X_i} \Big).
    \end{equation*}
    Finally, we marginalize the expression above over $X_j$'s for $j \neq i$ to remove
    the dependence on all but the $i$-th coordinate of $X$.
    To simplify the expression, however, we take the expectation over the first-order
    Taylor expansion of the right hand side around $0$. This results in the following approximation:
    \begin{equation*}
        \probability\big[\text{Event A}  \;|\; X_i = \alpha \big] \approx 1 - \exp\Big(-\frac{2h}{d_\circ} (1 - \Phi(\alpha)) \sum_{j \neq i} p_j\Big).
    \end{equation*}
    
    For $\tilde{X}_i$ to be larger than $X_i$, event $A$ must take place for all $h$ sketch coordinates.
    That probability, by the independence of random mappings, is:
    \begin{equation*}
        \probability\big[ \tilde{X}_i > X_i \;|\; X_i = \alpha \big] \approx \big[ 1 - \exp\Big(-\frac{2h}{d_\circ} (1 - \Phi(\alpha)) \sum_{j \neq i} p_j \Big) \big]^h.
    \end{equation*}
    In deriving the expression above, we conditioned the event on the value of $X_i$.
    Taking the marginal probability leads us to the following expression for the event
    that $\tilde{X}_i > X_i$ for any $i$, concluding the proof:
    \begin{align*}
        \probability\big[ \tilde{X}_i > X_i \big] &\approx
            \int {\Big[ 1 - \exp\Big(-\frac{2h}{d_\circ} (1 - \Phi(\alpha)) \sum_{j \neq i} p_j \Big) \Big]^h d\probability(\alpha)} \\
            &\approx \int {\Big[ 1 - \exp\Big(-\frac{2h}{d_\circ} (1 - \Phi(\alpha)) \sum_{j \neq i} p_j\Big) \Big]^h \phi(\alpha) d\alpha}.
    \end{align*}
\end{proof}

\begin{svgraybox}
Theorem~\ref{theorem:sketching:sinnamon:sketching-error} offers insights into the
behavior of the upper-bound sketch. The first observation is that the sketching mechanism
presented here is more suitable for distributions where larger values occur with a smaller
probability such as sub-Gaussian variables. In such cases, the larger the value is,
the smaller its chance of being overestimated by the upper-bound sketch.
Regardless of the underlying distribution, empirically, the largest value in a
vector is always estimated \emph{exactly}.
\end{svgraybox}

The second insight is that there is a sweet spot for $h$ given a particular value
of $d_\circ$: using more random mappings helps lower the probability of error until
the sketch starts to saturate, at which point the error rate increases.
This particular property is similar to the behavior of a Bloom filter~\citep{bloom-filter}.

\subsubsection{Distribution of Error}
We have modeled the probability that the sketch of a vector overestimates a value.
In this section, we examine the shape of the distribution of error in the form of its
CDF. Formally, assuming $X_i$ is active and $\tilde{X}_i$ is estimated using
Equation~(\ref{equation:sketching:sinnamon:upper-bound-estimator}),
we wish to find an expression for
$\probability[\lvert \tilde{X}_i - X_i \rvert < \epsilon]$ for any $\epsilon > 0$.

\begin{theorem}
    \label{theorem:sketching:sinnamon:sketching-error-cdf}
    Suppose $X_i$ is active and draws its value from a distribution with PDF and CDF $\phi$ and $\Phi$.
    Suppose further that $\tilde{X}_i$ is the least upper-bound on $X_i$, obtained
    using Equation~(\ref{equation:sketching:sinnamon:upper-bound-estimator}).
    Then:
    \begin{equation*}
        \probability[\tilde{X}_i - X_i \leq \epsilon] \approx
        1 - \int \Big[ 1 - \exp\Big(-\frac{2h}{d_\circ} (1 - \Phi(\alpha + \epsilon)) \sum_{j \neq i} p_j \Big) \Big]^h \phi(\alpha) d \alpha.
    \end{equation*}
\end{theorem}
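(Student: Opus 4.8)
The plan is to follow the exact same argument as in the proof of Theorem~\ref{theorem:sketching:sinnamon:sketching-error}, keeping track of the extra slack $\epsilon$ that appears in the event we are conditioning on. The key observation is that the complementary event changes from ``$\tilde X_i = X_i$'' to ``$\tilde X_i - X_i \le \epsilon$'': the estimated value $\tilde X_i$ exceeds $X_i$ by more than $\epsilon$ precisely when, for \emph{every} one of the $h$ sketch coordinates $k = \pi_o(i)$, there is some other active coordinate $X_j$ with $X_j > X_i + \epsilon$ that maps into $k$. So the whole derivation goes through verbatim with $X_i$ replaced by $X_i + \epsilon$ inside the indicator $\mathbbm{1}_{X_j > X_i}$, which after marginalization turns $1 - \Phi(\alpha)$ into $1 - \Phi(\alpha + \epsilon)$.

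Concretely, first I would fix a single sketch coordinate $k$ and condition on $X_i = \alpha$ and on the realization of the other coordinates; the probability that no active $X_j$ with $X_j > \alpha + \epsilon$ maps to $k$ is $(1 - 2/d_\circ)^{h \sum_{j\neq i}\mathbbm{1}_{X_j\text{ active}}\mathbbm{1}_{X_j > \alpha+\epsilon}}$. Then I would use the large-$d_\circ$ approximation $e^{-1}\approx(1-2/d_\circ)^{d_\circ/2}$, and marginalize over the $X_j$'s via the same first-order Taylor expansion around $0$ used in the earlier proof, giving
\begin{equation*}
\probability\big[\text{``$k$ is clean''} \mid X_i = \alpha\big] \approx \exp\Big(-\tfrac{2h}{d_\circ}\big(1 - \Phi(\alpha+\epsilon)\big)\textstyle\sum_{j\neq i} p_j\Big).
\end{equation*}
By independence of the $h$ random mappings, the probability that \emph{all} $h$ sketch coordinates fail to be clean — i.e. $\tilde X_i - X_i > \epsilon$ — conditioned on $X_i = \alpha$ is $\big[1 - \exp(-\tfrac{2h}{d_\circ}(1-\Phi(\alpha+\epsilon))\sum_{j\neq i}p_j)\big]^h$. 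Integrating against the density $\phi(\alpha)$ of $X_i$ and taking the complement yields exactly the claimed expression for $\probability[\tilde X_i - X_i \le \epsilon]$.

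I do not expect a genuine obstacle here, since this is essentially a parametrized rerun of Theorem~\ref{theorem:sketching:sinnamon:sketching-error} (indeed, setting $\epsilon = 0$ and taking the complement recovers it). The one point that needs a little care is the bookkeeping of which event is ``good'': $\tilde X_i$ is a \emph{minimum} over $h$ sketch entries each of which is a \emph{maximum} of active values mapping into it, so ``$\tilde X_i \le X_i + \epsilon$'' is the \emph{union} over the $h$ coordinates of ``this coordinate's max does not exceed $X_i+\epsilon$,'' and it is cleanest to argue via its complement (an intersection) as above. I would also note that the same scheme gives the lower-bound-sketch analogue by replacing $1-\Phi(\alpha+\epsilon)$ with $\Phi(\alpha-\epsilon)$, and that the approximations invoked (large $d_\circ$, first-order Taylor expansion) are identical in nature and validity to those already accepted in the preceding theorem.
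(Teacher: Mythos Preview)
Your proposal is correct and follows essentially the same approach as the paper's proof: both argue via the complementary event $\tilde{X}_i - X_i > \epsilon$, observe that this requires every one of the $h$ sketch coordinates to collide with some active $X_j > X_i + \epsilon$, apply the large-$d_\circ$ approximation and first-order Taylor marginalization exactly as in Theorem~\ref{theorem:sketching:sinnamon:sketching-error}, and then integrate against $\phi(\alpha)$. Your remark that setting $\epsilon = 0$ recovers the earlier theorem and your careful bookkeeping of the min-over-max structure are both apt.
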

\begin{proof}
We begin by quantifying the conditional probability
$\probability[\tilde{X}_i - X_i \leq \epsilon \;|\; X_i = \alpha]$.
Conceptually, the event in question happens when all values that collide
with $X_i$ are less than or equal to $X_i + \epsilon$.
This event can be characterized as the complement of the event that
all $h$ sketch coordinates that contain $X_i$ collide with values greater
than $X_i + \epsilon$. Using this complementary event, we can write the
conditional probability as follows:
\begin{align*}
    \probability[\tilde{X}_i - X_i \leq \epsilon \;|\; X_i = \alpha] &=
    1 - \big[ 1 - (1 - \frac{2}{d_\circ})^{h (1 - \Phi(\alpha + \epsilon)) \sum_{j \neq i} p_j} \big]^h \\
    &\approx 1 - \Bigg[ 1 - \exp\Big(-\frac{2h}{d_\circ} (1 - \Phi(\alpha + \epsilon)) \sum_{j \neq i} p_j \Big) \Bigg]^h.
\end{align*}
We complete the proof by computing the marginal distribution over the support.
\end{proof}

Given the CDF of $\tilde{X}_i - X_i$ and the fact that $\tilde{X}_i - X_i \geq 0$,
it follows that its expected value conditioned on $X_i$ being active is:

\begin{lemma}
Under the conditions of Theorem~\ref{theorem:sketching:sinnamon:sketching-error-cdf}:
\begin{align*}
    \ev[\tilde{X}_i - X_i]
    \approx \int_{0}^{\infty} \int \Bigg[ 1 - \exp\Big(-\frac{2h}{d_\circ} (1 - \Phi(\alpha + \epsilon)) \sum_{j \neq i} p_j \Big) \Bigg]^h \phi(\alpha)\; d\alpha \; d\epsilon.
\end{align*}
\end{lemma}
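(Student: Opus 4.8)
The plan is to exploit the single structural fact established in Theorem~\ref{theorem:sketching:sinnamon:upper-bound}---namely that $\tilde{X}_i - X_i \geq 0$ whenever $X_i$ is active---together with the tail-integral (layer-cake) representation of the expectation of a non-negative random variable. Concretely, for any random variable $Y \geq 0$ we have $\ev[Y] = \int_0^\infty \probability[Y > \epsilon]\, d\epsilon$. Applying this with $Y = \tilde{X}_i - X_i$, conditioned throughout on $X_i$ being active, reduces the problem to integrating the survival function $\epsilon \mapsto \probability[\tilde{X}_i - X_i > \epsilon]$ over $\epsilon \in (0, \infty)$.

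First I would invoke Theorem~\ref{theorem:sketching:sinnamon:sketching-error-cdf}, which gives the approximation
\begin{equation*}
    \probability[\tilde{X}_i - X_i \leq \epsilon] \approx 1 - \int \Big[ 1 - \exp\Big(-\tfrac{2h}{d_\circ} (1 - \Phi(\alpha + \epsilon)) \textstyle\sum_{j \neq i} p_j \Big) \Big]^h \phi(\alpha)\, d\alpha,
\end{equation*}
and take complements to obtain
\begin{equation*}
    \probability[\tilde{X}_i - X_i > \epsilon] \approx \int \Big[ 1 - \exp\Big(-\tfrac{2h}{d_\circ} (1 - \Phi(\alpha + \epsilon)) \textstyle\sum_{j \neq i} p_j \Big) \Big]^h \phi(\alpha)\, d\alpha.
\end{equation*}
Substituting this expression for the survival function into $\int_0^\infty \probability[\tilde{X}_i - X_i > \epsilon]\, d\epsilon$ yields precisely the claimed double integral. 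That is essentially the whole argument; the remaining work is bookkeeping.

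The one step deserving care is the interchange of the $d\epsilon$ and $d\alpha$ integrals that is implicit in stating the result in the order written. Since the integrand $[1 - \exp(-\cdots)]^h \phi(\alpha)$ is non-negative (it is a product of a probability in $[0,1]$ raised to a positive power and a density), Tonelli's theorem applies and the order of integration is immaterial, so no additional hypotheses are needed. I would also note explicitly that the ``$\approx$'' is inherited verbatim from Theorem~\ref{theorem:sketching:sinnamon:sketching-error-cdf}: the layer-cake identity itself is exact, so the only approximation is the one already present in the CDF expression. The main (very mild) obstacle is thus simply flagging that the tail-integral identity requires $\tilde{X}_i - X_i \geq 0$, which is exactly what Theorem~\ref{theorem:sketching:sinnamon:upper-bound} guarantees---without that sign property the layer-cake formula would pick up a contribution from the negative half-line and the clean form would break.
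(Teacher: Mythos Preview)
Your proposal is correct and is precisely the argument the paper uses: the paper's entire justification is the one-line remark that ``given the CDF of $\tilde{X}_i - X_i$ and the fact that $\tilde{X}_i - X_i \geq 0$, it follows that its expected value is\ldots,'' i.e., exactly the tail-integral identity applied to the survival function from Theorem~\ref{theorem:sketching:sinnamon:sketching-error-cdf}. If anything, you supply more detail than the paper (the Tonelli justification and the remark on where the $\approx$ originates); one small nit is that the non-negativity $\tilde{X}_i \geq X_i$ is more directly a consequence of the definition of $\overline{X}$ as a coordinate-wise maximum than of Theorem~\ref{theorem:sketching:sinnamon:upper-bound}, which concerns the full inner product.
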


\subsubsection{Case Study: Gaussian Vectors}
Let us make the analysis more concrete by applying the results to random Gaussian vectors.
In other words, suppose all active $X_i$'s are drawn from a zero-mean, unit-variance Gaussian
distribution. We can derive a closed-form expression for the overestimation probability
as the following corollary shows.

\begin{corollary}
    \label{corollary:sketching:sinnamon:sketching-error:gaussian}
    Suppose the probability that a coordinate is active, $p_i$, is equal to $p$
    for all coordinates of the random vector $X \in \mathbb{R}^d$.
    When an active $X_i$, drawn from $\mathcal{N}(0, 1)$, is estimated using the
    upper-bound sketch with Equation~(\ref{equation:sketching:sinnamon:upper-bound-estimator}),
    the overestimation probability is:
    \begin{equation*}
    \probability\big[ \tilde{X}_i > X_i \big] \approx 1 + \sum_{k=1}^{h} \binom{h}{k} (-1)^k \frac{d_\circ}{2kh(d-1)p} \big(1 - e^{-\frac{2kh(d-1)p}{d_\circ}} \big).
    \end{equation*}
\end{corollary}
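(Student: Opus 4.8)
The plan is to start from Theorem~\ref{theorem:sketching:sinnamon:sketching-error}, which already gives
\begin{equation*}
    \probability\big[ \tilde{X}_i > X_i \big] \approx \int \Bigl[ 1 - \exp\Bigl(-\tfrac{2h}{d_\circ}\bigl(1 - \Phi(\alpha)\bigr)\textstyle\sum_{j \neq i} p_j \Bigr) \Bigr]^h \phi(\alpha)\, d\alpha,
\end{equation*}
and specialize it to the stated setting: all activation probabilities equal to $p$, so that $\sum_{j \neq i} p_j = (d-1)p$, and $X_i \sim \mathcal{N}(0,1)$, so that $\phi$ and $\Phi$ are the standard Gaussian PDF and CDF. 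The first move is simply substitution, giving an integral of $\bigl[1 - e^{-c(1-\Phi(\alpha))}\bigr]^h \phi(\alpha)\, d\alpha$ with $c = 2h(d-1)p/d_\circ$.

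Next I would expand the $h$-th power with the binomial theorem:
\begin{equation*}
    \bigl[1 - e^{-c(1-\Phi(\alpha))}\bigr]^h = \sum_{k=0}^{h} \binom{h}{k} (-1)^k e^{-ck(1-\Phi(\alpha))}.
\end{equation*}
Pulling the sum outside the integral, the $k=0$ term contributes $\int \phi(\alpha)\, d\alpha = 1$, which accounts for the leading $1$ in the corollary. For each $k \ge 1$ the remaining task is to evaluate $\int e^{-ck(1-\Phi(\alpha))}\phi(\alpha)\, d\alpha$. The key trick here is the change of variables $u = \Phi(\alpha)$, so $du = \phi(\alpha)\, d\alpha$ and the limits become $0$ to $1$; the integral collapses to $\int_0^1 e^{-ck(1-u)}\, du$, which is elementary and equals $\tfrac{1}{ck}\bigl(1 - e^{-ck}\bigr)$. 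Substituting $c = 2h(d-1)p/d_\circ$ gives $\tfrac{d_\circ}{2kh(d-1)p}\bigl(1 - e^{-2kh(d-1)p/d_\circ}\bigr)$, which is exactly the $k$-th summand in the claimed expression. Collecting the $k=0$ and $k\ge 1$ pieces yields the stated formula.

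I do not expect a serious obstacle here — the proof is essentially a clean substitution followed by a textbook integral — but the one place that warrants care is the validity of the change of variables $u=\Phi(\alpha)$ over the whole real line: $\Phi$ is a strictly increasing bijection from $\mathbb{R}$ onto $(0,1)$, so the substitution is legitimate and the Gaussian PDF factor is absorbed exactly, with no Jacobian leftover and no boundary terms. A secondary point worth a sentence is that the ``$\approx$'' is inherited verbatim from Theorem~\ref{theorem:sketching:sinnamon:sketching-error} (the large-$d_\circ$ and first-order Taylor approximations used there); everything done in this corollary is an exact manipulation of that approximate expression, so no new approximation error is introduced. I would state the proof in three short steps: (i) substitute $p_i = p$ and the Gaussian $\phi,\Phi$; (ii) binomially expand and interchange sum and integral; (iii) apply $u = \Phi(\alpha)$ and integrate, then simplify the constant.
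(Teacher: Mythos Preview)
Your proposal is correct, and it takes a cleaner route than the paper. The paper first isolates the case $h=1$ as a separate lemma, evaluating $\int e^{-\beta(1-\Phi(\alpha))}\phi(\alpha)\,d\alpha$ by splitting the real line at $0$, writing $1-\Phi(\alpha) = \tfrac12 \mp \lambda(\alpha)$ on each half with $\lambda(\alpha)=\int_0^\alpha \phi(t)\,dt$, and integrating each piece by recognizing an antiderivative; only afterwards does it invoke the binomial theorem to handle general $h$. Your approach reverses the order---binomially expand first, then evaluate each term---and, crucially, replaces the paper's two-piece manual integration with the single substitution $u=\Phi(\alpha)$, which collapses the integral to $\int_0^1 e^{-ck(1-u)}\,du$ in one stroke. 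This is strictly more economical: it dispenses with the auxiliary $h=1$ lemma, avoids any use of the specific functional form of the Gaussian CDF beyond $\Phi'=\phi$, and in fact shows that the same closed form holds for \emph{any} continuous distribution (the Gaussian assumption enters only through the ambient Theorem~\ref{theorem:sketching:sinnamon:sketching-error}, not the integral evaluation). The paper's approach buys nothing extra here; yours is simply a tighter argument for the same result.
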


We begin by proving the special case where $h=1$.

\begin{lemma}
    Under the conditions of Corollary~\ref{corollary:sketching:sinnamon:sketching-error:gaussian}
    with $h=1$, the probability that the upper-bound sketch overestimates the
    value of $X_i$ is:
    \begin{equation*}
        \probability\big[ \tilde{X}_i > X_i \big] \approx 1 - \frac{d_\circ}{2(d - 1)p} (1 - e^{-\frac{2(d - 1)p}{d_\circ}} ).
    \end{equation*}
\end{lemma}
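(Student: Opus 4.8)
The plan is to specialize Theorem~\ref{theorem:sketching:sinnamon:sketching-error} to $h = 1$ with $p_j = p$ for all $j$, and then evaluate the resulting one-dimensional integral in closed form. Setting $h = 1$ collapses the outer $h$-th power, and since $\sum_{j \neq i} p_j = (d-1)p$, the theorem gives
\begin{equation*}
    \probability\big[ \tilde{X}_i > X_i \big] \approx \int \Big[ 1 - \exp\Big(-\tfrac{2(d-1)p}{d_\circ}\big(1 - \Phi(\alpha)\big)\Big) \Big] \phi(\alpha)\, d\alpha.
\end{equation*}
Because $\phi$ is a probability density and integrates to $1$, the constant term contributes exactly $1$, so the whole computation reduces to evaluating $\int \exp\big(-c\,(1-\Phi(\alpha))\big)\,\phi(\alpha)\,d\alpha$, where I write $c = 2(d-1)p/d_\circ$ for brevity.

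The key step is the change of variables $u = \Phi(\alpha)$, which is legitimate since $\Phi$ is continuous and strictly increasing; then $du = \phi(\alpha)\,d\alpha$, and the limits become $0$ and $1$. This turns the integral into the elementary exponential integral
\begin{equation*}
    \int_0^1 e^{-c(1-u)}\,du = \frac{1}{c}\big(1 - e^{-c}\big).
\end{equation*}
Substituting $c = 2(d-1)p/d_\circ$ back in and subtracting from $1$ yields $1 - \tfrac{d_\circ}{2(d-1)p}\big(1 - e^{-2(d-1)p/d_\circ}\big)$, which is precisely the claimed expression.

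I do not expect a real obstacle here; the proof is just the two displayed computations above. The one point worth flagging is that the Gaussian hypothesis of the enclosing corollary is not actually used in this $h=1$ case — the substitution $u = \Phi(\alpha)$ works for any continuous CDF, so the closed form is distribution-free at this level. Gaussianity (or rather, just having a well-defined symmetric CDF) becomes relevant only when one passes from $h=1$ to general $h$ in Corollary~\ref{corollary:sketching:sinnamon:sketching-error:gaussian}: there one expands the $h$-th power inside the integrand via the binomial theorem and integrates term by term, producing $\int_0^1 e^{-kc(1-u)}\,du = \tfrac{1}{kc}(1-e^{-kc})$ for each $k \in [h]$, which reassembles into the alternating sum appearing in the corollary. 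For the present lemma I would simply remark, after the computation, that no property of the normal distribution beyond continuity of $\Phi$ was invoked.
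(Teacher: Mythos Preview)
Your proof is correct and in fact cleaner than the paper's. You and the paper both start from Theorem~\ref{theorem:sketching:sinnamon:sketching-error} specialized to $h=1$ and $p_j=p$, but the paper then substitutes the explicit Gaussian density $\phi(\alpha)=\tfrac{1}{\sqrt{2\pi}}e^{-\alpha^2/2}$, splits the domain into $(-\infty,0)$ and $(0,\infty)$, introduces $\lambda(\alpha)=\int_0^\alpha \phi(t)\,dt$, and evaluates each half by recognizing an exact antiderivative of the form $\tfrac{1}{\beta}e^{\beta\lambda(\alpha)}$ before recombining. Your single substitution $u=\Phi(\alpha)$ accomplishes the same thing in one line and, as you observe, uses only that $\Phi$ is a continuous CDF---so the closed form is distribution-free at $h=1$, a point the paper's Gaussian-specific calculation obscures. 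Your remark on the extension to general $h$ via the binomial expansion and the same substitution is also correct and is exactly how the paper proves Corollary~\ref{corollary:sketching:sinnamon:sketching-error:gaussian}; again the Gaussian hypothesis is not actually needed there either, only continuity of $\Phi$.
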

\begin{proof}
    From Theorem~\ref{theorem:sketching:sinnamon:sketching-error} we have that:
    \begin{align*}
        \probability\big[ \tilde{X}_i > X_i \big] &\approx 
        \int {\big[ 1 - e^{-\frac{2h}{d_\circ} (1 - \Phi(\alpha)) (d - 1)p} \big]^h d\probability(\alpha)} \\
        &\overset{h=1}{=} \int {\big[ 1 - e^{-\frac{2 (1 - \Phi(\alpha)) (d - 1) p}{d_\circ}} \big] d\probability(\alpha)}.
    \end{align*}

    Given that $X_i$'s are drawn from a Gaussian distribution,
    and using the approximation above, we can rewrite the probability of error as:
    \begin{equation*}
        \probability\big[ \tilde{X}_i > X_i \big] \approx
        \frac{1}{\sqrt{2\pi}} \int_{-\infty}^{\infty} {\big[ 1 - e^{-\frac{2(d - 1)p}{d_\circ} (1 - \Phi(\alpha))} \big] e^{-\frac{\alpha^2}{2}} d\alpha}.
    \end{equation*}
    We now break up the right hand side into the following three sums,
    replacing $2(d - 1)p/d_\circ$ with $\beta$ for brevity:
    \begin{align}
        \probability\big[ \tilde{X}_i > X_i \big] &\approx
        \int_{-\infty}^{\infty} \frac{1}{\sqrt{2\pi}} e^{-\frac{\alpha^2}{2}} d\alpha \label{equation:sketching:sinnamon:proof:prob-error-gaussian:one} \\
        & - \int_{-\infty}^{0} \frac{1}{\sqrt{2\pi}} e^{-\beta (1 - \Phi(\alpha))} e^{-\frac{\alpha^2}{2}} d\alpha \label{equation:sketching:sinnamon:proof:prob-error-gaussian:negative} \\
        & - \int_{0}^{\infty} \frac{1}{\sqrt{2\pi}} e^{-\beta (1 - \Phi(\alpha))} e^{-\frac{\alpha^2}{2}} d\alpha \label{equation:sketching:sinnamon:proof:prob-error-gaussian:positive}.
    \end{align}
    The sum in~(\ref{equation:sketching:sinnamon:proof:prob-error-gaussian:one}) is equal to the quantity $1$. Let us turn to~(\ref{equation:sketching:sinnamon:proof:prob-error-gaussian:positive}) first. We have that:
    \begin{equation*}
        1 - \Phi(\alpha) \overset{\alpha > 0}{=} \frac{1}{2} - \underbrace{\int_{0}^{\alpha} \frac{1}{\sqrt{2\pi}} e^{-\frac{t^2}{2}}}_{\lambda(\alpha)} dt.
    \end{equation*}
    As a result, we can write:
    \begin{align*}
        \int_{0}^{\infty} \frac{1}{\sqrt{2\pi}} e^{-\beta (1 - \Phi(\alpha))} e^{-\frac{\alpha^2}{2}} d\alpha &= \int_{0}^{\infty} \frac{1}{\sqrt{2\pi}} e^{-\beta (\frac{1}{2} - \lambda(\alpha))} e^{-\frac{\alpha^2}{2}} d\alpha \nonumber \\
        &= e^{-\frac{\beta}{2}} \int_{0}^{\infty} \frac{1}{\sqrt{2\pi}} e^{\beta \lambda(\alpha)} e^{-\frac{\alpha^2}{2}} d\alpha \nonumber \\
        &= \frac{1}{\beta} e^{-\frac{\beta}{2}} e^{\beta\lambda(\alpha)} \big|_{0}^{\infty} = \frac{1}{\beta} e^{-\frac{\beta}{2}} \big( e^\frac{\beta}{2} - 1 \big) \nonumber \\
        &= \frac{1}{\beta} \big( 1 - e^{-\frac{\beta}{2}} \big)
    \end{align*}
    By similar reasoning, and noting that:
    \begin{equation*}
        1 - \Phi(\alpha) \overset{\alpha < 0}{=} \frac{1}{2} + \underbrace{\int_{\alpha}^{0} \frac{1}{\sqrt{2\pi}} e^{-\frac{t^2}{2}}}_{-\lambda(\alpha)} dt,
    \end{equation*}
    we arrive at:
    \begin{equation*}
        \int_{-\infty}^{0} \frac{1}{\sqrt{2\pi}} e^{-\beta (1 - \Phi(\alpha))} e^{-\frac{\alpha^2}{2}} d\alpha = \frac{1}{\beta} e^{-\frac{\beta}{2}} \big( 1 - e^{-\frac{\beta}{2}} \big)
    \end{equation*}

    Plugging the results above into Equations~(\ref{equation:sketching:sinnamon:proof:prob-error-gaussian:one}),~(\ref{equation:sketching:sinnamon:proof:prob-error-gaussian:negative}), and~(\ref{equation:sketching:sinnamon:proof:prob-error-gaussian:positive}) results in:
    \begin{align*}
        \probability\big[ \tilde{X}_i > X_i \big] &\approx 1 - \frac{1}{\beta} \big( 1 - e^{-\frac{\beta}{2}} \big) - \frac{1}{\beta} e^{-\frac{\beta}{2}} \big( 1 - e^{-\frac{\beta}{2}} \big) \\
        &= 1 - \frac{1}{\beta} \big( 1 - e^{-\frac{\beta}{2}} \big) \big( 1 + e^{-\frac{\beta}{2}} \big) \\
        &= 1 - \frac{d_\circ}{2(d-1)p} \big( 1 - e^{-\frac{2(d - 1)p}{d_\circ}} \big),
    \end{align*}
    which completes the proof.
\end{proof}

Given the result above, the solution for the general case of $h > 0$ is straightforward to obtain.

\begin{proof}[Proof of Corollary~\ref{corollary:sketching:sinnamon:sketching-error:gaussian}]
    Using the binomial theorem, we have that:
    \begin{align*}
        \probability\big[ \tilde{X}_i > X_i \big] &\approx
        \int {\big[ 1 - e^{-\frac{2h}{d_\circ} (1 - \Phi(\alpha)) (d - 1) p } \big]^h d\probability(\alpha)} \\
        &= \sum_{k=0}^{h} { \binom{h}{k} \int \big( -e^{-\frac{2h}{d_\circ} (1 - \Phi(\alpha)) (d - 1) p } \big)^{k} d\probability(\alpha)}.
    \end{align*}
    We rewrite the expression above for Gaussian variables to arrive at:
    \begin{equation*}
        \probability\big[ \tilde{X}_i > X_i \big] \approx \frac{1}{\sqrt{2\pi}} \sum_{k=0}^{h} \binom{h}{k} \int_{-\infty}^{\infty} {\big(- e^{-\frac{2h(d - 1)p}{d_\circ} (1 - \Phi(\alpha))} \big)^{k} e^{-\frac{\alpha^2}{2}} d\alpha}.
    \end{equation*}
    Following the proof of the previous lemma, we can expand the right hand side as follows:
    \begin{align*}
        \probability\big[ \tilde{X}_i > X_i \big] &\approx 
        1 + \frac{1}{\sqrt{2\pi}} \sum_{k=1}^{h} \binom{h}{k} (-1)^k \int_{-\infty}^{\infty} {e^{-\frac{2kh(d - 1)p}{d_\circ} (1 - \Phi(\alpha))} e^{-\frac{\alpha^2}{2}} d\alpha} \\
        &= 1 + \sum_{k=1}^{h} \binom{h}{k} (-1)^k \frac{d_\circ}{2kh(d - 1)p} (1 - e^{-\frac{2kh(d - 1)p}{d_\circ}} ),
    \end{align*}
    which completes the proof.
\end{proof}

Let us now consider the CDF of the overestimation error.

\begin{corollary}
    \label{corollary:sketching:sinnamon:sketching-error-cdf:gaussian}
    Under the conditions of Corollary~\ref{corollary:sketching:sinnamon:sketching-error:gaussian}
    the CDF of overestimation error for an active coordinate $X_i \sim \mathcal{N}(0, \sigma)$ is:
    \begin{equation*}
        \probability[\tilde{X}_i - X_i \leq \epsilon]
        \approx 1 - \Bigg[ 1 - \exp\Big(-\frac{2 h (d - 1) p}{d_\circ} (1 - \Phi^\prime(\epsilon)) \Big) \Bigg]^h,
    \end{equation*}
    where $\Phi^\prime(\cdot)$ is the CDF of a zero-mean Gaussian with standard deviation $\sigma\sqrt{2}$.
\end{corollary}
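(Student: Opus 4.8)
The plan is to specialize Theorem~\ref{theorem:sketching:sinnamon:sketching-error-cdf} to the stated regime and then collapse the resulting integral using the same large-$d_\circ$ approximation that underlies the earlier Gaussian results. First I would set $p_i = p$ for every coordinate, so that $\sum_{j \neq i} p_j = (d-1)p$, and abbreviate $c = \frac{2h(d-1)p}{d_\circ}$. Theorem~\ref{theorem:sketching:sinnamon:sketching-error-cdf} then reads
\[
\probability[\tilde{X}_i - X_i \leq \epsilon] \approx 1 - \int \Big[ 1 - \exp\big(-c\,(1 - \Phi(\alpha + \epsilon))\big) \Big]^h \phi(\alpha)\, d\alpha = 1 - \ev_{X_i}\Big[ \big[ 1 - e^{-c(1 - \Phi(X_i + \epsilon))} \big]^h \Big],
\]
where $\phi$ and $\Phi$ are the PDF and CDF of $\mathcal{N}(0, \sigma^2)$. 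So the entire task reduces to evaluating this expectation over the active coordinate $X_i$.

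The key step, and the conceptual heart of the corollary, is identifying the \emph{mean} collision probability. Let $Z \sim \mathcal{N}(0, \sigma^2)$ be independent of $X_i$. Then $1 - \Phi(X_i + \epsilon) = \probability_Z[\, Z > X_i + \epsilon \mid X_i \,]$, so taking the expectation over $X_i$ yields $\ev_{X_i}[\, 1 - \Phi(X_i + \epsilon) \,] = \probability[\, Z - X_i > \epsilon \,] = 1 - \Phi'(\epsilon)$, where $\Phi'$ is the CDF of $Z - X_i$. Since $Z$ and $X_i$ are independent zero-mean Gaussians each with variance $\sigma^2$, their difference is $\mathcal{N}(0, 2\sigma^2)$, i.e.\ a zero-mean Gaussian with standard deviation $\sigma\sqrt{2}$. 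This is exactly the $\Phi'$ in the statement, and the factor $\sqrt{2}$ is precisely this convolution-of-two-Gaussians effect, so this identification is where I would spend the most care in motivating the result.

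It remains to justify pulling the expectation inside the bracketed expression. For large $d_\circ$ the constant $c$ is small, hence $1 - e^{-c(1 - \Phi(X_i + \epsilon))}$ is uniformly small; following the same device used in the proofs of Theorem~\ref{theorem:sketching:sinnamon:sketching-error} and Corollary~\ref{corollary:sketching:sinnamon:sketching-error:gaussian} (replace the exponential by its first-order expansion, take the expectation, and re-exponentiate) one obtains
\[
\ev_{X_i}\Big[ \big[ 1 - e^{-c(1 - \Phi(X_i + \epsilon))} \big]^h \Big] \approx \Big[ 1 - e^{-c\,\ev_{X_i}[\,1 - \Phi(X_i + \epsilon)\,]} \Big]^h = \Big[ 1 - e^{-c(1 - \Phi'(\epsilon))} \Big]^h,
\]
and substituting back gives the claimed expression. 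For $h = 1$ this can be made tight by repeating the closed-form integral in the $h=1$ step of the proof of Corollary~\ref{corollary:sketching:sinnamon:sketching-error:gaussian}, carrying the shift $\epsilon$ through so that $1 - \Phi(\alpha)$ is everywhere replaced by $1 - \Phi(\alpha + \epsilon)$; for $h > 1$ one would first expand $[1 - e^{-ct}]^h$ by the binomial theorem and apply the approximation to each term $\ev_{X_i}[\, e^{-kc(1 - \Phi(X_i + \epsilon))} \,] \approx e^{-kc(1 - \Phi'(\epsilon))}$.

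The main obstacle is exactly this last approximation: the expectation-swap is only a leading-order heuristic (for $h > 1$ the second-order correction in $c$ is of the same order as the main term), so the honest content of a proof is (i) to pin down the $\sigma\sqrt{2}$ identification rigorously and (ii) to argue that the swap is accurate in the large-$d_\circ$, sparse regime in which all of these sketching results are stated — the same regime and same style of argument already invoked in Theorem~\ref{theorem:sketching:sinnamon:sketching-error} and Corollary~\ref{corollary:sketching:sinnamon:sketching-error:gaussian}.
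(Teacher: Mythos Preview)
Your approach is essentially the same as the paper's: start from Theorem~\ref{theorem:sketching:sinnamon:sketching-error-cdf}, observe that $1-\Phi(\alpha+\epsilon)=\probability[X_j>X_i+\epsilon\mid X_i=\alpha]$, and replace this by the unconditional probability $\probability[X_j-X_i>\epsilon]=1-\Phi'(\epsilon)$ for the difference of two independent $\mathcal{N}(0,\sigma^2)$ variables, which has standard deviation $\sigma\sqrt{2}$. The paper's proof is considerably terser---it simply writes ``$1-\Phi(\alpha+\epsilon)=1-\Phi'(\epsilon)$'' and substitutes---whereas you are more explicit (and more honest) about the fact that this is an expectation-swap approximation rather than a pointwise identity; your discussion of the first-order justification and its limitations actually goes beyond what the paper offers.
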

\begin{proof}
    When the active values of a vector are drawn from a Gaussian distribution,
    then the pairwise difference between any two coordinates has a Gaussian distribution with
    standard deviation $\sqrt{\sigma^2 + \sigma^2}=\sigma \sqrt{2}$.
    As such, we may estimate $1 - \Phi(\alpha + \epsilon)$ by considering the
    probability that a pair of coordinates (one of which having value $\alpha$) has 
    a difference greater than $\epsilon$: $\probability[X_i - X_j > \epsilon]$.
    With that idea, we may thus write:
    \begin{equation*}
        1 - \Phi(\alpha + \epsilon) = 1 - \Phi^\prime(\epsilon).
    \end{equation*}
    The claim follows by using the above identity in Theorem~\ref{theorem:sketching:sinnamon:sketching-error-cdf}.
\end{proof}

Corollary~\ref{corollary:sketching:sinnamon:sketching-error-cdf:gaussian}
enables us to find a particular sketch configuration given a desired bound on the probability of error,
as the following lemma shows.

\begin{lemma}
    Under the conditions of Corollary~\ref{corollary:sketching:sinnamon:sketching-error-cdf:gaussian},
    and given a choice of $\epsilon, \delta \in (0, 1)$ and the number of random mappings $h$,
    $\probability[\tilde{X}_i - X_i \leq \epsilon]$ with probability at least $1 - \delta$ if:
    \begin{equation*}
        d_\circ > - \frac{2h (d - 1)p (1 - \Phi^\prime(\epsilon))}{\log (1 - \delta^{1/h})}.
    \end{equation*}
\end{lemma}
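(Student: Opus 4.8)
The plan is to treat this lemma as a direct algebraic inversion of the cumulative distribution function stated in Corollary~\ref{corollary:sketching:sinnamon:sketching-error-cdf:gaussian}. That corollary gives the approximation
\begin{equation*}
    \probability[\tilde{X}_i - X_i \leq \epsilon]
    \approx 1 - \Big[ 1 - \exp\Big(-\tfrac{2 h (d - 1) p}{d_\circ} (1 - \Phi^\prime(\epsilon)) \Big) \Big]^h,
\end{equation*}
so the task is simply to find the range of $d_\circ$ for which the right-hand side is at least $1 - \delta$. First I would impose $1 - [\,\cdot\,]^h \geq 1 - \delta$, which rearranges to $[\,1 - \exp(-\tfrac{2h(d-1)p}{d_\circ}(1-\Phi^\prime(\epsilon)))\,]^h \leq \delta$. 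Since the base of the power is in $(0,1)$ and $h \geq 1$, taking $h$-th roots (a monotone operation on nonnegative reals) yields $1 - \exp(-\tfrac{2h(d-1)p}{d_\circ}(1-\Phi^\prime(\epsilon))) \leq \delta^{1/h}$, hence $\exp(-\tfrac{2h(d-1)p}{d_\circ}(1-\Phi^\prime(\epsilon))) \geq 1 - \delta^{1/h}$.

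Next I would take logarithms on both sides. Here is where care is needed: because $\delta \in (0,1)$ implies $\delta^{1/h} \in (0,1)$, we have $1 - \delta^{1/h} \in (0,1)$ and therefore $\log(1-\delta^{1/h}) < 0$. Taking $\log$ preserves the inequality, giving $-\tfrac{2h(d-1)p}{d_\circ}(1-\Phi^\prime(\epsilon)) \geq \log(1-\delta^{1/h})$; multiplying through by $-1$ flips it to $\tfrac{2h(d-1)p}{d_\circ}(1-\Phi^\prime(\epsilon)) \leq -\log(1-\delta^{1/h})$, where the right-hand side is now positive. Finally, since $1-\Phi^\prime(\epsilon) > 0$ (as $\Phi^\prime$ is a CDF evaluated at a finite point and $\epsilon < 1 < \infty$, or more simply because it is a tail probability that is strictly positive), the numerator $2h(d-1)p(1-\Phi^\prime(\epsilon))$ is positive, so dividing and rearranging produces
\begin{equation*}
    d_\circ \geq \frac{2h(d-1)p(1-\Phi^\prime(\epsilon))}{-\log(1-\delta^{1/h})} = -\frac{2h(d-1)p(1-\Phi^\prime(\epsilon))}{\log(1-\delta^{1/h})},
\end{equation*}
which is the stated bound (with the strict inequality simply absorbing the approximate nature of the corollary's expression and the desire for strict guarantee).

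The main obstacle is not conceptual but bookkeeping: there are two sign reversals lurking in the argument (one from multiplying by $-1$ after taking $\log$, one implicit in writing $-\log(1-\delta^{1/h})$ as a positive quantity in the denominator), and it is easy to drop or double-count one of them and end up with an inequality pointing the wrong way. I would therefore state explicitly at each step the sign of every quantity involved — $\log(1-\delta^{1/h}) < 0$, $-\log(1-\delta^{1/h}) > 0$, $1 - \Phi^\prime(\epsilon) > 0$ — so the final division preserves the direction of the inequality. A secondary, minor point worth a sentence is that all of this inherits the approximation error already present in Corollary~\ref{corollary:sketching:sinnamon:sketching-error-cdf:gaussian} (and upstream, in Theorem~\ref{theorem:sketching:sinnamon:sketching-error-cdf}'s Taylor-expansion step), so the conclusion should be read as holding to the same order of approximation rather than exactly.
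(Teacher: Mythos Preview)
Your approach is correct and is exactly the argument the paper intends: the lemma is stated in the paper without proof, precisely because it is the straightforward algebraic inversion of Corollary~\ref{corollary:sketching:sinnamon:sketching-error-cdf:gaussian} that you have carried out. Your bookkeeping of the sign reversals is accurate, and your remark that the conclusion inherits the approximation from the upstream results is a fair caveat.
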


\subsubsection{Error of Inner Product}
We have thus far quantified the probability that a value estimated
from the upper-bound sketch overestimates the original value of a randomly
chosen coordinate. We also characterized the distribution of the overestimation error
for a single coordinate and derived expressions for special distributions.
In this section, we quantify the overestimation error when approximating
the inner product between a fixed query point and a random data point using
Algorithm~\ref{algorithm:sketching:sinnamon:distance}.

To make the notation less cluttered, however, let us denote by $\tilde{X}_i$
our estimate of $X_i$. The estimated quantity is $0$ if $i \notin \mathit{nz}(X)$.
Otherwise, it is estimated either from the upper-bound sketch or the lower-bound sketch,
depending on the sign of $q_i$. Finally denote by $\tilde{X}$ a reconstruction of $X$
where each $\tilde{X}_i$ is estimated as described above.

Consider the expected value of $\tilde{X}_i - X_i$ conditioned on
$X_i$ being active---that is a quantity we analyzed previously.
Let $\mu_i = \ev[\tilde{X}_i - X_i; \; X_i \text{ is active}]$.
Similarly denote by $\sigma_i^2$ its variance when $X_i$ is active.
Given that $X_i$ is active with probability $p_i$ and inactive with probability $1 - p_i$,
it is easy to show that $\ev[\tilde{X}_i - X_i] = p_i \mu_i$ (note we have removed
the condition on $X_i$ being active)
and that its variance $\var[\tilde{X}_i - X_i] = p_i \sigma_i^2 + p_i (1 - p_i) \mu_i^2$.

With the above in mind, we state the following result.

\begin{theorem}
    Suppose that $q \in \mathbb{R}^d$ is a sparse vector.
    Suppose in a random sparse vector $X \in \mathbb{R}^d$,
    a coordinate $X_i$ is active with probability $p_i$ and, when active,
    draws its value from some well-behaved distribution (i.e., with finite expectation,
    variance, and third moment).
    If $\mu_i = \ev[\tilde{X}_i - X_i;\; X_i \text{ is active}]$ and
    $\sigma_i^2 = \var[\tilde{X}_i - X_i;\; X_i \text{ is active}]$,
    then the random variable $Z$ defined as follows:
    \begin{equation}
        Z \triangleq \frac{\langle q,\, \tilde{X} - X \rangle - \sum_{i \in \mathit{nz}(q)} q_i p_i \mu_i}{\sqrt{\sum_{i \in \mathit{nz}(q)} q_i^2 \big( p_i \sigma_i^2 + p_i (1 - p_i) \mu_i^2 \big)}},
    \end{equation}
    approximately tends to a standard Gaussian distribution as $\lvert \mathit{nz}(q) \rvert$ grows.
\end{theorem}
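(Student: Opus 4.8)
The plan is to recognize $\langle q, \tilde{X} - X \rangle$ as a sum of $\lvert \mathit{nz}(q) \rvert$ contributions, one per non-zero query coordinate, and to invoke a central limit theorem. Concretely, write $\langle q, \tilde{X} - X \rangle = \sum_{i \in \mathit{nz}(q)} W_i$ with $W_i = q_i (\tilde{X}_i - X_i)$. From the discussion immediately preceding the theorem we already have $\ev[W_i] = q_i p_i \mu_i$ and $\var[W_i] = q_i^2 \big( p_i \sigma_i^2 + p_i (1 - p_i) \mu_i^2 \big)$, so that $Z$ is precisely the centered sum $\sum_{i \in \mathit{nz}(q)} (W_i - \ev[W_i])$ divided by $\sqrt{\sum_{i \in \mathit{nz}(q)} \var[W_i]}$. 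It therefore suffices to show that this standardized sum converges in distribution to $\mathcal{N}(0,1)$ as the number of summands grows.

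First I would handle the (near-)independence of the $W_i$. The coordinates $X_i$ of the data vector are mutually independent by the probabilistic model, but the estimates $\tilde{X}_i$ couple distinct coordinates through the shared random mappings $\pi_o$, since $\tilde{X}_i$ is inflated only when another active coordinate collides with $i$ in the sketch. I would make this precise by conditioning on the hash functions $\{\pi_o\}_{o \in [h]}$: given the mappings, every sketch bucket is a deterministic function of the independent $X_j$'s, and two query coordinates $i, i'$ interact only if they share a bucket, which happens for an $O(h/d_\circ)$ fraction of pairs. This yields a sparse dependency graph, so a CLT for sums with bounded dependence (e.g. the dependency-graph / Stein's-method CLT) applies; alternatively one argues that the off-diagonal covariance contributions are lower order and proceeds as in the independent case. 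This is the step I expect to be the main obstacle, and it is the reason the theorem is stated as holding only \emph{approximately}.

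Granting (approximate) independence, I would then verify a Lyapunov condition for the triangular array $\{W_i - \ev[W_i]\}_{i \in \mathit{nz}(q)}$. The overestimation error $\tilde{X}_i - X_i$ is non-negative by Theorem~\ref{theorem:sketching:sinnamon:upper-bound}, and in the sparse regime where the collision load $\sum_{j} p_j$ stays bounded its first three moments are finite and bounded uniformly in $d$ by the ``well-behaved'' assumption on the active-value distribution; hence $\ev\lvert W_i - \ev[W_i] \rvert^3 \le C \lvert q_i \rvert^3$ with $C$ independent of $d$. Writing $s^2 = \sum_{i \in \mathit{nz}(q)} \var[W_i]$ and using $\var[W_i] = q_i^2 (p_i \sigma_i^2 + p_i(1-p_i)\mu_i^2) \ge c\, q_i^2$ for the non-degenerate coordinates, we obtain $s^2 \ge c \sum_{i} q_i^2$, so the Lyapunov ratio satisfies $s^{-3}\sum_{i} \ev\lvert W_i - \ev[W_i]\rvert^3 = O\big( \sum_{i} \lvert q_i \rvert^3 / (\sum_{i} q_i^2)^{3/2} \big)$, which tends to $0$ as $\lvert \mathit{nz}(q) \rvert \to \infty$ provided the query's $\ell_2$-energy is not concentrated on a few coordinates. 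The Lyapunov CLT then yields $Z \xrightarrow{d} \mathcal{N}(0,1)$, as claimed.

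As a closing remark I would note that specializing the constants $\mu_i$ and $\sigma_i^2$ to the Gaussian model of Corollaries~\ref{corollary:sketching:sinnamon:sketching-error:gaussian} and~\ref{corollary:sketching:sinnamon:sketching-error-cdf:gaussian} makes the limiting variance $s^2$ explicit in closed form; this is not needed for the limit theorem itself but is what one would use to pick $d_\circ$ and $h$ so as to control the inner-product approximation error in practice.
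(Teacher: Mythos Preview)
Your proposal is correct and follows essentially the same route as the paper: decompose $\langle q, \tilde{X}-X\rangle$ into the per-coordinate terms $q_i(\tilde{X}_i - X_i)$, compute their mean and variance, and then invoke a third-moment CLT (you use Lyapunov, the paper cites Berry--Esseen). If anything you are more careful than the paper, which simply acknowledges that the summands are ``weakly dependent'' and proceeds as if they were independent, whereas you sketch the hash-collision dependency argument and point to a Stein/dependency-graph CLT; the paper's proof does not carry that step out.
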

\begin{proof}
    Let us expand the inner product between $q$ and $\tilde{X} - X$ as follows:
    \begin{equation}
        \langle q, \tilde{X} - X \rangle = \sum_{i \in \mathit{nz}(q)} q_i \underbrace{(\tilde{X}_i - X_i)}_{Z_i}.
    \end{equation}

    The expected value of $\langle q, \tilde{X} - X \rangle$ is:
    \begin{equation*}
        \ev[\langle q, \tilde{X} - X \rangle] = \sum_{i \in \mathit{nz}(q)} q_i \ev[Z_i] =
        \sum_{i \in \mathit{nz}(q)} q_i p_i \mu_i.
    \end{equation*}
    Its variance is:
    \begin{equation*}
        \var[\langle q, \tilde{X} - X \rangle] = \sum_{i \in \mathit{nz}(q)} q_i^2 \var[Z_i] =
        \sum_{i \in \mathit{nz}(q)} q_i^2 \big( p_i \sigma_i^2 + p_i (1 - p_i) \mu_i^2 \big)
    \end{equation*}

    Because we assumed that the distribution of $X_i$ is well-behaved,
    we can conclude that $\var[Z_i] > 0$ and that $\ev[|Z_i|^3] < \infty$.
    If we operated on the assumption that $q_i Z_i$'s are independent---in reality,
    they are weakly dependent---albeit not identically distributed,
    we can appeal to the Berry-Esseen theorem to complete the proof.
\end{proof}

\subsection{Fixing the Sketch Size}

It is often desirable for a sketching algorithm to produce a sketch with a constant size.
That makes the size of a collection of sketches predictable, which is often required
for resource allocation. Algorithm~\ref{algorithm:sketching:sinnamon:sketching},
however, produces a sketch whose size is variable. That is because the sketch contains
the set of non-zero coordinates of the vector.

It is, however, straightforward to fix the sketch size. The key to that is
the fact that Algorithm~\ref{algorithm:sketching:sinnamon:distance} uses
$\mathit{nz}(u)$ of a vector $u$ only to ascertain if a query's non-zero coordinates
are present in the vector $u$. In effect, all the sketch must provide is a mechanism
to perform set membership tests. That is precisely what fixed-size signatures such
as Bloom filters~\citep{bloom-filter} do, albeit probabilistically.

\section{Sketching by Sampling}

Our final sketching algorithm is designed specifically for inner product
and is due to~\cite{daliri2023sampling}. The guiding principle is simple:
coordinates with larger values contribute more heavily to inner product than
coordinates with smaller values. That is an obvious fact that is a direct result
of the linearity of inner product: $\langle u, v \rangle = \sum_i u_i v_i$.

\cite{daliri2023sampling} use that insight as follows.
When forming the sketch of vector $u$, they sample coordinates (without replacement)
from $u$ according to a distribution defined by the magnitude of each coordinate.
Larger values are given a higher chance of being sampled, while smaller values are
less likely to be selected. The sketch, in the end, is a data structure that
is made up of the index of sampled coordinates, their values, and additional statistics.

The research question here concerns the sampling process: How must we sample
coordinates such that any distance computed from the sketch is an unbiased estimate
of the inner product itself? The answer to that question also depends, of course, on how
we compute the distance from a pair of sketches. Considering the non-linearity of the sketch,
distance computation can no longer be the inner product of sketches.

In the remainder of this section, we review the sketching algorithm,
describe distance computation given sketches, and analyze the expected error.
In our presentation, we focus on the simpler variant of the algorithm proposed
by~\cite{daliri2023sampling}, dubbed ``threshold sampling.''

\subsection{The Sketching Algorithm}

\begin{algorithm}[!t]
\SetAlgoLined
{\bf Input: }{Vector $u \in \mathbb{R}^d$.}\\
{\bf Requirements: }{a random mapping $\pi: [d] \rightarrow [0, 1]$.}\\
\KwResult{Sketch of $u$, $\{ \mathcal{I}, \mathcal{V}, \lVert u \rVert_2^2 \}$ consisting of
the index and value of sampled coordinates in $\mathcal{I}$ and $\mathcal{V}$, and the squared norm of the vector.}

\begin{algorithmic}[1]
    \STATE $\mathcal{I}, \mathcal{V} \leftarrow \emptyset$
    \FOR{$i \in \mathit{nz}(u)$}
        \STATE $\theta \leftarrow d_\circ \frac{u_i^2}{\lVert u \rVert_2^2}$
        \IF{$\pi(i) \leq \theta$}
            \STATE Append $i$ to $\mathcal{I}$, $u_i$ to $\mathcal{V}$
        \ENDIF
    \ENDFOR
    
    \RETURN $\{ \mathcal{I}, \mathcal{V}, \lVert u \rVert_2^2 \}$
 \end{algorithmic}
 \caption{Sketching with threshold sampling}
\label{algorithm:sketching:sampling:sketching}
\end{algorithm}

Algorithm~\ref{algorithm:sketching:sampling:sketching} presents the ``threshold sampling''
sketching technique by~\cite{daliri2023sampling}. It is assumed throughout that the desired
sketch size is $d_\circ$, and that the algorithm has access to a random hash function
$\pi$ that maps integers in $[d]$ to the unit interval.

The algorithm iterates over all non-zero coordinates of the input vector and makes a decision
as to whether that coordinate should be added to the sketch. The decision is made based
on the relative magnitude of the coordinate, as weighted by $u_i^2/\lVert u \rVert_2^2$.
If $u_i^2$ is large, coordinate $i$ has a higher chance of being sampled, as desired.

Notice, however, that the target sketch size $d_\circ$ is realized \emph{in expectation} only.
In other words, we may end up with more than $d_\circ$ coordinates in the sketch, or
we may have fewer entries. \cite{daliri2023sampling} propose a different variant of the algorithm
that is guaranteed to give a fixed sketch size; we refer the reader to their work for details.

\subsection{Inner Product Approximation}

When sketching a vector using a JL transform, we simply get a vector in the
$d_\circ$-dimensional Euclidean space, where inner product is well-defined.
So if $\phi(u)$ and $\phi(v)$ are sketches of two $d$-dimensional vectors $u$
and $v$, we approximate $\langle u, v \rangle$ with $\langle \phi(u), \phi(v) \rangle$.
It could not be more straightforward.

A sketch produced by Algorithm~\ref{algorithm:sketching:sampling:sketching}, however,
is not as nice. Approximating $\langle u, v \rangle$ from their sketches requires
a custom distance function defined for the sketch. That is precisely what
Algorithm~\ref{algorithm:sketching:sampling:distance} outlines.

\begin{algorithm}[!t]
\SetAlgoLined
{\bf Input: }{Sketches of vectors $u$ and $v$: $\{ \mathcal{I}_u, \mathcal{V}_u, \lVert u \rVert_2^2 \}$
and $\{ \mathcal{I}_v, \mathcal{V}_v, \lVert v \rVert_2^2 \}$.}\\
\KwResult{An unbiased estimate of $\langle u, v \rangle$.}

\begin{algorithmic}[1]
    \STATE $s \leftarrow 0$
    \FOR{$i \in \mathcal{I}_u \cap \mathcal{I}_v$}
        \STATE $s \leftarrow s + u_i v_i/\min ( 1, d_\circ u_i^2/\lVert u \rVert_2^2, d_\circ v_i^2/\lVert v \rVert_2^2 )$
    \ENDFOR
    
    \RETURN $s$
 \end{algorithmic}
 \caption{Distance computation for threshold sampling}
\label{algorithm:sketching:sampling:distance}
\end{algorithm}

In the algorithm, it is understood that $u_i$ and $v_i$ corresponding to
$i \in \mathcal{I}_u \cap \mathcal{I}_v$ are present in $\mathcal{V}_u$
and $\mathcal{V}_v$, respectively. These quantities, along with $d_\circ$
and the norms of the vectors are used to weight each partial inner product.
The final quantity, as we will learn shortly, is an unbiased estimate of
the inner product between $u$ and $v$.

\subsection{Theoretical Analysis}

\begin{theorem}
    Algorithm~\ref{algorithm:sketching:sampling:sketching} produces sketches
    that consist of at most $d_\circ$ coordinates in expectation.
\end{theorem}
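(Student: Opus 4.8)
The plan is to count, in expectation, the number of coordinates that survive the sampling test on Line~4 of Algorithm~\ref{algorithm:sketching:sampling:sketching}, using nothing more than linearity of expectation together with the fact that the norms in the denominators were chosen precisely so the thresholds telescope.

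First I would introduce, for each $i \in \mathit{nz}(u)$, an indicator random variable $Z_i$ that equals $1$ if coordinate $i$ is appended to the sketch (i.e.\ $\pi(i) \leq \theta_i$ with $\theta_i = d_\circ u_i^2 / \lVert u \rVert_2^2$) and $0$ otherwise. Since $\pi(i)$ is uniform on $[0,1]$ and the test is $\pi(i) \leq \theta_i$, we have $\probability[Z_i = 1] = \min(1, \theta_i)$ — the clamp to $1$ coming from the case $\theta_i \geq 1$, where the coordinate is sampled deterministically. The size of the sketch is then the random variable $|\mathcal{I}| = \sum_{i \in \mathit{nz}(u)} Z_i$.

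Next I would take expectations and use linearity: $\ev[\,|\mathcal{I}|\,] = \sum_{i \in \mathit{nz}(u)} \min(1, \theta_i)$. Then I would bound $\min(1, \theta_i) \leq \theta_i$ termwise, so that
\begin{equation*}
    \ev[\,|\mathcal{I}|\,] \leq \sum_{i \in \mathit{nz}(u)} \theta_i = \sum_{i \in \mathit{nz}(u)} d_\circ \frac{u_i^2}{\lVert u \rVert_2^2} = \frac{d_\circ}{\lVert u \rVert_2^2} \sum_{i \in \mathit{nz}(u)} u_i^2 = \frac{d_\circ}{\lVert u \rVert_2^2} \lVert u \rVert_2^2 = d_\circ,
\end{equation*}
where the last sum over $\mathit{nz}(u)$ equals the full sum of squared coordinates because the zero coordinates contribute nothing. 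This gives the claim.

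There is no real obstacle here; the only point worth a sentence of care is the clamping — one should note that whenever $\theta_i > 1$ the coordinate is included with probability exactly $1$, not $\theta_i$, which is why the bound uses $\min(1,\theta_i) \leq \theta_i$ rather than an equality, and why the stated conclusion is an inequality ("at most $d_\circ$ in expectation") rather than an exact identity. I would also remark that equality holds precisely when $\theta_i \leq 1$ for every $i \in \mathit{nz}(u)$, i.e.\ when no single coordinate carries more than a $1/d_\circ$ fraction of the vector's energy.
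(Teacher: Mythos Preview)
Your proof is correct and follows essentially the same approach as the paper's: express $|\mathcal{I}|$ as a sum of indicators, take expectation via linearity to get $\sum_i \min(1, d_\circ u_i^2/\lVert u\rVert_2^2)$, and bound by $d_\circ$. Your treatment is slightly more explicit about the clamping $\min(1,\theta_i)$ and the equality condition, but the argument is the same.
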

\begin{proof}
    The number of sampled coordinates is $\lvert \mathcal{I} \rvert$.
    That quantity can be expressed as follows:
    \begin{equation*}
        \lvert \mathcal{I} \rvert = \sum_{i = 1}^d \mathbbm{1}_{i \in \mathcal{I}}.
    \end{equation*}
    Taking expectation of both sides and using the linearity of expectation,
    we obtain the following:
    \begin{equation*}
        \ev[\lvert \mathcal{I} \rvert] = \sum_i \ev[\mathbbm{1}_{i \in \mathcal{I}}]
        = \sum_i \min(1, d_\circ \frac{u_i^2}{\lVert u \rVert_2^2}) \leq d_\circ.
    \end{equation*}
\end{proof}

\begin{theorem}
    Algorithm~\ref{algorithm:sketching:sampling:distance} yields an unbiased estimate
    of inner product.
\end{theorem}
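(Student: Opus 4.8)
The plan is to show that the estimator $s$ returned by Algorithm~\ref{algorithm:sketching:sampling:distance} has expectation exactly $\langle u, v \rangle$. First I would fix the two vectors $u, v \in \mathbb{R}^d$ and recall that, for each coordinate $i$, the sketching algorithm includes $i$ in $\mathcal{I}_u$ precisely when $\pi_u(i) \leq \theta_i^u$ where $\theta_i^u = d_\circ u_i^2 / \lVert u \rVert_2^2$ (and analogously for $v$), so that $\probability[i \in \mathcal{I}_u] = \min(1, \theta_i^u)$. I would write $p_i \triangleq \min\big(1,\; d_\circ u_i^2/\lVert u \rVert_2^2,\; d_\circ v_i^2/\lVert v \rVert_2^2\big)$, which is exactly the denominator appearing in the algorithm.

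The key step is to express $s$ as a sum of per-coordinate contributions and take expectations term by term:
\begin{equation*}
    s = \sum_{i \in \mathit{nz}(u) \cap \mathit{nz}(v)} \frac{u_i v_i}{p_i} \, \mathbbm{1}_{i \in \mathcal{I}_u \cap \mathcal{I}_v},
\end{equation*}
so that by linearity of expectation,
\begin{equation*}
    \ev[s] = \sum_{i \in \mathit{nz}(u) \cap \mathit{nz}(v)} \frac{u_i v_i}{p_i} \, \probability[i \in \mathcal{I}_u \cap \mathcal{I}_v].
\end{equation*}
The crux is therefore to argue that $\probability[i \in \mathcal{I}_u \cap \mathcal{I}_v] = p_i$. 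If a single shared random mapping $\pi = \pi_u = \pi_v$ is used (as Algorithm~\ref{algorithm:sketching:sampling:sketching} suggests, taking $\pi: [d] \rightarrow [0,1]$), then the events $\{\pi(i) \leq \theta_i^u\}$ and $\{\pi(i) \leq \theta_i^v\}$ are nested: their intersection is $\{\pi(i) \leq \min(\theta_i^u, \theta_i^v)\}$, which has probability $\min(1, \theta_i^u, \theta_i^v) = p_i$. Substituting, each summand collapses to $u_i v_i$, and since $u_i v_i = 0$ whenever $i \notin \mathit{nz}(u) \cap \mathit{nz}(v)$, we get $\ev[s] = \sum_i u_i v_i = \langle u, v \rangle$, as claimed.

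The main obstacle I anticipate is being careful about the shared-randomness assumption: the estimator is unbiased only because the \emph{same} hash function $\pi$ governs inclusion in both sketches, which is what makes the two inclusion events comonotone rather than independent. I would state this explicitly (it is implicit in Algorithm~\ref{algorithm:sketching:sampling:sketching}'s ``Requirements'' line naming a single $\pi$) and note that the correlated coupling is essential — with independent hashes one would get $\probability[i \in \mathcal{I}_u \cap \mathcal{I}_v] = \min(1,\theta_i^u)\min(1,\theta_i^v) \neq p_i$ in general, breaking unbiasedness. A secondary point worth a sentence is that $p_i > 0$ for every $i \in \mathit{nz}(u) \cap \mathit{nz}(v)$ (since $u_i, v_i \neq 0$ there), so the division in the algorithm is well-defined on exactly the index set over which the loop runs.
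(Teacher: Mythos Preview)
Your proposal is correct and follows essentially the same approach as the paper: write $s$ as $\sum_i (u_i v_i/p_i)\,\mathbbm{1}_{i\in\mathcal{I}_u\cap\mathcal{I}_v}$, take expectations term by term, and use $\probability[i\in\mathcal{I}_u\cap\mathcal{I}_v]=p_i$ to collapse each summand to $u_iv_i$. If anything, you are more careful than the paper, which simply asserts that the joint inclusion probability equals $p_i$; your explicit justification via the shared hash $\pi$ and the nesting of the events $\{\pi(i)\le\theta_i^u\}$ and $\{\pi(i)\le\theta_i^v\}$ fills in the step the paper leaves to the reader.
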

\begin{proof}
    From the proof of the previous theorem, we know that coordinate $i$ of an arbitrary
    vector $u$ is included in the sketch with probability equal to:
    \begin{equation*}
        \min(1, d_\circ \frac{u_i^2}{\lVert u \rVert_2^2}).
    \end{equation*}
    As such, the odds that $i \in \mathcal{I}_u \cap \mathcal{I}_v$ is:
    \begin{equation*}
        p_i = \min(1, d_\circ \frac{u_i^2}{\lVert u \rVert_2^2}, d_\circ \frac{v_i^2}{\lVert v \rVert_2^2}).
    \end{equation*}
    Algorithm~\ref{algorithm:sketching:sampling:distance} gives us a weighted sum of
    the coordinates that are present in $\mathcal{I}_u \cap \mathcal{I}_v$. We can rewrite that
    sum using indicator functions as follows:
    \begin{equation*}
        \sum_{i = 1}^d \mathbbm{1}_{i \in \mathcal{I}_u \cap \mathcal{I}_v} \frac{u_i v_i}{p_i}.
    \end{equation*}
    In expectation, then:
    \begin{equation*}
        \ev\Big[\sum_{i = 1}^d \mathbbm{1}_{i \in \mathcal{I}_u \cap \mathcal{I}_v} \frac{u_i v_i}{p_i}\Big]
        = \sum_{i = 1}^d p_i \frac{u_i v_i}{p_i} = \langle u, v \rangle,
    \end{equation*}
    as required.
\end{proof}

\begin{theorem}
    \label{theorem:sketching:sampling:variance}
    If $S$ is the output of Algorithm~\ref{algorithm:sketching:sampling:distance}
    for sketches of vectors $u$ and $v$, then:
    \begin{equation*}
        \var[S] \leq \frac{2}{d_\circ} \max \Big( \lVert u_\ast \rVert_2^2 \lVert v \rVert_2^2,
        \lVert u \rVert_2^2 \lVert v_\ast \rVert_2^2 \Big),
    \end{equation*}
    where $u_\ast$ and $v_\ast$ are the vectors $u$ and $v$ restricted to
    the set of non-zero coordinates common to both vectors (i.e., $\ast = \{ i \;|\; u_i \neq 0 \land v_i \neq 0 \}$).
\end{theorem}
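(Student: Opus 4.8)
The plan is to exploit the fact that the estimator $S$ returned by Algorithm~\ref{algorithm:sketching:sampling:distance} is a sum of \emph{independent} contributions, one per coordinate in the common support $\ast$, so that its variance splits coordinatewise. First I would set up notation mirroring the preceding unbiasedness proof: write $a_i = d_\circ u_i^2 / \lVert u \rVert_2^2$ and $b_i = d_\circ v_i^2 / \lVert v \rVert_2^2$, and recall that because the \emph{same} random map $\pi$ (assigning independent uniform values in $[0,1]$ to the coordinates, the standard reading of Algorithm~\ref{algorithm:sketching:sampling:sketching}) is used for both sketches, coordinate $i$ lands in $\mathcal{I}_u \cap \mathcal{I}_v$ exactly when $\pi(i) \le \min(a_i, b_i)$, an event of probability $p_i = \min(1, a_i, b_i)$, and these events are independent across $i$. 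Setting $B_i = \mathbbm{1}_{i \in \mathcal{I}_u \cap \mathcal{I}_v}$, the algorithm returns $S = \sum_{i \in \ast} B_i\, u_i v_i / p_i$, since only coordinates common to the supports can contribute. By independence of the $B_i$'s and $\var[B_i] = p_i(1-p_i)$, one gets the exact identity $\var[S] = \sum_{i \in \ast} u_i^2 v_i^2 (1-p_i)/p_i$.

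The next step is to bound each summand. Coordinates with $p_i = 1$ drop out entirely, their $(1-p_i)$ factor being zero; this is the one place where the $(1-p_i)$ factor must be retained rather than crudely bounded by $1$. For every remaining coordinate $p_i = \min(a_i, b_i)$, so $u_i^2 v_i^2 (1-p_i)/p_i \le u_i^2 v_i^2 \max(1/a_i, 1/b_i) \le u_i^2 v_i^2 (1/a_i + 1/b_i) = v_i^2 \lVert u \rVert_2^2 / d_\circ + u_i^2 \lVert v \rVert_2^2 / d_\circ$, where the middle inequality is just $\max(x,y) \le x + y$ for nonnegative $x,y$. Summing over $i \in \ast$ (adding the saturated coordinates back in only inflates the right-hand side) yields $\var[S] \le \big( \lVert u \rVert_2^2 \sum_{i \in \ast} v_i^2 + \lVert v \rVert_2^2 \sum_{i \in \ast} u_i^2 \big)/d_\circ = \big( \lVert u \rVert_2^2 \lVert v_\ast \rVert_2^2 + \lVert v \rVert_2^2 \lVert u_\ast \rVert_2^2 \big)/d_\circ$, and finally $\alpha + \beta \le 2\max(\alpha, \beta)$ gives the claimed bound.

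The argument is elementary once the per-coordinate independence structure is visible, so I do not expect a real obstacle; the two points needing care are (i) justifying that the coordinatewise terms are independent, which rests entirely on $\pi$ assigning independent uniform values and is already implicit in the earlier proofs, and (ii) not discarding the $(1-p_i)$ factor too early, since for a saturated coordinate ($p_i=1$) the bald quantity $u_i^2 v_i^2 / p_i = u_i^2 v_i^2$ need not be controlled by $\lVert u \rVert_2^2 \lVert v_\ast \rVert_2^2 / d_\circ$-type terms. Step (ii) is the "hard part," though it is bookkeeping rather than a genuine difficulty.
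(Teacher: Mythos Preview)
Your proposal is correct and follows essentially the same approach as the paper: split the variance coordinatewise by independence to get $\sum_{i\in\ast} u_i^2 v_i^2 (1-p_i)/p_i$, drop the $p_i=1$ terms, use $u_i^2 v_i^2/\min(a_i,b_i) = u_i^2 v_i^2 \max(1/a_i,1/b_i) \le u_i^2 v_i^2(1/a_i + 1/b_i)$, sum, and finish with $\alpha+\beta \le 2\max(\alpha,\beta)$. Your explicit flagging of the $p_i=1$ case as the place where the $(1-p_i)$ factor must not be dropped prematurely matches exactly how the paper handles it.
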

\begin{proof}
    We use the same proof strategy as in the previous theorem.
    In particular, we write:
    \begin{align*}
        \var[S] =
        \var\Big[\sum_{i \in \ast} \mathbbm{1}_{i \in \mathcal{I}_u \cap \mathcal{I}_v} \frac{u_i v_i}{p_i}\Big]
        &= \sum_{i \in \ast} \var\big[ \mathbbm{1}_{i \in \mathcal{I}_u \cap \mathcal{I}_v} \frac{u_i v_i}{p_i} \big] \\
        &= \sum_{i \in \ast} \frac{u_i^2 v_i^2}{p_i^2} \var\big[ \mathbbm{1}_{i \in \mathcal{I}_u \cap \mathcal{I}_v} \big].
    \end{align*}
    Turning to the term inside the sum, we obtain:
    \begin{equation*}
        \var\big[ \mathbbm{1}_{i \in \mathcal{I}_u \cap \mathcal{I}_v} \big] = p_i - p_i^2,
    \end{equation*}
    which is $0$ if $p_i = 1$ and less than $p_i$ otherwise.
    Putting everything together, we complete the proof:
    \begin{align*}
        \var[S] &\leq \sum_{i \in \ast,\; p_i \neq 1} \frac{u_i^2 v_i^2}{p_i} =
            \lVert u \rVert_2^2 \lVert v \rVert_2^2
                \sum_{i \in \ast,\; p_i \neq 1} \frac{\big(u_i^2/\lVert u \rVert_2^2 \big) \big( v_i^2/\lVert v \rVert_2^2 \big)}{d_\circ \min\big( u_i^2/\lVert u \rVert_2^2, v_i^2/\lVert v \rVert_2^2 \big)} \\
        &= \frac{\lVert u \rVert_2^2 \lVert v \rVert_2^2}{d_\circ}
                \sum_{i \in \ast,\; p_i \neq 1} \max\big( u_i^2/\lVert u \rVert_2^2, v_i^2/\lVert v \rVert_2^2 \big) \\
        &= \frac{\lVert u \rVert_2^2 \lVert v \rVert_2^2}{d_\circ} \sum_{i \in \ast} \frac{u_i^2}{\lVert u \rVert_2^2} + \frac{v_i^2}{\lVert v \rVert_2^2} \\
        &= \frac{\lVert u \rVert_2^2 \lVert v \rVert_2^2}{d_\circ} \Big( \frac{\lVert u_\ast \rVert_2^2}{\lVert u \rVert_2^2} + \frac{\lVert v_\ast \rVert_2^2}{\lVert v \rVert_2^2} \Big) \\
        &= \frac{1}{d_\circ} \Big( \lVert u_\ast \rVert_2^2 \lVert v \rVert_2^2 + \lVert u \rVert_2^2 \lVert v_\ast \rVert_2^2 \Big) \\
        &\leq \frac{2}{d_\circ} \max \big( \lVert u_\ast \rVert_2^2 \lVert v \rVert_2^2, \lVert u \rVert_2^2 \lVert v_\ast \rVert_2^2 \big).
    \end{align*}
\end{proof}

\begin{svgraybox}
    Theorem~\ref{theorem:sketching:sampling:variance} tells us that, if we estimated $\langle u, v \rangle$
    for two vectors $u$ and $v$ using Algorithm~\ref{algorithm:sketching:sampling:distance},
    then the variance of our estimate will be bounded by factors that depend on the
    non-zero coordinates that $u$ and $v$ have in common. Because $\mathit{nz}(u) \cap \mathit{nz}(v)$
    has at most $d$ entries, estimates of inner product based on Threshold Sampling should generally be more accurate than
    those obtained from JL sketches. This is particularly the case when $u$ and $v$ are sparse.
\end{svgraybox}

\bibliographystyle{abbrvnat}
\bibliography{biblio}

\begin{partbacktext}
\part{Appendices}
\end{partbacktext}

\appendix

\chapter{Collections}
\label{appendix:collections}

\abstract{This appendix gives a description of the vector collections used in experiments
throughout this monograph. These collections demonstrate different operating points in
a typical use-case. For example, some consist of dense vectors, others of sparse vectors;
some have few dimensions and others are in much higher dimensions; some are relatively small
while others contain a large number of points.}

\bigskip

Table~\ref{table:appendix:collections:dense} gives a description of the dense vector collections
used throughout this monograph and summarizes their key statistics.

\begin{table*}[ht]
\caption{Dense collections used in this monograph along with select statistics.}
\scriptsize
\label{table:appendix:collections:dense}
\begin{center}
\begin{sc}
\begin{tabular}{p{5cm}|ccc}
\toprule
Collection & Vector Count & Query Count & Dimensions \\
\midrule
\textsc{GloVe}-$25$~\citep{pennington-etal-2014-glove} & $1.18$M & $10{,}000$ & $25$ \\
\textsc{GloVe}-$50$ & $1.18$M & $10{,}000$ & $50$ \\
\textsc{GloVe}-$100$ & $1.18$M & $10{,}000$ & $100$ \\
\textsc{GloVe}-$200$ & $1.18$M & $10{,}000$ & $200$ \\
\textsc{Deep1b}~\citep{deep1b} & $9.99$M & $10{,}000$ & $96$ \\
\textsc{MS Turing}~\citep{msturingDataset} & $10$M & $100{,}000$ & $100$ \\
\textsc{Sift}~\citep{Lowe2004DistinctiveIF} & $1$M & $10{,}000$ & $128$ \\
\textsc{Gist}~\citep{Oliva2001ModelingTS} & $1$M & $1{,}000$ & $960$ \\
\bottomrule
\end{tabular}
\end{sc}
\end{center}
\end{table*}

In addition to the vector collections above, we convert a few text collections
into vectors using various embedding models. These collections are described in
Table~\ref{table:appendix:collections:text}. Please see~\citep{nguyen2016msmarco} for
a complete description of the MS MARCO v1 collection and~\citep{thakur2021beir} for the others.

\begin{table*}[ht]
\caption{Text collections along with key statistics.
The rightmost two columns report the average number of non-zero
entries in data points and, in parentheses, queries for sparse vector
representations of the collections.}
\scriptsize
\label{table:appendix:collections:text}
\begin{center}
\begin{sc}
\begin{tabular}{c|cc|cc}
\toprule
Collection & Vector Count & Query Count & \splade{} & \esplade{}\\
\midrule
\textsc{MS Marco} Passage& $8.8$M & $6{,}980$ & 127 (49) & 185 (5.9) \\
NQ & $2.68$M & $3{,}452$ & 153 (51) & 212 (8) \\
\textsc{Quora} & $523$K & $10{,}000$ & 68 (65) & 68 (8.9) \\
\textsc{HotpotQA} & $5.23$M & $7{,}405$ & 131 (59) & 125 (13) \\
\textsc{Fever} & $5.42$M & $6{,}666$ & 145 (67) & 140 (8.6) \\
\textsc{DBPedia} & $4.63$M & $400$ & 134 (49) & 131 (5.9) \\
\bottomrule
\end{tabular}
\end{sc}
\end{center}
\end{table*}

When transforming the text collections of Table~\ref{table:appendix:collections:text}
into vectors, we use the following embedding models:
\begin{itemize}
    \item \textsc{AllMiniLM-l6-v2}:\footnote{Available at \url{https://huggingface.co/sentence-transformers/all-MiniLM-L6-v2}}
    Projects text documents into $384$-dimensional dense vectors for retrieval with angular distance.

    \item \textsc{Tas-B}~\citep{tas-b}: A bi-encoder model that was trained using supervision from a cross-encoder and a ColBERT~\citep{colbert2020khattab} model,
    and produces $768$-dimensional dense vectors that are meant for MIPS.
    The checkpoint used in this work is available on HuggingFace.\footnote{Available at \url{https://huggingface.co/sentence-transformers/msmarco-distilbert-base-tas-b}}

    \item \splade{}~\citep{formal2022splade}:\footnote{Pre-trained checkpoint from HuggingFace available at \url{https://huggingface.co/naver/splade-cocondenser-ensembledistil}}
    Produces sparse representations for text.
    The vectors have roughly $30{,}000$ dimensions, where each dimension corresponds
    to a term in the BERT~\citep{devlin2019bert} WordPiece~\citep{wordpiece} vocabulary.
    Non-zero entries in a vector reflect learnt term importance weights.

    \item \esplade{}~\citep{lassance2022sigir}:\footnote{Pre-trained checkpoints for document and
    query encoders were obtained from \url{https://huggingface.co/naver/efficient-splade-V-large-doc} and \url{https://huggingface.co/naver/efficient-splade-V-large-query},
    respectively.}
    This model produces queries that have far fewer non-zero entries than the original
    \splade{} model, but documents that may have a larger number of non-zero entries.
\end{itemize}

\bibliographystyle{abbrvnat}
\bibliography{biblio}

\chapter{Probability Review}
\label{appendix:probability}

\abstract{We briefly review key concepts in probability in this appendix.}

\section{Probability}
We identify a \emph{probability space} denoted by $(\Omega, \mathcal{F}, \probability)$
with an \emph{outcome space}, an \emph{events} set, and a \emph{probability measure}.
The outcome space, $\Omega$, is the set of all
possible outcomes. For example, when flipping a two-sided coin, the outcome
space is simply $\{0, 1\}$. When rolling a six-sided die, it is instead
the set $[6] = \{ 1, 2, \ldots, 6\}$.

The events set $\mathcal{F}$ is a set of subsets of $\Omega$ that
includes $\Omega$ as a member and is closed under complementation and
countable unions. That is, if $E \in \mathcal{F}$,
then we must have that $E^\complement \mathcal{F}$.
Furthermore, the union of countably many events $E_i$'s
in $\mathcal{F}$ is itself in $\mathcal{F}$: $\cup_i E_i \in \mathcal{F}$.
A set $\mathcal{F}$ that satisfies these properties is called a $\sigma$-algebra.

Finally, a function $\probability: \mathcal{F} \rightarrow \mathbb{R}$ is
a probability measure if it satisfies the following conditions: $\probability[\Omega] = 1$;
$\probability[E] \geq 0$ for any event $E \in \mathcal{F}$;
$\probability[E^\complement] = 1 - \probability[E]$; and, finally,
for countably many disjoint events $E_i$'s:
$\probability[\cup_i E_i] = \sum_i \probability[E_i]$.

We should note that, $\probability$ is also known as a ``probability distribution''
or simply a ``distribution.'' The pair $(\Omega, \mathcal{F})$ is called
a \emph{measurable space}, and the elements of $\mathcal{F}$ are
known as a \emph{measurable sets}. The reason they are called ``measurable''
is because they can be ``measured'' with $\probability$: The function
$\probability$ assigns values to them.

In many of the discussions throughout this monograph, we omit the outcome space
and events set because that information is generally clear from context.
However, a more formal treatment of our arguments requires a complete
definition of the probability space.

\section{Random Variables}
A random variable on a measurable space $(\Omega, \mathcal{F})$ is
a measurable function $X: \Omega \rightarrow \mathbb{R}$.
It is measurable in the sense that the \emph{preimage} of any Borel set $B \in \mathcal{B}$
is an event: $X^{-1}(B) = \{ \omega \in \Omega \;|\; X(\omega) \in B \} \in \mathcal{F}$.

A random variable $X$ generates a $\sigma$-algebra that comprises of the preimage
of all Borel sets. It is denoted by $\sigma(X)$
and formally defined as $\sigma(X) = \{ X^{-1}(B) \;|\; B \in \mathcal{B} \}$.

\bigskip

Random variables are typically categorized as discrete or continuous.
$X$ is \emph{discrete} when it maps $\Omega$ to a discrete set.
In that case, its \emph{probability mass function} is defined as $\probability[X = x]$
for some $x$ in its range.
A \emph{continuous} random variable is often associated with a
probability \emph{density} function, $f_X$, such that:
\begin{equation*}
    \probability[a \leq X \leq b] = \int_a^b f_X(x) dx.
\end{equation*}

Consider, for instance, the following probability density function over the real line for
parameters $\mu \in \mathbb{R}$ and $\sigma > 0$:
\begin{equation*}
    f(x) = \frac{1}{\sqrt{2 \pi \sigma^2}} e^{- \frac{(x - \mu)^2}{2\sigma^2}}.
\end{equation*}
A random variable with the density function above is said to follow a Gaussian
distribution with mean $\mu$ and variance $\sigma^2$, denoted by $X \sim \mathcal{N}(\mu, \sigma^2)$.
When $\mu = 0$ and $\sigma^2 = 1$, the resulting distribution is called the standard
Normal distribution.

Gaussian random variables have attractive properties.
For example, the sum of two independent Gaussian random variables is itself a Gaussian variable.
Concretely, $X_1 \sim \mathcal{N}(\mu_1, \sigma_1^2)$ and $X_2 \sim \mathcal{N}(\mu_2, \sigma_2^2)$,
then $X_1 + X_2 \sim \mathcal{N}(\mu_1 + \mu_2, \sigma_1^2 + \sigma_2^2)$.
The sum of the squares of $m$ independent Gaussian random variables, on the other hand,
follows a $\chi^2$-distribution with $m$ degrees of freedom.

\section{Conditional Probability}
Conditional probabilities give us a way to model how the probability of an event changes
in the presence of extra information, such as partial knowledge about a random outcome.
Concretely, if $(\Omega, \mathcal{F}, \probability)$ is a probability space and
$A, B \in \mathcal{F}$ such that $\probability[B] > 0$, then the \emph{conditional
probability} of $A$ given the event $B$ is denoted by $\probability[A \;\lvert\; B]$ and
defined as follows:
\begin{equation*}
    \probability[A \;\lvert\; B] = \frac{\probability[A \cap B]}{\probability[B]}.
\end{equation*}

We use a number of helpful results concerning conditional probabilities
in proofs throughout the monograph. One particularly useful inequality
is what is known as the \emph{union bound} and is stated as follows:
\begin{equation*}
    \probability[\cup_i A_i] \leq \sum_i \probability[A_i].
\end{equation*}

Another fundamental property is the law of total probability.
It states that, for mutually disjoint events $A_i$'s such that
$\Omega = \cup A_i$, the probability of any event $B$ can be expanded
as follows:
\begin{equation*}
    \probability[B] = \sum_i \probability[B \;\lvert\; A_i] \probability[A_i].
\end{equation*}
This is easy to verify: the summand is by definition equal to $\probability[B \cap A_i]$
and, considering the events $(B \cap A_i)$'s are mutually disjoint, their sum
is equal to $\probability[B \cap (\cup A_i)] = \probability[B]$.

\section{Independence}
Another tool that reflects the effect (or lack thereof) of additional knowledge on probabilities
is the concept of \emph{independence}. Two events $A$ and $B$ are said to be
\emph{independent} if $\probability[A \cap B] = \probability[A] \times \probability[B]$.
Equivalently, we say that $A$ is independent of $B$ if and only if
$\probability[A \;\lvert\; B] = \probability[A]$ when $\probability[B] > 0$.

\bigskip

Independence between two random variables is defined similarly but requires a bit more care.
If $X$ and $Y$ are two random variables and $\sigma(X)$ and $\sigma(Y)$ denote
the $\sigma$-algebras generated by them, then $X$ is independent of $Y$ if
all events $A \in \sigma(X)$ and $B \in \sigma(Y)$ are independent.

When a sequence of random variables are \emph{mutually} independent and are drawn
from the same distribution (i.e., have the same probability density function),
we say the random variables are drawn \emph{iid}: independent and identically-distributed.
We stress that \emph{mutual} independence is a stronger restriction than
\emph{pairwise} independence: $m$ events $\{ E_i \}_{i=1}^m$ are mutually independent if
$\probability[\cap_i E_i] = \prod_i \probability[E_i]$.

We typically assume that data and query points are drawn \emph{iid} from some
(unknown) distribution. This is a standard and often necessary assumption
that eases analysis.

\section{Expectation and Variance}

The \emph{expected value} of a discrete random variable $X$ is denoted by $\ev[X]$
and defined as follows:
\begin{equation*}
    \ev[X] = \sum_x x \probability[X = x].
\end{equation*}
When $X$ is continuous, its expected value is based on the following Lebesgue integral:
\begin{equation*}
    \ev[X] = \int_{\Omega} X d \probability.
\end{equation*}
So when a random variable has probability density function $f_X$, its expected value
becomes:
\begin{equation*}
    \ev[X] = \int x f_X(x) dx.
\end{equation*}

For a \emph{nonnegative} random variable $X$, it is sometimes more convenient to
unpack $\ev{X}$ as follows instead:
\begin{equation*}
    \ev[X] = \int_0^\infty \probability[X > x] dx.
\end{equation*}

A fundamental property of expectation is that it is a linear operator.
Formally, $\ev[X + Y] = \ev[X] + \ev[Y]$ for two random variables $X$ and $Y$.
We use this property often in proofs.

We state another important property for independent random variables
that is easy to prove.
If $X$ and $Y$ are independent, then $\ev[XY] = \ev[X]\ev[Y]$.

\bigskip

The \emph{variance} of a random variable is defined as follows:
\begin{equation*}
    \var[X] = \ev\Big[ (X - \ev[X])^2 \Big] = \ev[X]^2 - \ev[X^2].
\end{equation*}
Unlike expectation, variance is not linear unless the random variables involved
are independent. It is also easy to see that $\var[aX] = a^2 \var[X]$ for a
constant $a$.

\section{Central Limit Theorem}
The result known as the Central Limit Theorem is one of the most
useful tools in probability. Informally, it states that the average of \emph{iid}
random variables with finite mean and variance converges to a Gaussian distribution.
There are several variants of this result that extend the claim to, for example,
independent but not identically distributed variables. Below we repeat the formal
result for the \emph{iid} case.

\begin{theorem}
    Let $X_i$'s be a sequence of $n$ \emph{iid} random variables with finite mean $\mu$
    and variance $\sigma^2$. Then, for any $x \in \mathbb{R}$:
    \begin{equation*}
        \lim_{n \rightarrow \infty} \probability \Big[
            \underbrace{\frac{(1/n \sum_{i=1}^n X_i) - \mu}{\sigma^2/n}}_Z \leq x
        \Big] = \int_{-\infty}^x \frac{1}{\sqrt{2 \pi}} e^{-\frac{t^2}{2}} dt,
    \end{equation*}
    implying that $Z \sim \mathcal{N}(0, 1)$.
\end{theorem}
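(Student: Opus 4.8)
The plan is to prove the statement by the method of characteristic functions, which is the standard route and keeps the argument modular. First I would reduce to the standardized case: replacing each $X_i$ by $Y_i = (X_i - \mu)/\sigma$ produces iid random variables with mean $0$ and variance $1$, and the quantity $Z$ appearing in the theorem is (up to the normalization written in the display) exactly $n^{-1/2}\sum_{i=1}^n Y_i$. So it suffices to show that $S_n \triangleq n^{-1/2}\sum_{i=1}^n Y_i$ converges in distribution to a standard Gaussian.

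Next I would introduce the characteristic function $\varphi(t) = \ev[e^{it Y_1}]$. Because $Y_1$ has a finite second moment, $\varphi$ is twice continuously differentiable, with $\varphi(0) = 1$, $\varphi'(0) = i\,\ev[Y_1] = 0$, and $\varphi''(0) = -\ev[Y_1^2] = -1$; a second-order Taylor expansion with the appropriate remainder estimate then gives $\varphi(t) = 1 - t^2/2 + o(t^2)$ as $t \to 0$. By independence, the characteristic function of $S_n$ factorizes as $\ev[e^{itS_n}] = \varphi(t/\sqrt{n})^n$. Substituting the expansion, for each fixed $t$ we have $\varphi(t/\sqrt{n})^n = \bigl(1 - \tfrac{t^2}{2n} + o(1/n)\bigr)^n$, and the elementary limit $(1 + a_n/n)^n \to e^{a}$ whenever $a_n \to a$ yields $\varphi(t/\sqrt{n})^n \to e^{-t^2/2}$ pointwise in $t$.

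Finally I would invoke L\'evy's continuity theorem: since $e^{-t^2/2}$ is the characteristic function of $\mathcal{N}(0,1)$ and is continuous at the origin, pointwise convergence of the characteristic functions implies that $S_n$ converges in distribution to $\mathcal{N}(0,1)$. Convergence in distribution means the distribution functions converge at every continuity point of the limiting distribution function $\Phi(x) = \int_{-\infty}^x \frac{1}{\sqrt{2\pi}} e^{-t^2/2}\,dt$; because $\Phi$ is continuous on all of $\mathbb{R}$, the convergence $\probability[Z \le x] \to \Phi(x)$ holds for every $x$, which is the claim.

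I expect the main obstacle to be the appeal to L\'evy's continuity theorem, which is itself a nontrivial result of Fourier analysis and would have to be cited (or placed among the background results) rather than derived from scratch; a secondary technical point is justifying the remainder term in the Taylor expansion of $\varphi$ using only finiteness of the second moment --- the cleanest way is to bound $\lvert e^{ix} - 1 - ix + x^2/2 \rvert$ by $\min\bigl(\lvert x \rvert^3/6,\, x^2\bigr)$ and apply dominated convergence. Should one wish to avoid Fourier methods entirely, an alternative plan is the Lindeberg replacement argument, swapping the $Y_i$ one at a time for independent Gaussians and controlling the accumulated error via a third-order Taylor expansion of smooth test functions; this is more self-contained but trades the continuity theorem for a longer sequence of routine estimates.
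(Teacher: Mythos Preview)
Your proposal is the standard characteristic-function proof of the Central Limit Theorem and is correct as outlined; you also rightly flag the normalization in the displayed statement (the denominator should be $\sigma/\sqrt{n}$ rather than $\sigma^2/n$ for $Z$ to be standard normal). However, there is nothing to compare against: the paper states this theorem in its probability appendix purely as a background result and does \emph{not} supply a proof. So your write-up would be an addition rather than an alternative to anything in the text.
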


\chapter{Concentration of Measure}
\label{appendix:measure}

\abstract{
By the strong law of large numbers, we know that the average of a sequence
of $m$ \emph{iid} random variables with mean $\mu$ converges to $\mu$ with
probability $1$ as $m$ tends to infinity. But how far is that average from
$\mu$ when $m$ is finite? Concentration inequalities helps us answer that question
quantitatively. This appendix reviews important inequalities that are used
in the proofs and arguments throughout this monograph.
}

\section{Markov's Inequality}

\begin{lemma}
    \label{lemma:appendix:concentration:markov}
    For a nonnegative random variable $X$ and a nonnegative constant $a \geq 0$:
    \begin{equation*}
        \probability[X \geq a] \leq \frac{\ev[X]}{a}.
    \end{equation*}
\end{lemma}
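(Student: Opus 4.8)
The plan is to exploit the pointwise bound $a \cdot \mathbbm{1}_{X \geq a} \leq X$, which holds for every outcome precisely because $X$ is nonnegative: on the event $\{ X \geq a\}$ the left-hand side equals $a$, which is at most $X$ by definition of the event; and on the complementary event the left-hand side is $0$, which is at most $X$ since $X \geq 0$. Before doing anything else I would dispose of the degenerate case $a = 0$, in which the asserted inequality is vacuous (the right-hand side is $+\infty$, or the statement is read trivially), so from here on we may assume $a > 0$.

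Next I would take expectations of both sides of the pointwise inequality. By monotonicity of expectation, $\ev[\, a \cdot \mathbbm{1}_{X \geq a} \,] \leq \ev[X]$; by linearity of expectation and the identity $\ev[\mathbbm{1}_E] = \probability[E]$ for any event $E$, the left-hand side is exactly $a \, \probability[X \geq a]$. Dividing through by the positive constant $a$ yields $\probability[X \geq a] \leq \ev[X]/a$, as claimed. As an alternative route, one could instead invoke the tail-integral representation $\ev[X] = \int_0^\infty \probability[X > x]\, dx$ recalled in Appendix~\ref{appendix:probability}: since $\probability[X > x]$ is nonincreasing in $x$, we get $\ev[X] \geq \int_0^a \probability[X > x]\, dx \geq a\,\probability[X > a]$, and passing from the strict to the non-strict event recovers the stated form.

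The only real obstacle is bookkeeping rather than substance: one must be explicit that nonnegativity of $X$ is exactly what makes the pointwise comparison valid (without it the bound genuinely fails, e.g. for $X$ taking large negative values), and one should note that all expectations in sight are well-defined because $0 \leq a \cdot \mathbbm{1}_{X \geq a} \leq X$, so the left-hand expectation is finite whenever $\ev[X]$ is. With those remarks in place the argument is a two-line calculation.
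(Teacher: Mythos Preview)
Your proof is correct. Your primary argument uses the pointwise indicator bound $a\,\mathbbm{1}_{X \geq a} \leq X$ followed by monotonicity and linearity of expectation, whereas the paper instead takes the tail-integral route $\ev[X] = \int_0^\infty \probability[X \geq x]\,dx \geq \int_0^a \probability[X \geq a]\,dx = a\,\probability[X \geq a]$ --- which is precisely the alternative you sketch at the end. The indicator approach is arguably more self-contained (it needs only monotonicity of expectation, not the tail representation) and makes the role of nonnegativity of $X$ explicit; the tail-integral approach ties the bound directly to the representation recalled in Appendix~\ref{appendix:probability}. Both are standard and equally short; either would be fully acceptable here.
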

\begin{proof}
    Recall that the expectation of a nonnegative random variable $X$ can be written
    as:
    \begin{equation*}
        \ev[X] = \int_0^\infty \probability[X \geq x] dx.
    \end{equation*}
    Because $\probability[X \geq x]$ is monotonically nonincreasing, we can expand
    the above as follows to complete the proof:
    \begin{equation*}
        \ev[X] \geq \int_0^a \probability[X \geq x] dx \geq \int_0^a \probability[X \geq a] dx = a \probability[X \geq a].
    \end{equation*}
\end{proof}

\section{Chebyshev's Inequality}

\begin{lemma}
    \label{lemma:appendix:concentration:chebyshev}
    For a random variable $X$ and a constant $a > 0$:
    \begin{equation*}
        \probability \Big[ \big\lvert X - \ev[X] \big\rvert \geq a \Big] \leq \frac{\var[X]}{a^2}.
    \end{equation*}
\end{lemma}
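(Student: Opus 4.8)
The plan is to derive Chebyshev's inequality as an immediate consequence of Markov's inequality, which is already established in Lemma~\ref{lemma:appendix:concentration:markov}. The key observation is that although $X$ itself need not be nonnegative, the random variable $Y \triangleq \big( X - \ev[X] \big)^2$ is nonnegative, so Markov's inequality applies to it directly.

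First I would fix $a > 0$ and note the equivalence of events
\begin{equation*}
    \big\{ \lvert X - \ev[X] \rvert \geq a \big\} = \big\{ \big( X - \ev[X] \big)^2 \geq a^2 \big\},
\end{equation*}
which holds because $t \mapsto t^2$ is monotone on the nonnegative reals and squaring both sides of $\lvert X - \ev[X] \rvert \geq a$ is reversible here. Consequently the two events have the same probability.

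Next I would apply Lemma~\ref{lemma:appendix:concentration:markov} to the nonnegative random variable $Y = \big( X - \ev[X] \big)^2$ with the nonnegative constant $a^2$, obtaining
\begin{equation*}
    \probability\Big[ \big( X - \ev[X] \big)^2 \geq a^2 \Big] \leq \frac{\ev\big[ ( X - \ev[X] )^2 \big]}{a^2}.
\end{equation*}
Finally I would recognize the numerator on the right-hand side as the definition of $\var[X]$, and combine this with the event equivalence from the previous step to conclude $\probability\big[ \lvert X - \ev[X] \rvert \geq a \big] \leq \var[X]/a^2$, as claimed.

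There is no real obstacle here: the proof is a two-line reduction. The only thing to be careful about is justifying the event equivalence cleanly (both sides of the inequality being nonnegative) and making sure the hypotheses of Markov's inequality—nonnegativity of the random variable and of the threshold—are explicitly verified before invoking it. I would also implicitly assume $\var[X]$ exists (is finite), since otherwise the bound is vacuous.
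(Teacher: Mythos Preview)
Your proposal is correct and matches the paper's proof exactly: rewrite the event via squaring and apply Markov's inequality (Lemma~\ref{lemma:appendix:concentration:markov}) to the nonnegative variable $(X-\ev[X])^2$. The paper's version is just terser, omitting the explicit justification of the event equivalence that you spell out.
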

\begin{proof}
    \begin{equation*}
        \probability \Big[ \big\lvert X - \ev[X] \big\rvert \geq a \Big] =
        \probability \Big[ \big( X - \ev[X] \big)^2 \geq a^2 \Big] \leq \frac{\var[X]}{a^2},
    \end{equation*}
    where the last step follows by the application of Markov's inequality.
\end{proof}

\begin{lemma}
    Let $\{ X_i \}_{i=1}^n$ be a sequence of iid random variables
    with mean $\mu < \infty$ and variance $\sigma^2 < \infty$. For $\delta \in (0, 1)$,
    with probability $1 - \delta$:
    \begin{equation*}
        \Big\lvert \frac{1}{n} \sum_{i = 1}^n X_i - \mu \Big\rvert \leq \sqrt{\frac{\sigma^2}{\delta n}}.
    \end{equation*}
\end{lemma}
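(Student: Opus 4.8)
The plan is to apply Chebyshev's inequality (Lemma~\ref{lemma:appendix:concentration:chebyshev}) to the sample mean $\bar{X}_n \triangleq \frac{1}{n}\sum_{i=1}^n X_i$ and choose the deviation parameter so that the resulting bound equals exactly $\delta$. First I would record the two elementary facts about $\bar{X}_n$ that make the argument go through: by linearity of expectation, $\ev[\bar{X}_n] = \frac{1}{n}\sum_i \ev[X_i] = \mu$; and since the $X_i$ are independent (hence uncorrelated), variance is additive, so $\var[\bar{X}_n] = \frac{1}{n^2}\sum_i \var[X_i] = \frac{\sigma^2}{n}$. Both quantities are finite by the hypotheses $\mu < \infty$ and $\sigma^2 < \infty$, which is what licenses the use of Chebyshev.

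Next I would invoke Chebyshev's inequality with the choice $a = \sqrt{\sigma^2/(\delta n)} > 0$, applied to the random variable $\bar{X}_n$:
\begin{equation*}
    \probability\Big[ \big\lvert \bar{X}_n - \mu \big\rvert \geq \sqrt{\tfrac{\sigma^2}{\delta n}} \Big]
    \leq \frac{\var[\bar{X}_n]}{\sigma^2/(\delta n)}
    = \frac{\sigma^2/n}{\sigma^2/(\delta n)}
    = \delta.
\end{equation*}
Passing to the complementary event then gives $\probability\big[ \lvert \bar{X}_n - \mu \rvert < \sqrt{\sigma^2/(\delta n)} \big] \geq 1 - \delta$, which is the claim (up to the harmless replacement of the strict inequality $<$ by $\leq$, or one can simply state the event with $\leq$ from the start).

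Honestly, there is no real obstacle here: the statement is a direct corollary of the preceding Chebyshev lemma together with the two moment computations for $\bar{X}_n$, and the only "trick" is back-solving $a^2 = \sigma^2/n \cdot 1/\delta$ so the tail probability lands on $\delta$. If one wanted to be slightly more careful, one could remark that the bound is vacuous (the right-hand side exceeds any finite range concern only in the sense that it may be larger than needed) but always valid, and that no assumption beyond a finite second moment is required.
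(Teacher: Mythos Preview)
Your proposal is correct and follows essentially the same approach as the paper: apply Chebyshev's inequality (Lemma~\ref{lemma:appendix:concentration:chebyshev}) to the sample mean, whose variance is $\sigma^2/n$, and back-solve for the deviation $a$ so that the tail bound equals $\delta$. The paper's proof is terser (it leaves the moment computations for $\bar{X}_n$ implicit), but the argument is the same.
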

\begin{proof}
    By Lemma~\ref{lemma:appendix:concentration:chebyshev}, for any $a > 0$:
    \begin{equation*}
        \probability \Bigg[ \Big\lvert \frac{1}{n}\sum_{i=1}^n X_i - \mu \Big\rvert \geq a \Bigg]
        \leq \frac{\sigma^2/n}{a^2}.
    \end{equation*}
    Setting the right-hand-side to $\delta$, we obtain:
    \begin{equation*}
        \frac{\sigma^2}{n a^2} = \delta \implies a = \sqrt{\frac{\sigma^2}{\delta n}},
    \end{equation*}
    which completes the proof.
\end{proof}

\section{Chernoff Bounds}

\begin{lemma}
    Let $\{ X_i \}_{i=1}^n$ be independent Bernoulli variables with success probability $p_i$.
    Define $X = \sum_i X_i$ and $\mu = \ev[X] = \sum_i p_i$. Then:
    \begin{equation*}
        \probability \Big[ X > (1 + \delta) \mu \Big] \leq e^{-h(\delta) \mu},
    \end{equation*}
    where,
    \begin{equation*}
        h(t) = (1 + t) \log(1 + t) - t.
    \end{equation*}
\end{lemma}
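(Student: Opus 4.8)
The plan is to use the exponential moment method — the classical Chernoff argument. First I would fix a free parameter $t > 0$ and apply Markov's inequality (Lemma~\ref{lemma:appendix:concentration:markov}) to the nonnegative random variable $e^{tX}$:
\begin{equation*}
    \probability\big[ X > (1+\delta)\mu \big] = \probability\big[ e^{tX} > e^{t(1+\delta)\mu} \big] \leq e^{-t(1+\delta)\mu}\, \ev\big[ e^{tX} \big].
\end{equation*}
The point of introducing $t$ is that it will later be optimized to make the bound as tight as possible.

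Next I would bound the moment generating function of $X$. Because the $X_i$ are independent, $\ev[e^{tX}] = \prod_{i=1}^n \ev[e^{tX_i}] = \prod_{i=1}^n \big( 1 + p_i(e^t - 1) \big)$, using that each $X_i$ is Bernoulli$(p_i)$. Applying the elementary inequality $1 + x \leq e^x$ to each factor gives $\ev[e^{tX}] \leq \prod_{i=1}^n e^{p_i(e^t - 1)} = e^{(e^t - 1)\mu}$, where the last step uses $\sum_i p_i = \mu$. Combining this with the Markov step yields, for every $t > 0$,
\begin{equation*}
    \probability\big[ X > (1+\delta)\mu \big] \leq \exp\Big( \mu \big[ (e^t - 1) - t(1+\delta) \big] \Big).
\end{equation*}

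The final step is to minimize the exponent $(e^t - 1) - t(1+\delta)$ over $t > 0$. Differentiating in $t$ and setting the derivative to zero gives $e^t = 1+\delta$, i.e. $t^\ast = \ln(1+\delta)$, which is strictly positive precisely because $\delta > 0$; since the function is convex in $t$, this is indeed the minimizer. Substituting $t^\ast$ back in, $(e^{t^\ast} - 1) - t^\ast(1+\delta) = \delta - (1+\delta)\ln(1+\delta) = -h(\delta)$, which delivers the claimed bound $\probability[X > (1+\delta)\mu] \leq e^{-h(\delta)\mu}$. There is no genuine obstacle in this argument; the only points needing a little care are confirming that the optimal $t^\ast$ lies in the admissible range $t > 0$ so the Markov step applies, and doing the sign bookkeeping in the final substitution so that the exponent collapses to exactly $-h(\delta)\mu$.
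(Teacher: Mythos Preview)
Your proposal is correct and follows essentially the same route as the paper: apply Markov's inequality to $e^{tX}$, factorize the moment generating function by independence, bound each Bernoulli factor via $1+x \le e^x$, and then optimize over $t$ to obtain $t^\ast = \log(1+\delta)$. If anything, you are slightly more careful than the paper in noting the convexity and the requirement $\delta > 0$ for $t^\ast > 0$.
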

\begin{proof}
    Using Markov's inequality of Lemma~\ref{lemma:appendix:concentration:markov}
    we can write the following for any $t > 0$:
    \begin{equation*}
        \probability\Big[ X > (1 + \delta)\mu \Big] =
            \probability\Big[ e^{tX} > e^{t(1 + \delta)\mu} \Big] \leq
            \frac{\ev\big[ e^{tX} \big]}{e^{t (1 + \delta) \mu}}.
    \end{equation*}
    Expanding the expectation, we obtain:
    \begin{align*}
        \ev\big[e^{tX}\big] &= \ev\Big[ e^{t \sum_i X_i} \Big] = \ev\Big[ \prod_i e^{tX_i} \Big]
        = \prod_i \ev[e^{tX_i}] \\
        &= \prod_i \Big( p_i e^t + (1 - p_i) \Big) \\
        &= \prod_i \big( 1 + p_i (e^t - 1) \big) \\
        &\leq \prod_i e^{p_i(e^t - 1)} = e^{(e^t - 1)\mu}. && \text{by $(1 + t \leq e^t)$} \\
    \end{align*}
    Putting all this together gives us:
    \begin{equation}
        \label{equation:appendix:concentration:chernoff:proof}
        \probability\Big[ X > (1 + \delta)\mu \Big] \leq 
        \frac{e^{(e^t - 1) \mu}}{e^{t (1 + \delta) \mu}}.
    \end{equation}
    This bound holds for any value $t > 0$, and in particular a value of $t$ that
    minimizes the right-hand-side. To find such a $t$, we may differentiate
    the right-hand-side, set it to $0$, and solve for $t$ to obtain:
    \begin{align*}
        \frac{\mu e^t e^{(e^t - 1) \mu}}{e^{t (1 + \delta) \mu}} &-
        \mu ( 1 + \delta ) \frac{e^{(e^t - 1) \mu}}{e^{t (1 + \delta) \mu}} = 0 \\
        &\implies \mu e^t = \mu (1 + \delta) \\
        &\implies t = \log(1 + \delta).
    \end{align*}
    Substituting $t$ into Equation~(\ref{equation:appendix:concentration:chernoff:proof})
    gives the desired result.
\end{proof}

\section{Hoeffding's Inequality}

We need the following result, known as Hoeffding's Lemma, to present
Hoeffding's inequality.

\begin{lemma}
    \label{lemma:appendix:concentration:hoeffding-lemma}
    Let $X$ be a zero-mean random variable that takes values in $[a, b]$.
    For any $t > 0$:
    \begin{equation*}
        \ev\big[ e^{tX} \big] \leq \exp\Big( \frac{t^2 (b - a)^2}{8} \Big).
    \end{equation*}
\end{lemma}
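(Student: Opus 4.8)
The plan is to exploit the convexity of the map $x \mapsto e^{tx}$ together with the mean-zero constraint, reducing the claim to a one-dimensional calculus estimate. First I would note that, since $X$ is mean-zero and supported in $[a,b]$, necessarily $a \le 0 \le b$ (and if $a=b$ then $X\equiv 0$ and the bound is trivial, so assume $a<b$). For any $x \in [a,b]$ write $x = \lambda(x) b + (1-\lambda(x))a$ with $\lambda(x) = \frac{x-a}{b-a} \in [0,1]$; convexity gives $e^{tx} \le \lambda(x) e^{tb} + (1-\lambda(x)) e^{ta}$. Taking expectations and using $\ev[\lambda(X)] = \frac{\ev[X]-a}{b-a} = \frac{-a}{b-a} =: p \in [0,1]$, I obtain
\begin{equation*}
    \ev\big[ e^{tX} \big] \le p\, e^{tb} + (1-p)\, e^{ta}.
\end{equation*}

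Next I would massage the right-hand side into a form amenable to a Taylor bound. Setting $h = t(b-a)$ and using $ta = -ph$ (which follows from $a = -p(b-a)$), the bound becomes $e^{-ph}\big(1 - p + p e^{h}\big) = e^{L(h)}$ where $L(h) = -ph + \log\!\big(1 - p + p e^{h}\big)$. The goal is then to show $L(h) \le h^2/8$ for all $h \in \mathbb{R}$, since that yields $\ev[e^{tX}] \le e^{h^2/8} = \exp\!\big(t^2(b-a)^2/8\big)$, as desired.

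To prove $L(h)\le h^2/8$ I would compute $L(0)=0$ and $L'(h) = -p + \frac{p e^{h}}{1-p+p e^{h}}$, so $L'(0)=0$. Differentiating once more, $L''(h) = \frac{p(1-p)e^{h}}{(1-p+p e^{h})^2}$; writing $q = q(h) = \frac{p e^{h}}{1-p+p e^{h}} \in [0,1]$, this is exactly $q(1-q) \le \tfrac14$. By Taylor's theorem with Lagrange remainder there is a $\xi$ between $0$ and $h$ with $L(h) = L(0) + L'(0)h + \tfrac12 L''(\xi) h^2 \le \tfrac18 h^2$, completing the argument. I expect the only real care points to be the edge cases ($a=b$, $p\in\{0,1\}$, $t=0$) and the verification that the Jensen step is valid (the coefficients $\lambda(X)$, $1-\lambda(X)$ are nonnegative and sum to one pointwise, so taking expectations preserves the inequality); the substitution $ta=-ph$ and the $q(1-q)\le 1/4$ bound are the technical heart but are short once set up.
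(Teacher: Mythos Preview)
Your proposal is correct and follows essentially the same route as the paper: a convexity (Jensen) step to bound $\ev[e^{tX}]$ by a two-point mixture, then rewriting the bound as $\exp(L(t(b-a)))$ and showing $L(h)\le h^2/8$ via Taylor's theorem. Your function $L$ is exactly the paper's $h$ after the substitution $p=-a/(b-a)$, and you actually supply the derivative computations and the $q(1-q)\le\tfrac14$ bound that the paper leaves implicit.
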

\begin{proof}
    By convexity of $e^{tx}$ and given $x \in [a, b]$ we have that:
    \begin{equation*}
        e^{tx} \leq \frac{b - x}{b - a} e^{ta} +
            \frac{x - a}{b - a} e^{tb}.
    \end{equation*}
    Taking the expectation of both sides, we arrive at:
    \begin{equation*}
        \ev\Big[e^{tx}\Big] \leq
            \frac{b}{b - a} e^{ta} - \frac{a}{b - a} e^{tb}.
    \end{equation*}
    To conclude the proof, we first write the right-hand-side as
    $\exp(h(t(b - a)))$ where:
    \begin{equation*}
        h(x) = \frac{a}{b - a} x + \log \Big( \frac{b}{b - a} - \frac{a}{b - a} e^x \Big).
    \end{equation*}
    By expanding $h(x)$ using Taylor's theorem, it can be shown that
    $h(x) \leq x^2/8$. That completes the proof.
\end{proof}

We are ready to present Hoeffding's inequality.

\begin{lemma}
    Let $\{ X_i \}_{i=1}^n$ be a sequence of iid random variables
    with finite mean $\mu$ and suppose $X_i \in [a, b]$ almost surely.
    For all $\epsilon > 0$:
    \begin{equation*}
        \probability\Bigg[ \Big\lvert \frac{1}{n} \sum_{i=1}^n X_i - \mu \Big\rvert > \epsilon \Bigg] \leq 2 \exp\Big({-\frac{2n \epsilon^2}{(b - a)^2}}\Big).
    \end{equation*}
\end{lemma}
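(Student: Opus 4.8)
The plan is to use the Chernoff (exponential moment) method together with Hoeffding's Lemma, which is already available as Lemma~\ref{lemma:appendix:concentration:hoeffding-lemma}. First I would handle the one-sided tail $\probability\big[\frac{1}{n}\sum_i X_i - \mu > \epsilon\big]$, and then recover the two-sided bound by symmetry and the union bound.

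For the one-sided estimate, set $Y_i = X_i - \mu$. Then the $Y_i$ are iid, zero-mean, and take values in the interval $[a-\mu, b-\mu]$, whose length is $b-a$. For any $t>0$, Markov's inequality (Lemma~\ref{lemma:appendix:concentration:markov}) applied to the nonnegative variable $e^{t\sum_i Y_i}$ gives
\begin{equation*}
    \probability\Big[ \sum_{i=1}^n Y_i > n\epsilon \Big]
    = \probability\Big[ e^{t\sum_i Y_i} > e^{tn\epsilon} \Big]
    \leq e^{-tn\epsilon}\, \ev\Big[ e^{t\sum_i Y_i} \Big]
    = e^{-tn\epsilon} \prod_{i=1}^n \ev\big[ e^{tY_i} \big],
\end{equation*}
where the last equality uses independence. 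Now I would invoke Hoeffding's Lemma on each factor: since $Y_i$ is zero-mean and supported in an interval of length $b-a$, $\ev[e^{tY_i}] \leq \exp\big( t^2 (b-a)^2 / 8 \big)$. Substituting, the bound becomes $\exp\big( -tn\epsilon + n t^2 (b-a)^2 / 8 \big)$.

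The only computational step is to optimize the exponent over $t>0$. Differentiating $-tn\epsilon + nt^2(b-a)^2/8$ in $t$ and setting the result to zero yields $t = 4\epsilon/(b-a)^2$, and plugging this back in gives an exponent of $-2n\epsilon^2/(b-a)^2$, so $\probability\big[ \frac{1}{n}\sum_i X_i - \mu > \epsilon \big] \leq \exp\big( -2n\epsilon^2/(b-a)^2 \big)$. Applying the identical argument to the variables $-X_i$ (which lie in $[-b,-a]$, an interval of the same length) controls the lower tail by the same quantity, and the union bound combines the two tails into the factor $2$, completing the proof. I do not anticipate a genuine obstacle here; the one point to be careful about is bookkeeping the interval width after centering, so that Hoeffding's Lemma is applied with the constant $(b-a)^2/8$ rather than something involving $\mu$.
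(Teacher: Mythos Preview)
Your proposal is correct and follows essentially the same route as the paper: the Chernoff bound via Markov's inequality, Hoeffding's Lemma applied to each centered summand, optimization in $t$, and then symmetry plus the union bound for the two-sided statement. The only cosmetic difference is that the paper places the $1/n$ factor inside the exponent (working with $X = \frac{1}{n}\sum_i X_i - \mu$ and optimizing to $t = 4n\epsilon/(b-a)^2$), whereas you work with the unnormalized sum and optimize to $t = 4\epsilon/(b-a)^2$; these are the same computation up to a change of variables.
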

\begin{proof}
    Let $X = 1/n \sum_i X_i - \mu$. Observe by Markov's inequality that:
    \begin{equation*}
        \probability[X \geq \epsilon] = \probability\Big[ e^{tX} \geq e^{t\epsilon} \Big]
        \leq e^{-t\epsilon} \ev[e^{tX}].
    \end{equation*}
    By independence of $X_i$'s and
    the application of Lemma~\ref{lemma:appendix:concentration:hoeffding-lemma}:
    \begin{align*}
        \ev[e^{tX}] &= \ev \Bigg[ \prod_i e^\frac{t(X_i - \mu)}{n} \Bigg] \\
        &= \prod_i \ev \Big[ e^{\frac{t(X_i-\mu)}{n}} \Big] \\
        &\leq \prod_i \exp\Big( \frac{t^2 (b - a)^2}{8 n^2} \Big) \\
        &= \exp\Big( \frac{t^2 (b - a)^2}{8 n} \Big).
    \end{align*}
    We have shown that:
    \begin{equation*}
        \probability[X \geq \epsilon] \leq \exp\Big( -t \epsilon + \frac{t^2 (b - a)^2}{8 n} \Big).
    \end{equation*}
    That statement holds for all values of $t$ and in particular one that minimizes
    the right-hand-side. Solving for that value of $t$ gives us
    $t = 4n\epsilon / (b - a^2)$, which implies:
    \begin{equation*}
        \probability[X \geq \epsilon] \leq e^{-\frac{2n \epsilon^2}{(b - a)^2}}.
    \end{equation*}
    By a symmetric argument we can bound $\probability[X \leq -\epsilon]$. The claim
    follows by the union bound over the two cases.
\end{proof}

\section{Bennet's Inequality}

\begin{lemma}
    Let $\{ X_i \}_{i=1}^n$ be a sequence of independent random variables with zero mean
    and finite variance $\sigma_i^2$. Assume that $\lvert X_i \rvert \leq a$ almost surely for all $i$. Then:
    \begin{equation*}
        \probability\Big[\sum_i X_i \geq t \Big] \leq 
        \exp \Bigg( -\frac{\sigma^2}{a^2} h\Big( \frac{a t}{\sigma^2} \Big) \Bigg),
    \end{equation*}
    where $h(x) = (1 + x) \log(1 + x) - x$ and $\sigma^2 = \sum_i \sigma_i^2$.
\end{lemma}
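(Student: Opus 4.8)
The plan is to run the standard Chernoff (exponential moment) argument: for any $s > 0$, Markov's inequality applied to the random variable $e^{s \sum_i X_i}$ gives
\begin{equation*}
    \probability\Big[ \sum_i X_i \geq t \Big] = \probability\Big[ e^{s\sum_i X_i} \geq e^{st} \Big] \leq e^{-st}\, \ev\big[ e^{s \sum_i X_i} \big] = e^{-st} \prod_i \ev\big[ e^{s X_i} \big],
\end{equation*}
where the last equality uses independence of the $X_i$. The work is then entirely in bounding the individual moment generating functions $\ev[e^{sX_i}]$ and in optimizing over $s$.

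The central lemma I would isolate first is the following: if $X$ is a zero-mean random variable with $|X| \leq a$ almost surely and variance $\sigma_X^2$, then $\ev[e^{sX}] \leq \exp\big( \tfrac{\sigma_X^2}{a^2} (e^{sa} - 1 - sa) \big)$. To prove it, expand $e^{sx} = 1 + sx + \sum_{k \geq 2} \tfrac{(sx)^k}{k!}$; since $|x| \leq a$ we have $x^k \leq |x|^k \leq x^2 a^{k-2}$ for every $k \geq 2$, so after taking expectations (and using $\ev[X] = 0$ to kill the linear term) we get $\ev[e^{sX}] \leq 1 + \tfrac{\sigma_X^2}{a^2} \sum_{k \geq 2} \tfrac{(sa)^k}{k!} = 1 + \tfrac{\sigma_X^2}{a^2}(e^{sa} - 1 - sa)$, and then $1 + y \leq e^y$ finishes it. Multiplying this bound over $i$ and writing $\sigma^2 = \sum_i \sigma_i^2$ collapses the product into $\exp\big( \tfrac{\sigma^2}{a^2}(e^{sa} - 1 - sa) \big)$, so that
\begin{equation*}
    \probability\Big[ \sum_i X_i \geq t \Big] \leq \exp\Big( -st + \frac{\sigma^2}{a^2}\big( e^{sa} - 1 - sa \big) \Big).
\end{equation*}

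It remains to choose $s$. Differentiating the exponent $\psi(s) = -st + \tfrac{\sigma^2}{a^2}(e^{sa} - 1 - sa)$ and setting $\psi'(s) = -t + \tfrac{\sigma^2}{a}(e^{sa} - 1) = 0$ yields the minimizer $s = \tfrac{1}{a}\log\big(1 + \tfrac{at}{\sigma^2}\big)$, which is positive because $t > 0$. Substituting $u = \tfrac{at}{\sigma^2}$, so that $e^{sa} = 1 + u$ and $sa = \log(1+u)$, the exponent becomes $\tfrac{\sigma^2}{a^2}\big[ -u\log(1+u) + u - \log(1+u) \big] = -\tfrac{\sigma^2}{a^2}\big[(1+u)\log(1+u) - u\big] = -\tfrac{\sigma^2}{a^2} h\big(\tfrac{at}{\sigma^2}\big)$, which is exactly the claimed bound. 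I expect the main obstacle to be the moment-bound lemma — getting the inequality $\ev[X^k] \leq \sigma_X^2 a^{k-2}$ right for all $k \geq 2$ (taking care with odd $k$ via $x^k \leq |x|^k$) and recognizing the resulting series as $e^{sa} - 1 - sa$; the optimization over $s$ is routine once the per-variable bound is in place, and mirrors the computation already used for the Chernoff bound with the same function $h$ earlier in this appendix.
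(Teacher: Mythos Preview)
Your proposal is correct and follows essentially the same approach as the paper: Markov's inequality on $e^{s\sum_i X_i}$, the per-variable bound $\ev[e^{sX_i}] \leq \exp\big(\tfrac{\sigma_i^2}{a^2}(e^{sa}-1-sa)\big)$ via Taylor expansion and $1+y\leq e^y$, and then optimization over $s$. You are in fact more explicit than the paper, which leaves the optimization step as ``setting the derivative to $0$ and solving,'' whereas you carry out the substitution $u = at/\sigma^2$ to recover $h$ explicitly.
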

\begin{proof}
    As usual, we take advantage of Markov's inequality to write:
    \begin{align*}
        \probability\Big[\sum_i X_i \geq t \Big] &\leq
            e^{-\lambda t} \ev \Big[ e^{\lambda \sum_i X_i} \Big] \\
        &= e^{-\lambda t} \ev \Big[ \prod_i e^{\lambda X_i} \Big] \\
        &= e^{-\lambda t} \prod_i \ev \Big[ e^{\lambda X_i} \Big] \\
    \end{align*}
    Using the Taylor expansion of $e^x$, we obtain:
    \begin{align*}
        \ev \Big[ e^{\lambda X_i} \Big] &= \ev \Big[ \sum_{k=0}^\infty \frac{\lambda^k X_i^k}{k!} \Big] \\
        &= 1 + \sum_{k=2}^\infty \frac{\lambda^k \ev[X_i^2 X_i^{k - 2}]}{k!} \\
        &\leq 1 + \sum_{k=2}^\infty \frac{\lambda^k \sigma_i^2 a^{k-2}}{k!} \\
        &= 1 + \frac{\sigma_i^2}{a^2} \sum_{k=2}^\infty \frac{\lambda^k a^k}{k!} \\
        &= 1 + \frac{\sigma_i^2}{a^2} \big( e^{\lambda a} - 1 - \lambda a \big) \\
        &\leq \exp\Big( \frac{\sigma_i^2}{a^2} \big( e^{\lambda a} - 1 - \lambda a \big) \Big).
    \end{align*}
    Putting it all together:
    \begin{align*}
        \probability\Big[\sum_i X_i \geq t \Big] &\leq
            e^{-\lambda t} \prod_i \exp\Big( \frac{\sigma_i^2}{a^2} \big( e^{\lambda a} - 1 - \lambda a \big) \Big) \\
        &= e^{-\lambda t} \exp\Big( \frac{\sigma^2}{a^2} \big( e^{\lambda a} - 1 - \lambda a \big) \Big).
    \end{align*}
    This inequality holds for all values of $\lambda$, and in particular one that minimizes the
    right-hand-side. Setting the derivative of the right-hand-side to $0$ and solving for $\lambda$
    leads to the desired result.
\end{proof}

\chapter{Linear Algebra Review}
\label{appendix:linear-algebra}

\abstract{
This appendix reviews basic concepts from Linear Algebra that are useful
in digesting the material in this monograph.
}

\section{Inner Product}

Denote by $\mathbb{H}$ a vector space.
An inner product $\langle \cdot, \cdot \rangle: \mathbb{H} \times \mathbb{H} \rightarrow \mathbb{R}$
is a function with the following properties:
\begin{itemize}
    \item $\forall \; u \in \mathbb{H},\; \langle u, u \rangle \geq 0$;
    \item $\forall \; u \in \mathbb{H},\; \langle u, u \rangle = 0 \Leftrightarrow u = 0$;
    \item $\forall \; u, v \in \mathbb{H},\; \langle u, v \rangle = \langle v, u \rangle$; and,
    \item $\forall \; u, v, w \in \mathbb{H}, \textit{ and } \alpha, \beta \in \mathbb{R},\; 
    \langle \alpha u + \beta v, w \rangle = \alpha \langle u, w \rangle + \beta \langle v, w \rangle$.
\end{itemize}

We call $\mathbb{H}$ together with the inner product $\langle \cdot, \cdot \rangle$
an \emph{inner product space}.
As an example, when $\mathbb{H} = \mathbb{R}^d$, given two vectors
$u = \sum_{i=1}^d u_i e_i$ and $v = \sum_{i=1}^d v_i e_i$, where $e_i$'s
are the standard basis vectors, the following is an inner product:
\begin{equation*}
    \langle u, v \rangle = \sum_{i = 1}^d u_i v_i.
\end{equation*}

We say two vectors $u$ and $v$ in an inner product space are \emph{orthogonal}
if their inner product is $0$: $\langle u, v \rangle = 0$.

\section{Norms}

A function $\Phi: \mathbb{H} \rightarrow \mathbb{R}_+$ is a norm on
$\mathbb{H}$ if it has the following properties:
\begin{itemize}
    \item Definiteness: For all $u \in \mathbb{H}$, $\Phi(u) = 0 \Leftrightarrow u = 0$;
    \item Homogeneity: For all $u \in \mathbb{H}$ and $\alpha \in \mathbb{R}$,
        $\Phi(\alpha u) = \lvert \alpha \rvert \Phi(u)$; and,
    \item Triangle inequality: $\forall \; u, v \in \mathbb{H}, \; \Phi(u + v) \leq \Phi(u) + \Phi(v)$.
\end{itemize}

Examples include the absolute value on $\mathbb{R}$,
and the $L_p$ norm (for $p \geq 1$) on $\mathbb{R}^d$ denoted by $\lVert \cdot \rVert_p$
and defined as:
\begin{equation*}
    \lVert u \rVert_p = \Big( \sum_{i=1}^d \lvert u_i \rvert^p \Big)^{\frac{1}{p}}.
\end{equation*}
Instances of $L_p$ include the commonly used $L_1$, $L_2$ (Euclidean),
and $L_\infty$ norms, where $\lVert u \rVert_\infty = \max_i \lvert u_i \rvert$.

Note that, when $\mathbb{H}$ is an inner product space, then
the function $\lVert u \rVert = \sqrt{\langle u, u \rangle}$ is a norm.

\section{Distance}
A norm on a vector space induces a notion of distance between two vectors.
Concretely, if $\mathbb{H}$ is a normed space equipped with $\lVert \cdot \rVert$,
then we define the distance between two vectors $u, v \in \mathbb{H}$ as follows:
\begin{equation*}
    \delta(u, v) = \lVert u - v \rVert.
\end{equation*}

\section{Orthogonal Projection}

\begin{lemma}
    Let $\mathbb{H}$ be an inner product space and suppose $u \in \mathbb{H}$ and $u \neq 0$.
    Any vector $v \in \mathbb{H}$ can be uniquely decomposed along $u$ as:
    \begin{equation*}
        v = v_{\perp} + v_{\parallel},
    \end{equation*}
    such that $\langle v_\perp, v_\parallel \rangle = 0$. Additionally:
    \begin{equation*}
        v_\parallel = \frac{\langle u, v \rangle}{\langle u, u \rangle} u,
    \end{equation*}
    and $v_\perp = v - v_\parallel$.
\end{lemma}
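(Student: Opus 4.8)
The plan is to prove this orthogonal projection result in the Hilbert space $\mathbb{H}$ by first establishing existence of the decomposition via an explicit construction, then verifying the orthogonality property, and finally arguing uniqueness. Since the statement already hands us the formula for $v_\parallel$, the natural route is to simply \emph{define} $v_\parallel = \frac{\langle u, v \rangle}{\langle u, u \rangle} u$ (which is well-defined because $u \neq 0$ implies $\langle u, u \rangle > 0$ by definiteness of the inner product), set $v_\perp = v - v_\parallel$, and then check the required properties. This converts the problem from "find the decomposition" to "verify the decomposition," which is the cleaner approach.

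First I would verify the orthogonality. Compute $\langle v_\perp, v_\parallel \rangle = \langle v - v_\parallel, v_\parallel \rangle = \langle v, v_\parallel \rangle - \langle v_\parallel, v_\parallel \rangle$ using bilinearity. Substituting $v_\parallel = \frac{\langle u, v \rangle}{\langle u, u \rangle} u$ and pulling out the scalar $\frac{\langle u, v \rangle}{\langle u, u \rangle}$, the first term becomes $\frac{\langle u, v \rangle}{\langle u, u \rangle}\langle v, u \rangle = \frac{\langle u, v \rangle^2}{\langle u, u \rangle}$ (using symmetry of the inner product), and the second term becomes $\frac{\langle u, v \rangle^2}{\langle u, u \rangle^2}\langle u, u \rangle = \frac{\langle u, v \rangle^2}{\langle u, u \rangle}$. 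These cancel, giving $\langle v_\perp, v_\parallel \rangle = 0$ as desired. The identity $v = v_\perp + v_\parallel$ is immediate from the definition $v_\perp = v - v_\parallel$.

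Then I would handle uniqueness. Suppose $v = w_\perp + w_\parallel$ is another decomposition with $w_\parallel = \lambda u$ for some scalar $\lambda$ (any vector parallel to $u$ has this form) and $\langle w_\perp, w_\parallel \rangle = 0$. Taking the inner product of both sides of $v = w_\perp + \lambda u$ with $u$: $\langle v, u \rangle = \langle w_\perp, u \rangle + \lambda \langle u, u \rangle$. The orthogonality condition $\langle w_\perp, w_\parallel \rangle = \langle w_\perp, \lambda u\rangle = \lambda \langle w_\perp, u \rangle = 0$ forces $\langle w_\perp, u \rangle = 0$ (when $\lambda \neq 0$; the case $\lambda = 0$ is handled directly since then $v = w_\perp$ is orthogonal to $u$, which combined with $\langle u, v\rangle$ possibly nonzero needs care). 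Hence $\lambda = \langle v, u \rangle / \langle u, u \rangle$, which matches our $v_\parallel$, and then $w_\perp = v - w_\parallel = v_\perp$.

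I expect the main subtlety — not really an obstacle, but the step requiring the most care — to be the uniqueness argument, specifically making precise what "decomposed along $u$" means: we must interpret it as requiring $v_\parallel$ to be a scalar multiple of $u$, otherwise uniqueness fails trivially (one could shuffle components). Once that interpretation is fixed, the argument is a short computation. A secondary point worth a sentence is noting explicitly that $\langle u, u \rangle \neq 0$ so the division is legitimate, which is where the hypothesis $u \neq 0$ enters. Everything else is routine bilinearity and symmetry manipulation.
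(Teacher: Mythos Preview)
Your proposal is correct and takes essentially the same approach as the paper: write $v_\parallel$ as a scalar multiple of $u$, use the orthogonality requirement together with an inner product against $u$ to determine that scalar, and prove uniqueness by the same device. The only cosmetic difference is that the paper \emph{derives} the coefficient from the orthogonality constraint while you \emph{define} $v_\parallel$ by the given formula and then verify orthogonality; your observation that the $\lambda = 0$ case needs separate handling is in fact sharper than the paper's own treatment, which silently assumes the scalar is nonzero when cancelling it.
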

\begin{proof}
    Let $v_\parallel = \alpha u$ and $v_\perp = v - v_\parallel$.
    Because $v_\parallel$ and $v_\perp$ are orthogonal, we deduce that:
    \begin{align*}
        \langle v_\parallel, v_\perp \rangle = 0 \implies
            \langle \alpha u, v_\perp \rangle = 0 \implies
            \langle u, v_\perp \rangle = 0.
    \end{align*}
    That implies:
    \begin{align*}
        \langle v, u \rangle = \alpha \langle u, u \rangle \implies
        \alpha = \frac{\langle u, v \rangle}{\langle u, u \rangle},
    \end{align*}
    so that:
    \begin{equation*}
        v_\parallel = \frac{\langle u, v \rangle}{\langle u, u \rangle} u.
    \end{equation*}

    We prove the uniqueness of the decomposition by contradiction.
    Suppose there exists another decomposition of $v$ to $v_\parallel^\prime + v_\perp^\prime$.
    Then:
    \begin{align*}
        v_\parallel + v_\perp = v_\parallel^\prime + v_\perp^\prime &\implies
        \langle u, v_\parallel + v_\perp \rangle = \langle u,  v_\parallel^\prime + v_\perp^\prime\rangle \\
        &\implies \langle u, v_\parallel \rangle = \langle u,  v_\parallel^\prime \rangle \\
        &\implies \langle u, \alpha u \rangle = \langle u, \beta u \rangle \\
        &\implies \alpha = \beta.
    \end{align*}
    We must therefore also have that $v_\perp = v_\perp^\prime$.
\end{proof}

\printindex

\end{document}